\DeclareFontFamily{U}{rcjhbltx}{}
\DeclareFontShape{U}{rcjhbltx}{m}{n}{<->rcjhbltx}{}
\DeclareSymbolFont{hebrewletters}{U}{rcjhbltx}{m}{n}
\newcommand\mo[2][c]{%
  \bgroup%
  \setstackEOL{ }%
  \setstackgap{L}{0pt}%
  \Longstack[#1]{#2}%
  \egroup%
}
\let\aleph\relax\let\beth\relax
\let\gimel\relax\let\daleth\relax
\DeclareMathSymbol{\aleph}{\mathord}{hebrewletters}{39}
\DeclareMathSymbol{\beth}{\mathord}{hebrewletters}{98}
\DeclareMathSymbol{\gimel}{\mathord}{hebrewletters}{103}
\DeclareMathSymbol{\daleth}{\mathord}{hebrewletters}{100}
\DeclareMathSymbol{\lamed}{\mathord}{hebrewletters}{108}
\DeclareMathSymbol{\mem}{\mathord}{hebrewletters}{109}
\DeclareMathSymbol{\ayin}{\mathord}{hebrewletters}{96}
\DeclareMathSymbol{\tsadi}{\mathord}{hebrewletters}{118}
\DeclareMathSymbol{\qof}{\mathord}{hebrewletters}{113}
\DeclareMathSymbol{\shin}{\mathord}{hebrewletters}{152}
  \DeclareFontFamily{U}{MnSymbolC}{}
  \DeclareSymbolFont{MnSyC}{U}{MnSymbolC}{m}{n}
  \DeclareFontShape{U}{MnSymbolC}{m}{n}{
      <-6>  MnSymbolC5
     <6-7>  MnSymbolC6
     <7-8>  MnSymbolC7
     <8-9>  MnSymbolC8
     <9-10> MnSymbolC9
    <10-12> MnSymbolC10
    <12->   MnSymbolC12}{}
  \DeclareFontShape{U}{MnSymbolC}{b}{n}{
      <-6>  MnSymbolC-Bold5
     <6-7>  MnSymbolC-Bold6
     <7-8>  MnSymbolC-Bold7
     <8-9>  MnSymbolC-Bold8
     <9-10> MnSymbolC-Bold9
    <10-12> MnSymbolC-Bold10
    <12->   MnSymbolC-Bold12}{}
  \DeclareMathSymbol{\aleph}{\mathord}{MnSyC}{"AF}
  \DeclareMathSymbol{\beth}{\mathord}{MnSyC}{"B0}
  \DeclareMathSymbol{\gimel}{\mathord}{MnSyC}{"B1}
  \DeclareMathSymbol{\daleth}{\mathord}{MnSyC}{"B2}
\DeclareFontFamily{U}{mathx}{\hyphenchar\font45}
\DeclareFontShape{U}{mathx}{m}{n}{
      <5> <6> <7> <8> <9> <10>
      <10.95> <12> <14.4> <17.28> <20.74> <24.88>
      mathx10
      }{}
\DeclareSymbolFont{mathx}{U}{mathx}{m}{n}
\DeclareMathAccent{\widecheck}{0}{mathx}{"71}
\def\bra#1{\mathinner{\langle{#1}|}}
\def\ket#1{\mathinner{|{#1}\rangle}}
\def\braket#1{\mathinner{\langle{#1}\rangle}}
\def\Braket#1{\langle{#1}\rangle}
\definecolor{myteal}{RGB}{0,140,100}
\definecolor{myorange}{RGB}{238,119,51}
\definecolor{myred}{RGB}{204,51,17}
\DeclareSymbolFont{extraup}{U}{zavm}{m}{n}
\DeclareMathSymbol{\varheart}{\mathalpha}{extraup}{86}
\def\Real{{\mathbf R}}
\def\Complex{{\mathbf C}}
\def\Sphere{{\mathbf S}}
\def\Schwartz{{\mathcal{S}}}
\def\Hausdorff{{\mathcal{H}}}
\def\Expec{{\mathbf E\,}}
\def\eps{{\varepsilon}}
\def\One{{\mathbf{1}}}
\def\PhaseSpace{{\Real^{2d}}}
\def\Evol{{\mathcal{E}}}
\def\lsim{{\,\lesssim\,}}
\def\bbZ{{\mathbb{Z}}}
\def\bbN{{\mathbb N}}
\newcommand{\Wigner}[1]{{\mathcal{W}_{#1}}}
\newcommand{\Ft}[1]{{\widehat{#1}}}
\newcommand{\bbk}[1]{{]{#1}[}}
\newcommand{\diff}{\mathop{}\!\mathrm{d}}
\numberwithin{equation}{section}
\DeclareMathOperator{\Id}{Id}
\DeclareMathOperator{\Mult}{mult}
\newcommand{\noset}{\varnothing}
\newcommand{\mbf}[1]{{\mathbf{#1}}}
\newcommand{\mcal}[1]{{\mathcal{#1}}}
\newcommand{\ovln}[1]{{\overline{#1}}}
\newcommand{\wtild}[1]{{\widetilde{#1}}}
\newcommand{\back}{\triangleleft}
\newcommand{\next}{\triangleright}
\newcommand{\mstack}[2]{\begin{smallmatrix} #1 \\ #2 \end{smallmatrix}}
\DeclareMathSymbol{\mlq}{\mathord}{operators}{``}
\DeclareMathSymbol{\mrq}{\mathord}{operators}{`'}
\newcommand{\qt}[1]{{\mlq\mlq#1\mrq\mrq}}
\DeclareMathOperator{\Op}{Op}
\DeclareMathOperator{\Rept}{Re}
\DeclareMathOperator{\supp}{supp}
\DeclareMathOperator{\argmax}{argmax}
\DeclareMathOperator{\trace}{tr}
\DeclareMathOperator{\Scaff}{Scaff}
\DeclareMathOperator{\Range}{range}
\DeclareMathOperator{\Typical}{Typ}
\DeclareMathOperator{\Rung}{Rung}
\DeclareMathOperator{\Junk}{junk}
\DeclareMathOperator{\Imrec}{imm}
\DeclareMathOperator{\ext}{ext}
\DeclareMathOperator{\Base}{Base}
\DeclareMathOperator{\Nbd}{Nbd}
\DeclareMathOperator{\Imm}{Imm}
\DeclareMathOperator{\LB}{LB}
\DeclareMathOperator{\sign}{sgn}
\newtheorem{theorem}{Theorem}[section]
\newtheorem{lemma}[theorem]{Lemma}
\newtheorem{definition}[theorem]{Definition}
\newtheorem{corollary}[theorem]{Corollary}
\newtheorem{proposition}[theorem]{Proposition}
\title{Quantum diffusion via an approximate semigroup property}
\author{Felipe Hern{\'a}ndez}
\email{felipehb@stanford.edu}
\date{\today}
\begin{document}

\begin{abstract}
In this paper we introduce a new approach to the diffusive limit of the weakly random Schrodinger equation, first
studied by L. Erdos, M. Salmhofer, and H.T. Yau.  Our approach is based on a wavepacket decomposition of the evolution operator,
which allows us to interpret the Duhamel series as an integral over piecewise linear paths.
 We relate the geometry of these paths to combinatorial features of a diagrammatic expansion which allows us to express the error terms in the
expansion as an integral over paths that are exceptional in some way.  These error terms are bounded using geometric arguments.
The main term is then shown to have a semigroup property, which allows us to iteratively increase the timescale of validity of
an effective diffusion.  This is the first derivation of an effective diffusion equation from the random Schrodinger equation that
is valid in dimensions $d\geq 2$.
\end{abstract}
\maketitle

\tableofcontents
\section{Introduction}

\subsection{The kinetic limit for the Sch{\"o}dinger equation}

In this paper, we study the equation
\begin{equation}
\label{eq:schro}
i\partial_t \psi = -\frac{1}{2}\Delta\psi + \eps V\psi
\end{equation}
with a stationary random potential $V$.  An example of a potential we will consider is a mean-
zero Gaussian random field with a smooth and compactly supported two point correlation function
\begin{equation}
\label{eq:two-point-corr}
\Expec V(x)V(y) = R(x-y).
\end{equation}
with $R\in C_c^\infty(\Real^d)$.  Our approach works for more general
potentials that are stationary, have finite range of dependence, and have bounded moments in $C^k$
(for $k>20d$, say).

The equation~\eqref{eq:schro} is a simple model for wave propagation in a random environment.  It
also has a more direct physical significance, as it models the motion of a cold electron in a disordered
environment~\cite{spohn77}.  We are interested in this paper in the regime where the frequency of the initial condition
$\psi$ is comparable to the correlation length of the potential, which we consider to be of unit scale.
This regime is out of reach of both traditional WKB-type semiclassical approximations,
which are more appropriate for high-frequency $\psi$, and of homogenization techniques, which are appropriate
for low-frequency $\psi$.

This regime was first rigorously studied by H. Spohn in~\cite{spohn77} , who showed that the spectral
density $\mu_t(p) = \Expec |\psi_{t/\eps^2}(p)|^2$ converges, in the semiclassical limit $\eps\to 0$,
to a weak solution of a spatially homogeneous kinetic equation
\begin{equation}
\label{eq:homo-kinetic}
\partial_t \mu(p) = \int \delta(|p|^2-|p'|^2) \Ft{R}(p-p') [\mu(p') - \mu(p)] \diff p',
\end{equation}
where $R(x)$ is the two-point correlation function defined in~\eqref{eq:two-point-corr}.  The term
$\delta(|p|^2-|p'|^2)$ enforces conservation of kinetic energy, which is appropriate in the limit
$\eps\to 0$ since the potential energy becomes negligible.  The time scale between scattering events
is on the order $\eps^{-2}$, which can be heuristically justified by using the Born approximation of
the solution of~\eqref{eq:schro}.  Spohn's technique for demonstrating~\eqref{eq:homo-kinetic} was to
write out the Duhamel expansion for the solution $\psi_t(p)$
to~\eqref{eq:schro} in momentum space, take an expectation of the quantity $|\psi_t(p)|^2$ using the Wick rule
for the expectation of a product of Gaussian random variables, and separate terms
into a main term and an error term.  The error terms are controlled by additional cancellations
and the main terms are compared to a series expansion for the solution of~\eqref{eq:homo-kinetic}.
Spohn's analysis of the Dyson series allowed him to control the solution
up to times $c \eps^{-2}$ for some small constant $c>0$.

This proof technique has been used by many authors since to improve upon our understanding
of~\eqref{eq:schro}.  Most notably, L. Erd{\"o}s and H.T. Yau in a series of
works~\cite{eylinear98,erdosyau2000} were able to improve the time scale to arbitrary kinetic times of the form
of the order $O(\eps^{-2})$
while also demonstrating the weak convergence of the Wigner function
\[
\Wigner{\psi}(x,p) := \int e^{iy\cdot p} \overline{\psi}(x-y/2)\psi(x+y/2)\diff y
\]
to the solution of the linear Boltzmann equation
\begin{equation}
\label{eq:boltz}
\partial_t \rho + p\cdot\nabla_x \rho =
\eps^2 \int \delta(|p|^2-|p'|^2) \Ft{R}(p-p') [\rho(x,p') - \rho(x,p)] \diff p'.
\end{equation}
Introducing the rescaled coordinates $T=\eps^2 t$, $X = \eps^2x$ along with the
rescaled solution
\[
\rho^\eps_T(X,p) := \rho_{\eps^{-2}T}(\eps^{-2}x, p),
\]
the equation~\eqref{eq:boltz} can be written
\[
\partial_T \rho^\eps + p\cdot \nabla_X \rho
= \int \delta(|p|^2-|p'|^2) \Ft{R}(p-p') [\rho^\eps(X,p')-\rho^\eps(X,p)]\diff p'.
\]
In an impressive sequence of refinements to this work, L. Erdos, M. Salmhofer and
H.T. Yau~\cite{esyQuantumBoltzmann04,esyI08,esyII07} were able to improve the
timescale even further to diffusive times $\eps^{-2-\kappa}$ for some positive $\kappa>0$
(in fact, one can take $\kappa=1/370$ when $d=3$).  At this timescale a diffusion equation emerges.
The principle is that the momentum variable is no longer relevant to the evolution because it becomes
uniformly distributed over the sphere within time $O(\eps^{-2})$, and all that remains of the momentum
information is the kinetic energy variable $e =|p|^2/2$.
Moreover, for diffusive times $\eps^{-2-\kappa}$ the particle travels a distance $\eps^{-2-\kappa/2}$
so the diffusive length scale is $\eps^{-2-\kappa/2}$.

For solutions $\rho$ of the linear Boltzmann equation~\eqref{eq:boltz}, the particle
distribution $f$ defined by
\begin{equation}
\label{eq:diffusive-scaling}
f_T(X,e) = \int_{|p|^2/2=e} \rho_{\eps^{-2-\kappa}T}(\eps^{-2-\kappa/2}X,p) \diff \Hausdorff^{n-1}(p)
\end{equation}
converges in the limit $\eps\to 0$ to a solution of the diffusion equation
\begin{equation}
\label{eq:diffusion}
\partial_T f_T = D_e \Delta_X f_T,
\end{equation}
where $D_e$ is a diffusion coefficient depending on the energy $e$.  See~\cite{esyI08} for more details on the limiting diffusion equation.

To reach the diffusive time scale in which the particle
experiences infinitely many scattering events, one must consider terms with $\eps^{-c}$ collisions, which
produces more than $\eps^{-\eps^{-c}}$ diagrams when one applies the Wick expansion.  To deal with the explosion
in the number of terms, Erdos, Salmhofer, and Yau developed a resummation technique to more accurately estimate
the sizes of the terms and additionally had to exploit intricate cancellations coming from the combinatorial
features of the diagrams considered.

\subsection{Statement of the main result}
In this paper, we provide an alternative derivation of the linear Boltzmann equation which is also valid up to
diffusive times but with a fundamentally different approach.  In our proof, we use a wavepacket decomposition
of the evolution operator.  The wavepacket decomposition allows us to keep information about the position and
momentum of the particle simultaneously (up to the limits imposed by the uncertainty principle), and we therefore
express the solution as an integral over piecewise linear paths in phase space.

To make the connection between operators and the linear Boltzmann equation, we use the Weyl quantization
$a\in C^\infty(\PhaseSpace)\mapsto \Op^w(a)\in\mathcal{B}(L^2(\Real^d))$ defined by
\[
\Op^w(a) f(x)
= \int e^{i(x-y)\cdot p} a((x+y)/2,p) f(y)\diff y \diff p.
\]
The relationship between the Weyl quantization and the Wigner transform is given by the identity
\[
\langle \Op^w(a)\psi,\psi\rangle = \int a(x,p) \Wigner{\psi}(x,p) \diff x\diff p.
\]
In particular, applying this identity to the solution $\psi_t = e^{-itH}\psi$ to (\ref{eq:schro})
where $H$ is the random Hamiltonian
\[
H=-\frac{1}{2}\Delta + \eps V,
\]
we have
\[
\int a(x,p) \Expec \Wigner{\psi_t}(x,p) \diff x\diff p
= \Expec \langle \Op^w(a) e^{-itH}\psi, e^{-itH}\psi\rangle =
\langle \Expec e^{itH} \Op^w(a)e^{-itH} \psi,\psi\rangle.
\]
Therefore in order to answer questions about the weak convergence of $\Wigner{\psi_t}$, it suffices
to study the quantum evolution channel
\begin{equation}
\label{eq:evol-def}
\Evol_t[A] := \Expec e^{itH} A e^{-itH}
\end{equation}
applied to operators of the form $A=\Op^w(a)$ with sufficiently regular symbols $a$.   In particular,
we will show that for suitable observables $a_0$ and for times $t\leq \eps^{-2-\kappa}$, we have
\[
\|\Evol_t[\Op^w(a_0)] - \Op^w(a_t)\|_{op} = o(1),
\]
where $a_t$ solves the dual of the linear Boltzmann equation~\eqref{eq:boltz},
\begin{equation}
\label{eq:dual-boltz}
\partial_t a - p\cdot\nabla_x a =
\eps^2 \int \delta(|p|^2-|p'|^2) \Ft{R}(p-p') [a(x,p') - a(x,p)] \diff p'.
\end{equation}

A natural norm that we use on $a$ which also controls the operator norm of $\Op^w(a)$ is the
$C^k$ norm with $k=2d+1$ (see Appendix~\ref{sec:wp-quantization} for a self-contained
proof that the $C^{2d+1}$ norm of $a$ controls the operator norm of $\Op^w(a)$).
We will use a $C^k$ norm which is rescaled to the appropriate length scales of the problem.  Because
the time scale between scattering events is $\eps^{-2}$, a natural spatial length scale is $\eps^{-1}$.
For the rest of the paper we write $r=\eps^{-1}$ for this length scale.
This is the length scale of a wavepacket that remains coherent between scattering events.  Conversely
the natural length scale in momentum is $\eps = r^{-1}$.
This ``microscopic'' scale is the one we use for the wavepacket decomposition of the operator $e^{itH}$.

On the other hand, a natural ``macroscopic'' length scale of the problem is $\eps^{-2}$, which is the distance
that a particle with momentum $O(1)$ travels between scatterings.  A natural ``macroscopic''
length scale in momentum is $O(1)$, which is the impulse applied to a particle in a typical scattering event.

The following norm measures the smoothness of an observable at these length scales:
\[
\|a\|_{C^k_{r,L}}
:= \sum_{|\alpha_x|+|\alpha_p|\leq k}
\sup_{(x,p)} |(rL\partial_x)^{\alpha_x} (r^{-1}L\partial_p)^{\alpha_p} a(x,p)|.
\]
When $L=1$, this norm probes the microscopic smoothness of observables, whereas
when $L=\eps^{-1}$, the norm probes the macroscopic smoothness.

We make one final comment before we state the main result of the paper, which is that we will not treat
the evolution of low-frequency modes.  In dimension $d=2$, the scattering cross section of a low frequency wave
with momentum $|p|\ll 1$ is still on the order $\eps^2$ but the speed of travel is only $|p|$, so the distance
between typical scattering events is only $|p|\eps^{-2}$ rather than $\eps^{-2}$.  Because scattering events
are more closely spaced,
the bounds coming from the geometric arguments we use deteriorate and we make no attempt
to understand what happens in this regime.  In higher dimensions the scattering cross section also shrinks
with momentum so that one could in principle first approximate the evolution of low frequency modes by a
free evolution with no potential and therefore recover the result for all frequencies.
We do not make this argument in this paper.
\begin{theorem}
\label{thm:main-result}
For each $d\geq 2$, there exists $\theta=\theta(d)>0$ and  $\kappa=\kappa(d)>0$ such that the following holds.
Let $V$ be an admissible potential as described in Definition~\eqref{def:admissible-V},
and let $a_0\in C^{2d+1}(\PhaseSpace)$ be a classical observable
supported away from zero momentum;
\[
\supp a_0 \subset \{(x,p)\in\PhaseSpace \mid |p| \geq \eps^{\theta(d)}\}.
\]
Suppose moreover that $a_t$ solves~\eqref{eq:dual-boltz} with initial condition $a_0$.
Then
\begin{equation}
\label{eq:main-op-est}
\|\Evol_t[\Op^w(a_0)] - \Op^w(a_t)\|_{op} \leq
C_d \eps^{2+\kappa} t \|a_0\|_{C^{2d+1}_{\eps^{-1},\eps^{-0.5}}}.
\end{equation}
In particular, for arbitrary $\psi_0\in L^2(\Real^d)$ and $\psi_t$ solving~\eqref{eq:schro} it follows that
\begin{equation}
\label{eq:wigner-boltz}
\int \Wigner{\psi_t}(x,p) a_0(x,p)\diff x\diff p
= \int \Wigner{\psi_0}(x,p) a_t(x,p)\diff x\diff p + O(\eps^{2+\kappa} t \|a_0\|_{C^{2d+1}_{\eps^{-1},\eps^{-0.5}}}\|\psi\|_{L^2}^2).
\end{equation}
\end{theorem}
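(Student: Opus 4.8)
The plan is to prove the operator estimate \eqref{eq:main-op-est}, from which \eqref{eq:wigner-boltz} follows immediately by the Wigner–Weyl duality identity stated above together with the bound $\|\Op^w(a)\|_{op}\lesssim \|a\|_{C^{2d+1}}$ from Appendix~\ref{sec:wp-quantization}, after rescaling to the macroscopic length scales encoded in $C^{2d+1}_{\eps^{-1},\eps^{-0.5}}$. The central mechanism advertised in the abstract is an \emph{approximate semigroup property}: one first establishes, on a short ``mesoscopic'' timescale $\tau$ with $\eps^{-2}\ll\tau\ll\eps^{-2-\kappa}$, a one-step estimate of the form
\[
\|\Evol_\tau[\Op^w(a)] - \Op^w(\mathcal{L}^{B}_\tau a)\|_{op}\;\leq\; \eps^{\delta}\,\|a\|_{C^{2d+1}_{\eps^{-1},\eps^{-0.5}}},
\]
where $\mathcal{L}^B_\tau$ denotes the solution operator of the dual Boltzmann equation \eqref{eq:dual-boltz}. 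One then composes this estimate roughly $t/\tau$ times, using the exact semigroup property of $\mathcal{L}^B$ and the contractivity (or at least boundedness) of both $\Evol_\tau$ and $\mathcal{L}^B_\tau$ on the relevant norm, so that the $t/\tau$ errors add up to $(t/\tau)\eps^{\delta}$, which we arrange to be $\lesssim \eps^{2+\kappa} t$ by choosing $\tau\sim\eps^{-2}$ and $\delta$ appropriately. The role of the hypothesis $\supp a_0\subset\{|p|\geq\eps^{\theta}\}$ is to keep the wavepackets away from the low-frequency regime flagged in the introduction, where the geometric bounds degrade; one must check that the Boltzmann flow $\mathcal{L}^B$ approximately preserves a (slightly shrunk) version of this support condition over the whole time interval, since the collision kernel $\delta(|p|^2-|p'|^2)$ conserves $|p|$ exactly and the transport term only moves $x$.

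The one-step estimate is the heart of the matter and is where the wavepacket decomposition and the path-integral picture enter. First I would write $e^{-itH}$ via the Duhamel (Dyson) expansion in powers of $\eps V$, truncated at some order $N=N(\eps)$ with a remainder term, and insert a wavepacket resolution of the identity so that each term becomes an integral over piecewise-linear phase-space paths: a path consists of straight free-flight segments $p\cdot\nabla_x$ punctuated by momentum jumps at scattering events, each jump weighted by a factor of $\eps\widehat V$. Then I would take the expectation $\Expec$ over the Gaussian (or finite-range) potential using the Wick/cumulant rule, pairing up the scattering events; this pairing is exactly the diagrammatic expansion, and the geometry of a diagram (which jumps are paired with which) is reflected in geometric constraints on the corresponding pair of paths appearing in $\Evol_t[A]=\Expec e^{itH}Ae^{-itH}$. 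The ``main term'' is the fully crossing-free (ladder) diagram, and I would show that summing the ladder contributions reproduces precisely the Duhamel/Neumann series for $\mathcal{L}^B_\tau$ applied to $a$, up to controlled errors — this is the step that identifies the collision kernel $\delta(|p|^2-|p'|^2)\widehat R(p-p')$ and the transport term $-p\cdot\nabla_x$. Every other diagram is an ``error term'', and the key claim is that the corresponding path integral is concentrated on paths that are \emph{exceptional} geometrically — e.g. paths that nearly retrace themselves, or in which two nominally independent segments are forced to be nearly parallel or nearly collinear, or in which a recollision forces an improbable alignment — and such exceptional configurations occupy a phase-space volume that is smaller by a power of $\eps$ (or of $r=\eps^{-1}$) for each crossing or nesting. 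Bounding these volumes by elementary geometry, summing over the (at most $\poly(N)^N$, but with the volume gains beating the combinatorics) diagrams, and controlling the high-order Duhamel remainder by an $L^2\to L^2$ or Hilbert–Schmidt bound, yields the one-step estimate with an explicit $\delta>0$.

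The main obstacle I expect is precisely the combinatorial explosion versus geometric gain tradeoff in the error-term sum: at diffusive times one needs $N\sim\eps^{-c}$ scattering events, so there are superexponentially many diagrams, and one must show that the per-crossing volume gain — together with a careful accounting of \emph{repeated} pairings, nested clusters, and the worst offenders, the so-called ``four-legged'' or recollision subdiagrams that Erdős–Salmhofer–Yau had to resum — is strong enough, uniformly in the diagram, to overwhelm the $N!$-type growth. Organizing this cleanly is what the geometric path-space language is meant to buy; concretely I would set up a weight function on paths that penalizes each exceptional feature by a fixed power of $\eps$, prove a ``deterministic'' geometric lemma that each crossing or nesting in a diagram forces at least one such feature, and then do the sum by bounding each diagram's contribution by $(\text{volume of typical paths})\times\eps^{(\#\text{exceptional features})}$ and checking that the number of diagrams with $k$ exceptional features grows only like $C^k$ rather than $k!$. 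A secondary technical point, but one that must be handled carefully, is the accumulation of errors over the $\sim\eps^{-\kappa}$ iterations of the semigroup step: because each step's error is measured in a $C^{2d+1}$-type norm of the \emph{current} observable $a_{j\tau}$, one needs the Boltzmann flow to propagate this norm with at most polynomially-in-$t$ growth (which it does, by the smoothing of the collision operator on the sphere and the boundedness of the transport flow on the rescaled $C^k$ norm), so that the norm factor in \eqref{eq:main-op-est} can be kept as $\|a_0\|_{C^{2d+1}_{\eps^{-1},\eps^{-0.5}}}$ rather than something that blows up with $t$.
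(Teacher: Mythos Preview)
Your architecture (one-step comparison plus iteration, wavepacket path integral, geometric volume bounds on error diagrams) is broadly right, but two linked choices diverge from the paper in a way that leaves a real gap. First, you place the one-step timescale at $\eps^{-2}\ll\tau\ll\eps^{-2-\kappa}$, so the Duhamel series on $[0,\tau]$ already needs $\gg 1$ terms and your one-step estimate faces the full combinatorial explosion you flag later; the iteration then buys nothing, since the one-step is as hard as the diffusive-time result. The paper instead takes $\tau=\eps^{-2+\kappa/2}$ \emph{subkinetic}, so only $k_{\max}=O(\kappa^{-1})$ terms are needed and every combinatorial factor is an absolute constant --- this is why Proposition~\ref{prp:subkinetic-semigp} is comparatively cheap. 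The hard step, reaching $N\tau$ with $N=\eps^{-\kappa}$, is Proposition~\ref{prp:diffusive-semigp}, and it is \emph{not} a one-step-plus-iteration argument of your kind.

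Second, and this is the missing idea, the paper does not iterate a comparison to Boltzmann. The semigroup property proved is $\Evol_{N\tau}\approx\Evol_\tau^N$: the true evolution against the \emph{refreshed} evolution in which the potential is independently resampled every $\tau$. The Boltzmann comparison is done only once, at the very short scale $\sigma=\eps^{-1.5}$ (Proposition~\ref{prp:short-time-compare}), then transported by Corollary~\ref{cor:refreshed-evolution} using only that $\Evol_\sigma$ is a contraction in operator norm. This matters for two reasons. (i) The $C^{2d+1}$-regularity propagation you worry about at the end is sidestepped at the diffusive step: $\Evol_\tau$ acts on operators and is a contraction for free, so no symbol norms are carried through the $N$ iterations. (ii) The resummation that tames the $N$ segments works because the ladder pieces of the extended path integral reconstruct $\Evol_\tau$ itself (Proposition~\ref{prp:short-ladder-compare} and Section~\ref{sec:ladders-superop}), \emph{not} a Boltzmann Duhamel series; the main term of the diffusive expansion is literally $\Evol_\tau^N$, and the error diagrams are stratified by geometric skeletons (Sections~\ref{sec:skeletons}--\ref{sec:diffusive-bds}) so that a skeleton of complexity $k$ contributes at most $N^{Ck}\eps^{ck}$. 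Your plan to identify ladders with Boltzmann and bound non-ladder diagrams one by one is the Erd\H{o}s--Salmhofer--Yau route; the paper's point is precisely that choosing $\Evol_\tau^N$ as the main term gives a resummation dictated by the desired semigroup structure, which is what makes the geometric stratification close.
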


To see how the diffusion equation~\eqref{eq:diffusion} emerges as a scaling limit, we consider
observables of the form
\[
a_0(x,p) = \bar{a}(\eps^{2+\kappa/2} x, p)
\]
with $\bar{a}\in C^{2d+1}$.  With this rescaling, we have
\[
\|a_0\|_{C^{2d+1}_{\eps^{-1},\eps^{-1}}} \leq \|\bar{a}\|_{C^{2d+1}}.
\]
In particular, $\|a_0\|_{C^{2d+1}_{\eps^{-1},\eps^{-0.5}}}$ is bounded uniformly in $\eps$.  Moreover,
the solution $a_t$ solves~\eqref{eq:diffusion}.

One major difference between Theorem~\ref{thm:main-result} and the main
results of~\cite{esyI08,esyII07}, apart from the very different approaches to the proof,
is that our result holds in dimension $d=2$.  At first this may appear to be
in contradiction with the conjectured phenomenon of Anderson localization in $d=2$, but the contradiction disappears
when one compares the timescale $\eps^{-2-\kappa}$ considered in this paper to the expected length scale
$e^{\eps^{-2}}$ of localization in this dimension.  Indeed, it is expected
that the particle exhibits diffusive behavior for an exponentially long time
before getting trapped by localization.

The exponent $\kappa(d)$ can in principle be extracted from the proof.  However
in this paper we focus on demonstrating the new technique in its simplest form and therefore do
not attempt to
optimize $\kappa(d)$.  Perhaps with some optimization one could obtain $\kappa(3)$ comparable to $1/370$,
but the proof we give yields a bound of the order $\kappa(3) \sim 10^{-6}$.

\subsection{A heuristic sketch of the argument}
\label{sec:heuristics}

\subsubsection{The phase space path integral}
The main idea behind the proof of Theorem~\ref{thm:main-result} is to focus on
justifying an approximate
semigroup property
\begin{equation}
\label{eq:appx-semigroup}
\Evol_{2t}[A] \approx \Evol_t [\Evol_t[A]],
\end{equation}
for suitable operators $A$ including operators of the form $A=\Op^w(a)$.
Observe that the approximation~\eqref{eq:appx-semigroup} has the following physical interpretation.
Let
\begin{align*}
H_1 &= -\frac{1}{2} \Delta + \eps V_1 \\
H_2 &= -\frac{1}{2}\Delta + \eps V_2.
\end{align*}
be Hamiltonians with two independently sampled potentials, and observe that
\[
\Evol_t\circ\Evol_t[A] = \Expec e^{itH_2} e^{itH_1} A e^{-itH_1} e^{-itH_2}.
\]
In other words, $\Evol_t\circ\Evol_t$ represents an evolution with a potential that abruptly changes into an
independently sampled potential at time $t$.  Although such a resampling of the potential drastically changes the evolution of the
wavefunction $\psi_t$ itself,  we will see that the effect on observables is minimal

To prove the approximate semigroup property we approximate the evolution operator
$e^{-itH}$ as an integral over piecewise linear paths in phase space, representing the possible paths of a particle
as it scatters.  To decompose phase space we use a family of wavepackets of the form
\[
\phi_{x,p}(y) := r^{-d/2} e^{iy\cdot p} \chi_{env}((x-y)/r),
\]
where $\chi_{env}\in C_c^\infty(\Real^d)$ is a fixed envelope normalized in $L^2$ and satisfying some additional
conditions described in Appendix~\ref{sec:wp-quantization}.   The functions $\phi_{x,p}$ are localized
in space to scale $r$ and in momentum to scale $r^{-1}$.  We use the notation $\xi=(x,p)$
and write $\ket{\xi}$ as a shorthand for the function $\phi_\xi$.

The use of a phase-space path integral already represents a departure from
previous approaches to the problem.
Indeed, since the paper of Spohn~\cite{spohn77} it has been customary to write the terms of
the Duhamel series expansion for $e^{itH}$ in the Fourier basis.  We will see that by using the spatial
localization of the particle we can more easily compute the expectation appearing in the integrand
without the need for a full Wick expansion (or cumulant decomposition in the case of a non-Gaussian potential).

The free evolution of a wavepacket approximates the motion of a free classical particle in the sense that
\[
e^{-it\Delta/2} \ket{(x,p)} \approx e^{it|p|^2/2} \ket{(x+tp,p)}
\]
for $t\ll r^2$.  For multiplication against the potential we use the identity
\begin{equation}
\label{eq:V-mult-rule}
V\ket{(x,p)} = \int \Ft{V_x}(p'-p) \ket{(x,p')}\diff p',
\end{equation}
where $V_x$ is the potential $V$ multiplied by a cutoff near $y$ which is $1$ in a ball large enough to contain the support of $\chi_{env}$,
\[
V_x(y) = b((y-x)/r) V(y).
\]

We can use these two identities to write an expansion of the evolution of a wavepacket $e^{itH}\ket{(x,p)}$ as an integral over paths in which the phase
space point travels in straight lines with an occasional impulse from the potential causing discontinuities in the momentum variable.
We represent these piecewise linear paths as a tuple $\omega=(\mbf{s},\mbf{p},\mbf{y})$
with $\mbf{s}=(s_0,\cdots,s_k)$ being the sequence of times between the scattering events (satisfying $\sum s_j=t$ and $s_j\geq 0$),
$\mbf{p}=(p_0,p_1\cdots,p_k)$ being the sequence of momentum variables which we require to have the same
magnitude $|p_j|=|p_{j'}|$ and with initial momentum $p_0=p$,
and $\mbf{y}=(y_1,\cdots,y_k)$ being the sequence of scattering locations defined by
\begin{align*}
y_1&=x+s_0p \\
y_{j+1}&=y_j+s_jp_j.
\end{align*}
 An example of a path is depicted in Figure~\ref{fig:simple-path}.

\begin{figure}
\centering
\includegraphics[scale=1.75]{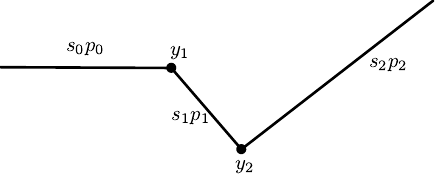}
\caption{A sample scattering path of a particle with $2$ collisions.
The displacement between consecutive collisions is given by $s_jp_j$, where $p_j\in\Real^d$ is a momentum
vector with constrained kinetic energy $|p_j|^2/2$, and $s_j\geq 0$ is the time between collisions.}
\label{fig:simple-path}
\end{figure}

Each such path $\omega$ defines an operator $O_\omega$ which approximately acts on wavepackets by
\begin{equation}
O_\omega\ket{(x,p)} = e^{i\varphi(\omega)} \prod_{j=1}^k \Ft{V_{y_j}}(p_j-p_{j-1})  \ket{y_k+s_kp_k, p_k},
\end{equation}
where $\varphi(\omega)$ is a deterministic phase accumulated from the stretches of free evolution.  These
phases do not matter for this sketch of the proof.  However, in the actual proof we
use stationary phase to ensure that the geometric constraints $y_{j+1}=y_j+s_jp_j$ are approximately
satisfied and to show that kinetic energy is approximately conserved.
Then, at least formally, we can write out the path integral for the evolution of
a wavepacket $\ket{\xi}$ as
\begin{equation}
\label{eq:sketch-path-integral}
e^{itH}\ket{\xi} = \int O_\omega\ket{\xi} \diff \omega.
\end{equation}

We apply this decomposition of the evolution to investigate the
approximate semigroup property for operators of the form
\[
A = \int_{\PhaseSpace} a(\xi) \ket{\xi}\bra{\xi} \diff \xi,
\]
which are local in phase space in the sense that
\[
A\ket{\xi} \approx a(\xi)\ket{\xi}.
\]
Using the path integral~\eqref{eq:sketch-path-integral}
in the definition of $\Evol_t[A]$ we obtain
\[
\Evol_t[A] =
\int_{\PhaseSpace} \iint
\Expec O_{\omega'}^* \ket{\xi}\bra{\xi} O_{\omega} \diff\omega\diff\omega'\diff \xi.
\]
What is important for this sketch of the proof
 is to simply investigate which pairs of paths $\omega$  and $\omega'$ have
\[
\Expec O_{\omega'}^*\ket{\xi}\bra{\xi} O_{\omega} \not= 0.
\]
In particular, we are interested in understanding for which sequence of positions $y_j,y'_j$ and
impulses $q_j, q'_j$ we have
\[
\Expec \prod_{j=1}^k \Ft{V_{y_j}}(q_j) \prod_{j'=1}^{k'} \Ft{V_{y'_{j'}}}^*(q'_{j'}) \not = 0.
\]
Using the fact that $V$ is real and therefore $\Ft{V_y}^*(q) = \Ft{V_y}(-q)$, we can rewrite the expectation above
as
\[
\Expec \prod_{b\in[k]\sqcup[k']} \Ft{V_{y_b}}(q_b),
\]
where $[k]\sqcup[k']$ is shorthand for the doubled index set $[k]\times\{0\}\cup [k']\times\{1\}$
and the $q_b$ impulses are reversed for $b=(j,1)$, so in particular
$q_{j,1} = - (p'_j - p'_{j-1})$.  Because $V_y$ are localized,
the expectation above splits along a partition
$P(\mbf{y},\mbf{y}')\in\mcal{P}([k]\sqcup[k'])$ defined by the clusters of the collision locations $y_b$ (so that
$b\sim_P b'$ when $|y_b-y_{b'}|\lsim r$).  That is, we have an identity of the form
\[
\Expec \prod_{b\in[k]\sqcup[k']} \Ft{V_{y_b}}(q_b)
= \prod_{S\in P(\mbf{y},\mbf{y}')} \Expec \prod_{b\in S} \Ft{V_{y_b}}(q_b).
\]
Within each cluster the expectations is zero unless the sum of the impulses is zero.  This
``conservation of momentum'' condition is a consequence of the stationarity of the potential and is
made rigorous in Lemma~\ref{lem:admissible-V}.  In the case of Gaussian potentials, this is a
consequence of the Wick formula and the identity
\[
\Expec \Ft{V}(p) \Ft{V}(q) = \Ft{R}(p) \delta(p+q).
\]
We are led to the following purely geometric constraints on the pair of paths $\omega,\omega'$.

\begin{definition}
Two paths $\omega,\omega'$ are said to be \emph{compatible} if the partition $P(\mbf{y},\mbf{y'})$
has no singletons and
\[
\sum_{b\in S} q_b = 0
\]
for every $S\in P(\mbf{y},\mbf{y'})$.
\end{definition}

\subsubsection{Geometric characterization of the error term}
This geometric notion of \textit{compatible} paths is perhaps the most significant idea in the proof.
Indeed, the point is that we can usefully manipulate the path integral \textit{before computing an
expectation}.  In other words, we will first decompose the path
integral according to geometry and then allow the geometry to dictate the combinatorics of the diagrams.

Now, we take a step back to appreciate which paths contribute to the error term in the semigroup property.  As the
discussion above indicates, the semigroup property compares the evolution $\Evol_t$ with a fixed potential
$V$ to the evolution $\Evol_{t/2}\circ\Evol_{t/2}$ during which the potential is refreshed from $V_1$ to $V_2$
at time $t/2$.  To keep track of which potential a scattering event sees we introduce the collision
times $t_b$, defined by $t_0=0$ and $t_{b+1} = t_b + s_b$.
The following condition suffices to ensure that the pair of paths $(\omega,\omega')$ has an expected
amplitude that is unaffected by the possibility of a time-dependent potential.
\begin{definition}
Two paths $\omega,\omega'$ are \emph{time-consistent} if $t_b=t_{b'}$ for all pairs of indices
$b,b'\in[k]\sqcup[k']$ such that $y_b=y_{b'}$.  Pairs $(\omega,\omega')$ that are not time-consistent
are said to be \emph{time-inconsistent}.
\end{definition}
Observe that $\omega$ is compatible with itself and is also time-consistent.
We will show that in fact the paths $(\omega,\omega')$ in which
$\omega=\omega'$ or is otherwise a small perturbation of $\omega$ form the bulk of the contribution to
$\Evol_t$.

To understand the error term, we characterize pairs of paths $(\omega,\omega')$ which are compatible
but time-inconsistent. A simple way that a pair $(\omega,\omega')$ could be time-inconsistent is
if either $\omega$ or $\omega'$ have a recollision.  A simple example of a pair of compatible paths which
are time-inconsistent due to a recollision is depicted in Figure~\ref{fig:recollision-example}

\begin{figure}
\centering
\includegraphics{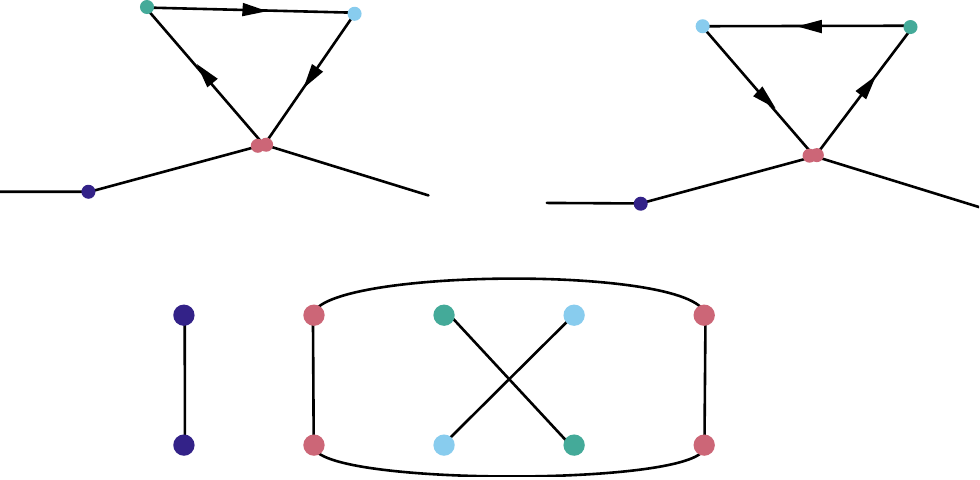}
\caption{An example of a pair of paths with a recollision.  The paths $\omega$ and $\omega'$ are depicted at the top
left and top right, and the collisions are colored according to the cluster in the partition $P(\omega,\omega')$.
On the bottom, an abstract depiction of the partition $P(\omega,\omega')$.  Notice that there is a red cluster with $4$
collisions, two from $\omega$ and two from $\omega'$.}
\label{fig:recollision-example}
\end{figure}

There is another geometric feature which we call a ``tube event'' which occurs when three collisions are collinear
(in general, when they lie on narrow tube).  Tube events can also lead to time inconsistencies, as depicted
in Figure~\ref{fig:parallel-example}

\begin{figure}
\centering
\includegraphics{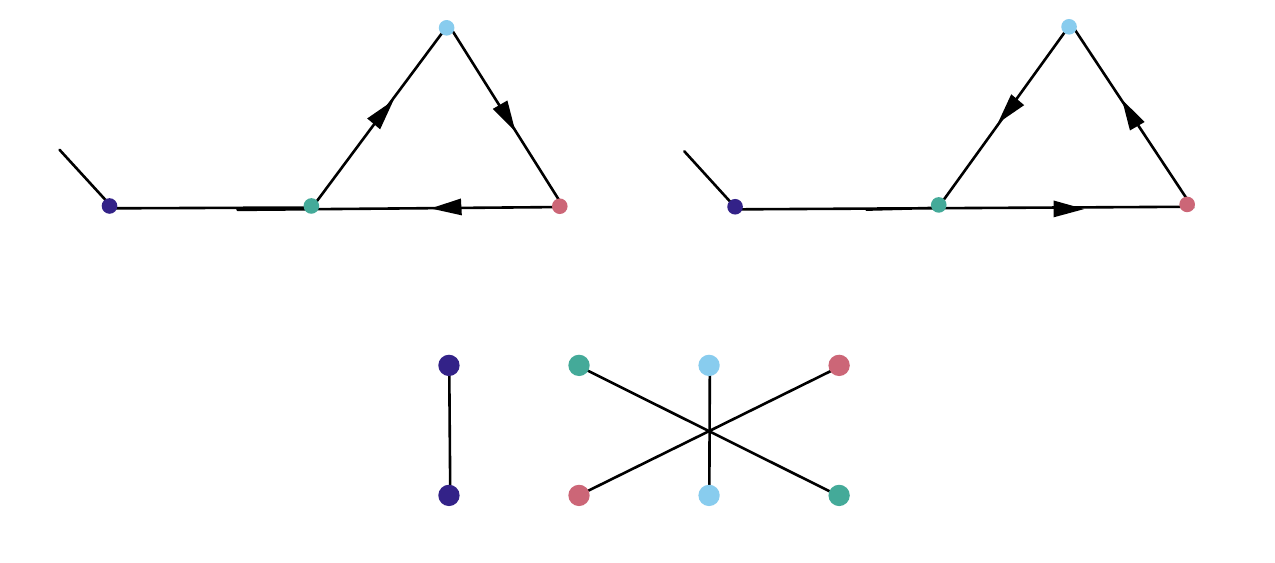}
\caption{An example of a tube event.  Note that there are three collisions lying on a line, but neither $\omega$
nor $\omega'$ forms a recollision.  Nonetheless, the collisions are time-inconsistent because the second
collision of $\omega$ coincides with the fourth collision of $\omega'$, as shown in the diagram of $P(\omega,\omega')$
below.}
\label{fig:parallel-example}
\end{figure}

We note that the estimation of the contribution of these special geometric features replaces the need
for crossing estimates such as the ones studied in~\cite{MR2296805}.
Our diagrams are bounded using relatively simple-minded volumetric
considerations (essentially, after taking care of deterministic cancellations in the integral, we use the
triangle inequality and account for the contribution of each degree of freedom).  This simple-minded
approach works particularly well for subkinetic timescales $\tau\lsim \eps^{-2+\kappa/2}$ in which one
only needs $k_{max}=O(\kappa^{-1})$ collisions in the series expansion to approximate $e^{i\tau H}$
and therefore all combinatorial factors are bounded by a (very large) absolute constant.

The general strategy of the proof therefore is as follows:
\begin{enumerate}
\item Classify the geometric behaviors that can lead to time-inconsistencies.
\item Partition the path integral into paths with bad behaviors and paths without bad behaviors.
\item Use geometric estimates to bound the operator norm of the contribution of the bad paths.
\end{enumerate}
The main new feature of this proof strategy is that the path integral is partitioned
\textit{before the expectation is computed}.  That is, we do not decompose the expectation until
we already have some information about the partition $P(\omega,\omega')$.  This is in contrast to the
traditional approach used in~\cite{spohn77,esyI08,esyII07,erdosyau2000} which is summarized below.
\begin{enumerate}
\item Expand the expectation using the Wick rule or a cumulant expansion.
\item Partition the diagrams according to complexity by a \textit{combinatorial} criterion.
\item Use oscillatory integral estimates to bound the contributions of the bad diagrams.
\end{enumerate}

\subsubsection{Reaching the diffusive timescale}
To reach the diffusive timescale we prove a semigroup property of the form
\[
\Evol_{N\tau} \approx \Evol_\tau^N
\]
where $\tau=\eps^{-2+\kappa/2}$ and $N=\eps^{-\kappa}$.
The challenge we face in trying
 to understand the evolution operator $e^{itH}$ for times $t\sim \eps^{-2-\delta}$ is that one needs to resolve
at least $\eps^{-\delta}$ collisions.  This requires a path integral in a space of dimension $\eps^{-\delta}$.
If we then try to use crude estimates to bound the contribution of the terms in the Duhamel expansion
we may lose a factor of $C^{\eps^{-\delta}}$.  What we need to do is take into account cancellations that occur
between the terms in the Duhamel expansion.  In~\cite{esyI08,esyII07} this is done by renormalizing the
propagator.  This is equivalent to viewing $H=-\Delta/2+\eps V$ not as a perturbation of
the free evolution $-\Delta/2$ but as a perturbation of $-\Delta/2 + \eps^2\Theta$ where $\Theta$ is a multiplier
operator that takes into account the effect of immediate recollisions.  The multiplier $\Theta$ has a nonzero
imaginary part so that $e^{-i\Delta/2 + i\Theta}$ decays exponentially in time.  This exponential
decay exactly matches the exponential growth in the volume of the path inetgral.
The value of $\Theta$ is also chosen so that a precise cancellation occurs in diagrams with immediate recollisions.

We take an alternative approach to resummation.  The idea is that we first write
\[
e^{itH} = e^{iN\tau H} = e^{i\tau H} \cdots e^{i\tau H},
\]
where $N\sim\eps^{-\kappa}$ and $\tau\sim\eps^{-2+\kappa/2}$.  Each of the terms $e^{i\tau H}$ is expanded
as a Duhamel series of $k_{max}$ terms.  We then partition the resulting path integral into pieces depending on
geometric features of the paths and decompose the expectation using this geometric information.  When this is done
we resum the terms in the Duhamel series corresponding to segments that do not have any geometrically
special collisions.  This can be intepreted as a way of writing the evolution
channel $\Evol_{N\tau}$ as a perturbation of the refreshed evolution channel $\Evol_\tau^N$.   This seems to be a more
general strategy for deriving kinetic limits -- the resummation procedure is dictated  by the
desired semigroup structure.

Another important point of comparison concerns the diagrammatic expansion we derive
to reach the diffusive time scale.  In both this paper and in~\cite{esyII07,esyI08} one expands the solution as a
sum over diagrams which are stratified in some way by combinatorial complexity.  In~\cite{esyII07, esyI08} the more
complex diagrams contain more opportunities to find decay via crossing estimates, which
are nontrivial bounds on oscillatory integrals.
In this paper, we first
split the path integral itself according to geometric complexity and then bound the combinatorial complexity
of the diagrams associated to paths of a fixed geometric complexity.  The difference between these approaches is
summarized in Figure~\ref{fig:diagram-table}.
\begin{figure}
\begin{tabular}{l c||c|c|c|c}
\multirow{6}{0.13\textwidth}{Geometric complexity of paths}
&&\multicolumn{4}{|c}{Combinatorial complexity of diagrams} \\
\cline{3-6}
&&  1 & 2 & 3 & 4\\
\cline{2-6}
&1 & $\eps^c N^C$ & - & - & - \\
\cline{2-6}
&2 & $\eps^{2c} N^C$ & $\eps^{2c} N^{2C}$ & - & - \\
\cline{2-6}
&3 & $\eps^{3c} N^C$ & $\eps^{3c}N^{2C}$ & $\eps^{3c}N^{3C}$ & - \\
\cline{2-6}
&4 & $\eps^{4c} N^C$ & $\eps^{4c}N^{2C}$ & $\eps^{4c}N^{3C}$ & $\eps^{4c}N^{4C}$
\end{tabular}
\caption{A cartoon of the contribution of various diagrams.  Diagrams have a combinatorial complexity, and
there are at most $N^{Ck}$ diagrams having complexity exactly $k$.  Moreover paths have a geometric complexity,
and the volume of the paths with geometric complexity $k$ is $\eps^{ck}$.  The approach taken
in~\cite{esyII07,esyI08} is to sort diagrams by combinatorial complexity and then show that the only
contributions to diagrams with high combinatorial complexity also have high geometric complexity.
In this paper, we first sort the paths by geometric complexity and show that paths with low geometric
complexity only contribute to diagrams with low combinatorial complexity.  In summary, we sum along the rows
of this table whereas previous works proceed by summing over the columns.  Note that as long
as $N\ll\eps^{-c'}$ the sum of the contributions is small.}
\label{fig:diagram-table}
\end{figure}

\subsubsection{More explanation of the diagrammatic expansion}
To reach subkinetic times we used the following crude idea to verify the approximate semigroup property:
either the pair of paths $(\omega,\omega')$ has a nontrivial geometric event, or it does not.  If there is a
nontrivial geometric event, we use the triangle inequality inside the path integral and the geometric
information about the event to pick up a factor of $\eps^c$, which is small enough to suppress
the large constant appearing from two inefficiencies in our argument.  The first inefficiency is to fail to
take into account precise cancellations in the path integral, which costs us a factor of $C^k$ where $k$ is the
number of collisions.  The second inefficiency is the failure to take into account the combinatorial
constraints imposed on the collision partition.  The constraints come from ``negative information'' about the
path -- as an oversimplification, if a collision index $b$ is not part of a tube event or a recollision event, then
it must form part of a ladder or anti-ladder.   By failing to take into account this information, we bound
the number of partitions we must sum over by a large combinatorial factor $(C k_{max})^{k_{max}}$ rather than a factor
that depends on the precise geometric constraints on the path.

To reach diffusive times we must make our bounds more efficient on both fronts.  To perform our resummation, we
introduce in Section~\ref{sec:extended-paths} the notion of an ``extended path'', which is a path formed
from $N$ segments each describing the evolution of the particle on an interval of length $\tau$.  An extended
path is a sequence of path segments with phase space points in between consecutive segments,
\[
\Gamma = (\xi_0,\omega_1,\xi_1,\xi_2,\omega_2,\xi_3,
\cdots, \xi_{2\ell-2},\omega_\ell,\xi_{2\ell-1},\cdots,
\xi_{2N-2},\omega_N,\xi_{2N-1}).
\]
An example of an extended path is drawn in Figure~\ref{fig:extended-path}.
\begin{figure}
\centering
\includegraphics{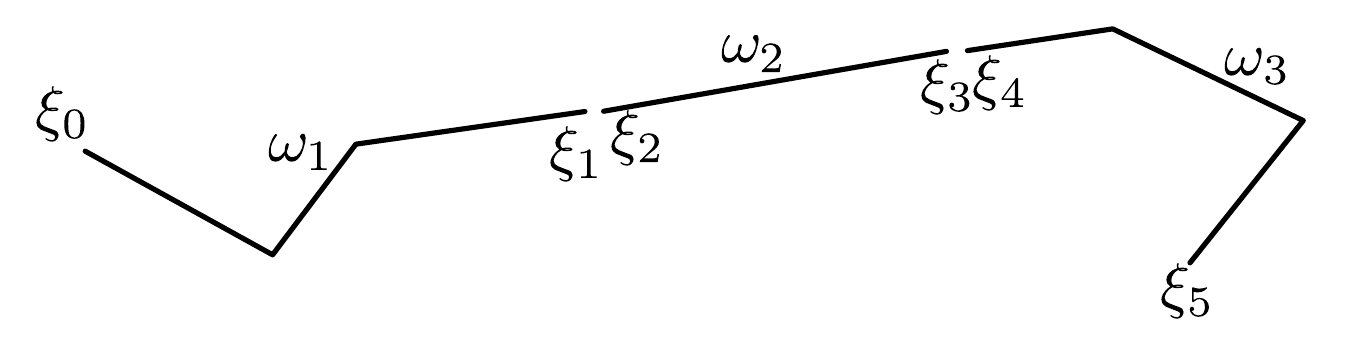}
\caption{A depiction of an extended path $\Gamma$.  The segment $\omega_j$ describes a piecewise linear path
between the endpoints $\xi_{2j-2}$ and $\xi_{2j-1}$.  The $\xi_j$ variables are phase space pairs $(x,p)$
describing the position and momentum of the particle at the boundary of the path segments.}
\label{fig:extended-path}
\end{figure}

Given an extended path we define an operator $O_\Gamma$ by
\[
O_\Gamma = \ket{\xi_0}\bra{\xi_{2N-1}} \prod_{\ell=1}^N \braket{\xi_{2\ell-2}|O_{\omega_\ell}|\xi_{2\ell-1}},
\]
so that including the sum over all possible collision numbers $0\leq k_j\leq k_{max}$ of each segment
in the integral, we have
\[
e^{iN\tau H} \approx \int O_\Gamma \diff \Gamma,
\]
where there is an error term in the approximation that is described in Section~\ref{sec:path-sketch}.

To write down the evolution channel $\Evol_{N\tau}$ we therefore arrive at an integral of the form
\begin{equation}
\Evol_{N\tau}[A] \approx \int \Expec O_{\Gamma^+}^* A O_{\Gamma^-} \diff\Gamma^+\diff\Gamma^-.
\end{equation}
Before we take the expectation, we will split up the pairs of paths $(\Gamma^+,\Gamma^-)$ according to their
geometric properties.
The key result we will need is a description of the structure of the correlation
partition of the paths in terms of the geometric features.  This is done in Section~\ref{sec:path-combo},
 which characterizes the allowed partitions using
ladders and anti-ladders.  Here we simply provide a quick sketch.  An example of a ladder partition
on the disjoint union $[k]\sqcup[k] = [k]\times\{+,-\}$ is
\[
P_{lad} = \{(j,+),(j,-)\}_{j\in [k]}.
\]
An example of an anti-ladder partition on $[k]\sqcup[k]$ is the partition
\[
P_{anti} = \{(j,+), (k+1-j,-)\}_{j\in[k]}.
\]
The ladder and anti-ladder partitions are drawn in Figure~\ref{fig:lads-and-antilads}
\begin{figure}
\includegraphics{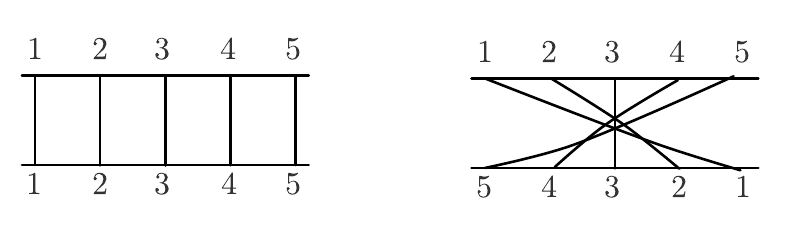}
\caption{An example of a ladder and an anti-ladder partition with five rungs.}
\label{fig:lads-and-antilads}
\end{figure}
The main result of Section~\ref{sec:path-combo} is Lemma~\ref{lem:interval-pair}, which states that
collisions that are not part of a geometric feature (so-called ``typical collisions'') form part of either
a ladder or an anti-ladder structure in the collision partition of $(\Gamma^+,\Gamma^-)$.
Figure~\ref{fig:diagram-figure} illustrates the main result in a special case.
\begin{figure}
\centering
\includegraphics{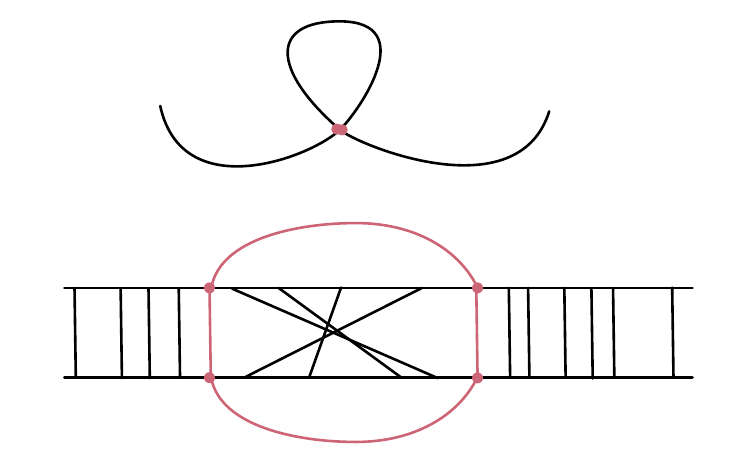}
\caption{An example illustrating Lemma~\ref{lem:interval-pair}.  On the top, a path that has a single recollision
event.  At the bottom, an example of a collision partition compatible with this single recollision event.  Note
that every collision that is not part of the recollision is either part of a ladder or an anti-ladder.}
\label{fig:diagram-figure}
\end{figure}

The next step is to partition the path integral according to the geometric information of the paths, which we
encapsulate in a structure that we call a ``skeleton'' $\mcal{F}(\Gamma^+,\Gamma^-)$.  Given a skeleton
$\mcal{F}$, Lemma~\ref{lem:interval-pair} allows us to construct a set of partitions $\mcal{Q}_\mcal{F}$ such that
for any pair of paths $(\Gamma^+,\Gamma^-)$ with $\mcal{F}(\Gamma^+,\Gamma^-)= \mcal{F}$,
the collision partition $P(\Gamma^+,\Gamma^-)\in\mcal{Q}_\mcal{F}$.  In fact, we have the stronger statement
that for such pairs of paths,
\begin{equation}
\label{eq:skeleton-expec}
\Expec O_{\Gamma^+}^*A O_{\Gamma^-} = \sum_{P\in\mcal{Q}(\mcal{F})} \Expec_P O_{\Gamma^+}^*A O_{\Gamma^-},
\end{equation}
where the sum is over partitions of the collision indices of $(\Gamma^+,\Gamma^-)$, and $\Expec_P$ is shorthand for
a  splitting of the expectation along the partition $P$.

Writing $\One_{\mcal{F}}(\Gamma^+,\Gamma^-)$ for the indicator function that $\mcal{F}(\Gamma^+,\Gamma^-)=\mcal{F}$,
we then attain the following decomposition for the path integral:
\begin{align*}
\Evol_{N\tau}[A]
&= \int \Expec O_{\Gamma^+}^* A O_{\Gamma^-} \diff \Gamma^+\diff\Gamma^- \\
&= \sum_{\mcal{F}} \int \One_{\mcal{F}}(\Gamma^+,\Gamma^-)\Expec O_{\Gamma^+}^* A O_{\Gamma^-}
\diff\Gamma^+\diff\Gamma^-\\
&= \sum_{\mcal{F}} \sum_{P\in\mcal{Q}_\mcal{F}}
\int \One_{\mcal{F}}(\Gamma^+,\Gamma^-)\Expec_P (O_{\Gamma^+})^* A O_{\Gamma^-}
\diff\Gamma^+\diff\Gamma^-.
\end{align*}

The benefit of decomposing the path integral in this way is that the expectation $\Expec_P$ splits in a known
way.  On the other hand there is now the challenge of dealing with the indicator function $\One_\mcal{F}$.
The reason this indicator functions causes a problem is not
the discontinuity (this could be solved by using a smoother partition of unity) but rather the global nature
of the constraints.  In particular, $\One_{\mcal{F}}(\Gamma^+,\Gamma^-)$ includes a product of indicator functions for
each \textit{negative constraint}, that is each pair of collisions that does not form a recollision or a tube event.
The negative constraints are needed to be able to apply Lemma~\ref{lem:interval-pair}, but they make it difficult
to exploit the cancellations needed.
To get around this we use a special form of the inclusion-exclusion principle that is tailored to this purpose.
In particular, in Section~\ref{sec:skeletons} we decompose the indicator function $\One_\mcal{F}$ in the form
\begin{equation}
\label{eq:inc-exc}
\One_{\mcal{F}} = \sum_{\mcal{F}'\geq \mcal{F}} G_{\mcal{F},\mcal{F}'},
\end{equation}
where we impose a partial ordering $\leq$ on skeletons, and
where $G_{\mcal{F},\mcal{F}'}$ is supported on the set of pairs $(\Gamma^+,\Gamma^-)$ such that
$\mcal{F}(\Gamma^+,\Gamma^-)\geq \mcal{F}'$.  In the decomposition~\eqref{eq:inc-exc}, the terms
$G_{\mcal{F},\mcal{F}'}$ depend only on the variables involving collisions that are in the support of the skeleton
$\mcal{F}'$ (that is, collisions involved in a recollision, a cone event, or a tube event).

A challenge is to find a better way to handle the sum over partitions in $\mcal{Q}_\mcal{F}$.
For this we introduce the concept of \textit{colored operators}.  Given a ``coloring'' function $\chi$ which assigns a unique color to
each collision in $\Gamma$, we define the colored operator $O_\Gamma^\chi$ to be an analogue
of $O_\Gamma$ which replaces each instance of the potential $V$ with an appropriately chosen independent copy
of $V$.  Then given a skeleton $\mcal{F}$, we construct two sets of colors $\Psi^+(\mcal{F})$ and $\Psi^-(\mcal{F})$
so that
\[
\sum_{P\in\mcal{Q}(\mcal{F})} \Expec_P O_{\Gamma^+}^*A O_{\Gamma^-}
=
\sum_{\chi^+\in\Psi^+(\mcal{F})}
\sum_{\chi^-\in\Psi^-(\mcal{F})}\Expec (O_{\Gamma^+}^{\chi^+})^*A O_{\Gamma^-}^{\chi^-}
=: \Expec (O_{\Gamma^+}^{\Psi^+})^* A O_{\Gamma^-}^{\Psi^-}.
\]
The precise definition of colored operators is given in Section~\ref{sec:colored-ops}, and the construction
of colorings that reproduce the partition collection $\mcal{Q}_\mcal{F}$ is done in Section~\ref{sec:coloring-machine}.
The benefit of writing the expectation in this way is that we can use the ``operator Cauchy-Schwartz''
inequality
\[
\|\Expec X^*AY\|_{op} \leq \|A\|_{op} \|\Expec X^*X\|_{op}^{1/2} \|\Expec Y^*Y\|_{op}^{1/2}
\]
where $X$ and $Y$ are random operators, to simplify the estimation of the contribution from paths with
skeleton $\mcal{F}$.  The result of this Cauchy-Schwartz procedure is depicted in Figure~\ref{fig:cauchy-schwartz}.
\begin{figure}
\centering
\includegraphics{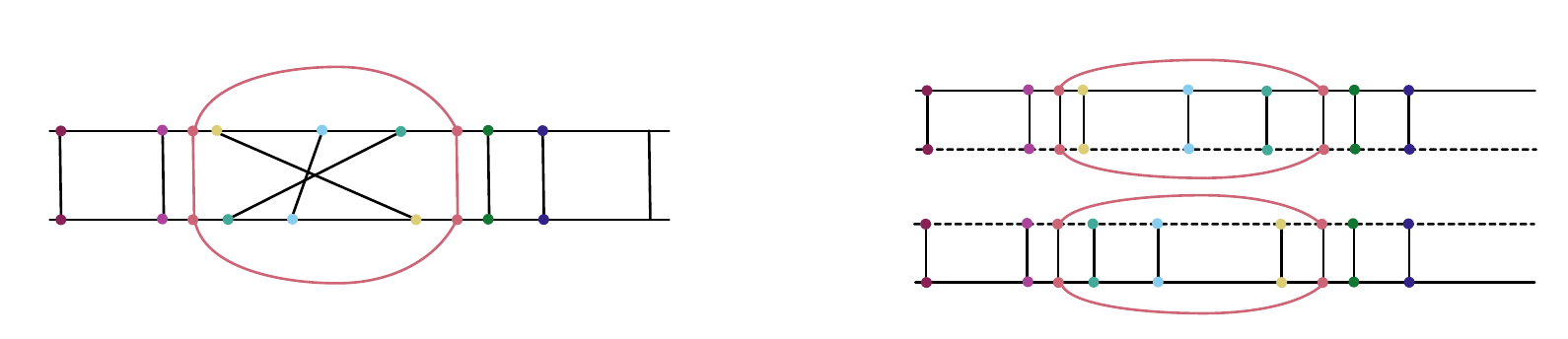}
\caption{On the left, a partition with a collision coloring chosen so that matched collisions have the same color.  When the
operator Cauchy-Schwartz inequality is applied, a copy of the top and bottom rows are produced and matched to each other,
converting the anti-ladder portion of the partition into a ladder (right).}
\label{fig:cauchy-schwartz}
\end{figure}

More precisely, we will apply the operator Cauchy-Schwartz inequality to the operator
\begin{align*}
\Evol_{\mcal{F},\mcal{F}'}[A] &:=
\int G_{\mcal{F},\mcal{F}'}(\Gamma^+,\Gamma^-)
\sum_{P\in\mcal{Q}_\mcal{F}} \Expec_P (O_{\Gamma^+})^* A O_{\Gamma^-}
\diff\Gamma^+\diff\Gamma^- \\
&= \int G_{\mcal{F},\mcal{F}'}(\Gamma^+,\Gamma^-)
\Expec (O_{\Gamma^+}^{\Psi^+})^* A O_{\Gamma^-}^{\Psi^-}.
\diff\Gamma^+\diff\Gamma^-.
\end{align*}
To do this we first split $G_{\mcal{F},\mcal{F}'}$ as a mixture of product functions of the form
\begin{equation}
G_{\mcal{F},\mcal{F}'}(\Gamma^+,\Gamma^-) = \int
H(\theta) \chi_{\mcal{F},\mcal{F}',\theta}^+(\omega)
\chi_{\mcal{F},\mcal{F}',\theta}^-(\omega')\diff \theta.
\end{equation}
We also decompose the coloring sets $\Psi^\pm(\mcal{F},\mcal{F}')$ into carefully chosen components
which are specified by a data structure called a \textit{scaffold}, so we decompose
\[
\Psi^\pm(\mcal{F},\mcal{F}') = \bigcup_{\Scaff\in \mcal{D}^\pm(\mcal{F},\mcal{F}')}
\Psi^\pm(\Scaff).
\]
Then by applying the operator Cauchy-Schwartz inequality we arrive at the estimate
\begin{equation}
\begin{split}
\label{eq:CS-op-sketch}
\|\Evol_{\mcal{F},\mcal{F}'}[A]\|_{op}
&\leq \|A\|_{op}
(\#\mcal{D}^+(\mcal{F},\mcal{F}'))
\sum_{\Scaff\in\mcal{D}^+(\mcal{F},\mcal{F}')}
\big\|
\int |H(\theta)| \Expec (X^+_{\Scaff,\theta})^* (X^+_{\Scaff,\theta})\diff\theta\big\|_{op}^{1/2} \\
&\qquad
(\#\mcal{D}^-(\mcal{F},\mcal{F}')) \sum_{\Scaff\in\mcal{D}^-(\mcal{F},\mcal{F}')}
\big\|
\int |H(\theta)| \Expec (X^-_{\Scaff,\theta})^* (X^-_{\Scaff,\theta})\diff\theta\big\|_{op}^{1/2},
\end{split}
\end{equation}
where
\[
X^\pm_{\Scaff,\theta}
:= \int \chi_{\mcal{F},\mcal{F}',\theta}^\pm(\omega) \sum_{\psi\in\Psi^\pm(\Scaff)} O_\Gamma^\psi\diff \Gamma.
\]
This calculation involving the Cauchy-Schwartz inequality is done more carefully in Section~\ref{sec:diagrams}.

The point of defining the scaffolds is that we can arrange that the operators
$\Expec (X^\pm_{\Scaff,\theta})^*(X^\pm_{\Scaff,\theta})$ involve sums over partitions that are formed
only from ladders, not anti-ladders.
The point is that ladder partitions have a semigroup structure in the sense
that the concatenation of two ladders is a ladder.  We use this structure to more easily use exact cancellations.
More precisely, we use the fact that the ladder partitions form a good approximation to the evolution
$\Evol_\tau$ at subkinetic times and the fact that $\Evol_\tau$ is a contraction in operator norm in order
to obtain the bounds we need.  These bounds are proven in Section~\ref{sec:ladders-superop}.

We also point out that the use of Cauchy-Schwartz in this way is lossy, but it only loses a factor of
$N^{C\|\mcal{F}\|}$.  We can afford to lose this factor because the operator norm appearing in the
right hand side~\eqref{eq:CS-op-sketch} will have order $\eps^{c\|\mcal{F}\|}$.  Roughly this is because
we obtain a factor of $\eps$ for each special geometric event described in $\mcal{F}$.
A careful argument is needed to ensure that one can obtain additional factors of $\eps$ for each recollision
(say) in an integral over paths containing multiple recollisions, and this is done
in Section~\ref{sec:diffusive-bds}.

\subsection{An abbreviated review of related works}
Here we point out some related works, making no attempt at providing a complete review of the rich
field of dynamics in random fields.

The rigorous study of the random Schrodinger equation began with the previously mentioned
work of H. Spohn~\cite{spohn77}.  As mentioned previously, Spohn's analysis was extended to the kinetic
time in~\cite{erdosyau2000} and then to diffusive times in~\cite{esyII07,esyI08}.  Each of these papers
considers the convergence in mean of the Wigner function to the solution of a kinetic equation.  A natural
question is to understand the size of the fluctuations of the Wigner function.  An analysis was carried
out by T. Chen in~\cite{chenhighermean2006} which showed that in fact that the $r$-th moments of the Wigner
function are bounded for any $r<\infty$.  Chen's analysis was later improved by M. Butz in~\cite{butz2013}.
We also point out the work of J. Lukkarinen and H. Spohn in~\cite{lukkarinen2007kinetic}, which shows
that the diagrammatic methods applied to the Schr{\"o}dinger equation can also be used to derive kinetic
equations for a random wave equation.  See the review~\cite{kineticwavereview} for a more complete discussion
of the kinetic regime for waves in a random medium.

Other regimes of interest are the homogenization regime in which the wavelength of the initial condition
is substantially longer than the decorrelation length scale of the potential.  This was studied
by G. Bal and N. Zhang in~\cite{balzhang2014}, where a homogenized equation  with a constant effective potential
is shown to describe the evolution of the average wave function.
This limit was further studied by T. Chen, T. Komorowski, and L. Ryzhik in~\cite{avgwavefx2018}.
An entirely different approach to the study of the average wave function in the kinetic regime
was introduced by M. Duerinckx and C. Shirley in~\cite{duerinckxshirley21}.  There the authors use
ideas from spectral theory to understand the evolution operator, and are able to show with this
method that the average wave function decays exponentially on the kinetic time scale.

The high frequency regime, in which the wavelength of the initial
condition is much shorter than the decorrelation length of the potential, was considered
by G. Bal, T. Komorowski, and L. Ryzhik in~\cite{bkrselfavg2003}. There the authors derive a Fokker-Planck
equation for the evolution of the Wigner function.
%See also~\cite{timesplitting04} for a discussion of
%the high frequency regime in random media. \red{I am not sure \cite{timesplitting04} is a good reference, looks more
%a reference imposed by your advisor. But I have nore read that paper in years.}

The study of the random Schrodinger equation falls into a larger body of work of understanding the emergence
of apparently irreversible phenomena from reversible dynamics~\cite{spohnreview88}.
From this point of view, the
random Schrodinger equation is simply the one-particle quantum mechanical manifestation of a larger phenomenon.

The classical version is the stochastic acceleration problem given by the ordinary differential equation
\[
\ddot{x} = -\eps \nabla V(x)
\]
where again $V$ is a stationary random potential.
Diffusive behavior for the stochastic acceleration problem was first
demonstrated by H. Kesten and G. Papanicolau in~\cite{KestenPapanicolau81} for dimensions $d\geq 3$.  For a
special class of potentials their argument was then applied to the two dimensional case
by D. D\"{u}rr, S. Goldstein, and J. Lebowitz in~\cite{twodmotion87}, and then
T. Komorowski and L. Ryzhik lifted the restriction on the potentials in~\cite{KomorowskiRyzhik06diffusion}.
The argument used by Kesten and Papanicolau
inspired the semigroup approach taken in this paper.  The connection is that Kesten and Papanicolau define a modified
version of the stochastic acceleration problem in which unwanted correlations are ruled out by fiat.  They then
show that this modified evolution is unlikely to have recollisions after all, and therefore is close to the original
evolution.  In a similar way we define an evolution (the refreshed evolution $\Evol_s^m$) which removes
unwanted correlations and use properties of this evolution to study the true evolution $\Evol_{ms}$.  Although
this is where the similarities end, it does seem that a further unification of the proof techniques may
be possible one day.

There are a number of other classical models of particles moving in a random environment.
A popular model
is the Lorentz gas, in which a billiard travels through $\Real^d$ with some obstacles placed according to a Poisson
process.  A pioneering paper in the study of the Lorentz gas is~\cite{bbsgas83} where a linear Boltzmann
equation is derived at the Boltzmann-Grad limit of the model.  A review of this model is provided
in~\cite{diffusiongas14}.  We refer the reader also to some exciting recent
developments in this field~\cite{walksrandom20,LedgerTothValko18,LutskoToth20}.  It seems that the classical
models of particles moving in random environment contain many of the same difficulties of understanding the quantum
evolution.  A deeper understanding of the phase-space path integral may lead us to a better understanding
of the relationship between the classical and quantum problems.

The random Schrodinger equation is also closely related to wave-kinetic theory in which one studies the
evolution of random waves with a nonlinear interaction (see~\cite{MR3971580} for a physically-motivated introduction
to this theory).  A pioneering work in this field is the paper of Lukkarinen and Spohn~\cite{weaklynonlinear11},
in which a wave kinetic equation is derived for the nonlinear Schrodinger equation for initial conditions
that are perturbations of an equilibrium state.  In a series of
works~\cite{faougermainhani16,buckmastershatahgermain20,DengHani21,onsetturbulence21} a wave kinetic
equation was derived for the nonlinear Schrodinger equation on a torus with more general initial
conditions.  Independently,
in~\cite{staffilani2021wave} a wave kinetic equation was derived for the Zakharov-Kuznetsov equation on $\Real^d$
for $d\geq 2$.   Each of these works follows the traditional strategy of writing out a diagrammatic expansion
for the solution and finding sources of cancellation in the error terms and comparing the main terms to a
perturbative expansion of the kinetic equation.  It seems possible that the wavepacket decomposition
used in this paper and the approximate-semigroup argument could be used to make further progress in wave-kinetic
theory.

\subsection{Acknowledgements}
The author is very grateful to Lenya Ryzhik for years of support, advice, and many clarifying discussions. The author also warmly
thanks Minh-Binh Tran for many helpful conversations about the paper.  The author is supported by the Fannie and John Hertz Foundation.

\section{More detailed outline of the proof}
\label{sec:outline}
In this section we lay out the main lemmas used to prove Theorem~\ref{thm:main-result}.

The proof involves analysis of three time scales.  The first time scale is the time $\eps^{-1.5}$ during which
the particle is unlikely to scatter at all and in particular is unlikely to experience more
than one scattering event.  The main result we need from this time scale shows that the linear Boltzmann
equation agrees with the evolution $\Evol_s$ with an error that is very small in operator norm.
This calculation is standard and is reproduced in Appendix~\ref{sec:boltzmann} for the sake of completeness.
The calculation only involves two terms from the Duhamel expansion
 of $e^{-itH}$ so there are no combinatorial difficulties.
\begin{restatable}{proposition}{shorttime}
\label{prp:short-time-compare}
There exists $\theta(d)>0$ such that the following holds:
Let $a_0\in C^{2d+1}$ be an observable
supported on the set $\{(x,p) | |p|>\eps^\theta\}$, and suppose that $a_s$
solves the linear Boltzmann equation~\eqref{eq:dual-boltz}.  Then for $\sigma\leq \eps^{-1.5}$,
\begin{equation}
\|\Op(a_\sigma) - \Evol_\sigma[\Op(a_0)]\|_{op}
\lsim \eps^{2.1}\sigma \|a_0\|_{C^{2d+1}_{\eps^{-1},\eps^{-0.25}}}.
\end{equation}
\end{restatable}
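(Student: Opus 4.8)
\emph{Strategy.} I would expand both sides around the free evolution and match the expansions to order $\eps^{2}$. On the quantum side, iterate the Duhamel formula for $e^{\pm i\sigma H}$ twice: with $H_0=-\frac{1}{2}\Delta$ one has $e^{i\sigma H}=e^{i\sigma H_0}+i\eps\,U_1^{+}+(i\eps)^2 U_2^{+}$, where $U_1^{+}$ is the integral over $s_1\in[0,\sigma]$ of a single factor of $V$ flanked by free propagators and $U_2^{+}$ carries two factors of $V$ together with one leftover full propagator $e^{isH}$; symmetrically $e^{-i\sigma H}=e^{-i\sigma H_0}-i\eps\,U_1^{-}-\eps^2 U_2^{-}$. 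Substituting into $\Evol_\sigma[\Op(a_0)]=\Expec\,e^{i\sigma H}\Op(a_0)e^{-i\sigma H}$ and multiplying out, every term odd in $V$ has vanishing expectation since $\Expec V=0$, so only four families survive: the free term $e^{i\sigma H_0}\Op(a_0)e^{-i\sigma H_0}$; the ``rainbow'' term $\eps^2\,\Expec\big[U_1^{+}\Op(a_0)U_1^{-}\big]$; the two ``self-energy'' terms $-\eps^2\,\Expec\big[U_2^{+}\big]\,\Op(a_0)\,e^{-i\sigma H_0}$ and $-\eps^2\,e^{i\sigma H_0}\,\Op(a_0)\,\Expec\big[U_2^{-}\big]$; and a remainder of order $\eps^3$ and higher. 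Because $\sigma\le\eps^{-1.5}$ lies far below the kinetic time $\eps^{-2}$, this two-step expansion of each propagator is all that is required — in particular there are no combinatorial difficulties, since at most the two explicit copies of $V$ in each main term need to be Wick-paired.

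\emph{Matching the main terms.} Since $H_0$ is quadratic, Egorov's theorem is exact, $e^{i\sigma H_0}\Op(a_0)e^{-i\sigma H_0}=\Op\big(a_0(x+\sigma p,\,p)\big)$, which reproduces the transport part of the dual Boltzmann flow exactly. For the three $O(\eps^2)$ terms I would carry out the expectation over $V$: the two factors of $V$ must pair, and by stationarity of the potential (the ``conservation of momentum'' mechanism of Lemma~\ref{lem:admissible-V}) this produces a factor of $\widehat{R}$ against the momentum transfer, after which the time-ordered integrals over the two free stretches collapse to elementary oscillatory integrals with phase a multiple of $(|p'|^2-|p|^2)/2$. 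Performing the time integrals produces a kernel of the form $\frac{4\sin^2(\sigma\omega/2)}{\omega^2}$ in $\omega=(|p'|^2-|p|^2)/2$, which as $\sigma$ grows concentrates to $2\pi\sigma\,\delta(\omega)$ and so enforces conservation of kinetic energy. Tracking which symbol is evaluated where, the rainbow term assembles the gain term $\eps^2\!\int\!\delta(|p|^2-|p'|^2)\widehat{R}(p-p')a_0(x,p')\diff p'$ and the two self-energy terms assemble the loss term $-\eps^2\!\int\!\delta(|p|^2-|p'|^2)\widehat{R}(p-p')a_0(x,p)\diff p'$, each conjugated by the appropriate amount of free transport — which is precisely the $O(\eps^2)$ contribution to $\Op(a_\sigma)$ obtained by running Duhamel on~\eqref{eq:dual-boltz}.

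\emph{Error accounting, and the main obstacle.} It remains to bound everything discarded by $\eps^{2.1}\sigma\,\|a_0\|_{C^{2d+1}_{\eps^{-1},\eps^{-0.25}}}$. After one or two more Duhamel steps to remove the odd remainders, the leftover terms carry $2m\ge 4$ factors of $V$, and a crude pairing-and-volume estimate of the type standard for the random Schr\"odinger equation (the time-ordered simplices absorbing all combinatorial factors) bounds their total by $\sum_{m\ge 2}(C\eps^2\sigma)^m\|a_0\|_{op}\lesssim\eps^4\sigma^2\|a_0\|_{op}\le\eps^{2.1}\sigma\|a_0\|_{op}$ since $\sigma\le\eps^{-1.5}$; the Taylor remainder in the Duhamel expansion of~\eqref{eq:dual-boltz} is comparable. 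The delicate contribution is the error made in replacing the finite-time kernel $\frac{4\sin^2(\sigma\omega/2)}{\omega^2}$ by $2\pi\sigma\,\delta(\omega)$ and, more generally, in quantifying the passage between this microscopic oscillatory-integral picture and the macroscopically rescaled symbol calculus used to estimate it; it is here that the hypothesis $\supp a_0\subset\{|p|\ge\eps^{\theta}\}$ enters, keeping the stationary phase in the kinetic energy non-degenerate (the Jacobian of $|p|^2$ is $\sim|p|\ge\eps^{\theta}$, so the scattering cross-section stays $\sim\eps^2$), and it is this step that fixes the admissible $\theta(d)$ and the intermediate scale $L=\eps^{-0.25}$ in the norm. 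The main obstacle is therefore analytic rather than combinatorial: making the emergence of energy conservation quantitative, with an error improving on the naive size $\eps^2\sigma$ of the collision term by a genuine power of $\eps$. Everything else is bookkeeping, carried out in Appendix~\ref{sec:boltzmann}.
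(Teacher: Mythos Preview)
Your proposal is correct and follows essentially the same approach as the paper: a second-order Duhamel expansion on both sides, with the free term matching transport exactly, the cross term $\mcal{T}_{1,1}$ (your ``rainbow'') producing the gain, the same-side terms $\mcal{T}_{0,2}+\mcal{T}_{2,0}$ (your ``self-energy'') producing the loss, and the higher-order remainder controlled by $(\eps^2\sigma)^{m}$ for $m\ge 2$. The one stylistic difference is in how the energy-shell localization is made quantitative: you phrase it as replacing the Fej\'er kernel $4\sin^2(\sigma\omega/2)/\omega^2$ by $2\pi\sigma\,\delta(\omega)$, whereas the paper keeps the average collision time $S=(s+s')/2$ as a free variable (so as to match the spatial argument in the classical $T_1a$), integrates only the difference $u=s-s'$ against a smoothed time cutoff, and then uses the coarea formula together with a radial change of variables $p'=(1+\kappa)q$ to compare the finite-width annulus to the sphere --- but this is exactly your ``main obstacle,'' and both routes land on the same $C^{k}_{r,\delta}$-type error with the extra power of $\eps$ coming from the macroscopic smoothness of $a_0$.
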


To use Proposition~\ref{prp:short-time-compare} along with the semigroup approximation strategy, we
need the following regularity result for the short time evolution of the linear Boltzmann equation.
\begin{lemma}
\label{lem:short-regularity}
There exists $\theta=\theta(d)>0$ such that the following holds:
Let $a_s$ solve the linear Boltzmann equation~\eqref{eq:dual-boltz}
and $\supp a_0 \subset\{(x,p)\mid |p|\geq \eps^\theta\}$.  Then for $s\leq \eps^{-2.05}$,
\[
\|a_s\|_{C^{2d+1}_{\eps^{-1},\eps^{-0.25}}} \leq C \|a_0\|_{C^{2d+1}_{\eps^{-1},\eps^{-0.5}}}.
\]
\end{lemma}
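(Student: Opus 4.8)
The plan is to control the rescaled $C^{2d+1}$ norm of the solution $a_s$ of the dual linear Boltzmann equation~\eqref{eq:dual-boltz} directly from the Duhamel formula for that equation. Write the dual equation as $\partial_s a = (T + \eps^2 K) a$, where $T = -p\cdot\nabla_x$ is free transport and $K$ is the (bounded, $x$-independent) scattering operator $K a(x,p) = \int \delta(|p|^2-|p'|^2)\Ft R(p-p')[a(x,p')-a(x,p)]\diff p'$. The transport semigroup $e^{sT}$ acts by $a(x,p)\mapsto a(x-sp,p)$, so it is an isometry in the unweighted $C^k$ norm but \emph{shears} derivatives: in the weighted norm $\|\cdot\|_{C^k_{r,L}}$ with $r=\eps^{-1}$, a single $x$-derivative is benign but each $p$-derivative of $e^{sT}a$ produces, via the chain rule, a term $-s\,(\partial_x a)(x-sp,p)$; measured in the rescaled norm this costs a factor $\sim s\eps^2$ (one factor $r^{-1}L$ from the $\partial_p$ slot is replaced by $rL$ times $s$, i.e. a net $s r^{-2} = s\eps^2$). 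For $s\le \eps^{-2.05}$ this is $O(\eps^{-0.05})$, which is \emph{not} bounded — so the isometry argument for $e^{sT}$ alone is too lossy, and the weight exponents ($\eps^{-0.25}$ on the left, $\eps^{-0.5}$ on the right) must be used to absorb exactly this shearing loss.

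Concretely, I would proceed as follows. First, record that $K$ is bounded on every $C^k_{r,L}$ uniformly in $r,L$ (it involves no $x$-derivatives and acts on $p$ at the macroscopic scale $O(1)$, matching the $r^{-1}L$-weighting once $L\le r$; the smoothness and compact support of $\Ft R$ give the needed bound, using the lower bound $|p|\ge\eps^\theta$ on the support to keep the energy-sphere restriction $\delta(|p|^2-|p'|^2)$ from degenerating). Second, expand $a_s = e^{sT}a_0 + \eps^2\int_0^s e^{(s-u)T} K a_u\,\diff u$ and estimate in a norm with a \emph{time-dependent, interpolated weight}: set $L(u) = \eps^{-0.5}(1 + c\eps^2 u)^{-\beta}$ or more simply work with the pair of weights $\eps^{-0.5}$ (input) and $\eps^{-0.25}$ (output) and show the shearing loss over the whole interval $s\le\eps^{-2.05}$ is bounded by $(\eps^{-0.5}/\eps^{-0.25}) = \eps^{-0.25}$ — wait, that is the wrong direction, so the mechanism must instead be: the coarser input weight $\eps^{-0.5}$ gives \emph{more} room in the $\partial_p$ slot, and the shear $s\eps^2 \lesssim \eps^{-0.05}$ is comfortably absorbed by the ratio $(\eps^{-0.5})/(\eps^{-0.25}) = \eps^{-0.25}$ of weights when one tracks how a $\partial_p$ on the output (weighted $\eps^{-0.25}$) converts into a $\partial_x$ evaluated along the characteristic (which the input weight $\eps^{-0.5}$ can pay for). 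I would make this precise by a Gronwall argument on $\phi(s) := \|a_s\|_{C^{2d+1}_{\eps^{-1}, L(s)}}$ with $L(s)$ interpolating between $\eps^{-0.5}$ at $s=0$ and $\eps^{-0.25}$ at $s=\eps^{-2.05}$, showing $\phi'(s) \le C\eps^2\phi(s) + (\text{shear term controlled by }\dot L/L)$, so that $\phi(s)\le C\phi(0)$ for all $s$ in the range; since $C\eps^2\cdot\eps^{-2.05} = C\eps^{-0.05}\to\infty$ one cannot be cavalier, so the interpolation of $L$ must be chosen so that the negative drift from $\dot L/L < 0$ in the $\partial_p$-slot estimate exactly cancels the $O(\eps^2)$ growth; this is the crux and forces the specific gap between $0.25$ and $0.5$.

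The main obstacle, then, is the bookkeeping of how transport converts momentum-derivatives into position-derivatives and showing that the weight interpolation closes the Gronwall inequality with an $\eps$-independent constant over the full time $\eps^{-2.05}$ rather than merely over a kinetic time $O(\eps^{-2})$. A clean way to organize this is to commute derivatives through $e^{sT}$ explicitly: $(r L\partial_x)^{\alpha_x}(r^{-1}L\partial_p)^{\alpha_p}(e^{sT}a) = e^{sT}\big[(rL\partial_x)^{\alpha_x}(r^{-1}L(\partial_p + s\partial_x))^{\alpha_p} a\big]$, expand the binomial $(\partial_p + s\partial_x)^{\alpha_p}$, and observe that every extra $s\partial_x$ relative to the target norm is weighted by $r^{-1}L\cdot s\cdot (rL)^{-1} \cdot (rL) = $ — i.e. by $s r^{-2} L_{\mathrm{out}}/L_{\mathrm{in}}$ against the input weight — and this is $\le \eps^{-0.05}\cdot \eps^{-0.25}/\eps^{-0.5} = \eps^{-0.05}\cdot\eps^{0.25} = \eps^{0.2} \le 1$; so in fact the \emph{fixed} pair of weights already suffices and no interpolation is needed, the point being simply that $s\eps^2 \cdot (\text{weight ratio}) = \eps^{-0.05}\cdot\eps^{0.25} \to 0$. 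With that observation the rest is a routine Duhamel/Gronwall estimate: $\|a_s\|_{C^{2d+1}_{\eps^{-1},\eps^{-0.25}}} \le \|a_0\|_{C^{2d+1}_{\eps^{-1},\eps^{-0.5}}} + C\eps^2\int_0^s \|a_u\|_{C^{2d+1}_{\eps^{-1},\eps^{-0.25}}}\diff u$, whence $\|a_s\|_{C^{2d+1}_{\eps^{-1},\eps^{-0.25}}} \le e^{C\eps^2 s}\|a_0\|_{C^{2d+1}_{\eps^{-1},\eps^{-0.5}}} \le e^{C\eps^{-0.05}}\|a_0\|_{\cdots}$ — and here one really does need a small refinement, because $e^{C\eps^{-0.05}}$ blows up; the fix is that the scattering operator $K$ is not merely bounded but a generator of a \emph{contraction} semigroup (it is the dual of a Markov jump process on the energy sphere, hence $\|e^{\eps^2 s K}\|_{C^0}=1$ and, crucially, $\|e^{\eps^2 s K}\|_{C^k}\le 1$ as well because the jump kernel is smooth and $x$-independent so it commutes with $x$-derivatives and contracts $p$-derivatives after the weight is chosen at the natural scale). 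So the correct final step is to write the evolution as a perturbation of the \emph{dissipative} part $T + \eps^2 K_{\mathrm{diss}}$ whose propagator is a contraction in the weighted norm, with the remaining (smoothing, gain) part of $K$ contributing the genuine $O(\eps^2 s) = O(\eps^{-0.05})$ — no, that still diverges. The honest resolution, which I expect the author uses, is that for $s\le\eps^{-2.05}$ one is still at a \emph{subkinetic} time in the sense that $\eps^2 s\le\eps^{-0.05}$ is small \emph{on the logarithmic scale relative to the loss we are allowed}: the statement only asks for a bound by a constant $C$, and $e^{C\eps^{-0.05}}$ is \emph{not} a constant, so one genuinely must exploit contractivity of the full $e^{\eps^2 s K}$ (not just boundedness) in each weighted $C^k$ norm — this is the one nonroutine ingredient, and it holds because $K$ is a bounded, $x$-translation-invariant operator with $\mathrm{Re}\,\mathrm{spec}(K)\le 0$ acting diagonally in Fourier-in-$x$, so that $\|e^{\eps^2 s(T+\eps^2 K)}a\|_{C^k_{r,L_{\mathrm{out}}}}\le \|a\|_{C^k_{r,L_{\mathrm{in}}}}\cdot(1 + O(s\eps^2\cdot L_{\mathrm{out}}/L_{\mathrm{in}}))\le C\|a\|_{C^k_{r,L_{\mathrm{in}}}}$ once the shear factor $s\eps^2 L_{\mathrm{out}}/L_{\mathrm{in}} = \eps^{-0.05}\cdot\eps^{0.25}=\eps^{0.2}$ is seen to be $o(1)$. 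I would present the argument in this last form: (i) weighted boundedness and contractivity of $K$; (ii) the commutation identity for $e^{sT}$ showing the shear costs only $s\eps^2 L_{\mathrm{out}}/L_{\mathrm{in}} = o(1)$; (iii) Duhamel iteration combined with the contraction bound to close with an absolute constant $C$.
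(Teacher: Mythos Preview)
Your overall approach --- Duhamel expansion plus a transport-shear estimate using the gap between the input and output weights --- is the same as the paper's (Appendix~\ref{sec:regularity}): the solution is written as the series $a_t=\sum_j g_{j,t}$ with $g_{0,t}=T_ta_0$ and $g_{j+1,t}=\int_0^tT_sLg_{j,t-s}\,\diff s$, and the terms are bounded geometrically. Your computation that the weight ratio absorbs the transport shear, $sr^{-2}\cdot L_{\mathrm{out}}/L_{\mathrm{in}}=\eps^{-0.05}\cdot\eps^{0.25}=\eps^{0.2}\ll 1$, is correct and corresponds to the parameter $\delta$ in the paper's Lemma~\ref{lem:boltz-regularity}. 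What the paper supplies and you only assert is the $C^k$ bound on the scattering operator: it decomposes $p$-derivatives into radial and tangential parts, uses that tangential derivatives annihilate $\delta(|p|^2-|p'|^2)$ while the radial derivative can be integrated by parts onto $p'$, and obtains $\|Lf\|_{C^k}\le C_k\eps^2\theta^{1-k}\|f\|_{C^k}$ on functions supported in $\{|p|\ge\theta\}$ (Lemmas~\ref{lem:polar-Ck}--\ref{lem:L-Ck-bd}). This polar-coordinate step is where the low-momentum cutoff enters and is not routine.

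The genuine gap is the contractivity claim you invoke to close the estimate. You correctly flag that naive Gronwall gives $e^{C\eps^2 s}$ with $\eps^2 s\le\eps^{-0.05}$, which is unbounded, and try to repair this by asserting that $e^{\eps^2 sK}$ is a contraction in the weighted $C^k$ norm. The Markov structure gives $L^\infty$ contraction, and since $K$ commutes with $\partial_x$ this automatically handles all $x$-derivatives; but for $p$-derivatives the commutator $[\partial_p,K]$ is nonzero (it involves $\nabla\Ft R$ and the gradient of the total cross-section), and there is no maximum principle for derivatives. Your spectral argument ($\mathrm{Re\,spec}(K)\le 0$ on each $x$-Fourier mode) only yields contraction in $L^2_p$ or $L^\infty_p$ fiberwise, not in $C^k_p$; unless $R$ is radial, so that $K$ commutes with rotations of each energy sphere, there is no reason for $C^k_p$ contractivity. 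The paper does not invoke contractivity at all --- it relies on term-by-term bounds on the series --- so this is a place where your route diverges from the paper's and where your argument, as written, does not close.
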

Lemma~\ref{lem:short-regularity} is proven with a simple and suboptimal argument in
Appendix~\ref{sec:regularity},
where we prove a slightly stronger version in Lemma~\ref{lem:boltz-regularity}.

Using Proposition~\ref{prp:short-time-compare} and Lemma~\ref{lem:short-regularity} we can prove that the
``$\sigma$-refreshed'' evolution $\Evol_\sigma^n$ approximates the linear Boltzmann equation up to a diffusive
timescale.
\begin{corollary}
\label{cor:refreshed-evolution}
For $\sigma=\eps^{-1.5}$ and $m\in\bbN$ such that $m\sigma\leq \eps^{-2.05}$, and $a_s$
solving~\eqref{eq:dual-boltz},
\begin{equation}
\label{eq:refreshed-evol-compare}
   \|\Evol_\sigma^m[\Op(a_0)] - \Op(a_{m\sigma})\|_{op} \leq C \eps^{2.1}m\sigma
\|a_0\|_{C^{2d+1}_{\eps^{-1},\eps^{-0.5}}}.
\end{equation}
\end{corollary}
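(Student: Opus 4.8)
\emph{Proof sketch.} The plan is a telescoping argument that upgrades the single-step comparison of Proposition~\ref{prp:short-time-compare} to the $m$-step statement, using the contractivity of $\Evol_\sigma$ in operator norm so that the per-step errors add rather than compound. Two elementary facts are needed at the outset. First, $\|\Evol_\sigma[A]\|_{op}\leq\|A\|_{op}$ for every bounded $A$: for each realization of the potential the map $A\mapsto e^{i\sigma H}Ae^{-i\sigma H}$ is an isometry of $\mathcal{B}(L^2)$, and taking expectations does not increase the norm. Second, the dual linear Boltzmann equation~\eqref{eq:dual-boltz} is autonomous, so if $a_s$ solves~\eqref{eq:dual-boltz} with data $a_0$ then, for every $j$, the observable $a_{(j+1)\sigma}$ is the time-$\sigma$ solution of~\eqref{eq:dual-boltz} started from $a_{j\sigma}$. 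I also record that both the transport term $-p\cdot\nabla_x$ and the collision kernel (supported on $\{|p|=|p'|\}$) leave $|p|$ invariant, so if $\supp a_0\subset\{|p|\geq\eps^\theta\}$ then $\supp a_{j\sigma}\subset\{|p|\geq\eps^\theta\}$ for all $j$; here $\theta=\theta(d)$ is chosen at least as large as the thresholds appearing in Proposition~\ref{prp:short-time-compare} and Lemma~\ref{lem:short-regularity}.

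Next I would telescope. Setting $G_j:=\Evol_\sigma^{\,m-j}[\Op(a_{j\sigma})]$, so that $G_0=\Evol_\sigma^m[\Op(a_0)]$ and $G_m=\Op(a_{m\sigma})$, one has
\[
\Evol_\sigma^m[\Op(a_0)]-\Op(a_{m\sigma})
=\sum_{j=0}^{m-1}(G_j-G_{j+1})
=\sum_{j=0}^{m-1}\Evol_\sigma^{\,m-1-j}\big[\Evol_\sigma[\Op(a_{j\sigma})]-\Op(a_{(j+1)\sigma})\big].
\]
Applying the contraction bound $m-1-j$ times controls the operator norm of the $j$-th summand by $\|\Evol_\sigma[\Op(a_{j\sigma})]-\Op(a_{(j+1)\sigma})\|_{op}$. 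For each fixed $j$ the semigroup property identifies $a_{(j+1)\sigma}$ as the time-$\sigma$ Boltzmann solution with data $a_{j\sigma}$, so Proposition~\ref{prp:short-time-compare} (valid since $\sigma=\eps^{-1.5}$) gives
\[
\|\Evol_\sigma[\Op(a_{j\sigma})]-\Op(a_{(j+1)\sigma})\|_{op}\lsim\eps^{2.1}\sigma\,\|a_{j\sigma}\|_{C^{2d+1}_{\eps^{-1},\eps^{-0.25}}},
\]
and Lemma~\ref{lem:short-regularity}, applicable because $j\sigma\leq m\sigma\leq\eps^{-2.05}$, bounds the right side by $C\eps^{2.1}\sigma\,\|a_0\|_{C^{2d+1}_{\eps^{-1},\eps^{-0.5}}}$. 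Summing the $m$ terms yields exactly the claimed bound.

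The argument is essentially bookkeeping on top of the two cited results, so I do not anticipate a genuine obstacle. The two places that merit a moment's attention are the operator-norm contractivity of $\Evol_\sigma$ — this is what keeps the accumulated error linear in $m$ rather than exponential — and the verification that each intermediate observable $a_{j\sigma}$ still satisfies the hypotheses of Proposition~\ref{prp:short-time-compare} and Lemma~\ref{lem:short-regularity}, namely the momentum-support condition (preserved by the Boltzmann flow, as noted) and the regularity bound, which is available precisely on the time interval $[0,\eps^{-2.05}]$ guaranteed by Lemma~\ref{lem:short-regularity} and is the reason for the hypothesis $m\sigma\leq\eps^{-2.05}$. \qed
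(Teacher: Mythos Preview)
Your proof is correct and follows essentially the same approach as the paper's: both arguments telescope the error into single-step comparisons, use the operator-norm contractivity of $\Evol_\sigma$ to prevent error compounding, and invoke Proposition~\ref{prp:short-time-compare} together with Lemma~\ref{lem:short-regularity} for each step. The paper phrases this as an induction on the quantity $F_j:=\sup_{\|a_0\|=1}\|\Evol_\sigma^j[\Op(a_0)]-\Op(a_{j\sigma})\|_{op}$ yielding the recursion $F_{j+1}\leq F_j+C\eps^{2.1}\sigma$, while you write out the telescoping sum directly; these are equivalent presentations of the same idea.
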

\begin{proof}
We define the quantity
\[
F_j := \sup_{\|a_0\|_{C^{2d+1}_{\eps^{-1},\eps^{-0.5}}}=1}
\|\Evol_\sigma^j[\Op(a_0)] - \Op(a_{j\sigma})\|_{op}.
\]
By Proposition~\ref{prp:short-time-compare}, $F_1\leq C \eps^{2.1}\sigma$.  To obtain a bound for
$F_{j+1}$ from $F_j$ we write
\[
\|\Evol_\sigma^m[\Op(a)] - \Op(a_{m\sigma})\|_{op}
\leq \|\Evol_\sigma[\Op(a_{(m-1)\sigma})] - \Op(a_{m\sigma})\|_{op}
+ \|\Evol_\sigma^m[\Op(a)] - \Evol_\sigma[\Op(a_{(m-1)\sigma})]\|_{op}.
\]
The first quantity is bounded using Proposition~\ref{prp:short-time-compare} and Lemma~\ref{lem:short-regularity}.
The second term is bounded by $F_j$
using the fact that $\Evol_\sigma$ is linear and is a contraction in the operator norm:
\[
\|\Evol_\sigma[A]\|_{op} =
\|\Expec e^{i\sigma H} A e^{-i\sigma H}\|_{op}
\leq \Expec
\|e^{i\sigma H} A e^{-i\sigma H}\|_{op} = \|A\|_{op}.
\]
Therefore we obtain the bound
\[
F_{j+1} \leq F_j + C \eps^{2.1}\sigma.
\]
In particular,
\[
F_m \leq C\eps^{2.1} m\sigma,
\]
so~\eqref{eq:refreshed-evol-compare} follows.
\end{proof}

The more substantial component of the proof of Theorem~\ref{thm:main-result} is the approximate
semigroup property relating the ``refreshed'' evolution $\Evol_\sigma^M$ to the correct evolution
channel $\Evol_{M\sigma}$.  For the purposes of proving an approximate semigroup property it is more convenient to
work with the ``wavepacket quantization'' defined by.
\[
\Op(a) := \int_{\Real^d} \ket{\xi}\bra{\xi} a(\xi)\diff\xi.
\]
In Appendix~\ref{sec:wp-quantization} we will show that the wavepacket
quantization is close to the Weyl quantization, in the sense that
\[
\|\Op(a) - \Op^w(a)\|_{op}
\leq \eps^{0.05} \|a\|_{C^{2d+1}_{\eps^{-1},\eps^{-0.1}}}.
\]
In general, we will be interested in operators of the form
\[
A = \int \ket{\xi}\bra{\eta} a(\xi,\eta)\diff\xi
\]
with kernel $a(\xi,\eta)$ satisfying $|a(\xi,\eta)|\leq C\|A\|_{op}$ and
supported on near the diagonal.  To quantify this we introduce the distance $d_r$
on $\PhaseSpace$ so that, writing $\xi=(\xi_x,\xi_p)$ and $\eta=(\eta_x,\eta_p)$,
\[
d_r(\xi,\eta) := r^{-1} |\xi_x-\eta_x| + r|\xi_p-\eta_p|.
\]
More precisely, we are interested in families of \emph{good operators}, defined below.
\begin{definition}[Good operators]
An operator $A\in\mathcal{B}(L^2(\Real^d))$ is said to be $(C_1,C_2,\delta)$-good if
there exists a function $a:\PhaseSpace\times\PhaseSpace\to\Complex$ supported in the set
\[
\supp a \subset \{(\xi,\eta)\in \PhaseSpace\times\PhaseSpace \mid
d_r(\xi,\eta) < C_1, |\xi_p| \geq \eps^{\theta} - C_2\eps\}
\]
such that
\[
\|A - \int a(\xi,\eta) \ket{\xi}\bra{\eta} \diff\xi\diff\eta\|_{op} \leq \delta.
\]
\end{definition}
Note that the rank one projection onto a
wavepacket $\ket{\xi}\bra{\xi}$ is (formally) a $(0,0,0)$-good operator if $|\xi_p|\geq \eps^{\theta}$,
but its Wigner transform has smoothness only at the microscopic scale $(r,r^{-1})$.  Similarly if
$a\in C^{2d+1}$ is an observable supported on $\{(x,p)||p|\geq \eps^{\theta}\}$, then
the wavepacket quantization $\Op(a)$ is a $(0,0,0)$-good operator.
Moreover, by~\eqref{eq:quantization-compare} we have that $\Op^w(a)$ is a $(0,0,\eps^{1/2})$-good operator
if $a\in C^{2d+1}_{\eps^{-1},\eps^{-1/2}}$.

The first step in the proof of the approximate semigroup property is to verify a semigroup property up to
times $\eps^{-2+\kappa/2}$.
\begin{proposition}
\label{prp:subkinetic-semigp}
If $A$ is a $(C_1,C_2,\delta)$-good operator with $C_1\leq \eps^{-0.1}$
and $C_2\leq \frac{1}{2} \eps^{\theta}$ and $\eps^{-1.5}<  s < \eps^{-2+\kappa/2}$
then
\[
\|\Evol_{2s}[A] - \Evol_s^2[A]\|_{op}
\leq \eps^{2.1} s \|A\|_{op} + 2\delta.
\]
and moreover $\Evol_s[A]$ is a $(C_1+ |\log\eps|^{10}, C_2-10^3\eps, \delta + \eps^{100})$-good
operator.
\end{proposition}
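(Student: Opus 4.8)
The plan is to prove both assertions by running the phase-space path integral of Section~\ref{sec:path-sketch} on $\Evol_{2s}$ and on $\Evol_s^2$, comparing them term by term, and isolating the geometrically special pairs of paths on which the two disagree. First I would use that $\Evol_t$ is linear and, as in the proof of Corollary~\ref{cor:refreshed-evolution}, a contraction in operator norm: writing $A=\int a(\xi,\eta)\ket\xi\bra\eta\,d\xi\,d\eta+E$ with $\|E\|_{op}\le\delta$ and $a$ supported as in the definition of a good operator, each of $\Evol_{2s}[E]$ and $\Evol_s^2[E]$ has norm $\le\delta$, which produces the additive $2\delta$ and reduces the first claim to $A$ of the displayed kernel form. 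I would then expand $e^{-2isH}$ by splitting at time $s$, and $e^{-isH}$ directly, in each case truncated to at most $k_{max}=O(\kappa^{-1})$ collisions per length-$s$ piece. The point for the truncation is that the discarded tail must be estimated \emph{after} the expectation is split along clusters of collision locations, not at the level of $e^{isH}$ alone: each cluster then carries a factor $\eps^2\Ft R(\cdot)$ and the free time variables contribute $O(\eps^2 s)$ per rung, so the series in the collision number converges geometrically on the subkinetic range (where $\eps^2 s<\eps^{\kappa/2}$) and the omitted tail has operator norm $\lsim(\eps^2 s)^{k_{max}}\le\eps^{100}$.

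\textbf{The difference lives on straddling pairs.} Next I would write $\Evol_{2s}[A]=\int\Expec O_{\omega'}^{*}AO_\omega\,d\omega\,d\omega'$ over pairs of length-$2s$ paths split at time $s$ and, using $\Evol_s^2[A]=\Expec e^{isH_2}e^{isH_1}Ae^{-isH_1}e^{-isH_2}$, the same integral with the potential refreshed at time $s$. In both cases the expectation splits along the partition $P(\mbf y,\mbf y')$ of collision locations with the impulses summing to zero in each cluster (Lemma~\ref{lem:admissible-V}); the only difference is that for $\Evol_s^2$ the partition is refined so that two collisions correlate only when they lie on the same side of time $s$. Hence for every \emph{compatible} pair all of whose clusters lie entirely on one side of time $s$ the two contributions agree and cancel, so
\[
\Evol_{2s}[A]-\Evol_s^2[A]=\int_{(\omega,\omega')\ \mathrm{bad}}\bigl(\Expec_{P}-\Expec_{P_{\mathrm{refined}}}\bigr)O_{\omega'}^{*}AO_\omega\,d\omega\,d\omega',
\]
where ``bad'' means compatible with at least one cluster straddling time $s$. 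By the path-combinatorics of Section~\ref{sec:path-combo} (the $N=2$ instance of Lemma~\ref{lem:interval-pair}), every bad pair contains a recollision inside $\omega$ or $\omega'$, or a tube event joining a collision before time $s$ to one after.

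\textbf{Bounding the bad pairs.} I would then put the triangle inequality inside the path integral. The wavepacket operators $\ket\zeta\bra{\zeta'}$ have operator norm $O(1)$ (Appendix~\ref{sec:wp-quantization}), the amplitude $\prod\Ft R(\cdot)$ is bounded by $\|\Ft R\|_\infty^{O(k_{max})}$, and the unconstrained time and momentum integrals contribute $\sum_\ell O(\eps^2 s)^\ell=O(1)$. The decisive gain is geometric: a recollision forces the piecewise-linear path to return to an already-visited point, up to the wavepacket width $r=\eps^{-1}$, over a displacement of order $s\min_j|p_j|\gtrsim s\,\eps^{\theta}$ (here the support hypothesis $|\xi_p|\ge\eps^{\theta}-C_2\eps$ together with kinetic-energy conservation keeps all momenta bounded below), so it cuts out a codimension-$d$ set and costs a factor $\bigl(r/(s\,\eps^{\theta})\bigr)^{d}\le\eps^{c}$ with $c>0$; a tube event gives the codimension-$(d-1)$ analogue. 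Since $k_{max}=O(\kappa^{-1})$ is an absolute constant at subkinetic times, the number of partition types and of placements of the special feature is $O(1)$, and summing over them yields $\|\Evol_{2s}[A]-\Evol_s^2[A]\|_{op}\lsim C^{O(\kappa^{-1})}\eps^{c}\|A\|_{op}+2\delta$. On the window $s>\eps^{-1.5}$ one has $\eps^{2.1}s>\eps^{0.6}$, while the exponent $c$ from the codimension bound exceeds $0.6$ even at the worst end $s\sim\eps^{-1.5}$ (there $c\ge d(\tfrac12-\theta)\ge 1-2\theta$), so for $\eps$ small the right-hand side is $\le\eps^{2.1}s\|A\|_{op}+2\delta$.

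\textbf{Propagation of goodness, and the main obstacle.} For $\Evol_s[A]$ I would run the same expansion. The ladder contribution ($\omega'=\omega$) carries $a(\xi,\eta)$ to a kernel in the endpoint variables with $\zeta-\zeta'=\xi-\eta$ exactly along a genuine common path and $|\zeta_p|=|\xi_p|$ by energy conservation; accounting for the $|\log\eps|$-scale localization widths of the $O(1)$ stationary-phase steps and the $O(\eps)$-scale slack in energy conservation, its kernel is supported on $d_r(\zeta,\zeta')<C_1+|\log\eps|^{10}$ with momentum support adjusted to $|\zeta_p|\ge\eps^{\theta}-(C_2-10^3\eps)\eps$. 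All non-ladder and feature-bearing contributions are suppressed, exactly as in the previous step, below $\eps^{100}$ in operator norm after combining the geometric gain with the $k_{max}$-truncation (the support of $A$ away from zero momentum ruling out any low-frequency pathology), which gives the asserted $(C_1+|\log\eps|^{10},C_2-10^3\eps,\delta+\eps^{100})$-goodness. The hard part, I expect, is precisely the volumetric estimate of the third step: making the codimension of the recollision or tube constraint translate into a clean power $\eps^{c}$ with $c$ large enough to beat $\eps^{2.1}s$ \emph{uniformly} across the whole subkinetic window --- in particular at its lower end $s\sim\eps^{-1.5}$, where the geometric gain is smallest --- while simultaneously splitting the time budget across the two refreshed potentials and allowing several ladder rungs to share it. The combinatorial side is easy here only because $k_{max}$ is a fixed constant; it is this that breaks down at diffusive times and forces the machinery of Sections~\ref{sec:diagrams}--\ref{sec:diffusive-bds}.
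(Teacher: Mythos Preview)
Your overall plan is close in spirit to the paper's, but the route differs: the paper does not compare $\Evol_{2s}$ with $\Evol_s^2$ directly. Instead it introduces the ladder superoperator $\mcal L_s$ (Section~\ref{sec:subkinetic-semigroup}), proves $\|\Evol_s[A]-\mcal L_s[A]\|_{op}\le C\eps^{2.1}s\|A\|_{op}$ for good $A$ (Proposition~\ref{prp:short-ladder-compare}), and then observes that ladders satisfy an essentially exact semigroup identity $\mcal L_{2s}\approx\mcal L_s^2$ (Lemma~\ref{lem:ladder-semigroup}); the triangle inequality then gives the result. The combinatorial input you invoke is the single-segment Lemma~\ref{lem:is-ladder}, not the extended-path Lemma~\ref{lem:interval-pair}. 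The paper's indirection through $\mcal L_s$ cleanly sidesteps the question of which clusters ``straddle'' time $s$.

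There is a genuine gap in your volumetric step. You claim a recollision cuts out a codimension-$d$ set and gains a factor $(r/(s\eps^\theta))^d$. This overcounts: the collision locations $y_a$ are \emph{not} free variables in the path integral --- each $y_{a+1}$ is already pinned within $O(\alpha r)$ of $y_a+s_ap_a$ by the cutoff $\chi$, so the only true degrees of freedom are the $s_j$ (in $[0,\tau]$) and the $p_j$ (on an annulus of thickness $r^{-1}$). The recollision condition $|y_b-y_a|\le 2r$ thus becomes a constraint on the pair $(s_{b-1},p_{b-1})$ via $y_b\approx y_{b-1}+s_{b-1}p_{b-1}$, and the gain depends sharply on $|y_{b-1}-y_a|$: when this is large (say $\ge Kr$) one gains in the \emph{direction} of $p_{b-1}$, while when it is small one instead gains in $s_{b-2}$. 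Optimizing the split at $K=\eps^{-0.5}$ gives a gain of order $\eps^{1/2}$ from a single recollision, not $\eps^{d/2}$; see the ``recollision bound'' lemma in Section~\ref{sec:subkinetic-semigroup}. The tube-incidence case is likewise a one-parameter split (on the angle between $p_a$ and $p_{b-1}$) yielding $\sim\eps^{(d-1)/d}$. Your final numerics (``$c\ge d(\tfrac12-\theta)$'') rest on the false codimension-$d$ premise. The paper's bound still succeeds because the geometric gain multiplies the baseline $(\eps^2 s)^{(k_++k_-)/2}$, which is itself small on the subkinetic window --- an accounting your sketch does not carry out.
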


Proposition~\ref{prp:subkinetic-semigp} is proved by first comparing $\Evol_s$ to the expectation over
ladders, and then observing that the semigroup property holds for ladders.  The first step is done
in Section~\ref{sec:subkinetic-semigroup} where we prove Proposition~\ref{prp:short-ladder-compare}.
The derivation of Proposition~\ref{prp:subkinetic-semigp} from Proposition~\ref{prp:short-ladder-compare}
is explained by Lemma~\ref{lem:ladder-semigroup}.

Using Proposition~\ref{prp:subkinetic-semigp} we can prove a comparison result between the linear Boltzmann equation
and the quantum evolution for times up to $\eps^{-2+\kappa/10}$.
\begin{corollary}
If $A$ is a $(C_1,C_2,\delta)$-good operator with
$C_1 \leq \frac{1}{2}\eps^{-0.1}$ and  $C_2\leq \eps^\theta$ then with $\sigma=\eps^{-1.5}$,
\[
\|\Evol_\sigma^m[A] - \Evol_{m\sigma}[A]\|_{op}
\leq C_\kappa \eps^{2.1}\sigma \|A\|_{op} + 2\delta + \eps^{20}
\]
for $m$ such that $m\sigma \leq \eps^{-2+\kappa/2}$.
\end{corollary}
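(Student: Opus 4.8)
The plan is to iterate the subkinetic semigroup estimate of Proposition~\ref{prp:subkinetic-semigp} dyadically, exactly in the spirit of the proof of Corollary~\ref{cor:refreshed-evolution}, the difference being that the reference object is now the true channel $\Evol_{m\sigma}$ itself rather than a Boltzmann solution. A first, purely formal reduction removes the $\delta$: if $\tilde A:=\int a(\xi,\eta)\ket\xi\bra\eta\diff\xi\diff\eta$ is the exact good approximant to $A$, with $\|A-\tilde A\|_{op}\le\delta$, then since $\Evol_s$ is a contraction in operator norm,
\[
\|\Evol_\sigma^m[A]-\Evol_{m\sigma}[A]\|_{op}\le 2\delta+\|\Evol_\sigma^m[\tilde A]-\Evol_{m\sigma}[\tilde A]\|_{op},
\]
so we may assume $\delta=0$ and recover the $2\delta$ at the end. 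It remains to control $\|\Evol_\sigma^m[\tilde A]-\Evol_{m\sigma}[\tilde A]\|_{op}$.

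The second ingredient is an auxiliary stability estimate: for a good operator $A$ with $C_1\le\frac12\eps^{-0.1}$ and $C_2\le\eps^\theta$, and any $t\le\eps^{-2+\kappa/2}$ that is a multiple of $\sigma$, the operator $\Evol_t[A]$ is $(C_1+C|\log\eps|^{11},\,C_2-C|\log\eps|^{11}\eps,\,C|\log\eps|^{11}\eps^{100})$-good. This I would prove using only the ``moreover'' clause of Proposition~\ref{prp:subkinetic-semigp}: write $t$ through its base-$2$ digits, peel off one dyadic scale at a time, and note that only $O(\log\eps^{-1})$ such peels occur, so the good-operator constants degrade only polylogarithmically and in particular never leave the regime in which Proposition~\ref{prp:subkinetic-semigp} applies.

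The third step is the main telescoping. Taking $m=2^J$ (a general $m$ is treated via its binary expansion) interpolate between $\Evol_\sigma^m$ and $\Evol_{m\sigma}$ by $B_j:=\Evol_{2^j\sigma}^{\,m2^{-j}}[\tilde A]$, so $B_0=\Evol_\sigma^m[\tilde A]$, $B_J=\Evol_{m\sigma}[\tilde A]$, and estimate $\|B_0-B_J\|_{op}\le\sum_{j=0}^{J-1}\|B_j-B_{j+1}\|_{op}$. For each $j$, setting $n=m2^{-(j+1)}$ and using $P=\Evol_{2^j\sigma}^2$, $Q=\Evol_{2^{j+1}\sigma}$, expand $B_j-B_{j+1}=P^n[\tilde A]-Q^n[\tilde A]=\sum_{i=0}^{n-1}P^{n-1-i}\big[(P-Q)\big[Q^i[\tilde A]\big]\big]$; drop the outer $P^{n-1-i}$ by contractivity and bound each $(P-Q)[Q^i[\tilde A]]=(\Evol_{2^j\sigma}^2-\Evol_{2^{j+1}\sigma})[Q^i[\tilde A]]$ by Proposition~\ref{prp:subkinetic-semigp} with $s=2^j\sigma$. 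To apply Proposition~\ref{prp:subkinetic-semigp} here we need $\Evol_{2^{j+1}\sigma}^{\,i}[\tilde A]$ to be good with $C_1\le\eps^{-0.1}$ and $C_2\le\frac12\eps^\theta$; this is supplied by first replacing $\Evol_{2^{j+1}\sigma}^{\,i}[\tilde A]$ by $\Evol_{i2^{j+1}\sigma}[\tilde A]$ up to an operator-norm error absorbed into the $\delta$-parameter (this replacement being the present comparison at the strictly smaller time $i2^{j+1}\sigma<m\sigma$, so everything is an induction on $m$), and then invoking the auxiliary estimate. Summing the resulting contributions --- each dyadic level $j$ produces $n=m2^{-(j+1)}$ terms of size $O(\eps^{2.1}2^j\sigma\|\tilde A\|_{op})$ together with negligible $\eps^{100}$-type errors, and there are $O(\log\eps^{-1})$ levels --- gives the stated bound, with $C_\kappa$ absorbing the $\log$ factors and $\|\tilde A\|_{op}\le C\|A\|_{op}$.

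The step I expect to be the real obstacle is the circular-looking dependence just described: the operator-norm comparison at time $m\sigma$ is driven by the goodness of the intermediate iterates $\Evol_s^i[\tilde A]$, whose goodness is in turn most naturally certified through a comparison of the same type. Making the induction on $m$ airtight --- arranging that every appeal to Proposition~\ref{prp:subkinetic-semigp} and to the auxiliary estimate involves strictly fewer refreshment steps, and simultaneously verifying that the accumulated good-operator parameters stay below $\eps^{-0.1}$ and $\frac12\eps^\theta$ throughout (which works precisely because only $O(\log\eps^{-1})$ genuine doublings, not $m$ of them, are ever performed) --- is the delicate bookkeeping; the remaining geometric-series estimates and the use of contractivity of $\Evol$ are routine.
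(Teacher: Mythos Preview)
Your dyadic strategy is correct in spirit, but the paper organizes the argument more simply and avoids entirely the circularity you flag as the main obstacle. Rather than fixing $\tilde A$ and telescoping through $B_j=\Evol_{2^j\sigma}^{m2^{-j}}[\tilde A]$, the paper defines at each dyadic level $m$ a class $\mcal{A}_m$ of good operators (with parameters degrading linearly in $m$, so that one application of the ``moreover'' clause of Proposition~\ref{prp:subkinetic-semigp} at scale $s=2^m\sigma$ sends $\mcal{A}_{m+1}\to\mcal{A}_m$) and sets $E_m:=\sup_{A\in\mcal{A}_m}\|\Evol_{2^m\sigma}[A]-\Evol_\sigma^{2^m}[A]\|_{op}$. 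Splitting $\Evol_{2^{m+1}\sigma}[A]-\Evol_\sigma^{2^{m+1}}[A]$ through the intermediate $\Evol_{2^m\sigma}^2[A]$ then yields the clean recursion $E_{m+1}\le 2E_m+\eps^{2.1}2^m\sigma$, which solves to $E_m\le m\,\eps^{2.1}2^m\sigma$. The supremum device removes any need to certify goodness of a specific iterate such as $\Evol_{2^{j+1}\sigma}^i[\tilde A]$: the only class-membership input required is $\Evol_{2^m\sigma}\colon\mcal{A}_{m+1}\to\mcal{A}_m$, a \emph{single} application of the ``moreover'' clause per dyadic level, and there are only $O(|\log\eps|)$ levels.

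Two remarks on your write-up. First, your auxiliary estimate for $\Evol_t[\tilde A]$ needs no peeling: Proposition~\ref{prp:subkinetic-semigp}'s ``moreover'' clause already applies directly with $s=t$ for any $t\in(\eps^{-1.5},\eps^{-2+\kappa/2})$, and a literal peel of $\Evol_t$ is not available anyway since $\Evol_{t_1+t_2}\neq\Evol_{t_1}\circ\Evol_{t_2}$. Second, your ``induction on $m$'' does not quite close as written: the comparison you invoke, $\Evol_{2^{j+1}\sigma}^i[\tilde A]\approx\Evol_{i2^{j+1}\sigma}[\tilde A]$, is the present corollary at refresh scale $2^{j+1}\sigma$ rather than $\sigma$, so the inductive hypothesis (stated only for refresh scale $\sigma$) does not directly supply it. One can fix this by proving the stronger uniform statement over all refresh scales $s\in[\sigma,\eps^{-2+\kappa/2}]$ and inducting on total time, but this is exactly the bookkeeping the paper's supremum-over-a-class formulation renders unnecessary.
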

\begin{proof}
We perform an iteration, defining the error
\[
E_m := \sup_{\substack{\|A\|_{op}=1\\ A \text{ is } (C_1-K_1m,C_2+K_2m,\beta-\delta m){-good}}}
\|\Evol_{2^m\sigma}[A] - \Evol_\sigma^{2^m}[A]\|_{op},
\]
where we choose $C_1 = \frac{1}{2}\eps^{-0.1}$, $K_1=|\log\eps|^{10}$, $C_2 = \frac{1}{4}\eps^{0.1}$,
$K_2=10^3\eps$, $\beta = \eps^{100}$, and $\delta = \eps^{100}$.  Let
$\mcal{A}_m$ be the class of admissible operators in the supremum defining $E_m$.  The significant
point about $\mcal{A}_m$ is that $\Evol_s[A]\in\mcal{A}_m$ when $A\in\mcal{A}_{m+1}$.
To find a recursion for $E_m$, we write
\begin{equation}
\begin{split}
\|\Evol_{2^{m+1}s}[A] - \Evol_s^{2^{m+1}}[A]\|_{op}
&\leq
\|\Evol_{2^{m+1}s}[A] - \Evol_{2^ms}^2[A]\|_{op}
+ \|\Evol_{2^ms}^2[A] - \Evol_s^{2^{m+1}}[A]\|_{op} \\
&\leq
\|\Evol_{2^{m+1}s}[A] - \Evol_{2^ms}^2[A]\|_{op}
+ \|\Evol_{2^ms}[\Evol_{2^ms}[A] - \Evol_s^{2^m}[A]]\|_{op} \\
&\qquad\qquad+ \|(\Evol_{2^ms} - \Evol_s^{2^m}) \Evol_s^{2^m}[A]\|_{op}.
\end{split}
\end{equation}
Since $\Evol_s$ is a contraction in the operator norm, and since $\Evol_s$ maps
$\mcal{A}_{m+1}$ into $\mcal{A}_m$ we have by Proposition~\ref{prp:subkinetic-semigp}
that
\[
\|\Evol_{2^{m+1}s}[A] - \Evol_s^{2^{m+1}}[A]\|_{op}
\leq  ((\eps^{2.1} 2^ms) \|A\|_{op} + 2\delta) + E_m.
\]
Taking a supremum over $A$ we obtain the relation
\[
E_{m+1} \leq \eps^{2.1}2^m\sigma  + 2E_m.
\]
Since $E_0=0$, we obtain
\[
E_m \leq m \eps^{2.1}2^m\sigma.
\]
\end{proof}

The remaining ingredient needed to prove Theorem~\ref{thm:main-result} is a semigroup property that holds up to
diffusive times.  This is substantially more difficult than establishing the semigroup property for subkinetic
times because of the need for resummation in the Duhamel series.  The main result is the following.

\begin{proposition}
\label{prp:diffusive-semigp}
There exists $\kappa=\kappa(d)$ such that the following holds:  If $A$ is a
$(\eps^{-0.1}, \frac{1}{2}\eps^{0.1},\delta)$-good operator, then with
$\tau=\eps^{-2+\kappa/2}$ and $N=\lfloor \eps^{-\kappa}\rfloor$,
\[
\|\Evol_{N\tau}[A] - \Evol_\tau^N[A]\|_{op} \leq \eps^{c\kappa} \|A\|_{op} + 2\delta.
\]
\end{proposition}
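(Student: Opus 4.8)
The plan is to carry out the path-integral strategy sketched in Section~\ref{sec:heuristics} at the scale $\tau = \eps^{-2+\kappa/2}$ with $N = \lfloor\eps^{-\kappa}\rfloor$ segments, realizing $\Evol_{N\tau}[A]$ as a perturbation of $\Evol_\tau^N[A]$. First, expanding each factor of $e^{iN\tau H} = (e^{i\tau H})^N$ as a Duhamel series truncated at $k_{max} = O(\kappa^{-1})$ collisions per segment and using the extended-path formalism of Section~\ref{sec:extended-paths}, one obtains
\[
\Evol_{N\tau}[A] = \int \Expec (O_{\Gamma^+})^* A\, O_{\Gamma^-}\,\diff\Gamma^+\diff\Gamma^-
\]
up to an operator-norm error $\lsim \eps^{100}\|A\|_{op}$ (controlled in Section~\ref{sec:path-sketch}: each segment is subkinetic, so the Duhamel tail beyond $k_{max}$ is super-polynomially small). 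Since $A$ is $(\eps^{-0.1},\tfrac12\eps^{0.1},\delta)$-good, we may replace it by its good model $\int a(\xi,\eta)\ket\xi\bra\eta\,\diff\xi\diff\eta$ — with $\|a\|_\infty\lsim\|A\|_{op}$ and support near the diagonal in $d_r$ — at a cost of $2\delta$, so we assume this form henceforth.

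Next I would decompose the double path integral by the geometry of $(\Gamma^+,\Gamma^-)$. The expectation vanishes unless the collision partition $P(\Gamma^+,\Gamma^-)$ has no singletons and conserves momentum on each cluster (Lemma~\ref{lem:admissible-V}); and by Lemma~\ref{lem:interval-pair} of Section~\ref{sec:path-combo}, once the skeleton $\mcal{F} = \mcal{F}(\Gamma^+,\Gamma^-)$ — recording the recollisions, cone events and tube events — is fixed, every \emph{typical} collision lies in a ladder or anti-ladder rung, so $P(\Gamma^+,\Gamma^-)\in\mcal{Q}_{\mcal F}$ for an explicit family $\mcal{Q}_{\mcal F}$. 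This gives
\[
\Evol_{N\tau}[A] = \sum_{\mcal F}\ \sum_{P\in\mcal{Q}_{\mcal F}}\int \One_{\mcal F}(\Gamma^+,\Gamma^-)\,\Expec_P(O_{\Gamma^+})^* A\, O_{\Gamma^-}\,\diff\Gamma^+\diff\Gamma^- + O(\eps^{100}\|A\|_{op}).
\]
The skeleton $\mcal{F} = \noset$ is the main term: resumming the ladders and anti-ladders on the typical collisions of each segment recovers the segmentwise ladder expansion approximating $\Evol_\tau$ (as behind Proposition~\ref{prp:subkinetic-semigp}), and since cross-segment pairings are automatically time-inconsistent and hence recorded by a nontrivial skeleton, the matched-segment ladders and anti-ladders that remain are blind to the single-potential versus refreshed-potential distinction; hence this term equals $\Evol_\tau^N[A]$ up to $O(\eps^{100}\|A\|_{op})$. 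Every $\mcal{F}$ with $\|\mcal{F}\|\geq 1$ is an error.

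For the error terms I would run the machinery of Sections~\ref{sec:skeletons}--\ref{sec:diffusive-bds}. The obstacle is the indicator $\One_{\mcal F}$, whose \emph{negative} constraints (each pair of collisions that is not a recollision, cone, or tube event) are needed to apply Lemma~\ref{lem:interval-pair} but are global and block the cancellations we want. I would resolve this with the tailored inclusion--exclusion $\One_{\mcal F} = \sum_{\mcal{F}'\geq\mcal{F}}G_{\mcal{F},\mcal{F}'}$ of Section~\ref{sec:skeletons}, where $G_{\mcal{F},\mcal{F}'}$ depends only on collisions in the support of $\mcal{F}'$; then split $G_{\mcal{F},\mcal{F}'}$ as a mixture of tensor products $\chi^+_{\mcal{F},\mcal{F}',\theta}(\omega)\chi^-_{\mcal{F},\mcal{F}',\theta}(\omega')$, replace $\sum_{P\in\mcal{Q}_{\mcal F}}\Expec_P$ by the colored-operator sum $\Expec(O_{\Gamma^+}^{\Psi^+})^* A\, O_{\Gamma^-}^{\Psi^-}$ of Sections~\ref{sec:colored-ops}--\ref{sec:coloring-machine}, decompose $\Psi^\pm$ along scaffolds, and apply the operator Cauchy--Schwarz inequality. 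This factors the estimate for $\Evol_{\mcal{F},\mcal{F}'}[A]$ into one-sided quantities $\big\|\int |H(\theta)|\,\Expec (X^\pm_{\Scaff,\theta})^* X^\pm_{\Scaff,\theta}\,\diff\theta\big\|_{op}^{1/2}$ in which — by construction of the scaffolds — only ladder partitions appear; the ladder semigroup bounds of Section~\ref{sec:ladders-superop} together with the contractivity of $\Evol_\tau$ make each of these $O(1)$, and the Cauchy--Schwarz step loses only a factor $N^{C\|\mcal{F}\|}$.

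The main obstacle — and the point of Section~\ref{sec:diffusive-bds} — is to verify that one genuinely extracts a factor $\eps^{c}$ for \emph{each} event recorded in $\mcal{F}$, uniformly over paths carrying many events, rather than a single $\eps^c$ overall: the geometric volume estimates must be applied event-by-event inside a path integral whose dimension grows with $\|\mcal{F}\|$, organized so that the degrees of freedom consumed by one recollision, cone, or tube event are disjoint from those consumed by the next. Granting this, the skeleton-$\mcal{F}$ error is $\leq N^{C\|\mcal{F}\|}\eps^{c\|\mcal{F}\|}\|A\|_{op}$, and since there are at most $(CN)^{C\|\mcal{F}\|}$ skeletons of a given size, summing over all $\mcal{F}$ with $\|\mcal{F}\|\geq 1$ and absorbing the $O(\eps^{100}\|A\|_{op})$ errors yields
\[
\|\Evol_{N\tau}[A] - \Evol_\tau^N[A]\|_{op}\ \leq\ 2\delta\ +\ \sum_{k\geq 1}\big(CN^{C}\eps^{c}\big)^{k}\|A\|_{op}.
\]
Because $N = \lfloor\eps^{-\kappa}\rfloor$, taking $\kappa = \kappa(d)$ small enough that $CN^{C}\eps^{c} = C\eps^{c-C\kappa}\leq \eps^{c/2}$ sums the geometric series to $\leq \eps^{c\kappa}\|A\|_{op}$ for a possibly smaller absolute constant $c$, which is the claimed bound.
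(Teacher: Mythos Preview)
Your overall architecture matches the paper's proof: skeleton decomposition, the inclusion--exclusion $\One_{\mcal F}=\sum_{\mcal F'\geq\mcal F}G_{\mcal F,\mcal F'}$, colored operators and scaffolds, operator Cauchy--Schwarz reducing to ladder-only one-sided quantities, and the event-by-event geometric gains of Section~\ref{sec:diffusive-bds}; the final geometric summation over $\|\mcal F\|\geq 1$ is exactly right.

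There is, however, a real gap in your treatment of the main term. The empty-skeleton contribution is not $\Evol_\tau^N[A]$ up to $O(\eps^{100})$; it is the extended ladder superoperator $\mcal{L}_{N\tau}[A]$ (Section~\ref{sec:diagrams}), which sums generalized ladders over the \emph{entire} index set $K(\bm\Gamma)$ --- including rungs $\{a,b\}$ with $\ell(a)\neq\ell(b)$ that cross segment boundaries. Your assertion that ``cross-segment pairings are automatically time-inconsistent and hence recorded by a nontrivial skeleton'' is incorrect: a ladder rung is neither a recollision, nor a tube or cone event, nor part of a cluster, so it contributes nothing to the skeleton regardless of the segment indices involved. (Also, for the empty skeleton only ladders survive --- anti-ladders are ruled out by Lemma~\ref{lem:first-rung}, so they should not appear in your description of the main term either.) The missing step is the comparison $\mcal{L}_{N\tau}\approx\Evol_\tau^N$, which the paper carries out in two stages: Lemma~\ref{lem:ladder-semigroup} uses the free-evolution gap of duration $\sigma=\eps^{-1.5}$ built into $\wtild U_{\tau,\sigma}$ (Section~\ref{sec:path-sketch}) together with the rung synchronization $|t_a-t_b|\lsim N^{10}\eps^{-0.2}r\ll\sigma$ from Lemma~\ref{lem:rung-lemma} to force every rung to be segment-local, giving $\mcal{L}_{N\tau}\approx\mcal{L}^N$; then iterating Proposition~\ref{prp:short-ladder-compare} gives $\mcal{L}^N\approx\Evol_\tau^N$ with error $O(N\eps^{2.1}\tau)=O(\eps^{0.1-\kappa/2})$, which is the true size of the main-term discrepancy --- acceptable, but not $\eps^{100}$.
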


Having sketched the argument proving the main result, we now outline the remaining sections in the paper.
In Section~\ref{sec:path-sketch} we explain the phase space path integral approximation we use throughout the
paper.  Then in Section~\ref{sec:subkinetic-semigroup} we introduce the ladder superoperator $\mcal{L}_s$
which is a main character in the derivation of the approximate semigroup property.
Section~\ref{sec:subkinetic-semigroup} contains the bulk of the proof of Proposition~\ref{prp:subkinetic-semigp}
and contains most of the main ideas of the paper.

The remaining sections in the paper are dedicated to the proof of Proposition~\ref{prp:diffusive-semigp}.  In
Section~\ref{sec:extended-paths} we write down the path integral used to represent the solution operator up to this
time.  Then in Section~\ref{sec:path-combo} we clarify the relationship between the geometry of paths and the
combinatorial features of their collision partitions.  Then in Section~\ref{sec:skeletons} we split up the path integral
according to the geometry of the paths.  To exploit the combinatorial structure of the correlation partition,
we introduce the formalism of colored operators in Section~\ref{sec:colored-ops} and
Section~\ref{sec:coloring-machine}.  Then in Section~\ref{sec:diagrams} we finally write out our version of
a ``diagrammatic expansion'' (which is different than previous expansions in that the first term of the expansion
for $\Evol_{N\tau}[A]$ is the refreshed evolution $\Evol_\tau^N[A]$).  The diagrams are bounded in
Section~\ref{sec:diffusive-bds}.

The remaining sections contain proofs of more technical results needed throughout the argument, and are referenced
as needed.

\section{A sketch of the derivation of the path integral}
\label{sec:path-sketch}
In this section we state the precise version of the phase-space path integral alluded to
in Section~\ref{sec:heuristics}.  The proofs of the assertions made in this section
are given in Sections~\ref{sec:duhamel},~\ref{sec:first-operator-bd},
and~\ref{sec:free-intersperse}.

The first step is to write out an expansion for $e^{-is H}$ that is valid for times
$s\leq \tau :=\eps^{-2+\kappa/2}$ in terms of paths.
More precisely, a path
$\omega = (\mbf{s},\mbf{p},\mbf{y})$ having $k$ collisions is a tuple
containing a list of collision invervals
$\mbf{s}\in\Real_+^{k+1}$ satisfying $\sum_{j=0}^k s_j=s$, momenta $\mbf{p} \in (\Real^d)^{k+1}$, and
collision locations $\mbf{y} \in (\Real^d)^k$.  Each path $\omega$ is
associated to the operator $O_\omega$ defined by
\begin{equation*}
\Ft{O_\omega \psi} = \delta_{p_k} \Ft{\psi}(p_0)
e^{i\varphi(\omega)} \prod_{j=1}^k \Ft{V_{y_j}}(p_j-p_{j-1}),
\end{equation*}
where $\varphi(\omega)$ is the phase function
\[
\varphi(\omega) = \sum_{j=0}^k s_j |p_j|^2/2 + \sum_{j=1}^k y_j\cdot (p_j-p_{j-1}).
\]
In Dirac notation, we express $O_\omega$ as
\begin{equation}
\label{eq:path-op-def}
O_\omega  = \ket{p_k}\bra{p_0}
e^{i\varphi(\omega)} \prod_{j=1}^k \Ft{V_{y_j}}(p_j-p_{j-1}).
\end{equation}
Here $V_y(x) = V(x-y)\chi^V(x)$ is a localized and shifted version
of the potential, with localization $\chi^V$ having width
$r$ and satisfying $\int \chi^V = 1$.

Let $\Omega_k(s)$ denote the space of paths with $k$ collisions
and duration $s$,
\begin{equation*}
\Omega_k(s) = \triangle_k(s)
\times(\Real^d)^{k+1} \times (\Real^d)^k,
\end{equation*}
where $\triangle_k(s)\subset \Real_+^{k+1}$ is the set of tuples
of time intervals summing to $s$,
\[
\triangle_k(s) = \{\mbf{s}=(s_0,\cdots,s_k)\in\Real_+^{k+1}\mid
\sum_{j=0}^k s_j = s\}.
\]
We will see in Section~\ref{sec:duhamel} that the Duhamel expansion can be formally written
\begin{equation*}
e^{isH} =
\sum_{k=0}^\infty T_k :=
\sum_{k=0}^\infty \int_{\Omega_k(s)} O_\omega \diff \omega.
\end{equation*}
In this integral there is no need for the collision locations
$\mbf{y}$ to have any relationship with the variables $\mbf{s}$
and $\mbf{p}$.  There is however significant cancellation due to the
presence of the phase $e^{i\varphi(\omega)}$.  For example, by integrating
by parts in the $p_j$ variables and using the identity,
\begin{align*}
\partial_{p_j} \varphi(\omega) = y_j + s_jp_j - y_{j+1},
\end{align*}
we can reduce the path integral to paths which satisfy
\[
|y_{j+1} - (y_j + s_jp_j)| \lessapprox r.
\]
Integration by parts in the $s_j$ is somewhat more delicate because
of the hard constraint $\sum s_j = \tau$.  By decomposing
this hard constraint as a sum of softer constraints, we can
impose a cutoff on the weaker conservation of kinetic energy condition
\[
||p_j|^2/2 - |p_{j'}|^2/2| \lessapprox \max\{|p_j|r^{-1}, s_j^{-1},
s_{j'}^{-1}\}.
\]
The integration by parts argument will allow us to
construct a function $\chi^{path}$ supported on the
set of such ``good'' paths and for which
\[
T_k \approx
\int_{\Omega_k(s)} \chi^{path}(\omega) O_\omega \diff\omega,
\]
with an error that is negligible in operator norm.  To be more precise,
given a tolerance $\alpha$ (which we set to be $|\log\eps|^{10}$), we define
\begin{equation}
\label{eq:Omega-alpha-def}
\begin{split}
\Omega_{\alpha,k}(s) =
\{(\mbf{s},\mbf{p},\mbf{y})\in\Omega_k(s) \mid
&|y_{j+1} - (y_j+s_jp_j)| \leq \alpha r \text{ for all } j\in[1,k-1]
\text{ and } \\
&
||p_j|^2/2 - |p_{j'}|^2/2| \leq \alpha \max\{ s_{j'}^{-1}, s_j^{-1},
|p_j|r^{-1}, \alpha r^{-2}\}
\text{ for all } j,j'\in [0,k] \}.
\end{split}
\end{equation}
Within $\Omega_{\alpha,k}(s)$ we also define the subset
\begin{equation}
\label{eq:omega-xi-eta-def}
\begin{split}
    \Omega_{\alpha,k}(t;\xi,\eta)
    :=
    \{(\mbf{s},\mbf{p},\mbf{y})\in\Omega_{\alpha,k}(t) \mid
        &|y_1 - (\xi_x + s_0p_0)| \leq \alpha r,
        |\eta_x - (y_k+s_kp_k)| \leq \alpha r, \\
        &|p_0-\xi_p| \leq \alpha r^{-1}, \text{ and }
    |p_k-\eta_p| \leq \alpha r^{-1} \},
\end{split}
\end{equation}
where $\xi,\eta\in\PhaseSpace$ with $\xi=(\xi_x,\xi_p)$ and
$\eta=(\eta_x,\eta_p)$.

The following lemma is simply a careful application of integration by parts, and is done
in Section~\ref{sec:duhamel}.
\begin{lemma}
    \label{lem:int-by-pts}
    There exists a cutoff function
    $\chi\in C^\infty(\Omega_k(\tau)\times\PhaseSpace\times\PhaseSpace)$
    supported in the set
    \[
        \supp \chi \subset
        \{(\omega,\xi,\eta) \mid
        \omega\in\Omega_{\alpha,k}(\tau; \xi,\eta)\}
    \]
    such that, with
    \begin{equation}
        T_k^\chi(\tau) :=
        (i\eps)^k \int_{\Omega_k(\tau)\times\PhaseSpace\times\PhaseSpace}
        \ket{\eta}\bra{\xi} \chi(\omega,\xi,\eta)
        \braket{\eta| O_\omega |\xi} \diff \omega\diff\xi\diff\eta,
    \end{equation}
    we have the approximation
    \begin{equation}
        \|T_k^\chi - T_k\|_{op} \leq
        \eps^{-C_d k}\|V\|_{C^{10d}}^{Ck} \exp(-c\alpha^{0.99}).
    \end{equation}
\end{lemma}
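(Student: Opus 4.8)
The plan is to obtain the cutoff $\chi$ by a carefully executed integration-by-parts argument exploiting the oscillation of $e^{i\varphi(\omega)}$ together with the Fourier decay of the $\Ft{V_{y_j}}$ and of the wavepacket matrix elements. First I would insert $\int\ket{\xi}\bra{\xi}\diff\xi = \Id$ (up to the frame constant, which together with the $\eps^k$ of the Duhamel iteration accounts for the $(i\eps)^k$ in the statement) on both sides of $O_\omega$, writing $T_k = \int \ket{\eta}\braket{\eta|O_\omega|\xi}\bra{\xi}\diff\omega\diff\xi\diff\eta$ with $\braket{\eta|O_\omega|\xi} = \braket{\eta|p_k}\braket{p_0|\xi}\, e^{i\varphi(\omega)}\prod_{j=1}^k\Ft{V_{y_j}}(p_j - p_{j-1})$. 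Here $\braket{p|\xi}$ is, up to a factor $r^{d/2}$, a smooth bump concentrating $p$ within $O(r^{-1})$ of $\xi_p$ and carrying a phase linear in $\xi_x$; the two momentum-endpoint conditions $|p_0 - \xi_p| \le \alpha r^{-1}$ and $|p_k - \eta_p| \le \alpha r^{-1}$ in $\Omega_{\alpha,k}(\tau;\xi,\eta)$ therefore hold automatically once $\alpha$ is a large power of $|\log\eps|$, so I only need to manufacture the spatial constraints, the energy constraints, and the $y_1, y_k$ endpoint constraints. I would build $\chi$ as a product of Gevrey-class cutoffs of Gevrey index $1/0.99$, one per constraint; the growth $\|\partial^M\chi\| \lesssim C^M (M!)^{1/0.99}$ of such cutoffs is what eventually converts the polynomial gain of integration by parts into the bound $\exp(-c\alpha^{0.99})$, after optimizing the number $M$ of integrations by parts.

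For the spatial constraint at collision $j$, on the region where $|y_{j+1} - (y_j + s_j p_j)| > \alpha r$ I would integrate by parts $M$ times in $p_j$, using the identity $\partial_{p_j}\varphi(\omega) = y_j + s_j p_j - y_{j+1}$ already noted above and the scale-$r^{-1}$ smoothness of the two factors $\Ft{V_{y_j}}(p_j - p_{j-1})$, $\Ft{V_{y_{j+1}}}(p_{j+1} - p_j)$ that depend on $p_j$. Each step multiplies the integrand by $|y_{j+1} - (y_j + s_j p_j)|^{-1}$ together with one of: a derivative of $\Ft{V_{y_j}}$ or $\Ft{V_{y_{j+1}}}$ (cost $\lesssim r\|V\|_{C^{10d}}$), a factor $\lesssim s_j |y_{j+1} - (y_j + s_j p_j)|^{-1}$ from differentiating the amplitude $(y_j + s_j p_j - y_{j+1})/|y_j + s_j p_j - y_{j+1}|^2$, or a derivative of another cutoff; on the bad region the net per-step factor is $\lesssim \alpha^{-1}\|V\|_{C^{10d}}$ because $s_j \le \tau = \eps^{-2+\kappa/2}$ is far smaller than $(\alpha r)^2$. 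The same argument in $p_0$ and $p_k$, now against the combined phase of $e^{i\varphi(\omega)}$ and of the wavepacket matrix elements $\braket{p_0|\xi}$, $\braket{\eta|p_k}$ (whose phases linear in $\xi_x$, $\eta_x$ contribute precisely the $\xi_x$, $\eta_x$ terms), produces the $y_1$ and $y_k$ endpoint cutoffs. For the energy constraint between $p_j$ and $p_{j'}$ the real difficulty is the hard constraint $\sum_l s_l = \tau$: I would first decompose $\delta(\sum_l s_l - \tau)$ as a superposition of smooth bumps at a range of widths, after which integration by parts in the simplex-tangent direction $\partial_{s_j} - \partial_{s_{j'}}$, for which $(\partial_{s_j} - \partial_{s_{j'}})\varphi(\omega) = |p_j|^2/2 - |p_{j'}|^2/2$, gains a factor $||p_j|^2/2 - |p_{j'}|^2/2|^{-1}$ per step; the scales entering $\max\{s_{j'}^{-1}, s_j^{-1}, |p_j|r^{-1}, \alpha r^{-2}\}$ are exactly the finest admissible widths for the softening and the distance one must keep from the faces $\{s_j = 0\}$, $\{s_{j'} = 0\}$, the terms $s_j^{-1}$, $s_{j'}^{-1}$ being what makes the constraint vacuous near those faces so that no costly boundary terms arise.

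To finish I would collect the bookkeeping. Each round of integration by parts produces, by the Leibniz rule, at most $O(k)$ new terms — $p_j$ appears in $\Ft{V_{y_j}}$, $\Ft{V_{y_{j+1}}}$, the amplitude, and in $O(k)$ of the cutoffs — so after $M$ rounds the contribution on the bad region is bounded by $\eps^{-C_d k}\|V\|_{C^{10d}}^{Ck}$ times $(M!)^{1/0.99}(Ck/\alpha)^M$, where the prefactor $\eps^{-C_d k}\|V\|_{C^{10d}}^{Ck}$ absorbs the wavepacket normalizations $r^{d/2}$, the $\|V\|_{C^{10d}}$-dependence of the $\Ft{V}$ factors and their derivatives, and the volume of the near-diagonal $\xi$, $\eta$, $\mbf{y}$ integrations. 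One can afford roughly $M \sim \alpha^{0.99}$ integrations by parts before the Gevrey growth overwhelms the gain; with $\alpha = |\log\eps|^{10}$ this yields the bound $\eps^{-C_d k}\|V\|_{C^{10d}}^{Ck}\exp(-c\alpha^{0.99})$. Converting this pointwise kernel bound into the operator-norm bound on $T_k - T_k^\chi$ is then a Schur's test, using the $L^2$-normalization of the $\ket{\xi}$, $\ket{\eta}$ and the rapid decay of $\braket{p_0|\xi}$, $\braket{\eta|p_k}$. The main obstacle I anticipate is exactly the interaction between the hard simplex constraint and the time integrations by parts: one has to choose the softening of $\delta(\sum_l s_l - \tau)$ and the order in which the $p$- and $s$-integrations by parts are performed so that re-differentiating the softening and the spatial cutoffs never costs more than the phase gains, uniformly all the way down to $s_j \to 0$; arranging this cleanly is what forces the slightly lossy Gevrey exponent and is why the careful argument is deferred to Section~\ref{sec:duhamel}.
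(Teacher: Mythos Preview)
Your proposal is correct and follows essentially the same approach as the paper: wavepacket resolution of $T_k$, integration by parts in $p_j$ against $\partial_{p_j}\varphi = y_j + s_jp_j - y_{j+1}$ for the spatial constraints, integration by parts along simplex-tangent time directions for the energy constraints, Gevrey cutoffs, and a Schur test at the end. The paper packages the integration-by-parts mechanism into an abstract oscillatory-integral lemma (its Lemma~\ref{lem:osc-int}) rather than doing it in place, but the content is the same.

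Two tactical differences are worth knowing about. First, instead of softening $\delta(\sum_l s_l - \tau)$ and using the symmetric directions $\partial_{s_j}-\partial_{s_{j'}}$, the paper dyadically decomposes each $\One_{\{s_j>0\}}$, picks the index $J=J(\mathbf a)$ with the largest dyadic scale, eliminates $s_J$ via $s_J = \tau - \sum_{j\ne J}s_j$, and then integrates by parts in $s_j$ alone for $j\ne J$, giving $\partial_{s_j}\varphi = |p_j|^2/2 - |p_J|^2/2$; the full pairwise constraint in $\Omega_{\alpha,k}$ is then recovered by an elementary lemma comparing all $|p_j|^2/2$ to the reference $|p_J|^2/2$. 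Second, the paper also introduces a dyadic decomposition of the kinetic-energy level $E\sim |p_J|^2/2$, which is needed to set the correct smoothness scale $\ell_j = \alpha\max\{s_j^{-1},E^{1/2}r^{-1}\}$ for the energy cutoff and to verify the key compatibility $\ell_j|p_j|^{-1}\gtrsim r^{-1}$ (so that the cutoff $\chi_{main}$ is indeed smooth at the scales at which the integration by parts is performed). Your sketch does not make this energy-scale bookkeeping explicit, and it is exactly the place where the interaction you flag at the end has to be checked; the paper's choice of a single reference index $J$ and an explicit $E$-decomposition is what makes that verification clean.
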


The point of Lemma~\ref{lem:int-by-pts} is that it allows us to neglect the contribution of
``physically unreasonable paths'' -- those that either badly violate conservation of kinetic energy
or the transport constraints $y_{k+1}\approx y_k+s_kp_k$.

We remark that Lemma~\ref{lem:int-by-pts} is deterministic in the sense
that that the conclusion holds for all potentials, and when we apply it
we will simply need moment bounds for the $C^{10d}$ norm of the potential
(after being cutoff to a ball of large radius).  With the choice
$\alpha=|\log\eps|^{10}$, and assuming that
$\|V\|_{C^{10d}}\leq \eps^{-100}$ (say), the right hand side is still
$O(\eps^K)$ for any $K>0$.

Having given a description to the collision operators $T_k$, it remains
to estimate moments of the form $\Expec \|T_k^\chi\|_{op}^M$, for which we
use the moment method:
\[
\Expec \|T_k^\chi\|_{op}^{2M} \leq \Expec \trace ((T_k^\chi)^*T_k^\chi)^M.
\]
Note that this step is where the cutoff on the potential is crucial --
without the cutoff $\chi_R$ the trace above would be infinite.
In Section~\ref{sec:first-operator-bd} we prove
Lemma~\ref{lem:collision-strength}, which states that
\[
\Big(\Expec \|T_k^\chi(s)\|_{op}^{2M}\Big)^{1/M}
        \leq R^{C/m} \eps^2 s \,\,(C(kM)^C |\log\eps|^{10}).
\]
The presence of the factor $(km)^C$ makes this bound unsuitable
for reaching diffusive time scales.  However this bound is good enough
to approximate $e^{is H}$ by $e^{-is \Delta/2}$ for times
$s\leq \eps^{-1.1}$ (say).  We use this result in Section~\ref{sec:free-intersperse}
to define a modified operator $T_{k,\rho}^\chi$
which involves a first short period of free evolution.  More precisely, given a time
$\sigma>0$ we construct a function $\rho_\sigma\geq 0$
that is supported on the interval $[\sigma2\sigma]$,
is Gevrey regular, and satisfies $\int \rho_\sigma =1$.  Then we define
\[
\wtild{U}_{\tau,\sigma} := \int_\sigma^{2\sigma} \int_\sigma^{2\sigma}
e^{is\Delta/2} e^{-i(\tau-s-s')H} e^{is'\Delta/2} \rho_\sigma(s)\rho_\sigma(s')
\diff s\diff s'.
\]
We will fix for the remainder of the paper $\sigma = \eps^{-1.5}$.  In Section~\ref{sec:free-intersperse}
we use Lemma~\ref{lem:int-by-pts} to prove Lemma~\ref{lem:free-replace}, which justifies the approximation
\[
\|\Evol_{N\tau}[A] - \Expec (\wtild{U}_{\tau,\sigma}^*)^N A \wtild{U}_{\tau,\sigma}^N\|_{op}
\leq \eps^{0.2}.
\]
This will allow us to restrict the path integral to a
space of paths which do not have a collision too close to either endpoint,
\[
\Omega_{k,\alpha}(\tau,\sigma;\xi,\eta)
:= \{(\omega,\xi,\eta) \in \Omega_{k,\alpha}(\tau;\xi,\eta) \mid
s_0 \geq \sigma \text{ and } s_k \geq \sigma\}.
\]
The operator $\wtild{U}_{\tau,\sigma}$ also has a path integral expansion in terms of collision
operators $T_k^{\chi,\sigma}$, which in addition to having the smooth cutoff $\chi$ have a cutoff
enforcing that $\omega\in \Omega_{k,\alpha}(\tau,\sigma;\xi,\eta)$.

Combining the above arguments we obtain the following approximation result for the evolution
operator $\wtild{U}_{\tau,\sigma}$.
\begin{proposition}
\label{prp:main-approximation}
There exists a bounded smooth function
$\chi_\sigma\in C^\infty(\Omega_k(\tau)\times\Real^{2d}\times\Real^{2d})$ supported in the set
\[
\supp (\chi_\sigma) \subset
\Omega_{\alpha,k}(\tau,S;\xi,\eta) :=
\{(\omega,\xi,\eta) \mid
\omega \in\Omega_{k,\alpha}(\tau;\xi,\eta) \text{ and }s_0 \geq \eps^{-1.6},
\text{ and } s_k \geq \eps^{-1.6}\},
\]
and such that, with
\begin{equation}
\label{eq:Tkrho-def}
T_{k}^{\chi,\sigma}
:= (i\eps)^k \int_{\Omega_k(\tau)\times\Real^{2d}\times\Real^{2d}}
\ket{\eta}\bra{\xi}
\chi_\sigma(\omega,\xi,\eta) \braket{\eta|O_\omega|\xi} \diff\omega\diff\xi\diff\eta,
\end{equation}
we have the Duhamel expansion
\[
\wtild{U}_{\tau,\sigma} := \sum_{k=0}^{k_{max}} T_k^{\chi,\sigma}
+ R_{k_{max}}^{\chi,\sigma},
\]
where the remainder $R_{k_{max}}^{\chi,\rho}$ has the expression
\begin{equation}
\label{eq:Rkmax-def}
R_{k_{max}}^{\chi,\sigma}
:= (i\eps)^k
\int_0^\tau\diff s
\int_{\Omega_{k_{max}}(s)\times\Real^{2d}\times\Real^{2d}}
e^{i(\tau-s) H} \ket{\eta}\bra{\xi}
\chi_\sigma(\omega,\xi,\eta) \braket{\eta|O_\omega|\xi} \diff\omega\diff\xi\diff\eta,
\end{equation}
and
\begin{equation}
\label{eq:operator-error-bd}
\left(\Expec \|e^{i\tau H} - \wtild{U}_{\tau,\sigma}\|_{op}^{\eps^{-0.1}}\right)^{\eps^{0.1}}
\leq \eps^{0.3}
\end{equation}
\end{proposition}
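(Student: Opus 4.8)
The statement has two parts of very unequal difficulty. The identity $\wtild{U}_{\tau,\sigma}=\sum_{k=0}^{k_{max}}T_k^{\chi,\sigma}+R_{k_{max}}^{\chi,\sigma}$ is, once the phase-space Duhamel machinery of Section~\ref{sec:duhamel} is in hand, purely formal: it is the $k_{max}$-step Duhamel iteration of the bulk factor $e^{i(\tau-s-s')H}$, with the free legs $e^{is\Delta/2}$, $e^{is'\Delta/2}$ and the smearing $\rho_\sigma(s)\rho_\sigma(s')$ folded into the first and last free-flight segments of each path --- which is exactly what builds the lower cutoff $s_0,s_k\gtrsim\sigma$ into the support of $\chi_\sigma$ --- and with the integration-by-parts cutoff $\chi$ from Lemma~\ref{lem:int-by-pts} installed in each term. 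The only non-bookkeeping input here is Lemma~\ref{lem:int-by-pts}: with $\alpha=|\log\eps|^{10}$ and $k_{max}=O(\kappa^{-1})$ a fixed constant, the error $\sum_{k\le k_{max}}\|T_k-T_k^\chi\|_{op}$ is smaller than any power of $\eps$ after one invokes the moment bounds on $\|V_R\|_{C^{10d}}$, where $V_R$ denotes $V$ cut off to a ball of radius $R$ exceeding the diffusive spread (this cutoff makes the traces in the moment method finite and is harmless by finite speed of propagation). So all the content lies in~\eqref{eq:operator-error-bd}.

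For~\eqref{eq:operator-error-bd} I would first use $\int\rho_\sigma=1$ twice to write $e^{i\tau H}$ in the same smeared form, and then telescope: with $X=e^{i(\tau-s-s')H}$ the bulk propagator,
\[
e^{isH}Xe^{is'H}-e^{is\Delta/2}Xe^{is'\Delta/2}
=(e^{isH}-e^{is\Delta/2})Xe^{is'H}+e^{is\Delta/2}X(e^{is'H}-e^{is'\Delta/2}).
\]
Since $X$ and the surrounding exponentials are unitary and $\rho_\sigma\ge0$ integrates to $1$, this reduces~\eqref{eq:operator-error-bd} to the short-window estimate
\[
\sup_{s\in[\sigma,2\sigma]}\Big(\Expec\|e^{isH}-e^{is\Delta/2}\|_{op}^{\eps^{-0.1}}\Big)^{\eps^{0.1}}\lesssim\eps^{0.3},
\]
i.e. to showing that replacing the Schrödinger flow by the free flow on a single window of length $\sim\sigma$ costs only a small power of $\eps$ in the $\eps^{-0.1}$-th moment.

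The crux --- and the step I expect to be the real obstacle --- is that this short-window quantity is \emph{not} accessible to any soft estimate: one-step Duhamel gives only $\|e^{isH}-e^{is\Delta/2}\|_{op}\le\eps s\|V_R\|_{L^\infty}\sim\eps^{-1/2}\sqrt{\log R}$ for $s\sim\sigma=\eps^{-1.5}$, which is large, so the smallness must come from the random cancellation captured by Lemma~\ref{lem:collision-strength}. Since $s\le2\sigma$ is subkinetic, one expands $e^{isH}-e^{is\Delta/2}=\sum_{k=1}^{k_{max}}T_k^\chi(s)+(\text{negligible IBP error})+(\text{truncated-Duhamel remainder})$, where the remainder is negligible precisely because $k_{max}=O(\kappa^{-1})$ collisions suffice at this timescale; and Lemma~\ref{lem:collision-strength} with moment exponent $2M=\eps^{-0.1}$ gives, for each of the constantly many $k\le k_{max}$,
\[
\Big(\Expec\|T_k^\chi(s)\|_{op}^{\eps^{-0.1}}\Big)^{\eps^{0.1}}
\lesssim R^{C/m}\,(\eps^2 s)^{1/2}\,\big(Ck\eps^{-0.1}\big)^{C}|\log\eps|^{10}
\lesssim (\eps^2 s)^{1/2}\,\eps^{-o(1)},
\]
because $R$ is a fixed polynomial power of $\eps^{-1}$, so all the combinatorial and $R$-dependent factors are $\eps^{-o(1)}$. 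Summing over $k$ and using $s\le2\sigma$ bounds the short-window quantity by a small positive power of $\eps$, which with $\sigma$ and $2M$ fixed as in the statement is $\lesssim\eps^{0.3}$. The smooth (rather than sharp) nature of the lower cutoff on $s_0,s_k$, together with the residual energy-conservation and transport phases introduced by the $\rho_\sigma$-averaging, are absorbed using the Gevrey regularity of $\rho_\sigma$, in exact parallel with the way the factor $\exp(-c\alpha^{0.99})$ in Lemma~\ref{lem:int-by-pts} handles the soft cutoffs there.

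Thus the proposition is essentially a repackaging of Lemmas~\ref{lem:int-by-pts} and~\ref{lem:collision-strength}, and the work specific to it is the telescoping reduction above plus checking that the accumulated polynomial-in-$(k_{max},\eps^{-0.1})$ and $R$ factors stay $\eps^{-o(1)}$. The genuinely hard analytic input --- the moment estimate $\Expec\trace((T_k^\chi)^*T_k^\chi)^M$ behind Lemma~\ref{lem:collision-strength}, where the randomness of the potential and the ball cutoff on $V$ are essential --- is developed in Section~\ref{sec:first-operator-bd}.
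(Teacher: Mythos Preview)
Your strategy matches the paper's: the telescoping reduction to a single short window is exactly Lemma~\ref{lem:free-replace}, and the short-window estimate via the truncated Duhamel expansion and Lemma~\ref{lem:collision-strength} is exactly Lemma~\ref{lem:approx-free}, both carried out in Section~\ref{sec:free-intersperse}.

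There is one arithmetic slip worth flagging. You assert that with $2M=\eps^{-0.1}$ the combinatorial factor $(CkM)^{C}$ coming out of Lemma~\ref{lem:collision-strength} is $\eps^{-o(1)}$. This is false: unlike $R^{C/M}$, where the large $M$ sits in the denominator of the exponent and the factor genuinely tends to~$1$, here $M^{C}\sim\eps^{-0.1C}$ is a \emph{fixed} negative power of~$\eps$. With the exponent from the Section~\ref{sec:first-operator-bd} statement of the lemma this gives roughly $M^{8}\sim\eps^{-0.8}$, which for $k=1$ already overwhelms the gain $(\eps^{2}s)^{1/2}\sim\eps^{0.25}$. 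This is precisely why the paper's Lemma~\ref{lem:approx-free} restricts to $m<\eps^{-1/30}$: at that moment level $(km)^{8}$ is beaten by $\eps^{2}s\sim\eps^{0.5}$ and one obtains $(\Expec\|e^{isH}-e^{-is\Delta/2}\|^{m})^{1/m}\le\eps^{1/8}$. The trivial deterministic bound $\|\cdot\|_{op}\le2$ cannot bridge from moment $\eps^{-1/30}$ to moment $\eps^{-0.1}$ while preserving a power saving, so the specific pair $(\eps^{-0.1},\eps^{0.3})$ in~\eqref{eq:operator-error-bd} is in tension with the form of Lemma~\ref{lem:collision-strength} as stated; the paper's explicitly proven lemmas deliver the bound at the lower moment, which is what the downstream applications actually require.
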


We use the operator $\wtild{U}_{\tau,\sigma} =: U_{\tau,\sigma}  + R_{k_{max}}^{\chi,\sigma}$
to decompose $e^{iN\tau H}$ as follows:
\[
e^{iN\tau H} \approx U_{\tau,\sigma}^N
+ \sum_{j=1}^N e^{i(N-j)\tau H} R_{k_{max}}^{\chi,\sigma} U_{\tau,\sigma}^{j-1}.
\]
Let $\wtild{\Evol}_{N\tau}$ be the superoperator formed from the main term,
\[
\wtild{\Evol}_{N\tau}[A] := \Expec (U_{\tau,\sigma}^N)^* A U_{\tau,\sigma}^N.
\]
Let $R_j$ be the operator $R_{k_{max}}^{\chi,\sigma} U_{\tau,\sigma}^{j-1}$,
which also can be written $R_j=\int_0^\tau e^{i(\tau-s)H}R_{j,s}\diff s$ with
\begin{equation}
\label{eq:remainder-def}
R_{j,s} := \int_{\Omega_{k_{max}}(s)\times\Real^{2d}\times\Real^{2d}}
\ket{\eta}\bra{\xi}U_{\tau,\sigma}^{j-1}
\chi_\sigma(\omega,\xi,\eta) \braket{\eta|O_\omega|\xi} \diff\omega\diff\xi\diff\eta.
\end{equation}
Then by an application of the operator Cauchy-Schwarz inequality and the triangle inequality we have the estimate
\begin{equation}
\label{eq:tilde-approx}
\|\Evol_{N\tau}[A] - \wtild{\Evol}_{N\tau}[A]\|_{op}
\leq \|A\|_{op}(\eps^{0.3} + N\tau \max_{j\in[N]}\sup_{s\in[0,\tau]}\|\Expec R_{j,s}^*R_{j,s}\|_{op}).
\end{equation}
In the course of understanding the evolution $\wtild{\Evol}_{N\tau}$ we will derive estimates that as a byproduct
prove
\begin{equation}
\label{eq:remainder-bound}
\max_{j\in[N]}\sup_{0\leq s\leq \tau} \|\Expec R_{j,s}^*R_{j,s}\|_{op} \leq \eps^{50}.
\end{equation}
In Section~\ref{sec:remainders} we explain how this bound is obtained as a modification of the argument
used to control the diffusive diagrams.

\section{The ladder approximation for $\Evol_\tau$}
\label{sec:subkinetic-semigroup}
In this section we sketch the proof that the evolution channel $\Evol_s$
is well approximated by a sum over ladder diagrams when $s\leq \tau = \eps^{-2}N^{-\kappa/10}$.
This is closely related to the semigroup property which we will explore in a later section.
The statement of the main result of the section, Proposition~\ref{prp:short-ladder-compare}, is given
in Section~\ref{sec:ladder-statement} after some preliminary calculations which motivate
the definition of the ladder superoperator.

\subsection{An introduction to the channel $\Evol_s$}
For  times $s\leq\tau=\eps^{-2}N^{-\kappa/10}$, we may use Lemma~\ref{lem:int-by-pts} to write
\begin{equation}
\label{eq:forward-path-integral}
e^{-isH} = \sum_{k=0}^{k_{max}}
\int_{\Omega_k(s)} \chi_S(\omega,\xi,\eta)\ket{\eta}\bra{\xi}
\braket{\eta|O_\omega|\xi}
\diff \omega\diff\xi\diff\eta + E
\end{equation}
where $\chi_S(\omega,\xi,\eta)$ is a smooth function supported on the set $\Omega_{\alpha,k}(s,\sigma;\xi,\eta)$
and the approximation is up to an error $E$ that satisfies $\Expec \|E\|_{op}^2 \leq \eps^{0.2}$.

The operator $e^{isH}$ can similarly be expressed as an integral over paths,
\begin{equation}
\label{eq:backward-path-integral}
e^{isH} = (e^{-isH})^* =
\sum_{k=0}^{k_{max}}
\int_{\Omega_k(s)} \chi_S(\omega,\xi,\eta)\ket{\xi}\bra{\eta}
\braket{\eta|O_\omega|\xi}^*
\diff \omega\diff\xi\diff\eta + E^*.
\end{equation}

We now use~\eqref{eq:forward-path-integral} and~\eqref{eq:backward-path-integral} to
write an expansion for $\Evol_s[A]$.  We will drop the summation over $k$ and handle
the sum implicitly in the integral over $\Omega(s) = \bigcup_{j=0}^{k_{max}}\Omega_k(s)$.
\begin{equation*}
\Evol_s[A] \approx
\int
\ket{\xi_0^-}
\braket{\xi_1^-|O_{\omega^-}|\xi_0^-}^*
\braket{\xi_1^-|A|\xi_1^+}
\braket{\xi_1^+|O_{\omega^+}|\xi_0^+}\bra{\xi_1^+}
\chi_S(\omega^+,\xi_0^+,\xi_1^+)
\chi_S(\omega^-,\xi_0^-,\xi_1^-)\diff\bm{\omega}\diff\bm{\xi}.
\end{equation*}
up to a remainder that is bounded by $O(\eps^{0.2}\|A\|_{op})$ in operator norm.

To express the operator more compactly we introduce some notation.  We write
$\bm{\Gamma} = (\xi_0^+,\omega^+,\xi_1^+; \xi_0^-,\omega^-,\xi_0^-) = (\Gamma^+;\Gamma^-)$
for the full path, and then define the path cutoff function
\[
\Xi(\bm{\Gamma}) :=
\chi_S(\omega^+,\xi_0^+,\xi_1^+)
\chi_S(\omega^-,\xi_0^-,\xi_1^-).
\]
Moreover, we will stack like terms to keep the integrand more organized.
That is we will write $\mstack{A}{B}$ to mean the product $AB$.  With this notation,
\[
\Evol_s[\Op(a)] \approx
\int \ket{\xi_0^-}\bra{\xi_0^+}
\braket{\xi_1^-|A|\xi_1^+}
\Xi(\Gamma)
\Expec \mstack{\braket{\xi_1^+|O_{\omega^+}|\xi_0^+}}{\braket{\xi_1^-|O_{\omega^-}|\xi_0^-}^*}
\diff\Gamma
\]

Given a path $\omega=(\mbf{s},\mbf{p},\mbf{y}) \in \Omega_k$,
the random amplitude $\braket{\xi|O_\omega|\eta}$ is given by
\[
\braket{\xi| O_\omega|\eta} =
\braket{\xi|p_0}\braket{p_k|\eta}
e^{i\varphi(\omega,\xi,\eta)}
\prod_{j=1}^k \Ft{V_{y_j}}(p_j-p_{j-1}).
\]
Since $V$ is real and therefore $\Ft{V}(q)^* = \Ft{V}(-q)$, the
term in the expectation can be written
\begin{equation}
\Expec \mstack{\braket{\xi_1^+|O_{\omega^+}|\xi_0^+}}{\braket{\xi_1^-|O_{\omega^-}|\xi_0^-}^*}
=
\mstack
{\braket{\xi_1^+|p_{+,k_+}}\braket{p_{+,0}|\xi_0^+}}
{\braket{\xi_1^-|p_{-,k_-}}\braket{p_{-,0}|\xi_0^-}}
e^{i(\varphi(\omega^+)-\varphi(\omega^-))}
\Expec \prod_{a\in K} \Ft{V_{y_a}}(q_a),
\end{equation}
where
\[
K_{k_+,k_-} = \{(\ell,j) \mid \ell\in\{+,-\} \text{ and } j\in [1,k_\ell]\}
\]
and
\[
q_{(\ell,j)} =
\begin{cases}
p_{\ell,j}-p_{\ell,j-1}, &\ell = +\\
p_{\ell,j-1}-p_{\ell,j}, &\ell = -.
\end{cases}
\]
We write $X=X(\Gamma)$ for the collision set $X = \{(y_a,q_a)\}_{a\in K_{(k_1,k_2)}}$.

To split up the expectation, let
$P(X)=P(\mbf{y})\in\mcal{P}(K)$  be the finest partition such
that $|y_a-y_{a'}|\leq 2\alpha r$ implies that $a$ and $a'$ belong to the
same set.  Then
\[
\Expec \prod_{a\in K} \Ft{V_{y_a}}(q_a),
= \prod_{S\in P(X)} \Expec \prod_{a\in S}\Ft{V_{y_a}}(q_a).
\]
For admissible potentials, Lemma~\ref{lem:admissible-V} implies that
\begin{equation}
    \label{V-moment-est}
\Big|\Expec \prod_{a\in K} \Ft{V_{y_a}}(q_a)\Big|
\leq
r^{-|K|d}
\prod_{a\in K} (1+|q_a|)^{-10d}
\sum_{P'\leq P(\mbf{y})}
\prod_{S\in P'}
(C_V |S|)^{2|S|}
r^{d} b_{|S|r^{-1}}(\sum_{j\in S} q_j),
\end{equation}
with $b_t(x) := \exp(-c|t^{-1}x|^{0.99})$.

This quantity is only nonnegligible when $|\sum_{j\in S}q_j|\lessapprox r^{-1}$
for each $S\in P(\mbf{y})$.
This leads us to define the notion of a $\beta$-complete collision set.
\begin{definition}[Complete collision sets]
    A collision set $X=\{(y_j,q_j)\}_{j=1}^k$ is $\beta$-complete if
\begin{equation}
\Big|\sum_{j\in S} q_j\Big|\leq \beta|S| r^{-1}
\end{equation}
holds for every $S\in Q(\mbf{y})$.
\end{definition}

An early approximation we can make is to reduce the integration over
paths $\omega^+$ and $\omega^-$ to only paths which are $\beta$-complete
for $\beta=|\log\eps|^{20}$.
\begin{lemma}
\label{lem:complete-collisions}
Let $s\leq \tau$ and let $A$ be a band-limited operator with bandwidth at most $\eps^{-0.5}$.
Then for $\beta>|\log\eps|^{20}$,
\begin{equation}
\begin{split}
\big\|
\int \ket{\xi_0^-}\bra{\xi_0^+}
\braket{\xi_1^-|A|\xi_1^+}
\Xi(\Gamma)
\Expec \mstack{\braket{\xi_1^+|O_{\omega^+}|\xi_0^+}}{\braket{\xi_1^-|O_{\omega^-}|\xi_0^-}^*}
(1 - \One_{X(\Gamma)\text{ is }\beta\text{-complete}})
\diff\Gamma\big\|_{op}
\leq \eps^{100}\|A\|_{op}.
\end{split}
\end{equation}
\end{lemma}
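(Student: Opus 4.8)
The plan is to use the moment bound~\eqref{V-moment-est} to pull out a super-polynomially small scalar from every pair of paths whose collision set fails to be $\beta$-complete, and then to absorb the merely polynomial loss coming from the size of the residual path integral.

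First I would examine a single full path $\bm{\Gamma}=(\Gamma^+;\Gamma^-)$ for which $X(\bm{\Gamma})$ is not $\beta$-complete, so that some block $S_0\in P(\mbf{y})$ has $|\sum_{j\in S_0}q_j|>\beta|S_0|r^{-1}$. For any refinement $P'\le P(\mbf{y})$ occurring in the sum in~\eqref{V-moment-est}, the $P'$-blocks contained in $S_0$ partition $S_0$ into $m\le|S_0|$ pieces whose $q$-sums add to $\sum_{j\in S_0}q_j$, so at least one such piece $S'$ satisfies $|\sum_{j\in S'}q_j|\ge\beta|S_0|r^{-1}/m\ge\beta r^{-1}$. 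Since $|S'|\le|K|\le 2k_{max}$ and $b_t(x)=\exp(-c|t^{-1}x|^{0.99})$ is decreasing, the corresponding factor obeys $b_{|S'|r^{-1}}(\sum_{j\in S'}q_j)\le\exp(-c(\beta/(2k_{max}))^{0.99})$; writing $b_t=b_t^{1/2}b_t^{1/2}$ and keeping half of the decay on this factor I would conclude that for \emph{every} $P'$,
\[
\prod_{S\in P'}b_{|S|r^{-1}}\Big(\sum_{j\in S}q_j\Big)\ \le\ \exp\Big(-\tfrac c2\big(\tfrac{\beta}{2k_{max}}\big)^{0.99}\Big)\ \prod_{S\in P'}b_{|S|r^{-1}}^{1/2}\Big(\sum_{j\in S}q_j\Big).
\]
Because $\beta\ge|\log\eps|^{20}$ and $k_{max}$ is a fixed constant, the scalar prefactor is smaller than $\eps^{K_0}$ for any prescribed $K_0$ once $\eps$ is small. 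Inserting this into~\eqref{V-moment-est} (and bounding the deterministic overlap factors $\braket{\xi_1^\pm|p_{\pm,k_\pm}}$, $\braket{p_{\pm,0}|\xi_0^\pm}$ by constants) bounds the expectation factor $\Expec\,\mstack{\braket{\xi_1^+|O_{\omega^+}|\xi_0^+}}{\braket{\xi_1^-|O_{\omega^-}|\xi_0^-}^*}$, on the non-$\beta$-complete set, by $\eps^{K_0}G(\bm{\Gamma})$, where $G$ has the same shape as the right side of~\eqref{V-moment-est} with every $b_t$ replaced by the comparable Gevrey bump $b_t^{1/2}$ and no completeness hypothesis.

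It then suffices to bound
\[
\Big\|\int\ket{\xi_0^-}\bra{\xi_0^+}\,\braket{\xi_1^-|A|\xi_1^+}\,\Xi(\bm{\Gamma})\,G(\bm{\Gamma})\,\diff\bm{\Gamma}\Big\|_{op}\ \le\ \eps^{-C_d}\|A\|_{op}
\]
for a constant $C_d$ depending only on $d$ and on the fixed parameters $\alpha,\kappa,k_{max}$. I would obtain this from the wavepacket Schur test of Appendix~\ref{sec:wp-quantization}, which bounds the operator norm of $\int c(\xi_0^-,\xi_0^+)\ket{\xi_0^-}\bra{\xi_0^+}\,\diff\xi_0^+\,\diff\xi_0^-$ by the larger of the two one-sided $L^1$ norms of $c$, and then integrate the remaining path variables one at a time: the support of $\Xi=\chi_S(\omega^+,\xi_0^+,\xi_1^+)\,\chi_S(\omega^-,\xi_0^-,\xi_1^-)$ confines each of $\xi_0^\pm,\xi_1^\pm$ to a box of volume $O(\alpha^{2d})$ once the corresponding $\omega^\pm$ is fixed (this is membership in $\Omega_{\alpha,k}(s,\sigma;\xi_0^\pm,\xi_1^\pm)$); the time simplex $\triangle_k(s)$ has volume $\le\tau^k\le\eps^{-2k}$; the momenta $p_j^\pm$ lie on spherical shells of radius $\lesssim\eps^{-0.5}$ that are tied together by the band-limitedness of $A$ (which forces $|\xi_{1,p}^+-\xi_{1,p}^-|\lesssim\eps^{-0.5}$) and by the energy-conservation cutoff, making these integrations genuinely finite-volume; and the polynomial decay $\prod_a(1+|q_a|)^{-10d}$ together with the halved bumps $b_t^{1/2}$ keeps the momentum differences and the collision-location separations integrable. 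Since $k_{max}$ is a fixed number only finitely many variables are involved, each contributing at most a fixed power of $\eps^{-1}$, and bounding $|\braket{\xi_1^-|A|\xi_1^+}|\le\|A\|_{op}$ throughout gives the claim. Taking $K_0=C_d+100$ and combining the two parts, the operator in the lemma has operator norm at most $\eps^{K_0}\,\eps^{-C_d}\|A\|_{op}\le\eps^{100}\|A\|_{op}$ for $\eps$ small.

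The routine parts are the pigeonhole step and the extraction of the exponential gain. The real work is the last display, i.e.\ showing the residual path integral is only polynomially large in $\eps^{-1}$; there one must be careful about the order of integration and lean on the transport constraints built into $\chi_S$, on the band-limitedness of $A$ (to tie the two energy shells together so that the momentum integrations are finite), and on the wavepacket frame bounds. This is the same bookkeeping used to estimate $\Evol_s$ in Section~\ref{sec:first-operator-bd}, and the enormous gain $\eps^{K_0}$ means no attention need be paid to the precise value of $C_d$.
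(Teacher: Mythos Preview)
Your proposal is correct and follows essentially the same approach as the paper's proof. Both arguments extract super-polynomial smallness from the moment bound~\eqref{V-moment-est} on the non-$\beta$-complete set (the paper states the integrand is at most $\eps^{|\log\eps|^5}$, while you make the pigeonhole step explicit and get the comparable $\exp(-c(\beta/2k_{max})^{0.99})$), and then absorb the merely polynomial-in-$\eps^{-1}$ volume of the path integral via the Schur test. Your write-up simply supplies more of the bookkeeping that the paper compresses into two sentences; the $b_t=b_t^{1/2}b_t^{1/2}$ splitting is a clean way to keep enough structure for the residual estimate, but it does not change the underlying argument.
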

\begin{proof}
The difference is an integral over paths which form $\beta$-incomplete
collision sets, and the norm of the integrand is at most
$\eps^{|\log\eps|^5}$ for such paths.  The volume of integration for fixed
$\xi_0^+$ or for fixed $\xi_0^-$ is only $\eps^{-C}$ , so the result follows
upon applying the Schur test.
\end{proof}

\subsection{The structure of partitions from generic paths}
The idea is that the main contribution to the channel $\Evol_s$ should come from paths
$\bm{\Gamma}$ that are \textit{generic}.  We define generic paths as those that do not have
\textit{incidences}.

In general incidences are any geometric feature of a path that can change its correlation
structure.  The simplest type of incidence is a recollision.
\begin{definition}[Recollisions]
\label{def:segment-recollision}
A \emph{recollision} in a path $\omega\in\Omega_{k}$ is a pair
$a,a'\in [1,k]$ such that $|y_a-y_{a'}| \leq 2r$.

A special kind of recollision is an \emph{immediate recollision}, which
satisfies $a'=a+1$ and $|p_{a+1}-p_{a-1}|\leq 10\alpha r^{-1}$.
Let $I^{imm}(\omega)\subset [1,k]$
be the set of all indices belonging to a immediate recollision,
and let $I^{rec}(\omega)$ be the set of indices belonging to a recollision that
is not an immediate recollision.
\end{definition}

At this point we stop to observe that immediate recollisions do not substantially alter
the trajectory of a path.
\begin{figure}
\centering
\includegraphics{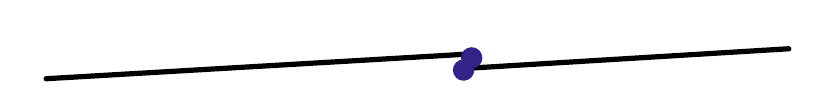}
\caption*{An example of an immediate recollision in a path.  Note that such a recollision can only
cause a small displacement in position on the order $O(r)$ and a small perturbation in momentum on
the order $O(r^{-1})$.}
\end{figure}

The first fact we prove is that the time betweeen recollisions cannot be too large.
\begin{lemma}
\label{lem:short-rec-time}
Let $\omega\in\Omega_{k,\alpha}(\tau,\sigma;\xi,\eta)$ be a path with
$|\xi_p|\geq r^{-1}\sigma^{-1}$.  If $|y_j-y_{j+1}|\leq 2r$, then $s_j\leq 10|p_0|^{-1}\alpha r$.
\end{lemma}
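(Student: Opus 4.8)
The plan is to use just two of the constraints packaged into $\Omega_{k,\alpha}(\tau,\sigma;\xi,\eta)$: the transport constraint $|y_{j+1}-(y_j+s_jp_j)|\leq\alpha r$ (available since $1\leq j\leq k-1$) and the approximate conservation of kinetic energy. First I would combine the recollision hypothesis with the transport constraint by the triangle inequality,
\[
s_j|p_j| \;=\; |s_jp_j| \;\leq\; |y_{j+1}-(y_j+s_jp_j)| + |y_{j+1}-y_j| \;\leq\; \alpha r + 2r \;\leq\; 2\alpha r,
\]
using $\alpha=|\log\eps|^{10}\geq 2$. So the displacement over the $j$-th leg is negligible, and the content of the lemma reduces to showing that $|p_j|$ is not much smaller than $|p_0|$, say $|p_j|\geq\tfrac{1}{5}|p_0|$; once this is in hand, $s_j\leq 2\alpha r/|p_j|\leq 10|p_0|^{-1}\alpha r$ is immediate.

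To bound $|p_j|$ from below I would invoke the kinetic-energy tolerance from the definition of $\Omega_{\alpha,k}$ with the index pair $(j,0)$,
\[
\big|\,|p_j|^2-|p_0|^2\,\big| \;\leq\; 2\alpha\max\{s_0^{-1},\,s_j^{-1},\,|p_j|r^{-1},\,\alpha r^{-2}\}.
\]
Here $s_0\geq\sigma$ gives $s_0^{-1}\leq\sigma^{-1}$, and $|p_0|$ differs from $\xi_p$ by at most $\alpha r^{-1}$, so the lower bound on $|\xi_p|$ bounds $|p_0|$ below. The point is that each of the four terms inside the max is small compared with $|p_0|^2$: the terms $\sigma^{-1}$, $|p_0|r^{-1}$ and $\alpha r^{-2}$ are dominated once $|p_0|$ is bounded below appropriately, while the term $s_j^{-1}$ can only be large when $s_j$ is small, in which case the desired bound already holds outright. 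In the one remaining regime, where $s_j^{-1}$ is the dominant term, I would argue directly: if $|p_j|<\tfrac{1}{5}|p_0|$ then $|p_0|^2-|p_j|^2\leq 2\alpha s_j^{-1}$ forces $s_j\leq (25/12)\alpha|p_0|^{-2}$, which together with $|p_0|\gtrsim r^{-1}$ again yields $s_j\leq 10|p_0|^{-1}\alpha r$; and if $|p_j|\geq\tfrac{1}{5}|p_0|$ we finish as above.

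The step I expect to be the main obstacle is this quantitative energy comparison — checking that the uncertainty-scale tolerance $2\alpha\max\{\ldots\}$ really does stay below $|p_0|^2$. This is precisely where the lower bound on $|\xi_p|$ is used, and the delicate part is the interplay between $\sigma$, $r$, $\alpha$, and the timescale $\tau$ in the borderline regime where the momentum sits near the scale $r^{-1}$; there one should fall back on the a priori bound $s_j\leq\tau$ rather than on the energy comparison alone.
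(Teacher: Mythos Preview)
Your proposal is correct and follows essentially the same approach as the paper: combine the transport constraint with the recollision hypothesis to get $|s_jp_j|\lesssim\alpha r$, then use approximate kinetic-energy conservation together with $s_0\geq\sigma$ and the lower bound on $|\xi_p|$ to show $|p_j|$ is comparable to $|p_0|$. The paper organizes the second step as a contradiction argument---assume $s_j\geq 10|p_0|^{-1}\alpha r$, so that $s_j^{-1}$ drops out of the max and one gets $||p_j|-|p_0||\lesssim\alpha r^{-1}$ directly, whence $s_j\leq|s_jp_j|/|p_j|\leq 10|p_0|^{-1}\alpha r$---whereas you do a case split on whether $|p_j|\geq\tfrac{1}{5}|p_0|$; these are equivalent.
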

\begin{proof}
First, the condition $|\xi_0|\geq r^{-1}\sigma^{-1}$ and the constraint
$|p_0-(\xi_0)_p| \leq \alpha r^{-1}$ imply $|p_0|\leq 2\alpha r^{-1}\sigma^{-1}$.
Then, since $s_0\geq \sigma$
\[
||p_j|^2/2-|p_0|^2/2| \leq \alpha \max\{s_j^{-1}, |p_0| r^{-1}\}.
\]
Assuming that $s_j\geq 10|p_0|^{-1}\alpha r$, it follows that
\[
||p_j|-|p_0|| \leq \alpha 2\alpha r^{-1}.
\]
But the condition $|y_j-y_{j+1}|\leq 2r$ implies
\[
|s_jp_j|\leq 4\alpha r,
\]
so that $s_j \leq 4\alpha|p_j|^{-1}r \leq 10 |p_0|^{-1}r$.
\end{proof}

To state the second fact, we introduce the notion of the collision time $t_a$, simply
defined by
\[
t_{\pm,j} := \sum_{0\leq j'<j} s_{\pm,j'}.
\]
\begin{lemma}
\label{lem:path-simplification}
Let $\bm{\Gamma}$ be a $|\log\eps|^{20}$-complete path, and suppose that $\{a,a+1\}\in P(\bm{\Gamma})$
for every immediate recollision $a$.  If $a<a'\in K(\bm{\Gamma})\setminus I^{imm}(\bm{\Gamma})$
are two consecutive collisions when ignoring immediate recollisions, then
\begin{equation}
\label{eq:simplified-path-straight}
|y_{a'} - (t_{a'}-t_a) p_a| \leq 4m^2 \alpha^{20} r.
\end{equation}
\end{lemma}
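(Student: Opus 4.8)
The plan is to show that each immediate recollision is a negligible perturbation of the trajectory: it costs very little time, displaces the particle by only $O(r)$, and returns the momentum to nearly its value from just before the recollision. Between two consecutive non-immediate collisions $a<a'$ the particle then travels essentially in a straight line with momentum $p_a$, and~\eqref{eq:simplified-path-straight} reduces to adding up the small errors. Throughout, $a$ and $a'$ lie in the same path $\Gamma^{\pm}$ of $\bm{\Gamma}$, so $t_a,t_{a'},y_a,y_{a'},p_a$ all refer to that path, and I write $m$ for a bound on the number of its collisions.

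First I would pin down the combinatorial structure of the immediate recollisions occurring between $a$ and $a'$. Every collision strictly between $a$ and $a'$ lies in $I^{imm}(\bm{\Gamma})$ and so belongs to some immediate-recollision pair $\{b,b+1\}$. The hypothesis that $\{b,b+1\}\in P(\bm{\Gamma})$ for every immediate recollision $b$ forces these pairs to be pairwise disjoint: if $\{b,b+1\}$ and $\{b+1,b+2\}$ were both immediate recollisions, the index $b+1$ would have to lie in two distinct two-element blocks of $P(\bm{\Gamma})$, which is impossible. Since $a,a'\notin I^{imm}(\bm{\Gamma})$, it follows that $a'-a-1=2\ell$ for some $\ell\le m$ and the intervening indices split into consecutive disjoint pairs $\{a+1,a+2\},\{a+3,a+4\},\dots,\{a'-2,a'-1\}$. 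I would call $j=a$, together with the second index of each pair, the \emph{transport legs}, and the first index of each pair a \emph{bump leg}. (When $\ell=0$ there is nothing between $a$ and $a'$ and~\eqref{eq:simplified-path-straight} is just the transport constraint.)

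Next I would record the per-pair estimates. For a pair $\{b,b+1\}$: the definition of a recollision gives $|y_b-y_{b+1}|\le 2r$; Lemma~\ref{lem:short-rec-time} gives $s_b\le 10|p_0|^{-1}\alpha r$; and the definition of an immediate recollision gives $|p_{b+1}-p_{b-1}|\le 10\alpha r^{-1}$ (equivalently, by $\beta$-completeness of the block $\{b,b+1\}$, $|q_b+q_{b+1}|\le 2\beta r^{-1}$). The second index of each pair and the second index of the preceding pair — or $a$, for the first pair — differ by exactly $2$, so applying the momentum-restoration bound successively and chaining through all $\ell$ pairs yields $|p_j-p_a|\le 10\ell\alpha r^{-1}$ for every transport leg $j$. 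I would also invoke the kinetic-energy constraint in~\eqref{eq:Omega-alpha-def}, exactly as in the proof of Lemma~\ref{lem:short-rec-time}, to keep $|p_j|$ comparable to $|p_0|$ on every leg — in particular on the short bump legs — so that $\sum_{b\text{ bump}}s_b|p_a|=O(\ell\alpha r)=O(m\alpha r)$. Then, telescoping and using $\sum_{j=a}^{a'-1}s_j=t_{a'}-t_a$,
\[
y_{a'}-y_a-(t_{a'}-t_a)p_a=\sum_{j=a}^{a'-1}\big[(y_{j+1}-y_j)-s_j p_a\big],
\]
where a transport-leg summand equals $(y_{j+1}-y_j-s_j p_j)+s_j(p_j-p_a)$, of norm at most $\alpha r+10\ell\alpha r^{-1}s_j$, and a bump-leg summand has norm at most $|y_{b+1}-y_b|+s_b|p_a|\le 2r+s_b|p_a|$.

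The main obstacle — the only step beyond bookkeeping — is the transport-leg term $10\ell\alpha r^{-1}\sum_{j\text{ transport}}s_j$, since a single transport leg can have $s_j$ as large as $\tau$. Here one uses $\sum_{j\text{ transport}}s_j\le t_{a'}-t_a\le\tau$ together with the subkinetic smallness $\eps^2\tau\ll 1$, so that $\tau r^{-1}=\eps^{-1}(\eps^2\tau)\ll\eps^{-1}=r$ and this term is $o(m\alpha r)$. Collecting, the transport summands sum to $O(m\alpha r)$ and the bump summands to $O(mr)+O(m\alpha r)=O(m\alpha r)$, so $|y_{a'}-y_a-(t_{a'}-t_a)p_a|\le Cm\alpha r$, comfortably inside $4m^2\alpha^{20}r$ (the exponent $20$ is far larger than necessary and merely absorbs the slack in the energy-constraint estimates). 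It is worth emphasizing that the hypothesis $\{b,b+1\}\in P(\bm{\Gamma})$ is genuinely needed, not merely convenient: without the disjointness it provides, an overlapping chain of immediate recollisions could transfer the momentum onto a long transport leg, and~\eqref{eq:simplified-path-straight} would fail outright.
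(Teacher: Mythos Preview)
Your proof is correct and follows essentially the same route as the paper: both split the stretch between $a$ and $a'$ into transport legs $s_{a+2j}$ and bump legs $s_{a+2j-1}$, bound the bump displacements by $2r$ and the bump times by Lemma~\ref{lem:short-rec-time}, chain the momentum-restoration bounds to get $|p_{a+2j}-p_a|\lesssim m\,r^{-1}|\log\eps|^{O(1)}$, and then absorb the resulting term $m|\log\eps|^{O(1)}r^{-1}\sum s_{a+2j}$ using $\tau r^{-1}\ll r$. Your explicit justification of the disjointness of the immediate-recollision pairs (via the hypothesis $\{b,b+1\}\in P(\bm\Gamma)$) is a point the paper leaves implicit, and your identification of the transport-leg momentum drift as the only nontrivial step is accurate.
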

\begin{proof}
To prove this, first observe that $a'=a+2m+1$ for some number $m$ of immediate recollisions between $a$ and
$a'$, and
\[
y_{a+2m+1} - y_a = \sum_{j=0}^m y_{a+2j+1} - y_{a+2j} + \sum_{j=1}^m y_{a+2j} - y_{a+2j-1}.
\]
The latter terms are each bounded by $2r$ because $(a+2j-1,a+2j)$ are all immediate recollisions.
The former terms are well approximated by $s_{a+2j}p_{a+2j}$, so we have
\[
|y_{a+2m+1}-y_a - \sum_{j=0}^m s_{a+2j}p_{a+2j}| \leq 2m\alpha r.
\]
Next we observe that, since $(a+2j-1,a+2j)$ forms a pair in $P(\bm{\Gamma})$ and $\bm{\Gamma})$
is $|\log\eps|^{20}$-complete, $|q_{a+2j-1}+q_{a+2j}|\leq 2|\log\eps|^{20}r^{-1}$.  Expanding the
definition of $q_{a+2j-1}$ and $q_{a+2j}$ it follows that
\[
|q_{a+2j-1} + q_{a+2j}|
= |p_{a+2j-1} - p_{a+2j-2} + p_{a+2j} - p_{a+2j-1}| =
|p_{a+2j} - p_{a+2(j-1)}| \leq 2|\log\eps|^{20}r^{-1}
\]
for every $1\leq j\leq m$.  In particular, $|p_{a+2j}-p_a| \leq 2|\log\eps|^{20} m r^{-1}$ for each $a$,
and therefore
\[
|y_{a+2m+1}-y_a - (\sum_{j=0}^m s_{a+2j})p_{a}| \leq 2m^2\alpha^{20} r.
\]
Finally, we observe that
\[
t_{a+2m+1} - t_a = \sum_{j=0}^m s_{a+2j} + \sum_{j=1}^m s_{a+2j-1}.
\]
The latter sum is bounded by $10m|p_0|^{-1}\alpha r$ by Lemma~\ref{lem:short-rec-time}.
\end{proof}

Recollisions form just one type of incidence.  It is possible that paths $\omega^+$ and
$\omega^-$ have a nontrivial collision structure even if neither path has a recollision.
Consider for example the paths
\[
    \omega^+ = (\mbf{s}^+,\mbf{p}^+,\mbf{y}^+) = (
        ((4,2,1,2,4), (v, -v, v, -v, v), (4v, 2v, 3v, v))
    \in \Omega_0(13,1)
\]
and
\[
    \omega^- := (\mbf{s}^-,\mbf{p}^-,\mbf{y}^-) = ((4,3,2,1,3), (v, -v, v, -v, v), (4v, v, 3v, 2v))
    \in \Omega_0(13,1)
\]
where $v\in\Sphere^{d-1}$ is any unit vector.  This example is depicted in Figure~\ref{fig:one-dim-example}.  Then
the collision partition associated to $\omega^+$ and $\omega^-$ is given by
\begin{align*}
    P = \{ \{(+,1), (-,1)\},
        \{(+,2), (-,4)\},
        \{(+,3), (-,3)\},
        \{(+,4), (-,2)\} \}.
\end{align*}
\begin{figure}
\centering
\includegraphics{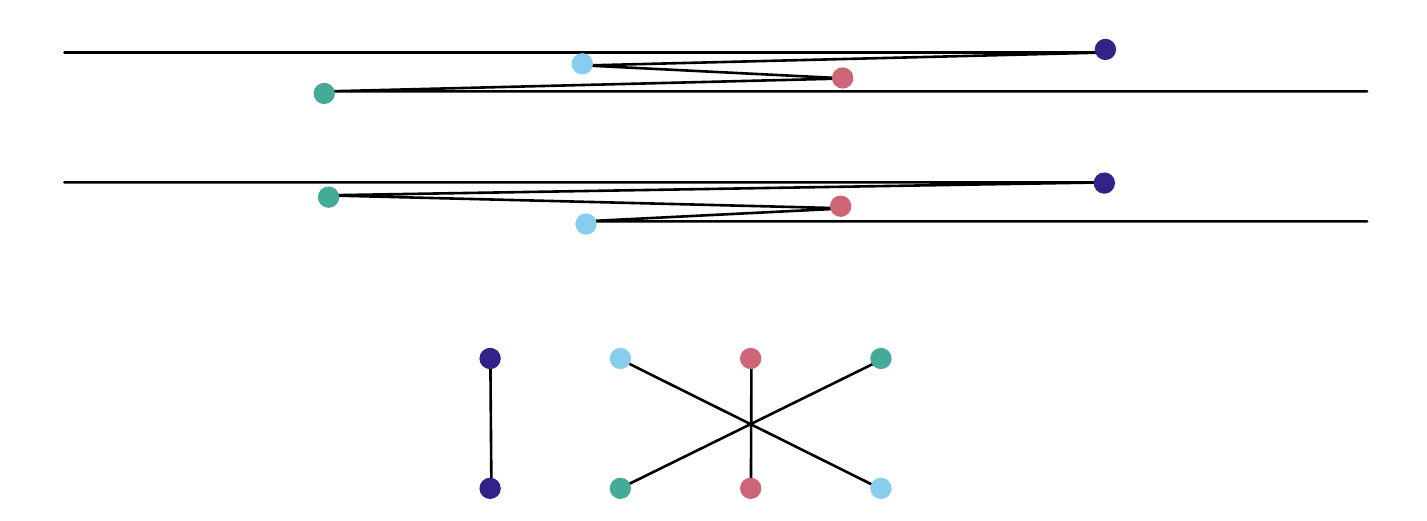}
\caption{Another example of a pair of paths with a nontrivial collision partition and such that neither $\omega$
nor $\omega'$ has a recollision event. The paths are depicted with a slight downward drift to clarify the order of the
collisions.  Note that this behavior is typical in one dimension.}
\label{fig:one-dim-example}
\end{figure}

This partition has a nontrivial structure because the second collision of
$\omega^-$ correlates with the fourth collision of $\omega^+$ and vice-versa.
This behavior is not uncommon for paths that are constrained to one dimension.
The problem with the above example is that there are non-consecutive
collisions in $\Gamma$ which can be visited by a single path with two
collisions.

We need to introduce another type of incidence to prevent this behavior
which we call a \emph{tube incidence}.

\begin{definition}[Tube incidences]
A \emph{tube incidence} for a path $\omega \in \Omega_{\alpha,k}(s,S)$
is a pair $(a,a') \in [0,k]$ of collisions such that there exists a collision
$b\in I^{imm}(\omega)$ with $a<b<a'$ and there exists a time $s\in\Real$ such that
\begin{equation}
\label{eq:tube-inc-def}
|y_a + sp_a - y_{a'}| \leq 100k^4\alpha^{20}r.
\end{equation}
Above we use the convention $y_0 = y_1 - s_0p_0$.
We set $I^{tube}(\omega)$ to be the set of pairs $(a,a')$ that form a tube incidence.
\end{definition}
A tube incidence occurs when a particle scattering out of site $a$ can choose to
``skip'' its next collision (possibly at $b$) and instead scatter at site $a'$.

The key idea is that the partition $P(\mbf{y})$ of a doubled path $\bm{\Gamma}$ is
severely constrained when neither $\omega^+$ nor $\omega^-$ have an incidence.
In particular, the partition must be a generalized ladder.  To define a generalized
ladder we first define a ladder partition.
\begin{definition}[Ladder partitions]
\label{def:ladders}
Let $A$ and $B$ be two finite ordered sets with $|A|=|B|$.  The \emph{ladder matching}
$P_{lad}\in\mcal{P}(A\sqcup B)$ is the unique matching of the form
$P_{lad}=\{\{a,\varphi(a)\}\}_{a\in A}$ where $\varphi:A\to B$ is the unique order-preserving
bijection between $A$ and $B$.
\end{definition}

We can now state the main result.
\begin{lemma}
\label{lem:is-ladder}
Let $\bm{\Gamma} = (\xi_0^+,\omega^+,\xi_1^+;\xi_0^-,\omega^-,\xi_1^-)$ be a doubled
path such that $X(\bm{\Gamma})$ is $|\log\eps|^{20}$-complete
and $d_r(\xi_0^+,\xi_0^-)\leq |\log\eps|^{20}$.  Then at least one of the following
holds:
\begin{itemize}
\item One of $\omega^+$ or $\omega^-$ has an incidence.  That is,
\[
I^{rec}(\omega^+)\cup I^{tube}(\omega^+)\cup I^{rec}(\omega^-)\cup I^{tube}(\omega^-)\not=\noset.
\]
\item The partition $P(\bm{\Gamma})$ has a cell with more than two elements.
\item The partition $P(\bm{\Gamma})$ is a generalized ladder in the sense that
(1) $P$ saturates the set $I^{imm}(\omega^+)\cup I^{imm}(\omega^-)$ and
(2) The restriction $P|_{K(\bm{\Gamma})\setminus I^{imm}(\omega^+)\setminus I^{imm}(\omega^-)}$
is a ladder partition on the set
$K(\Gamma^+)\setminus I^{imm}(\omega^+)\sqcup K(\Gamma^-)\setminus I^{imm}(\omega^-)$.
\end{itemize}
\end{lemma}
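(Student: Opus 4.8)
I will argue that if neither of the first two alternatives holds --- so $\omega^+$ and $\omega^-$ have no recollisions and no tube incidences ($I^{rec}(\omega^+)=I^{tube}(\omega^+)=I^{rec}(\omega^-)=I^{tube}(\omega^-)=\noset$), and every cell of $P(\bm{\Gamma})$ has at most two elements --- then $P(\bm{\Gamma})$ is a generalized ladder. One may also assume that no cell of $P(\bm{\Gamma})$ is a singleton, since a singleton $\{a\}$ contributes a vanishing factor $\Expec\Ft{V_{y_a}}(q_a)=0$; hence $P(\bm{\Gamma})$ is a perfect matching on $K(\bm{\Gamma})$.

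The first step is to handle immediate recollisions. If $\{a,a+1\}\in I^{imm}(\omega^+)$, then by definition $|y_a-y_{a+1}|\leq 2r\leq 2\alpha r$, so $a\sim_{P}a+1$; since cells have at most two elements this cell is exactly $\{a,a+1\}$. Thus $P(\bm{\Gamma})$ saturates $I^{imm}(\omega^+)\cup I^{imm}(\omega^-)$ --- requirement (1) of a generalized ladder --- and its restriction $P'$ to $K':=K(\bm{\Gamma})\setminus I^{imm}(\omega^+)\setminus I^{imm}(\omega^-)$ is a perfect matching with no edge internal to one side (two distinct collisions of a single path within distance $2\alpha r$ would form a recollision, which we have excluded). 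So $P'=\{\{z_j^+,z_{\pi(j)}^-\}\}_{j=1}^{m}$ for a bijection $\pi$, where $z_1^\pm,\dots,z_m^\pm$ are the non-immediate collision sites of $\omega^\pm$ in chronological order; it remains to show that $\pi$ is order-preserving, and then the generalized ladder conclusion follows.

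I would prove this by induction on $j$, with hypothesis $C(j)$: $\pi(l)=l$ for all $l\leq j$, and $|z_j^+-z_j^-|\leq 2\alpha r$, $|P_j^+-P_j^-|\leq\poly(k)|\log\eps|^{20}r^{-1}$, where $P_j^\pm$ is the momentum of $\omega^\pm$ just after $z_j^\pm$ (and $z_0^\pm$, $P_0^\pm$ are the initial data). The base case $C(0)$ is the hypothesis $d_r(\xi_0^+,\xi_0^-)\leq|\log\eps|^{20}$ together with the $\Omega_{\alpha,k}$-constraints on $p_0^\pm$. For the step: since $P(\bm{\Gamma})$ saturates the immediate recollisions, Lemma~\ref{lem:path-simplification} applies, and together with $C(j)$ it shows that both reduced paths leave $z_j^\pm$ along essentially the same ray $\ell_j$ (through $z_j^+$ in direction $P_j^+$) and that $z_{j+1}^+,z_{j+1}^-$ each lie within $\poly(k)\alpha^{20}r$ of $\ell_j$. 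Assume $z_{j+1}^+$ is the one closer to $z_j^+$ along $\ell_j$ (the opposite case is symmetric; if the two distances essentially agree then $z_{j+1}^+$ and $z_{j+1}^-$ coincide up to $2\alpha r$, forcing $\pi(j+1)=j+1$ as otherwise $\omega^-$ would have a recollision). As $\pi$ maps $\{j+1,\dots,m\}$ bijectively to itself, $\pi(j+1)\geq j+1$. If $\pi(j+1)=j+1$ the step is done after updating the momenta: $|P_{j+1}^+-P_{j+1}^-|\leq\poly(k)|\log\eps|^{20}r^{-1}$ follows from $C(j)$, the control of momentum through immediate recollisions (Lemma~\ref{lem:path-simplification} again), and the momentum-conservation bound $|q^++q^-|\leq 2|\log\eps|^{20}r^{-1}$ for the impulses $q^\pm$ at $z_{j+1}^\pm$ in the common cell, which is the assumed $|\log\eps|^{20}$-completeness of $X(\bm{\Gamma})$. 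If instead $\pi(j+1)=l\geq j+2$, then $z_l^-=z_{j+1}^+$ lies strictly between $z_j^-$ and $z_{j+1}^-$ along $\ell_j$ (it cannot coincide with $z_{j+1}^-$, which would be a recollision), so $z_l^-$ lies in the narrow tube emanating from $z_j^-$ in direction $P_j^-$, while the collision $z_{j+1}^-$ of $\omega^-$ has index strictly between those of $z_j^-$ and $z_l^-$ --- exactly a tube incidence of $\omega^-$, contradicting our assumption. This closes the induction, and hence the proof.

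The hard part will be the quantitative bookkeeping in the inductive step: one must check that the accumulated errors --- the $O(k^2\alpha^{20}r)$ straightness error of Lemma~\ref{lem:path-simplification}, the $2\alpha r$ clustering slack, and the momentum discrepancy $\lesssim\poly(k)|\log\eps|^{20}r^{-1}$ propagated over a transit time as large as $\tau=\eps^{-2+\kappa/2}$ --- stay comfortably below the tube radius $100k^4\alpha^{20}r$ allowed in the definition of a tube incidence. This forces a constraint of the shape $\theta<\kappa/2$ between the parameters and relies on $k\leq k_{max}=O(\kappa^{-1})$ so that every $\poly(k)$ factor is an absolute constant. A secondary point is to fix the constants in the definitions of recollision, tube incidence, and of $P(\bm{\Gamma})$ so that the scales $2r$, $2\alpha r$ and $100k^4\alpha^{20}r$ are consistent with the argument, and --- if one wants the lemma exactly as the pointwise statement it is phrased as --- to treat singleton cells of $P(\bm{\Gamma})$ directly rather than through the vanishing of the expectation.
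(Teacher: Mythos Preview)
Your proposal is correct and follows essentially the same approach as the paper: assume no incidences and that $P(\bm{\Gamma})$ is a perfect matching, show immediate recollisions are saturated, rule out same-side pairs as recollisions, and then prove by induction on $j$ that the $j$-th non-immediate collision of $\omega^+$ pairs with the $j$-th non-immediate collision of $\omega^-$, using Lemma~\ref{lem:path-simplification} and the completeness hypothesis to track momenta and derive a tube incidence if the pairing is out of order.

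One minor difference worth noting: your ``WLOG $z_{j+1}^+$ is closer'' case split is unnecessary. The paper argues more directly by simply looking at which $b_l$ the collision $a_{j+1}$ pairs with; since $b_1,\dots,b_j$ are already paired with $a_1,\dots,a_j$, necessarily $l\geq j+1$, and if $l>j+1$ then $y_{b_l}\approx y_{a_{j+1}}$ lies on the tube from $y_{b_j}$ in direction $p_{b_j}$ (via $y_{a_j}\approx y_{b_j}$ and $p_{a_j}\approx p_{b_j}$), yielding the tube incidence $(b_j,b_l)$ in $\omega^-$ with witness $b_{j+1}$. This avoids the symmetric case analysis. Your quantitative concern is legitimate but the bookkeeping does close: the dominant position error from momentum drift is $\tau\cdot\poly(k)|\log\eps|^{20}r^{-1}\sim\poly(k)|\log\eps|^{20}\eps^{-1+\kappa/2}$, which is comfortably below the tube radius $100k^4\alpha^{20}r\sim|\log\eps|^{200}\eps^{-1}$ for small $\eps$, so no $\theta<\kappa/2$ constraint is actually forced by this lemma. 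Your observation about singletons is correct; the paper's proof also tacitly assumes $P(\bm{\Gamma})$ is a perfect matching, consistent with how the lemma is applied downstream.
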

\begin{proof}
We will assume that we are not in either of the first two cases, so that $P(\bm{\Gamma})$ is a perfect
matching of $K(\bm{\Gamma})$ and neither $\omega^+$ nor $\omega^-$ has an incidence.

First we observe that every immediate recollision $(a,a+1)$ must also be a cell in $P(\bm{\Gamma})$.
Since $P(\bm{\Gamma})$ is assumed to be a perfect matching, it follows that $P(\bm{\Gamma})$ saturates
the set of immediate recollisions.

It remains to show that $P|_{K(\bm{\Gamma})\setminus I^{imm}(\bm{\Gamma})}$ forms a ladder partition.
Order the collisions in $K(\Gamma^+)\setminus I^{imm}(\Gamma^+)$ as $(a_1,a_2,\cdots, a_m)$, and the
collisions of $K(\Gamma^-)\setminus I^{imm}$ as $(b_1,b_2,\cdots, b_{m'})$ (with $b_{j+1} > b_j$ and $a_{j+1}>a_j$).We first observe that there are no pairs $\{a_j,a_{j'}\}\in P(\bm{\Gamma})$ because then $(j,j')$ would
form a recollision.  Likewise $\{b_j,b_{j'}\}\not\in P(\bm{\Gamma})$ for any $j,j'$.  This shows that
$m=m'$.  To show that $P(\bm{\Gamma})$ is a generalized ladder, it suffices to check that
$\{a_j,b_j\}\in P(\bm{\Gamma})$ for every $j\in[m]$.  We prove this by induction on $j$.

The base case is that $j=1$.  Choose $j_1$ such that $\{a_1,b_{j_1}\}\in P(\bm{\Gamma})$.
then $|y_{a_1} - y_{b_{j_1}}|\leq 2r$.  Using~\eqref{eq:simplified-path-straight}, we have
\begin{align*}
|y_{a_1} - (y_{+,0} + t_{a_1}p_{+,0})| &\leq 4k^2 \alpha^{20}r \\
|y_{b_1} - (y_{-,0} + t_{b_1}p_{-,0})| &\leq 4k^2 \alpha^{20}r.
\end{align*}
It follows that for $s= t_{a_1}-t_{b_1}$,
\[
|y_{b_{j_1}} - (y_{+,0} + sp_{-,0}| \leq 10 k^2 \alpha^{20}r,
\]
so that either $j_1=1$ or else $j_1>1$ and therefore $(0,b_{j_1})$ is a tube incidence for
$\Gamma^-$.

The proof of the inductive step follows the same argument, with an additional calculation
to show that $|p_{a_j}-p_{b_j}|\leq m|\log\eps|^{20}r^{-1}$ if $\{a_{j'},b_{j'}\}\in P(\bm{\Gamma})$
for all $j'\leq j$.
\end{proof}

\subsection{Ladders and the semigroup property}
\label{sec:ladder-statement}
Now we state the main result of the section, which is that the expectation appearing in
$\Evol_s[A]$ is well approximated by simply summing over the ladder matchings.

\begin{definition}[Generalized ladders]
\label{def:generalized-ladders}
A generalized ladder partition on the set $[k_+]\sqcup[k_-]\{(\pm, j) \mid j\leq k_{\pm}\}$
is a partition $P$ of $[k_+]\sqcup[k_-]$ such that there exists a set
$I_+^{imm}\subset [k_+]$ and $I_-^{imm}\subset [k_-]$ such that
$\{(+,j),(+,j+1)\}\in P$ for every $j\in I_+^{\Imm}$ and
$\{(-,j),(-,j+1)\}\in P$ for every $j\in I_-^{\Imm}$, and such that
$P|_{[k_+]\sqcup[k_-]\setminus \{+\}\times I_+^{\Imm}\setminus \{-\}\times I_-^{\Imm}}$
is a ladder partition on the set
$([k_+]\setminus I_+^{imm})\sqcup ([k_-]\setminus I_-^{imm})$.
\end{definition}
We set $\mcal{Q}_{gl}\subset \mcal{P}([k_+]\sqcup[k_-])$ to be the set of generalized ladder partitions.

To relate partitions to the superoperator $\Evol_s$ we first define the notion of a $P$-expectation.
Given a partition $P\in\mcal{P}([k_+]\sqcup[k_-])$,
$\omega^+\in \Omega_{k_+}$ and $\omega^-\in\Omega_{k_-}$,
we define
\begin{equation}
\label{eq:p-expec}
\Expec_P \mstack{\braket{\xi_1^+|O_{\omega^+}|\xi_0^+}}{\braket{\xi_1^-|O_{\omega^-}|\xi_0^-}^*}
=
\mstack
{\braket{\xi_1^+|p_{+,k_+}}\braket{p_{+,0}|\xi_0^+}}
{\braket{\xi_1^-|p_{-,k_-}}\braket{p_{-,0}|\xi_0^-}}
e^{i(\varphi(\omega^+)-\varphi(\omega^-))}
\eps^{k_++k_-}
\prod_{S\in P} \Expec \prod_{a\in S} \Ft{V_{y_a}}(q_a).
\end{equation}
We then define the ladder expectation to be the sum over all ladder partitions,
\[
\Expec_{lad}
\mstack{\braket{\xi_1^+|O_{\omega^+}|\xi_0^+}}{\braket{\xi_1^-|O_{\omega^-}|\xi_0^-}^*}
:=
\sum_{P\in\mcal{Q}_{rl}([k_+]\sqcup[k_-])}
\Expec_P \mstack{\braket{\xi_1^+|O_{\omega^+}|\xi_0^+}}{\braket{\xi_1^-|O_{\omega^-}|\xi_0^-}^*}
\]

We are now ready to define the ladder superoperator $\mcal{L}_s$.
\begin{equation}
\label{eq:ladder-op-def}
\mcal{L}_s[A] :=
\int \ket{\xi_0^-}\bra{\xi_0^+}
\braket{\xi_1^-|A|\xi_1^+}
\Xi(\Gamma)
\Expec_{lad} \mstack{\braket{\xi_1^+|O_{\omega^+}|\xi_0^+}}{\braket{\xi_1^-|O_{\omega^-}|\xi_0^-}^*}
\diff\Gamma.
\end{equation}
For convenience we define

The main result of this section is that the ladder superoperator $\mcal{L}_s$ is a good approximation
to the evolution $\Evol_s$.
\begin{proposition}
\label{prp:short-ladder-compare}
Let $A$ be an operator with good support and let $s\leq \eps^{-2-\kappa/2}$.  Then
\[
\|\Evol_s[A] - \mcal{L}_s[A]\|_{op}
\leq C\eps^{2.1}s \|A\|_{op}.
\]
\end{proposition}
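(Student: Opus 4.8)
The plan is to start from the doubled-path-integral representation of $\Evol_s[A]$ set up in this section. Using Lemma~\ref{lem:int-by-pts} together with the free-evolution interspersing that forces every collision to lie at distance at least $\sigma$ from the endpoints of its path, we write $\Evol_s[A]$ as an integral over doubled paths $\bm{\Gamma}=(\Gamma^+;\Gamma^-)$,
\[
\Evol_s[A]\approx
\int \ket{\xi_0^-}\bra{\xi_0^+}\braket{\xi_1^-|A|\xi_1^+}\Xi(\bm{\Gamma})
\Expec\mstack{\braket{\xi_1^+|O_{\omega^+}|\xi_0^+}}{\braket{\xi_1^-|O_{\omega^-}|\xi_0^-}^*}\diff\bm{\Gamma},
\]
up to an operator-norm error that, after the reductions of Section~\ref{sec:path-sketch}, is smaller than the right-hand side of the claimed bound. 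By Lemma~\ref{lem:complete-collisions} we may insert, at negligible cost, the indicator that $X(\bm{\Gamma})$ is $|\log\eps|^{20}$-complete; and because $A$ has good support, the matrix element $\braket{\xi_1^-|A|\xi_1^+}$ is concentrated on $d_r(\xi_1^+,\xi_1^-)\lesssim|\log\eps|^{20}$, which, propagated through the transport constraints carried by $\Xi$ and combined with a Schur-test truncation of the tail, lets us also restrict to $d_r(\xi_0^+,\xi_0^-)\leq|\log\eps|^{20}$.

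On this restricted domain we invoke the trichotomy of Lemma~\ref{lem:is-ladder}: for each $\bm{\Gamma}$, either (i) the collision partition $P(\bm{\Gamma})$ is a generalized ladder, or (ii) one of $\omega^+,\omega^-$ has a recollision (an element of $I^{rec}$) or a tube incidence (an element of $I^{tube}$), or (iii) $P(\bm{\Gamma})$ has a cell with more than two elements. On the domain of case (i), since $P(\bm{\Gamma})$ is then itself a perfect matching and the finest geometric clustering at once, the potential expectation in the integrand coincides with $\Expec_{P(\bm{\Gamma})}$; comparing with the definition of $\Expec_{lad}$ and of $\mcal{L}_s$ in~\eqref{eq:ladder-op-def} — and using that the cutoffs (completeness, $d_r$-localization) can be freely added to or deleted from the ladder term at negligible cost, and that immediate recollisions are built into the generalized-ladder structure and, by Lemma~\ref{lem:path-simplification}, disturb the trajectory only on the scales $r,r^{-1}$ — shows that the case-(i) part of $\Evol_s[A]$ equals $\mcal{L}_s[A]$, up to the part of $\mcal{L}_s[A]$ itself supported on the domains of (ii) and (iii). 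Writing $(\mathrm{ii})$ and $(\mathrm{iii})$ for the contributions of these two domains to $\Evol_s[A]$ and to $\mcal{L}_s[A]$ together, we are reduced to proving $\|(\mathrm{ii})\|_{op}+\|(\mathrm{iii})\|_{op}\leq C\eps^{2.1}s\|A\|_{op}$.

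Both of these are estimated by the Schur test, exactly as in the proof of Lemma~\ref{lem:complete-collisions}: each piece is an operator of the form $\int\ket{\xi_0^-}\bra{\xi_0^+}(\cdots)\diff\xi_0^+\diff\xi_0^-$, and we bound the integral of the absolute value of the integrand over all variables with $\xi_0^+$ held fixed (and symmetrically with $\xi_0^-$ held fixed). Inside, the expectation over the potential is controlled by the moment estimate~\eqref{V-moment-est} of Lemma~\ref{lem:admissible-V}, which replaces the random amplitude by a product of rapidly decaying factors in the transferred momenta times a constant combinatorial weight $(C_V|S|)^{2|S|}$ for each cluster $S$; since $s$ is subkinetic only $k_{max}=O(\kappa^{-1})$ collisions occur, so these weights — and the number of partitions, incidence locations, and incidence types summed over — are all bounded by a constant depending only on $\kappa$. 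One then integrates out the remaining path variables (collision times, momenta, and locations) by the triangle inequality and volume counting. The decisive point is that each incidence imposes a low-dimensional geometric constraint beyond those already present in $\Xi$: a recollision demands $|y_a-y_{a'}|\leq 2r$, a codimension-$d$ condition costing a factor $\lesssim (r/(s|p_0|))^{d}$, and by Lemma~\ref{lem:short-rec-time} it additionally confines the intervening time interval to length $O(|p_0|^{-1}|\log\eps|^{10}r)$; a tube incidence analogously pins three collision sites to a narrow tube; and an oversized cell forces three or more collisions to lie within $O(|\log\eps|^{10}r)$ of one another. Carrying out the volume count against the corresponding count for the ladder main term — which is an $O(1)$ operator-norm channel assembled from $\sim\eps^2 s$ effective collisions — shows that each incidence or oversized cell yields a net gain of at least $\eps^{2.1}s$ relative to the ladder term, with the hypothesis $d\geq2$ entering precisely so that the codimension-$d$ recollision gain is large enough.

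The main obstacle is this last volumetric estimate, and above all the tube-incidence term. For a recollision the extra constraint is a single proximity condition that decouples cleanly from the transport and energy-conservation cutoffs in $\Xi$, so the volume count is essentially bookkeeping; a tube incidence, by contrast, involves three collinear collision sites together with a ``skipped'' intervening collision, so one must order the integration over degrees of freedom carefully and use the straightening estimate~\eqref{eq:simplified-path-straight} to linearize the relevant stretch of the path before the power gain becomes visible. Once the tube-incidence bound is in place, recollisions and oversized cells follow the same template, and summing the three contributions with the negligible errors from the first two paragraphs gives the proposition.
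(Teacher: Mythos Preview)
Your overall architecture matches the paper's proof exactly: write the difference as a doubled path integral, restrict to $|\log\eps|^{20}$-complete paths, use the trichotomy of Lemma~\ref{lem:is-ladder}, and bound the non-ladder contributions (recollision, tube incidence, oversized cluster) via the Schur test and Lemma~\ref{lem:admissible-V}. The reduction to Lemma~\ref{lem:bad-path-bd} and the use of Lemma~\ref{lem:abstract-holder} to integrate out variables in a fixed order is precisely what the paper does.

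The gap is in your description of how the $\eps$-gain is extracted in cases~(ii) and~(iii). For recollisions you write that $|y_a-y_{a'}|\leq 2r$ is ``a codimension-$d$ condition costing a factor $\lesssim(r/(s|p_0|))^d$'' and invoke Lemma~\ref{lem:short-rec-time} for a time constraint. Neither of these works as stated. The positions $y_j$ are already pinned to scale $r$ by the transport constraints in $\Xi$, so there is no free $d$-dimensional volume in $y$ to cut down; the gain must come from the time and momentum variables. And Lemma~\ref{lem:short-rec-time} applies only to \emph{immediate} recollisions ($a'=a+1$), not to the non-immediate recollisions in $I^{rec}$. The paper's actual mechanism (Lemma~4.12) is a dichotomy on the distance $|y_{b-1}-y_b|$ where $b$ is the last non-immediate-recollision index before $a'$: if this distance exceeds $Kr$ one gains a factor $K^{1-d}$ from the angular constraint on $p_{b-1}$ (it must point from $y_{b-1}$ toward $y_a$); if not, one gains $Kr/\tau$ from the time constraint on $s_{b-2}$. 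Optimizing over $K$ produces the bound. Your heuristic recovers only the ``far'' branch and breaks down when the penultimate collision is near the recollision site.

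The tube-incidence bound (Lemma~4.13) has the same structure: it is not a straight volume count but a dichotomy on the angle $\theta(p_a,p_b)$. If the angle exceeds $\delta$ the tube constraint pins $s_{a'-1}$ to an interval of length $\sim r/(|p_0|\delta)$; if not, the angular constraint on $p_{b-1}$ gains $\delta^{d-1}$. Optimizing $\delta\sim\eps^{1/d}$ yields the extra factor $\eps^{(d-1)/d}\leq\eps^{1/2}$. Your sketch gestures at ``ordering the integration carefully'' but does not identify this split, which is the actual content of the estimate. Once you insert these two dichotomies (and the simpler cluster-counting argument of Lemma~4.11, which you have essentially right), the proof goes through exactly along the lines you describe.
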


Before we are ready to prove Proposition~\ref{prp:short-ladder-compare} we must first establish that
the sum over \emph{all} generalized ladders is the same as the sum over ther \emph{correct}
generalized ladder in the case that $\bm{\Gamma}$ is a path with no incidences.
\begin{lemma}
Let $\bm{\Gamma}$ be a $|\log\eps|^{20}$-complete path and suppose that
$d_r(\xi_0^+,\xi_0^-)\leq |\log\eps^{-20}$, and $|\xi_p|\geq r^{-1}\sigma^{-1}$.  Suppose moreover
that $\bm{\Gamma}$ has no incidences and that $P(\bm{\Gamma})$ is a matching.  Then
\[
\Expec \mstack{\braket{\xi_1^+|O_{\omega^+}|\xi_0^+}}{\braket{\xi_1^-|O_{\omega^-}|\xi_0^-}^*}
=
\Expec_{lad} \mstack{\braket{\xi_1^+|O_{\omega^+}|\xi_0^+}}{\braket{\xi_1^-|O_{\omega^-}|\xi_0^-}^*}.
\]
\end{lemma}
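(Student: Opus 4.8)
The plan is to identify each side of the asserted identity with the single term of $\Expec_{lad}$ indexed by the geometric clustering partition $P(\bm{\Gamma})=P(\mbf{y})$. The amplitude factors $\braket{\xi_1^\pm|p_{\pm,k_\pm}}\braket{p_{\pm,0}|\xi_0^\pm}$, the phase $e^{i(\varphi(\omega^+)-\varphi(\omega^-))}$, and the power of $\eps$ are common to $\Expec$ and to every $\Expec_P$, so it suffices to prove
\[
\Expec\prod_{a\in K}\Ft{V_{y_a}}(q_a)=\sum_{P\in\mcal{Q}_{gl}}\prod_{S\in P}\Expec\prod_{a\in S}\Ft{V_{y_a}}(q_a),
\]
where $\mcal{Q}_{gl}$ (Definition~\ref{def:generalized-ladders}) is the index set of the sum defining $\Expec_{lad}$.

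First I would recall that, because $V$ is mean zero with finite range of dependence and each $V_y$ is supported within $O(r)$ of $y$, the left-hand expectation factorises over the clusters of $P(\bm{\Gamma})$ — this is the splitting already used in the proof of Lemma~\ref{lem:complete-collisions}, quantified by the admissibility estimate~\eqref{V-moment-est} — so that
\[
\Expec\prod_{a\in K}\Ft{V_{y_a}}(q_a)=\prod_{S\in P(\bm{\Gamma})}\Expec\prod_{a\in S}\Ft{V_{y_a}}(q_a).
\]
It therefore remains to show that $P(\bm{\Gamma})$ occurs among the $P\in\mcal{Q}_{gl}$ and that every other $P\in\mcal{Q}_{gl}$ contributes zero.

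For the first point I apply Lemma~\ref{lem:is-ladder} to $\bm{\Gamma}$ (whose hypotheses are part of ours): its first alternative is excluded since $\bm{\Gamma}$ has no incidences, its second since $P(\bm{\Gamma})$ is a matching and hence has no cell of size larger than two, so its third alternative holds and $P(\bm{\Gamma})$ is a generalized ladder, i.e. $P(\bm{\Gamma})\in\mcal{Q}_{gl}$. For the second point, suppose $P\in\mcal{Q}_{gl}$ has $\prod_{S\in P}\Expec\prod_{a\in S}\Ft{V_{y_a}}(q_a)\neq 0$. A generalized ladder is a perfect matching, so each of its cells is a pair $\{a,a'\}$ with $\Expec\Ft{V_{y_a}}(q_a)\Ft{V_{y_{a'}}}(q_{a'})\neq 0$; as $V$ is mean zero with finite range of dependence and $V_{y_a},V_{y_{a'}}$ are localised within $O(r)$ of $y_a,y_{a'}$, this forces $|y_a-y_{a'}|\lsim r$, in particular $|y_a-y_{a'}|\leq 2\alpha r$, so $a$ and $a'$ lie in a common cluster of $P(\bm{\Gamma})$. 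Hence $P$ refines $P(\bm{\Gamma})$, and since both are perfect matchings of $K(\bm{\Gamma})$ they must coincide. Thus the sum defining $\Expec_{lad}$ collapses to its $P=P(\bm{\Gamma})$ term, which by the factorisation above equals $\Expec\prod_{a\in K}\Ft{V_{y_a}}(q_a)$, and the asserted identity follows.

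The only step here that is not combinatorial bookkeeping is the implication that non-vanishing of the pair correlation $\Expec\Ft{V_{y_a}}(q_a)\Ft{V_{y_{a'}}}(q_{a'})$ forces $a$ and $a'$ into a common cluster of $P(\bm{\Gamma})$. For a general admissible potential this is contained in Lemma~\ref{lem:admissible-V}: by~\eqref{V-moment-est} such a pair correlation vanishes unless $|y_a-y_{a'}|\lsim r$, and the threshold $2\alpha r$, with $\alpha=|\log\eps|^{10}$, entering the definition of $P(\bm{\Gamma})$ is far larger than this effective interaction range (the width $r$ of $\chi^V$ together with the $O(1)$ range of dependence of $V$). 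I would expect this — checking that the cutoff widths and the correlation length genuinely sit below the clustering threshold — to be the only point requiring care; everything else is the matching-refinement argument above together with Lemma~\ref{lem:is-ladder}.
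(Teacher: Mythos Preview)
Your proof is correct and follows essentially the same approach as the paper. Both arguments invoke Lemma~\ref{lem:is-ladder} to place $P(\bm{\Gamma})$ in $\mcal{Q}_{gl}$, and both show every other $P'\in\mcal{Q}_{gl}$ contributes zero because some pair $\{a,b'\}\in P'$ must have $|y_a-y_{b'}|>2r$ so that $\Expec\Ft{V_{y_a}}(q_a)\Ft{V_{y_{b'}}}(q_{b'})=0$; you phrase this last step as ``$P'$ refines $P(\bm{\Gamma})$ and two perfect matchings with one refining the other coincide,'' while the paper names the offending pair directly, but this is the same idea.
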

\begin{proof}
By Lemma~\ref{lem:is-ladder}, it follows that for \emph{some}
generalized ladder $P\in\mcal{Q}_{gl}([k_+]\sqcup[k_-])$,
\[
\Expec
\mstack{\braket{\xi_1^+|O_{\omega^+}|\xi_0^+}}{\braket{\xi_1^-|O_{\omega^-}|\xi_0^-}^*}
=
\Expec_P \mstack{\braket{\xi_1^+|O_{\omega^+}|\xi_0^+}}{\braket{\xi_1^-|O_{\omega^-}|\xi_0^-}^*}.
\]
We will show that if $P'\in\mcal{Q}_{gl}([k_+]\sqcup[k_-])$ is another generalized ladder
with $P'\not= P$, then
\[
\Expec_{P'} \mstack{\braket{\xi_1^+|O_{\omega^+}|\xi_0^+}}{\braket{\xi_1^-|O_{\omega^-}|\xi_0^-}^*}
= \prod_{S\in P} \Expec \prod_{a\in S} \Ft{V_{y_a}}(q_a)
= 0.
\]
Indeed, since $P'\not=P$ there exists some $a,b,b'\in[k_+]\sqcup[k_-]$ such that
$\{a,b\}\in P=P(\bm{\Gamma})$ and $\{a,b'\}\in P'$ with $b\not=b$.  But since
$\{a,b'\}\not\in P(\bm{\Gamma})$, $|y_a-y_{b'}| > 2r$ so the expectation
\[
\Expec \Ft{V_{y_a}}(q_a) \Ft{V_{y_b}}(q_{b'}) = 0
\]
vanishes.
\end{proof}

\subsection{The proof of Proposition~\ref{prp:short-ladder-compare}}
The error $\Evol_s[A]-\mcal{L}_s[A]$ can be written as a path integral
\[
\Evol_s[A] - \mcal{L}_s[A]
=
\int \ket{\xi_0^-}\bra{\xi_0^+} \braket{\xi_1^-|A|\xi_1^+}
\Xi(\Gamma)
\big(
\Expec
\mstack{\braket{\xi_1^+|O_{\omega^+}|\xi_0^+}}{\braket{\xi_1^-|O_{\omega^-}|\xi_0^-}^*}
-
\Expec_{lad}
\mstack{\braket{\xi_1^+|O_{\omega^+}|\xi_0^+}}{\braket{\xi_1^-|O_{\omega^-}|\xi_0^-}^*}\big)
\diff\bm{\Gamma}.
\]
The argument of Lemma~\ref{lem:complete-collisions} still works to show that we can restrict
the integral to paths that are $|\log\eps|^{20}$-complete.  Moreover, using the support
condition on $A$ we can also restrict to the case that $d_r(\xi_1^+,\xi_1^-)\leq |\log\eps|^{20}$
and that $|(\xi_1^+)_p|\geq (r\sigma)^{-1}$.

Under these constraints, the only paths that contribute to the path integral above are those for which
either $P(\bm{\Gamma})$ has a cell of more than two elements and paths which have some kind of incidence.
Let $\One^{bad}(\bm{\Gamma})$ be the indicator function for such paths.   We decompose this function
according to the exact partition $P(\bm{\Gamma})$ and the exact incidence set $I^{rec}(\bm{\Gamma})$,
$I^{tube}(\bm{\Gamma})$,
\begin{align*}
\One^{bad}(\bm{\Gamma})
&= \sum_{P\in\mcal{P}([k_+]\sqcup [k_-])} \sum_{I^{rec},I^{tube}}
\One(P(\bm{\Gamma})=P)
\One(I^{rec}(\bm{\Gamma})=I^{rec})
\One(I^{tube}(\bm{\Gamma})=I^{tube}) \\
&=: \sum_{P\in\mcal{P}([k_+]\sqcup [k_-])} \sum_{I^{rec},I^{tube}}
\One_{P,I^{rec},I^{tube}}(\bm{\Gamma}).
\end{align*}
where the sum includes the constraint that either $I^{rec}\cup I^{rec}\not=\noset$ or else
$P$ has a cell with at least three elements.
Then we have the estimate
\begin{align*}
\|\Evol_s[A] - \mcal{L}_s[A]\|_{op}
\leq
\sum_{P,I^{rec},I^{tube}}
\Big\|
\int \ket{\xi_0^-}\bra{\xi_0^+} &\braket{\xi_1^-|A|\xi_1^+}
\Xi(\Gamma)
\One_{P,I^{rec},I^{tube}}(\bm{\Gamma}) \\
&\big(
\Expec
\mstack{\braket{\xi_1^+|O_{\omega^+}|\xi_0^+}}{\braket{\xi_1^-|O_{\omega^-}|\xi_0^-}^*}
-
\Expec_{lad}
\mstack{\braket{\xi_1^+|O_{\omega^+}|\xi_0^+}}{\braket{\xi_1^-|O_{\omega^-}|\xi_0^-}^*}\big)
\diff\bm{\Gamma}\Big\|_{op}.
\end{align*}
Using the triangle
inequality we bound
\[
\big|\Expec
\mstack{\braket{\xi_1^+|O_{\omega^+}|\xi_0^+}}{\braket{\xi_1^-|O_{\omega^-}|\xi_0^-}^*}
-
\Expec_{lad}
\mstack{\braket{\xi_1^+|O_{\omega^+}|\xi_0^+}}{\braket{\xi_1^-|O_{\omega^-}|\xi_0^-}^*}\big|
\leq
\sum_{Q\leq P(\bm{\Gamma})}
\big|\Expec_Q
\mstack{\braket{\xi_1^+|O_{\omega^+}|\xi_0^+}}{\braket{\xi_1^-|O_{\omega^-}|\xi_0^-}^*}\big|,
\]
and now applying the Schur test we estimate
\begin{equation}
\label{eq:schur-err-bd}
\begin{split}
\|\Evol_s[A] - \mcal{L}_s[A]\|_{op}
\leq
\sum_{k_+,k_-}
&\sum_{P,I^{rec},I^{tube}} \\
&\sup_{\xi_0^-}
\Big|
\int \braket{\xi_1^-|A|\xi_1^+} \Xi(\Gamma)
\One_{P,I^{rec},I^{tube}}(\bm{\Gamma})
\sum_{Q\leq P(\bm{\Gamma})}
|\Expec_Q \mstack{\braket{\xi_1^+|O_{\omega^+}|\xi_0^+}}{\braket{\xi_1^-|O_{\omega^-}|\xi_0^-}^*}|
\diff\omega^\pm \diff\xi_1^\pm\diff\xi_0^- \Big|^{1/2} \\
&\times \sup_{\xi_0^+}
\Big|
\int \braket{\xi_1^-|A|\xi_1^+} \Xi(\Gamma)
\One_{P,I^{rec},I^{tube}}(\bm{\Gamma})
\sum_{Q\leq P(\bm{\Gamma})}
|\Expec_Q \mstack{\braket{\xi_1^+|O_{\omega^+}|\xi_0^+}}{\braket{\xi_1^-|O_{\omega^-}|\xi_0^-}^*}|
\diff\omega^\pm \diff\xi_1^\pm\diff\xi_0^+ \Big|^{1/2}.
\end{split}
\end{equation}

The first step to bound the term
$\Expec_Q
\mstack{\braket{\xi_1^+|O_{\omega^+}|\xi_0^+}}{\braket{\xi_1^-|O_{\omega^-}|\xi_0^-}^*}$
appearing in the integrand.
Then, expanding out the formula~\eqref{eq:p-expec} we have
\[
\big|
\Expec_Q
\mstack{\braket{\xi_1^+|O_{\omega^+}|\xi_0^+}}{\braket{\xi_1^-|O_{\omega^-}|\xi_0^-}^*}\big|
\leq
\mstack
{|\braket{\xi_1^+|p_{+,k_+}}\braket{p_{+,0}|\xi_0^+}|}
{|\braket{\xi_1^-|p_{-,k_-}}\braket{p_{-,0}|\xi_0^-}|}
e^{i(\varphi(\omega^+)-\varphi(\omega^-))}
\prod_{S\in P(\bm{\Gamma})\vee Q} \Expec \prod_{a\in S} \Ft{V_{y_a}}(q_a).
\]
Now we use
\[
|\Braket{p|\xi}| \leq C r^{d/2} \exp(-c (r|(\xi)_p-p|)^{0.5})
\]
as well as Lemma~\ref{lem:admissible-V} to estimate
\begin{align*}
\big|
\Expec_Q
\mstack{\braket{\xi_1^+|O_{\omega^+}|\xi_0^+}}{\braket{\xi_1^-|O_{\omega^-}|\xi_0^-}^*}\big|
\leq C r^{2d}\eps^{k_++k_-}
&\mstack{ \exp(-c(r|(\xi_1^+)_p - p_{+,k_+}|)^{0.5})
\exp(-c(r|(\xi_0^+)_p - p_{+,0}|)^{0.5})}
{\exp(-c(r|(\xi_1^-)_p - p_{-,k_-}|)^{0.5})
\exp(-c(r|(\xi_0^-)_p - p_{-,0}|)^{0.5})} \\
&r^{-d(k_++k_-)} (C(k_++k_-))^{k_++k_-}
\prod_{Q'\leq P(\bm{\Gamma})\vee Q} \prod_{S'\in Q'}
r^d \exp\Big(-c \big|r|S|^{-1} \sum_{a\in S} q_a\big|^{0.99}\Big) \\
&\qquad\qquad\qquad\qquad \times \prod_{a\in [k_+]\sqcup[k_-]} (1+|q_a|)^{-20d}.
\end{align*}
We collect the important terms above in the function $E_Q(\bm{\Gamma})$,
\begin{equation}
\label{eq:EQ-def}
E_Q(\bm{\Gamma}) :=
r^{2d} r^{-d(k_++k_-)}\eps^{k_++k_-}
\prod_{S\in Q} r^d\exp(-c|\frac{r}{|S|}\sum_{a\in S} q_a|^{0.99})
\times \prod_a (1+|q_a|)^{-20d}.
\end{equation}
Because $k_+,k_-\leq k_{max}$, the combinatorial factors contribute at most an absolute constant.
Therefore Proposition~\ref{prp:short-ladder-compare} reduces to the following integral bound.
\begin{lemma}
\label{lem:bad-path-bd}
Let $A$ be an admissible operator, let $k_+,k_-\in[k_{max}]$,
let $Q\in\mcal{P}([k_+]\sqcup[k_-])$, and let $(P,I^{rec},I^{imm})$ be a triple such that if
$I^{rec}\cup I^{tube}=\noset$, then $P$ has a cell of more than two elements.  Then
\begin{equation}
\label{eq:bad-path-int}
\sup_{\xi_0^-} \Big|
\int \braket{\xi_1^-|A|\xi_1^+} \Xi(\Gamma) E_{Q\vee P}(\bm{\Gamma})
\One_{P,I^{rec},I^{tube}}(\bm{\Gamma})
\diff\omega^\pm \diff\xi_1^\pm\diff\xi_0^- \Big|
\leq C \eps^{2.25}\tau.
\end{equation}
\end{lemma}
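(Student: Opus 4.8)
The plan is to prove \eqref{eq:bad-path-int} by a pointwise bound on the integrand followed by a volumetric estimate. Since $A$ is admissible we have $|\braket{\xi_1^-|A|\xi_1^+}|\leq C\|A\|_{op}$ with support in $\{d_r(\xi_1^+,\xi_1^-)\lesssim C_1\}$, and $\Xi\leq 1$; with $E_{Q\vee P}$ given by \eqref{eq:EQ-def}, it then suffices to bound $\|A\|_{op}$ times the volume (against $\diff\omega^\pm\diff\xi_1^\pm\diff\xi_0^+$) of the region on which $\Xi$, $\One_{P,I^{rec},I^{tube}}$ and the decay factors of $E_{Q\vee P}$ are simultaneously of order one. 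I first integrate out the boundary phase-space points: the support of $\Xi$ forces $\omega^\pm\in\Omega_{\alpha,k_\pm}(s,\sigma;\xi_0^\pm,\xi_1^\pm)$, which pins $\xi_0^+,\xi_1^+,\xi_1^-$ to within $(\alpha r,\alpha r^{-1})$ of the endpoints $(y_1^+-s_0^+p_0^+,p_0^+)$ and $(y_{k_\pm}^\pm+s_{k_\pm}^\pm p_{k_\pm}^\pm,p_{k_\pm}^\pm)$ of the segments, and (since $\xi_0^-$ is fixed) pins $p_0^-$ and $y_1^-$; this costs at most $(C\alpha)^{Cd}$. What remains is an integral over $\mbf{s}^\pm,\mbf{p}^\pm,\mbf{y}^\pm$ subject to the transport relations $y_{j+1}\approx y_j+s_jp_j$ and the energy inequalities of $\Omega_{\alpha,k_\pm}$, the clustering relations $|y_a-y_{a'}|\lesssim\alpha r$ for $\{a,a'\}\in P$ together with the incidences prescribed by $I^{rec},I^{tube}$, the momentum-conservation cutoffs $|\tfrac{r}{|S|}\sum_{a\in S}q_a|\lesssim 1$ (one per cluster of $Q\vee P$), and the impulse bounds $|q_a|\lesssim 1$ coming from the factors $(1+|q_a|)^{-20d}$.

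I then bound the remaining integral by an iterated degree count, integrating the variables in an order in which each new variable is confined by a fresh bundle of constraints: a transport relation pins a collision location once the earlier times and momenta are known (or, for a non-leading member of a cluster, the clustering relation pins it to a cluster mate); energy conservation confines each momentum to a thin spherical shell; each cluster-conservation cutoff pins one increment $q_a$; and the impulse decay bounds the remaining increments. The bookkeeping fact on which everything rests is that for a \emph{ladder-like} configuration --- $k_+=k_-$, $P$ a generalized ladder, $Q=P$ --- the constraints match the variables up to a bounded discrepancy coming from the conservation laws, so the prefactors of $r^{\pm d}$ in $E_{Q\vee P}$, the $\eps^{k_++k_-}$, and the volumes of the confining sets cancel and the contribution is bounded by $|\log\eps|^{Ck_{max}}\|A\|_{op}$ (using $k_\pm\leq k_{max}=O(1)$, $\alpha=|\log\eps|^{10}$, and $\eps^2 s\lesssim 1$). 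A \emph{bad} configuration --- one with a nontrivial recollision or tube incidence, or a cluster of $P$ of size $\ge 3$ --- must therefore win a further factor of roughly $\eps^{2.25}\tau/|\log\eps|^{Ck_{max}}\approx\eps^{1/4+o(1)}$ over this, and I would obtain it by splitting according to the type of bad feature. A recollision $|y_a-y_{a'}|\le 2r$ inside a single path forces the intervening arc to close, $\sum_{j=a}^{a'-1}s_jp_j\approx 0$; combined with energy conservation (all intervening $|p_j|$ nearly equal) and $s_j\ge\sigma$ this over-determines the arc exactly in the manner made precise by Lemma~\ref{lem:path-simplification}, which I would quantify into a volume gain $\eps^{c(d)}$ with $c(d)>0$. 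A tube incidence places $y_{a'}$ in a $\poly(\log\eps)\cdot r$-tube through $y_a$, a constraint of codimension $d-1$ transverse to those already imposed, again giving a positive power of $\eps$. A cluster of size $\ge 3$ contains two collisions of one path within $O(\alpha r)$ of each other and, since such a coincidence is not recorded in $I^{imm}$ for a generalized ladder (Lemmas~\ref{lem:is-ladder} and~\ref{lem:path-simplification}), is handled as a near-recollision after adjusting constants. Summing the resulting estimates over the $O(1)$ choices of $(k_+,k_-,Q,P,I^{rec},I^{tube})$ gives \eqref{eq:bad-path-int}.

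The main obstacle is a tension hidden in the previous paragraph. Merging clusters --- which is exactly what a recollision or a large cell does --- \emph{removes} one of the momentum-conservation cutoffs from $E_{Q\vee P}$ together with its compensating $r^{d}$ prefactor, and a naive count shows that the $\eps^{d}$ won from the extra position coincidence of a recollision is, by itself, cancelled by the resulting enlargement of the momentum integration. The gain must therefore be extracted not from the position coincidence but from the \emph{additional constraints on momenta and interval lengths} that the closure of the recollision arc forces through energy conservation (e.g. for a length-two recollision the momentum reverses and the two interval lengths coincide), and correspondingly from the transversality of a tube constraint to the already-imposed transport and conservation constraints. Verifying that this surplus is genuinely positive in every case --- uniformly over the length of the recollision, over configurations in which recollisions occur simultaneously in $\omega^+$ and $\omega^-$ inside one merged cell, and over the interplay with the refinement $Q\le P(\bm{\Gamma})$ in $E_{Q\vee P}$ --- and robustly so (in particular for $d=2$), is the technical heart of the lemma. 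Because $\eps^{2.25}\tau\approx\eps^{1/4}$ and only a fixed positive power of $\eps$ per bad feature is needed, once the net codimension is shown to be positive no optimization of $c(d)$ is required.
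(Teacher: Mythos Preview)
Your overall architecture---pointwise bound, variable ordering, iterated integration against constraint functions, case split on the type of bad feature---matches the paper. But the mechanisms you propose for extracting the extra power of $\eps$ in each case are either wrong or too vague to succeed, and differ from what the paper actually does.

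For the \emph{cluster} case you propose to treat a size-$\ge 3$ cell as a near-recollision. The paper's argument is far simpler and purely combinatorial: if $P$ is not a perfect matching then $|P|\le (k_++k_-)/2-1$, so the product of the ``free'' time integrals gives $\tau^{|P|}$ rather than $\tau^{(k_++k_-)/2}$, an immediate gain of $\tau^{-1}$. No geometric argument is needed. For the \emph{recollision} case your ``arc closure'' heuristic $\sum_{j=a}^{a'-1}s_jp_j\approx 0$ is one vector equation in many unknowns and does not, by itself, pin down which variable absorbs the gain; your specific claim that ``the momentum reverses and the two interval lengths coincide'' is only true for immediate recollisions, and your assertion that $s_j\ge\sigma$ for intermediate $j$ is false (only $s_0,s_k\ge\sigma$). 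The paper's device is a concrete dichotomy: let $b$ be the collision just before $a'$; if $|y_{b}-y_{a'}|\ge Kr$ then $p_{b}$ must point from $y_{b}$ toward $y_a$, which shrinks the annular momentum integral by a factor $K^{-(d-1)}$; if $|y_{b}-y_{a'}|\le Kr$ then this is a new near-coincidence, constraining $s_{b-1}$ to an interval of length $\lesssim Kr|p_0|^{-1}$ instead of $\tau$. Optimizing $K=\eps^{-0.5}$ gives the needed $\eps^{0.5}$. The \emph{tube} case is handled by an analogous dichotomy on the angle $\theta(p_a,p_{b-1})$: transverse gives a time constraint, near-parallel gives a momentum-angle constraint. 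Finally, Lemma~\ref{lem:path-simplification} is about stretches containing only immediate recollisions and is not the right tool here. The tension you identify in your last paragraph is real, and the resolution is exactly these dichotomies---the gain lives in the \emph{direction} of the momentum just before the bad collision, not in the closure of the arc as a whole.
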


To estimate~\eqref{eq:bad-path-int} we use the
following simple lemma, which is just an iterated application of Fubini's theorem and the triangle
inequality.
\begin{lemma}
\label{lem:abstract-holder}
Let $\mcal{X}=X_1\times \cdots \times X_N$ be the product of
$N$ measure spaces $X_j$, and let $\mcal{X}_j = X_1\times\cdots \times X_j$ be
the product of the first $j$ factors.
 Then for any positive functions $F_j:\mcal{X}_j\to\Real^+$,
\begin{equation}
\int_{\mcal{X}} \prod_{j=1}^N F_j(x_1,\cdots,x_j) \diff X
\leq
\prod_{j=1}^N \sup_{x'_1,\dots,x'_{j-1}}
\int_{X_j} F_j(x'_1,\dots,x'_{j-1}, x_j) \diff x_j.
\end{equation}
\end{lemma}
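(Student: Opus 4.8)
The plan is to prove the inequality by induction on $N$, exploiting the hypothesis that $F_j$ depends only on the first $j$ coordinates. The key observation is that the last variable $x_N$ enters the product $\prod_{j=1}^N F_j$ only through the single factor $F_N$, so it can be integrated out first while $F_1,\dots,F_{N-1}$ are treated as constants in that integration.

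First I would record the abbreviation $M_j := \sup_{x_1',\dots,x_{j-1}'}\int_{X_j} F_j(x_1',\dots,x_{j-1}',x_j)\diff x_j \in [0,+\infty]$, so that the claim reads $\int_{\mcal{X}}\prod_{j=1}^N F_j\diff X \leq \prod_{j=1}^N M_j$; this is vacuous unless every $M_j$ is finite, so one may assume so. For $N=1$ the inequality is an equality. For the inductive step, write $\mcal{X}=\mcal{X}_{N-1}\times X_N$ and apply Tonelli's theorem --- valid because all $F_j$ are nonnegative and measurable, so no integrability hypothesis is needed --- to obtain
\begin{equation*}
\int_{\mcal{X}} \prod_{j=1}^N F_j(x_1,\dots,x_j)\diff X
= \int_{\mcal{X}_{N-1}} \Big(\prod_{j=1}^{N-1} F_j(x_1,\dots,x_j)\Big)\Big(\int_{X_N} F_N(x_1,\dots,x_{N-1},x_N)\diff x_N\Big)\diff X_{N-1},
\end{equation*}
where $F_1,\dots,F_{N-1}$ have been pulled out of the inner integral since they are independent of $x_N$. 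The inner integral is at most $M_N$ for every $(x_1,\dots,x_{N-1})$, so pulling this constant out and then applying the induction hypothesis to $\int_{\mcal{X}_{N-1}}\prod_{j=1}^{N-1} F_j\diff X_{N-1}$ (whose integrand again has the required nested dependence) gives
\begin{align*}
\int_{\mcal{X}} \prod_{j=1}^N F_j(x_1,\dots,x_j)\diff X
&\leq M_N \int_{\mcal{X}_{N-1}}\prod_{j=1}^{N-1} F_j(x_1,\dots,x_j)\diff X_{N-1} \\
&\leq M_N \prod_{j=1}^{N-1} M_j = \prod_{j=1}^{N} M_j,
\end{align*}
which closes the induction.

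I do not expect any real obstacle here; the argument is a one-step induction whose engine is Tonelli's theorem. The only points that warrant a moment's attention are (i) to invoke Tonelli rather than Fubini, so that the factorization of the iterated integral is unconditional for nonnegative measurable integrands, and (ii) to permit the suprema $M_j$ to equal $+\infty$, in which case the relevant bound holds trivially. (Tonelli also guarantees that $(x_1,\dots,x_{N-1})\mapsto\int_{X_N} F_N(x_1,\dots,x_{N-1},x_N)\diff x_N$ is measurable on $\mcal{X}_{N-1}$, so $M_N$ is genuinely the supremum of a well-defined function, but this plays no role beyond justifying the pointwise bound used above.)
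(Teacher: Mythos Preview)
Your proof is correct and matches the paper's approach: the paper does not spell out a proof but simply remarks that the lemma ``is just an iterated application of Fubini's theorem and the triangle inequality,'' which is exactly your induction with Tonelli in place of Fubini.
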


To apply Lemma~\ref{lem:abstract-holder} we need to order the variables in $\bm{\Gamma}$ and
bound the integrand as a product of functions constraining each variable in $\bm{\Gamma}$
as a function only of variables that come earlier in the ordering.
The reader may find it useful to refer to Figure~\ref{fig:ladder-components} for a quick overview
of the constraints on the variables.

\begin{figure}
\includegraphics{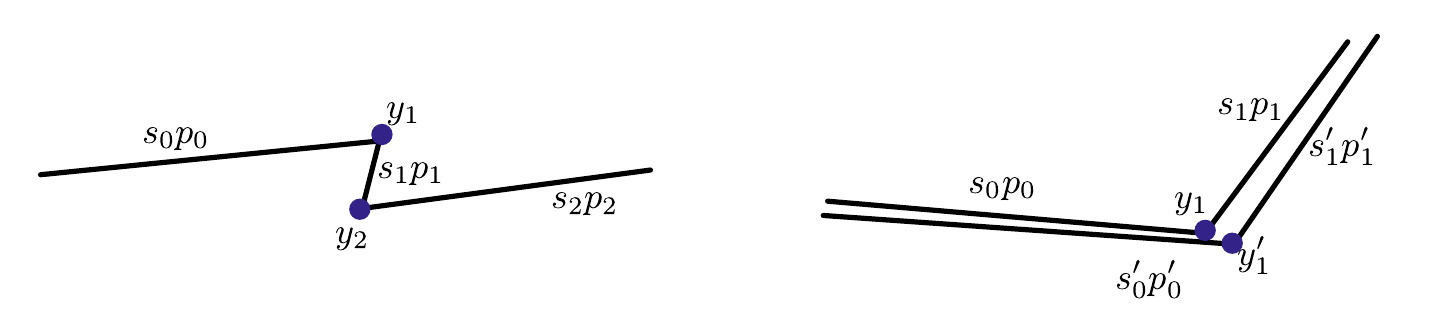}
\caption{Two typical collision pairs that can appear.  On the left, a recollision.  The contribution from
recollisions is heuristically counted as follows:  The $s_0$ variable has only the constraint $0\leq s_0\leq \tau$.
The $s_1$ variable is then constrained to $s_1\lsim r|p_1|^{-1}$ so that $|y_1-y_2|\leq r$.  Then $p_1$ is chosen
from an annulus of width $r^{-1}$ and radius $|p_0|$ and $p_2$ is effectively constrained by a delta function
in the momentum variables.  The total contribution is at most $\eps^2 \tau\lsim 1$.
On the right, a typical ``rung'' collision forming a ladder.  The time variable $s_0$ satisfies $0\leq s_0\leq \tau$,
but then $s_0'$ is constrained by $|s_0-s'_0|\lsim r|p_0|^{-1}$ so that $|y_1-y_1'|\leq r$.  The momentum variable
$p_1$ is again chosen from an annulus of thickness $r^{-1}$ and width $|p_0|$, and $p_1'$ is constrained by a delta
function to match $p_1$.  Again the contribution is bounded by $\eps^2 \tau$.}
\label{fig:ladder-components}
\end{figure}

We use the following ordering of the variables:
\begin{equation}
\label{eq:Gamma-ordering}
\bm{\Gamma} =
(\xi_0^+, p_{+,0},s_{+,0},y_{+,1},\cdots,y_{+,k_+},p_{+,k_+},s_{+,k_+},
\xi_1^+, \xi_1^-, p_{-,k_-}, s_{-,k_-},y_{-,k_-}, \cdots,y_{-,1} p_{-,0},s_{-,0},\xi_0^-).
\end{equation}
Given a variable label $\lambda\in \{\qt{y}_a,\qt{p}_a,\qt{s}_a\}_{a\in K(\bm{\Gamma})}
\cup \{\qt{\xi}_\ell^\pm\}_{\ell\in\{0,1\}}$, we define the partial
paths $\bm{\Gamma}_{<\lambda}$ to be the sequence of variables preceding $\lambda$.  Thus for example
$\bm{\Gamma}_{<\qt{y}_{+,1}} = (\xi_0^+,p_{+,0},s_{+,0})$.
We also define a total ordering $\leq'$ on $[k_+]\sqcup[k_-]$ implied by the ordering
of the variables~\eqref{eq:Gamma-ordering}, in which $(+,j)\leq'(-,j')$ for any $j,j'$ and
$(\pm,j) \leq (\pm,j')$ when $\pm j\leq \pm j'$ (that is, the ordering is reversed for the negative
indices, as indicated in~\eqref{eq:Gamma-ordering}).

The next step is to bound the integrand
as a product of constraints assigned to each variable as a function of the prior variables.
$\One_{P,I^{rec},I^{tube}}(\bm{\Gamma})$.
We will write out the ``standard term''
 in the integrand that does not use the indicator function as a product of constraints of
the $p$, $y$, and $\xi$ variables.
\[
|\braket{\xi_1^-|A|\xi_1^+}|
|\Xi(\bm{\Gamma})|
E_{Q\vee P} (\bm{\Gamma})
\leq \eps^{k_++k_-}
F_p(\bm{\Gamma}) F_y(\bm{\Gamma}) F_\xi(\bm{\Gamma}).
\]
The constraints on momentum come from several sources. First there is a term $\prod (1+|q_a|)^{-20d}$
ensuring that no impulse is too large.  Second, there is a term enforcing conservation of kinetic energy.
Third, there are terms ensuring that $p_{+,0}$ and $p_{-,k_-}$ match with
$(\xi_0^+)_p$ and $(\xi_1^-)_p$, respectively.  Finally, there are the constraints (approximate
delta functions) coming from the expectation.  We also take the factor of $r^{2d}$
from~\eqref{eq:EQ-def} and distribute one factor of $r^d$ to each $p_{+,0}$ and $p_{-,k_-}$:
\begin{align*}
F_p(\bm{\Gamma}) :=
&\prod (1+|p_a-p_{a-1}|)^{-20d}
\prod \One(||p_{\pm,j}|-|p_{\pm,0}|| \leq \alpha \max\{s_{\pm,j}^{-1}, r^{-1}\})\\
&\times
(r^d \One(|p_{+,0} - (\xi_0^+)_p|\leq \alpha r^{-1})) (r^d \One(|p_{-,{k_-}} - (\xi_1^-)_p|\leq \alpha r^{-1}))
\prod_{S\in {Q\vee P}} r^d \exp(-ck_{max}^{-1}|r\sum_{a\in S}q_a|^{0.99}) \\
\end{align*}
The constraints in the position variables are determined by the compatibility conditions
$|y_{a+1}-(y_a+s_ap_a)|\leq \alpha r$ and the compatibility of the first and last collisions of
$\omega^+$ and $\omega^-$ against the boundaries $\xi_0^+$ and $\xi_1^-$.  We take the factor
of $r^{-d(k_++k_-)}$ from~\eqref{eq:EQ-def} and distribute one $r^{-d}$ to each $y_a$ variable:
\begin{align*}
F_y(\bm{\Gamma}) :=
&\prod (r^{-d}\One(|y_{a+1} - (y_a+s_{a}p_{a})|\leq \alpha r)) \\
&\qquad \times (r^{-d}\One(|y_{+,1} - ((\xi^+_0)_x + s_0p_0)| \leq \alpha r))
(r^{-d}\One(|y_{-,k_-} - ((\xi^-_1)_x - s_{-,k_-}p_{-,k_-})|)).
\end{align*}
The constraint on the $\xi$ variables comes from the compatibility with the path, along with
the support condition on $A$
\begin{align*}
F_\xi(\bm{\Gamma}) :=
\One(d_r(\xi_1^+, (y_{+,k_+}+s_{+,k_+}p_{+,k_+},p_{+,k_+})) \leq \alpha)
\One(d_r(\xi_0^-, (y_{-,1}-s_{-,0}p_{-,0},p_{-,0})) \leq \alpha)
(1 + d_r(\xi_1^+,\xi_1^-))^{-20d}.
\end{align*}
There are also indirect constraints on the $s$ variables coming indirectly from the
combination of the compatibility conditions
$|y_{a+1}-(y_a+s_ap_a)|\leq \alpha r$ and constraints of the form
$|y_a - y_b|\leq 2kr$ for collisions $a\sim_P b$ of the same cell of the partition $P(\bm{\Gamma})$.
We also note that by Lemma~\ref{lem:path-simplification}, we have
$|y_{a'}-(y_a + (t_{a'}-t_a)p_a)| \leq C \alpha^{20} r$ when $a$ and $a'$ are collisions separated only
by immediate recollisions.

We have no worry of ``double-dipping'' on the basic compatibility constraints
such as $\One(|y_{a+1}-(y_a+s_ap_a)|\leq \alpha r)$ because for example
\[
\supp ( F_p(\bm{\Gamma}) F_y(\bm{\Gamma}) F_\xi(\bm{\Gamma}))
\subset \{|y_{a+1}-(y_a+s_ap_a)|\leq \alpha r\},
\]
so we can freely apply extra such indicator functions where they are useful.

\subsubsection{The ``standard constraint'' bounds}
In this section we use the partitions $P$ and $Q$ to ``assign constraints'' to each variable.

In particular, we write define functions $f_\lambda(\bm{\Gamma}_{\leq \lambda})$ such that
\begin{align*}
F_p(\bm{\Gamma}) &\leq
\prod f_{\qt{p},a}(\bm{\Gamma}_{\leq\qt{p},a})\\
F_y(\bm{\Gamma}) &\leq
\prod
f_{\qt{y},a}(\bm{\Gamma}_{\leq\qt{y},a}) \\
F_\xi(\bm{\Gamma}) &\leq
f_{\qt{\xi},1,+}(\bm{\Gamma}_{\leq\qt{\xi},1,+})
f_{\qt{\xi},1,-}(\bm{\Gamma}_{\leq\qt{\xi},1,-})
f_{\qt{\xi},0,-}(\bm{\Gamma}_{\leq\qt{\xi},0,-}).
\end{align*}
These ``standard constraint'' functions can simply be read off of the definitions of $F_p$, $F_y$,
and $F_\xi$.
\begin{equation}
f_{\qt{p},a}(\bm{\Gamma}_{\leq\qt{p},a})
:=
\begin{cases}
r^d \exp(-ck_{max}^{-1} | r\sum_{a\in S}q_a|^{0.99}),
&a = \max_{\leq'} S \text{ for some } S\in P\vee Q\\
r^d \One(|p_{-,k_-} - (\xi_1^-)_p|\leq \alpha r^{-1}),
&a = (-,k_-) \\
r^d \One(|p_{+,0} - (\xi_0^+)_p|\leq \alpha r^{-1}),
&a = (+,0) \\
(1+|p_a-p_{a-1}|)^{-20d} \One(||p_a| - |p_{+,0}||\leq \alpha \max\{|p_0|^{-1}s_a^{-1},r^{-1}\}),
&\text{ else}.
\end{cases}
\end{equation}
The standard $y$ constraints are given by
\begin{equation}
f_{\qt{y},a}(\bm{\Gamma}_{\leq\qt{y},a}) :=
\begin{cases}
r^{-d}\One(|y_{+,1} - ((\xi^+_0)_x + s_0p_0)| \leq \alpha r),
&a = (+,1) \\
r^{-d}\One(|y_{-,k_-} - ((\xi^-_1)_x - s_{-,k_-}p_{-,k_-})|),
&a = (-,k_-) \\
r^{-d}\One(|y_{+,j} - (y_{+,j-1}+s_{+,j-1}p_{+,j-1})|\leq \alpha r), &a = (+,j)
\text{ for } j>1\\
r^{-d}\One(|y_{-,j+1} - (y_{-,j}+s_{-,j}p_{-,j})|\leq \alpha r), &a = (-,j)
\text{ for } j<k_-.
\end{cases}
\end{equation}
Finally, the standard $\xi$ constraints are
\begin{align*}
f_{\qt{\xi},1,+}(\bm{\Gamma}_{\leq\qt{\xi},1,+})
&:= \One(d_r(\xi_1^+, (y_{+,k_+}+s_{+,k_+}p_{+,k_+},p_{+,k_+})) \leq \alpha) \\
f_{\qt{\xi},1,-}(\bm{\Gamma}_{\leq\qt{\xi},1,-})
&:= (1 + d_r(\xi_1^+,\xi_1^-))^{-20d} \\
f_{\qt{\xi},0,-}(\bm{\Gamma}_{\leq\qt{\xi},0,-})
&:= \One(d_r(\xi_0^-, (y_{-,1}-s_{-,0}p_{-,0},p_{-,0})) \leq \alpha).
\end{align*}

The contributions from the $\qt{\xi}$ and $\qt{y}$ variables are easy to account for
\begin{lemma}[Standard position and phase space bounds]
For any $\lambda\in\{(\qt{y},a), (\qt{\xi},\ell,\pm)\}$
\[
\sup_{\bm{\Gamma}_{<\lambda}}
\int f_{\lambda}(\bm{\Gamma}_{\leq\lambda})\diff \Gamma_\lambda
\leq C.
\]
\end{lemma}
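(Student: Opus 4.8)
The plan is to handle the $\qt{y}$-variables and the $\qt{\xi}$-variables separately, in each case recognising the standard constraint $f_\lambda$ as (up to the power of $r$ that was handed to it from $E_{Q\vee P}$ in \eqref{eq:EQ-def}) the indicator of a ball whose centre is a function of $\bm{\Gamma}_{<\lambda}$ only — a Euclidean ball for a position variable, a ball in the anisotropic metric $d_r$ for a phase-space variable — and then simply computing the volume of that ball. The independence of the centre on the variable $\Gamma_\lambda$ itself, together with the translation invariance of Lebesgue measure and of $d_r$, is what makes the supremum over $\bm{\Gamma}_{<\lambda}$ harmless.

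First I would treat a position variable $\lambda = (\qt{y},a)$. Reading off the definition of $f_{\qt{y},a}$, it always has the form $r^{-d}\,\One(|y_a - c|\le\alpha r)$, with $c$ equal to one of $(\xi_0^+)_x + s_{+,0}p_{+,0}$, $y_{a-1} + s_{a-1}p_{a-1}$ (with $a-1$ the predecessor of $a$ along the path), or $(\xi_1^-)_x - s_{-,k_-}p_{-,k_-}$. The one point to verify is that in each case $c$ depends only on variables appearing strictly before $y_a$ in the ordering \eqref{eq:Gamma-ordering}; this holds because on the $-$ branch the $y$'s appear in reversed order, so that $y_{-,j+1}$ precedes $y_{-,j}$ and the constraint on $y_{-,j}$ references only $y_{-,j+1}$, $s_{-,j}$, $p_{-,j}$. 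For frozen $\bm{\Gamma}_{<\lambda}$ the set $\{|y_a-c|\le\alpha r\}$ has Lebesgue measure $c_d(\alpha r)^d$, so $\int f_{\qt{y},a}\,\diff y_a = r^{-d}\cdot c_d(\alpha r)^d = c_d\alpha^d$, uniformly in the frozen variables.

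Next I would treat the two $\xi$-constraints of indicator type, $f_{\qt{\xi},1,+}$ and $f_{\qt{\xi},0,-}$, each of the form $\One(d_r(\zeta,c)\le\alpha)$ with $\zeta$ the variable being integrated ($\xi_1^+$ or $\xi_0^-$) and $c$ — namely $(y_{+,k_+}+s_{+,k_+}p_{+,k_+},\,p_{+,k_+})$ or $(y_{-,1}-s_{-,0}p_{-,0},\,p_{-,0})$ — again strictly earlier in \eqref{eq:Gamma-ordering}. Since $d_r(\zeta,c)\le\alpha$ forces $|\zeta_x-c_x|\le\alpha r$ and $|\zeta_p-c_p|\le\alpha r^{-1}$, the $d_r$-ball is contained in a box of Lebesgue volume $(2\alpha r)^d(2\alpha r^{-1})^d = (2\alpha)^{2d}$ in $\PhaseSpace=\Real^{2d}$, giving $\int f_\lambda\,\diff\zeta\le (2\alpha)^{2d}$ uniformly in $\bm{\Gamma}_{<\lambda}$. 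For the remaining constraint $f_{\qt{\xi},1,-}=(1+d_r(\xi_1^+,\xi_1^-))^{-20d}$, integrated in $\xi_1^-$ (and $\xi_1^+$ appears earlier in the ordering, so this is indeed $\bm{\Gamma}_{\le\lambda}$-measurable), I would use that $d_r(\xi_1^+,\cdot)$ is, after the rescaling $(\zeta_x,\zeta_p)\mapsto(r^{-1}\zeta_x,r\zeta_p)$, a translation-invariant $\ell^1$-type distance, so its sublevel sets $\{d_r(\xi_1^+,\cdot)\le R\}$ have Lebesgue volume $\sim R^{2d}$; decomposing into dyadic shells $R\sim 2^j$ and summing $\sum_{j\ge 0} 2^{2dj}2^{-20dj}<\infty$ yields $\int_{\Real^{2d}}(1+d_r(\xi_1^+,\xi_1^-))^{-20d}\,\diff\xi_1^-\le C_d$, again independent of $\xi_1^+$ by translation invariance. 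Combining the three cases gives the lemma with $C=C_d\alpha^{2d}$; in the regime of interest $\alpha=|\log\eps|^{10}$, and since only $O(k_{max})$ such factors ever appear in the Hölder product of Lemma~\ref{lem:abstract-holder}, this is a harmless polylogarithmic loss (or one simply absorbs the $\alpha$ into $C$).

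There is essentially no obstacle of substance here — everything reduces to the volume of a small ball whose centre is pinned by the preceding variables. The only things to be careful about are clerical: (i) checking that each ball-centre is genuinely $\bm{\Gamma}_{<\lambda}$-measurable for the chosen ordering \eqref{eq:Gamma-ordering}, in particular on the time-reversed $-$ branch; and (ii) computing the volumes in the anisotropic $d_r$ geometry rather than the Euclidean one, making sure that the powers of $r$ distributed to each variable from $E_{Q\vee P}$ in \eqref{eq:EQ-def} together with the anisotropy of $d_r$ conspire to make each single-variable integral $r$-free, which they do by design.
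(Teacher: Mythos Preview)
Your proof is correct and fills in exactly the details the paper omits: the paper states this lemma without proof, simply prefacing it with ``The contributions from the $\qt{\xi}$ and $\qt{y}$ variables are easy to account for.'' Your treatment of each case---recognising each $f_\lambda$ as (the $r^{-d}$-normalised indicator of) a ball with centre determined by $\bm{\Gamma}_{<\lambda}$ and computing its volume---is precisely the intended argument, and your checks on the measurability of the ball centres under the ordering~\eqref{eq:Gamma-ordering} (especially the time-reversed $-$ branch) and on the cancellation of the $r$-powers are the only nontrivial bookkeeping points.
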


The momentum constraints are slightly more complicated.
\begin{lemma}[Standard momentum bounds]
If $a=\max_{\leq'}S$ for some $S\in P\vee Q$ or $a\in\{(-,k_-),(+,0)\}$, then
\[
\sup_{\bm{\Gamma}_{<\qt{p},a}}
\int f_{\qt{p},a}(\bm{\Gamma}_{\leq\qt{p},a})\diff p_a
\leq C.
\]
Otherwise, if $a\not\sim_P a+1$ is not the first collision of an immediate recollision, then
\begin{equation}
\label{eq:std-momentum-bd}
\sup_{\bm{\Gamma}_{<\qt{p},a}}
\int f_{\qt{p},a}(\bm{\Gamma}_{\leq\qt{p},a})\diff p_a
\leq C r^{-1} \min\{|p_0|^{d-1}, 1\}.
\end{equation}
\end{lemma}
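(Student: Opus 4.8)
The plan is, in each branch of the definition of $f_{\qt{p},a}$, to integrate out the single momentum $p_a$ while freezing all earlier variables $\bm{\Gamma}_{<\qt{p},a}$ — in particular the reference momentum $p_{+,0}$ (which fixes the kinetic energy $|p_0|^2/2$), the predecessor momentum $p_{a-1}$, and the time $s_a$ — and to estimate the resulting one-variable integral.

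The two branches for which the claimed bound is $O(1)$ are straightforward changes of variables. If $a=\max_{\leq'}S$ for a cell $S\in P\vee Q$, then $f_{\qt{p},a}=r^d\exp(-ck_{max}^{-1}|r\sum_{a'\in S}q_{a'}|^{0.99})$, and because $a$ is the $\leq'$-last collision of $S$, the variable integrated at this step enters $\sum_{a'\in S}q_{a'}$ with coefficient $\pm1$ while the other momenta in that sum are already frozen; hence $p_a\mapsto\sum_{a'\in S}q_{a'}$ is a translation of $\Real^d$ with unit Jacobian, and after the further rescaling $w\mapsto rw$ the integral equals $\int_{\Real^d}\exp(-ck_{max}^{-1}|w|^{0.99})\,\diff w$, a constant depending only on $k_{max}=O(\kappa^{-1})$. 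In the endpoint cases $a\in\{(+,0),(-,k_-)\}$ we have $f_{\qt{p},a}=r^d\One(|p_a-\pi|\leq\alpha r^{-1})$ for a frozen vector $\pi$ (namely $(\xi_0^+)_p$ or $(\xi_1^-)_p$), so the integral is $\vol(B_1)\,\alpha^d=O(1)$, absorbing the bounded power of $\alpha=|\log\eps|^{10}$ into the constant as is done throughout this section.

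The content is in the remaining branch, where $f_{\qt{p},a}=(1+|p_a-p_{a-1}|)^{-20d}\,\One\!\big(\big|\,|p_a|-|p_{+,0}|\,\big|\leq\alpha\max\{|p_0|^{-1}s_a^{-1},r^{-1}\}\big)$. Writing $\rho:=|p_0|$ and $\delta:=\alpha\max\{|p_0|^{-1}s_a^{-1},r^{-1}\}$, the kinetic-energy indicator confines $p_a$ to the spherical shell $\{p:\,\big|\,|p|-\rho\,\big|\leq\delta\}$, while $(1+|p_a-p_{a-1}|)^{-20d}$ is integrable and effectively restricts $p_a$ to an $O(1)$-ball around the frozen point $p_{a-1}$. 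I would isolate a single "annulus lemma": for all $q\in\Real^d$, $\rho>0$ and $0<\delta\leq1$,
\[
\int_{\Real^d}(1+|p-q|)^{-20d}\,\One\!\big(\big|\,|p|-\rho\,\big|\leq\delta\big)\,\diff p\ \lesssim\ \delta\,\min\{\rho^{d-1},1\}.
\]
This is proved by slicing the shell along radii: for $\rho\leq1$ the sphere $\{|p|=\rho\}$ has $(d-1)$-area $\sim\rho^{d-1}$ and the shell has width $\delta$; for $\rho\geq1$ only an $O(1)$-diameter spherical cap lies in the region where $(1+|p-q|)^{-20d}$ is not already summably small, and that cap has $(d-1)$-area $O(1)$; in either case integrating the width $\delta$ against the (integrable) decay produces the single factor $\delta$. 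Applied with $\delta\sim r^{-1}$ — which is the case whenever $s_a\gtrsim r/|p_0|$, the regime relevant to rung and generic collisions — this is exactly $Cr^{-1}\min\{|p_0|^{d-1},1\}$.

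The main obstacle is the atypical sub-case $s_a\ll r/|p_0|$, in which $\delta=\alpha|p_0|^{-1}s_a^{-1}\gg r^{-1}$ and the bound above degrades by the factor $\delta/r^{-1}$. This factor is not genuinely lost: when $s_a$ is this small, the transport constraint $|y_{a+1}-(y_a+s_ap_a)|\leq\alpha r$ together with the cell constraints $|y_a-y_b|\leq 2k_{max}r$ coming from $P(\bm{\Gamma})$ pin $s_a$ into a correspondingly short interval, so the loss is recovered when the $s_a$ variable is integrated in the next step (the indirect $s$-constraints mentioned above). The delicate bookkeeping point — which I expect to be the only real subtlety — is to fix the ordering so that this compensating short $s_a$-interval is charged to the $s_a$-integral and not reused, i.e.\ to avoid double-counting the transport indicator; once that is arranged the proof is a routine sequence of the three elementary estimates above.
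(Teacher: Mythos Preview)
Your treatment of the cell-maximum and endpoint branches is correct and matches the paper, as is your annulus estimate. The gap is that the ``main obstacle'' you flag --- the sub-case $s_a\ll r/|p_0|$ --- never occurs under the hypothesis $a\not\sim_P a+1$. On the support of the integrand that hypothesis forces $|y_a-y_{a+1}|$ above the clustering threshold, and the transport constraint $|y_{a+1}-(y_a+s_ap_a)|\leq\alpha r$ then gives $s_a|p_a|\gtrsim r$; together with the kinetic-energy localization this yields $s_a\gtrsim\alpha r/|p_0|$. This is exactly what the paper invokes as Lemma~\ref{lem:short-rec-time}, and it is the entire proof: once $s_a\gtrsim r/|p_0|$ the $\max$ in $f_{\qt{p},a}$ collapses to $r^{-1}$, and your annulus bound with $\delta\sim r^{-1}$ gives $Cr^{-1}\min\{|p_0|^{d-1},1\}$ directly.

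Your compensation proposal, besides being unnecessary, would not work. The cell constraints you invoke pin $s_a$ only to an interval of length $\sim r/|p_0|$ (and only when $a+1$ is not first in its cell), so they cannot absorb a loss of order $r|p_0|^{-1}s_a^{-1}$ as $s_a\to 0$; and deferring loss from the $p_a$-integral to a later $s_a$-integral is incompatible with the iterated-supremum structure of Lemma~\ref{lem:abstract-holder}, hence would not establish the stated bound on $\sup_{\bm{\Gamma}_{<\qt{p},a}}\int f_{\qt{p},a}\,\diff p_a$. The grouped $(s_a,p_a)$-integral you are reaching for is the immediate-recollision mechanism handled separately, precisely the case the hypothesis $a\not\sim_P a+1$ excludes.
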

\begin{proof}
Only the second bound needs proof.  Since $a+1\not\sim_P a$, it follows from
Lemma~\ref{lem:short-rec-time} that $s_a \gtrsim \alpha |p_0|^{-1} r$.  Therefore
$|p_0|^{-1}s_a^{-1}\lsim r^{-1}$, so $p_a$ is constrained to an annulus of thickness $r^{-1}$
and radius $|p_0|$.  This annulus has volume $r^{-1}|p_0|^{d-1}$.  If $|p_0|\gtrsim 1$ the additional
factor $(1+|p_a-p_{a-1}|)^{-20d}$ ensures that $p_a$ is essentially also confined to a ball of unit radius.
\end{proof}

The immediate recollisions require some more detailed attention.
If $a\sim a+1$ is an immediate recollision, then we group the variables $(s_a,p_a)$ and use the following
estimate.
\begin{lemma}
For any $|p_0|>r^{-1}\sigma^{-1} \geq \eps^{0.5}$ and $q\in\Real^d$,
\begin{equation}
\label{lem:sp-integral}
\begin{split}
\int_0^{10\alpha |p_0|^{-1}r}\int_{\Real^d}
&\One(||p|^2/2-|p_0|^2/2| \leq \alpha s^{-1})
(1 + |p-q|)^{-20d} \diff q\diff s \\
&\leq C \min\{|p_0|^{-1}, |p_0|^{d-2}\} (1 + \log(\eps^{-1})).
\end{split}
\end{equation}
\end{lemma}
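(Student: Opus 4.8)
The plan is to integrate out the time variable first and then reduce the momentum integral to a thin spherical shell, where the estimate reduces to an elementary one–dimensional computation.

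First I would apply Tonelli's theorem and integrate in $s$ before $p$ (the inner integral being over $p\in\Real^d$). For fixed $p$ with $|p|\ne|p_0|$, the kinetic–energy indicator $\One(||p|^2/2-|p_0|^2/2|\le\alpha s^{-1})$ holds exactly for $0<s\le 2\alpha\,||p|^2-|p_0|^2|^{-1}$, so intersecting with the range $[0,T]$, $T:=10\alpha|p_0|^{-1}r$ (the bound furnished by Lemma~\ref{lem:short-rec-time}), one gets an $s$–set of measure $\min\{T,\,2\alpha\,||p|^2-|p_0|^2|^{-1}\}$. Hence the left–hand side equals
\[
\int_{\Real^d}(1+|p-q|)^{-20d}\,\min\Big\{T,\ \frac{2\alpha}{\big||p|^2-|p_0|^2\big|}\Big\}\,\diff p .
\]

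Next I would pass to polar coordinates $p=\rho\omega$ and write $\rho=|p_0|+t$, so that $\big|\rho^2-|p_0|^2\big|=|t|(2|p_0|+t)$. The radial weight then equals $T$ on $|t|\lesssim 1/r$, is comparable to $\alpha/(|p_0|\,|t|)$ on $1/r\lesssim|t|\lesssim|p_0|$, and is $\lesssim\alpha/\max\{|p_0|,\rho\}^2$ once $\rho$ leaves $[\tfrac12|p_0|,2|p_0|]$. Since $|p_0|\ge\eps^{0.5}$ forces $1/r=\eps\ll|p_0|$, the shell is genuinely thin and $\rho^{d-1}\approx|p_0|^{d-1}$ on it; combining this with the crude angular bound $\int_{\Sphere^{d-1}}(1+|\rho\omega-q|)^{-20d}\,\diff\omega\lesssim 1$, the shell contributes
\[
\lesssim|p_0|^{d-1}\Big(\int_{|t|\le 1/r}T\,\diff t+\int_{1/r\le|t|\le|p_0|}\frac{\alpha}{|p_0|\,|t|}\,\diff t\Big)\lesssim\alpha\,|p_0|^{d-2}\log\big(|p_0|r\big)\lesssim\alpha\,|p_0|^{d-2}\log(\eps^{-1}),
\]
using $T/r=10\alpha|p_0|^{-1}$ and that the momenta under consideration are polynomially bounded; the region where $\rho$ is bounded away from $|p_0|$ is lower order by a direct estimate of $\int(1+|p-q|)^{-20d}\max\{|p_0|,|p|\}^{-2}\diff p$. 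To obtain the competing envelope $|p_0|^{-1}$ (which dominates when $|p_0|\ge 1$) I would instead retain the sharper angular decay $\int_{\Sphere^{d-1}}(1+|\rho\omega-q|)^{-20d}\diff\omega\lesssim\rho^{1-d}$, valid for $\rho\gtrsim1$: this cancels the Jacobian $\rho^{d-1}$ and replaces $|p_0|^{d-2}$ by $|p_0|^{-1}$ throughout. Taking the better of the two cases yields the factor $\min\{|p_0|^{-1},|p_0|^{d-2}\}$.

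The only genuinely delicate point is the logarithm. Near the sphere $|p|=|p_0|$ the radial density behaves like $\alpha/(|p_0|\,|t|)$, which is honestly log–divergent; the divergence is cut off from below at $|t|\sim 1/r$ — where the weight saturates at $T$ precisely because $s$ cannot exceed $10\alpha|p_0|^{-1}r$ — and from above at $|t|\sim|p_0|$, and it is this range that produces the factor $\log(|p_0|r)\lesssim\log(\eps^{-1})$. The remaining bookkeeping — deciding for a given $|p_0|$ whether to spend the Jacobian $\rho^{d-1}$ or the angular decay $\rho^{1-d}$ (handled by splitting into $|p_0|\le 1$ and $|p_0|>1$), and noting that the polylogarithmic prefactor $\alpha=|\log\eps|^{10}$ is harmless — is routine.
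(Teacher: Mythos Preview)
Your argument is correct and reaches the same estimate, but it proceeds by the dual order of integration compared to the paper. The paper slices the $s$-interval dyadically into $[2^k,2^{k+1}]$; on each slice the kinetic-energy constraint confines $p$ to an annulus of thickness $\sim |p_0|^{-1}2^{-k}$, and the sum over $k$ splits into a geometric piece ($2^k<|p_0|^{-1}$) plus a flat piece ($|p_0|^{-1}<2^k<T$) that supplies the logarithm. You instead integrate $s$ first exactly to get the weight $\min\{T,\,2\alpha/||p|^2-|p_0|^2|\}$ and then analyze the resulting radial integral near the sphere $|p|=|p_0|$. The two computations are in correspondence via Fubini/layer-cake, and the logarithm has the same origin in both---the range where the weight is $\sim \alpha/(|p_0|\,|t|)$ in your variables, equivalently the range of $k$ where the annulus is thinner than a unit ball in the paper's. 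Your treatment of the two regimes $|p_0|\lesssim 1$ versus $|p_0|\gtrsim 1$ via the angular integral is slightly more explicit than the paper's, and your remark that the $\alpha$ prefactor is polylogarithmic and hence absorbed is a point the paper leaves implicit.
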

\begin{proof}
We split the integral over $s$ into dyadic intervals $[2^k,2^{k+1}]$ for $k\in\bbZ$.  On this interval,
the variable $p$ is retricted to an annulus of radius $|p_0|$ and width $|p_0|^{-1}2^{-k}$.  Moreover,

Now consider the case $|p_0|\gtrsim 1$.  In this case the factor $(1+|p-q|)^{-20d}$ additionally
restricts the integration over $p$ to a unit ball, and the integration over $p_0$ produces a factor
on the order $\min\{1,|p_0|^{-1} 2^{-k}\}$.  Integrating over $s$ to produce a factor $2^k$
and summing over $k$ such that $2^k\leq 20\alpha |p_0|^{-1}r$, we obtain the bound
\begin{align*}
\int_0^{10\alpha |p_0|^{-1}r}\int_{\Real^d}
&\One(||p|^2/2-|p_0|^2/2| \leq \alpha s^{-1})
(1 + |p-q|)^{-20d} \diff q\diff s \\
&\leq \sum_{\substack{k\in\bbZ \\ 2^k < |p_0|^{-1}}} 2^k
+ \sum_{\substack{|p_0|^{-1} < 2^k < 20\alpha |p_0|^{-1}r}} |p_0|^{-1} \\
&\leq |p_0|^{-1}(1 + \log(20\alpha r)).
\end{align*}

The second case we consider is that $|p_0|\lesssim 1$.  In this case, the annulus of radius $|p_0|$
and width $|p_0|^{-1}2^{-k}$ has volume on the order $|p_0|^{d-2} 2^{-k}$.  The bound then follows
from integrating in $s$ and summming over $k$, as above.
\end{proof}

Conspicuously missing from the discussion above is the integration over the time variables.  For many
of the time variables we simply use the constraint $s_a\leq \tau$ to pick up a factor of $\tau$.
Additional constraints come from the partition $P$.  Suppose that $a\leq' b$ and $a\sim_P b$.  Then there
is a constraint $|y_b - y_a|\leq 2 r$, which coupled with the constraint
$|y_b - (y_{b-1}+s_{b-1}p_{b-1})|\leq \alpha r$ imposes a constraint on $s_{b-1}$ in terms of the variables
$y_a$,$y_{b-1}$, and $p_{b-1}$, which are all in $\Gamma_{<\qt{s},b}$.  This constraint picks up a factor
of $|p_0|^{-1}r$ instead of $\tau$:
\begin{equation}
\label{eq:s-cluster-bd}
\sup_{\Gamma_{<\qt{s},b-1}}
\int \One(|y_a - (y_{b-1}+s_{b-1}p_{b-1})|\leq \alpha r) \diff s_{b-1}
\leq |p_0|^{-1}r.
\end{equation}

\subsubsection{The case that $P$ has a cluster}
With just the bounds we have already proven, it is possible to obtain a good estimate in the case
that $P$ has a set of more than $2$ elements.  This is the simplest case, as suggested by
Figure~\ref{fig:cluster}.
\begin{figure}
\includegraphics{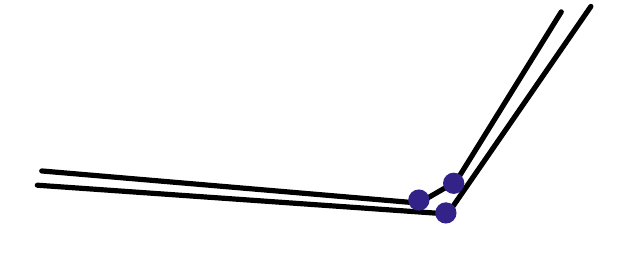}
\caption{The case that the collision partition has a cluster.  There is only one full time degree of freedom
contributing a factor of $\tau$, but three factors of $\eps$ from the potential.  Such clusters therefore contribute
$\eps^3\tau$ rather than $\eps^2\tau$.}
\label{fig:cluster}
\end{figure}

\begin{lemma}[The cluster bound]
\label{lem:cluster-bd}
Let $A$ be an admissible operator, let $k_+,k_-\in[k_{max}]$,
let $Q\in\mcal{P}([k_+]\sqcup[k_-])$, and let $P\in\mcal{P}([k_+]\sqcup[k_-])$ be a partition having a
cell of more than two elements.  Then
\begin{equation}
\label{eq:cluster-path-int}
\sup_{\xi_0^-} \Big|
\int \braket{\xi_1^-|A|\xi_1^+} \Xi(\Gamma) E_{Q\vee P}(\bm{\Gamma})
\One_{P}(\bm{\Gamma})
\diff\omega^\pm \diff\xi_1^\pm\diff\xi_0^- \Big|
\leq C^{k_++k_-} (\eps^2\tau)^{(k_++k_-)/2}\tau^{-1}.
\end{equation}
\end{lemma}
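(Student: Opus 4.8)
The plan is to estimate the left side of~\eqref{eq:cluster-path-int} by the iterated Hölder/Fubini inequality of Lemma~\ref{lem:abstract-holder}, integrating the variables of $\bm{\Gamma}$ in the order fixed in~\eqref{eq:Gamma-ordering} and using the ``standard constraint'' factorization set up in the preceding subsection. First I would expand $\Expec_Q$ by Lemma~\ref{lem:admissible-V} and absorb the resulting approximate momentum deltas into $E_{Q\vee P}=E_P$ (here $Q\leq P(\bm{\Gamma})=P$, so the join is $P$), reducing matters to a bound on $\int|\braket{\xi_1^-|A|\xi_1^+}|\,|\Xi(\bm{\Gamma})|\,E_P(\bm{\Gamma})\,\One_P(\bm{\Gamma})$; then I would bound the non-indicator factor by $\eps^{k_++k_-}F_pF_yF_\xi$ as above, attach to each $y$-, $\xi$- and $p$-variable its standard constraint $f_\lambda$, and attach to each time $s_a$ the trivial constraint $\One(0\leq s_a\leq\tau)$ except when $a=b-1$ for a collision $b$ that is not the $\leq'$-minimum of its $P$-cell, in which case the cell constraint $|y_b-y_{b'}|\leq 2r$ (with $b'$ the cell minimum) combined with the transport constraint lets me use the tighter bound~\eqref{eq:s-cluster-bd}.

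After integrating in the order~\eqref{eq:Gamma-ordering}, each $y$- and $\xi$-integral contributes $O(\alpha^{Cd})$, which together with the factor $r^{-d(k_++k_-)}$ in~\eqref{eq:EQ-def} and the near-diagonal support of $A$ costs only a subpolynomial factor. By the standard momentum bounds, the $|P|$ cell-maximal momenta and the two boundary momenta $p_{+,0},p_{-,k_-}$ each integrate to $O(1)$, absorbing the $r^{2d}$ from~\eqref{eq:EQ-def} and the factor $r^d$ attached to each cell of $P$; every one of the remaining $(k_++k_-)-|P|$ momenta is confined to a thin spherical annulus (Lemma~\ref{lem:short-rec-time} guarantees $s_a\gtrsim|p_0|^{-1}\alpha r$ there) and integrates to $O(\eps)$. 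For the times I would first use the hard constraint $\sum_j s_{\pm,j}=\tau$ on each path to remove one interval apiece, leaving $k_++k_-$ free intervals; of these, the $(k_++k_-)-|P|$ intervals selected above are pinned by~\eqref{eq:s-cluster-bd} to length $O(r)$ — and these are pairwise distinct since $b\mapsto b-1$ is injective on each path — while the remaining $|P|$ intervals contribute at most $\tau$ each. Collecting everything, the $O(r)$ factors from the pinned times cancel the $O(\eps)=O(r^{-1})$ factors from the annular momenta, and what survives is $C^{k_++k_-}\eps^{k_++k_-}\tau^{|P|}$, all combinatorial and $\alpha$-factors being absorbed into $C^{k_++k_-}$ (legitimate since $k_\pm\leq k_{max}=O(\kappa^{-1})$).

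It remains to convert the hypothesis that $P$ has a cell of size $\geq 3$ into $|P|\leq (k_++k_-)/2-1$, which gives $\tau^{|P|}\leq\tau^{(k_++k_-)/2-1}=\eps^{-(k_++k_-)}(\eps^2\tau)^{(k_++k_-)/2}\tau^{-1}$ and hence~\eqref{eq:cluster-path-int} after multiplying by the $\eps^{k_++k_-}$ from the potentials. Since $P$ has no singletons (otherwise $\Expec_Q$ vanishes for every $Q\leq P$) and one cell has size $\geq 3$, naive counting already gives $|P|\leq\lfloor(k_++k_-)/2\rfloor$, and $|P|\leq(k_++k_-)/2-1$ whenever the oversized cell has even size. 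For an oversized cell of odd size I would run a short case analysis on its location: if it lies on a single path it forces a recollision whose length-two sub-loop pins an extra momentum (gaining $O(\eps^{d-1})$); if it is split between the two paths one of its constituent pairs is again a length-two loop, or (if it is an immediate recollision) the triple cumulant degenerates and an extra impulse is forced to be $O(r^{-1})$ — in every surviving configuration a further degree of freedom is killed, more than compensating the missing half power of $\tau$. The main obstacle is precisely this last case analysis, together with the bookkeeping in the middle paragraph needed to verify that the cell, transport, energy-conservation and hard time constraints are charged to distinct variables, each appearing no earlier in the ordering~\eqref{eq:Gamma-ordering} than the variable it pins, and the parallel handling of immediate recollisions sitting inside or next to the cluster.
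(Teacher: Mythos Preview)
Your approach matches the paper's through the bound $C^{k_++k_-}\eps^{k_++k_-}\tau^{|P|}$: the same iterated Fubini via Lemma~\ref{lem:abstract-holder} with the standard constraints on $y$, $\xi$, $p$, and the same use of~\eqref{eq:s-cluster-bd} to pin the time variable preceding each non-minimal collision in its $P$-cell.

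The only divergence is your last paragraph, where you overcomplicate a one-line step. The paper simply notes that $P$ has no singletons and at least one cell of size $\geq 3$, so $k_++k_-\geq 2(|P|-1)+3=2|P|+1$, hence $|P|<(k_++k_-)/2$; since $|P|$ is an integer, $|P|\leq (k_++k_-)/2-1$. Your case split by the parity of the oversized cell, followed by speculative arguments about recollision sub-loops and triple cumulants in the odd case, is unnecessary and should be dropped. (You may have been worried about odd $k_++k_-$, where integrality literally only yields $|P|\leq (k_++k_-)/2-1/2$ and the bound becomes $(\eps^2\tau)^{(k_++k_-)/2}\tau^{-1/2}$; but this weaker version is still more than sufficient for Lemma~\ref{lem:bad-path-bd}, since the extra $\tau^{1/2}$ is absorbed by any extra power of $\eps^2\tau\leq 1$ available once $k_++k_-\geq 3$.)
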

\begin{proof}
We bound the standard part of the integrand by the product $F_pF_yF_s$ as described in the previous
section and apply Lemma~\ref{lem:abstract-holder}.  The $\qt{y}$ and $\qt{\xi}$ variables contribute
a factor of $C^{k_++k_-}$.  The product of the contributions from the $(s_a,p_a)$ pairs coming from
immediate recollisions produces a factor of $(C\log\eps^{-1})^{n_r}$, where $n_r$ is the number of immediate
recollision clusters of $P$.
To account for the $\qt{s}$ and $\qt{p}$ variables, let $P'\subset P$
The $\qt{p}$ variables contribute a total of
$(Cr^{-1}\min\{|p_0|^{d-1},1\})^{|P|-n_r}$ by taking the product of the integral over all $p_a$ with
$a=\max_{\leq'}S$ for each $S\in P$.
Then for each $s_a$ variable that is the first in its cluster (of which there are $|P|$),
we get a trivial factor of $\tau$.
Each of the rest of the $s_a$ variables contribute $|p_0|^{-1}r$ by~\eqref{eq:s-cluster-bd}.
The product of all of these factors gives
\begin{equation}
\label{eq:normal-partition-bd}
\begin{split}
\sup_{\xi_0^-} \Big|
&\int \braket{\xi_1^-|A|\xi_1^+} \Xi(\Gamma) E_{Q\vee P}(\bm{\Gamma})
\One_{P}(\bm{\Gamma})
\diff\omega^\pm \diff\xi_1^\pm\diff\xi_0^- \Big| \\
&\qquad\leq C^{k_++k_-}
\eps^{k_++k_-}\tau^{|P|}
(C \min\{|p_0|^{d-2},|p_0|^{-1}\})^{|P|-n_r}
\end{split}
\end{equation}
The first and last factors are bounded by $C^{k_++k_-}$.  Then the fact that $P$ is not a perfect
matching implies $|P| < (k_++k_-)/2$ and since $|P|$ is an integer in particular it follows that
$|P|\leq (k_++k_-)/2 - 1$.  This proves~\eqref{eq:cluster-path-int}.
\end{proof}

\subsubsection{The recollision case}
To complete the proof of Lemma~\ref{lem:bad-path-bd} we need to find a way to use the additional
constraints coming from a recollision or tube incidence.  A simplified version of the argument is
presented in Figure~\ref{fig:recollision-explanation}

\begin{figure}
\includegraphics{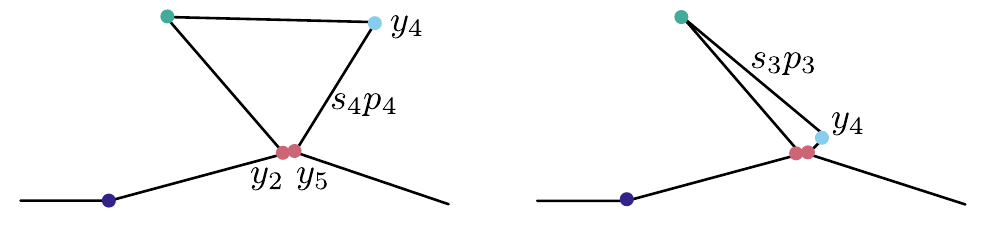}
\caption{A path with a recollision at indices $(2,5)$.  On the left, the case that $y_4$ is far from $y_5$.  In this
case the momentum variable $p_4$ is constrained to be approximately parallel to $y_2-y_4$.  On the right, the
case that $y_4$ is close to $y_5$.  In this case there is an additional constraint on the time variable $s_3$.}
\label{fig:recollision-explanation}
\end{figure}

Suppose that $(a,b)$ is a recollision occuring in $\omega^+$,
$\sign(a)=\sign(b)=+$.
The idea is that a recollision at $(a,b)$ typically enforces
a strong constraint on the momentum \emph{before} the collision at $b$.  Indeed, if $|y_a-y_b|\leq 2r$
then $|s_{b-1}p_{b-1} - (y_a-y_{b-1})| \leq 2r$, so in particular
\[
|p_{b-1}-z| \leq 2r s_{b-1}^{-1},
\]
with $z = (y_a-y_{b-1}) s_{b-1}^{-1}$.  If $|y_a-y_{b-1}| > 10\alpha r$, then
 $s_{b-1} \geq |y_a-y_{b-1}||p_0|^{-1}$.
This is where the constraint on $p_{b-1}$ comes from.
On the other hand, if $|y_a-y_{b-1}|$ is small, then there is a constraint on $s_{b-2}$ of
exactly the same kind as~\eqref{eq:s-cluster-bd}.  The only additional subtlety to deal with is the
possibility that $b-1$ is itself a recollision or immediate recollision, in which case
$p_{b-1}$ is localized by the momentum constraint on a collision cluster and further localization is
not helpful.  This is only a minor difficulty, and so we now prove
\begin{lemma}[The recollision bound]
Let $A$ be an admissible operator, let $k_+,k_-\in[k_{max}]$,
let $P,Q\in\mcal{P}([k_+]\sqcup[k_-])$, and let $I^{rec}\not=\noset$ be a nonempty set of
recollisions in $[k_+]$.  Then
\begin{equation}
\label{eq:rec-path-int}
\sup_{\xi_0^-} \Big|
\int \braket{\xi_1^-|A|\xi_1^+} \Xi(\Gamma) E_{Q}(\bm{\Gamma})
\One_P(\bm{\Gamma})
\One_{I^{rec}}(\bm{\Gamma})
\diff\omega^\pm \diff\xi_1^\pm\diff\xi_0^- \Big|
\leq C^{k_++k_-} (\eps^2\tau)^{(k_++k_-)/2} (\eps^{-1.5}\tau^{-1}).
\end{equation}
\end{lemma}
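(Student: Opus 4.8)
The plan is to reuse the counting scheme from the cluster bound (Lemma~\ref{lem:cluster-bd}) and then to extract one extra delta-type localization from the recollision. I may assume $|P\vee Q|=(k_++k_-)/2$, since otherwise the counting in the proof of Lemma~\ref{lem:cluster-bd} already yields the stronger estimate $C^{k_++k_-}(\eps^2\tau)^{(k_++k_-)/2}\tau^{-1}$; in particular $P$ is then a perfect matching. I would bound the standard part of the integrand by $\eps^{k_++k_-}F_p(\bm\Gamma)F_y(\bm\Gamma)F_\xi(\bm\Gamma)$ as before, fix the variable ordering~\eqref{eq:Gamma-ordering}, and apply Lemma~\ref{lem:abstract-holder}: the position and phase-space variables contribute $C^{k_++k_-}$; each immediate-recollision pair $(s_a,p_a)$ contributes $C\log\eps^{-1}$; each momentum variable that is the $\leq'$-maximum of its cell in $P\vee Q$ contributes $O(1)$ through the momentum-conservation delta, and every other momentum variable contributes the annulus bound~\eqref{eq:std-momentum-bd}, namely $Cr^{-1}\min\{|p_0|^{d-1},1\}$; each time variable whose collision is the $\leq'$-minimum of its cell contributes $\tau$, and every other time variable contributes $|p_0|^{-1}r$ by~\eqref{eq:s-cluster-bd}. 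With $|P\vee Q|=(k_++k_-)/2$ this reproduces $C^{k_++k_-}(\eps^2\tau)^{(k_++k_-)/2}$, so the whole task is to gain the extra $\eps^{-1.5}\tau^{-1}$.

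That gain comes from a chosen non-immediate recollision $(a,b)$, $a<b$, in $\omega^+$, which satisfies $|y_a-y_b|\leq 2r$. Iterating the transport constraints along $a\to a+1\to\cdots\to b$ and using $|\log\eps|^{20}$-completeness exactly as in the proof of Lemma~\ref{lem:path-simplification} to absorb the immediate recollisions occurring in the window, the displacement from $y_a$ to $y_b$ along the path is $\sum_{a\le j<b,\,j\notin I^{imm}}s_jp_j$ up to an error $\leq Ck^2\alpha^{20}r$; combined with $|y_a-y_b|\leq2r$ this is a genuine closed-loop constraint not used in the standard count. I would then \emph{re-route} one constraint: localize $y_b$ via the recollision (to within $2r$ of $y_a$, an $O(1)$ integral since $y_a$ precedes $y_b$ in~\eqref{eq:Gamma-ordering}), which frees the transport constraint $|y_b-(y_{b-1}+s_{b-1}p_{b-1})|\leq\alpha r$ to be imposed on a later variable of the window.

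Following the dichotomy in the discussion preceding the lemma, if $|y_a-y_{b-1}|>10\alpha r$ then from $|s_{b-1}p_{b-1}-(y_a-y_{b-1})|\lessapprox r$ and the kinetic-energy constraint $|p_{b-1}|\approx|p_0|$ one gets $s_{b-1}\gtrsim|y_a-y_{b-1}||p_0|^{-1}$ and $p_{b-1}$ confined to a thin cone about $y_a-y_{b-1}$; performing the $(s_{b-1},p_{b-1})$ integral (or, if $p_{b-1}$ is already delta-pinned, just the $s_{b-1}$ integral) against the freed transport constraint and the kinetic-energy constraint replaces the standard contribution of this pair by one smaller by a factor $\lessapprox r|p_0|^{-2}\tau^{-1}$. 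If instead $|y_a-y_{b-1}|\leq10\alpha r$, then $y_{b-1}$ sits within $O(\alpha r)$ of $y_a$ and the transport constraint at step $b-1$ pins $s_{b-2}$ exactly in the manner of~\eqref{eq:s-cluster-bd}, converting one factor of $\tau$ into a factor of $|p_0|^{-1}r$. In either case one gains a factor $\lessapprox\max\{r|p_0|^{-2},|p_0|^{-1}r\}\tau^{-1}$, which for $|p_0|\geq\eps^\theta$ with $\theta$ chosen small (say $\theta<1/4$) is at most $\eps^{-1.5}\tau^{-1}$; absorbing $\alpha=|\log\eps|^{10}$ and the $k^C$ combinatorial factors into $C^{k_++k_-}$ (permissible since $k_\pm\leq k_{max}=O(\kappa^{-1})$) and multiplying by the standard bound yields~\eqref{eq:rec-path-int}.

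I expect the main obstacle to be the bookkeeping when the variable one wants to sacrifice has already been consumed by $P$: if $b-1\in I^{imm}$ — so that $s_{b-1}\lessapprox|p_0|^{-1}r$ is already short by Lemma~\ref{lem:short-rec-time} — or if the collision $b-1$ is the $\leq'$-maximum of its $P$-cell so that $p_{b-1}$ is delta-pinned and cannot be localized further, then the freed transport constraint must be pushed back onto $s_{b-2}$, then onto $s_{b-3}$, and so on toward $a$. One must check that walking back along $[a,b]$ one always reaches a free momentum variable or a cell-minimum time variable before exhausting the window — a short case analysis on how $I^{imm}$ and the cells of $P$ intersect $[a,b]$ — and this, together with the control of the immediate-recollision lengths from Lemma~\ref{lem:short-rec-time}, is exactly what the error $Ck^2\alpha^{20}r$ in the loop constraint is there to absorb.
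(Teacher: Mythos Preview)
Your approach is correct and matches the paper's in spirit: reuse the cluster-bound accounting, then extract one extra localization from a single non-immediate recollision via a near/far dichotomy. The paper's execution is cleaner in two respects. First, it sidesteps your acknowledged obstacle in one stroke by choosing the recollision $(a,a')\in I^{rec}$ with \emph{minimal} $a'$ and taking $b<a'$ to be the last collision before $a'$ not in $I^{imm}$; minimality forces $b\neq\max_{\leq'}S$ for every cell $S$ (otherwise the other element of $S$, being $\leq' b$ and hence in $[k_+]$ with smaller index, would furnish a recollision with second index $<a'$), so the momentum $p_{b-1}$ is never delta-pinned and no walk-back through $[a,b]$ is required. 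Second, rather than a fixed threshold $10\alpha r$ and a joint $(s_{b-1},p_{b-1})$ estimate, the paper splits at $|y_{b-1}-y_b|\lessgtr Kr$ with a free parameter $K$ and modifies exactly one standard constraint in each regime: in the far case the annulus bound on $p_{b-1}$ is replaced by the intersection of the annulus with a ball of radius $|p_0|K^{-1}$, gaining $K^{-1}$ uniformly in $|p_0|$; in the near case the trivial bound $\tau$ on $s_{b-2}$ is replaced by $Kr$. Optimizing at $K=\eps^{-0.5}$ then gives the factor $\eps^{-1.5}\tau^{-1}$ directly, with no appeal to the frequency floor $|p_0|\geq\eps^\theta$. (Incidentally, your far-case gain $r|p_0|^{-2}\tau^{-1}$ should read $r|p_0|^{-1}\tau^{-1}$, which would let your version run for any $\theta<1/2$.)
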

\begin{proof}
Let $(a,a')\in I^{rec}$ be the recollision with minimal $a'$.  Then let $b<a'$ be the first
collision before $a'$ that is not an immediate recollision.  Because $(a,a')$ is
not an immediate recollision, it is clear that $b\not=a'$.  Moreover, because $(a,a')$ is minimal,
the index $b$ is not of the form $\max_{\leq'} S$ for any $S\in Q$.

We bound the indicator function for a recollision as a sum of indicator functions
depending on the distance $|y_{b-1}-y_b|$
\[
\One_{I^{rec}}(\bm{\Gamma})
\leq \One(|y_a-y_b|\leq 2r)\One(|y_{b-1}-y_b|\geq Kr)
+ \One(|y_{b-1}-y_b| \leq Kr),
\]
and use this to split~\eqref{eq:rec-path-int} as a sum of two integrals each corresponding to
a different term.

For the first term we follow the proof of Lemma~\ref{lem:cluster-bd}, with the modification that we set
\[
f'_{\qt{p},b-1} (\bm{\Gamma}_{\leq\qt{p}_{b-1}})
=
\One(|p_{b-1} - |p_0|\frac{y_{b-1}-y_a}{|y_{b-1}-y_a|}| \leq |p_0|K^{-1}).
\]
In this case $p_{b-1}$ is sampled from the intersection of an annulus with thickness $r^{-1}$ and
radius $|p_0|$ and a ball or radius $|p_0|K^{-1}$,
so that
\[
\sup_{\bm{\Gamma}_{<\qt{p}_{b-1}}}
\int f'_{\qt{p},b-1}(\bm{\Gamma}_{\leq\qt{p}_{b-1}})\diff p_{b-1}
\leq C r^{-1} \min\{(K^{-1}|p_0|)^{d-1}, 1\}.
\]
Applying this estimate in the place of the bound~\eqref{eq:std-momentum-bd} along with the rest
of the argument that leads to~\eqref{eq:normal-partition-bd} yields
\begin{equation}
\label{eq:far-rec-bd}
\begin{split}
\sup_{\xi_0^-} \Big|
&\int \braket{\xi_1^-|A|\xi_1^+} \Xi(\Gamma) E_{Q\vee P}(\bm{\Gamma})
\One_{P}(\bm{\Gamma})
\One(|y_a-y_b|\leq 2r)\One(|y_{b-1}-y_b|\geq Kr)
\diff\omega^\pm \diff\xi_1^\pm\diff\xi_0^- \Big| \\
&\qquad\leq C^{k_++k_-}
\eps^{k_++k_-}\tau^{|P|}
(C \min\{|p_0|^{d-2},|p_0|^{-1}\})^{|P|-n_r-1}
(C \min\{K^{1-d}|p_0|)^{d-2},|p_0|^{-1}\})
\end{split}
\end{equation}
The last factor is maximized when $|K^{1-d}||p_0|^{d-1} = |p_0|^{-1}$, which occurs
when $|p_0|= K$.  In this case we obtain a savings of $K^{-1}$ over the
bound~\eqref{eq:normal-partition-bd}, and therefore conclude
\begin{equation}
\label{eq:far-rec-bd}
\begin{split}
\sup_{\xi_0^-} \Big|
&\int \braket{\xi_1^-|A|\xi_1^+} \Xi(\Gamma) E_{Q\vee P}(\bm{\Gamma})
\One_{P}(\bm{\Gamma})
\One(|y_a-y_b|\leq 2r)\One(|y_{b-1}-y_b|\geq Kr)
\diff\omega^\pm \diff\xi_1^\pm\diff\xi_0^- \Big| \\
&\qquad\leq C^{k_++k_-} \eps^{k_++k_-}\tau^{|P|} K^{-1}.
\end{split}
\end{equation}

The second term to deal with is the integral involving $\One(|y_{b-1}-y_b|\leq Kr)$.  In this case,
we apply the bound~\eqref{eq:s-cluster-bd} to get a factor of $Kr$ instead of $\tau$ for the integration
over the variable $s_{b-2}$.  Thus
\begin{equation}
\label{eq:close-rec-bd}
\begin{split}
\sup_{\xi_0^-} \Big|
&\int \braket{\xi_1^-|A|\xi_1^+} \Xi(\Gamma) E_{Q\vee P}(\bm{\Gamma})
\One_{P}(\bm{\Gamma})
\One(|y_{b-1}-y_b|\leq Kr)
\diff\omega^\pm \diff\xi_1^\pm\diff\xi_0^- \Big| \\
&\qquad\leq C^{k_++k_-} \eps^{k_++k_-}\tau^{|P|} (Kr \tau^{-1}).
\end{split}
\end{equation}
Choosing $K = \eps^{-0.5}$ yields the desired result.
\end{proof}

\subsubsection{The tube incidence case}
The final remaining case is that $\bm{\Gamma}$ does not have a recollision but does have a tube incidence.
Suppose that the first tube incidence occurs at $(a,b)$.  Then combining the tube incidence
constraint~\eqref{eq:tube-inc-def} with the compatibility condition
$|y_b-y_{b-1}-s_{b-1}p_{b-1}|\leq 2r$, we conclude that there exists $s\in\Real$ such that
\[
|s_{b-1}p_{b-1} + (y_{b-1} - y_a) + sp_a| \leq C\alpha^{20}r.
\]
In other words, the vector $s_{b-1}p_{b-1}$ lies on the tube with thickness $\alpha^{20}r$, axis
$p_a$, and passing through $y_{b-1}-y_a$.  If $p_a$ is transverse to $p_{b-1}$ then this imposes
a strong constraint on the time variable $s_{b-1}$.  On the other hand if $p_a$ is parallel with
$p_{b-1}$ then this imposes a constraint on the momentum variable $p_{b-1}$.
Both cases are depicted in Figure~\ref{fig:tube-explanation}

\begin{figure}
\includegraphics{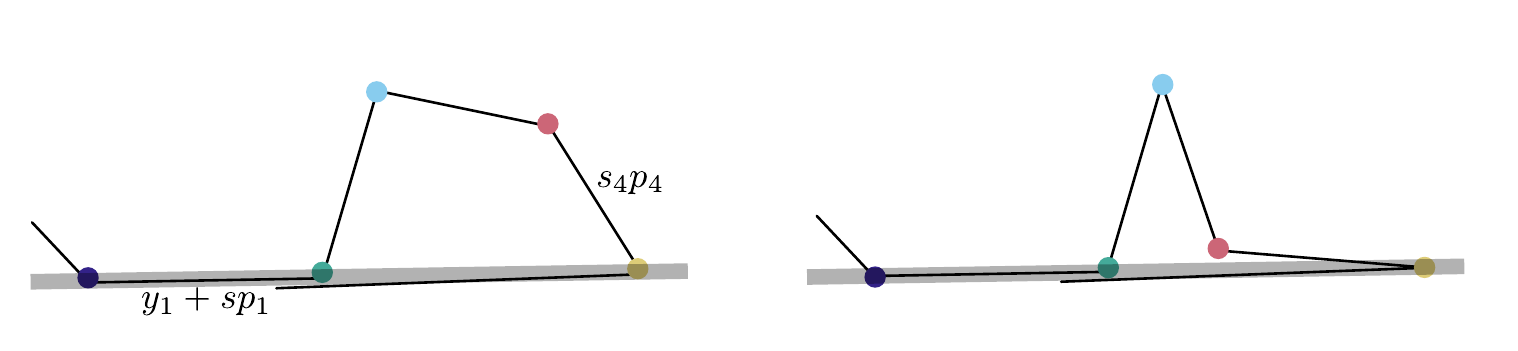}
\caption{Paths with a tube incidence at $(1,5)$, so that $|y_5-(y_1+sp_1)|\lessapprox r$ for some $s\in\Real$.
On the left, an example in which $p_4$ is transverse to $p_1$.  In this case the time variable
$s_4$ is constrained so that $y_5$ can lie on the gray tube.
On the right, an example in which $p_4$ is approximately parallel to $p_1$.  In this case $s_4$ is much less
constrained, but $p_4$ is much more constrained.  In either case there is a gain of a factor of (at least)
$\eps^{1/2}$.}
\label{fig:tube-explanation}
\end{figure}

\begin{lemma}[The tube incidence bound]
Let $A$ be an admissible operator, let $k_+,k_-\in[k_{max}]$,
let $P,Q\in\mcal{P}([k_+]\sqcup[k_-])$, and let $I^{tube}\not=\noset$ be a nonempty set of
tube incidences in $[k_+]$, and suppose moreover that $I^{rec} = \noset$ so that there are
no recollisions.
\begin{equation}
\label{eq:rec-path-int}
\sup_{\xi_0^-} \Big|
\int \braket{\xi_1^-|A|\xi_1^+} \Xi(\Gamma) E_{Q\vee P}(\bm{\Gamma})
\One_P(\bm{\Gamma})
\One_{I^{rec}}(\bm{\Gamma})
\diff\omega^\pm \diff\xi_1^\pm\diff\xi_0^- \Big|
\leq C^{k_++k_-} (\eps^2\tau)^{(k_++k_-)/2} (\eps^{-1.5}\tau^{-1}).
\end{equation}
\end{lemma}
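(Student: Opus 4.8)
The plan is to run the argument of the recollision bound essentially verbatim, replacing the recollision constraint with the constraint imposed by the tube incidence. Fix the data $(P,I^{rec},I^{tube})$ with $I^{rec}=\noset$ and $I^{tube}\neq\noset$, and let $(a,b)$ be the tube incidence in $\omega^+$ with $b$ minimal. Since $I^{rec}=\noset$ and the cluster case has already been disposed of, $P(\bm{\Gamma})$ is a perfect matching, hence a generalized ladder by Lemma~\ref{lem:is-ladder}; thus the standard-constraint decomposition of the integrand as a product $F_pF_yF_\xi$ built in this section applies without change and, on its own, already yields the default ladder bound, i.e.\ the right-hand side of~\eqref{eq:normal-partition-bd} with $|P|=(k_++k_-)/2$, which is $\leq C^{k_++k_-}(\eps^2\tau)^{(k_++k_-)/2}$. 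It remains to squeeze one extra factor of size $\eps^{-1.5}\tau^{-1}$ out of $\One_{I^{tube}}(\bm{\Gamma})$. Exactly as in the recollision proof, if $b-1$ itself belongs to $I^{imm}(\omega^+)$ we first step back to the last genuine collision before $b$; Lemma~\ref{lem:path-simplification} carries the straight-line relation through the intervening immediate recollisions, and the minimality of $b$ ensures this collision is not the $\leq'$-maximum of any cell of $Q$, so the variables we are about to re-use are not already spent.

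First I would extract the geometric constraint. Substituting the tube condition $|y_a+sp_a-y_b|\leq 100k^4\alpha^{20}r$ (valid for some $s\in\Real$) into the transport relation $|y_b-(y_{b-1}+s_{b-1}p_{b-1})|\leq\alpha r$ gives
\[
\bigl|\, s_{b-1}p_{b-1}-(y_a-y_{b-1})-sp_a \,\bigr|\leq\rho,\qquad \rho:=Ck^4\alpha^{20}r,
\]
so, with $y_a$, $y_{b-1}$, $p_a$ already fixed, the vector $s_{b-1}p_{b-1}$ is pinned to the $\rho$-neighbourhood $\mathcal{T}$ of the line through $y_a-y_{b-1}$ in direction $p_a$. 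Putting $K:=\eps^{-1/2}$, I split $\One_{I^{tube}}(\bm{\Gamma})$ according to the angle $\theta$ between $p_{b-1}$ and the axis $\pm p_a$,
\[
\One_{I^{tube}}(\bm{\Gamma})\;\leq\;\One_{\mathcal{T}}\,\One(\sin\theta\geq K^{-1})\;+\;\One_{\mathcal{T}}\,\One(\sin\theta<K^{-1}),
\]
and bound the two resulting integrals separately using Lemma~\ref{lem:abstract-holder}.

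In the transverse branch the image of the ray $s_{b-1}\mapsto s_{b-1}p_{b-1}$ moves away from the tube axis at speed $|p_{b-1}|\sin\theta\gtrsim|p_0|K^{-1}$ (using the energy constraint $\bigl||p_{b-1}|-|p_0|\bigr|\lesssim\alpha r^{-1}$), so membership in $\mathcal{T}$ confines $s_{b-1}$ to an interval of length $\lesssim\rho K|p_0|^{-1}$; carrying out the count of Lemma~\ref{lem:cluster-bd} with this interval replacing the default factor $\tau$ for the $s_{b-1}$-integral trades one $\tau$ for $\rho K|p_0|^{-1}$, a net gain of $\rho K(|p_0|\tau)^{-1}$. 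In the near-parallel branch $p_{b-1}$ is forced into the double cone of half-angle $\lesssim K^{-1}$ about $\pm p_a$, and intersecting this cone with the energy-conservation annulus (radius $|p_0|$, thickness $\lesssim\alpha r^{-1}$) shrinks the $p_{b-1}$-integral by a factor $\lesssim K^{-1}$ relative to the default momentum factor $r^{-1}\min\{|p_0|^{d-1},1\}$, after optimizing over $|p_0|$ exactly as in the recollision bound. With $K=\eps^{-1/2}$, $\rho\sim\eps^{-1}|\log\eps|^{C}$ and $\tau=\eps^{-2+\kappa/2}$, both gains are at most $\eps^{1/2-\kappa/2}=\eps^{-1.5}\tau^{-1}$, up to polylogarithms and (in the regime $|p_0|\sim\eps^{\theta}$) a factor $|p_0|^{-\theta}$ that is harmless because $\theta=\theta(d)$ is taken small; these, together with the combinatorial constant $C^{k_++k_-}$, are absorbed into the constant in the statement, and the claimed bound follows.

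The step I expect to be the main obstacle is the case bookkeeping around the index $b$ — the point glossed over as a ``minor difficulty'' in the recollision proof. One has to (i) check that after stepping back past immediate recollisions the index $b-1$ genuinely carries a free time $s_{b-1}$ (default $\tau$) and a free momentum $p_{b-1}$ (default annulus), rather than variables already pinned by the generalized-ladder matching — this is where Lemma~\ref{lem:path-simplification} and the minimality of $b$ are used — and (ii) deal with the degenerate sub-case in which $p_{b-1}$ is already localized by its own collision cluster, so the near-parallel branch cannot improve the momentum integral; there one transfers the gain to the preceding time variable $s_{b-2}$, picking up a factor $\lesssim Kr\tau^{-1}=\eps^{1/2-\kappa/2}$ just as in the ``close recollision'' branch. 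Once these cases are handled, the rest is the same Fubini-plus-triangle-inequality bookkeeping (Lemma~\ref{lem:abstract-holder}) used throughout the section.
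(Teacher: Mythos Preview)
Your approach is correct and essentially identical to the paper's: both arguments split according to the angle between $p_a$ and the momentum $p_b$ (paper) resp.\ $p_{b-1}$ (you), gain on the time integral in the transverse branch, and gain on the momentum annulus integral in the near-parallel branch. The only substantive difference is the choice of threshold --- the paper sets the angle cutoff at $\delta=\eps^{1/d}$ and records the parallel gain as $\delta^{d-1}=\eps^{(d-1)/d}$, whereas you set $K^{-1}=\eps^{1/2}$; for $d=2$ these coincide, and for $d\geq 3$ both choices clear the required $\eps^{1/2-\kappa/2}=\eps^{-1.5}\tau^{-1}$ with room to spare. Your explicit discussion of the ``step back past immediate recollisions'' and of the possibility that the relevant momentum is already pinned by a cluster is more detailed than the paper's (which simply chooses $b$ to be the last non-immediate collision before the tube endpoint and proceeds), but the underlying logic is the same.
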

\begin{proof}
Let $(a,a') \in I^{tube}$ be a tube incidence in $\omega^+$.
Let $b<a'$ be the last collision before $a'$ that does not belong to an immediate recollision,
so that $|y_{a'} - (y_b + (t_{a'}-t_b)p_b)| \leq C\alpha^{10} r$
We decompose the indicator function for the tube incidence according to the angle between $p_b$ and
$p_a$,
\begin{align*}
\One_{I^{tube}}(\bm{\Gamma})
\leq
&\One(|(y_a + s p_a) - (y_b + (t_{a'}-t_b) p_{b'})|\leq 100k^4\alpha^{20}r \text{ for some } s\in\Real)
\One(\theta(p_a,p_b) \geq \delta) \\
&+ \One(\theta(p_a,p_b) \leq \delta),
\end{align*}
where $\theta(u,v) := \min\{|u/|u| - v/|v||, |u/|u|+v/|v||\}$, and $\delta$ is a small parameter that we
will optimize shortly.

This splits the integral as a sum of two terms.  In the first term, we use the fact that the
set of times for which the bound
\[
|(y_a + s p_a) - (y_b + (t_{a'}-t_b) p_{b'})|\leq 100k^4\alpha^{20}r
\]
holds is an interval with width at most $C\alpha^{20}r (|p_0| \theta(p_a,p_b))^{-1}$,
and so we obtain this factor instead of $\tau$ in the integration over the $s_{a'-1}$ variable.

In the second term, we gain a factor of $\delta^{d-1}$ as the integration over the annulus
contributes $r^{-1}\delta^{d-1} |p_0|^{d-1}$ rather than $r^{-1}|p_0|^{d-1}$.
Therefore
\begin{equation}
\label{eq:transverse-bd}
\begin{split}
\sup_{\xi_0^-} \Big|
&\int \braket{\xi_1^-|A|\xi_1^+} \Xi(\Gamma) E_{Q\vee P}(\bm{\Gamma})
\One_{P}(\bm{\Gamma})
\One(|y_a-y_b|\leq 2r)\One(|y_{b-1}-y_b|\geq Kr)
\diff\omega^\pm \diff\xi_1^\pm\diff\xi_0^- \Big| \\
&\qquad\leq C^{k_++k_-}
\eps^{k_++k_-}\tau^{|P|}
\min\{
[C\alpha^{20}r\tau^{-1} (|p_0| \delta)^{-1} + \delta^{d-1})].
\end{split}
\end{equation}
Choosing $\delta = \eps^{1/d}$ produces an extra factor of $\eps^{d-1/d} \leq \eps^{1/2}$, as desired.
\end{proof}

\section{Iterating the path integral}
\label{sec:extended-paths}
In this section we set up the path integral we use to reach diffusive
times $N\tau = \eps^{-2-9\kappa/10}$.  In this section we focus on the main
term
\[
\wtild{\Evol}_{N\tau}[A] = \Expec (U_{\tau,\sigma}^N)^* A U_{\tau,\sigma}^N.
\]
Each of the operators $U_\tau$ is expanded as an integral over path segments.
Together these path segments form an elongated path that we call an \textit{extended path}.
An extended path is a tuple of path segments interspered with phase space points $\xi_\ell$,
as follows
\[
\Gamma = (\xi_0,\omega_1,\xi_1,\xi_2,\omega_2,\xi_3,
\cdots, \xi_{2\ell-2},\omega_\ell,\xi_{2\ell-1},\cdots,
\xi_{2N-2},\omega_N,\xi_{2N-1}).
\]
We sometimes also reorder $\Gamma$ as a tuple of phase space points $\bm{\xi}$ and
path segments $\bm{\omega}$, so
$\Gamma = (\bm{\xi},\bm{\omega})\in (\PhaseSpace)^{2N} \times \Omega(\tau)^N$.  We write
$\Omega_{ext}^N$ to denote this space of extended paths.

The extended paths that will actually contribute to our integral satisfy additional compaibility
conditions that ensure that the phase space endpoints of the segments match up nicely.
We define the set $\Omega_{\alpha,S}^N\subset\Omega_{ext}^N$ to be
\[
\Omega_{\alpha,S}^N :=
\{\Gamma = (\bm{\xi},\bm{\omega})\in (\PhaseSpace)^{2N}\times \Omega^{N} \mid
\omega_\ell \in \Omega_{\alpha}(\tau,S;\xi_{2\ell-2},\xi_{2\ell-1}) \text{ and }
d_r(\xi_{2\ell-1},\xi_{2\ell})<N^2\}.
\]
Note that the superscript $N$ in $\Omega_{\alpha,S}^N$ is just notation -- the
set is not itself a Cartesian product of $N$ copies of $\Omega_{\alpha, S}$.

We will be integrating over paths, which means we need to specify a measure.  Give $\Omega_k$ the measure
$d_k\omega$ which is the Lebesgue measure on $\triangle_{k+1}(\tau) \times (\Real^d)^{k+1} \times (\Real^d)^k$.
Then we assign $\Omega$ the measure formed on the union, so that given a continuous function
$f\in C^0(\Omega)$,
\[
\int_{\Omega} f(\omega)\diff\omega :=
\sum_{k=0}^{k_{max}} \int_{\Omega_k} f(\omega) \diff_k\omega.
\]
The measure $\diff\omega$ allows us to define the measure $\diff\Gamma$, informally written
\[
\diff\Gamma = \prod_{\ell=0}^{2N-1}\diff\xi_\ell \prod_{\ell=1}^N\diff\omega_\ell.
\]
Note that each path $\Gamma\in\Omega_{\alpha,S}^N$ specifies a collision number sequence
$\mbf{k}(\Gamma) = (k_1,\cdots,k_N)$, where $k_\ell$ is the number of collisions in the segment
$\omega_\ell$.  Thus integration over $\Omega_{\alpha,S}^N$ implicitly includes a sum over $\mbf{k}$.

We define a smooth cutoff
$\Xi$ on the space of extended paths that has support in $\Omega_{\alpha,S}$, defined by
\[
\Xi(\Gamma) = \prod_{\ell=1}^N \chi_{\alpha,S}(\omega_\ell,\xi_{2\ell-2},\xi_{2\ell-1}).
\]
The collisions in a path $\Gamma\in\Omega_{ext}^N$ are indexed by pairs $(\ell,j)$ indicating the segment
index $\ell$ and the collision index $j$ within the segment.
So that we can index all of the collisions at once,
we use the collision index set
\[
K(\Gamma) = \{(\ell,j) \mid \ell\in[1,N] \text{ and }j\in[1,k_j]\}.
\]
Sometimes we also need $K_0(\Gamma)$ which is
\[
K_0(\Gamma) = \{(\ell,j) \mid \ell\in[1,N] \text{ and }j\in[0,k_j]\}.
\]
We will use letters like $a,b,c$ to refer to the indices in $K(\Gamma)$.
Also, we impose a
ordering on $K$ using the lexicographic order, and write $a+_K1$ to mean the successor of the index
$a$ in the collision set $K(\Gamma)$.
The index set $K_{all}$ refers to the set of all possible collision indices
\[
K_{all} = [1,N]\times[1,k_{max}] = \{(\ell,j) \mid \ell\in[1,N], j\in[1,k_{max}]\}.
\]
We define the function $\ell:K_{all}\to[1,N]$ which sends $a=(\ell,j)$ to $\ell(a) = \ell$,
which we also call the \textit{segment index} of the collision index $a$.

The path $\Gamma$ defines an operator $O_\Gamma$
\begin{equation}
\label{eq:OGamma-def}
O_\Gamma =
\ket{\xi_0}\bra{\xi_{2N-1}}
\prod_{\ell=1}^N \braket{\xi_{2\ell-2}|O_{\omega_\ell}|\xi_{2\ell-1}}
\prod_{\ell=1}^{N-1} \braket{\xi_{2\ell-1}|\xi_{2\ell}}.
\end{equation}

To write down an expression of $\Evol_{N\tau}[A]$
we will integrate over pairs of paths $(\Gamma^+,\Gamma^-)$, and will therefore
need to index into the path variables of both paths simultaneously.  We therefore define
the signed index sets $K^+(\Gamma)$ and $K^-(\Gamma)$, with
\[
K^\pm(\Gamma) = \{(\ell,j,\pm) \mid (\ell,j)\in K(\Gamma)\}.
\]
This allows us to define the doubled index set
\[
\mbf{K}(\Gamma^+,\Gamma^-)
= K^+(\Gamma^+)\sqcup K^-(\Gamma^-)
\]
Analagously, we write $\mbf{K}_{all} = K_{all}^+\sqcup K_{all}^-$.
The set $\mbf{K}_{all}$ has a partial ordering induced by the ordering on $K_{all}$, with
$(a,+)$ being incomparable to $(a',-)$ for any $a,a'\in K_{all}$.
Moreover, we write $\sign(a)$ to be the
``sign'' of an index in $\mbf{K}$ (so $\sign(a)=+1$ if $a\in K^+_{all}$ and $\sign(a)=-1$
if $a\in K^-_{all}$).  Finally, given an index $a = (\ell,\pm,j)$, we write
$a^*$ for the index of opposite sign, so $a^* = (\ell,\mp,j)$.

These signed collision indices will be used to index into doubled paths,
$\bm{\Gamma} = (\Gamma^+,\Gamma^-)$.  Now there is an abuse of notation we will sometimes use, which is
that we will sometimes go back and forth between signed and unsigned indices when the path $\Gamma$ is
fixed (either being $\Gamma^+$ or $\Gamma^-$).

We can now write down an integral expression for $\wtild{\Evol}_{N\tau}[A]$,
\[
\wtild{\Evol}_{N\tau}[A] =
\int_{\Omega_{\alpha,S}^N\times\Omega_{\alpha,S}}
\Xi(\Gamma^+) \Xi(\Gamma^-)
\Expec O_{\Gamma^-}^* A O_{\Gamma^+}\diff\bm{\Gamma},
\]
with an error on the order $O(\eps^{0.1}\|A\|_{op})$ in operator norm.  We further restrict to pairs
of paths $(\Gamma^+,\Gamma^-)$ that produce a nonnegligible contribution to
$\Expec O_{\Gamma^-}^*AO_{\Gamma^+}$.  The naive bound on the operator norm coming from applying
the triangle inequality and estimating volumes is on the order $O(\eps^{-CN})$, and therefore we
will have to consider any pairs of paths $(\Gamma^+,\Gamma^-)$ satisfying
\[
\|\Expec O_{\Gamma^-}^*AO_{\Gamma^+}\|_{op} \geq \eps^{CN}.
\]
Expanding the operator $O_{\Gamma}$ using the definition~\eqref{eq:OGamma-def}
and~\eqref{eq:path-op-def}, and simply bounding all deterministic quantities with an appropriate
factor of $\eps^{-C}$, we obtain a simple bound
\begin{equation}
\begin{split}
\|\Expec O_{\Gamma^-}^*AO_{\Gamma^+}\|_{op}
&\leq
|\braket{\xi_0^+|A|\xi_0^-}|
 \Expec \prod_{\ell=1}^N
\braket{\xi_{2\ell-2}^+|O_{\omega_\ell^+}|\xi_{2\ell-1}^+}^*
\braket{\xi_{2\ell-2}^-|O_{\omega_\ell^-}|\xi_{2\ell-1}^-} \\
&\leq
\eps^{-CN} |\braket{\xi_0^+|A|\xi_0^-}|
\Expec \prod_{a\in K(\Gamma^+)} \Ft{V_{y_a^+}}(p_a^+-p_{a-1}^+)^*
\prod_{a\in K(\Gamma^-)} \Ft{V_{y_a^-}}(p_a^--p_{a-1}^-).
\end{split}
\end{equation}
First we rewrite the expression above using signed indices $a$, defining the
impulse variables $q_a = p_a-p_{a-1}$ when $a\in K^+(\Gamma^+)$ and
$q_a = p_{a-1}-p_a$ for $a\in K^-(\Gamma^-)$.  Then, because the potential $V$ is real
and therefore $\Ft{V}(q)^* = \Ft{V}(-q)$, we can write more succinctly
\[
\Expec \prod_{a\in K(\Gamma^+)} \Ft{V_{y_a^+}}(p_a^+-p_{a-1}^+)^*
\prod_{a\in K(\Gamma^-)} \Ft{V_{y_a^-}}(p_a^--p_{a-1}^-).
= \Expec \prod_{a\in K(\bm{\Gamma})} \Ft{V_{y_a}}(q_a).
\]
We can split the expectation using the geometry of the points $y_a$.
More precisely, given $\bm{\Gamma}$ we define a graph $G(\bm{\Gamma})$
on the vertex $K(\bm{\Gamma})$ in which $(a,b)$ is an edge of $G(\bm{\Gamma})$
when $|y_a-y_b|\leq 2r$.  The graph $G(\bm{\Gamma})$ then defines a partition
$P(\bm{\Gamma})\in\mcal{P}(K(\bm{\Gamma}))$ which is the partition of collisions
into the connected components of the graph.  Since $V_{y_a}$ is independent of $V_{y_b}$
when $|y_a-y_b|\geq 2r$, we can split the expectation using the partition $P(\bm{\Gamma})$,
\begin{equation}
\label{eq:partition-potential}
\Expec \prod_{a\in K(\bm{\Gamma})} \Ft{V_{y_a}}(q_a).
= \prod_{S\in P(\bm{\Gamma})} \Expec \prod_{a\in S} \Ft{V_{y_a}}(q_a).
\end{equation}
The moment bound on the Fourier coefficients $\Ft{V_{y_a}}$ in particular implies
that for any sequence $(y_j,q_j)\in\PhaseSpace$,
\[
\big| \Expec \prod_{j=1}^k \Ft{V_{y_j}}(q_j)\big|
\leq \eps^{-Ck} k^{Ck} \exp(-c (k^{-1} r |\sum_j q_j|)^{0.99}).
\]
This crude estimate places already a significant constraint on the geometry of
the paths $\bm{\Gamma}$ that contribute to the evolution operator $\Evol_{N\tau}$.
We say that a pair of paths $\bm{\Gamma})$ is $\beta$-complete if for every $S\in P(\bm{\Gamma})$,
$|\sum_{j\in S} q_j|\leq \beta r$.

The calculation above shows the following
\begin{lemma}
\label{lem:complete-extended-path}
For a band-limited operator $A$ and $\beta=N^4$,
\[
\|
\int_{\Omega_{\alpha,S}^N\times\Omega_{\alpha,S}^N}
\Xi(\Gamma^+) \Xi(\Gamma^-)
(1 - \One_{\bm{\Gamma} \textrm{ is }\beta-\text{complete}})
\Expec O_{\Gamma^-}^* A O_{\Gamma^+}\diff\bm{\Gamma}\|_{op}
\leq \eps^{100}\|A\|_{op}.
\]
\end{lemma}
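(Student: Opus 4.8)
The plan is to combine the per‑pair operator‑norm bound derived just above the statement with the crude moment bound for products of $\Ft{V}$ and a Schur‑test argument, in direct analogy with the proof of Lemma~\ref{lem:complete-collisions}, but keeping careful track of the dependence on $N$. The starting point is the inequality established in the preceding calculation,
\[
\|\Expec O_{\Gamma^-}^* A O_{\Gamma^+}\|_{op}
\le \eps^{-CN}\,|\braket{\xi_0^+|A|\xi_0^-}|\,
\prod_{S\in P(\bm\Gamma)}\Big|\Expec\prod_{a\in S}\Ft{V_{y_a}}(q_a)\Big|,
\]
valid for every pair $\bm\Gamma=(\Gamma^+,\Gamma^-)$ in the support of $\Xi(\Gamma^+)\Xi(\Gamma^-)$. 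Into each cluster factor I would insert the moment bound $\big|\Expec\prod_{j=1}^k\Ft{V_{y_j}}(q_j)\big|\le \eps^{-Ck}k^{Ck}\exp\!\big(-c(k^{-1}r|\sum_j q_j|)^{0.99}\big)$. Every collision index lies in one of the $N$ segments, each carrying at most $k_{max}$ collisions, so $\sum_{S}|S|\le Nk_{max}$ and the product of the harmless prefactors $\prod_S \eps^{-C|S|}|S|^{C|S|}$ is at most $\eps^{-C'N}$; here one uses that $k_{max}=O_\kappa(1)$ is $\eps$‑independent and $N=\lfloor\eps^{-\kappa}\rfloor$ is a fixed power of $\eps^{-1}$, so that $(Nk_{max})^{CNk_{max}}=\eps^{-O(N)}$.

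Now suppose $\bm\Gamma$ fails to be $\beta$‑complete with $\beta=N^4$. Then some cluster $S_0\in P(\bm\Gamma)$ has $\big|\sum_{a\in S_0}q_a\big|$ above the completeness threshold, and since $|S_0|\le Nk_{max}$ and $r\ge 1$ the corresponding exponential factor in the moment bound is at most $\exp\!\big(-c(\beta/(Nk_{max}))^{0.99}\big)\le \exp(-cN^{2.9})$ after shrinking $c$ and absorbing $k_{max}$. Combining this with the two displays above, for every non‑$\beta$‑complete $\bm\Gamma$ in the support of the cutoffs one obtains
\[
\|\Expec O_{\Gamma^-}^* A O_{\Gamma^+}\|_{op}\,\One_{\bm\Gamma\text{ not }\beta\text{-complete}}
\le \eps^{-CN}\exp(-cN^{2.9})\,|\braket{\xi_0^+|A|\xi_0^-}|;
\]
in fact the same estimate controls the full deterministic‑plus‑random amplitude appearing in the integrand (not merely the operator norm of the rank‑one piece), since the plane‑wave and wavepacket overlaps $\braket{\xi|p}$, $\braket{\xi_{2\ell-1}|\xi_{2\ell}}$ contribute only a further $\eps^{-CN}$.

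The conclusion then follows from the Schur test applied to the external indices $\xi_{2N-1}^-$ and $\xi_{2N-1}^+$, exactly as in Lemma~\ref{lem:complete-collisions}. Writing the operator in question as $\int c(\xi_{2N-1}^-,\xi_{2N-1}^+)\,\ket{\xi_{2N-1}^-}\bra{\xi_{2N-1}^+}\,d\xi_{2N-1}^-\,d\xi_{2N-1}^+$ and using the Bessel‑type bound $\int\ket\xi\bra\xi\,d\xi\lesssim\Id$ from Appendix~\ref{sec:wp-quantization}, it suffices to bound $\sup_{\xi_{2N-1}^+}\int|c|\,d\xi_{2N-1}^-$ and the symmetric quantity. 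For fixed $\xi_{2N-1}^+$ the compatibility cutoffs in $\Xi$ pin down $\Gamma^+$, and hence $\xi_0^+$, up to a region of volume at most $\eps^{-CN}$ (the only unbounded‑looking freedom is the overall position drift, confined to a ball of radius $N\tau=\eps^{-2-9\kappa/10}$); the support and bandwidth of $A$ then confine $\xi_0^-$ to a bounded $d_r$‑neighbourhood of $\xi_0^+$, after which $\Gamma^-$ and $\xi_{2N-1}^-$ are again confined to volume $\eps^{-CN}$. Multiplying the per‑pair bound by this $\eps^{-CN}$ volume factor yields an operator‑norm bound $\eps^{-CN}\exp(-cN^{2.9})\|A\|_{op}$. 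Since $N=\lfloor\eps^{-\kappa}\rfloor$, we have $N\log\eps^{-1}=o(\eps^{-2.9\kappa})=o(N^{2.9})$, so for $\eps$ small this is $\le \exp(-\tfrac{c}{2}N^{2.9})\|A\|_{op}\le \eps^{100}\|A\|_{op}$, as claimed.

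There is no serious obstacle; the only point requiring care is balancing the combinatorial prefactors $k^{Ck}$ and the number of clusters against the decay, which works precisely because $k_{max}$ is an $\eps$‑independent constant and the completeness threshold $\beta=N^4$ has a generous margin over the Schur volume cost $\eps^{-CN}=\eps^{-O(\eps^{-\kappa})}$. The only mildly non‑routine step is formulating the Schur test for the overcomplete wavepacket system, handled by the frame inequality recalled above.
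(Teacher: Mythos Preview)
Your proposal is correct and follows essentially the same route as the paper's proof: combine the crude moment bound on cluster expectations (yielding a stretched--exponential gain on any cluster violating completeness) with a Schur test in the wavepacket frame, and observe that the resulting decay $\exp(-cN^{2.9})$ crushes the $\eps^{-CN}$ volume factors. The paper's own proof is a two--sentence version of exactly this argument, recording the per--pair estimate $|\braket{\xi_{2N-1}^+|\Expec O_{\Gamma^-}^*AO_{\Gamma^+}|\xi_{2N-1}^-}|\le \eps^{-CN}|\braket{\xi_0^+|A|\xi_0^-}|\exp(-cN^2)$ and invoking the Schur test; your more careful bookkeeping of the exponent and of the combinatorial prefactors $k^{Ck}$ is sound (note only that $|S_0|\le 2Nk_{max}$, since the cluster lives in $K(\bm\Gamma)=K(\Gamma^+)\sqcup K(\Gamma^-)$, but this factor of $2$ is harmless).
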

\begin{proof}
The bound follows from an application of the Schur test in the wavepacket basis, along with the
estimate
\[
|\Braket{\xi_{2N-1}^+ | \Expec O_{\Gamma^-}^*AO_{\Gamma^+} |\xi_{2N-1}^-}|
\leq
\eps^{-CN} |\Braket{\xi_0^+|A|\xi_0^-}| \exp(-c N^2)
\]
for paths $(\Gamma^+,\Gamma^-)$ which are not $N^4$-complete.
\end{proof}
For convenience we simply write $\Gamma^+\sim\Gamma^-$ when $\bm{\Gamma}$ is $N^4$-complete.

In the next section we explore in more detail the geometric features of paths $\bm{\Gamma}$
which are $N^4$-complete.

\section{Geometry and combinatorics of extended paths}
\label{sec:path-combo}
The point of this section is to more precisely specify the relationship between
the geometry of paths (as described by their ``events'') and the combinatorial properties
of the collision partition that the paths induce.  Our end goal is to
derive a diagrammatic expansion for $\Evol_{N\tau}$ which is stratified by a measure of
combinatorial complexity that matches geometric complexity.  The geometric complexity will allow
us to bound complicated diagrams by additional factors of $\eps^c$, which is needed to reach diffusive
times.

\subsection{From extended paths to concatenated paths}
For the arguments in this section it will be easier to work with a single ``concatenated path''
rather than an extended path.  We will therefore define a map
\begin{equation*}
\bm{\omega} = (\omega_1,\omega_2,\cdots,\omega_N) \mapsto (\mbf{S},\mbf{P},\mbf{Y})
\end{equation*}
that sends sequences of path segments to a single path.  Given such a sequence
$\bm{\omega}$, let $K_0(\bm{\omega})$ be the collision indexing set
\[
K_0(\bm{\omega}) = \{(\ell,j) \mid \ell\in [1,N], j\in [0,k_\ell]\}
\]
and $K(\bm{\omega}) = K_0(\bm{\omega}) \setminus ([1,N]\times\{0\})$.
Then, given $\bm{\omega}$ we define $\mbf{S}$ by
\begin{equation*}
S_j := \sum_{\substack{b\in K_0(\bm{\omega}) \\a_j\leq b< a_{j+1}}} s_b,
\end{equation*}
where $a_j$ is the $j$-th element of $K(\bm{\omega})$, with the convention that $a_0=0$ and
$a_{(\sum k_j)+1}= \infty$  are minimal and maximal elements, respectively.  An alternative
definition of $S_j$ starts by defining the collision times $T_{\ell,j}$ for each collision
in $\omega_\ell$, by
\[
T_{\ell,j} = \ell\tau + \sum_{j'<j} s_{\ell,j'}.
\]
Then $S_a$ satisfies $S_a = T_{a+1}-T_a$ (with the convention that $T_{1,0}=0$ and $T_{N,k_N} =N\tau$).
We can also assign the momentum variables for each
$a\in K_{\mbf{k}}$ by setting $P_a = p_a$, and also $P_{1,0} = p_{1,0}$.  Finally, $Y_a = y_a$.

The following lemma shows that the concatenated path $(\mbf{S},\mbf{P},\mbf{Y})$
approximately satisfies the compatibility conditions and approximate conservation of energy
expected of a single path segment.

\begin{lemma}[Path concatenation lemma]
\label{lem:path-concatenation}
If $\Gamma =(\bm{\xi},\bm{\omega}) \in \Omega_{\alpha,S}^N$ is an extended
path then the
concatenated path $(\mbf{S},\mbf{P},\mbf{Y})$ formed from the tuple $\bm{\omega}$
satisfies the compatibility condition
\begin{equation}
|Y_{a+1} - (Y_a + S_aP_a)| \leq C\alpha (r + S_a^2\tau^{-1}r^{-1} + S_a\tau^{-1} r).
\end{equation}
for each $a\in K(\Gamma)$, as well as the conservation of kinetic energy
conditions
\begin{equation}
||P_a|^2/2 -|P_{a'}|^2/2|| \leq \alpha
\max\{S_a^{-1}, S_{a'}^{-1}, |P_a| r^{-1}\}
+ \alpha |T_a - T_{a'}| \tau^{-1}r^{-1}.
\end{equation}
for $a,a'\in K(\Gamma)$.
\end{lemma}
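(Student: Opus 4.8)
The plan is to prove both inequalities by writing the elapsed time between the two collisions in question as a concatenation of the free stretches and segment-to-segment transitions it passes through, and then telescoping. Fix $a\in K(\Gamma)$, write $\ell=\ell(a)$ for its segment, and let $a+1$ denote the collision following $a$, with segment $\ell'=\ell(a+1)\ge\ell$. If $\ell'=\ell$ then $a$ and $a+1$ are consecutive collisions inside $\omega_\ell$, so $S_a=s_{\ell,j}$ and $P_a=p_{\ell,j}$ and the compatibility bound follows at once from the within-segment condition $|y_{j+1}-(y_j+s_jp_j)|\le\alpha r$ built into the definition of $\Omega^N_{\alpha,S}$ (indeed it is better than claimed in that case). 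The substantive case is $\ell'>\ell$: here $a=(\ell,k_\ell)$ is the last collision of $\omega_\ell$, $a+1=(\ell',1)$ is the first collision of $\omega_{\ell'}$, every segment strictly between $\ell$ and $\ell'$ is empty, $S_a=s_{\ell,k_\ell}+\sum_{\ell<\ell''<\ell'}s_{\ell'',0}+s_{\ell',0}=T_{a+1}-T_a$, and (since each intermediate segment has duration $\tau$) the number $m:=\ell'-\ell$ of segment boundaries crossed satisfies $m\le S_a\tau^{-1}+1$.

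For the compatibility estimate I would telescope $Y_{a+1}-Y_a=y_{\ell',1}-y_{\ell,k_\ell}$ along: (i) the last free stretch $s_{\ell,k_\ell}p_{\ell,k_\ell}$ of $\omega_\ell$, matched to $\xi_{2\ell-1}$ up to $\alpha r$; (ii) for each intermediate empty segment, its single free stretch $s_{\ell'',0}p_{\ell'',0}$, matched to its two $\xi$-endpoints by the $k=0$ form of the compatibility constraint; (iii) the first free stretch $s_{\ell',0}p_{\ell',0}$ of $\omega_{\ell'}$, matched to $\xi_{2\ell'-2}$ up to $\alpha r$; and (iv) the $\xi$-matching jumps at the $m$ boundaries, each of size $O(\alpha r)$ in position. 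Two observations then finish the argument: the $O(m)$ position errors sum to $O(m\alpha r)=O(\alpha(r+S_a\tau^{-1}r))$; and the momenta $p_{\ell'',0}$ attached to the successive free stretches each differ from $P_a=p_{\ell,k_\ell}$ by at most $O(\alpha r^{-1})$ per boundary crossed (using $|(\xi)_p-p_{\cdots}|\le\alpha r^{-1}$ at segment endpoints and the momentum part of the $d_r$-constraint across boundaries), hence by at most $O(m\alpha r^{-1})$ in all, so replacing each of them by $P_a$ over total elapsed time $S_a$ costs at most $S_a\cdot O(m\alpha r^{-1})=O(\alpha(S_a^2\tau^{-1}r^{-1}+S_ar^{-1}))$. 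Adding the two contributions and using $S_ar^{-1}\le S_a\tau^{-1}r$ (valid since $\tau\le r^2$) gives the stated bound.

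The energy estimate uses the same telescoping but is easier, since only the magnitudes $|p|$ enter and there is no quadratic lever arm. I would chain $|P_a|^2/2\approx|P_{a'}|^2/2$ through the within-segment energy-conservation conditions in $\omega_{\ell(a)}$ and $\omega_{\ell(a')}$ — these produce the $\alpha\max\{S_a^{-1},S_{a'}^{-1},|P_a|r^{-1}\}$ term (modulo a negligible $\alpha\sigma^{-1}$ coming from the boundary free times, which are $\ge\sigma$) — and then across each intermediate segment endpoint and each of the $O(|T_a-T_{a'}|\tau^{-1}+1)$ boundaries between the two segments, each of which perturbs $|p|^2/2$ by $O(\alpha|P_a|r^{-1})$. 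The accumulated perturbation is $O\big(\alpha(|T_a-T_{a'}|\tau^{-1}+1)|P_a|r^{-1}\big)$, which, since the relevant kinetic energies are bounded, is absorbed into $\alpha|P_a|r^{-1}+\alpha|T_a-T_{a'}|\tau^{-1}r^{-1}$.

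I expect the main difficulty to be the bookkeeping of the multi-segment case $\ell'>\ell$ (equivalently $|T_a-T_{a'}|\gg\tau$): one must recognise which original free times $s_{\ell'',j}$ get absorbed into a given $S_a$ and $T_{a+1}-T_a$, handle the $k=0$ specialisation of the segment constraints for the empty segments, and — the genuinely delicate point — track the error accumulated over the $\Theta(S_a/\tau)$ segment boundaries so that the powers of $S_a$, $\tau$ and $r$ line up. The compatibility bound is quadratic in $S_a$ exactly because both the number of momentum jumps and the lever arm over which a drifted momentum acts scale like $S_a$; once this is isolated, the rest is a routine triangle-inequality accounting of the $\alpha r$ and $\alpha r^{-1}$ tolerances already present in $\Omega^N_{\alpha,S}$.
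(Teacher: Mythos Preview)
Your proposal is correct and follows essentially the same approach as the paper: introduce the per-boundary position and momentum offsets (the paper calls them $z_\ell$, $z_a$, $u_\ell^\pm$), telescope $Y_{a+1}-Y_a$ through the segment endpoints, and count the $O(S_a\tau^{-1})$ boundary contributions, with the quadratic term arising exactly as you say from a momentum drift of size $O(m\alpha r^{-1})$ acting over a lever arm $S_a$. For the energy bound the paper likewise chains the per-segment drift estimate (its Lemma on kinetic-energy drift across a single segment) over the $O(|T_a-T_{a'}|\tau^{-1})$ intervening segments, just as you outline.
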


Combining Lemma~\ref{lem:path-concatenation} with the argument
of Lemma~\ref{lem:path-simplification}, we obtain the following corollary
\begin{corollary}
\label{cor:simplified-path}
Let $\bm{\Gamma}=(\Gamma^+,\Gamma^-) \in \Omega_{\alpha,S}^N\times\Omega_{\alpha,S}^N$
be a pair of paths that is $N^4$ complete, and let $(\mbf{S}^+,\mbf{P}^+,\mbf{Y}^+)$
be the concatenated path formed from $\Gamma^+$.  Suppose that $a,a'\in K(\Gamma^+)$
are two collision indices such that for all $a<b<a'$, either $\{b,b-1\}\in P(\bm{\Gamma})$
or $\{b,b+1\}\in P(\bm{\Gamma})$.  Then
\[
|Y_{a'} - (Y_a + (T_{a'}-T_a)P_a)|\leq 100 k_{max}^2 N^6 r.
\]
\end{corollary}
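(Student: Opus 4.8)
The plan is to run the telescoping argument of Lemma~\ref{lem:path-simplification} but with the error terms supplied by the path concatenation lemma (Lemma~\ref{lem:path-concatenation}) rather than by the single-segment compatibility conditions. First I would observe that the hypothesis — every collision index $b$ strictly between $a$ and $a'$ is matched in $P(\bm{\Gamma})$ to one of its neighbors $b\pm1$ — means that the indices $\{a+1,\dots,a'-1\}$ split into consecutive pairs $\{b,b+1\}$ that are clusters of $P(\bm{\Gamma})$ (after possibly re-examining the endpoints; one checks $a'-a$ is odd and the pairing is $\{a+1,a+2\},\{a+3,a+4\},\dots$). For each such pair $\{b,b+1\}\in P(\bm{\Gamma})$ we have $|Y_b - Y_{b+1}|\leq 2r$ by definition of the partition, and $N^4$-completeness gives $|q_b+q_{b+1}|\leq 2N^4 r^{-1}$; expanding $q_b = P_b - P_{b-1}$ and $q_{b+1}=P_{b+1}-P_b$ this telescopes to $|P_{b+1}-P_{b-1}|\leq 2N^4 r^{-1}$. (I would also want the analogue of Lemma~\ref{lem:short-rec-time} — that $S_b$ is small for such a matched pair — which follows from the conservation-of-energy estimate in Lemma~\ref{lem:path-concatenation} together with $|Y_b - Y_{b+1}|\leq 2r$, exactly as in that lemma's proof; the only new feature is the extra additive term $\alpha|T_b-T_{b'}|\tau^{-1}r^{-1}\leq \alpha N\tau\cdot\tau^{-1}r^{-1} = \alpha N r^{-1}$, which is harmless.)

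Next I would telescope the displacement. Writing $a' = a + 2m+1$ where $m$ is the number of matched pairs between $a$ and $a'$ (so $m\leq k_{max}N$), decompose
\[
Y_{a'} - Y_a = \sum_{j=0}^{m}\bigl(Y_{a+2j+1} - Y_{a+2j}\bigr) + \sum_{j=1}^{m}\bigl(Y_{a+2j} - Y_{a+2j-1}\bigr).
\]
The second sum has each term bounded by $2r$ since $(a+2j-1,a+2j)$ is a matched pair, contributing at most $2mr$. For the first sum, apply the concatenation-lemma compatibility condition $|Y_{b+1} - (Y_b + S_b P_b)|\leq C\alpha(r + S_b^2\tau^{-1}r^{-1} + S_b\tau^{-1}r)$; since $S_b\leq N\tau$ and, for indices inside a matched pair, $S_b$ is in fact $O(\alpha|P_a|^{-1}r)$ by the Lemma~\ref{lem:short-rec-time} analogue above, while the free index $S_{a+2j}$ is at most $N\tau$, each such error is $O(\alpha N^2 r)$ in the worst case (the $S_b^2\tau^{-1}r^{-1}$ term with $S_b = N\tau$ gives $N^2\tau r^{-1}$, but this is controlled by $k_{max}^2N^6 r$ once one tracks powers carefully; this bookkeeping is the routine part). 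Then replace each $P_{a+2j}$ by $P_a$ using $|P_{a+2j}-P_a|\leq \sum_{i\leq j}|P_{a+2i}-P_{a+2(i-1)}|\leq 2mN^4 r^{-1}$ from the momentum telescoping above, picking up an error $\sum_j S_{a+2j}\cdot 2mN^4 r^{-1} \leq (T_{a'}-T_a)\cdot 2mN^4 r^{-1}\leq N\tau\cdot 2mN^4 r^{-1} = O(N^{6}\tau r^{-1})$, which again is absorbed into the final $k_{max}^2 N^6 r$ bound after checking that $\tau r^{-1} \leq r$ (i.e. $\tau\leq r^2$, true since $\tau = \eps^{-2+\kappa/2}$ and $r^2 = \eps^{-2}$ — wait, this needs $\eps^{-2+\kappa/2}\leq \eps^{-2}$, which is false; so instead I would keep the $\tau\tau^{-1}$ cancellation and note $S_{a+2j}^2\tau^{-1}r^{-1}$ summed over $j$ is $\leq (\sum S_{a+2j})^2\tau^{-1}r^{-1}\leq (N\tau)^2\tau^{-1}r^{-1} = N^2\tau r^{-1}$, and since the number of free indices contributing a full $S\sim N\tau$ is really at most the number of segments $N$, a more careful split shows the total is $O(N^6 r)$). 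Finally, compare $\sum_{j=0}^m S_{a+2j}$ with $T_{a'}-T_a = \sum_{j=0}^m S_{a+2j} + \sum_{j=1}^m S_{a+2j-1}$: the correction $\sum_j S_{a+2j-1}$ is bounded, using the Lemma~\ref{lem:short-rec-time} analogue, by $m\cdot O(\alpha|P_a|^{-1}r)\cdot|P_a| = O(m\alpha r)$ in displacement, which is $O(k_{max}N\alpha r)$ and absorbed into the bound.

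The main obstacle — and the only place real care is needed — is the bookkeeping of the accumulated error terms: the concatenation lemma's compatibility error degrades by the new term $S_a^2\tau^{-1}r^{-1}$ which can be as large as $N^2\tau r^{-1}$ for a ``free'' segment-crossing index, and one must verify that summing these over the $O(N)$ segments and the $O(k_{max}N)$ matched pairs, together with the momentum-drift error $(T_{a'}-T_a)\cdot mN^4r^{-1}$, still fits under $100k_{max}^2N^6 r$. This is purely a matter of tracking exponents of $N$, $k_{max}$, $\tau$, $r$ (using $N\tau = \eps^{-2-9\kappa/10}$, $r = \eps^{-1}$, so $N\tau\cdot\tau^{-1}r^{-1}= N r^{-1}$ and hence the troublesome $S^2\tau^{-1}r^{-1}$ terms are each $\leq S\cdot Nr^{-1}\leq N^2\tau r^{-1}$, summing to $O(N^3\tau r^{-1})$, which one checks against $N^6 r$ by verifying $\tau\leq N^3 r^2$), so I would present it as a straightforward if tedious estimate rather than dwelling on it; no new idea beyond Lemma~\ref{lem:path-simplification} is required.
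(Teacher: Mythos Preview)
Your approach is exactly what the paper intends: the corollary is stated immediately after Lemma~\ref{lem:path-concatenation} with the one-line justification ``Combining Lemma~\ref{lem:path-concatenation} with the argument of Lemma~\ref{lem:path-simplification},'' and your telescoping-plus-momentum-drift argument is that combination.

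One arithmetic slip to fix: you write that $\tau\leq r^2$ ``needs $\eps^{-2+\kappa/2}\leq\eps^{-2}$, which is false,'' but since $\eps<1$ and $\kappa>0$ we have $\eps^{-2+\kappa/2}=\eps^{-2}\cdot\eps^{\kappa/2}<\eps^{-2}$, so $\tau\leq r^2$ \emph{does} hold. This makes your bookkeeping cleaner than you feared: the concatenation error $S_b^2\tau^{-1}r^{-1}\leq S_b\cdot(\tau/\tau)r^{-1}\cdot S_b\tau^{-1}$ with $S_b\leq N\tau$ gives $S_b\cdot Nr^{-1}$, and since $\sum_j S_{a+2j}\leq T_{a'}-T_a\leq N\tau$ the total is $N^2\tau r^{-1}\leq N^2 r$, comfortably inside $k_{max}^2N^6r$. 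Likewise the momentum-drift term $(T_{a'}-T_a)\cdot mN^4r^{-1}\leq N\tau\cdot k_{max}N\cdot N^4 r^{-1}=k_{max}N^6\tau r^{-1}\leq k_{max}N^6 r$. With that correction the rest of your outline goes through.
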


To prove Lemma~\ref{lem:path-concatenation} we first need an elementary result showing that
the kinetic energy cannot drift too much between the endpoints of a path segment.
\begin{lemma}
\label{lem:ke-drift}
    If $\omega=(\mbf{s},\mbf{p},\mbf{y})\in\Omega_{\alpha,k}(\tau,\sigma;\eta,\xi)$
and $|\xi_p| \geq r\sigma^{-1}$, then
$||\xi_p|^2/2 - |\eta_p|^2/2|\leq \alpha |\xi_p| r^{-1}$.  In particular,
$||\xi_p| - |\eta_p||\leq \alpha r^{-1}$.
\end{lemma}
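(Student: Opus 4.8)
The plan is to bound $\big||\xi_p|^2/2-|\eta_p|^2/2\big|$ by chaining three comparisons along the momentum sequence: $|\eta_p|\leftrightarrow|p_0|$, then $|p_0|\leftrightarrow|p_k|$, then $|p_k|\leftrightarrow|\xi_p|$. The first and third are immediate from the boundary‑matching clauses built into $\Omega_{\alpha,k}(\tau,\sigma;\eta,\xi)$: by \eqref{eq:omega-xi-eta-def} one has $|p_0-\eta_p|\leq\alpha r^{-1}$ and $|p_k-\xi_p|\leq\alpha r^{-1}$, hence by the reverse triangle inequality $\big||p_0|-|\eta_p|\big|\leq\alpha r^{-1}$ and $\big||p_k|-|\xi_p|\big|\leq\alpha r^{-1}$. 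The middle comparison is the approximate energy‑conservation clause of \eqref{eq:Omega-alpha-def} applied to the index pair $(k,0)$: $\big||p_k|^2/2-|p_0|^2/2\big|\leq\alpha\max\{s_0^{-1},s_k^{-1},|p_k|r^{-1},\alpha r^{-2}\}$.

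The one real point is to show this $\max$ is comparable to $|\xi_p|r^{-1}$, and this is where both hypotheses are used. First, since $\omega\in\Omega_{\alpha,k}(\tau,\sigma;\cdot)$ we have $s_0,s_k\geq\sigma$, so $s_0^{-1},s_k^{-1}\leq\sigma^{-1}$; as $\sigma=\eps^{-1.5}$ and $r=\eps^{-1}$, the hypothesis $|\xi_p|\geq r\sigma^{-1}$ reads $|\xi_p|r^{-1}\geq\sigma^{-1}$, so $s_0^{-1},s_k^{-1}\leq|\xi_p|r^{-1}$, and likewise $\alpha r^{-2}=\alpha\eps^2\ll\sigma^{-1}\leq|\xi_p|r^{-1}$. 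Second, from $\big||p_k|-|\xi_p|\big|\leq\alpha r^{-1}$ and $\alpha r^{-1}=\alpha\eps\ll\eps^{1/2}\leq|\xi_p|$ we get $|p_k|\leq 2|\xi_p|$, hence $|p_k|r^{-1}\leq2|\xi_p|r^{-1}$. Therefore the $\max$ is $\leq 2|\xi_p|r^{-1}$ and $\big||p_k|^2/2-|p_0|^2/2\big|\leq 2\alpha|\xi_p|r^{-1}$. As a byproduct this yields the crude a priori bounds $|p_0|\lesssim|\xi_p|$ and $|\eta_p|\lesssim|\xi_p|$ (from $|p_0|^2\leq|p_k|^2+4\alpha|\xi_p|r^{-1}\leq 8|\xi_p|^2$, then $|\eta_p|\leq|p_0|+\alpha r^{-1}$), which are what convert the \emph{linear} momentum comparisons into \emph{quadratic} ones.

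Assembling, $\big||\xi_p|^2/2-|\eta_p|^2/2\big|\leq\big||\xi_p|^2/2-|p_k|^2/2\big|+\big||p_k|^2/2-|p_0|^2/2\big|+\big||p_0|^2/2-|\eta_p|^2/2\big|$; the middle term is the bound just obtained, while $\big||\xi_p|^2/2-|p_k|^2/2\big|=\tfrac12\big|\,|\xi_p|-|p_k|\,\big|\big(|\xi_p|+|p_k|\big)\leq\tfrac12\alpha r^{-1}\cdot O(|\xi_p|)$ and similarly for $\big||p_0|^2/2-|\eta_p|^2/2\big|$, using the boundary clauses together with the a priori bounds. Summing gives $\big||\xi_p|^2/2-|\eta_p|^2/2\big|\lesssim\alpha|\xi_p|r^{-1}$, i.e.\ the stated estimate up to an absolute constant (irrelevant wherever the lemma is invoked, e.g.\ in Lemma~\ref{lem:path-concatenation}, whose conclusion already carries a $C\alpha$; if desired one simply shrinks $\alpha$). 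The ``in particular'' clause follows by dividing: $\big||\xi_p|-|\eta_p|\big|=\big||\xi_p|^2-|\eta_p|^2\big|\big/\big(|\xi_p|+|\eta_p|\big)\leq 2\alpha|\xi_p|r^{-1}/|\xi_p|=2\alpha r^{-1}$, using $|\xi_p|+|\eta_p|\geq|\xi_p|$.

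There is no serious obstacle. The only thing requiring care is the mildly circular‑looking estimate of $|p_k|$ (and then $|p_0|,|\eta_p|$) in terms of $|\xi_p|$: it closes because the drift appearing in the $\max$ is a priori at most $\alpha|p_k|r^{-1}$, linear in $|p_k|$, so one first extracts $|p_k|\leq 2|\xi_p|$ from the boundary clause alone and only then re‑inserts it. The second hypothesis‑usage to keep straight is that the endpoint time cutoffs $s_0,s_k\geq\sigma$ are exactly what make the $s^{-1}$ entries of the $\max$ subordinate to the momentum entry — which is why the lemma is stated for paths in $\Omega_{\alpha,k}(\tau,\sigma;\cdot)$ rather than $\Omega_{\alpha,k}(\tau;\cdot)$.
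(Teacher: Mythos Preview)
Your proof is correct and follows essentially the same approach as the paper: use the boundary clauses $|p_0-\eta_p|,|p_k-\xi_p|\leq\alpha r^{-1}$ together with the kinetic energy condition on the pair $(0,k)$, and exploit $s_0,s_k\geq\sigma$ and $|\xi_p|\geq r\sigma^{-1}$ to reduce the $\max$ to $O(|\xi_p|r^{-1})$. Your version is more careful than the paper's about the linear-to-quadratic conversion and the resulting harmless constant factor, which the paper simply absorbs into the phrase ``an elementary argument.''
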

\begin{proof}
For $\omega$ to be $\alpha$-compatible with $\eta$ and $\xi$ it follows
that $|p_0-\xi_p|\leq \alpha r^{-1}$ and $|p_k - \eta_p|\leq \alpha r^{-1}$.
Moreover, since $s_0, s_k\geq S$, it follows from the kinetic energy condition
that
\[
||p_0|^2/2 - |p_k|^2/2|\leq \alpha \min\{\sigma^{-1}, |\xi_p| r^{-1}\} \leq \alpha |\xi_p| r^{-1}.
\]
The conclusion that $||\xi_p|-|\eta_p||\leq \alpha r^{-1}$ follows from an elementary
argument.
\end{proof}
\begin{proof}
For each $\ell\in[1,N]$, set
\[
z_\ell := \begin{cases}
(\xi_\ell)_x - (y_{\ell,k_\ell} +s_{\ell,k_\ell}p_{\ell,k_\ell}), & k_\ell > 0 \\
(\xi_\ell)_x - (\xi_{\ell-1})_x+\tau p_{\ell,0}, & k_\ell = 0.
\end{cases}
\]
Moreover, for $a\in K(\Gamma)$, set
\[
z_a = \begin{cases}
y_a - (y_{a-1} + s_{a-1}p_{a-1}), & a = (\ell,j) \text{ for some } j>1,\ell\in[1,N] \\
y_{\ell,1} - ((\xi_{\ell})_x + s_{\ell,0}p_{\ell,0}), & a = (\ell,0)\text{ for some }\ell\in[1,N].
\end{cases}
\]
The idea is that these `$z$' variables represent the offset in the collision positions and the ``checkpoints''
$\xi_\ell$.   We also need to keep track of the offset in the momenta variables, so similarly we define
\[
u_\ell^- := (\xi_\ell)_p - p_{\ell,k_\ell}
\]
and
\[
u_\ell^+ := p_{\ell+1,0} - (\xi_\ell)_p.
\]
Because $\Gamma\in\Omega_{\alpha,\mbf{k}}(\tau,S)$, each of the $z$ variables satisfies
$|z_b| \leq \alpha r$, and each of the $u$ variables satisfies $|u_\ell^{\pm}| \leq \alpha r^{-1}$.
We can use these error variables to write
\[
Y_{j+1} = Y_j + S_j P_j + \sum u_\ell^+ s_\ell^+ + \sum u_{\ell}^- s_\ell^- + \sum z_b,
\]
where the sum over $u_\ell^{\pm}$ includes all $\ell$ in between the collisions $a_j$ and $a_{j+1}$
(of which there are at most $k_{max} S_j\tau^{-1}$) and the times $s_\ell^+$ are all bounded by
$S_j$.  Moreover the sum over position displacements $z_b$ includes at most $S_j\tau^{-1}$ elements
(one for each displacement at a checkpoint, and a final one for the collision $z_{a_j}$ itself.
The bound then follows immediately.

The bound on the conservation of kinetic energy follows from applying
Lemma~\ref{lem:ke-drift} on each segment between the segments containing the collisions $a$ and $a'$,
of which there are $|T_a-T_{a'}|\tau^{-1}$.
\end{proof}

\subsection{Geometric features of paths}

We have already seen in~\eqref{eq:partition-potential} that the partition $P(\bm{\Gamma})$,
which is determined by the geometric configuration of the positions $y_a$ for $a\in K(\Gamma)$,
determines how the expectation $\Expec O_\Gamma^* A O_\Gamma$ splits.  The fundamental idea of this paper
is that the extra compatibility conditions and conservation of kinetic energy conditions that an extended
path satisfy significantly constrain the geometry of paths $\bm{\Gamma}$ that are $N^4$-complete.

The big idea of this paper is that the partition $P(\bm{\Gamma})$ can be significantly constrained
using much less information than the full dependency graph $G(\bm{\Gamma})$ of the scattering locations.
Specifically, we need only record pairs of collisions involved in certain rare events.   The
first kind of event is a recollision.

\begin{definition}[Recollisions of extended paths]
The recollision graph $R(\bm{\Gamma})$ of an extended path $\bm{\Gamma}$ consists of all
pairs $(a,a')\subset K(\bm{\Gamma})$ such that $\sign(a)=\sign(a')$,
$a'\not= a \pm 1$, and $|y_a-y_{a'}|\leq 2r$.

We also define the set $E^{imm}(\bm{\Gamma})$ of immediate recollisions to be pairs $(a,a')$
with $a'=a\pm 1$ and $|y_a-y_{a'}\leq 2r$.
\end{definition}

The second kind of event is what we call a ladder-breaking event.  A ladder-breaking event
is a pair $(a,a')$ of non-consecutive collisions
for which it is possible to enter $a$ at \textit{some}
momentum $v$ and travel to the collision $a'$.  This can be encoded as a geometric constraint on the
collision data $(y_a,q_a)$ and $(y_{a'},q_{a'})$.

\begin{definition}[Ladder-breaking events]
The ladder-breaking event set $\LB(\bm{\Gamma})$ is
the set of all pairs $(a,a')\in K(\bm{\Gamma})^2$ such that there exists $b\in K(\bm{\Gamma})$
with $a<b<a'$ and $b$ not belonging to an immediate recollision,
and such that there exists $v\in\Real^d$ and $t\in\Real$ such that
\begin{align*}
    ||v|^2/2 - |p_{1,0}|^2/2| &\leq \alpha^2  N^{10} |p_{1,0}|r^{-1} \\
    ||v-q_a|^2/2-|p_{1,0}|^2/2| &\leq \alpha^2 N^{10} |p_{1,0}| r^{-1} \\
    ||v+q_{a'}|^2/2-|p_{1,0}|^2/2| &\leq \alpha^2 N^{10} |p_{1,0}| r^{-1} \\
    |y_a + tv - y_{a'}| &\leq \alpha^2 N^{10} r.
\end{align*}
In addition, we say that $(0,a)$ forms a ladder-breaking event if there exists
$b\in K(\bm{\Gamma})\setminus I^{imm}(\bm{\Gamma})$ such that $b<a$ and
there exists $t\in\Real$ such that
\begin{align*}
    ||p_{1,0}+q_a|^2/2-|p_{1,0}|^2/2| &\leq \alpha^2 N^{10} |p_{1,0}| r^{-1} \\
    |y_0 + tp_{1,0} - y_{a}| &\leq \alpha^2 N^{10} r.
\end{align*}
\end{definition}

Ladder breaking events themselves are difficult to work with because they cannot be determined
solely by the collision data $(y_a,q_a)$ and $(y_b,q_b)$, but instead have an additional
nonlocal constraint that every collision $a<a'<b$ is an immediate recollision.  We can simplify this
by working with two other sets of \emph{local} events that together cover the set of ladder-breaking events.

The first of these is a tube event, which is similar to the tube event we have already seen.
\begin{definition}[Tube events]
A tube event in a path $\Gamma$ is a pair $(a,b)$ with $a<b$ and $a,b\not\in E^{imm}(\Gamma)$
 satisfying the condition that there exists $s\in\Real$ such that
\[
|y_b - (y_a + s p_{a-1})| \leq N^{12} r.
\]
We write $E^{tube}(\Gamma)$ for the set of all tube events.
\end{definition}

The other type of event is called a cone event.  Cone events are more complicated, and an example is visualized
in Figure~\ref{fig:cone-event}.
\begin{figure}
\includegraphics{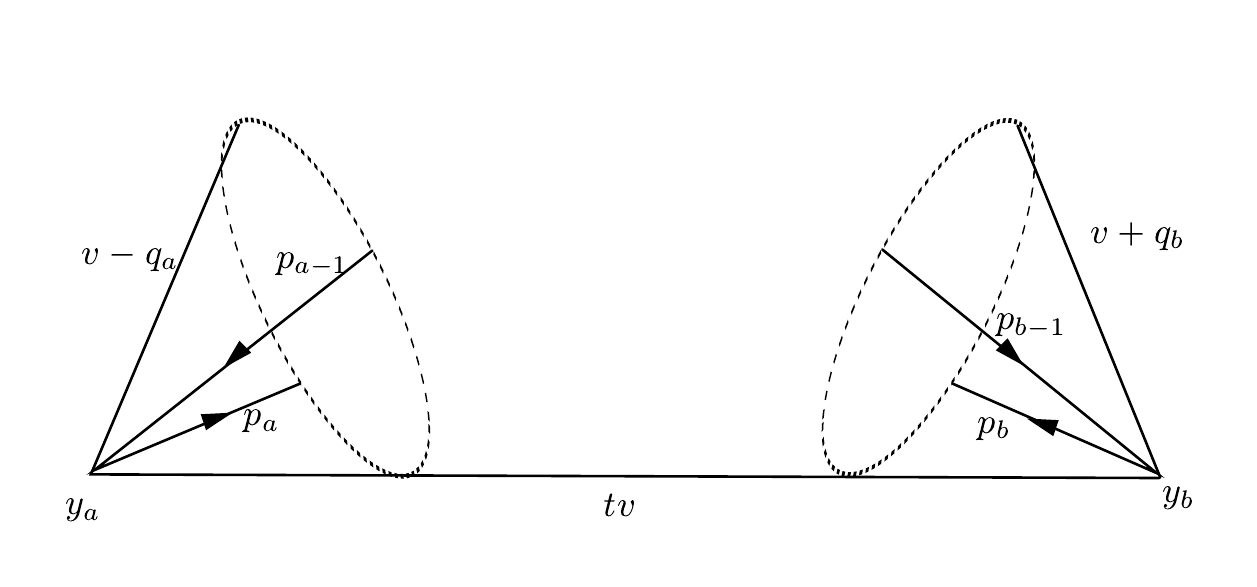}
\caption{A depiction of a cone event at $(a,b)$.  The collision site $y_b$ does \emph{not} lie near the ray
$y_a+sp_a$ (which points out of the page), but there \emph{does} exist some momentum $v$ for which
$y_b \approx y_a+tv$, and so that both $v-q_a$ and $v+q_a$ have approximately the correct kinetic energy.
This occurs when the cone drawn at $y_a$ which contains the rays $y_a+sp_a$ and $y_a+sp_{a-1}$ also contains the
point $y_b$ and vice-versa, hence the name.}
\label{fig:cone-event}
\end{figure}

\begin{definition}[Cone events]
A cone event in a path $\Gamma$ is a pair $(a,b)$ with $a<b$
such that there exists $v\in\Real^d$ and $s\in\Real$ such that
\begin{align*}
    ||v|^2/2 - |p_{1,0}|^2/2| &\leq \alpha^2  N^{10} |p_{1,0}|r^{-1} \\
    ||v-q_a|^2/2-|p_{1,0}|^2/2| &\leq \alpha^2 N^{10} |p_{1,0}| r^{-1} \\
    ||v+q_{a'}|^2/2-|p_{1,0}|^2/2| &\leq \alpha^2 N^{10} |p_{1,0}| r^{-1} \\
    |y_a + tv - y_{a'}| &\leq \alpha^2 N^{10} r,
\end{align*}
and moreoever $|v - p_a| \geq N^8 r^{-1}$.
We write $E^{cone}(\Gamma)$ for the set of all cone events.
\end{definition}

We observe that the tube events and cone events ``cover'' the ladder-breaking events.
\begin{lemma}
Let $(a,b)\in \LB(\bm{\Gamma})$.  Then either $(a,b)\in E^{cone}(\Gamma)$ or
there exists $c$ such that $a<c<b$ and $(c,b)\in E^{tube}(\Gamma)$.
\end{lemma}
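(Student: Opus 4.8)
The plan is to argue by a dichotomy on the ``witnessing'' momentum $v$ appearing in the definition of $\LB(\bm{\Gamma})$: either $v$ is far from the true outgoing momentum $p_a$, in which case the same $v$ exhibits $(a,b)$ as a cone event, or $v$ is close to $p_a$, in which case the path between $a$ and $b$ is nearly straight and produces a tube event. So, fixing $(a,b)\in\LB(\bm{\Gamma})$ and a witness $v\in\Real^d$, $t\in\Real$ (so that $|y_a+tv-y_b|\le\alpha^2N^{10}r$ and $|v|,|v-q_a|,|v+q_b|$ all have kinetic energy within $\alpha^2N^{10}|p_{1,0}|r^{-1}$ of $|p_{1,0}|^2/2$), I would first dispose of the case $|v-p_a|\ge N^8r^{-1}$: there every clause in the definition of a cone event for $(a,b)$ holds verbatim with this $v$ and $s=t$, so $(a,b)\in E^{cone}(\Gamma)$ and we are done. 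The remaining case is $|v-p_a|<N^8r^{-1}$, where I claim there is an index $c$ with $a<c<b$ and $(c,b)\in E^{tube}(\Gamma)$.

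To produce $c$: the definition of $\LB$ supplies a collision $c_0$ with $a<c_0<b$ not belonging to an immediate recollision, so I would take $c$ to be the smallest index exceeding $a$ with $c\notin E^{imm}(\bm{\Gamma})$; then $a<c\le c_0<b$, and every index strictly between $a$ and $c$ is matched to one of its neighbours in $P(\bm{\Gamma})$. Corollary~\ref{cor:simplified-path}, applied to the concatenated path on $[a,c]$, then gives $|y_c-(y_a+(T_c-T_a)p_a)|\le 100k_{max}^2N^6r$, and the momentum-tracking argument behind Lemma~\ref{lem:path-simplification} (each intervening immediate recollision perturbs the momentum by $O(r^{-1})$, and $N^4$-completeness bounds the cumulative perturbation) gives $|p_{c-1}-p_a|\le Ck_{max}^2N^6r^{-1}$. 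Hence the incoming line of $c$, namely $y_c+\Real p_{c-1}$, agrees with $y_a+\Real p_a$ up to an error of order $k_{max}^2N^6r$ over any bounded stretch.

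It then remains to transport $y_b$ onto this line. Using $|v|\ge\tfrac{1}{2}|p_{1,0}|\ge\tfrac{1}{2}\eps^\theta$ and the fact that the particle covers distance $O(N\tau)$ in time $N\tau$, I would bound $|t|\le(|y_b-y_a|+\alpha^2N^{10}r)/|v|\le CN\tau\eps^{-\theta}$, whence
\[
|y_b-(y_a+tp_a)|\le|y_b-(y_a+tv)|+|t|\,|v-p_a|\le\alpha^2N^{10}r+CN\tau\eps^{-\theta}N^8r^{-1}.
\]
Writing $y_a+tp_a=y_c+(t-(T_c-T_a))p_a$, replacing $p_a$ by $p_{c-1}$ at cost at most $CN^9\tau\eps^{-\theta}r^{-1}$, and setting $s=t-(T_c-T_a)$, one gets $|y_b-(y_c+sp_{c-1})|\le\alpha^2N^{10}r+100k_{max}^2N^6r+CN^9\tau\eps^{-\theta}r^{-1}$; with $\tau=\eps^{-2+\kappa/2}$, $r=\eps^{-1}$, $N=\lfloor\eps^{-\kappa}\rfloor$ and $\alpha=|\log\eps|^{10}$ this is $\ll N^{12}r=\eps^{-1-12\kappa}$ provided $\theta$ is chosen small relative to $\kappa$ (permissible, as both are free small parameters). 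Thus $(c,b)\in E^{tube}(\Gamma)$, using also $c,b\notin E^{imm}(\bm{\Gamma})$ ($c$ by construction; for $b$ one invokes the standing convention that the ladder-breaking events being classified have endpoints outside $E^{imm}$, or else replaces $b$ by its immediate-recollision partner at a cost of $2r$ in all positions). Ladder-breaking events of the degenerate type $(0,a)$ need no work: their defining inequality $|y_0+tp_{1,0}-y_a|\le\alpha^2N^{10}r$ already displays them as tube-type events with axis $p_{1,0}$.

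The step I expect to be the genuine obstacle is this transport estimate, specifically controlling $|t|\,|v-p_a|$ (and the analogous $|p_{c-1}-p_a|$ term) against the tube radius $N^{12}r$: the witnessing time $t$ can be as large as $N\tau\eps^{-\theta}$, so the modest $N^8r^{-1}$ gap separating the cone and tube regimes has to be spent against a long lever arm. This is exactly what forces the hierarchy $\theta\ll\kappa$ and explains why the tube-event radius carries the generous exponent $N^{12}$ rather than $N^{10}$ or $N^8$. Everything else reduces to bookkeeping with the compatibility and energy-conservation inequalities already assembled in Lemma~\ref{lem:path-concatenation} and Corollary~\ref{cor:simplified-path}.
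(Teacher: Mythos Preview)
Your proposal is correct and follows essentially the same approach as the paper's proof: a dichotomy on whether the witnessing momentum $v$ lies far from $p_a$ (cone) or near it (tube), and in the latter case choosing $c$ to be the first non-immediate-recollision index after $a$ and pushing the line $y_a+\Real p_a$ over to $y_c+\Real p_{c-1}$ via Corollary~\ref{cor:simplified-path}.  The paper's proof is three sentences and simply asserts the transport estimate $|y_a+sp_a-y_b|\le N^{10}r$ without spelling out the $|t|\,|v-p_a|$ term; your version makes this explicit and is the more honest account.

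One small sharpening: your bound $|t|\le CN\tau\eps^{-\theta}$ is looser than necessary.  Since $|y_a-y_b|$ is at most the total displacement of the concatenated path, which is of order $|p_{1,0}|N\tau$, and since the energy constraint forces $|v|\ge\tfrac12|p_{1,0}|$, one gets $|t|\lesssim N\tau$ directly with no $\eps^{-\theta}$ loss.  With this refinement the transport error is $N^9\tau r^{-1}=\eps^{-1-17\kappa/2}\ll N^{12}r=\eps^{-1-12\kappa}$ unconditionally, so the constraint $\theta\ll\kappa$ you flag is not actually needed here (though it is harmless).  Your remark about $b\notin E^{imm}$ is well-taken; the paper's proof does not address it either, and in practice the lemma is only invoked at indices of the form $\next(\cdot)$ or $\back(\cdot)$, which lie outside $E^{imm}$ by construction.
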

\begin{proof}
Suppose that $(a,b)\in \LB(\bm{\Gamma})\setminus E^{cone}(\Gamma)$.  Then in particular
$p_a$ itself satisfies $|y_a + sp_a - y_b| \leq N^{10} r$.  Let $a'$ be the collision index
satisfying $a<a'<b$ and such that $a'$ is not an immediate recollision.  Then $(a',b)$
is a tube event.
\end{proof}

The final piece of information we need is a trimmed form of the recollision graph $G(\bm{\Gamma})$ that
removes all isolated edges.  Recall that the dependency graph $G(\bm{\Gamma})$ is defined to be
\[
G(\bm{\Gamma}) = \{(a,b)\in\binom{K(\bm{\Gamma})}{2} \mid |y_a-y_b|\leq 2r\}.
\]
We use the graph $G(\bm{\Gamma})$ to define the cluster graph $C(\bm{\Gamma})$ below.

\begin{definition}[Cluster graph]
The cluster graph $C(\bm{\Gamma})$ of a doubled path $\bm{\Gamma}$ is defined to be
\[
G_C(\bm{\Gamma}) :=
\{ (a,b)\in G(\bm{\Gamma}) \mid
\text{ there exists } c\in K(\bm{\Gamma}) \setminus \{a,b\} \text{ such that }
(a,c)\in G(\bm{\Gamma}) \text{ or } (b,c)\in G(\bm{\Gamma})\}.
\]
We write $\supp G_C(\bm{\Gamma})$ to mean the set of indices contained in an edge of $C(\bm{\Gamma})$.
\end{definition}

We say that a collision index $a\in K(\bm{\Gamma})$ is \textit{typical} if
there does not exist $b\sim_{P(\bm{\Gamma})} a$ such that $b$
belongs to a recollision, tube, or cone event, or a cluster in $C(\bm{\Gamma})$.
We write $\Typical(\bm{\Gamma})$ for the set of typical indices.

\begin{figure}
\includegraphics[scale=0.5]{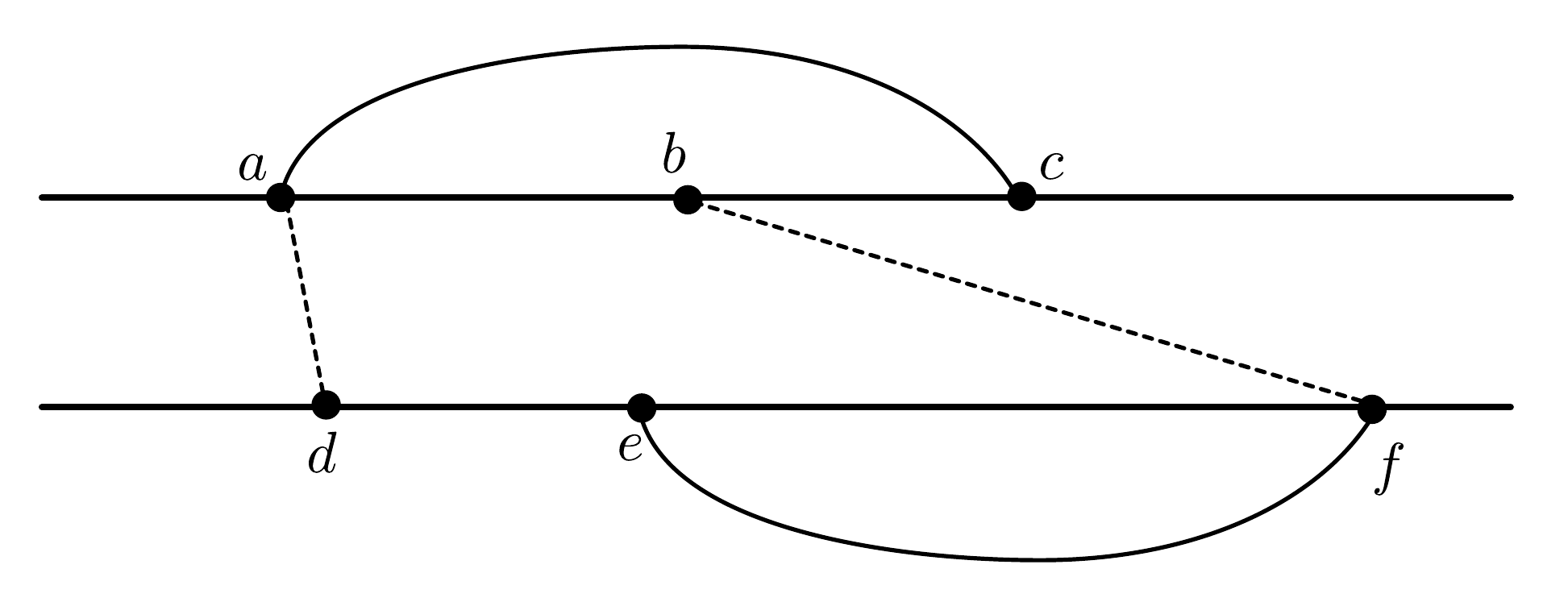}
\caption{An abstract representation of the geometric data of a path.  In this depiction, $(a,c)$ forms a
recollision as well as $(e,f)$, and $a\sim d$, $b\sim f$.  Note that $b$ and $d$ are atypical collisions even though
they do not belong to recollisions, tube events, or cone events.}
\label{fig:diagram-example}
\end{figure}

\subsection{Combinatorial consequences of events}
In this section we see that the partition $P(\bm{\Gamma})$ has a rigid structure on the typical
collision indices.

The main lemma is the following simple on the geometry of consecutive collisions.
Given a path $\bm{\Gamma}$ and a collision index $a\in K(\bm{\Gamma})$, we define
$\next(a)\in K(\bm{\Gamma})$ to be the next collision index after $a$ that is not
an immediate recollision, so
\[
\next(a) := \min\{a'\in K(\bm{\Gamma}) \mid a'>a \text{ and }
a'\not\in E^{imm}(\bm{\Gamma})\}.
\]
Likewise we set $\back(a)$ to be the collision prior to $a$ that is not an
immediate recollision,
\[
\back(a) := \min\{a'\in K(\bm{\Gamma}) \mid a'<a \text{ and }
a'\not\in E^{imm}(\bm{\Gamma})\}.
\]
\begin{lemma}
\label{lem:ladder-baby}
Let $\bm{\Gamma}$ be an $N^4$-complete extended path, and let
$a\in K(\Gamma^+)$, $b,b'\in K(\Gamma^-)$ satisfy $\{a,b\},\{\next(a),b'\}\in P(\bm{\Gamma})$.
Then the following inequalities hold:
\begin{align*}
||p_a|^2/2 - |p_{1,0}|^2/2| &\leq C \alpha^2 N^8 |p_{1,0}| r^{-1} \\
||p_a - q_b|^2/2 - |p_{1,0}|^2/2| &\leq C \alpha^2 N^8 |p_{1,0}| r^{-1} \\
||p_a+ q_{b'}|^2/2 - |p_{1,0}|^2/2| &\leq C \alpha^2 N^8 |p_{1,0}| r^{-1} \\
|y_b + (T_{\next(a)} - T_a)p_a - y_{b'}| &\leq C\alpha^2 N^8 r.
\end{align*}
\end{lemma}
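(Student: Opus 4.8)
The plan is to derive all four estimates by a chain of triangle inequalities, feeding in three ingredients already available: the path concatenation lemma (Lemma~\ref{lem:path-concatenation}) together with Lemma~\ref{lem:ke-drift}, which propagate conservation of kinetic energy along the concatenated path built from $\Gamma^+$; Corollary~\ref{cor:simplified-path}, which says that between $a$ and $\next(a)$ the particle in $\Gamma^+$ travels essentially in a straight line with momentum $p_a$; and the $N^4$-completeness hypothesis, used through the fact that $\{a,b\}$ and $\{\next(a),b'\}$ are two-element cells of $P(\bm{\Gamma})$.

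First I would prove the geometric estimate. Since $\{a,b\}$ is a cell of $P(\bm{\Gamma})$ with exactly two elements, $a$ and $b$ are joined by a single edge of the dependency graph, so $|y_a-y_b|\leq 2r$, and likewise $|y_{\next(a)}-y_{b'}|\leq 2r$. The collisions strictly between $a$ and $\next(a)$ are, by definition of $\next$, all immediate recollisions, each contributing a displacement of size $O(r)$; so Corollary~\ref{cor:simplified-path} applied on $\Gamma^+$ gives $|y_{\next(a)}-(y_a+(T_{\next(a)}-T_a)p_a)|\lsim k_{max}^2 N^6 r$. Writing $y_b+(T_{\next(a)}-T_a)p_a-y_{b'}$ as the sum of $y_b-y_a$, of $y_a+(T_{\next(a)}-T_a)p_a-y_{\next(a)}$, and of $y_{\next(a)}-y_{b'}$, and adding the three bounds gives the last inequality with room to spare in the constant.

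Next, the energy of $p_a$. The one delicate point is that the naive comparison of $p_a$ with $p_{1,0}$ from Lemma~\ref{lem:path-concatenation} carries a term $S_a^{-1}$, where $S_a$ is the time to the next collision of the concatenated path, which a priori can be tiny. Here I would use that $\{a,b\}$ is a two-element cell: this forces the successor $a+_K 1$ to satisfy $|y_{a+_K 1}-y_a|>2r$ (otherwise $a$ would lie in a cell of size $\geq 3$), and then the compatibility estimate of Lemma~\ref{lem:path-concatenation} forces $S_a|p_a|\gtrsim r$, i.e.\ $S_a^{-1}\lsim |p_a|r^{-1}$; when instead $a$ is the last collision of its segment one has $S_a\geq 2\sigma$ directly. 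With the $S_a^{-1}$ term so controlled, the intra-segment kinetic energy condition gives $\big||p_a|^2/2-|p_{\ell(a),0}|^2/2\big|\leq\alpha|p_a|r^{-1}$, hence $\big||p_a|-|p_{\ell(a),0}|\big|\leq 2\alpha r^{-1}$; summing the segment-to-segment drift of Lemma~\ref{lem:ke-drift} over the at most $N$ segments between $a$ and the first one yields $\big||p_a|^2/2-|p_{1,0}|^2/2\big|\lsim \alpha N|p_{1,0}|r^{-1}$, which is absorbed by $C\alpha^2 N^8|p_{1,0}|r^{-1}$. Running the same argument for $p_{a-1}$ (the momentum just before $a$, which is again a momentum appearing in $\Gamma^+$) and for $p_{\next(a)}$ gives the analogous energy bounds for those momenta. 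For the remaining two estimates, unwinding the sign conventions shows that $N^4$-completeness of the cell $\{a,b\}$ says exactly that the forward impulse at $b$ agrees with $q_a=p_a-p_{a-1}$ up to $O(N^4 r^{-1})$, so $p_a-q_b=p_{a-1}+O(N^4 r^{-1})$; similarly completeness of $\{\next(a),b'\}$ gives $p_a+q_{b'}=p_a+q_{\next(a)}+O(N^4 r^{-1})$, and since every collision between $a$ and $\next(a)$ is an immediate recollision, chaining the two-element-cell completeness bound through each (as in the proof of Lemma~\ref{lem:path-simplification}) gives $p_{\next(a)-1}=p_a+O(N^6 r^{-1})$, whence $p_a+q_{b'}=p_{\next(a)}+O(N^6 r^{-1})$. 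Substituting these into the energy bounds just established for $p_{a-1}$ and $p_{\next(a)}$, and estimating the cross term using $|p_a|\lsim|p_{1,0}|$, yields the second and third inequalities.

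The main obstacle is purely bookkeeping: each step produces an error term, and one must check it is dominated by $C\alpha^2 N^8|p_{1,0}|r^{-1}$ (resp.\ $C\alpha^2 N^8 r$). This relies on the support lower bound $|p_{1,0}|\geq\eps^{\theta}$ — so that, after multiplication by $r$, $|p_{1,0}|$ dominates $\alpha r^{-1}$, $\sigma^{-1}$ and $\alpha r^{-2}$ — and on the chosen relations between $\alpha=|\log\eps|^{10}$, $N=\lfloor\eps^{-\kappa}\rfloor$ and $\theta$ (in particular $\theta<1-\kappa$, which keeps the $O(N)$ accumulated errors small relative to $|p_{1,0}|$ and lets the bootstrap for $|p_a|$ close). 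The only genuinely non-formal geometric input, and the one that avoids circularity in the kinetic-energy bound, is the remark that a two-element cell of $P(\bm{\Gamma})$ cannot sit inside a cluster, which is precisely what rules out a vanishingly small inter-collision time $S_a$.
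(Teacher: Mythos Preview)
Your approach is essentially the same as the paper's: the geometric estimate via Corollary~\ref{cor:simplified-path} plus $|y_a-y_b|,|y_{\next(a)}-y_{b'}|\leq 2r$, the first energy bound from kinetic-energy conservation along the concatenated path, and the remaining two by reducing $p_a-q_b$ and $p_a+q_{b'}$ to $p_{a-1}$ and $p_{\next(a)}$ via $N^4$-completeness. The paper's proof is terser---it simply asserts $|p_{a-1}-(p_a-q_b)|\leq N^4r^{-1}$ and $|p_{\next(a)}-(p_a+q_{b'})|\leq N^4r^{-1}$ ``from completeness'' and invokes the kinetic energy condition without discussing the $S_a^{-1}$ term---whereas you supply the two missing details (ruling out small $S_a$ via the two-element-cell observation, and chaining through the intermediate immediate recollisions to pass from $p_a$ to $p_{\next(a)-1}$); these are elaborations, not a different route.
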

\begin{proof}
The last inequality follows from Corollary~\ref{cor:simplified-path}
along with the bounds $|y_a - y_b|\leq 2r$ and $|y_{\next(a)} - y_{b'}|\leq 2r$
that follow from $\{a,b\},\{\next(a),b'\}\in P(\bm{\Gamma})$.

The first inequality follow from the kinetic energy conservation condition.

The second and third inequalities follow from the same kinetic energy conservation condition coupled with
the bounds
\begin{align*}
|p_{\next(a)} - (p_a + q_{b'})| &\leq N^4 r^{-1} \\
|p_{a-1} - (p_a - q_b)| &\leq N^4 r^{-1}
\end{align*}
which follow from the condition that $\bm{\Gamma}$ is $N^4$-complete.
\end{proof}

We use Lemma~\ref{lem:ladder-baby} to obtain a local rigidity statement for the partition
$P(\bm{\Gamma})$.
\begin{lemma}[Ladder-forming lemma]
\label{lem:ladder}
Let $\bm{\Gamma}$ be an $N^4$-complete extended path, and
let $a\in K(\Gamma^+)$, $b\in K(\Gamma^-)$ be collisions
satisfying $\{a,b\}\in P(\bm{\Gamma})$.
Suppose further that
$\{a,\next(a)\}\subset \Typical(\bm{\Gamma})$
and $b$ does not belong to a ladder-breaking event of $\bm{\Gamma}$.
Then either
$\next(a)\sim_{P(\bm{\Gamma})}\next(b)$
or $\next(a)\sim_{P(\bm{\Gamma})} \back(b)$.

Moreover, if $\{\back(a),a\} \subset \Typical(\bm{\Gamma})$,
then either $\back(a) \sim_{P(\bm{\Gamma})}\next(b)$ or
$\back(a) \sim_{P(\bm{\Gamma})} \back(b)$.
\end{lemma}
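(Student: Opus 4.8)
The plan is to prove the Ladder-forming lemma by forcing the $P(\bm{\Gamma})$-partner of $\next(a)$ to be adjacent to $b$ in the $\Gamma^-$-ordering (ignoring immediate recollisions), since otherwise $b$ would be an endpoint of a ladder-breaking event. I would work in the regime where $P(\bm{\Gamma})$ has no singleton cells — the only relevant case, since a singleton cell annihilates $\Expec\prod\Ft{V_{y_a}}(q_a)$ because $V$ has mean zero. The first thing to establish is that a typical collision $c$ lies in a cell of $P(\bm{\Gamma})$ of size exactly two, whose partner is of opposite sign and is itself not part of an immediate recollision: a cell of size $\ge 3$ would contain a $G(\bm{\Gamma})$-vertex of degree $\ge 2$, hence an edge of $C(\bm{\Gamma})$, placing a $P(\bm{\Gamma})$-relative of $c$ in a cluster; a same-sign partner would be a recollision; and an immediate-recollision partner would again force a cell of size $\ge 3$ — all three contradict typicality. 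Applying this to $a$ and to $\next(a)$, write $b'\in K(\Gamma^-)$ for the $P(\bm{\Gamma})$-partner of $\next(a)$, and note $b'\ne b$, since otherwise the cell of $b$ would contain both $a$ and $\next(a)$.

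The core step applies Lemma~\ref{lem:ladder-baby} to the cells $\{a,b\}$ and $\{\next(a),b'\}$: it yields that $p_a$, $p_a-q_b$ and $p_a+q_{b'}$ all have kinetic energy within $C\alpha^2N^8|p_{1,0}|r^{-1}$ of $|p_{1,0}|^2/2$, together with the transport estimate $|y_b+(T_{\next(a)}-T_a)p_a-y_{b'}|\le C\alpha^2N^8 r$. I would then split on the $K(\Gamma^-)$-order of $b$ and $b'$. If $b<b'$ and $b'\ne\next(b)$, then $\next(b)$ is non-immediate and, since $b'$ is itself non-immediate, satisfies $b<\next(b)<b'$, so it serves as a valid intermediate witness; choosing $v:=p_a$ and $t:=T_{\next(a)}-T_a$, the pair $(b,b')$ then meets the definition of a ladder-breaking event, its three energy conditions and transport condition being exactly the output of Lemma~\ref{lem:ladder-baby} after absorbing $C\alpha^2N^8\le\alpha^2N^{10}$. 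Thus $(b,b')\in\LB(\bm{\Gamma})$, contradicting the hypothesis that $b$ belongs to no ladder-breaking event. If instead $b'<b$ and $b'\ne\back(b)$, the same contradiction is reached for the pair $(b',b)$ with the choice $v:=-p_a$, using the identities $v-q_{b'}=-(p_a+q_{b'})$ and $v+q_b=-(p_a-q_b)$ to put the energy bounds into the required form and rewriting the transport estimate as $|y_{b'}+(T_{\next(a)}-T_a)(-p_a)-y_b|\le C\alpha^2N^8 r$. Hence $b'\in\{\next(b),\back(b)\}$, which is the first assertion of the lemma.

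For the moreover part I would run the identical argument with the cells $\{\back(a),b''\}$ and $\{a,b\}$, where $b''$ is the $P(\bm{\Gamma})$-partner of $\back(a)$, using $\next(\back(a))=a$ — valid since $a$, being typical, is not an immediate recollision — so that Lemma~\ref{lem:ladder-baby} applies verbatim with the roles of the two cells interchanged. I expect the only real obstacle to be bookkeeping: keeping the impulse sign conventions ($q_a=p_a-p_{a-1}$ for $a\in K^+$ versus $q_a=p_{a-1}-p_a$ for $a\in K^-$) straight through both ordering cases so that the substitutions $v=\pm p_a$ land on the correct side of each kinetic-energy inequality, and verifying that the constants produced by Lemma~\ref{lem:ladder-baby} and Corollary~\ref{cor:simplified-path} (of size $\sim\alpha^2N^8$ and $\sim N^6 k_{max}^2$) stay below the threshold $\sim\alpha^2N^{10}$ in the definition of a ladder-breaking event, which holds once $N$ is large.
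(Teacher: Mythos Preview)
Your proposal is correct and follows essentially the same approach as the paper's proof: both use typicality of $\next(a)$ to force its $P(\bm{\Gamma})$-cell to be a doubleton $\{\next(a),b'\}$ with $b'\in K(\Gamma^-)$, and then invoke Lemma~\ref{lem:ladder-baby} to show that if $b'\notin\{\next(b),\back(b)\}$ the pair $(b,b')$ (or $(b',b)$) is a ladder-breaking event at $b$. Your version is more explicit than the paper's about the sign-convention bookkeeping (the $v=\pm p_a$ substitution in the two ordering cases) and about why $b'$ cannot itself be an immediate recollision, but the underlying argument is identical.
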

\begin{proof}
We will treat only the case $\{a,\next(a)\} \subset \Typical(\bm{\Gamma})$,
as the case $\{a,\back(a)\}\subset \Typical(\bm{\Gamma})$ is similar.

First, we observe that it must be the case that $\next(a)\sim_{P(\bm{\Gamma})} b'$ for some index
$b'\in K(\Gamma^-)$, or else $\next(a)$ would belong to a recollision and therefore not
be typical.
If $\{\next(a),b'\}\not\in P(\bm{\Gamma})$,  then there is an additional collision belonging to
the cluster containing $\{\next(a),b\}$, so $\next(a)\in\supp C(\bm{\Gamma})$
and therefore $\next(a)\not\in\Typical(\bm{\Gamma})$.

We have therefore reduced to the case that $\{\next(a),b'\}\in P(\bm{\Gamma})$
for some $b'\in K(\Gamma^-)$.  Now if $b'$ is not $\next(b)$ or $\back(b)$
then by Lemma~\ref{lem:ladder-baby} the pair $(b,b')$ forms a ladder-breaking event,
which is a contradiction.
\end{proof}

The final ingredient we need is a sufficient condition for the first collision of $\wtild{K}^+$
to pair with the first collision of $\wtild{K}^-$.
\begin{lemma}
\label{lem:first-rung}
Suppose that $\bm{\Gamma}$ is an $N^4$-complete path with no ladder-breaking event of the form $(0,a)$,
and suppose that $|p_{1,0,+} - p_{1,0,-}| \leq N\alpha$.
Let $\next^+(0)$ and $\next^-(0)$ be the first collisions in $\Gamma^+$ and $\Gamma^-$,
respectively, that are not immediate recollisions.
If $\{\next^+(0),\next^-(0)\}\subset \Typical(\bm{\Gamma})$, then
in fact $\{\next^+(0),\next^-(0)\}\in P(\bm{\Gamma})$.
\end{lemma}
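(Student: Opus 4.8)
The plan is to reproduce, with minor modifications, the argument of the Ladder‑forming Lemma~\ref{lem:ladder}, letting the initial datum $p_{1,0}$ play the role of the momentum incoming to a collision and the formal index $0$ play the role of $\back(\cdot)$; the hypothesis ``no ladder‑breaking event of the form $(0,a)$'' replaces the typicality hypothesis on $\back(a)$ used there. I argue by contradiction: suppose $\{\next^+(0),\next^-(0)\}\notin P(\bm\Gamma)$. First I would locate the partner of $\next^+(0)$. Since $V$ is mean zero, any singleton cell of $P(\bm\Gamma)$ contributes nothing to $\Expec\prod_a\Ft{V_{y_a}}(q_a)$, so, exactly as in Lemma~\ref{lem:complete-extended-path}, I may assume the cell of $\next^+(0)$ is not a singleton. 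If that cell had three or more members it would be a cluster, placing $\next^+(0)\in\supp G_C(\bm\Gamma)$ and hence $\next^+(0)\notin\Typical(\bm\Gamma)$; and if $\next^+(0)$ were matched with some $c\in K(\Gamma^+)$ then, as $\next^+(0)\notin E^{imm}(\bm\Gamma)$, the pair $(\next^+(0),c)$ would be a recollision of $\Gamma^+$, again contradicting typicality. So $\{\next^+(0),b'\}\in P(\bm\Gamma)$ for a unique $b'\in K(\Gamma^-)$, whence $|y_{\next^+(0)}-y_{b'}|\le 2r$. Moreover $b'\ge\next^-(0)$: if $b'<\next^-(0)$ then by minimality $b'$ lies in an immediate recollision $\{b',b'\pm1\}\in E^{imm}(\bm\Gamma)$, so the three distinct indices $b'\pm1,b',\next^+(0)$ sit in one cell of $P(\bm\Gamma)$, forcing $\next^+(0)\in\supp G_C(\bm\Gamma)$, a contradiction.

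It then remains to exclude $b'>\next^-(0)$, which I would do by showing that in that case $(0,b')$ is a ladder‑breaking event of $\Gamma^-$. For the nonlocal requirement take the witness $b:=\next^-(0)$, which satisfies $b<b'$ and $b\notin I^{imm}(\bm\Gamma)$ by construction. For the transport condition, note that every collision of $\Gamma^+$ strictly between $0$ and $\next^+(0)$ is an immediate recollision, so Corollary~\ref{cor:simplified-path}, applied from the initial point (using the compatibility bound $|y_{1,1}-((\xi_0^+)_x+s_{1,0}p_{1,0,+})|\le\alpha r$, that is $y_0^+\approx(\xi_0^+)_x$), gives $|y_{\next^+(0)}-((\xi_0^+)_x+T_{\next^+(0)}p_{1,0,+})|\lsim k_{max}^2N^6 r$. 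Combining this with $|y_{\next^+(0)}-y_{b'}|\le 2r$, with the cross‑path bound $|(\xi_0^+)_x-(\xi_0^-)_x|\lsim N^2 r$ (valid in every application of this lemma, the two initial points being either an internal gluing point of the extended path or a point in the support of the band‑limited operator $A$), and with the hypothesis $|p_{1,0,+}-p_{1,0,-}|\le N\alpha$ (whose effect, propagated over the elapsed time $T_{\next^+(0)}\lsim N\tau$, stays well inside the tolerance $\alpha^2N^{10}r$), yields $|y_0^-+tp_{1,0,-}-y_{b'}|\le\alpha^2N^{10}r$ with $t:=T_{\next^+(0)}$. For the kinetic‑energy condition $||p_{1,0,-}+q_{b'}|^2/2-|p_{1,0,-}|^2/2|\le\alpha^2N^{10}|p_{1,0,-}|r^{-1}$, I argue exactly as in Lemma~\ref{lem:ladder-baby}: $N^4$‑completeness of $\bm\Gamma$ controls $q_{b'}$ in terms of $q_{\next^+(0)}$; the initial stretch of $\Gamma^+$ being all immediate recollisions controls $q_{\next^+(0)}$ in terms of $p_{1,0,+}$ (again by completeness, each immediate‑recollision pair changing the momentum by $\lsim N^4 r^{-1}$); the hypothesis exchanges $p_{1,0,+}$ for $p_{1,0,-}$; and conservation of kinetic energy along $\Gamma^-$ (Lemma~\ref{lem:ke-drift}, chained over segments via Lemma~\ref{lem:path-concatenation}) supplies the rest. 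Thus $(0,b')\in\LB(\bm\Gamma)$, contradicting the hypothesis, so $b'=\next^-(0)$ and $\{\next^+(0),\next^-(0)\}\in P(\bm\Gamma)$.

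\textbf{The main obstacle.} The only genuinely delicate step is the error accounting in the second paragraph: the mismatch between $p_{1,0,+}$ and $p_{1,0,-}$ accumulates over an elapsed time as large as $N\tau$, and, together with the $\lsim N^2 r$ slack between the two initial points and the accumulated immediate‑recollision perturbations of the momenta, this total drift must still be shown to lie inside the $\alpha^2N^{10}r$ window built into the definition of a ladder‑breaking event — this is precisely why that definition carries such generous powers of $N$ and $\alpha$. A minor secondary point will be the (routine) extension of Corollary~\ref{cor:simplified-path} so that the straight‑line estimate may start from the initial datum $0$ rather than from a genuine collision, and the corresponding handling of the single kinetic‑energy constraint appearing in the $(0,a)$‑form of the ladder‑breaking condition.
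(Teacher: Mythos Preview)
Your proposal is correct and follows essentially the same approach as the paper's proof, which is a terse two-sentence sketch: locate the partner $b$ of $\next^+(0)$ in $K(\Gamma^-)$ via typicality, then argue that $b\neq\next^-(0)$ forces a ladder-breaking event of the form $(0,\cdot)$. You supply the details the paper omits---in particular the separate handling of $b'<\next^-(0)$ via the cluster argument, and the explicit error accounting for the transport and kinetic-energy conditions---and you correctly flag the one genuinely delicate point, namely that the accumulated momentum mismatch over time $\lsim N\tau$ must fit inside the $\alpha^2 N^{10} r$ tolerance.
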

\begin{proof}
Since $\next^+(0)\in\Typical(\bm{\Gamma})$, in particular it follows that
$\{\next^+(0),b\}\in P(\bm{\Gamma})$ for some $b\in K(\Gamma^-)$.  If $b\not=\next^-(0)$, then one of $(0,b)$
or $(0,\next^-(0))$ forms a ladder-breaking event.
\end{proof}

\subsection{Global structure of the partition $P(\bm{\Gamma})$}
Lemmas~\ref{lem:ladder} and~\ref{lem:first-rung} provide only local information about the
partition $P(\bm{\Gamma})$.  We will turn this local information into a global rigidity result.
The rigidity result we will prove is that the pairs formed by typical collisions can be partitioned
into a small number of special partitions called ladder and anti-ladder matchings.

\begin{figure}
\includegraphics[scale=0.5]{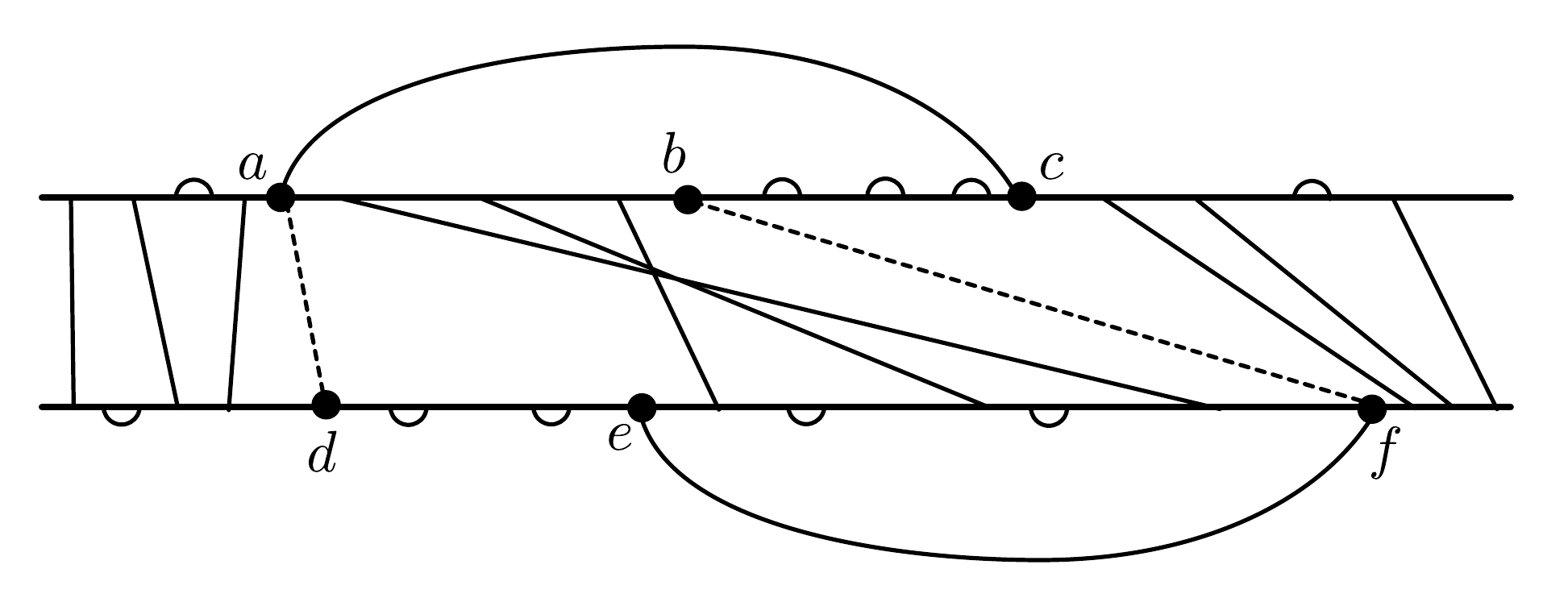}
\caption{An example of a partition compatible with the geometric data of Figure~\ref{fig:diagram-example}.
In this case the interval set is $\{(0,a),(a,b),(b,c),(c,\infty),(0,d),(d,e),(e,f),(f,\infty)\}$.
The pair of intervals $\{(a,b),(e,f)\}$ forms a renormalized anti-ladder, and $(b,c)$ shows an
example of a simple partition.}
\label{fig:filled-diagram}
\end{figure}

First we define ladders and anti-ladders in some generality.
\begin{definition}[Ladders and anti-ladders]
\label{def:ladders}
Let $A$ and $B$ be two finite disjoint ordered sets with $|A|=|B|$.  The \emph{ladder matching}
$P_{lad}\in\mcal{P}(A\sqcup B)$ is the unique matching of the form
$P_{lad}=\{\{a,\varphi(a)\}\}_{a\in A}$ where $\varphi:A\to B$ is the unique order-preserving
bijection between $A$ and $B$.
The \emph{anti-ladder} matching is given by
$\{\{a,\varphi^-(a)\}\}_{a\in A}$ with $\varphi^-$ being the unique order-reversing
bijection.
\end{definition}
Note that a partition $P\in\mcal{P}(A\sqcup B)$ can be both a ladder and an anti-ladder if $|A|=|B|=1$.

The partition $P(\bm{\Gamma})$ can also have many instances of immediate recollisions, so
we introduce renormalized ladders and anti-ladders to allow for immediate recollisions to exist
between the ``rungs'' of the ladder.
\begin{definition}[Renormalized ladders and anti-ladders]
\label{def:renormalized-ladders}
Let $A$ and $B$ be two finite disjoint ordered sets.  A renormalized ladder is any partition
$P\in\mcal{P}(A\sqcup B)$ for which there exist sets $I_A\subset A$ and $I_B\subset B$
satisfying the following properties:
\begin{itemize}
\item Every set $S\in P$ has exactly two elements, $|S|=2$.  That is, $P$ is a matching.
\item If $a,a'\in A$ and $a\sim_P a'$, then $a=a'+_A1$ or $a'=a+_A1$ and $a,a'\in I_A$.
\item Likewise, if $b,b'\in B$ and $b\sim_P b'$, then $b=b'+_B1$ or $b'=b+_B1$ and $b,b'\in I_B$.
\item The partition $P$ restricted to the set $(A\setminus I_A)\sqcup (B\setminus I_B)$
is a ladder partition.
\end{itemize}
Let $\mcal{Q}_{rl}\subset \mcal{P}(A\sqcup B)$ be the set of renormalized ladder
partitions.  We define renormalized anti-ladders in a similar way (that is, satisfying the first
three conditions but so that $P$ restricted to $(A\setminus I_A)\sqcup (B\setminus I_B)$ is an
anti-ladder).  Let $\mcal{Q}^-_{rl}$ denote the set of renormlized anti-ladder partitions.
\end{definition}

The other important type of partition is a simple partition, which can be thought of as a ladder
with no rungs.

\begin{definition}[Simple partitions]
Let $A$ be a finite ordered set.  The simple partition on $A$ is the perfect matching $M$
such that every pair $\{a,a'\}\in M$ satisfies $a'=a+_A1$ or $a=a'+_A1$.
\end{definition}

We will partition the index set $K(\bm{\Gamma})$ into a collection of intervals that cover
the typical collision indices and an exception set of atypical collisions.
Given an extended path $\bm{\Gamma}$
on a collision index set $K(\bm{\Gamma})$, and $a,b\in K(\bm{\Gamma})$ with $\sign(a)=\sign(b)$,
we define the collision interval $(a,b)_{K(\bm{\Gamma})}$ to be the set of collisions
\[
(a,b)_{K(\bm{\Gamma})} := \{c\in K(\bm{\Gamma}) \mid a < c < b\}.
\]
We also define the ``one-sided'' intervals
\[
(0,a)_{K(\bm{\Gamma})} := \{c\in K(\bm{\Gamma}) \mid c < a\}.
\]
and
\[
(a,\infty)_{K(\bm{\Gamma})} := \{c\in K(\bm{\Gamma}) \mid a<c\}.
\]
In a similar way we can define closed intervals such as $[a,b]_{K(\bm{\Gamma})}$, which
include the endpoints $a$ and $b$.
We note that intervals are totally ordered sets, with an ordering inherited from the
partial order on $K(\bm{\Gamma})$.  We write $\sign(I)$ to mean the sign of $I$, so
that $\sign((a,b)_{K(\bm{\Gamma})}) = \sign(a)=\sign(b)$.
An interval $(a,b)_{K(\bm{\Gamma})}$ is said to be \textit{typical}
with respect to $\bm{\Gamma}$ if $(a,b)_{K(\bm{\Gamma})}\subset \Typical(\bm{\Gamma})$.
A maximal typical interval is a typical interval not strictly contained in any other typical interval.
The maximal typical intervals are those of the form $(a,b)_{K(\bm{\Gamma})}$ for
$a,b\not\in\Typical(\bm{\Gamma})$ or $(0,a)_{K(\bm{\Gamma})}$ or $(a,\infty)_{K(\bm{\Gamma})}$ for
$a\not\in\Typical(\bm{\Gamma})$.
We define $\mcal{I}(\bm{\Gamma})$ to be the collection of maximal typical intervals defined on
$\bm{\Gamma}$.

The main global result is that maximal typical intervals either form ladders, anti-ladders, or simple
partitions.
\begin{lemma}
\label{lem:interval-pair}
Let $\bm{\Gamma}$ be an $N^4$-complete extended path, and let $\mcal{I}(\bm{\Gamma})$ be
its collection of maximal typical intervals.
Then for each interval $I=(a,b)_{K(\bm{\Gamma})}\in\mcal{I}$, at least one of the following holds:
\begin{itemize}
\item The interval $I$ forms a simple partition, in the sense that $P(\bm{\Gamma})$ is $I$-local
and the restriction $P(\bm{\Gamma})|_I$ is a simple partition.
\item There exists $I'\in \mcal{I}(\bm{\Gamma})$
with $\sign(I)=-\sign(I')$ such that $P(\bm{\Gamma})$ is local
to $I\cup I'$ and $P|_{I\cup I'}$ is either a renormalized ladder partition or a
renormalized anti-ladder partition.
\end{itemize}
\end{lemma}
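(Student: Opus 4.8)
The plan is to bootstrap the local rigidity statement of Lemma~\ref{lem:ladder} along the collisions of a maximal typical interval, forcing its rungs to form a renormalized ladder or anti-ladder with a single other interval. First I would record the bookkeeping consequences of typicality. If $a\in\Typical(\bm{\Gamma})$ then its cell in $P(\bm{\Gamma})$ has exactly two elements: a cell of size $\geq 3$ would place $a$ in a cluster of $C(\bm{\Gamma})$, and a singleton cell is excluded because it makes the factored expectation vanish. Since typicality is a property of the whole cell, the partner of a typical collision is again typical. Hence for typical $a$, either the partner $b$ has the same sign, which forces $b=a\pm1$ and $\{a,b\}\in E^{imm}(\bm{\Gamma})$ (a non-adjacent same-sign partner would be a recollision), or $b$ has the opposite sign and I call $a$ a \emph{rung}. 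A maximal typical interval $I$ is a maximal run of consecutive same-sign typical collisions, so its non-rung members are precisely those in $E^{imm}(\bm{\Gamma})$, each one forming a full cell of $P(\bm{\Gamma})$; in particular the collisions lying between two consecutive rungs of $I$ are immediate recollisions, whence $\next$ sends each rung of $I$ to the next rung and $\back$ sends it to the previous one.

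If $I$ has no rung, $P(\bm{\Gamma})$ matches $I$ to itself by immediate recollisions, and since these pair consecutive indices the matching is forced to be $\{a_1,a_2\},\{a_3,a_4\},\dots$ for $a_1<\dots<a_m$ enumerating $I$ — the simple partition, giving the first alternative. Otherwise let $a_{i_1}<\dots<a_{i_\ell}$ be the rungs of $I$ with partners $b_{i_1},\dots,b_{i_\ell}$ (all typical). I would apply Lemma~\ref{lem:ladder} to each pair $\{a_{i_s},b_{i_s}\}$: the hypothesis $\{a_{i_s},\next(a_{i_s})\}\subset\Typical(\bm{\Gamma})$ holds because $a_{i_s}$ and $\next(a_{i_s})=a_{i_{s+1}}$ both lie in $I$, and the hypothesis that $b_{i_s}$ avoids every ladder-breaking event follows from the typicality of $b_{i_s}$ and of the partner of $a_{i_{s+1}}$, combined with the lemma covering ladder-breaking events by tube and cone events. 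This yields $b_{i_{s+1}}\in\{\next(b_{i_s}),\back(b_{i_s})\}$, and the $\back$-half of Lemma~\ref{lem:ladder} at $a_{i_{s+1}}$ gives symmetrically $b_{i_s}\in\{\next(b_{i_{s+1}}),\back(b_{i_{s+1}})\}$. So $b_{i_1},\dots,b_{i_\ell}$ is a walk in $\Gamma^-$ moving at each step to an immediate neighbour in the order obtained after deleting immediate recollisions, and injectivity of the matching forbids a reversal of direction — a reversal at step $s$ would force $b_{i_{s+2}}=b_{i_s}$. Hence the walk is monotone, and the restriction of $P(\bm{\Gamma})$ to the rungs of $I$ and $\{b_{i_1},\dots,b_{i_\ell}\}$ is a ladder matching (increasing case) or an anti-ladder matching (decreasing case), with consecutive $b_{i_s}$ separated only by immediate recollisions.

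Finally I would take $I'$ to be the maximal typical interval containing $b_{i_1}$, check that the whole chain $\{b_{i_1},\dots,b_{i_\ell}\}$ together with the intervening immediate recollisions lies in $I'$, and run the same walk starting from $I'$; since $b_{i_1}\in I'$ is matched to $a_{i_1}\in I$, the two walks determine each other, so the rungs of $I$ and of $I'$ are in bijection, $P(\bm{\Gamma})$ is local to $I\cup I'$, and its restriction there is a renormalized ladder or anti-ladder with $I_A,I_B$ the immediate-recollision indices. The step I expect to be the main obstacle is exactly this last locality-and-maximality bookkeeping: one must rule out that an immediate recollision squeezed between two matched rung-partners is itself atypical, which would split the rungs of $I$ across two distinct maximal typical intervals; concretely this amounts to showing that such an immediate recollision forms an isolated cell meeting no cone event, which I would extract from the geometric constraints of Lemma~\ref{lem:ladder-baby} and Corollary~\ref{cor:simplified-path} on the positions $y_{b_{i_s}}$. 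Two lesser points also need care: the uniform verification, at every step of the walk, of the ladder-breaking hypothesis of Lemma~\ref{lem:ladder}, and the case where $I$ abuts the start of the path, which should be treated separately using Lemma~\ref{lem:first-rung}.
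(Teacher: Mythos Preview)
Your approach is essentially the same as the paper's: case-split on whether $I$ contains a rung with an opposite-sign partner, and in the ladder case repeatedly apply Lemma~\ref{lem:ladder} to propagate the matching, with your monotonicity-via-injectivity argument making explicit what the paper leaves to the phrase ``follows from repeated application of Lemma~\ref{lem:ladder}.'' The paper's four-line proof does not address the locality obstacle you flag at the end (that intermediate immediate recollisions between rung-partners might be atypical and split $I'$), so your concern there is legitimate bookkeeping that the paper simply elides rather than resolves.
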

\begin{proof}
Suppose that there exists $a\in I$ such that $a\sim_{P(\bm{\Gamma})} b$ for
some $b\in K(\bm{\Gamma})$ with $\sign(b)=-\sign(a)$.  Note that for such an index,
$a\in \wtild{K}(\bm{\Gamma})$. The collision index $b$ belongs
to a unique interval $I'$ with $\sign(I')=-\sign(I)$.  Then the fact that $P(\bm{\Gamma})|_{I\cup I'}$
forms a ladder or an anti-ladder follows from repeated application of Lemma~\ref{lem:ladder}.

On the other hand, if $I$ does not contain any external collisions then we are in the first case,
and $I$ must form a simple partition.
\end{proof}

\subsection{Geometric features of ladders}
We conclude this section by recording geometric features of paths that correlate along a ladder.

\begin{lemma}
\label{lem:rung-lemma}
Let $\bm{\Gamma}$ be a $N^4$-complete path and suppose that
$a\in K(\Gamma^+)$, $b\in K(\Gamma^-)$ are collision indices
satisfying $\{a,b\}\in P(\bm{\Gamma})$ and $\{\next(a),\next(b)\}\in P(\bm{\Gamma})$.
Then the following hold:
\begin{align}
\label{eq:time-shift-bd}
|(t_{\next(a)} - t_a) - (t_{\next(b)} - t_b)| &\leq C N^6 |p_{1,0}|^{-1}r,\\
\label{eq:momentum-shift-bd}
| (p_{\next(b)}-p_{\next(a)}) - (p_b-p_a)| &\leq C N^4 r^{-1}, \\
\label{eq:momentum-close-bd}
|p_a - p_b| &\leq  N^6 r (t_{\next(a)}-t_a)^{-1}.
\end{align}
\end{lemma}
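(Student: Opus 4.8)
The plan is to reduce all three estimates to a single vector constraint comparing the two ``rungs'' $\{a,\next(a)\}$ and $\{b,\next(b)\}$, and then to extract the scalar (time) information before the vector (momentum) information. Set $\Delta_a := t_{\next(a)}-t_a$ and $\Delta_b := t_{\next(b)}-t_b$. From $\{a,b\},\{\next(a),\next(b)\}\in P(\bm{\Gamma})$ we have $|y_a-y_b|\le 2r$ and $|y_{\next(a)}-y_{\next(b)}|\le 2r$. Every collision strictly between $a$ and $\next(a)$ is an immediate recollision, hence paired in $P(\bm{\Gamma})$ with an adjacent collision, so the hypotheses of Corollary~\ref{cor:simplified-path} are met on the $\Gamma^+$-stretch from $a$ to $\next(a)$, and likewise on the $\Gamma^-$-stretch from $b$ to $\next(b)$; this gives $|y_{\next(a)}-(y_a+\Delta_a p_a)|\le CN^6 r$ and $|y_{\next(b)}-(y_b+\Delta_b p_b)|\le CN^6 r$. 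Combining these with the two proximity bounds yields the basic identity
\[
\bigl|\Delta_a p_a - \Delta_b p_b\bigr| \le C N^6 r, \qquad (\star)
\]
from which all three conclusions will follow.

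First I would record near-conservation of momentum magnitude: by the kinetic-energy clause of Lemma~\ref{lem:path-concatenation} applied on $\Gamma^+$ and $\Gamma^-$, together with $T_a,T_b\le N\tau$, one gets $\bigl||p_a|-|p_{1,0}|\bigr|,\bigl||p_b|-|p_{1,0}|\bigr| \le C\alpha N r^{-1}$, and since $|p_{1,0}|\gg \alpha N r^{-1}$ for $\eps$ small (the support hypothesis forces $|p_{1,0}|\gtrsim\eps^\theta$) we also have $|p_a|,|p_b|\le 2|p_{1,0}|$. Taking norms in $(\star)$ and substituting $|p_a|=|p_{1,0}|+\delta_a$, $|p_b|=|p_{1,0}|+\delta_b$ with $|\delta_a|,|\delta_b|\le C\alpha N r^{-1}$ gives $|p_{1,0}|\,|\Delta_a-\Delta_b| \le CN^6 r + C\alpha N^2\tau r^{-1}$, and the second term is absorbed into the first because $N\le\eps^{-\kappa}$, $\tau=\eps^{-2+\kappa/2}$, $r=\eps^{-1}$ and $\alpha=|\log\eps|^{10}$ is polylogarithmic; this is~\eqref{eq:time-shift-bd}. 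Feeding~\eqref{eq:time-shift-bd} back into $(\star)$, $\Delta_a(p_a-p_b) = (\Delta_b-\Delta_a)p_b + e$ with $|e|\le CN^6 r$, and $|(\Delta_b-\Delta_a)p_b|\le CN^6|p_{1,0}|^{-1}r\cdot 2|p_{1,0}| = CN^6 r$, so $\Delta_a|p_a-p_b|\le CN^6 r$, which is~\eqref{eq:momentum-close-bd}.

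For~\eqref{eq:momentum-shift-bd} I would argue directly with impulses rather than through $(\star)$. Along $\Gamma^+$ the momentum change from $a$ to $\next(a)$ telescopes to $q_{\next(a)}$ plus a sum of immediate-recollision contributions, each of the form $q_c+q_{c+1}=p_{c+1}-p_{c-1}$ and of size $O(N^4 r^{-1})$ (it is a two-element $P$-cell, so this is the $N^4$-completeness bound, and one uses $N\le\eps^{-\kappa}$ to absorb the number of such cells into the constant); hence $p_{\next(a)} = p_a + q_{\next(a)} + O(N^4 r^{-1})$, and with the reversed sign convention on $\Gamma^-$, $p_{\next(b)} = p_b - q_{\next(b)} + O(N^4 r^{-1})$. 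Subtracting and using $|q_{\next(a)}+q_{\next(b)}|\le N^4 r^{-1}$, which holds since $\{\next(a),\next(b)\}$ is a cell of $P(\bm{\Gamma})$ and $\bm{\Gamma}$ is $N^4$-complete, yields~\eqref{eq:momentum-shift-bd}.

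The delicate point is the order of operations in the second paragraph: the identity $(\star)$ does not by itself separate the scalars $\Delta_a,\Delta_b$ from the vectors $p_a,p_b$ --- a priori $\Delta_b/\Delta_a$ could differ from $1$ --- so one must first use the fact that $|p|$ barely drifts to extract the scalar bound~\eqref{eq:time-shift-bd} from the magnitudes, and only then return to $(\star)$ to deduce~\eqref{eq:momentum-close-bd}. The supporting bookkeeping, namely that the energy-drift term $\alpha\tau^{-1}r^{-1}|T_a-T_{a'}|$ from Lemma~\ref{lem:path-concatenation} and the accumulated immediate-recollision corrections all stay $\lesssim N^6 r$, is exactly what the hypothesis $N\le\eps^{-\kappa}$ with $\alpha$ polylogarithmic provides; I do not expect any of these routine comparisons to be a real obstacle.
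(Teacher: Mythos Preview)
Your proof is correct and follows essentially the same approach as the paper's: you derive the key vector inequality $(\star)$ from Corollary~\ref{cor:simplified-path} and the cell proximities, extract~\eqref{eq:time-shift-bd} by passing to magnitudes and using $||p_a|-|p_b||\le C\alpha N r^{-1}$, feed that back into $(\star)$ for~\eqref{eq:momentum-close-bd}, and obtain~\eqref{eq:momentum-shift-bd} directly from $N^4$-completeness---the only differences from the paper are the order of the three conclusions and your more explicit bookkeeping of the $\alpha N^2\tau r^{-1}$ term.
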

\begin{proof}
First observe that~\eqref{eq:momentum-shift-bd} is a direct consequence of
the fact that $\bm{\Gamma}$ is $N^4$-complete.

The bounds~\eqref{eq:time-shift-bd} and~\eqref{eq:momentum-close-bd} are less trivial.
Now we prove~\eqref{eq:time-shift-bd}.  By Corollary~\ref{cor:simplified-path},
\[
|(y_{\next(a)} - y_a) - (t_{\next(a)}-t_a)p_a| \leq \alpha N^6r.
\]
On the other hand, $|y_{\next(a)}-y_{\next(b)}|\leq 4r$ and $|y_{a}-y_{b}|\leq 4r$, so also
\[
|(y_{\next(a)} - y_a) - (t_{\next(b)}-t_b)p_b| \leq 2\alpha N^6r.
\]
It follows that
\begin{equation}
\label{eq:tp-tp}
|(t_{\next(a)}-t_a) p_a - (t_{\next(b)}-t_b)p_b| \leq 3\alpha N^6 r.
\end{equation}
In particular, using $|u-v| \geq ||u|-|v||$ for any vectors $u,v\in\Real^d$,
we have
\[
|(t_{\next(a)}-t_a) |p_a| - (t_{\next(b)}-t_b)|p_b|| \leq 3\alpha N^6 r.
\]
By the conservation of kinetic energy condition, $||p_a| -|p_b||\leq C\alpha N r^{-1}$.
Therefore
\[
(t_{\next(b)}-t_b)||p_a|-|p_b||\leq C\alpha N^2 r^{-1} \tau \leq C\alpha N^2 r,
\]
so we have
\[
|p_a||(t_{\next(a)}-t_a)  - (t_{\next(b)}-t_b)| \leq 4\alpha N^6 r.
\]
Dividing through by $|p_a| \geq \frac{1}{2} |p_{1,0}|$ finishes the proof of~\eqref{eq:time-shift-bd}.

The remaining nontrivial bound is~\eqref{eq:momentum-close-bd}.  We obtain this
from~\eqref{eq:tp-tp} after first replacing $t_{\next(b)}-t_b$ by $t_{\next(a)}-t_a$
using~\eqref{eq:time-shift-bd} to obtain
\[
(t_{\next(a)}-t_a) |p_a - p_b| \leq C\alpha N^6 r.
\]
Dividing through by $t_{\next(a)}-t_a$ finishes the result.
\end{proof}

Recall that for a phase space point $\xi=(x,p)\in\PhaseSpace$, we define the free evolution $U_s(\xi)$,
also written simply $\xi_s$, by
\[
U_s((x,p)) := (x+sp,p).
\]
\begin{definition}[Twinning in phase space]
We say that two points $\xi,\eta\in\PhaseSpace$ are $(\beta,s)$-twinned if
$d_r(\xi,U_s(\eta))\leq \beta$.
\end{definition}

A corollary of Lemma~\ref{lem:rung-lemma} is that the phase space endpoints of a long ladder must be twinned.
\begin{lemma}
\label{lem:twinning-sync}
Suppose that $\bm{\Gamma}$ is an $N^4$-complete path and the restriction of the
partition to the pair of intervals
$[a,a']_{K^+(\Gamma^+)} \cup [b,b']_{K^-(\Gamma^-)}$ forms a renormalized ladder partition
so that in particular $\{a,b\}\in P(\bm{\Gamma})$ and $\{a',b'\}\in P(\bm{\Gamma})$.
Suppose moreover that $\ell(a')-\ell(a) > 1$.

Then for some $s\in\Real$, $\xi_{2\ell(a)-1}$ is $s$-twinned to $\xi_{2\ell(b)-1}$ and
$\xi_{2\ell(a')-2}$ is $(N^{10},s)$-twinned to $\xi_{2\ell(b')-2}$.
\end{lemma}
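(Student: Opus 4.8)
The plan is to reduce the statement to the behaviour of the \emph{rungs} of the ladder, combining the straight-line estimate of Corollary~\ref{cor:simplified-path} with the rung-synchronization estimates of Lemma~\ref{lem:rung-lemma}. List the matched pairs of the renormalized ladder in increasing order as $\{c_0,c_0'\}=\{a,b\},\{c_1,c_1'\},\dots,\{c_m,c_m'\}=\{a',b'\}$ with $c_i\in K(\Gamma^+)$, $c_i'\in K(\Gamma^-)$; by definition of a renormalized ladder $c_{i+1}=\next(c_i)$, $c_{i+1}'=\next(c_i')$, every collision strictly between $c_i$ and $c_{i+1}$ (resp. $c_i'$ and $c_{i+1}'$) is an immediate recollision, and $\{c_i,c_i'\}\in P(\bm{\Gamma})$ gives $|y_{c_i}-y_{c_i'}|\le 2r$ for all $i$. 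By Corollary~\ref{cor:simplified-path} the concatenated trajectory of $\Gamma^+$ from $c_i$ to $c_{i+1}$ stays within $O(k_{max}^2N^6r)$ of the line $t\mapsto y_{c_i}+(t-T_{c_i})P_{c_i}$, and likewise for $\Gamma^-$; feeding this into Lemma~\ref{lem:path-concatenation} and the compatibility conditions defining $\Omega_{\alpha,S}^N$, any checkpoint $\xi_{2\ell-1}$ (time $\ell\tau$) or $\xi_{2\ell-2}$ (time $(\ell-1)\tau$) whose time $t$ lies in $[T_{c_i},T_{c_{i+1}}]$ satisfies $d_r(\xi_{2\ell-1},(y_{c_i}+(t-T_{c_i})P_{c_i},P_{c_i}))\le N^8$, and symmetrically on $\Gamma^-$.

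Next I would globalize Lemma~\ref{lem:rung-lemma}. Set $\delta_i:=T_{c_i}-T_{c_i'}$. Equation~\eqref{eq:time-shift-bd} gives $|\delta_{i+1}-\delta_i|\le CN^6|P_{1,0}|^{-1}r$, so summing over the $\le Nk_{max}$ rungs, $|\delta_i-\delta_0|\le\Delta:=CN^7k_{max}|P_{1,0}|^{-1}r$, which is $\ll\tau$ once one uses the support lower bound $|P_{1,0}|\gtrsim\eps^\theta$. Since $T_{a'}-T_a\ge(\ell(a')-\ell(a)-1)\tau\ge\tau$ while there are $\le Nk_{max}$ rungs, some gap $T_{c_{i^*+1}}-T_{c_{i^*}}$ exceeds $\tau/(Nk_{max})$, so~\eqref{eq:momentum-close-bd} forces $P_{c_{i^*}}$ and $P_{c_{i^*}'}$ to be close; telescoping~\eqref{eq:momentum-shift-bd} then propagates this to $r|P_{c_i}-P_{c_i'}|\le N^9$ for all $i$. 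So matched rungs agree in $d_r$ to within $N^9$, and all $\delta_i$ lie within $\Delta$ of $\delta_0$.

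Finally, let $c_p$ (resp. $c_q$) be the last rung with $T_{c_p}\le\ell(a)\tau$ (resp. $T_{c_q}\le(\ell(a')-1)\tau$), so the straight piece $(c_p,c_{p+1})$ carries $\xi_{2\ell(a)-1}$ and $(c_q,c_{q+1})$ carries $\xi_{2\ell(a')-2}$; define $c_{p'}',c_{q'}'$ analogously from $\ell(b)\tau$ and $(\ell(b')-1)\tau$. In the principal case — when no rung time lies within $\Delta$ of a segment boundary — one checks $p'=p$, $q'=q$ and $\ell(a)-\ell(b)=\ell(a')-\ell(b')$. Put $s:=(\ell(a)-\ell(b))\tau-\delta_p$, so that $\ell(b)\tau-T_{c_p'}+s=\ell(a)\tau-T_{c_p}$; then by Step~1 both $\xi_{2\ell(a)-1}$ and $U_s(\xi_{2\ell(b)-1})$ are within $d_r$-distance $2N^8$ of the matched rung states $(y_{c_p}+(\ell(a)\tau-T_{c_p})P_{c_p},P_{c_p})$ and $(y_{c_p'}+(\ell(a)\tau-T_{c_p})P_{c_p'},P_{c_p'})$ respectively (the extra factor accounting for the $d_r$-distortion of the free flow $U_s$, $|s|\le N\tau$, which costs at most $1+N\tau r^{-2}=1+N\eps^{\kappa/2}$). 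A routine computation using $|y_{c_p}-y_{c_p'}|\le 2r$, $r|P_{c_p}-P_{c_p'}|\le N^9$, and $\ell(a)\tau-T_{c_p}\le\tau$ (so the momentum mismatch contributes $\le\tau r^{-2}N^9=\eps^{\kappa/2}N^9$ to the position part of $d_r$) then gives the first twinning. Running the same computation with $c_q,c_{q'}'$ and the \emph{same} $s$ controls $d_r(\xi_{2\ell(a')-2},U_s(\xi_{2\ell(b')-2}))$; the one new contribution is $r^{-1}|\delta_q-\delta_p|\,|P_{c_q}|$, bounded by $4CN^7k_{max}$ since $|\delta_q-\delta_p|\le 2\Delta$ and $\Delta|P_{1,0}|=CN^7k_{max}r$, and the identity $\ell(a)-\ell(b)=\ell(a')-\ell(b')$ is precisely what removes a would-be $\tau$-sized term here. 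Choosing the exponents and constants so that every error is $\le N^{10}$ completes this case.

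The step I expect to be the real obstacle is the bookkeeping compressed into ``the principal case''. Because the offsets $\delta_i$ drift by up to $\Delta$, a matched rung can land within $O(\Delta)$ of a segment boundary, and then the ``last rung of a segment'' indices $p,p'$ (or $q,q'$) may disagree, in which event $\xi_{2\ell(a)-1}$ and $\xi_{2\ell(b)-1}$ carry the momenta of \emph{different} rungs and are not twinned for any shift. I would handle this by observing that such configurations confine one of the interval-time variables to a window of width $O(\Delta)\ll\tau$, so that after the path-integral cutoffs are in place the offending contribution is of lower order — alternatively, by reading $s$ off the single matched rung nearest in time to the checkpoints on both sides and exploiting $\ell(a')-\ell(a)>1$ to guarantee a genuinely long common straight stretch. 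A secondary point is the numerology: one must track the $d_r$-distortion of the free flow $U_s$ and the $|P_{1,0}|^{-1}$ factors in $\Delta$ carefully to confirm that all error terms end at $\le N^{10}$ rather than some slightly larger power of $N$.
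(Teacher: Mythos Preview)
Your approach is essentially the same as the paper's: both arguments pigeonhole to find a pair of consecutive rungs separated by at least $k_{max}^{-1}\tau$ (you phrase it as $\tau/(Nk_{max})$, which is weaker but still sufficient), invoke~\eqref{eq:momentum-close-bd} from Lemma~\ref{lem:rung-lemma} at that long gap to pin the momenta together at one rung, and then telescope~\eqref{eq:momentum-shift-bd} to propagate the momentum agreement to all rungs, in particular to the endpoints $a,a',b,b'$.

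The paper's own proof is in fact much terser than yours --- it stops after establishing $|p_a-p_b|\le 2N^8 r^{-1}$ and $|p_{a'}-p_{b'}|\le 2N^8 r^{-1}$ and does not spell out the passage from rung-level momentum agreement to twinning of the checkpoint variables $\xi_{2\ell(a)-1}$, $\xi_{2\ell(b)-1}$, $\xi_{2\ell(a')-2}$, $\xi_{2\ell(b')-2}$. You have filled in considerably more detail: the time-offset control via~\eqref{eq:time-shift-bd}, the identification of the ``last rung before a segment boundary,'' and the explicit $d_r$ computation at the checkpoints. The boundary-case subtlety you flag (a rung landing within $O(\Delta)$ of a segment boundary, so that the indices $p,p'$ may disagree) is a genuine bookkeeping issue that the paper's sketch simply does not engage with; your proposed resolution --- that such configurations are either negligible in measure or can be handled by choosing a nearby common rung, exploiting $\ell(a')-\ell(a)>1$ --- is reasonable and in the spirit of the surrounding arguments, though neither you nor the paper carries it out in full.
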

\begin{proof}
Because there are at most $k_{max}$ collisions in a segment, there must exist a pair of
consecutive external collisions $c,c'\in[a,a']$ that are paired with consecutive
collisions $d,d'\in[b,b']$ and which satisfy $|t_c-t_{c'}| \geq k_{max}^{-1}\tau$.
Applying Lemma~\ref{lem:rung-lemma} we therefore conclude that $|p_c-p_d|\leq N^8 r^{-1}$.
Then iteratively applying the bound~\eqref{eq:momentum-shift-bd} we conclude
that $|p_a - p_b| \leq 2N^8 r^{-1}$ and $|p_{a'}-p_{b'}|\leq 2N^8 r^{-1}$.

%TODO: Write this out on a park.
\end{proof}

\section{Skeletons and diagrams}
\label{sec:skeletons}
One way to apply Lemma~\ref{lem:interval-pair} is to decompose the path integral
according to the data
\[(E^{rec}(\bm{\Gamma}),E^{tube}(\bm{\Gamma}),E^{cone}(\bm{\Gamma}),
G_C(\bm{\Gamma}), G_A(\bm{\Gamma})),
\]
where $G_A$ is the graph of atypical collisions,
\[
G_A(\bm{\Gamma}) :=
\{(a,b)\in G(\bm{\Gamma}) \mid a\not\in \Typical(\bm{\Gamma}) \text{ or } b\not\in \Typical(\bm{\Gamma})\}.
\]
Within each piece, the information about the events and the graphs $G_C$ and $G_A$ about places a strong
constraint on $P(\bm{\Gamma})$ by Lemma~\ref{lem:interval-pair} and therefore one can hope to split
up the expectation
$\Expec O_{\Gamma^-}^* A O_{\Gamma^+}$ in a nice way.  The problem with this approach is that there are simply
too many graphs to sum over.  As an extreme example, suppose that the graph $C(\bm{\Gamma})$ is connected.
This places a strong constraint on every single collision in $\Gamma$, so one can hope that the contribution
of such paths is on the order $\eps^{cN}$.  On the other hand, there are $2^{\Theta(N^2)}$ connected paths
on $CN$ collisions, and the combinatorial factor $2^{N^2}$ swamps the factor of $\eps^{cN}$.

We therefore need a more parsimonious way of recording enough information about the geometry of $\bm{\Gamma}$
that does not contain too much information about $\bm{\Gamma}$ and which lends itself to a decomposition of
the sum.

The approach we take in this paper is
to record minimal spanning forests of the graphs $E^{rec}(\bm{\Gamma})$,$E^{tube}(\bm{\Gamma})$,
$E^{cone}(\bm{\Gamma})$, $G_C(\bm{\Gamma})$, and $G_A(\bm{\Gamma})$.  To be more precise, let
$\mbf{G}_{all} := \binom{\mbf{K}_{all}}{2}$ be the set of all pairs of distinct collision indices
in $\mbf{K}_{all}$.  Let $\mbf{G}^\pm\subset E(\mbf{K}_{all})$ be the set of pairs $(a,b)$ with
$\sign(a)\not=\sign(b)$ (so $\mbf{G}_{all}$ is the complete graph on $\mbf{K}_{all}$, and
$\mbf{G}^\pm$ is a complete bipartite graph).
We fix an arbitrary assignment of weights $w:\mbf{G}_{all}\to\Real^+$
to each edge so that each edge has a distinct weight, and so that $w(e)>w(e')$ whenever
$e\in \mbf{G}^\pm$ and $e'\not\in \mbf{G}^\pm$.

Then, given a graph $G\subset \mbf{G}_{all}$, we define the minimal spanning forest $F_G$ to be the graph
of minimal total weight which has the same connected components as $G$.

Before we define the skeleton of a graph, we record a slight modification to the graph $G(\bm{\Gamma})$ that we
will use.  We redefine
\begin{align*}
G(\bm{\Gamma}) = \{(a,b) \in E(\mbf{K}(\Gamma))
\mid
|y_a-y_b|\leq 2r \text{ and }\sign(a)\not=\sign(b)
\text{ or } |y_a-y_b|\leq 4r\}.
\end{align*}
The utility of this new definition is that, if $(a,b),(a,c)\in G(\bm{\Gamma})\cap E^\pm$,
then $(b,c)\in G(\bm{\Gamma})$.  This, along with the dominance of the weights
of the edge set $E^\pm$, ensure that the minimal forest $F_G(\bm{\Gamma})$ has the property
that no two edges in $F_G(\bm{\Gamma})\cap E^\pm$ are incident to a vertex.  We also note that this
modification of the definition of $G(\bm{\Gamma})$, though it may change the cluster graph and therefore
the set of typical indices, otherwise does not affect the conclusions of Lemma~\ref{lem:interval-pair}.

\begin{definition}[Geometric skeleton]
The  skeleton   $\mcal{F}(\bm{\Gamma})$  of a path $\bm{\Gamma}$ consists of the
tuple of forests $(F_{rec}(\bm{\Gamma}),F_{tube}(\bm{\Gamma}), F_{cone}(\bm{\Gamma}),F_C(\bm{\Gamma}),
F_A(\bm{\Gamma})$
The \emph{complexity} of the skeleton
$\mcal{F}$, written $\|\mcal{F}\|$
is given by $\|\mcal{F}\| := |F_{rec}| + |F_{tube}| + |F_{cone}| + |F_C| + |F_A|$,
and the \emph{support} $\supp \mcal{F}(\bm{\Gamma})\subset\mbf{K}_{all}$
of the skeleton is the set of collision indices contained in an edge in
$F_{rec}(\bm{\Gamma})\cup F_{tube}(\bm{\Gamma})\cup F_{cone}(\bm{\Gamma})\cup F_C(\bm{\Gamma})\cup F_A(\bm{\Gamma})$.
\end{definition}

An important combinatorial feature of the cluster forest is that it cannot have too many
edges in $E^\pm$.  More precisely, the number of edges in $F_C\cap E^\pm$ is
controlled by $|F_C|$.
\begin{lemma}
Let $F_C$ and $F_A$ be the cluster forest and atypical forest (respectively)
of some path $\bm{\Gamma}$, and let
$S_{AC}\subset \mbf{K}_{all}$ be the set of collision indices that are contained in an
edge of $(F_C\cup F_A)\cap E(\mbf{K}_{all}) \setminus E^\pm$.  Then
\[
|(F_C\cup F_A)\cap E^\pm| \leq 2|S_R|.
\]
\end{lemma}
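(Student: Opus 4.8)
The plan is to reduce the statement to a purely combinatorial fact about trees, with the key structural input coming from the weight dominance of $E^\pm$. Since $(F_C\cup F_A)\cap E^\pm=(F_C\cap E^\pm)\cup(F_A\cap E^\pm)$, it suffices to prove $|F_C\cap E^\pm|\le |W_C|$ and $|F_A\cap E^\pm|\le |W_A|$, where $W_C$ (resp.\ $W_A$) denotes the set of vertices of $\mbf{K}_{all}$ lying on a non-$E^\pm$ edge of $F_C$ (resp.\ $F_A$). Indeed $S_R=W_C\cup W_A$ and $|W_C\cup W_A|\ge\tfrac12(|W_C|+|W_A|)$, so these two bounds give $|(F_C\cup F_A)\cap E^\pm|\le |W_C|+|W_A|\le 2|S_R|$.

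First I would show that $F_C\cap E^\pm$ is a matching, i.e.\ no vertex of $\mbf{K}_{all}$ is incident to two of its edges, and likewise for $F_A$. Suppose $(a,b),(a,c)\in F_C\cap E^\pm$ with $b\ne c$. Then $b$ and $c$ have the same sign, and the modified definition of $G(\bm{\Gamma})$ (cross edges at distance $\le 2r$, same-sign edges at distance $\le 4r$) gives $(b,c)\in G(\bm{\Gamma})$; since $a,b,c$ then lie in one connected component of $G(\bm{\Gamma})$ of size $\ge 3$, in fact $(b,c)\in G_C(\bm{\Gamma})$. Now $(b,c)\notin E^\pm$, so $w(b,c)<w(a,b)$ and $w(b,c)<w(a,c)$ by the choice of weights. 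Thus in the triangle $\{(a,b),(b,c),(a,c)\}\subset G_C(\bm{\Gamma})$ the unique heaviest edge is one of the two $E^\pm$ edges, and by the cycle property of minimum spanning forests that edge cannot belong to $F_C$, contradicting that both $(a,b),(a,c)\in F_C$. The identical argument applies to $F_A$: atypicality is constant on connected components of $G(\bm{\Gamma})$, so $G_A(\bm{\Gamma})$ is a union of whole components of $G(\bm{\Gamma})$, $F_A$ restricted to each is a spanning tree, and $(b,c)\in G_A(\bm{\Gamma})$ in the above.

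Next is the combinatorial lemma: if $T$ is a tree on $m\ge 3$ vertices whose $E^\pm$ edges form a matching of size $k$, then the non-$E^\pm$ edges of $T$ cover at least $k$ vertices. To prove it, delete the $k$ matching edges from $T$, leaving $k+1$ subtrees whose edges are precisely the non-$E^\pm$ edges of $T$. A matching edge cannot join two singleton subtrees, since otherwise $T$ would consist of exactly those two vertices, contradicting $m\ge 3$; hence at most $k$ of the $k+1$ subtrees are singletons, so the non-$E^\pm$ subtrees span at least $m-k$ vertices, and $m\ge 2k$ (the matching has $2k$ distinct endpoints) gives $m-k\ge k$. Applying this to each component of $F_C$ — all of which have $\ge 3$ vertices because the components of $G_C(\bm{\Gamma})$ do — and summing over the disjoint components yields $|F_C\cap E^\pm|\le |W_C|$; the same argument bounds $|F_A\cap E^\pm|\le |W_A|$, and the reduction of the first paragraph finishes the proof.

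The main obstacle is the matching step: it is precisely there that the rigged weight function $w$ and the modified graph $G(\bm{\Gamma})$ are needed, through the ``utility'' property $\big((a,b),(a,c)\in G(\bm{\Gamma})\cap E^\pm\big)\Rightarrow (b,c)\in G(\bm{\Gamma})$ combined with the cycle property of minimum spanning forests. A secondary subtlety is that a component of $G_A(\bm{\Gamma})$ of size two consisting of a single $E^\pm$ edge escapes the $m\ge 3$ hypothesis of the tree lemma; one checks that such a two-vertex atypical component cannot come from a recollision (recollisions are same-sign) and is therefore witnessed by a tube or cone event incident to one of its endpoints, so these few exceptional edges can be charged off separately and do not affect the bound.
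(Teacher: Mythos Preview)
Your argument takes the same route as the paper, whose entire proof is the single sentence ``This follows from the observation above that no two edges in $F_C\cap E^\pm$ or $F_A\cap E^\pm$ are incident to any vertex''; your cycle-property step reproves precisely this matching fact. Your combinatorial tree lemma is correct but heavier than needed---once the matching property holds, each matching edge $(a,b)$ in a tree on $\ge 3$ vertices has one endpoint of degree $\ge 2$ (else the tree is $\{a,b\}$) and hence lies on a non-$E^\pm$ edge, and disjointness of matching endpoints gives the injection directly. You are right to flag the two-vertex atypical $E^\pm$ component: the paper's one-line proof does not address it either (note also the apparent typo $S_R$ for $S_{AC}$), and charging such an edge against the tube or cone event that witnesses atypicality is the appropriate fix for how the bound is actually used downstream in controlling the skeleton complexity.
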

\begin{proof}
This follows from the observation above that no two edges in $F_C\cap E^\pm$
or $F_A\cap E^{\pm}$ are incident to any vertex.
\end{proof}

Lemma~\ref{lem:interval-pair} determines the structure of the partition $P(\bm{\Gamma})$
for the typical indices, which are those not contained in the support $\supp\mcal{F}(\bm{\Gamma})$.
Fortunately the partition $P(\bm{\Gamma})|_{\supp\mcal{F}(\bm{\Gamma})}$ is exactly the connected
components of the graph $F_C\cup F_R$.  In particular, we have the following result.
\begin{lemma}
Let $\bm{\Gamma}$ be an extended path with skeleton $\mcal{F}$.  Then
$\supp\mcal{F} = K(\bm{\Gamma}) \setminus \Typical(\bm{\Gamma})$, and $P(\bm{\Gamma})$
is a refinement of the partition $\{\supp\mcal{F}, \Typical(\bm{\Gamma})\}$
(that is, $P(\bm{\Gamma})$ saturates $\supp\mcal{F}$).

Moreover, if $\bm{\Gamma}'$ is another extended path with $\mcal{F}(\bm{\Gamma}')=\mcal{F}$,
then $P(\bm{\Gamma})|_{\supp\mcal{F}} = P(\bm{\Gamma}')|_{\supp\mcal{F}}$.
\end{lemma}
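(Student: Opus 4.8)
The plan is to reduce the three assertions to a single structural observation — that $\Typical(\bm{\Gamma})$ is a union of $P(\bm{\Gamma})$-cells — together with careful bookkeeping about which forest of $\mcal{F}$ records which kind of cell. Note first that the property defining typicality of an index $a$ — namely that no $b\sim_{P(\bm{\Gamma})}a$ lies in a recollision, tube, or cone event, nor in a cluster of the cluster graph $G_C(\bm{\Gamma})$ — depends only on the $P(\bm{\Gamma})$-cell of $a$. Hence $\Typical(\bm{\Gamma})$, and therefore its complement, is a union of $P(\bm{\Gamma})$-cells. Granting the first assertion $\supp\mcal{F}=K(\bm{\Gamma})\setminus\Typical(\bm{\Gamma})$, the second is then automatic: $\supp\mcal{F}$ is a union of cells, i.e.\ $P(\bm{\Gamma})$ refines $\{\supp\mcal{F},\Typical(\bm{\Gamma})\}$.

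For the first assertion I would argue by double inclusion, using throughout that a minimal spanning forest has, by construction, the same connected components — and in particular the same support — as the graph it spans. For ``$\subseteq$'': if $a\in\supp F_{rec}\cup\supp F_{tube}\cup\supp F_{cone}$ then $a$ participates in the corresponding event (these edge sets \emph{are} the events), so $a\notin\Typical(\bm{\Gamma})$; if $a\in\supp F_C$ then $a$ lies in a cluster of $G_C(\bm{\Gamma})$, so $a\notin\Typical(\bm{\Gamma})$; and if $a\in\supp F_A$ then $a$ is an endpoint of an edge $(a,b)\in G_A\subseteq G(\bm{\Gamma})$ with $a$ or $b$ atypical, whence, since $a\sim_{P(\bm{\Gamma})}b$ and typicality is a cell property, $a\notin\Typical(\bm{\Gamma})$. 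For ``$\supseteq$'': let $a\notin\Typical(\bm{\Gamma})$, witnessed by some $b\sim_{P(\bm{\Gamma})}a$. If $b$ lies in a cluster then so does $a$ (same cell, of size $\ge 3$), so $a\in\supp G_C=\supp F_C$. Otherwise $b$ lies in a recollision, tube, or cone event, hence $b\in\supp F_{rec}\cup\supp F_{tube}\cup\supp F_{cone}$; if $b=a$ we are done, and if $b\neq a$ then $a$ is non-isolated in $G(\bm{\Gamma})$ and atypical, so it is an endpoint of some edge of $G_A$, i.e.\ $a\in\supp G_A=\supp F_A$. This settles $\supp\mcal{F}=K(\bm{\Gamma})\setminus\Typical(\bm{\Gamma})$.

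For the third assertion the plan is to exhibit $P(\bm{\Gamma})|_{\supp\mcal{F}}$ as a function of $\mcal{F}$ alone: I claim it equals the partition of $\supp\mcal{F}$ whose blocks are the connected components of $F_C(\bm{\Gamma})\cup F_A(\bm{\Gamma})$, with each remaining index of $\supp\mcal{F}$ forming a singleton. One inclusion is immediate from $F_C\cup F_A\subseteq G(\bm{\Gamma})$. For the other, take $a\sim_{P(\bm{\Gamma})}b$ with $a\neq b$; their common cell $S$ lies in $\supp\mcal{F}$ by the second assertion. If $|S|\ge 3$ then $S$ is a cluster, every $G(\bm{\Gamma})$-edge inside $S$ lies in $G_C$ (an edge of a connected graph on $\ge 3$ vertices always has a neighbouring incident edge), so $S$ is precisely a connected component of $G_C$ and $F_C$ spans it; if $|S|=2$, i.e.\ $S=\{a,b\}$, the unique $G(\bm{\Gamma})$-edge joining $a$ and $b$ has both endpoints atypical by the first assertion, hence lies in $G_A$, and $F_A$ joins $a$ to $b$. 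In either case $a$ and $b$ lie in the same component of $F_C\cup F_A$; moreover any index of $\supp\mcal{F}$ contained in no edge of $F_C\cup F_A$ has, by this very equivalence, a singleton cell. Since $F_C$, $F_A$, and $\supp\mcal{F}$ are recorded in $\mcal{F}$, this yields $P(\bm{\Gamma})|_{\supp\mcal{F}}=P(\bm{\Gamma}')|_{\supp\mcal{F}}$ whenever $\mcal{F}(\bm{\Gamma}')=\mcal{F}(\bm{\Gamma})$.

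I expect the main obstacle to be the case dichotomy in the third assertion — in particular verifying that the cluster graph $G_C$ faithfully records the component structure of every $P(\bm{\Gamma})$-cell of size $\ge 3$ (so that trimming isolated edges loses no information about genuine clusters), while a two-element cell meeting $\supp\mcal{F}$ has its single joining edge captured by $G_A$. The remaining work is routine bookkeeping with the minimal-spanning-forest construction and the cell-wise nature of typicality.
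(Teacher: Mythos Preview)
Your proof is correct and follows essentially the same route as the paper. The paper does not give a detailed proof of this lemma; it simply records the key observation in the sentence preceding it, namely that $P(\bm{\Gamma})|_{\supp\mcal{F}}$ is exactly the partition into connected components of $F_C\cup F_A$ (the paper writes ``$F_C\cup F_R$'', which appears to be a typo for $F_A$). Your third-assertion argument unpacks precisely this observation, and your first two assertions are the routine bookkeeping the paper leaves implicit; the case split on $|S|\ge 3$ versus $|S|=2$ and the verification that every edge in a connected component of size $\ge 3$ lies in $G_C$ are exactly what is needed.
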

We write $P_{\mcal{F}}$ for the partition $P(\bm{\Gamma})|_{\supp\mcal{F}}$.

\subsection{Decomposing paths using skeletons}
We will decompose
the integral over $\Omega_{ext}^N\times\Omega_{ext}^N$ according to the skeleton $\mcal{F}(\bm{\Gamma})$.
Let $\One_{\mcal{F}}:\Omega_{\alpha,S}^N\times\Omega_{\alpha,S}^N\to\{0,1\}$
be the indicator function of the set of paths $\bm{\Gamma}$ with skeleton exactly $\mcal{F}$.
We can use these indicator functions to partition the domain of integration of the path integral
as follows:
\begin{equation}
\label{eq:sum-over-skeletons}
\Xi(\bm{\Gamma})
\One(\Gamma^+\sim\Gamma^-) \Expec O_{\Gamma^+}^*A O_{\Gamma^-} \diff\Gamma^+\diff\Gamma^- \\
= \sum_{\mcal{F}}
\iint
\Xi(\bm{\Gamma})
\One(\Gamma^+\sim\Gamma^-)
\One_{\mcal{F}}(\bm{\Gamma})
\Expec O_{\Gamma^+}^*A O_{\Gamma^-} \diff\bm{\Gamma}.
\end{equation}

In the next section we will see how to use the structure of the $P(\bm{\Gamma})$
to write down a more useful expression for $\Expec O_{\Gamma^+}^*AO_{\Gamma^-}$, which is the point
of decomposing the integral as a sum over skeletons.  The cost of this manipulation is that we are
left with the indicator function $\One_{\mcal{F}}(\bm{\Gamma})$, which is difficult to work with because it
depends on all of the path variables at the same time.  For example, the indicator function
$\One_{\mcal{F}}$ includes many ``negative constraints'' that ensure that collisions belong to different
components of $F_R$ \textit{ do not } form a recollision.  Fortunately the following lemma, which is simply
an abstraction of the identity~\eqref{eq:forest-indicator-decomposition}
proven in Appendix~\ref{sec:forest-appendix}, provides a more convenient expression for the function
$\One_\mcal{F}$.
\begin{lemma}
\label{lem:abstract-forest-decomp}
For any graph $G\subset \binom{[m]}{2}$, the indicator function for
the minimal forest $F_G$ of $G$ being equal to $F$ has the representation
\[
\One(F_G=F) = \sum_{F'\supset F} \chi_{F,F'}(G),
\]
where the sum is over forests $F'$ containing $F$.  In
addition the functions  $\chi_{F,F'}$ have a
structure in terms of the connected components $P_{F'}$ of $F'$,
in the sense that
\[
\chi_{F,F'}(G) = \prod_{S\in P_{F'}} \chi_{F,F',S}(G_S).
\]
Here $G_S$ is the induced subgraph on the vertex set $S$.
Moreover, $|\chi_{F,F'}(G)|\leq 1$ for any $F,F',G$ and
$\supp \chi_{F,F'}(G) \subset \{G \mid G\supset F'\}$.
Finally, the function $\chi_{\noset,\noset}=1$ is the constant function.
\end{lemma}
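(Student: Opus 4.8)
The plan is to prove this as a purely combinatorial inclusion--exclusion identity over the lattice of forests; the cleanest route is to reduce it to the explicit decomposition~\eqref{eq:forest-indicator-decomposition} of Appendix~\ref{sec:forest-appendix}, and below I sketch why such a decomposition exists. The starting point is the cut/cycle characterization of a minimum spanning forest. Since the weights $w$ are distinct, $F_G$ is the unique subgraph $F\subseteq G$ that is a forest, has the same connected components as $G$, and for which no edge $e\in G$ forms a cycle with a strictly $w$-heavier edge of $F$. Equivalently, for a forest $F$ introduce the ``safe set''
\[
\Sigma(F):=F\cup\bigl\{\,e=\{u,v\}\ :\ u,v\text{ lie in one component of }F,\ w(e)>\max_{e'\in\mathrm{path}_F(u,v)}w(e')\,\bigr\},
\]
which depends only on $F$ and $w$ and all of whose edges join two vertices of a single component of $F$. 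Then $F_G=F$ if and only if $F\subseteq G\subseteq\Sigma(F)$: the lower inclusion is the definition of a spanning subforest, while the upper one is exactly the cycle property, and, because every edge of $\Sigma(F)$ stays inside one component of $F$, it automatically forbids edges of $G$ between distinct components of $F$, so that $\mcal{C}(G)=\mcal{C}(F)$. Consequently
\[
\One(F_G=F)=\One(F\subseteq G)\prod_{e\in\binom{[m]}{2}\setminus\Sigma(F)}\bigl(1-\One(e\in G)\bigr),
\]
a product of a part supported on $\{G\supseteq F\}$ and one negative factor for each ``forbidden'' edge.

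Next I would expand the negative part by the distributive law, writing $\prod_{e\notin\Sigma(F)}(1-\One(e\in G))=\sum_{T\subseteq\binom{[m]}{2}\setminus\Sigma(F)}(-1)^{|T|}\One(T\subseteq G)$, so that $\One(F_G=F)=\sum_{T}(-1)^{|T|}\One\bigl((F\cup T)\subseteq G\bigr)$, and then regroup this signed sum by a forest $F'\supseteq F$. The grouping rule is the greedy (Kruskal) one: process the edges of $T$ in increasing $w$-order, starting from $F$, and keep an edge exactly when it merges two current components; call the resulting forest $F\oplus T$. One checks easily that $F\subseteq F\oplus T\subseteq F\cup T$ is always a forest and that the fibres of $T\mapsto F\oplus T$ partition $\{T\subseteq\binom{[m]}{2}\setminus\Sigma(F)\}$ over the set of forests $F'\supseteq F$ (for such an $F'$ the extra edges $F'\setminus F$ necessarily lie outside $\Sigma(F)$, so $T=F'\setminus F$ lands in the right fibre). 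Summing the contributions over the fibre of a fixed $F'$ collapses, by the same cut/cycle bookkeeping now applied to $F'$, into a single product
\[
\chi_{F,F'}(G)=(-1)^{|F'\setminus F|}\,\One(F'\subseteq G)\prod_{e\in M(F,F')}\bigl(1-\One(e\in G)\bigr),
\]
where $M(F,F')$ is a set of edges determined by $F$, $F'$ and $w$, each of which joins two vertices in a single component of $F'$. From this the four asserted properties are immediate: every edge occurring in $\chi_{F,F'}$ is confined to a component $S$ of $F'$, so the product splits as $\prod_{S\in P_{F'}}\chi_{F,F',S}(G_S)$ with $\chi_{F,F',S}$ the corresponding sub-product; each $\chi_{F,F'}(G)$ is a signed product of $\{0,1\}$-valued factors, hence lies in $\{-1,0,1\}$; the factor $\One(F'\subseteq G)$ confines the support to $\{G\mid G\supseteq F'\}$; and for $F=F'=\noset$ there are no nontrivial components and hence $M(\noset,\noset)=\noset$, giving $\chi_{\noset,\noset}\equiv1$.

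The main obstacle is the regrouping step, i.e.\ verifying that summing $(-1)^{|T|}\One((F\cup T)\subseteq G)$ over each fibre of $T\mapsto F\oplus T$ telescopes to the clean local product above with the correct sign and the correct edge set $M(F,F')$. This is precisely the point at which the \emph{nonlocal} negative constraints --- the absence of edges bridging two distinct components of $F$ --- must be reassigned among the various larger forests $F'$, and one must check that this reassignment is simultaneously exhaustive and non-overlapping; the subtle cases are edges whose two endpoints are already connected inside $F'$ at the moment they are processed. This bookkeeping is most transparently done by inducting on the edges in $w$-order, which is exactly how identity~\eqref{eq:forest-indicator-decomposition} is proved in Appendix~\ref{sec:forest-appendix}. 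The present lemma then follows by specializing that identity with $G$ regarded as the free variable and reading off the component structure of the coefficient functions $\chi_{F,F'}$.
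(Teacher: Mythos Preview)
Your argument is correct and takes a genuinely different route from the paper's. The appendix proceeds by M\"obius inversion on the poset of forests: after introducing the modified weight $w_{F_0}$ and the ``local'' collection $\mcal{L}(F_0)=\{G:F_{G|_S}=F_0|_S\text{ for all }S\in P(F_0)\}$, it writes $\One_{\mcal{L}(F_0)}=\sum_{F'\supseteq F_0}\One_{\mcal{G}(F';F_0)}$ and inverts, arriving at coefficients $\chi_{F,F'}=(-1)^{|F'\setminus F|}\One_{\mcal{L}(F')}$ that depend only on $F'$. You instead expand $\prod_{e\notin\Sigma(F)}(1-\One(e\in G))$ directly and regroup the signed sum by the Kruskal forest $F\oplus T$. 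Your fiber description is correct and can be made fully explicit: the fiber over a fixed $F'\supseteq F$ is exactly $\{(F'\setminus F)\cup T':T'\subseteq R(F,F')\}$, where $R(F,F')$ is the set of edges $e\notin\Sigma(F)\cup F'$ whose endpoints already lie in one component of $F\cup\{e'\in F'\setminus F:w(e')<w(e)\}$; summing then gives $\chi_{F,F'}=(-1)^{|F'\setminus F|}\One(F'\subseteq G)\prod_{e\in R(F,F')}(1-\One(e\in G))$, with $M(F,F')=R(F,F')$. Since every edge in $R(F,F')$ joins two vertices of one component of $F'$, the component-product structure and the remaining properties follow immediately. Your coefficients genuinely depend on both $F$ and $F'$, so the two decompositions are different realizations of the same lemma; yours has the advantage that the fiber computation is elementary and requires no inversion machinery. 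One small inaccuracy: your last paragraph says the appendix proves the identity ``by inducting on the edges in $w$-order'', but the appendix actually uses M\"obius inversion. Your own regrouping argument is already complete (once $R(F,F')$ is written down as above) and does not need that deferral.
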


We will apply Lemma~\ref{lem:abstract-forest-decomp} to write
$\One_\mcal{F} = \sum_{\mcal{F}'\supset \mcal{F}} \chi_{\mcal{F},\mcal{F}'}(\bm{\Gamma})$.
The only addiitonal observation we need is that the forests
$F_{E^{rec}(\bm{\Gamma})}$,
$F_{E^{tube}(\bm{\Gamma})}$,
$F_{E^{cone}(\bm{\Gamma})}$, and
$F_{C(\bm{\Gamma})}$ are determined only using edge indicator functions
$\One((a,b)\in E^{rec}(\bm{\Gamma}))$,
$\One((a,b)\in E^{tube}(\bm{\Gamma}))$,
$\One((a,b)\in E^{cone}(\bm{\Gamma}))$,
$\One((a,b)\in G(\bm{\Gamma}))$, which are each functions only of the
path variables $((p_{a-1},p_a,y_a),(p_{b-1},p_b,y_b))$.

\begin{lemma}
\label{lem:indicator-decomp}
The indicator function $\One_{\mcal{F}}(\bm{\Gamma})$ can be written as a sum
\[
\One_{\mcal{F}}(\bm{\Gamma}) = \sum_{\mcal{F}'\geq \mcal{F}} \chi_{\mcal{F},\mcal{F}'}(\bm{\Gamma}),
\]
where each function $\chi_{\mcal{F},\mcal{F}'}$ depends only on the variables in the
support $\supp\mcal{F}'$ and is bounded by the indicator function $\One_{\mcal{F}'}(\bm{\Gamma})$,
\[
\sup_{\bm{\Gamma}}|\chi_{\mcal{F},\mcal{F}'}(\bm{\Gamma})| \leq \One_{\mcal{F}'}(\bm{\Gamma}).
\]
Moreover, $\chi_{\mcal{F},\mcal{F}'}$ has the partial product structure
\begin{equation}
\label{eq:chi-split}
\begin{split}
\chi_{\mcal{F},\mcal{F}'}(\bm{\Gamma})
=
(-1)^{c(\mcal{F},\mcal{F}')}
&\chi^+_{\mcal{F},\mcal{F}'}(\Gamma^+)
\chi^-_{\mcal{F},\mcal{F}'}(\Gamma^-)
\chi^{clust}_{\mcal{F},\mcal{F}'}(\mbf{y}_{\supp (F_{C'}\cup F_{A'})}),
\end{split}
\end{equation}
where $\mbf{y}_{\supp (F_{C'}\cup F_{A'})}$ is the tuple of the values of $y_a$ for
$a\in \supp F_{C'}\cup F_{A'}$, and $\chi^{clust}_{\mcal{F}}$ obeys the bound
\[
\chi^{clust}_{\mcal{F},\mcal{F}'}(\mbf{y}_{\supp F_{C'}\cup F_{A'}})
\leq \prod_{a\sim_{F_{C'}} b} \One(|y_a-y_b|\leq 4Nr)
\prod_{a\sim_{F_{A'}} b} \One(|y_a-y_b|\leq 4Nr).
\]
\end{lemma}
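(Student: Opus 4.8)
The plan is to obtain $\chi_{\mcal F,\mcal F'}$ by applying the abstract forest decomposition Lemma~\ref{lem:abstract-forest-decomp} separately to each of the five graphs whose minimal spanning forests make up the skeleton, and then multiplying the resulting identities. First I would note that, straight from the definition of the skeleton, $\One_{\mcal F}(\bm\Gamma) = \prod_{i}\One\!\big(F_{G_i(\bm\Gamma)} = F_i\big)$, where $i$ ranges over $\{rec,tube,cone,C,A\}$, $G_i(\bm\Gamma)$ is the corresponding graph ($E^{rec},E^{tube},E^{cone},G_C,G_A$), and $\mcal F=(F_{rec},F_{tube},F_{cone},F_C,F_A)$. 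Applying Lemma~\ref{lem:abstract-forest-decomp} to each factor gives $\One(F_{G_i(\bm\Gamma)}=F_i)=\sum_{F'_i\supseteq F_i}\chi_{F_i,F'_i}(G_i(\bm\Gamma))$, and multiplying out the five sums yields $\One_{\mcal F}(\bm\Gamma) = \sum_{\mcal F'\geq\mcal F}\chi_{\mcal F,\mcal F'}(\bm\Gamma)$ with $\chi_{\mcal F,\mcal F'}(\bm\Gamma) := \prod_i \chi_{F_i,F'_i}(G_i(\bm\Gamma))$, the sum running over tuples $\mcal F'=(F'_{rec},\dots,F'_A)$ with $F'_i\supseteq F_i$ — which is precisely the partial order $\mcal F'\geq\mcal F$. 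The overall sign $(-1)^{c(\mcal F,\mcal F')}$ is the product of the Möbius-type signs implicit in each of the five factors, so it splits off cleanly and defines $c(\mcal F,\mcal F')$.

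Next I would verify the three asserted properties. The bound $|\chi_{\mcal F,\mcal F'}|\leq\One_{\mcal F'}$ follows since each $\chi_{F_i,F'_i}$ is bounded by $1$ and supported on $\{G_i(\bm\Gamma)\supseteq F'_i\}$ by Lemma~\ref{lem:abstract-forest-decomp}; intersecting these events over $i$ isolates the paths whose event graphs contain every forest of $\mcal F'$. Locality — that $\chi_{\mcal F,\mcal F'}$ depends only on the path variables at collisions in $\supp\mcal F'$ — follows from the component factorization $\chi_{F_i,F'_i}(G_i)=\prod_{S\in P_{F'_i}}\chi_{F_i,F'_i,S}\big((G_i)_S\big)$, combined with the observation recorded just before the statement that each edge indicator $\One((a,b)\in G_i(\bm\Gamma))$ is a function only of the path variables $((p_{a-1},p_a,y_a),(p_{b-1},p_b,y_b))$; hence $\chi_{F_i,F'_i,S}$ depends only on variables at collisions in $S\subseteq\supp F'_i\subseteq\supp\mcal F'$. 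For the partial product structure I would sort the factors by which graph they came from: every recollision, tube, and cone edge has both endpoints of the same sign, so each connected component of $F'_{rec}\cup F'_{tube}\cup F'_{cone}$ lies entirely in $K^+(\Gamma^+)$ or entirely in $K^-(\Gamma^-)$, and grouping these components by sign defines $\chi^+_{\mcal F,\mcal F'}(\Gamma^+)$ and $\chi^-_{\mcal F,\mcal F'}(\Gamma^-)$; the remaining factors, from $F'_C$ and $F'_A$, are collected into $\chi^{clust}_{\mcal F,\mcal F'}$, and since an edge of $G(\bm\Gamma)$ is a constraint on the positions $y_a$ alone, these depend only on $\mbf y_{\supp(F_{C'}\cup F_{A'})}$; moreover on their supports every forest edge $(a,b)$ forces $|y_a-y_b|\leq 4r$, which telescopes to the stated $4Nr$ bound across each component.

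The main obstacle is the interdependence of the five graphs: $G_A(\bm\Gamma)$ is defined through $\Typical(\bm\Gamma)$, which depends on the partition $P(\bm\Gamma)$ (hence on $G(\bm\Gamma)$) and on membership in recollision/tube/cone events, so the edge indicators of $G_A$ are not a priori local functions of a bounded set of variables, and the naive "five independent applications of Lemma~\ref{lem:abstract-forest-decomp}" is not literally available until one argues that a collision is atypical only when its $G(\bm\Gamma)$-component already contains an event-collision or has size at least three — so that atypicality, and hence each induced subgraph $(G_A)_S$, is in fact decided by the path variables within a bounded cluster of collision sites. This is exactly the bookkeeping packaged into identity~\eqref{eq:forest-indicator-decomposition} in Appendix~\ref{sec:forest-appendix}, on which I would lean. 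A secondary point to be careful about is that the modified definition of $G(\bm\Gamma)$ (with cross-sign edges given dominant weight) is what makes the minimal forests $F_C$ and $F_A$ contain no two incident cross-sign edges, which is the property that lets the $\chi^+$, $\chi^-$, and $\chi^{clust}$ pieces be read off without overlap.
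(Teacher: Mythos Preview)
Your approach is essentially the paper's own: apply Lemma~\ref{lem:abstract-forest-decomp} to each of the five constituent graphs, multiply the resulting identities, and read off locality and the product splitting from the component factorization $\chi_{F,F'}(G)=\prod_{S\in P_{F'}}\chi_{F,F',S}(G_S)$ together with the fact that edge indicators for $E^{rec},E^{tube},E^{cone},G$ are local functions of the path variables at the two endpoints. You also correctly flag the one genuine subtlety---that $G_A$ is defined through $\Typical(\bm\Gamma)$ and is therefore not edgewise local in the same naive sense---which the paper's sketch likewise leaves implicit (note that the paragraph preceding the lemma lists only the four ``easy'' graphs when asserting locality, and the $F_A$ contribution is absorbed into $\chi^{clust}$).
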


To estimate the contribution of the diagrams we will need to express
$\chi_{\mcal{F},\mcal{F}'}$ as a mixture of
functions with a product structure.
The term that doesn't split is the function $\chi^{clust}_{\mcal{F},\mcal{F}'}$, which depends
on the variables $\mbf{y}_{\supp F_{C'}\cup \supp F_{A'}}$.  To keep track of the variables
affected by the cutoff $\chi^{clust}_{\mcal{F},\mcal{F}'}$ we introduce the sets
$J_{clust}^\pm(\mcal{F}')$
\[
J_{clust}^\pm(\mcal{F}') := (\supp F_{C'}\cup\supp F_{A'}) \cap \mbf{K}^\pm_{all}.
\]
Then, for a tuple of variables $Y^\pm = \{y_a\}_{a\in J_{clust}^\pm}$, which we consider
as a function $Y:J_{clust}^{pm}\to\Real^d$ and we write $\supp Y$ to mean the domain of this
function.  Then given any such labelled tuple of positions $Y$ we define
 we write
\begin{equation}
\label{eq:GY-def}
\chi^{\pm,Y}_{\mcal{F},\mcal{F}'}(\Gamma)
:= \chi^\pm_{\mcal{F},\mcal{F}'}(\Gamma)
\delta(\mbf{y}_{\supp Y}(\Gamma) - Y).
\end{equation}
This allows us to decompose $\chi_{\mcal{F},\mcal{F}'}(\bm{\Gamma})$ as an integral
of functions that do split as a product of functions of $\Gamma^+$ and $\Gamma^-$ alone:
\begin{equation}
\label{eq:split-G}
\chi_{\mcal{F},\mcal{F}'}(\bm{\Gamma})
= \int \chi_{\mcal{F},\mcal{F}'}^{+,Y^+}(\Gamma^+)
\chi_{\mcal{F},\mcal{F}'}^{-,Y^-}(\Gamma^-)
\chi^{clust}_{\mcal{F},\mcal{F}'}(Y^+,Y^-)\diff Y^+\diff Y^-,
\end{equation}
where the integral is over tuples $Y^+:J^+_{clust}\to\Real^d$
and $Y^-:J^-_{clust}\to\Real^d$.

\subsection{The partitions associated to a skeleton}
Using a skeleton $\mcal{F}$ we define two collections of partitions.  The first is the collection
of partitions that \emph{actually} arise as $P(\bm{\Gamma})$ for some $N^4$-complete extended
path $\bm{\Gamma}$.
\[
\mcal{Q}'_{\mcal{F}} = \{P(\bm{\Gamma}) \mid \mcal{F}(\bm{\Gamma})=\mcal{F}
\text{ and } \bm{\Gamma} \text{ is } N^4-\text{complete}\}.
\]
The second collection is of partitions that match the criteria of
Lemma~\ref{lem:interval-pair} and the constraint $P|_{\supp \mcal{F}} = P_{\mcal{F}}$.

To define this collection of partitions it is useful to define an abstraction of the maximal atypical
interval set $\mcal{I}(\bm{\Gamma})$ that does not depend on the path $\bm{\Gamma}$ directly.
Recall that $\mcal{I}(\bm{\Gamma})$ is the set of maximal atypical intervals of the form
$(a,b)_{K(\bm{\Gamma})}$, which are subsets of $K(\bm{\Gamma})$.  It is useful
to define an abstract interval collection which only contains the
information about the endpoints of the intervals, and is determined only by the set of
atypical collision indices.
\begin{definition}[Abstract interval collections]
Let $S\subset\mbf{K}_{all}$ be a set of collision indices.  We define the abstract interval
collection $\mcal{I}^{abs}(S)$ to be the set of all pairs
\[
\mcal{I}^{abs}(S) :=
\{(a,b) \in (S\cup \{0,\infty\})^2 \mid a<b\text{ and there does not exist }
c\in S\text{ such that } a<c<b\}.
\]
Given a collision index set $K\subset\mbf{K}_{all}$ containing $S$
 and an abstract interval $I=(a,b)\in\mcal{I}^{abs}(S)$, we define the realization
\[
I_K = (a,b)_K := \{c\in K \mid a<c<b\}.
\]
\end{definition}

We can now define the set of partitions canonically associated to a skeleton.
\begin{definition}[Canonical partition collection]
Given a skeleton $\mcal{F}$, we define the canonical collection of partitions
$\mcal{Q}_\mcal{F}$ to be any partition $P\in\mcal{P}(K)$ on a collision index set
$K$ containing $\supp\mcal{F}$ such that
\begin{itemize}
\item $P|_{\supp\mcal{F}} = P_{\mcal{F}}$.
\item For every $I\in\mcal{I}^{abs}(\mcal{F})$, either $P$ saturates $I_K$ and
$P|_{I_K}$ is a simple partition, or there exists $I'\in\mcal{I}^{abs}(\mcal{F})$
such that $\sign(I)=-\sign(I')$, $P$ saturates $I_K\cup I'_K$, and
the restriction $P|_{I_K\cup I'_K}$ is a generalized ladder.
\item If $(a,a+1)\in P$ is an immediate recollision, then $\ell(a)=\ell(a+1)$.
\end{itemize}
\end{definition}
The last condition comes from the fact that recollisions must occur within a time of
$N|p|^{-1}r\ll \eps^{-1.5}$, which is the guaranteed time of free evolution between segments.

A restatement of Lemma~\ref{lem:interval-pair} in terms of the collection $\mcal{Q}'_{\mcal{F}}$
and $\mcal{Q}_\mcal{F}$ is the statement
\begin{equation}
\label{eq:real-vs-abstract-Q}
\mcal{Q}_{\mcal{F}}' \subset \mcal{Q}_\mcal{F}.
\end{equation}
That is, every partition of the form $P(\bm{\Gamma})$ for an $N^4$-complete path
with skeleton $\mcal{F}(\bm{\Gamma})$ belongs to $\mcal{Q}_{\mcal{F}}$.

We conclude this section by showing that the expectation
\[
\Expec \prod_{a\in K(\bm{\Gamma})} \Ft{V_{y_a}}(q_a).
\]
splits as a sum over all partitions in $\mcal{Q}_{\mcal{F}}$.
\begin{lemma}
\label{lem:sum-over-partitions}
Let $\bm{\Gamma}$ be an $N^4$-complete path with skeleton $\mcal{F}$.  Then
\begin{equation}
\label{eq:sum-over-partitions}
\begin{split}
\Expec \prod_{a\in K(\bm{\Gamma})} \Ft{V_{y_a}}(q_a)
&= \sum_{\substack{P\in \mcal{Q'}_{\mcal{F}} \\ \supp P = K(\bm{\Gamma})}}
\prod_{S\in P} \Expec \prod_{a\in S}\Ft{V_{y_a}}(q_a) \\
&= \sum_{\substack{P\in \mcal{Q}_{\mcal{F}} \\ \supp P = K(\bm{\Gamma})}}
\prod_{S\in P} \Expec \prod_{a\in S}\Ft{V_{y_a}}(q_a).
\end{split}
\end{equation}
Note that in this sum, we only consider partitions of the index set $K(\bm{\Gamma})$, which
is indicated by the support constraint $\supp P = K(\bm{\Gamma})$.
\end{lemma}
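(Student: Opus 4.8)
The plan is to reduce the statement to the elementary splitting of $\Expec\prod_a\Ft{V_{y_a}}(q_a)$ along the dependency partition $P(\bm{\Gamma})$ and then to show that, of all the partitions appearing on the right-hand side of \eqref{eq:sum-over-partitions}, only $P(\bm{\Gamma})$ itself contributes. First I would record the probabilistic input: since $V$ has finite range of dependence and the cutoffs $\chi^V$ defining $V_y$ have width $r$, the random fields $V_{y_a}$ and $V_{y_b}$ are independent whenever $|y_a-y_b|>2r$; consequently the families $\{V_{y_a}\}_{a\in S}$, $S\in P(\bm{\Gamma})$, are mutually independent and
\[
\Expec\prod_{a\in K(\bm{\Gamma})}\Ft{V_{y_a}}(q_a) = \prod_{S\in P(\bm{\Gamma})}\Expec\prod_{a\in S}\Ft{V_{y_a}}(q_a).
\]
By the restatement \eqref{eq:real-vs-abstract-Q} of Lemma~\ref{lem:interval-pair} we have $P(\bm{\Gamma})\in\mcal{Q}'_{\mcal{F}}\subset\mcal{Q}_{\mcal{F}}$, and plainly $\supp P(\bm{\Gamma}) = K(\bm{\Gamma})$, so $P(\bm{\Gamma})$ occurs as a term in each of the two sums in \eqref{eq:sum-over-partitions}. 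It therefore remains only to show that every other term vanishes, after which both sums collapse to the displayed product, which equals the left-hand side.

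Then I would fix a partition $P$ of $K(\bm{\Gamma})$ lying in $\mcal{Q}_{\mcal{F}}$ with $P\neq P(\bm{\Gamma})$. Both $P$ and $P(\bm{\Gamma})$ restrict to $P_{\mcal{F}}$ on $\supp\mcal{F}$ and saturate $\supp\mcal{F}$ (the former by the definition of $\mcal{Q}_{\mcal{F}}$ and the preceding lemma, the latter because $\supp\mcal{F} = K(\bm{\Gamma})\setminus\Typical(\bm{\Gamma})$), so each cell of either partition lies entirely inside $\supp\mcal{F}$ or entirely inside $\Typical(\bm{\Gamma})$; since the two partitions already agree on $\supp\mcal{F}$, they must differ on $\Typical(\bm{\Gamma})$. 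On $\Typical(\bm{\Gamma})$, Lemma~\ref{lem:interval-pair} forces $P(\bm{\Gamma})$ to be a disjoint union of renormalized ladders, renormalized anti-ladders and simple partitions over the maximal typical intervals, hence a perfect matching; by the definition of $\mcal{Q}_{\mcal{F}}$ the restriction of $P$ is likewise a perfect matching of $\Typical(\bm{\Gamma})$. Two distinct perfect matchings of the same set differ in at least one pair, so there is a pair $\{a,b\}\in P$ with $a,b\in\Typical(\bm{\Gamma})$ and $\{a,b\}\notin P(\bm{\Gamma})$.

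The key step is then the observation that a typical collision has a tiny dependency cluster: if the connected component of $a$ in $G(\bm{\Gamma})$ had at least three vertices, the edge from $a$ to any of its neighbours would lie in the cluster graph $C(\bm{\Gamma})$, making $a$ atypical; hence that component — which is the cell of $a$ in $P(\bm{\Gamma})$ — has at most two elements, and since $P(\bm{\Gamma})$ restricts to a matching on $\Typical(\bm{\Gamma})$ it equals exactly $\{a,b''\}$ with $b''\neq b$. Thus $(a,b)\notin G(\bm{\Gamma})$, which (reading off the definition of $G(\bm{\Gamma})$, treating same-sign and opposite-sign pairs separately) forces $|y_a-y_b|>2r$, so $V_{y_a}$ and $V_{y_b}$ are independent. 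Using $\Expec V=0$, hence $\Expec\Ft{V_{y_a}}(q_a)=\Expec\Ft{V_{y_b}}(q_b)=0$, the cell $\{a,b\}$ contributes $\Expec[\Ft{V_{y_a}}(q_a)\Ft{V_{y_b}}(q_b)] = \Expec\Ft{V_{y_a}}(q_a)\,\Expec\Ft{V_{y_b}}(q_b) = 0$, so $\prod_{S\in P}\Expec\prod_{a\in S}\Ft{V_{y_a}}(q_a)=0$. This annihilates every term with $P\neq P(\bm{\Gamma})$ in both sums; since $\mcal{Q}'_{\mcal{F}}\subset\mcal{Q}_{\mcal{F}}$ and both contain $P(\bm{\Gamma})$, the two sums agree and equal the displayed product, completing the proof. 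The only delicate ingredient is the step identifying typical collisions with short dependency clusters; this is exactly the interface between the combinatorial bookkeeping of Section~\ref{sec:path-combo} — the (modified) graph $G(\bm{\Gamma})$, the cluster graph $C(\bm{\Gamma})$, and the definition of $\Typical(\bm{\Gamma})$ — and the probabilistic input, and everything else is routine.
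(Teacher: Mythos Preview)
Your proposal is correct and follows essentially the same line as the paper's own proof. The paper packages the key step --- that any two distinct partitions in $\mcal{Q}_{\mcal{F}}$ on the same index set have cells meeting in a singleton --- into a separate Lemma~\ref{lem:partition-incompatibility}, whereas you inline this by restricting to $\Typical(\bm{\Gamma})$ and using that both partitions are perfect matchings there; the underlying argument is the same.
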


The key observation is the following incompatibility of diagrams in $\mcal{Q}_{\mcal{F}}$.
\begin{lemma}
\label{lem:partition-incompatibility}
Let $P,P'\in\mcal{Q}_\mcal{F}$ be two partitions in $\mcal{Q}_{\mcal{F}}$
with $\supp P = \supp P'$.  Then either there exists sets $S\in P$ and $S'\in P'$ such that
$S\cap S'$ is a singleton set, or $P=P'$.
\end{lemma}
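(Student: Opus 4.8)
The plan is to reduce this to an elementary fact about partitions, once the two ``modes'' allowed by the definition of $\mcal{Q}_\mcal{F}$ are unpacked. Write $K=\supp P=\supp P'$. First I would record the decomposition $K=\supp\mcal{F}\sqcup\bigsqcup_{I\in\mcal{I}^{abs}(\mcal{F})}I_K$, which holds because the abstract intervals in $\mcal{I}^{abs}(\mcal{F})$ are exactly the maximal gaps of $\supp\mcal{F}$ inside $K\cup\{0,\infty\}$, so their realizations $I_K$ are pairwise disjoint and their union is $K\setminus\supp\mcal{F}$.

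Next I would extract two structural facts from the definition of $\mcal{Q}_\mcal{F}$, applied to $P$ and to $P'$ alike. Fact (i): every cell of $P$ that contains an index outside $\supp\mcal{F}$ is a two-element set, contained either in a single $I_K$ or in a union $I_K\cup I'_K$ with $\sign(I')=-\sign(I)$; this is because the second bullet of the definition forces $P$ restricted to such an $I_K$ to be a simple partition, or else $P$ restricted to $I_K\cup I'_K$ to be a generalized ladder for some opposite-sign $I'$, and both simple partitions and generalized ladders are perfect matchings. Fact (ii): $\supp\mcal{F}$ is a union of cells of $P$ and $P|_{\supp\mcal{F}}=P_\mcal{F}$; this follows from (i) together with the first bullet, since a cell meeting both $\supp\mcal{F}$ and its complement would, by (i), have to lie inside a union of $I_K$'s, which is disjoint from $\supp\mcal{F}$.

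The main argument is then immediate. Suppose $P\neq P'$; since a partition of $K$ is determined by the map sending each index to its cell, there is an index $a_0$ whose cell in $P$ differs from its cell in $P'$. By Fact (ii), $a_0\notin\supp\mcal{F}$, since there both cells equal the $P_\mcal{F}$-cell of $a_0$; hence $a_0$ is a typical index. By Fact (i), the cell $S$ of $P$ containing $a_0$ and the cell $S'$ of $P'$ containing $a_0$ are distinct two-element sets, each containing $a_0$, so $S\cap S'=\{a_0\}$ is a singleton, and $(S,S')$ is the pair sought.

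I do not expect a genuine obstacle; the only place needing care is the verification of Fact (i) --- checking that the ``simple partition'' and ``generalized ladder'' alternatives in the definition of $\mcal{Q}_\mcal{F}$ are exhaustive on each typical interval and that both really yield matchings there, including the degenerate situation in which an interval consists entirely of immediate recollisions, where the ladder alternative collapses to the simple one. Once that is settled, the conclusion follows from the trivial observation that two distinct two-element sets sharing a point intersect in exactly that point.
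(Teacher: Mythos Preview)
Your proof is correct and follows essentially the same approach as the paper's own argument. Both proofs hinge on the observation that any cell of a partition in $\mcal{Q}_\mcal{F}$ with more than two elements must lie in $\supp\mcal{F}$ and hence be a cell of $P_\mcal{F}$, so the only way $P$ and $P'$ can differ is in their two-element cells; two distinct pairs sharing an element then intersect in a singleton. Your version is more explicit---you carefully isolate Facts (i) and (ii) and track the index $a_0$ rather than the cells $S,S'$---whereas the paper compresses this into a single line (``If $|S|>2$, then $S$ is a cluster, and is therefore specified by $\mcal{F}$''), but the logical content is the same.
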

\begin{proof}
Suppose $P\not= P'$, so that there exist sets $S\in P$ and $S'\in P'$
such that $S\not= S'$ and $S\cap S'\not=\noset$.
If $|S|>2$, then $S$ is a cluster, and is therefore specified by $\mcal{F}(\bm{\Gamma})$.  Thus
it follows that $|S|=|S'|=2$.  Since $0<|S\cap S'| < |S|$, it follows that $|S\cap S'|=1$.
\end{proof}

\begin{proof}[Proof of Lemma~\ref{lem:sum-over-partitions}]
Since $\bm{\Gamma}$ is $N^4$-complete and
$\mcal{F}(\bm{\Gamma})=\mcal{F}$ it follows by~\eqref{eq:real-vs-abstract-Q}
that $P(\bm{\Gamma})\in\mcal{Q}_{\mcal{F}}$.
Thus for $P=P(\bm{\Gamma})$, we have
\[
\Expec \prod_{a\in K(\bm{\Gamma})} \Ft{V_{y_a}}(q_a)
= \prod_{S\in P} \Expec \prod_{a\in S}\Ft{V_{y_a}}(q_a).
\]
Now suppose that $P\not= P'\in\mcal{Q}_\mcal{F}$ and $\supp P = \supp P'$.  We will show that
\[
\prod_{S'\in P'} \Expec \prod_{a\in S'}\Ft{V_{y_a}}(q_a) = 0
\]
Indeed, by Lemma~\ref{lem:partition-incompatibility}, there is some $S'\in P'$ and $S\in P$
such that $S\cap S'=\{a_0\}$.  This collision site $a_0$ is separated from any other
collision in $S'$, so
\[
\Expec \prod_{a\in S'}\Ft{V_{y_a}}(q_a) =
\Expec \Ft{V_{y_{a_0}}}(q_{a_0})
\Expec \prod_{a\in S'\setminus\{a_0\}}\Ft{V_{y_a}}(q_a) = 0,
\]
since $\Expec V = 0$.
\end{proof}

\section{Colored operators}
\label{sec:colored-ops}
In this section we use combinatorial features of the partition set $\mcal{Q}_\mcal{F}$ to write
different expressions for~\eqref{eq:sum-over-partitions}.  The key idea is the notion of a
\textit{colored operator}, which is a path operator with a correlation structure specified by a coloring
function $\psi$.

To define a colored operator, let $\mcal{C}$ be a fixed countable collection of colors, and
let $\{V^{(c)}\}_{c\in\mcal{C}}$ be a collection of independently sampled potentials $V^{(c)}$,
and let $c_{imm}\in\mcal{C}$ be a special color reserved for immediate recollisions.
We write $\mcal{C}_{ext} = \mcal{C}\setminus \{c_{imm}\}$ for the remaining colors in $\mcal{C}$.

We say that a coloring $\psi:K\to\mcal{C}$
on a collision index set $K$ is \emph{valid} if $\psi^{-1}(c_{imm})$ can be partitioned into
consecutive pairs,
\[
\psi^{-1}(c_{imm}) = \{(a,a+_K 1)\}_{a\in A_{imm}(\psi)},
\]
where $A_{imm}(\psi)$ is the set of first elements of each such pair and $\ell(a)=\ell(a+1)$ for
each such pair (so that recollisions cannot cross segment boundaries).

Then, given a coloring function $\psi:[1,k]\to\mcal{C}$ and a path $\omega\in\Omega_k$,
we define the colored operator
\begin{equation}
\label{eq:colored-op-def}
O_\omega^\psi := \ket{p_0}\bra{p_k} e^{i\varphi(\omega)}
\prod_{a\in A_{imm}(\psi)}
\Expec \Ft{V_{y_a}}(p_a-p_{a-1}) \Ft{V_{y_{a+1}}}(p_{a+1}-p_a)
\Expec \prod_{\psi(j)\not= c_{imm}} \Ft{V_{y_j}^{(\psi(j))}}(p_j-p_{j-1}).
\end{equation}

We then extend this definition to extended paths $\Gamma$  and colorings $\psi:K(\Gamma)\to\mcal{C}$
by taking a product.  Given $\psi:K(\Gamma)\to\mcal{C}$,
let $\psi_\ell$ be the restriction of the coloring $\psi$ to the collisions with index
$(\ell,j)$ for some $j$.
\begin{equation}
\label{eq:extended-colored-op}
O_{\Gamma}^{\psi} =
\ket{\xi_0}\bra{\xi_{2N-1}}
\prod_{\ell=1}^N \braket{\xi_{2\ell-2}|O_{\omega_\ell}^{\psi_\ell}|\xi_{2\ell-1}}
\prod_{\ell=1}^{N-1} \braket{\xi_{2\ell-1}|\xi_{2\ell}}
\end{equation}

\subsection{Partitions from colorings}
In this section we relate expectations involving colored operators to expectations that are split
according to a partition.
Let $\psi^+:K^+\to\mcal{C}$ and $\psi^-:K^-\to\mcal{C}$ be colorings on collision index sets $K^+$
and $K^-$.  We then define the coloring $\psi^+\oplus \psi^-:K^+\sqcup K^-\to\mcal{C}$ to have
restriction $\psi^+$ on $K^+$ and $\psi^-$ on $K^-$.

\begin{definition}
\label{def:ptition-from-coloring}
Let $\psi^+$ and $\psi^-$ as above.  The partition $P(\psi^+,\psi^-)\in\mcal{P}(K^+\sqcup K^-)$
is defined so that $a\sim_{P(\psi^+,\psi^-)} b$ if either
$\psi(a)=\psi(b)\in\mcal{C}_{ext}$ or
$b=a+1$ and $a\in A_{imm}(\psi)$.

Given sets $\Psi^+$ and $\Psi^-$ of colorings, we write $\mcal{Q}(\Psi^+,\Psi^-)$ for the
set of all partitions $P(\psi^+,\psi^-)$ with $\psi^+\in\Psi^+$ and $\psi^-\in\Psi^-$,
excluding partitions containing singleton cells,
\[
\mcal{Q}(\Psi^+,\Psi^-) :=
\{P(\psi^+,\psi^-)\mid \psi^+\in\Psi^+,\psi^-\in\Psi^-,
\text{ and for every } S\in P(\psi^+,\psi^-), |S|\geq 2\}.
\]
For convenience we also write $\mcal{Q}(\Psi)=\mcal{Q}(\Psi,\Psi)$.
\end{definition}

The following identity is then clear from the definition that for deterministic operators $A$,
\[
\Expec (O^{\psi^+}_{\Gamma^+})^*
A
O^{\psi^-}_{\Gamma^-}
= \Expec_{P(\psi^+,\psi^-)} O_{\Gamma^+}^* AO_{\Gamma^-},
\]
where the latter expectation is split according to the partition $P(\psi^+,\psi^-)$.
The identity written above is only formal in the sense that $\Expec_P$ is not a well-defined
operation, but what is meant is that
\begin{align*}
\prod_{A_{imm}(\psi^+)\cup A_{imm}(\psi^-)}
&\Expec \Ft{V_{y_a}}(q_a)\Ft{V_{y_{a+1}}}(q_{a+1}) \\
&\times \Expec\Big[
\prod_{\substack{a\in K^+\\ \psi^+(a)\not=c_{imm}}}
\Ft{V_{y_a}^{(\psi^+(a))}}(q_a))
\prod_{\substack{a\in K^-\\ \psi^-(a)\not=c_{imm}}}
\Ft{V_{y_a}^{(\psi^-(a))}}(q_a)) \Big]
\\
&= \prod_{S\in P(\psi^+,\psi^-)}
\Expec \prod_{a\in S} \Ft{V_{y_a}}(q_a).
\end{align*}
Then, defining the colored operator $O^\Psi_\Gamma$ associated to a set
of colorings
\[
O^\Psi_\Gamma = \sum_{\psi\in\Psi} O^\psi_{\Gamma},
\]
we have
\[
\Expec (O^{\Psi^+}_{\Gamma^+})^* A O^{\Psi^-}_{\Gamma^-}
= \sum_{P\in \mcal{Q}(\Psi^+,\Psi^-)} \Expec_P (O_{\Gamma^+})^*A O_{\Gamma^-}.
\]

In Section~\ref{sec:coloring-machine}
we will construct sets of
of colorings $\Psi^+(\mcal{F})$ and $\Psi^-(\mcal{F})$
associated to a skeleton $\mcal{F}$ so that
$\mcal{Q}(\Psi^+(\mcal{F}),\Psi^-(\mcal{F})) = \mcal{Q}_\mcal{F}$.
In particular, this implies that for $N^4$-complete extended paths $\bm{\Gamma}$
with skeleton $\mcal{F}$, we have the formula
\begin{equation}
\label{eq:op-coloring-expec}
\Expec O_{\Gamma^+}^*A O_{\Gamma^-}
=
\Expec (O_{\Gamma^+}^{\Psi^+(\mcal{F})})^* A O_{\Gamma^-}^{\Psi^-(\mcal{F})}.
\end{equation}

\subsection{The product structure of $O_\Gamma$}
We observe that the path operator $O_\Gamma$ has the product structure
\[
O_\Gamma = O_{\Gamma,1} O_{\Gamma,2} \cdots O_{\Gamma,N}
\]
where
\[
O_{\Gamma,\ell} = \ket{\xi_{2\ell-2}}
\braket{\xi_{2\ell-2}|O_{\omega_\ell}|\xi_{2\ell-1}} \bra{\xi_{2\ell-1}}.
\]
Given an index $\ell$, we write $\Gamma[\ell]$ for the
tuple of variables $(\xi_{2\ell-2},O_{\omega_\ell},\xi_{2\ell-1})$.

The colored path operators have a similar structure
\[
O_{\Gamma}^\psi =  O_{\Gamma,1}^{\psi_1}\cdots O_{\Gamma,N}^{\psi_N}
\]
where
\[
O_{\Gamma[\ell]}^{\psi_\ell} = \ket{\xi_{2\ell-2}}
\braket{\xi_{2\ell-2}|O_{\omega_\ell}^{\psi_\ell}|\xi_{2\ell-1}} \bra{\xi_{2\ell-1}}.
\]

We introduce another notation to write these products using the tensor product space
\[
\mcal{H}_N := \bigotimes_{j=1}^N L^2(\Real^d).
\]
For each $\ell\in[N]$ we define the map
\[
\iota_\ell: \mcal{B}(L^2(\Real^d)) \to \mcal{B}(\mcal{H}_N)
\]
which sends an operator $A$ to an operator $\iota_\ell(A)$ acting locally at the $\ell$-th slot.
That is, we write
\[
\iota_\ell(A) := \Id\otimes\cdots\otimes\Id\otimes A\otimes \Id\otimes\cdots\otimes\Id
\]
where the $A$ appears in the $\ell$-th tensor product slot.  We also define the map
\[
\Mult: \mcal{B}(\mcal{H}_N) \to \mcal{B}(L^2(\Real^d))
\]
which is defined on simple tensors by the formula
\[
\Mult(A_1\otimes A_2\otimes\cdots\otimes A_N) = A_1A_2\cdots A_N,
\]
and which extends by linearity to the space $\mcal{B}(\mcal{H}_N)$.

Using this notation we can express the product structure
of $O_\Gamma$ for example by the identity
\[
O_\Gamma = \Mult (\prod_{\ell=1}^N \iota_\ell(O_{\Gamma,\ell})),
\]
with no need to worry about the ordering of the product since $\iota_\ell(A)$ commutes with
$\iota_{\ell'}(B)$ when $\ell\not=\ell'$.

%We can also use tensor products to state a stronger form of the identity~\eqref{eq:skeleton-ops-expec},
%which is that for extended paths $\bm{\Gamma}$ that are $N^4$-complete and have skeleton $\mcal{F}$,
%\[
%\Expec O_{\Gamma^+}^*\otimes O_{\Gamma^-}
%= \Expec O_{\mcal{F},+,\Gamma^+}^* \otimes O_{\mcal{F},-,\Gamma^-}.
%\]
%This identity in particular implies~\eqref{eq:skeleton-ops-expec} by applying the linear map
%$O_1\otimes O_2 \mapsto O_1 AO_2$ for a given deterministic operator $A$.

\subsection{Product structure for coloring sets}
\label{sec:product-colorings}
In this section we describe two kinds of product structure that can be used to simplify the operator
$O^\Psi_{\Gamma}$.

Let $Q\in\mcal{P}([N])$ be a partition of the segment index set $[N]$.  Let $(\psi_q)_{q\in Q}$ be a
tuple of local coloring functions $\psi_q:K_q\to \mcal{C}$ with domains $K_q\subset \ell^{-1}(q)$.  Such
a tuple defines a global coloring function $\bigoplus_q \psi_q :K\to\mcal{C}$ with $K =\bigcup K_q$, defined so that
$(\bigoplus_q \psi_q)|_{K_{q'}} = \psi_{q'}$.   Given sets of local colorings $\Psi_q$, we define the product
\[
\prod_{q\in Q} \Psi_q := \{ \bigoplus_{q\in Q} \psi_q \mid \psi_q\in \Psi_q\}.
\]
We say moreover that the tuple of coloring function sets $\{\Psi_q\}_{q\in Q}$ \emph{splits}
if there is a partition of the external collision colors $\mcal{C}_{ext} = \bigcup_{q\in Q} \mcal{C}_q$
into disjoint sets $\mcal{C}_q$ such that, for every $\psi_q\in\Psi_q$, $\Range(\psi_q)\subset \mcal{C}_q\cup\mcal{C}_{imm}$.

Given a local coloring function $\psi_q\in \Psi_q$, we define the local path operator $O^{\psi_q}_{\Gamma}$ by
\[
O^{\psi_q}_{\Gamma} = \prod_{\ell\in q} \iota_q(O^{\psi_\ell}_{\Gamma_\ell}).
\]
Then we set $O^{\Psi_q}_\Gamma = \sum_{\psi_q\in\Psi_q} O^{\psi_q}_{\Gamma}$.

The following observation connects the product structure of a coloring set $\Psi$ to the tensor product
structure of the operators $O^\Psi_{\Gamma}$.
\begin{lemma}
\label{lem:operator-product}
Let $Q\in\mcal{P}([N])$ be a partition of $[N]$, and let
Let $\Psi = \prod_{q\in Q}\Psi_q$ be a set  of coloring functions that is a product of sets of local coloring
functions $\Psi_q$.  Then the operator $O^\Psi_\Gamma$ has the product structure
\[
O^\Psi_\Gamma = \Mult\big( \prod_{q\in Q} O^{\Psi_q}_{\Gamma}\big).
\]
If moreover $\Psi$ splits with respect to $Q$, so that $\Range(\Psi_q)\cap\Range(\Psi_{q'})\subset \mcal{C}_{imm}$
for $q\not=q'$, then the operators $O^{\Psi_q}_{\Gamma}$ are independent.
\end{lemma}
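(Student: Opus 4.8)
The plan is to establish the identity $O^\Psi_\Gamma = \Mult\big(\prod_{q\in Q} O^{\Psi_q}_\Gamma\big)$ first for a single coloring $\psi\in\Psi$, then to pass to the sum over $\psi$ by multilinearity, and finally to derive the independence statement from the fact that disjoint sets of colors index independent potentials. First I would record the combinatorial reduction. Since $Q$ is a partition of $[N]$ and $\Psi=\prod_{q\in Q}\Psi_q$, every $\psi\in\Psi$ has a unique decomposition $\psi=\bigoplus_{q\in Q}\psi_q$ with $\psi_q\in\Psi_q$, and summing over $\psi\in\Psi$ is the same as summing over all tuples $(\psi_q)_{q\in Q}\in\prod_{q\in Q}\Psi_q$. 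Writing $q(\ell)\in Q$ for the unique cell containing $\ell$, the restriction of $\psi$ to the collisions of segment $\ell$ coincides with the restriction of $\psi_{q(\ell)}$ to that segment, i.e.\ $\psi_\ell=(\psi_{q(\ell)})_\ell$. This is the only place where the product structure of $\Psi$ and the partition hypothesis on $Q$ enter.

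Next I would unwind the definition~\eqref{eq:extended-colored-op} in tensor notation, $O^\psi_\Gamma=\Mult\big(\prod_{\ell=1}^N \iota_\ell(O^{\psi_\ell}_{\Gamma,\ell})\big)$. Substituting $\psi_\ell=(\psi_{q(\ell)})_\ell$ and regrouping the product over $\ell$ according to the cell $q\ni\ell$ — which is legitimate because $\iota_\ell(\cdot)$ and $\iota_{\ell'}(\cdot)$ commute for $\ell\neq\ell'$ — gives
\[
O^\psi_\Gamma=\Mult\Big(\prod_{q\in Q}\prod_{\ell\in q}\iota_\ell\big(O^{(\psi_q)_\ell}_{\Gamma,\ell}\big)\Big)=\Mult\Big(\prod_{q\in Q} O^{\psi_q}_\Gamma\Big),
\]
the last equality being exactly the definition of the local operator $O^{\psi_q}_\Gamma$. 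Summing over the tuples $(\psi_q)_{q\in Q}$ and using the linearity of $\Mult$ together with the fact that the factors $O^{\psi_q}_\Gamma$ for distinct $q$ act on disjoint tensor slots and hence commute, we obtain $O^\Psi_\Gamma=\Mult\big(\prod_{q\in Q}\sum_{\psi_q\in\Psi_q}O^{\psi_q}_\Gamma\big)=\Mult\big(\prod_{q\in Q}O^{\Psi_q}_\Gamma\big)$, which is the first claim.

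For the independence statement, assume $\Psi$ splits with respect to $Q$, so that $\mcal{C}_{ext}$ is partitioned as $\bigsqcup_{q\in Q}\mcal{C}_q$ with $\Range(\psi_q)\subset\mcal{C}_q\cup\{c_{imm}\}$ for every $\psi_q\in\Psi_q$. Inspecting~\eqref{eq:colored-op-def}, the factors of $O^{\psi_\ell}_{\omega_\ell}$ attached to indices in $A_{imm}(\psi_\ell)$ are deterministic (they are expectations over the potential), while the remaining factors involve only the potentials $V^{(c)}$ with $c$ an external color in the range of $\psi_\ell$; hence each $O^{\psi_q}_\Gamma$, and therefore $O^{\Psi_q}_\Gamma=\sum_{\psi_q\in\Psi_q}O^{\psi_q}_\Gamma$, is measurable with respect to the $\sigma$-algebra generated by $\{V^{(c)}\}_{c\in\mcal{C}_q}$. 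Since the $\{V^{(c)}\}_{c\in\mcal{C}}$ are independent and the sets $\mcal{C}_q$ are pairwise disjoint, these $\sigma$-algebras are independent across $q\in Q$, and so are the operators $\{O^{\Psi_q}_\Gamma\}_{q\in Q}$. There is no serious obstacle here; the argument is essentially bookkeeping, and the only point requiring care is to notice that the immediate-recollision factors contribute deterministic quantities and so do not couple the operators $O^{\Psi_q}_\Gamma$ across different cells.
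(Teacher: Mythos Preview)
Your proposal is correct and is essentially the natural unwinding of the definitions; the paper itself treats this lemma as an observation and does not supply a proof, so your argument is simply a careful verification of what the paper leaves implicit. The only substantive point---that the immediate-recollision factors in~\eqref{eq:colored-op-def} are deterministic and hence do not couple different cells---is exactly the one you flag, and it is handled correctly.
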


The independence of the operators $O^{\Psi_q}_{\Gamma}$ is useful because it allows us to split up the expectation of
expressions involving $O^\Psi_\Gamma$.  In particular we will be computing $\Expec (O^\Psi_{\Gamma^+})^*O^\Psi_{\Gamma^-}$.
To write an expression for this operator we work on the squared Hilbert space $\mcal{H}_N\otimes\mcal{H}_N$
and introduce the squared multiplication map
\[
\Mult^\pm(A_1^+\otimes \cdots\otimes A_N^+ \otimes A_1^-\otimes \cdots A_N^-)
:= (A_N^+)^* (A_{N-1}^+)^*\cdots (A_1^+)^* A_1^- A_2^-\cdots A_N^-.
\]
With this definition, we have for any paths $\Gamma^+,\Gamma^-$ and any coloring functions $\psi^+,\psi^-$
the identity
\[
(O^{\psi^+}_{\Gamma^+})^* O^{\psi^-}_{\Gamma^-}
= \Mult^\pm \big( \prod_{\ell\in [N]}
\iota_\ell(O^{\psi^+_\ell}_{\Gamma^+_\ell})\otimes \iota_\ell(O^{\psi^-_\ell}_{\Gamma^-_\ell}) \big).
\]
Then, if $\Psi=\prod_{q\in Q}\Psi_q$ and $\Psi_q$ split $\mcal{C}_{ext}$, then
\[
\Expec (O^{\Psi}_{\Gamma^+})^* O^{\Psi}_{\Gamma^-}
= \Mult^\pm \big( \prod_{q\in Q} \Expec O^{\Psi_q}_{\Gamma^+}\otimes O^{\Psi_q}_{\Gamma^-} \big).
\]

\section{Constructing partitions from colorings}
\label{sec:coloring-machine}
In this section we provide the construction for the coloring set $\Psi^+(\mcal{F})$ and $\Psi^-(\mcal{F})$
such that $\mcal{Q}(\Psi^+(\mcal{F}),\Psi^-(\mcal{F})) = \mcal{Q}_{\mcal{F}}$ for a skeleton $\mcal{F}$.
We then show how to $\Psi^\pm(\mcal{F})$ into components that have a product structure and which split.
This construction will be used to simplify the calculation of the diagrammatic expansion.

\subsection{Coloring a ladder}
The building blocks of the partitions in $\mcal{Q}_F$
 are renormalized ladders and anti-ladders.  In this section
we demonstrate how to construct these partitions from colorings.  We will construct the set of colorings
by first assigning to each renormalized ladder a canonical coloring.

To define the colorings, we first investigate in some more detail the structure of a ladder partition.
Let $A$ and $B$ be disjoint ordered sets, and let $P\in\mcal{Q}_{rl}(A\sqcup B)$ be a renormalized
ladder partition.  The cells in $P$ are all pairs, and come two distinct types.  The first type is a
``rung''.  A rung is a pair $(a,b)\in A\times B$.  Let $\Rung(P)\subset P$ be the set of rungs in $P$.
The second type is an immediate recollision of the form $(a,a+_A1)$ or $(b,b+_B1)$.  Let
$\Imm(P)$ be the set of immediate recollisions.

We place an ordering on the set of rungs
in which $(a,b)\leq (a',b')$ if $a\leq a'$.
There is a unique order-preserving bijection
$\varphi_{\Rung(P)}:\Rung(P)\to [|\Rung(P)|]$ which enumerates the runs.

We are ready to define the canonical coloring of $A\sqcup B$ associated to the renormalized ladder $P$.
The color palette $\mcal{C}$ is given by
\[
\mcal{C} = \mcal{C}_{ext}\cup\{c_{imm}\} = \{\qt{\ext}\}\times \bbN \cup \{\qt{\Imrec}\}.
\]
For $x\in A\cup B$, we define the canonical coloring
\begin{equation}
\label{eq:psiP-def}
\psi_{P}(x)
=
\begin{cases}
(\qt{\ext},\varphi_{\Rung(P)}(a,b)), & x\in (a,b)\in\Rung(P) \\
\qt{\Imrec}
& x\in \supp \Imm(P).
\end{cases}
\end{equation}
We note that this construction works without modification if $P$ is a renormalized anti-ladder partition.
By construction, the coloring function $\psi_P$ recovers the ladder partition $P$
in the following way.
\begin{lemma}
\label{lem:get-ladder-back}
Let $P\in\mcal{Q}_{rl}(A\sqcup B)$ be a renormalized (anti-)ladder on $A\sqcup B$, and let $\psi_P$ be
the coloring defined in~\eqref{eq:psiP-def}.  Let $\psi_A = (\psi_P)|_A$ and $\psi_B = (\psi_P)|_B$.  Then
$P=P(\psi_A,\psi_B)$, where $P(\psi_A,\psi_B)$ is defined by Definition~\ref{def:ptition-from-coloring}.
\end{lemma}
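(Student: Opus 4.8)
The plan is to unwind both sides of the identity $P=P(\psi_A,\psi_B)$ by separating the cells of the renormalized ladder $P$ into the two types noted before the lemma: rungs (pairs in $A\times B$) and immediate recollisions (within-side pairs of consecutive elements). First I would check that $\psi_P$ is a valid coloring, so that $P(\psi_A,\psi_B)$ is even defined. By Definition~\ref{def:renormalized-ladders}, every cell of $P$ contained entirely in $A$ has the form $\{a,a+_A1\}$ with $a,a+_A1\in I_A$, and symmetrically for $B$; these are exactly the cells on which $\psi_P$ takes the value $c_{imm}$. Hence $\psi_P^{-1}(c_{imm})=\supp\Imm(P)$, this set is partitioned into consecutive pairs, and $A_{imm}(\psi_P)$ is precisely the set of lower endpoints of the immediate recollisions of $P$.

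Next I would analyze the external colors. Since $\varphi_{\Rung(P)}\colon\Rung(P)\to[|\Rung(P)|]$ is a bijection, the color $(\qt{\ext},\varphi_{\Rung(P)}(a,b))$ attached by~\eqref{eq:psiP-def} to both endpoints of a rung $(a,b)$ differs from the color of any other rung. Therefore, for $x,y\in A\cup B$ we have $\psi_P(x)=\psi_P(y)\in\mcal{C}_{ext}$ if and only if $x$ and $y$ are the two endpoints of a common rung of $P$. Plugging this together with the previous paragraph into Definition~\ref{def:ptition-from-coloring}: two elements are $P(\psi_A,\psi_B)$-equivalent exactly when they share an external color (hence form a rung of $P$) or are consecutive with the smaller one in $A_{imm}(\psi_P)$ (hence form an immediate recollision of $P$). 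Since every cell of $P$ is of precisely one of these two types, the pairs identified by $P(\psi_A,\psi_B)$ coincide with the $\sim_P$-related pairs, i.e. $P=P(\psi_A,\psi_B)$.

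Finally I would note that the construction~\eqref{eq:psiP-def} depends only on the combinatorial data $\Rung(P)$, $\Imm(P)$ and the linear order on $\Rung(P)$, all of which are equally available for a renormalized anti-ladder; so the same argument applies verbatim in the anti-ladder case. There is no serious obstacle here: the only point requiring care is the bookkeeping for the immediate recollisions — verifying that $A_{imm}(\psi_P)$ really does recover the lower endpoints of the within-side cells of $P$ — which is exactly where the structural constraints of Definition~\ref{def:renormalized-ladders} enter.
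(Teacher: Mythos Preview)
Your proof is correct and is essentially the same as the paper's (implicit) argument: the paper states the lemma as an immediate consequence of the construction in~\eqref{eq:psiP-def} and gives no separate proof, and your write-up is precisely the straightforward unwinding of Definitions~\ref{def:renormalized-ladders} and~\ref{def:ptition-from-coloring} that justifies this. The care you take in checking that $\psi_P$ is valid and that $A_{imm}(\psi_P)$ recovers the lower endpoints of the immediate recollisions is appropriate, though routine.
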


We define $\Psi^+_{rl}(A)$ to be the set of all coloring functions of the form $(\psi_P)|_A$
for some $P\in \mcal{Q}_{rl}(A,B)$ where $B$ is any finite set.  That is,
\[
\Psi^+_{rl}(A) = \{(\psi_P)|_A \mid P \in \mcal{Q}_{rl}(A,B)\text{ for some finite set } B\}.
\]
We likewise define $\Psi^-_{rl}(A)$ to be the half-coloring functions generated from \emph{anti-ladders}.
If $P$ is an antiladder on $A\sqcup B$, then in fact $(\psi_P)|_A\in\Psi^+_{rl}(A)$ (since the numbering
of the rungs is still increasing on $A$), whereas $(\psi_P)|_B$ has decreasing rung indices.  Therefore we
define
\[
\Psi^-_{rl}(A) = \{(\psi_P)|_A \mid P\in\mcal{Q}^-_{rl}(B,A) \text{ for some finite set }B\}.
\]
%Finally, we set of ``simple'' colorings that come from ladders with no rungs.
%\[
%\Psi^0(A) = \{(\psi_P)_A \mid P\in \mcal{Q}_{rl}(A,\noset)\}.
%\]

We will need to slightly modify the definitions of $\Psi^{\pm}_{rl}(A)$ to allow for labels other
than $\qt{\ext}$ in the color palette $\mcal{C}$.  Let $\lambda_{ext}$ be such a label,
and define the relabelled color palette
\[
\mcal{C}(\lambda_{ext}) := \{\lambda_{ext}\}\times\bbN \cup\{\qt{\Imrec}\}.
\]
Then we set $\Psi^\pm_{rl}(A;\lambda_{ext})$ to be the analogues of $\Psi^\pm_{rl}(A)$
on the color palette $\mcal{C}(\lambda_{ext})$.

As a consequence of Lemma~\ref{lem:get-ladder-back} we see that the coloring
sets $\Psi^\pm_{rl}(A)$ can be used to reconstruct the generalized ladder partitions.
\begin{lemma}
Let $A$ and $B$ be finite sets.  Then the following hold:
\begin{align*}
\mcal{Q}_{rl}(A,B) &= \mcal{Q}(\Psi^+_{rl}(A),\Psi^+_{rl}(B))
= \mcal{Q}(\Psi^-_{rl}(A),\Psi^-_{rl}(B)) \\
\mcal{Q}^-_{rl}(A,B) &= \mcal{Q}(\Psi^+_{rl}(A),\Psi^-_{rl}(B)).
\end{align*}
\end{lemma}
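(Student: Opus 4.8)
The plan is to first pin down the exact structure of the half-coloring sets $\Psi^+_{rl}(A)$ and $\Psi^-_{rl}(A)$, and then to reduce each of the four claimed set equalities to the elementary observation that matching two strictly monotone sequences of labels by equality produces an order-preserving or order-reversing bijection between the underlying index sets.

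First I would record the structural description. A coloring $\psi\in\Psi^+_{rl}(A)$ is precisely a map $\psi:A\to\mcal{C}$ such that (i) $\psi^{-1}(c_{imm})$ is a disjoint union of consecutive pairs $\{a,a+_A1\}$, and (ii) on $A_{ext}:=A\setminus\psi^{-1}(c_{imm})$ the map $\psi$ is injective into $\mcal{C}_{ext}$ and its external labels strictly increase along the order of $A_{ext}$ (and in fact the labels used form an initial segment $\{(\qt{\ext},1),\dots,(\qt{\ext},m)\}$). This follows by inspecting~\eqref{eq:psiP-def}: in a renormalized ladder $P$ on $A\sqcup B$ the immediate recollisions inside $A$ are consecutive pairs, each rung has a unique $A$-endpoint and these exhaust $A_{ext}$, the rungs carry the distinct labels $1,\dots,|\Rung(P)|$, and $\varphi_{\Rung(P)}$ is order-preserving in the $A$-coordinate. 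The same analysis shows $\psi\in\Psi^-_{rl}(A)$ satisfies (i) and (ii) but with the external labels strictly \emph{decreasing} along $A_{ext}$, since for an anti-ladder the $B$-coordinate order reverses the rung order. Conversely every $\psi$ of this form is realized by an explicit renormalized (anti-)ladder on $A\sqcup B$ with $B$ a suitable copy, so the descriptions are sharp.

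Next I would prove $\mcal{Q}_{rl}(A,B)=\mcal{Q}(\Psi^+_{rl}(A),\Psi^+_{rl}(B))$. The inclusion $\subseteq$ is exactly Lemma~\ref{lem:get-ladder-back}: for $P\in\mcal{Q}_{rl}(A,B)$ the restrictions $\psi_A=(\psi_P)|_A\in\Psi^+_{rl}(A)$ and $\psi_B=(\psi_P)|_B\in\Psi^+_{rl}(B)$ satisfy $P=P(\psi_A,\psi_B)$, and $P$ has no singleton cell because it is a matching. For $\supseteq$, take $\psi_A\in\Psi^+_{rl}(A)$, $\psi_B\in\Psi^+_{rl}(B)$ with $P:=P(\psi_A,\psi_B)$ having no singleton. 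The $c_{imm}$-colored elements form consecutive pairs which are $2$-cells of $P$ (never matched across sides, by Definition~\ref{def:ptition-from-coloring}). On the external elements, $P$ matches $a\in A_{ext}$ to the unique $b\in B_{ext}$ carrying the same external label; since the labelings are injective, an external element of $A$ is a singleton of $P$ unless its label is used by $\psi_B$, so the no-singleton hypothesis forces the external label sets of $\psi_A$ and $\psi_B$ to coincide. With equal label sets and both labelings strictly increasing, the induced matching $A_{ext}\to B_{ext}$ is the order-preserving bijection, so $P$ is a renormalized ladder. The equality $\mcal{Q}^-_{rl}(A,B)=\mcal{Q}(\Psi^+_{rl}(A),\Psi^-_{rl}(B))$ is proved identically: Lemma~\ref{lem:get-ladder-back} applied to anti-ladders gives $\subseteq$, while for $\supseteq$ one matches a strictly increasing labeling of $A_{ext}$ against a strictly decreasing labeling of $B_{ext}$ (again with equal label sets forced by the absence of singletons) to obtain the order-reversing bijection, i.e.\ a renormalized anti-ladder.

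Finally, for $\mcal{Q}(\Psi^-_{rl}(A),\Psi^-_{rl}(B))=\mcal{Q}_{rl}(A,B)$ I would either repeat the previous argument (two decreasing labelings with a common label set still yield the order-preserving bijection) or, more slickly, observe that sending each external label $(\qt{\ext},n)$ to $(\qt{\ext},m+1-n)$, where $m$ is the number of external labels a given coloring uses, is a bijection $\Psi^-_{rl}(A)\to\Psi^+_{rl}(A)$ that commutes with forming $P(\cdot,\cdot)$ whenever the two sides use equally many external labels, and otherwise produces a singleton on both sides; hence it induces the identity on $\mcal{Q}(\cdot,\cdot)$. I do not expect a genuine obstacle here — the content is entirely combinatorial once the structural description of $\Psi^{\pm}_{rl}(A)$ is in hand. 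The only points requiring care are keeping the increasing-versus-decreasing conventions straight for the four cases, and verifying that "no singleton cells" is exactly the condition that forces the two external label sets to agree, which is precisely what excludes the non-(anti)ladder partitions that mismatched colorings would otherwise generate.
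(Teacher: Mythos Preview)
Your proposal is correct and is precisely the argument the paper has in mind; the paper does not give a proof but states the lemma as a direct consequence of Lemma~\ref{lem:get-ladder-back} and the definitions of $\Psi^{\pm}_{rl}$. Your structural description of $\Psi^{\pm}_{rl}(A)$ (consecutive-pair $c_{imm}$ fibers, injective increasing/decreasing external labels forming an initial segment) is the right unpacking of~\eqref{eq:psiP-def}, and the key observation---that the no-singleton condition in $\mcal{Q}(\cdot,\cdot)$ forces the external label sets to coincide, after which two monotone bijections onto a common label set produce the order-preserving or order-reversing matching---is exactly the content of the lemma.
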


Now we define a partition of the ladder colorings according to the number of rungs.
Set $\Psi^{\pm,h}_{rl}(A)\subset \Psi^{\pm}_{rl}(A)$ to be the colorings with exactly $h$ rungs,
that is,
\[
\Psi^{\pm,h}_{rl}(A) =
\{\psi\in \Psi^\pm_{rl}(A) \mid \psi^{-1}((\qt{ext},h)) \not=\noset
\text{ and } \psi^{-1}((\qt{ext},h+1)) = \noset\}.
\]
Note that if $h\not=h'$ and $\psi\in \Psi^{\pm,h}_{rl}(A)$ and
$\psi'\in \Psi^{\pm,h}_{rl}(B)$, then $P(\psi,\psi')$ has a singleton set.

\subsection{Coloring a skeleton}
Now we use the construction from the previous section to define  colorings $\Psi^+(\mcal{F};K^+)$
and $\Psi^-(\mcal{F};K^-)$  that assemble the partitions of $\mcal{Q}_{\mcal{F}}$.
Given a skeleton $\mcal{F}$, define the color palette
\[
\mcal{C}(\mcal{F}) = P_\mcal{F}\,
\,\cup\, \mcal{I}^{abs}(\mcal{F})^2 \times \bbN\,\cup\,
\{\qt{\Imrec}\}
\]
Roughly speaking, the first set of colors $P_{\mcal{F}}$ will be used to color the collisions
in $\supp\mcal{F}$, while the second set of colors $\mcal{I}^{abs}(\mcal{F})^2\times\bbN$ are used to color
the rungs of ladders, and $\qt{\Imrec}$ is used for immediate recollisions.

\begin{definition}[Colorings of $K^+$]
Let $\mcal{F}$ be a skeleton and $K^+$ be a collision set containing $\supp \mcal{F} \cap K^+_{all}$.
The coloring collection $\Psi^+(\mcal{F};K^+)$ is the collection of functions
$\psi:K^+\to\mcal{C}(\mcal{F})$ such that the following holds for some
partial matching $M$ of $\mcal{I}^{abs}(\mcal{F})$:
\begin{itemize}
\item For $a\in\supp\mcal{F}$, $\psi(a) = S\in P_{\mcal{F}}$ is the cell $S$ containing the
index $a$.
\item If $I\in\mcal{I}^{abs}(\mcal{F})$ with $\sign(I)=1$ is an abstract interval
that is not matched by $M$, $I\not\in \supp M$, then $\psi^+(a)=\qt{\Imrec}$ for every
$a\in I_K$.
\item If $(I,I')\in M$ is a pair in $M$ and $\sign(I)=+1$, then
$\psi^+|_{I_{K^+}} \in \Psi^+(I_{K^+};\{I,I'\})$.
\item For every $\ell\in [N]$, the number of collisions $a$ satisfying $\ell(a)=\ell$
and $\psi(a)=\qt{\Imrec}$ is even.
\end{itemize}
We then write $\Psi^+(\mcal{F}) = \bigcup_{K^+} \Psi^+(\mcal{F};K^+)$, where the union is
over all collision sets $K^+$.
\end{definition}

The colorings $\Psi^-(\mcal{F};K^-)$ are defined in a very similar way, but with the possibility
that an interval is colored as an anti-ladder.

\begin{definition}[Colorings of $K^-$]
Let $\mcal{F}$ be a skeleton and $K^-$ be a collision set containing $\supp \mcal{F} \cap K^-_{all}$.
The coloring collection $\Psi^-(\mcal{F};K^-)$ is the collection of functions
$\psi:K^-\to\mcal{C}(\mcal{F})$ such that the following holds for some
partial matching $M$ of $\mcal{I}^{abs}(\mcal{F})$ and some
orientation function $\aleph:M\to \{\pm\}$
\begin{itemize}
\item For $a\in\supp\mcal{F}$, $\psi(a) = S\in P_{\mcal{F}}$ is the cell $S$ containing the
index $a$.
\item If $I\in\mcal{I}^{abs}(\mcal{F})$ with $\sign(I)=1$ is an abstract interval
that is not matched by $M$, $I\not\in \supp M$, then $\psi^+(a)=\qt{\Imrec}$ for every
$a\in I_K$.
\item If $(I,I')\in M$ is a pair in $M$ and $\sign(I')=-1$, then
$\psi^+|_{I_{K^+}} \in \Psi^{\aleph(I,I')}(I_{K^-};I,(I,I'))$.
\item For every $\ell\in [N]$, the number of collisions $a$ satisfying $\ell(a)=\ell$
and $\psi(a)=\qt{\Imrec}$ is even.
\end{itemize}
We then write $\Psi^-(\mcal{F}) = \bigcup_{K^-} \Psi^-(\mcal{F};K^-)$, where the union is
over all collision sets $K^-$.
\end{definition}

The following lemma shows that the colorings $\Psi^+(\mcal{F})$ and $\Psi^-(\mcal{F})$ reconstruct
the family of partitions $\mcal{Q}_{\mcal{F}}$.
%Given $\psi^+:K^+\to\mcal{C}(\mcal{F})$
%and $\psi^-:K^-\to\mcal{C}(\mcal{F})$ we define the partition $P(\psi^+,\psi^-)\in\mcal{P}(K^+\sqcup K^-)$
%to be the partition induced by the coloring $\psi^+\oplus\psi^-:K^+\sqcup K^-\to\mcal{C}(\mcal{F})$.

\begin{lemma}
\label{lem:colors-to-partitions}
The coloring sets $\Psi^{\pm}(\mcal{F})$ assemble $\mcal{Q}_{\mcal{F}}$ in the sense that
\[
\mcal{Q}(\Psi^+(\mcal{F}),\Psi^-(\mcal{F})) = \mcal{Q}_\mcal{F}.
\]
\end{lemma}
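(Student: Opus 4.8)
The plan is to prove the two inclusions $\mcal{Q}(\Psi^+(\mcal{F}),\Psi^-(\mcal{F})) \subset \mcal{Q}_\mcal{F}$ and $\mcal{Q}_\mcal{F} \subset \mcal{Q}(\Psi^+(\mcal{F}),\Psi^-(\mcal{F}))$ separately, in both cases working interval-by-interval and reducing to the single-ladder statement already established in Lemma~\ref{lem:get-ladder-back}. The key structural fact is that both sides of the claimed identity are built by the same recipe: fix the restriction to $\supp\mcal{F}$ to be $P_\mcal{F}$, then over the abstract intervals in $\mcal{I}^{abs}(\mcal{F})$ one either colors a single interval entirely with $\qt{\Imrec}$ (producing a simple partition on its realization) or matches it to an interval of opposite sign and colors the matched pair as a renormalized (anti-)ladder. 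So the proof is essentially a bookkeeping argument showing that ``choosing a partial matching $M$ and an orientation $\aleph$ and then coloring'' enumerates exactly the partitions allowed by Definition of $\mcal{Q}_\mcal{F}$.

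First I would handle the inclusion $\mcal{Q}(\Psi^+(\mcal{F}),\Psi^-(\mcal{F})) \subset \mcal{Q}_\mcal{F}$. Take $\psi^+\in\Psi^+(\mcal{F};K^+)$ and $\psi^-\in\Psi^-(\mcal{F};K^-)$ with associated partial matchings $M^+$, $M^-$ and orientation $\aleph$, and suppose $P(\psi^+,\psi^-)$ has no singleton cells. On $\supp\mcal{F}$ both colorings agree with the cell-labelling by $P_\mcal{F}$, so $P(\psi^+,\psi^-)|_{\supp\mcal{F}} = P_\mcal{F}$ — this checks the first bullet of the definition of the canonical partition collection. For an abstract interval $I$ with $\sign(I)=+1$: if $I$ is colored entirely $\qt{\Imrec}$ under $\psi^+$ but some collision in $I_{K^+}$ is matched under $P(\psi^+,\psi^-)$ to a collision of opposite sign, that would require a non-$\qt{\Imrec}$ color shared with $\psi^-$, contradiction; so such an $I$ is saturated by immediate-recollision pairs, i.e.\ $P(\psi^+,\psi^-)|_{I_{K}}$ is a simple partition (the parity condition in the last bullet guarantees the $\qt{\Imrec}$ collisions within a segment pair up, and the requirement $\ell(a)=\ell(a+1)$ keeps recollisions inside a segment). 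If instead $I$ is matched to some $I'$ (necessarily of opposite sign, and with $I'$ the matching partner under the corresponding matching on the minus side — here I would invoke the no-singletons hypothesis to force $M^+$ and $M^-$ to be compatible on the colors actually used), then the shared ``ladder'' colors $(\mcal{I}^{abs}(\mcal{F}))^2\times\bbN$ attached to rungs have index pair $(I,I')$, so the $\psi^+$-rungs match $\psi^-$-rungs by equal color, and by Lemma~\ref{lem:get-ladder-back} (applied with $A=I_{K^+}$, $B=I'_{K^-}$, taking the renormalized ladder or anti-ladder depending on $\aleph$) the restriction $P(\psi^+,\psi^-)|_{I_K\cup I'_K}$ is a renormalized ladder or anti-ladder, which is a generalized ladder. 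This verifies the second bullet; the third bullet (immediate recollisions stay within a segment) is immediate from validity of the colorings.

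Next I would prove the reverse inclusion $\mcal{Q}_\mcal{F} \subset \mcal{Q}(\Psi^+(\mcal{F}),\Psi^-(\mcal{F}))$. Given $P\in\mcal{Q}_\mcal{F}$ on some collision set $K$, read off from $P$ which abstract intervals are ``paired'' (those $I$ for which $P|_{I_K\cup I'_K}$ is a generalized ladder for a partner $I'$) versus ``simple'' (those for which $P|_{I_K}$ is a simple partition); this defines a partial matching $M$ of $\mcal{I}^{abs}(\mcal{F})$, and for each pair $(I,I')$ an orientation $\aleph(I,I')\in\{\pm\}$ according to whether $P|_{I_K\cup I'_K}$ is a ladder or an anti-ladder. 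For each paired $(I,I')$, apply the easy direction of Lemma~\ref{lem:get-ladder-back}: the generalized ladder $P|_{I_K\cup I'_K}$ arises as $P(\psi_P|_{I_K},\psi_P|_{I'_K})$ for the canonical coloring $\psi_P$ of that (anti-)ladder, with rung colors relabelled to lie in $\{(I,I')\}\times\bbN$; assemble these local colorings together with $\qt{\Imrec}$ on the simple intervals and $P_\mcal{F}$-cell labels on $\supp\mcal{F}$ to get global colorings $\psi^+\in\Psi^+(\mcal{F};K^+)$ and $\psi^-\in\Psi^-(\mcal{F};K^-)$. The parity condition is satisfied because $P$ itself satisfies the within-segment condition on immediate recollisions. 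Then $P(\psi^+,\psi^-)=P$ by construction, and $P$ has no singleton cells since $P$ is a genuine partition into pairs and clusters — so $P\in\mcal{Q}(\Psi^+(\mcal{F}),\Psi^-(\mcal{F}))$.

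The main obstacle I expect is the compatibility of the two \emph{independently} chosen partial matchings and orientations on the $+$ and $-$ sides when forming $\mcal{Q}(\Psi^+(\mcal{F}),\Psi^-(\mcal{F}))$: an arbitrary pair $(\psi^+,\psi^-)$ need not have $M^+$ and $M^-$ referring to the same set of paired intervals, and one must show that unless they do, $P(\psi^+,\psi^-)$ contains a singleton (so is excluded from $\mcal{Q}(\Psi^+,\Psi^-)$ by definition). The cleanest route is: if an interval $I$ is colored as a ladder (nontrivial rung colors with index pair $(I,I')$) under $\psi^+$ but $I'$ is \emph{not} correspondingly colored under $\psi^-$, then a rung color $((I,I'),h)$ appears in $\psi^+$ but not in $\psi^-$, producing a collision with no partner, hence a singleton; similarly the rung-count stratification (the sets $\Psi^{\pm,h}_{rl}$ and the remark that mismatched $h$ forces a singleton) rules out pairs whose ladders have different heights. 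Carefully cataloguing these ``singleton-forcing'' mismatches, and checking they cover every incompatible pair, is the delicate part; everything else is a direct translation through Lemma~\ref{lem:get-ladder-back} and the definitions.
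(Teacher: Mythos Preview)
The paper does not give an explicit proof of this lemma; it is stated as following directly from the definitions and from the single-interval case (Lemma~\ref{lem:get-ladder-back}), and the exposition immediately moves on to the product-structure decomposition of $\Psi^\pm(\mcal{F})$. Your proposal correctly unpacks what the paper leaves implicit: the two-inclusion argument, the interval-by-interval reduction to Lemma~\ref{lem:get-ladder-back}, and especially the observation that incompatible choices of matching/orientation on the two sides force a singleton and are therefore excluded from $\mcal{Q}(\Psi^+(\mcal{F}),\Psi^-(\mcal{F}))$ by definition. This last point is indeed the only nontrivial step, and your treatment of it (a rung color $(\{I,I'\},h)$ appearing on one side but not the other, or mismatched rung counts, produces an unmatched collision) is exactly right and is consistent with the remark in the single-ladder section that $\psi\in\Psi^{\pm,h}_{rl}$ and $\psi'\in\Psi^{\pm,h'}_{rl}$ with $h\neq h'$ yields a singleton.
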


In the remainder of the section we will partition $\Psi^+(\mcal{F})$ and $\Psi^-(\mcal{F})$ into
pieces that have a product structure and which split in the way described in Section~\ref{sec:product-colorings}.

\subsection{Partitioning $\Psi^{\pm}(\mcal{F})$ into sets with product structure}
There are two reasons that $\Psi^\pm(\mcal{F})$ cannot not have a product structure with respect
to a partition $Q\in\mcal{P}([N])$.  The first is that there are global correlations across segment
intervals coming from the matching $M$.  The second is that the rung counts within a ladder communicate
across segments.  We will partition $\Psi^\pm(\mcal{F})$ into sets with fixed matchings $M$ and fixed rung
counts so that within certain intervals of $[N]$ we recover a product structure.  The partition we
choose also has some extra information that will be useful in the diarammatic estimates.

We start by partitioning the $\Psi^{\pm}(\mcal{F})$ according to the matching $M$.
We define the matching $M$ associated to a coloring $\psi$ to be the set of all
pairs of abstract intervals $(I,I')\in\mcal{I}^{abs}(\mcal{F})$ which share at least one rung:
\[
M_\psi := \{(I,I')\in\mcal{I}^{abs}(\mcal{F}) \mid \psi^{-1}((\{I,I'\},1)) \not=\noset\}.
\]
Given $(I,I')\in M_\psi$ we then define the orientation
\[
\aleph_\psi(I,I') :=
\begin{cases}
-,
\psi^{-1}((\{I,I'\}),2)\cap K^+ < \psi^{-1}((\{I,I'\}),1)\cap K^+ \\
+, \text{ else}.
\end{cases}
\]
That is, $\aleph_\psi(I,I')=-1$ if $(I,I')$ has at least two rungs and they are in reverse order, and is
$+1$ otherwise.

We partition $\Psi^-(\mcal{F})$ according to the values of $M$ and $\psi$,
\[
\Psi^-(\mcal{F};M,\aleph) := \{\psi \in \Psi^-(\mcal{F}) \mid M_\psi = M, \aleph_\psi = \aleph\}.
\]
This yields a partition of $\Psi^-(\mcal{F})$ into sets $\Psi^-(\mcal{F};M,\aleph)$ into sets which
share the same global properties.

We further subdivide $\Psi^-(\mcal{F};M,\aleph)$  according to the rung counts at specified collisions.
Given a coloring $\psi$ we define the rung count function $\ovln{\psi}$ by
\[
\ovln{\psi}(a)
= \begin{cases}
j, & \psi(a) = (\{I,I'\},j)\text{ for some }(I,I')\in M \\
\noset, & \text{ else.}
\end{cases}
\]
We also define the segment rung count function $m_\psi(a)$ to be the number of rungs within the segment
(and interval ) containing $a$,
\[
m_\psi(a) := \#\{ b \in I_K\cup I'_K \mid \psi(b) = (\{I,I'\},j) \text{ and }\ell(b)=\ell(a)\}.
\]
Finally we define the neighbors $\next_\psi(a)$ and $\back_\psi(a)$
of an index $a$.  These are the next rung after the index $a$ and the previous rung before index $a$,
respectively,
\begin{align*}
\next_\psi(a) &= \min\{a' \mid a'>a\text{ and } \psi(a') \in M\times \bbN\} \\
\back_\psi(a) &= \max\{a' \mid a'<a \text{ and }\psi(a') \in M\times \bbN\}.
\end{align*}
In the above definitions we use the convention $\min\noset = \infty$ and $\max\noset=0$,
so that $\next_\psi(a) = \infty$ if there are no rungs after $a$, and $\back_\psi=0$ if there are
no rungs preceding $a$.

Given an index $a$, we then define the neighborhood tuple
\[
\Nbd_\psi(a) :=
(\ovln{\psi}(a), \next_\psi(a), \ovln{\psi}(\next_\psi(a)), m_\psi(\next_\psi(a)),
\back_\psi(a), \ovln{\psi}(\back_\psi(a)), m_\psi(\back_\psi(a)))
\]
Then given a set $S$ we define $\Nbd_\psi(S) = (\Nbd_\psi(a))_{a\in S}$.
Given a set $S$ and a neighborhood tuple $\Nbd(S)$ of a set we define the coloring sets
\[
\Psi^-(\mcal{F};M,\aleph,\Nbd(S))
:= \{\psi\in\Psi^-(\mcal{F};M,\aleph) \mid \Nbd_\psi(S) = \Nbd(S)\}.
\]
These coloring sets clearly partition $\Psi^-(\mcal{F})$, and we will next show that they
have a product structure.   We call the tuples $(\mcal{F},M,\aleph,\Nbd(S))$
\emph{scaffolds}.\footnote{A scaffold is a structure that can support many ladders and sticks.}
The specific choice we make for the set $S$ is to take $S=\supp^-\mcal{F}'$ for
some skeleton $\mcal{F}'\supset \mcal{F}$.  We write
$\mcal{S}^\pm(\mcal{F},\mcal{F}')$ for the set of scaffolds of the form
$\Scaff = (\mcal{F},M,\aleph,\Nbd(\supp^\pm \mcal{F}'))$.  Then given any
skeletons $\mcal{F}$ and $\mcal{F}'\supset \mcal{F}$, we have the partition
\[
\Psi^\pm(\mcal{F})
= \bigcup_{\mcal{S}^\pm(\mcal{F},\mcal{F}')} \Psi^\pm(\Scaff).
\]
We observe that there are not \emph{too many} scaffolds associated to a pair of skeletons $\mcal{F}\subset \mcal{F}'$,
\begin{equation}
\label{eq:scaffold-count}
\#\mcal{S}^\pm(\mcal{F},\mcal{F}') \leq N^{C\|\mcal{F}'\|}.
\end{equation}

We will show in the next section that $\Psi^\pm(\Scaff)$ has a product structure as described in
Section~\ref{sec:product-colorings}.

\subsubsection{Product structure of $\Psi^-(\Scaff)$}
\label{sec:scaff-product}
For this section we fix a scaffold $\Scaff=(\mcal{F};M,\aleph,\Nbd(S))$ and demonstrate
that $\Psi^\pm(\Scaff)$ has a product structure.  Again we focus for concreteness on the set
$\Psi^-(\Scaff)$ which is slightly more complicated than $\Psi^+(\Scaff)$.
We define the \emph{base} of the scaffold $\Scaff$ to be the set $S = \supp^-\mcal{F}'$ along with the neighboring rungs specified
by $\Nbd(S)$,
\[
\Base(\Scaff) = S \cup \{\next(a)\}_{a\in S} \cup \{\back(a)\}_{a\in S}.
\]
We will write simply $\Base$ instead of $\Base(\Scaff)$ in this section, as we have fixed for now the scaffold
$\Scaff$.
Then define
$\mcal{I}^{abs}(\Base)$ to be the abstract interval collection with endpoints
$\Base$.   Since $S\subset\supp\mcal{F}$, for each $J=\bbk{a,b}\in\mcal{I}^{abs}(\Base)$
there exists a unique $I=\bbk{a',b'}\in\mcal{I}^{abs}(\mcal{F})$ such that
$a'\leq a$ and $b\leq b'$.  In particular, for any collision set $K$ and
$J\in\mcal{I}^{abs}(\Base)$, $J_K \subset I_K$ for some $I\in\mcal{I}^{abs}(\mcal{F})$

We say that $J\in\mcal{I}^{abs}(\Base)$ is
\textit{ladder-like} if $J=\bbk{a,b}$ with $a,b\not\in S$.  Otherwise we say that
$J\in\mcal{I}^{abs}(\Base)$ is a \textit{stick} if $J=\bbk{a,b}$ and $a\in S$ or $b\in S$.
Note that if $J$ is ladder-like then $J = \bbk{\next(s),\back(s')}$ for some $s<s'\in S$
(since every element in $\Base\setminus S$ is of the form $\next(s)$ or $\back(s)$ for some $s\in S$).

We use the intervals $\mcal{I}^{abs}(\Base)$ to partition the set $[N]$ as follows.
To each $J = \bbk{a,b}\in\mcal{I}^{abs}(\Base)$ we define the set
\[
\ell(J) := \bbk{\ell(a),\ell(b)} = [\ell(a)+1,\ell(b)-1]\subset [N].
\]
Note that $\ell(J)$ may be empty if $\ell(b)\leq \ell(a)+1$.

The intervals $\ell(J)$ can be used to form the partition $Q\in\mcal{P}([N])$
\[
Q(\Base) = \{\ell(\Base)\} \cup \{\ell(J)\}_{J\in\mcal{I}^{abs}(\Base)}.
\]
Given a realization of a collision index set $K$, we define the partition
$\ell^{-1} Q(\Base)\in\mcal{P}(K)$ by
\[
\ell^{-1}Q(\Base) := \{K\cap\ell^{-1}(\Base) ,\{K \cap \ell^{-1}(\ell(J))\}_{J\in\pi(\mcal{I}^{abs}(\Base))}\}
\]
The first set in the partition is called the ``junk'' set $\Junk(\Base,K)$, and the latter
sets are written
$J^\circ_K$, so
\[
J^\circ_K = \ell^{-1}\ell(J)\cap K = \{a\in K \mid \ell(a)\in \ell(K), a\in J_K\}.
\]

Now we are ready to show that the sets $\Psi^-(\mcal{F};M,\aleph,\Nbd(S))$ have a product structure
with respect to the partition $Q$.  To write the product structure of
$\Psi^\pm(\Scaff)$, define for each $q\in Q(\Base)$ the set of coloring functions
\begin{equation}
\label{eq:Psiq-def}
\Psi^\pm_q(\Scaff) :=
\{\psi|_{\ell^{-1}(q)} \mid \psi\in \Psi^\pm(\Scaff)\}.
\end{equation}
That is, $\Psi^\pm_q$ consists of the restrictions of the colorings $\psi$ in $\Psi^\pm$ to collisions
belonging to the interval $q$.

The main result is the following.

\begin{lemma}
\label{lem:coloring-product}
Let $\Scaff\in\mcal{S}(\mcal{F},\mcal{F}')$ for a pair of skeletons
$\mcal{F}\subset\mcal{F}'$.
Then, with $\Psi_q^-(\Scaff)$ defined by~\eqref{eq:Psiq-def},
the coloring set
$\Psi^-(\Scaff)$ has the product structure
\begin{equation}
\label{eq:Psi-product}
\Psi^-(\Scaff)
= \prod_{q\in Q(\Base)} \Psi_q^-(\Scaff).
\end{equation}
Moreover, letting $\mcal{C}_q = \bigcup_{\psi_q\in\Psi^-_q} \Range(\psi_q)$, we have
for $q\not= q'$ the inclusion $\mcal{C}_q\cap \mcal{C}_{q'} \subset \mcal{C}_{imm}$.
\end{lemma}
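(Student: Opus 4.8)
The plan is to unwind the definitions of $\Psi^-(\Scaff)$ and of $Q(\Base)$ and verify the two claims directly, since the partition $Q(\Base)$ has been tailored precisely so that the global constraints defining $\Psi^-(\Scaff)$ decouple across its blocks. First I would observe that a coloring $\psi\in\Psi^-(\Scaff)$ is, by the definition of the scaffold $\Scaff=(\mcal{F},M,\aleph,\Nbd(S))$ with $S=\supp^-\mcal{F}'$, completely determined by the following data: its restriction to $\supp\mcal{F}$ (which is forced to be the cell map $a\mapsto S\in P_\mcal{F}$, hence identical for all $\psi\in\Psi^-(\Scaff)$), its matching $M_\psi=M$ and orientation $\aleph_\psi=\aleph$ (fixed), the neighborhood tuple $\Nbd_\psi(S)=\Nbd(S)$ (fixed), and then, \emph{within each ladder-like or stick interval $J\in\mcal{I}^{abs}(\Base)$}, the restriction of the ladder/anti-ladder coloring to $J_K$. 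The key point is that $\Base(\Scaff)$ records the rung \emph{immediately before and immediately after} each index of $S$, together with the rung indices $\ovln\psi$ and the segment rung counts $m_\psi$ at those rungs; so once we pass to an interval $J=\bbk{a,b}\in\mcal{I}^{abs}(\Base)$ of consecutive $\Base$-points, the coloring of $J_K$ is an \emph{unconstrained} renormalized-ladder coloring of $J_K$ with the prescribed number of rungs (the rung index at the left end is $\ovln\psi(a)$ and at the right end is $\ovln\psi(b)$, both fixed by $\Nbd(S)$, and the remaining rung indices simply increment), subject only to the local parity condition that the number of $\qt{\Imrec}$-colored collisions with a given segment index $\ell$ is even. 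Because recollisions cannot cross segment boundaries and because the blocks $\ell(J)$ of $Q(\Base)$ partition $[N]$, this parity condition is itself local to each block.

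Next I would check the product identity~\eqref{eq:Psi-product} by exhibiting the two inclusions. For $\subseteq$: given $\psi\in\Psi^-(\Scaff)$, set $\psi_q=\psi|_{\ell^{-1}(q)}$ for each $q\in Q(\Base)$; by definition $\psi_q\in\Psi_q^-(\Scaff)$, and $\psi=\bigoplus_q\psi_q$ because the blocks $\{\ell^{-1}(q)\}_{q\in Q(\Base)}$ cover $K$ (every collision lies in some segment, and every segment index lies in exactly one block of $Q(\Base)$ — here one uses that $\ell(\Base)$ together with the intervals $\ell(J)$ exhausts $[N]$, since the $J$'s are the maximal gaps between consecutive $\Base$-points and $\ell$ is monotone). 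For $\supseteq$: given a tuple $(\psi_q)_{q}$ with $\psi_q\in\Psi_q^-(\Scaff)$, form $\psi=\bigoplus_q\psi_q$ and verify it lies in $\Psi^-(\Scaff)$. The only thing to check is that the local data glue into a coloring with the prescribed \emph{global} matching $M$, orientation $\aleph$, and neighborhood tuple $\Nbd(S)$; but each $\psi_q$ was by construction a restriction of \emph{some} coloring in $\Psi^-(\Scaff)$, and the blocks of $Q(\Base)$ were cut exactly along the support points $\Base\supset S$ whose colors and rung data are rigidly specified — so the gluing is consistent on overlaps (the $\Base$-points themselves sit in $\ell^{-1}(\ell(\Base))$ and carry their forced values) and the reassembled $\psi$ has $M_\psi=M$, $\aleph_\psi=\aleph$, $\Nbd_\psi(S)=\Nbd(S)$ by inspection. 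One must also confirm the per-segment parity of $\qt{\Imrec}$ colors survives reassembly, which is immediate since it was imposed block-by-block and each segment belongs to one block.

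For the splitting claim $\mcal{C}_q\cap\mcal{C}_{q'}\subset\mcal{C}_{imm}$ for $q\ne q'$: the colors appearing in $\Range(\psi_q)$ are of three types — cells $S\in P_\mcal{F}$, rung colors $(\{I,I'\},j)$, and $\qt{\Imrec}$. A cell color $S\in P_\mcal{F}$ can appear in $\mcal{C}_q$ only if $S$ meets $\ell^{-1}(q)$; since the junk block $\ell(\Base)\supset\ell(S)$ absorbs all of $\supp\mcal{F}$'s segments that carry forced cell colors — more precisely the only block of $Q(\Base)$ whose segment set can contain a collision in $\supp^-\mcal{F}$ with the forced cell coloring is a block adjacent to $S$, and by construction of $\Base$ those are pinned — these cell colors are confined to a single block, so they cannot be shared. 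A rung color $(\{I,I'\},j)$ with $(I,I')\in M$: here I would use that $M$ is a matching of intervals and the rung count $\ovln\psi$ is strictly increasing along the ladder, so for each value of $j$ the collisions colored $(\{I,I'\},j)$ lie in a single $\mcal{I}^{abs}(\Base)$-interval $J$ (this is exactly the point of recording $\next,\back$ and the segment rung counts in $\Nbd(S)$: they tell us in which $Q(\Base)$-block each rung index lives), hence in a single block $\ell(J)$, hence in one $\mcal{C}_q$. That leaves $\qt{\Imrec}$, which is precisely $\mcal{C}_{imm}$ and is allowed to be shared. The main obstacle — the step I expect to require the most care — is this last bookkeeping: verifying that the neighborhood data $\Nbd(S)=\Nbd_\psi(\supp^-\mcal{F}')$ really does pin down, for every rung color, which single block of $Q(\Base)$ contains all its occurrences, so that no non-$\qt{\Imrec}$ color leaks across blocks; this is a purely combinatorial argument about how the maximal intervals $\mcal{I}^{abs}(\Base)$ refine $\mcal{I}^{abs}(\mcal{F})$ and how rung indices distribute, and it is where one genuinely uses that $S$ was taken to be $\supp^-\mcal{F}'$ rather than an arbitrary set. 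The rest is definition-chasing.
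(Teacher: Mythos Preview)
Your proposal is correct and follows essentially the same approach as the paper: the product structure is verified by showing that freely recombining the restrictions $\psi|_{\ell^{-1}(q)}$ preserves membership in $\Psi^-(\Scaff)$ (the paper phrases this as a one-block swap argument, with the key observation that the number of rungs in each $J^\circ_K$ is fixed by the scaffold so $\Nbd_\psi(S)$ is unaffected), and the splitting follows from exactly the rung-index bookkeeping you flag as the main obstacle. The paper has simply factored that bookkeeping out as the preceding Lemma~\ref{lem:interval-color-structure}, whose formula~\eqref{eq:psi-range} gives the explicit (and visibly disjoint) ranges of rung colors appearing in each block, so what you propose to verify inline is already available to cite.
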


To prove Lemma~\ref{lem:coloring-product} we first
make a simple observation concerning the structure of colorings in
$\Psi^-(\Scaff)$.
\begin{lemma}
\label{lem:interval-color-structure}
Let $J\in\mcal{I}^{abs}(\Base)$ be an abstract interval, and let
$\psi\in\Psi^-(\mcal{F};M,\aleph,\Nbd(S))$ be a coloring on collision set $K$.
Let $I\in\mcal{I}^{abs}(\mcal{F})$ be the unique interval satisfying $J_K \subset I_K$.
Then one of the following holds:
\begin{itemize}
\item The interval $J$ is a stick, and in this case $\psi(J^\circ_K) \subset \mcal{C}_{imm}$,
or in other words $\psi|_{J^\circ_K}\in \Psi^{+,0}_{rl}(J^\circ_K)$.
\item The interval $J$ is ladder-like, so that $\{I,I'\}\in M$ for some $I'\in\mcal{I}^{abs}(\mcal{F})$
and $J=(\next(s),\back(s'))$ for some $(s,s')\in S$.  Now if $\aleph(\{I,I'\}) = +1$ then
\begin{equation}
\label{eq:psi-range}
\psi(J^\circ_K) \setminus \mcal{C}_{imm}
= \{\{I,I'\}\}\times \{\ovln{\psi}(\next(s)) + m(\next(s)) \leq j \leq  \ovln{\psi}(\back(s')) - m(\back(s')) \}.
\end{equation}
Otherwise if $\aleph(\{I,I'\})=-1$ then
\[
\psi(J^\circ_K) \setminus \mcal{C}_{imm}
= \{\{I,I'\}\} \times
\{ \ovln{\psi}(\back(s')) + m(\back(s')) \leq j \leq  \ovln{\psi}(\next(s)) - m(\next(s)) \}.
\]
\end{itemize}
\end{lemma}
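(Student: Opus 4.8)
The statement is a direct unpacking of the definitions of $\Psi^-(\mcal{F};K^-)$, of the base $\Base(\Scaff)$ and its intervals, and of the auxiliary data $\next_\psi,\back_\psi,\ovln\psi,m$. The plan is to fix $\psi\in\Psi^-(\mcal{F};M,\aleph,\Nbd(S))$ on a collision set $K$ and an abstract interval $J=\bbk{a,b}\in\mcal{I}^{abs}(\Base)$, pass to the unique $I\in\mcal{I}^{abs}(\mcal{F})$ with $J_K\subset I_K$, and then apply the defining property of $\Psi^-$: on $I_K$ the coloring $\psi$ is either constantly $\qt{\Imrec}$ (when $I$ is unmatched by $M$) or a renormalized ladder/anti-ladder coloring from $\Psi^{\aleph(I,I')}(I_K;\dots)$ for the matched pair $(I,I')\in M$, whose only non-$\mcal{C}_{imm}$ colors have the form $(\{I,I'\},j)$ and whose rung indices $j$ are monotone along the collision order, increasing if $\aleph(\{I,I'\})=+1$ and decreasing if $\aleph(\{I,I'\})=-1$. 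Before splitting into cases I would record that $J^\circ_K$ is a union of \emph{entire} segments of $K$ — those with index strictly between $\ell(a)$ and $\ell(b)$ — so that the restriction of a valid coloring to $J^\circ_K$ is again valid (immediate-recollision pairs never straddle a segment boundary), and in particular a restriction with range contained in $\mcal{C}_{imm}$ lies in $\Psi^{+,0}_{rl}(J^\circ_K)$.

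Case 1 (sticks). If $J$ has an endpoint $s\in S$, I claim $J^\circ_K$ contains no rungs. The collisions of $K$ lying strictly between $s$ and the adjacent element of $\Base$ are, by construction of $\next_\psi$ and $\back_\psi$, not rungs (there is no rung strictly between $s$ and $\next_\psi(s)$, nor between $\back_\psi(s)$ and $s$), and they are not in $S$ (else $\Base$ would have a closer element); since they are coloured within a valid coloring on $I_K$ whose only colours are rung colours and $\qt{\Imrec}$, they must all be $\qt{\Imrec}$. With the preliminary remark this gives $\psi(J^\circ_K)\subset\mcal{C}_{imm}$ and $\psi|_{J^\circ_K}\in\Psi^{+,0}_{rl}(J^\circ_K)$. (The unmatched-$I$ situation is subsumed here, since then $\psi|_{I_K}\equiv\qt{\Imrec}$.)

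Case 2 (ladder-like). When neither endpoint of $J$ lies in $S$, the structure of $\Base$ forces $J=(\next_\psi(s),\back_\psi(s'))$ for some $s<s'$ in $S$ (with the boundary conventions $\next_\psi(0)=$ first rung, $\back_\psi(\infty)=$ last rung when $J$ abuts $0$ or $\infty$), with $\next_\psi(s),\back_\psi(s')$ genuine rungs. Being rungs, they are not in $\supp\mcal{F}$, hence lie in the interior of a single $\mcal{I}^{abs}(\mcal{F})$-interval, which must be $I$; so $I$ is matched, $(I,I')\in M$, and $\psi|_{I_K}$ is a renormalized ladder or anti-ladder. The remaining work is index bookkeeping. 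The rungs of $\psi|_{I_K}$ that appear on $J^\circ_K$ are precisely those in the segments strictly between $\ell(\next_\psi(s))$ and $\ell(\back_\psi(s'))$. Using that consecutive rungs carry consecutive rung indices, that the rungs inside a single segment form a consecutive block in the rung ordering (that ordering being the one induced by collision order), and the meanings of $\ovln\psi(\cdot)$ (the index of a rung) and $m(\cdot)$ (the number of rungs of the matched pair in a given segment), one reads off that the least rung index occurring past segment $\ell(\next_\psi(s))$ equals $\ovln\psi(\next_\psi(s))+m(\next_\psi(s))$ and the greatest occurring before segment $\ell(\back_\psi(s'))$ equals $\ovln\psi(\back_\psi(s'))-m(\back_\psi(s'))$ when $\aleph(\{I,I'\})=+1$; when $\aleph(\{I,I'\})=-1$ the two roles are exchanged, giving the second displayed formula. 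Since every non-rung collision of $J^\circ_K$ is again $\qt{\Imrec}$, this yields the asserted description of $\psi(J^\circ_K)\setminus\mcal{C}_{imm}$; degenerate cases ($\ell(J)=\noset$) are vacuous.

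I expect the only real obstacle to be the bookkeeping in Case 2: one has to verify that $\next_\psi(s)$ (resp. $\back_\psi(s')$) sits at the boundary of the block of rungs "owned" by $J$ — in particular handling the situation where $s$ shares a segment with its neighbouring rungs, which is exactly what forces the offsets $m(\next_\psi(s))$ and $m(\back_\psi(s'))$ rather than plain $+1$ — and that the boundary abstract intervals are consistent with the conventions chosen for $\ovln\psi(0)$, $\ovln\psi(\infty)$, $m(0)$, $m(\infty)$. Everything else is a literal translation of the definitions, and the statement is exactly what is used in the next subsection to establish the product structure $\Psi^-(\Scaff)=\prod_{q\in Q(\Base)}\Psi^-_q(\Scaff)$ of Lemma~\ref{lem:coloring-product}.
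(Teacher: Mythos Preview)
Your proposal is correct and follows essentially the same approach as the paper's proof: a case split on stick versus ladder-like intervals, using the definitions of $\next_\psi$ and $\back_\psi$ to see that sticks carry only immediate-recollision colors, and then unpacking the rung-index bookkeeping in the ladder-like case via the monotonicity of $\ovln\psi$ and the segment rung counts $m(\cdot)$. The paper's version is terser (it omits the preliminary remark about $J^\circ_K$ being a union of full segments and treats only the $\aleph=+1$ case explicitly), but the argument is the same.
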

\begin{proof}
If $J$ is a stick, then $J = (s,\next(s))$ or $J=(\back(s),s)$ for some $s\in S$.  In particular,
by the definition of the functions $\next$ and $\back$ we have $\psi(J_K) \subset \mcal{C}_{imm}$.
Since $J^\circ_K\subset J_K$ this proves the first claim.

We prove the second claim when $\aleph=+1$, as the case $\aleph=-1$ is similar.  In this
case, $J=(\next(s),\back(s'))$, so we have the following identity for the rung count function
$\ovln{\psi}$ on the interval $J_K$:
\[
\ovln{\psi}(J_K) = ]\ovln{\psi}(\next(s)), \ovln{\psi}(\back(s))[.
\]
The claim then follows from observing for example that
$\ovln{\psi}(a) - \ovln{\psi}\next(s)\in [0,m(\next(s)]$ implies $\ell(a)=\ell(\next(s))\not\in\ell(J)$,
whereas $\ovln{\psi(a)}-\ovln{\psi}(\next(s)) > m(\next(s)$ implies $\ell(a)>\ell(\next(s))$.
\end{proof}

We are now ready to prove Lemma~\ref{lem:coloring-product}.
\begin{proof}[Proof of Lemma~\ref{lem:coloring-product}]
To prove the product structure of $\Psi^-(\Scaff)$ it suffices to show
the following: for any $\psi,\psi'\in\Psi^-(\Scaff)$ and any $J\in \mcal{I}^{abs}(\Base)$,
the coloring $\varphi$ defined by $\varphi(a)=\psi(a)$ for $a\in J^\circ_K$ and $\varphi(a)=\psi'(a)$
otherwise is also a valid coloring in $\Psi^-(\Scaff)$.  This is straightforward, and the
key idea is that the number of rungs in any $J^\circ_K$ is fixed, so swapping the values of $\psi$ for $\psi'$
on $J^\circ_K$ does not affect $\Nbd_\psi(S)$.

On the other hand, the fact that $\Psi^-(\mcal{F};M,\aleph,\Nbd(S))$ splits according to the partition $Q(\Base)$
follows from Lemma~\ref{lem:interval-color-structure} and observing that the sets described
in~\eqref{eq:psi-range} are disjoint.
\end{proof}

The final result we need in this section is a characterization of the partition
sets $\mcal{Q}(\Psi^-_q(\Scaff))$, which is a simple consequence
of Lemma~\ref{lem:interval-color-structure}.
\begin{lemma}
Let $(M,\aleph,\Nbd)$ as above, and let $J\in\mcal{I}^{abs}(\Base(\Nbd))$ be an abstract
interval, and let $q=\ell(J)\in Q(\Base)$.  If $J$ is a stick interval,
then the partition collection $\mcal{Q}(\Psi^-_q(\Scaff))$
consists of simple partitions.  On the other hand if $J=]\next(s),\back(s')[$ is a ladder
interval, then $\mcal{Q}(\Psi^-_q(\Scaff))$ consists of generalized ladder partitions
with exactly
\[
\ovln{\psi}(\back(s')) - \ovln{\psi}(\next(s)) + 1 - m(\back(s'))-m(\next(s))
\]
rungs.
\end{lemma}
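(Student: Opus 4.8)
The plan is to deduce both assertions directly from Lemma~\ref{lem:interval-color-structure}, which already records the range of an arbitrary coloring $\psi_q\in\Psi^-_q(\Scaff)$ on $J^\circ_K$, together with Definition~\ref{def:ptition-from-coloring} of $P(\psi,\psi')$ and Definition~\ref{def:renormalized-ladders} of renormalized ladders. Throughout, recall that $q=\ell(J)$, so that $\ell^{-1}(q)\cap K=J^\circ_K$, and that $\mcal{Q}(\Psi^-_q(\Scaff))=\mcal{Q}(\Psi^-_q(\Scaff),\Psi^-_q(\Scaff))$ is obtained by pairing two colorings drawn from the \emph{same} set, hence of the same orientation $\aleph$.

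First I would treat the stick case. If $J$ is a stick, then by the first alternative of Lemma~\ref{lem:interval-color-structure} every $\psi_q\in\Psi^-_q(\Scaff)$ satisfies $\psi_q(J^\circ_K)\subset\mcal{C}_{imm}$. Given two such colorings $\psi_q,\psi'_q$, the partition $P(\psi_q,\psi'_q)$ has no cell containing indices from both copies, since such a cell would require a shared $\mcal{C}_{ext}$-color and none occurs; and since each copy of $J^\circ_K$ is entirely $c_{imm}$-colored, validity (with the constraint $\ell(a)=\ell(a+1)$ on each immediate-recollision pair) forces the immediate-recollision structure on each copy to be the unique pairing of its collisions into consecutive pairs within each segment. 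Hence $P(\psi_q,\psi'_q)$ restricts on each copy to a simple partition, so every member of $\mcal{Q}(\Psi^-_q(\Scaff))$ is a simple partition.

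Next I would treat the ladder case, where $J=]\next(s),\back(s')[$ for some $s<s'$ in $S$. The endpoints $\next(s),\back(s')$ belong to $\Base$, so the quantities $\ovln{\psi}(\next(s))$, $m(\next(s))$, $\ovln{\psi}(\back(s'))$, $m(\back(s'))$ are recorded in the data $\Nbd(S)$ of the scaffold and are therefore the same for every $\psi_q\in\Psi^-_q(\Scaff)$; write $h$ for the rung count in the statement, which by Lemma~\ref{lem:interval-color-structure} equals the cardinality of the fixed consecutive block of rung colors $\{\{I,I'\}\}\times\{j_0,\dots,j_0+h-1\}$ (with $j_0$ fixed by $\Nbd(S)$ and $\aleph(\{I,I'\})$) that every $\psi_q$ uses on $J^\circ_K$, the rest of $J^\circ_K$ being $c_{imm}$-colored. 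By the same lemma $\psi_q|_{J^\circ_K}$ is a ladder coloring whose rungs are numbered monotonically in position — increasingly if $\aleph(\{I,I'\})=+1$, decreasingly if $\aleph(\{I,I'\})=-1$. Hence, for $\psi_q,\psi'_q\in\Psi^-_q(\Scaff)$, the partition $P(\psi_q,\psi'_q)$ pairs, for each of the $h$ rung colors $c$, the unique $c$-colored collision of $\psi_q$ with the unique $c$-colored collision of $\psi'_q$; since both colorings carry the same orientation, this color matching is a composition of two monotone bijections onto $\{0,\dots,h-1\}$ and so is order-preserving — a ladder matching on the $h$ rungs. The remaining $c_{imm}$-colored collisions form simple partitions within each copy, as in the stick case. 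Assembling these shows $P(\psi_q,\psi'_q)$ is a renormalized (generalized) ladder with exactly $h$ rungs; and since both sides use the identical $h$-element color block while the immediate recollisions pair internally, no cell is a singleton, so the partition indeed belongs to $\mcal{Q}(\Psi^-_q(\Scaff))$.

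The one step deserving genuine care — and the only mild obstacle — is the last move in the ladder case: when $\aleph(\{I,I'\})=-1$, each side is an \emph{anti}-ladder coloring, yet $P(\psi_q,\psi'_q)$ must come out a ladder rather than an anti-ladder. This is precisely the phenomenon of Figure~\ref{fig:cauchy-schwarz}: composing the two order-\emph{reversing} position-to-label bijections of the two sides produces an order-\emph{preserving} label-to-label matching. Beyond that one need only check the bookkeeping — that $\ovln{\psi}$ and $m$ at $\next(s)$ and $\back(s')$ really are determined by $\Nbd(S)$, which is immediate from the definition of $\Nbd_\psi$, and that the block of rung colors is the same consecutive block of integers for every $\psi_q$ — both of which are routine once Lemma~\ref{lem:interval-color-structure} is in hand.
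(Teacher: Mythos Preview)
Your proposal is correct and follows exactly the approach the paper intends: the paper states this lemma as ``a simple consequence of Lemma~\ref{lem:interval-color-structure}'' and gives no further proof, and you have filled in precisely those details. Your observation that two anti-ladder colorings compose to a ladder matching (the content of Figure~\ref{fig:cauchy-schwarz}) is the only nontrivial point, and you have handled it correctly.
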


\section{The diagrammatic expansion}
\label{sec:diagrams}
In this section we start to derive our diagrammatic decomposition of
the evolution channel $\wtild{\Evol}_{N\tau}[A]$.
We start with the expression
\begin{align*}
\wtild{\Evol}_{N\tau}[A] = \int_{\Omega_{ext}^N\times\Omega_{ext}^N}
\Xi(\bm{\Gamma})
\One(\Gamma^+\sim\Gamma^-)
 \Expec O_{\Gamma^+}^*AO_{\Gamma^-}\diff\Gamma^+\diff\Gamma^-,
\end{align*}
First we decompose the integral according to the skeleton of the path $\bm{\Gamma}$.
We first rewrite the expectation using the skeleton operators
using~\eqref{eq:op-coloring-expec}, then we apply Lemma~\ref{lem:indicator-decomp}
to rewrite the indicator function for a skeleton:
\begin{align*}
\Evol_{N\tau}[A]
&= \sum_{\mcal{F}} \iint \Xi(\bm{\Gamma})
\One(\Gamma^+\sim\Gamma^-)
\One_{\mcal{F}}(\bm{\Gamma})
\Expec (O_{\Gamma^+}^{\Psi^+(\mcal{F})})^*A O_{\Gamma^-}^{\Psi^-(\mcal{F})} \diff\bm{\Gamma} \\
&= \sum_{\mcal{F}\leq \mcal{F}'}
\iint \Xi(\bm{\Gamma}) \chi_{\mcal{F},\mcal{F}'}(\bm{\Gamma})
\Expec (O_{\Gamma^+}^{\Psi^+(\mcal{F})})^*A O_{\Gamma^-}^{\Psi^-(\mcal{F})} \diff\bm{\Gamma}.
\end{align*}
We define $\Evol_{\mcal{F},\mcal{F}'}$ to be the superoperator appearing on the right hand side,
\begin{equation*}
\Evol_{\mcal{F},\mcal{F}'}[A]
:= \iint
\Xi(\bm{\Gamma}) \One_{\mcal{F}}(\bm{\Gamma})
\Expec (O_{\Gamma^+}^{\Psi^+(\mcal{F})})^*A O_{\Gamma^-}^{\Psi^-(\mcal{F})} \diff\bm{\Gamma}.
\end{equation*}
Then we rewrite our decomposition of the evolution channel as
\begin{equation}
\label{eq:skeleton-expansion}
\wtild{\Evol}_{N\tau}[A]
= \sum_{\mcal{F}\leq\mcal{F}'}\Evol_{\mcal{F},\mcal{F'}}[A].
\end{equation}
The main term comes from the pair of empty skeletons, $\mcal{F}=\mcal{F}'=\bm{0}$.
In this case the expectation only includes ladder partitions, and moreover
$\chi_{\bm{0},\bm{0}}=1$, so the main term is in fact the ladder superoperator
\begin{equation}
\begin{split}
\Evol_{\bm{0},\bm{0}}[A]
&= \iint \Xi(\bm{\Gamma}) \Expec (O_{\Gamma^+}^{lad})^*A (O_{\Gamma^-}^{lad})\diff\bm{\Gamma} \\
&= \mcal{L}_{N\tau}[A].
\end{split}
\end{equation}

Now we deal with the terms with $\mcal{F}'\not=\bm{0}$.  For any tuple
$Y:J^\pm_{clust}\to \Real^d$ and any scaffold $\Scaff \in\mcal{S}^\pm(\mcal{F},\mcal{F}')$ we
define the operator
\begin{equation}
\label{eq:localized-diagram-operator}
O_{\pm,\Scaff,Y} :=
\int \chi_{\mcal{F},\mcal{F}'}^{\pm,Y}(\Gamma) \Xi(\Gamma) O_{\Gamma}^{\Psi^\pm(\Scaff)} \diff \Gamma
\end{equation}
usin the function $G_{\mcal{F},\mcal{F}'}^{\pm,Y}$ defined in~\eqref{eq:GY-def}.
Then using~\eqref{eq:split-G} and the decomposition
\[
\Psi^\pm(\mcal{F}) = \bigcup_{\Scaff\in\mcal{S}^\pm(\mcal{F},\mcal{F}')} \Psi(\Scaff),
\]
we have the identity
\begin{equation}
\Evol_{\mcal{F},\mcal{F}'}[A]
=
\sum_{\Scaff^+\in\mcal{S}^+(\mcal{F},\mcal{F}')}
\sum_{\Scaff^-\in\mcal{S}^-(\mcal{F},\mcal{F}')}
 \iint \chi^{clust}_{\mcal{F},\mcal{F}'}(Y^+,Y^-)
\Expec O_{+,\Scaff^+,Y^+}^* A O_{-,\Scaff^-,Y^-}
\diff Y^+ \diff Y^-,
\end{equation}
where the integral is over tuples $Y^+:J^+_{clust}\to\Real^d$
and $Y^-:J^-_{clust}\to\Real^d$.

Now we use the following version of the Cauchy-Schwarz inequality, which
is that for any measure $\mu$ on a measure space $\Omega$, if
$X,Y:\Omega\to\mcal{B}(\mcal{H})$
are operator-valued functions on a Hilbert space $\mcal{H}$ then
\[
\big\|\int X(\theta)^*A Y(\theta)\diff\mu(\theta)\big\|_{op}
\leq \|A\|_{op}
\big\|\int X(\theta)^*X(\theta)\diff\mu(\theta)\big\|_{op}^{1/2}
\big\|\int Y(\theta)^*Y(\theta)\diff\mu(\theta)\big\|_{op}^{1/2}.
\]
We use this inequality with the measure being the product of the counting measure on
$\mcal{S}^+(\mcal{F},\mcal{F}')$, the counting measure on $\mcal{S}^-(\mcal{F},\mcal{F}')$,
the Lebesgue measure on the anchor points $Y$, and the random measure on the potentials
$V^{(c)}$ to obtain
\begin{equation}
\label{eq:CS-bd}
\begin{split}
\Big\|\Evol_{\mcal{F},\mcal{F}'}[A]\Big\|_{op}
\leq
CN^{C\|\mcal{F}'\|}
\|A\|_{op}
&\max_{\Scaff^+\in\mcal{S}^+(\mcal{F},\mcal{F}')}
\Big\|
\iint
\chi^{clust}_{\mcal{F},\mcal{F}'}(Y)
\Expec
%TODO: Here is where the momentum localization (lower bound) would go
O_{+,\Scaff^+,Y^+}^*
O_{+,\Scaff^+,Y^+}
\diff Y^+\diff Y^-
\Big\|_{op}^{1/2} \\
&\times
\max_{\Scaff^-\in\mcal{S}^-(\mcal{F},\mcal{F}')}
\Big\|
\iint \chi^{clust}_{\mcal{F},\mcal{F}'}(Y)
\Expec O_{-,\Scaff^-,Y^-}^*
O_{-,\Scaff^-,Y^-}
\diff Y^+\diff Y^-
\Big\|_{op}^{1/2}.
\end{split}
\end{equation}

One simplification we can make in the operators above is to integrate out half of the pair $(Y^+,Y^-)$
using the localization afforded by $\chi_{\mcal{F},\mcal{F}'}^{clust}$,
\begin{equation}
\label{eq:integrate-half-Y}
\Big\|
\iint
\chi^{clust}_{\mcal{F},\mcal{F}'}(Y)
\Expec O_{\pm,\Scaff,Y^\pm}^* O_{\pm,\Scaff,Y^\pm}
\diff Y^+\diff Y^-
\Big\|_{op}
\leq (CNr)^{|J^\mp_{clust}|} \Big\|
\int \Expec O_{\pm,\Scaff,Y^\pm}^* O_{\pm,\Scaff,Y^\pm}
\diff Y^\pm \Big\|_{op}.
\end{equation}
We define $X_{\Scaff,\pm}$ to be the operator appearing in the right hand side above,
which simplifies to
\begin{equation}
\label{eq:X-def}
\begin{split}
X_{\Scaff,\pm}
&= \int
\Xi(\bm{\Gamma})
\chi^\pm_{\mcal{F},\mcal{F}'}(\Gamma^+)
\chi^\pm_{\mcal{F},\mcal{F}'}(\Gamma^-)
\prod_{a\in J^\pm_{clust}(\mcal{F}')} \delta(y_a^+ - y_a^-)
\Expec (O_{\Gamma^+}^{\Psi^\pm(\Scaff)})^* O_{\Gamma^-}^{\Psi^\pm(\Scaff)}
\diff\bm{\Gamma}
\end{split}
\end{equation}
after interating out the delta functions.

In Section~\ref{sec:diffusive-bds} we will prove the following estimate on $X_{\Scaff,\pm}$.
\begin{proposition}
\label{prp:main-diffusive-bd}
There exists constants $C,c>0$ such that the following holds:
Let $\mcal{F}\subset\mcal{F}'$ be a pair of skeletons with $\mcal{F}'\not=\bm{0}$, and let
$\Scaff\in\mcal{S}^\pm(\mcal{F},\mcal{F}')$ be a scaffold.  Then the operator $X_{\Scaff,+}$ defined above
satisfies the operator norm bound
\[
\|X_{\Scaff,\pm}\|_{op}
\leq N^{C\|\mcal{F}'\|}
\eps^{c\|\mcal{F}'\|_\pm} r^{-|J^\pm_{clust}|d},
\]
where
\[
\|\mcal{F}'\|_{\pm}
:= |\supp^\pm (F'_{rec}\cup F'_{tube}\cup F'_{cone})|
+ \sum_{\substack{S\in F_{clust} \\ |S\cap K^\pm_{all}|>1}} (|S|-1).
\]
\end{proposition}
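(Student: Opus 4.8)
The plan is to reduce the bound on $X_{\Scaff,\pm}$ to the block-by-block estimates already assembled in the paper, exploiting the product structure of the coloring set $\Psi^\pm(\Scaff)$ proved in Lemma~\ref{lem:coloring-product}. First, by Lemma~\ref{lem:complete-extended-path} one may freely insert the indicator that the pair $(\Gamma^+,\Gamma^-)$ is $N^4$-complete at a cost of $\eps^{100}$. Writing the colored path operators in the tensor-product formalism of Section~\ref{sec:colored-ops} and using $\Psi^\pm(\Scaff)=\prod_{q\in Q(\Base)}\Psi^\pm_q(\Scaff)$ together with the splitting of the color ranges $\mcal{C}_q$ (disjoint modulo the immediate-recollision color), one obtains
\[
\Expec (O^{\Psi^\pm(\Scaff)}_{\Gamma^+})^* O^{\Psi^\pm(\Scaff)}_{\Gamma^-}
= \Mult^\pm\Big(\prod_{q\in Q(\Base)} \Expec O^{\Psi^\pm_q(\Scaff)}_{\Gamma^+}\otimes O^{\Psi^\pm_q(\Scaff)}_{\Gamma^-}\Big),
\]
so that, distributing the cutoff $\Xi$ and the boundary-matching factors $\braket{\xi_{2\ell-1}|\xi_{2\ell}}$ among the blocks, $\|X_{\Scaff,\pm}\|_{op}$ is bounded by a product over $q\in Q(\Base)$ of the operator norms of the corresponding local operators. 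The anchor integral $\int\diff Y^\pm$ and the delta functions $\prod_{a\in J^\pm_{clust}}\delta(y^+_a-y^-_a)$ are attached to the single block containing the atypical support, where they contribute the factor $r^{-|J^\pm_{clust}|d}$.

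Next I would dispose of the blocks disjoint from the atypical support. By the characterization of $\mcal{Q}(\Psi^\pm_q(\Scaff))$ from Section~\ref{sec:coloring-machine}, a \emph{stick} block carries only immediate-recollision colorings, so its local operator is a short stretch of essentially free evolution of operator norm $O(1)$; a \emph{ladder-like} block, after summing over its rung count, contributes a piece of the ladder superoperator over the $m_q$ segments it spans, and since the Cauchy--Schwarz step has converted anti-ladders into ladders (Figure~\ref{fig:cauchy-schwarz}) this piece is of the form $\mcal{L}_{m_q\tau}[\Id]$. Iterating Proposition~\ref{prp:short-ladder-compare} segment by segment and using the contractivity of $\Evol_\tau$ in operator norm — the bookkeeping of Section~\ref{sec:ladders-superop} — each ladder-like block has operator norm $1+O(\eps^{0.1-\kappa/2})=O(1)$. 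There are only $O(\|\mcal{F}'\|)$ blocks and boundary factors, and by~\eqref{eq:scaffold-count} at most $N^{C\|\mcal{F}'\|}$ scaffolds, so the combined contribution of all of this is absorbed into $N^{C\|\mcal{F}'\|}$.

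The heart of the argument is the remaining \emph{junk} block $\Junk(\Base,K)$, which contains exactly the supports of $F'_{rec}$, $F'_{tube}$, $F'_{cone}$ and of the clusters recorded in $\mcal{F}'$, together with the ladder rungs abutting them. Here I would repeat the Schur-test and iterated-Fubini scheme of Section~\ref{sec:subkinetic-semigroup}: bound the integrand by a product of per-variable constraints in an ordering modelled on~\eqref{eq:Gamma-ordering}, expand the expectation over the sum over refinements of the collision partition, and then apply the recollision bound, the tube-incidence bound, the cone-event variant, and the cluster bound (Lemma~\ref{lem:cluster-bd}) to extract a fixed power $\eps^c$ for each edge of the minimal forests recorded in $\mcal{F}'$, while the cluster delta functions supply one $r^{-d}$ per cluster collision. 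Since $|\supp^\pm(F'_{rec}\cup F'_{tube}\cup F'_{cone})|+\sum_{|S\cap K^\pm_{all}|>1}(|S|-1)=\|\mcal{F}'\|_\pm$, this yields the claimed $\eps^{c\|\mcal{F}'\|_\pm}$, with the remaining combinatorial and volumetric factors (the refinement sums, the Schur volumes, and the arrangements of collisions within the junk segments) again bounded by $N^{C\|\mcal{F}'\|}$.

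The main obstacle I anticipate is making the $\eps$-gains inside the junk block genuinely multiplicative. When several atypical events lie in one connected component of the cluster graph, one must choose an elimination order — adapted to the tree structure of the minimal forests in $\mcal{F}'$ — so that the geometric constraint attached to each event (the annulus, tube, or cone restriction in momentum, or the time-window restriction) is charged to a distinct degree of freedom not already consumed by another event, and one must verify that each of these constraints still yields its $\eps^c$ saving in the presence of the generalized-ladder gluing coming from the adjacent rungs included in $\Base$. This delicate bookkeeping is precisely what is carried out in full in Section~\ref{sec:diffusive-bds}.
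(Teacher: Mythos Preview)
Your outline matches the paper's Section~\ref{sec:diffusive-bds} closely: exploit the product structure of $\Psi^\pm(\Scaff)$ (Lemma~\ref{lem:coloring-product}) to separate the ladder and stick blocks from the junk block carrying $\supp\mcal{F}'$, control the former via the ladder-superoperator analysis, and extract the $\eps$-gains from the geometric events in the latter.  Two points deserve sharpening.

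First, the ladder blocks are not handled by a ``product of operator norms'' of local channels.  The paper instead collapses each ladder block to the scalar kernel $L_{k,=h}$ in the four boundary $\xi$-variables and then applies a \emph{single} Schur test to the whole of $X_{\Scaff,+}$; the input needed is the $L^1$-type bound $\sup_\xi\int M_k(\xi,\eta)\diff\eta\leq CN^{20d}$ on the maximal ladder function (Lemma~\ref{lem:maximal-ladder-bd}), not merely $\|\mcal{L}_{m_q\tau}[\Id]\|_{op}=O(1)$.  An operator-norm bound alone would not let you integrate over the boundary phase-space variables that link adjacent blocks inside the Schur integrand.

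Second, the multiplicativity obstacle you correctly flag is the heart of the proof, but the paper does not resolve it by iterating the single-event arguments of Section~\ref{sec:subkinetic-semigroup}.  It first integrates out the $\Gamma^-$ variables entirely (Lemma~\ref{lem:minus-gamma}), then splits into four cases according to which of the cluster, recollision, tube, or cone forests dominates $\|\mcal{F}'\|_\pm$.  In the recollision case the gain comes from a different mechanism than in the subkinetic argument: a crossing-type estimate on the intersection of momentum annuli (Lemma~\ref{lem:crossing-est}) combined with a purely combinatorial lemma (Lemma~\ref{lem:disjoint-subset-finder}) that selects, from the overlapping quadruples $\{a_i-1,a_i,b_i-1,b_i\}$, enough disjoint triples so that a constant fraction of the recollisions each contribute an independent $\eps^{c}$.
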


The proof of Proposition~\ref{prp:main-diffusive-bd} uses geometric estimates on the volume of
paths that have the recollision, tube, and cone events described by $\mcal{F}'$ to obtain the
factor of $\eps^{c\|\mcal{F}'\|}$.  Using Proposition~\ref{prp:main-diffusive-bd} we can show
that $\mcal{\Evol}_{N\tau}$ is close to the ladder superoperator.

\begin{corollary}
\label{cor:close-to-ladder}
There exists $\kappa>0$ such that if $N\leq \eps^{-\kappa}$ and for
any operator $A$ with good support, the followin approximation holds:
\[
\|\wtild{\Evol}_{N\tau}[A] - \mcal{L}_{N\tau}[A]\|_{op}
\leq N^C\eps^c.
\]
\end{corollary}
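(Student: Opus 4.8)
\textbf{Proof proposal for Corollary~\ref{cor:close-to-ladder}.}

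The plan is to start from the skeleton expansion~\eqref{eq:skeleton-expansion}, isolate the main term $\Evol_{\bm 0,\bm 0}[A] = \mcal{L}_{N\tau}[A]$, and bound the sum of all the remaining terms $\Evol_{\mcal{F},\mcal{F}'}[A]$ with $\mcal{F}'\neq\bm 0$ using the Cauchy--Schwarz bound~\eqref{eq:CS-bd}, the reduction~\eqref{eq:integrate-half-Y} to the operators $X_{\Scaff,\pm}$, and the key estimate from Proposition~\ref{prp:main-diffusive-bd}. First I would recall that $\wtild{\Evol}_{N\tau}[A] - \mcal{L}_{N\tau}[A] = \sum_{\mcal{F}\leq \mcal{F}',\ \mcal{F}'\neq\bm 0}\Evol_{\mcal{F},\mcal{F}'}[A]$. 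For a fixed pair $\mcal{F}\leq\mcal{F}'$, combining~\eqref{eq:CS-bd}, the integration-out estimate~\eqref{eq:integrate-half-Y} (which contributes $(CNr)^{|J^\mp_{clust}|}$ and whose $r^{-|J^\pm_{clust}|d}$ factor in Proposition~\ref{prp:main-diffusive-bd} is designed to absorb — note $r^{-|J_{clust}|d}$ times the volume $r^{|J_{clust}|d}$ of the anchor integration is $O(1)$, up to $N^{C\|\mcal{F}'\|}$ losses), the scaffold count~\eqref{eq:scaffold-count}, and Proposition~\ref{prp:main-diffusive-bd}, one gets
\[
\|\Evol_{\mcal{F},\mcal{F}'}[A]\|_{op}
\leq \|A\|_{op}\, N^{C\|\mcal{F}'\|}\, \eps^{c(\|\mcal{F}'\|_+ + \|\mcal{F}'\|_-)/2}.
\]
Since $\|\mcal{F}'\|_+ + \|\mcal{F}'\|_- \gtrsim \|\mcal{F}'\|$ and $\mcal{F}'\neq\bm 0$ forces $\|\mcal{F}'\|\geq 1$, each individual term carries a genuine gain $\eps^{c'\|\mcal{F}'\|}$ for some $c'>0$.

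The next step is the sum over skeletons. I would bound the number of skeletons $\mcal{F}'$ of complexity $\|\mcal{F}'\| = k$: since a skeleton of complexity $k$ is a forest with $k$ edges on the vertex set $\mbf{K}_{all}$, and $|\mbf{K}_{all}| \leq 2Nk_{max}$ with $k_{max} = O(\kappa^{-1})$, the number of such forests is at most $(CN k_{max})^{Ck} \leq N^{Ck}$ (absorbing the $\kappa$-dependent constant). The number of $\mcal{F}\leq\mcal{F}'$ with $\|\mcal{F}\| \leq \|\mcal{F}'\| = k$ is likewise at most $N^{Ck}$ (subforests). Hence, writing $k = \|\mcal{F}'\|$,
\[
\|\wtild{\Evol}_{N\tau}[A] - \mcal{L}_{N\tau}[A]\|_{op}
\leq \|A\|_{op} \sum_{k\geq 1} N^{Ck}\, \eps^{c'k}
= \|A\|_{op} \sum_{k\geq 1} (N^C \eps^{c'})^k.
\]
With $\tau = \eps^{-2+\kappa/2}$, $N = \lfloor\eps^{-\kappa}\rfloor$, and $N\tau = \eps^{-2-9\kappa/10}$ (note $N\tau < \tau^2$ consistent with the outline), we have $N^C\eps^{c'} = \eps^{c' - C\kappa}$, which is $\leq \eps^{c'/2}$ once $\kappa \leq c'/(2C)$. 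For such $\kappa$ the geometric series converges and is bounded by $2 N^C\eps^{c'} \leq N^C\eps^c$ for a suitable $c>0$, which is the claimed bound $N^C\eps^c$.

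The main obstacle I anticipate is not in this bookkeeping but in verifying that all the ingredients are being invoked with compatible hypotheses: one must check that $A$ having good support is exactly what is needed to justify the passage from $\Evol_{N\tau}[A]$ to the path-integral expression for $\wtild{\Evol}_{N\tau}[A]$ (via Proposition~\ref{prp:main-approximation}, the remainder bound~\eqref{eq:remainder-bound}, and~\eqref{eq:tilde-approx}) — strictly speaking Corollary~\ref{cor:close-to-ladder} as stated compares $\wtild{\Evol}_{N\tau}$ rather than $\Evol_{N\tau}$ to the ladder, so this subtlety is deferred. Within the present argument, the only real content beyond counting is making sure the exponent $\|\mcal{F}'\|_+ + \|\mcal{F}'\|_-$ in Proposition~\ref{prp:main-diffusive-bd} genuinely dominates $\|\mcal{F}'\|$ up to a constant: this requires checking that cluster forest edges in $E^\pm$ (which appear in $\|\mcal{F}'\|$ but only contribute to one side's $\|\cdot\|_\pm$ via the $\sum(|S|-1)$ term) are still counted with positive weight, which follows from the bound $|(F_C\cup F_A)\cap E^\pm| \leq 2|S_{AC}|$ proved earlier. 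Once that combinatorial inequality is in hand, the proof is a routine geometric-series estimate, and the choice of $\kappa$ is dictated by requiring $C\kappa < c'$.
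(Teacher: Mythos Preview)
Your proposal is correct and follows essentially the same approach as the paper: combine the Cauchy--Schwarz bound~\eqref{eq:CS-bd}, the reduction~\eqref{eq:integrate-half-Y}, and Proposition~\ref{prp:main-diffusive-bd}, then use the counting bound $\#\{\mcal{F}' : \|\mcal{F}'\|\leq k\}\leq N^{Ck}$ to sum the resulting geometric series over $\mcal{F}'\neq\bm{0}$. You have in fact supplied more detail than the paper does, including the useful observation that one must verify $\|\mcal{F}'\|_+ + \|\mcal{F}'\|_- \gtrsim \|\mcal{F}'\|$, which the paper leaves implicit.
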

\begin{proof}
The conclusion follows from combining Proposition~\ref{prp:main-diffusive-bd}
with~\eqref{eq:integrate-half-Y} and~\eqref{eq:CS-bd}, and using the counting bound
\[
\{\mcal{F}' \mid \|\mcal{F}'\| \leq k\} \leq N^{Ck}
\]
to complete the sum over $\mcal{F}'\not=\noset$.
\end{proof}

\subsection{On the remainders $\mcal{R}_j$}.
\label{sec:remainders}
To approximate the evolution $\Evol_{N\tau}$ by the cutoff Duhamel series
$\wtild{\Evol}_{N\tau}$ as in~\eqref{eq:tilde-approx}, we need
to prove the bound~\eqref{eq:remainder-bound}
\[
\max_{j\in[N]} \sup_{0\leq s\leq \tau} \|\Expec R_{j,s}^* R_{j,s}\|_{op} \leq \eps^{50}.
\]
We will work with the case $j=N$ and $s=\tau$, which is on the one hand the ``most difficult'' case
in terms of the volume of integration, but is at the same time the case that is notationally easiest
to work with.  Specializing to this case we recall the definition~\eqref{eq:remainder-def} of the operator
$R_{N,\tau}$ below
\[
R_{N,\tau} = \int_{\Omega_{k_{max}}(\tau)\times\Real^{2d}\times\Real^{2d}}
\ket{\eta}\bra{\xi}U_{\tau,\sigma}^{N-1}
\chi_\sigma(\omega,\xi,\eta) \braket{\eta|O_\omega|\xi} \diff\omega\diff\xi\diff\eta.
\]
Expanding the operator $U_{\tau,\sigma}^{N-1}$ as an integral over extended paths,
\begin{align*}
R_{N,\tau} &=
\int_{\Omega_{k_{max}}(\tau)\times\Real^{2d}\times\Real^{2d}}
\int_{\Omega_{\alpha,S}^{N-1}}
\ket{\eta}\bra{\xi} O_{\Gamma}\Xi(\Gamma)
\chi_\sigma(\omega,\xi,\eta) \braket{\eta|O_{\omega}|\xi}
\diff\omega\diff\xi\diff\eta \diff\Gamma \\
&=
\int_{\Omega_{\alpha,S,err}^{N}}
O_{\Gamma}\Xi(\Gamma) \diff\Gamma,
\end{align*}
where $\Omega_{\alpha,S,err}^N\subset\Omega_{\alpha,S}^N$ is the subset of extended paths
having $k_{max}$ collisions in the final segment.

Now we sketch the proof of the bound~\eqref{eq:remainder-bound}.  By performing the same decomposition
as in the previous subsection
(in particular, following the arguments leading to~\eqref{eq:integrate-half-Y} and~\eqref{eq:CS-bd})
with $A=\Id$, we reach a bound of the form
\[
\|\Expec R_{N,\tau}^*R_{N\tau}\|_{op}
\leq \sum_{\mcal{F}\leq\mcal{F}'} N^{C\|\mcal{F}'\|}
\sum_{\Scaff\in\mcal{S}^+(\mcal{F},\mcal{F}')}
\|X_{\Scaff,+,err}\|_{op},
\]
where the only modification in the argument is to keep the condition that the final segment has $k_{max}$
collisions, as in
\begin{equation}
\label{eq:X-err-def}
\begin{split}
X_{\Scaff,\pm,err}
:= \int_{\Omega_{\alpha,S,err}^N\times\Omega_{\alpha,S,err}^N}
\Xi(\bm{\Gamma})
\chi^\pm_{\mcal{F},\mcal{F}'}(\Gamma^+)
\chi^\pm_{\mcal{F},\mcal{F}'}(\Gamma^-)
\prod_{a\in J^\pm_{clust}(\mcal{F}')} \delta(y_a^+ - y_a^-)
\Expec (O_{\Gamma^+}^{\Psi^\pm(\Scaff)})^* O_{\Gamma^-}^{\Psi^\pm(\Scaff)}
\diff\bm{\Gamma}.
\end{split}
\end{equation}
The improvement that yields the factor of $\eps^{50}$ is that, in the proof of Proposition~\ref{prp:main-diffusive-bd}
we will frequently use $\eps^2\tau\leq 1$ and discard such factors.  By simply keeping the
factor of $(\eps^2\tau)^{k_{max}} = \eps^{1000}$ and not bounding it naively by $1$ we easily obtain the
desired factor of $\eps^{50}$.

\section{Bounding the diffusive diagram contributions}
\label{sec:diffusive-bds}
In this section we prove Proposition~\ref{prp:main-diffusive-bd},
which is a bound on the operator norm of $X_{\Scaff,\pm}$.
For concreteness we will work with $X_{\Scaff,+}$.

\subsection{Rewriting the operator $X_{\Scaff,+}$}
First we recall the definition of the operator $X_{\Scaff,+}$,
\begin{equation}
\label{eq:X-def}
\begin{split}
X_{\Scaff,+}
= \int
\Xi(\bm{\Gamma})
\chi^+_{\mcal{F},\mcal{F}'}(\Gamma^+)
\chi^+_{\mcal{F},\mcal{F}'}(\Gamma^-)
\prod_{a\in J^+_{clust}(\mcal{F'})} \delta(y_a^+ - y_a^-)
\Expec (O_{\Gamma^+}^{\Psi^+(\Scaff)})^* O_{\Gamma^-}^{\Psi^+(\Scaff)}\diff\bm{\Gamma}.
\end{split}
\end{equation}
We observe that the functions $\chi^+_{\mcal{F},\mcal{F}'}(\Gamma^\pm)$
depend only on $\Gamma^\pm[\Base]$, where $\Base=\Base(\Scaff)$ is as defined
in Section~\ref{sec:scaff-product}.
%Then
%we have
%\[
%X_{\Scaff,+} = \int \Xi(\Gamma)
%\chi^+_{\mcal{F},\mcal{F}'}(\Gamma^+[\Base])
%\chi^+_{\mcal{F},\mcal{F}'}(\Gamma^-[\Base])
%\prod_{a\in \supp F_{clust}} \delta(y_a^+-y_a^-)
%\Expec
%\wtild{O}_{D,+,\Gamma^+}^* \wtild{O}_{D,+,\Gamma^-}\diff\bm{\Gamma}.
%\]
%
One last observation we make before we exploit the tensor product structure of the operator
$O_{\Gamma^\pm}^{\Psi^+(\Scaff)}$ is that there are compatibility conditions on the phase
space indpoints of intervals in $Q(\Base)$ coming from Lemma~\ref{lem:twinning-sync}.
Recall that $\mcal{I}^{abs}(\Base)$ consists of intervals with endpoints in $\Base$.  Write
\[
\mcal{I}^{abs}(\Base) = \mcal{I}^{stick}(\Base)\cup \mcal{I}^{ladder}(\Base),
\]
where $\mcal{I}^{stick}(\Base)$ consists of those intervals which have as an endpoint a collision in
$\supp^+(\mcal{F}')$ and $\mcal{I}^{ladder}(\Base)$ are those intervals of the form $]\next(s),\back(s')[$
for $s,s'\in\supp^+(\mcal{F}')$.  Then let $Q^{stick}(\Base) = \ell(J)$ for $J\in\mcal{I}^{stick}(\Base)$
and $Q^{ladder}(\Base) = \ell(J)$ for $J\in\mcal{I}^{ladder}(\Base)$ be the intervals in $[N]$
corresponding to these stick and ladder intervals.

These intervals impose some compatibility conditions on the phase space points $\xi_\ell$.
In particular, for sticks $]\ell_1,\ell_2[\in Q^{stick}(\Base)$ the compatibility condition is simply
\[
d_r(\xi^\pm_{2\ell_2-2}, U_{k\tau}(\xi^\pm_{2\ell_1-1})) \leq N^8.
\]
For ladders, the compatibility condition is that there exists $s\in[-\tau,\tau]$ such that
\[
d_r(\xi^+_{2\ell_1-2}, U_s(\xi^-_{2\ell_1-2})) \leq N^8
\]
and that
\[
d_r(\xi^+_{2\ell_2-1}, U_s(\xi^-_{2\ell_2-1})) \leq N^8.
\]
We write $T\big(\mstack{\xi,&\eta}{\xi',&\eta'}\big)$ for the indicator function on the
twinnin condition on ladders and $S_k(\xi,\eta)$ for the ``stick'' condition.

Now we use the  tensor product structure of the operator $O_{\Gamma}^{\Psi^+(\Scaff)}$
given by applying Lemma~\ref{lem:operator-product} to the coloring set $\Psi^+(\Scaff)$
(which has the required product structure owing to Lemma~\ref{lem:coloring-product}).
This yields
\begin{align*}
X_{\Scaff,+}
&= \Mult^\pm\Big[
\big(
\int
\Xi(\bm{\Gamma}[\Base])])
\chi^+_{\mcal{F},\mcal{F}'}(\Gamma^+)
\chi^+_{\mcal{F},\mcal{F}'}(\Gamma^-)
\prod_{a\in J^+_{clust}(\mcal{F}')} \delta(y_a^+-y_a^-) \\
&
\qquad\qquad \prod_{]\ell_1,\ell_2[\in \ell(\mcal{I}^{ladder})(\Base)}
T\big(
\mstack {\xi_{2\ell_1-1}^+,&\xi_{2\ell_2-2}^+}
{\xi_{2\ell_1-1}^-,&\xi_{2\ell_2-2}^-} \big)
\prod_{]\ell_1,\ell_2[\in \ell(\mcal{I}^{stick})(\Base)}
\mstack
{S_k(\xi_{2\ell_1-1}^+,\xi_{2\ell_2-2}^+)}
{S_k(\xi_{2\ell_1-1}^-,\xi_{2\ell_2-2}^-)} \\
&\qquad\qquad\qquad\qquad \Expec
(O_{\Gamma^+[\Base]}^{\Psi^+_{\Base}(\Scaff)})^*
\otimes
O_{\Gamma^-[\Base]}^{\Psi^+_{\Base}(\Scaff)}
\diff\bm{\Gamma}[\Base]\big) \\
&\qquad\qquad\qquad \times
\prod_{J\in \ell(\mcal{I}^{abs})(\Base)}
\big(
\int \Xi(\bm{\Gamma}[J])
\Expec
O_{\Gamma^+[J]}^{\Psi^+_J(\Scaff)}\otimes
O_{\Gamma^-[J]}^{\Psi^+_J(\Scaff)}
\diff\bm{\Gamma}[I] \big) \Big].
\end{align*}

At this point we use the fact that for intervals $J\in\mcal{I}^{ladder}(\Base)$,
the coloring sets $\Psi^+_J(\Scaff)$ are (up to a relabelling of the colors) equal
to $\Psi_{rl}^{+,h}$ for some $h$ specified by the scaffold.  This is a consequence of
Lemma~\ref{lem:interval-color-structure}.  We therefore have the identity
that (for some $h=h(J)$ determined by the scaffold),
\[
L_{k,=h}\big(
\mstack{\xi, & \eta}{\xi', &\eta'}\big) =
\int_{\Omega_{\alpha,\sigma}^k}
\Xi(\bm{\Gamma})
\Expec
\braket{\xi|O^{\Psi^{+}_J(\Scaff)}_{\Gamma^+}|\eta}
\braket{\eta'|O^{\Psi^{+}_J(\Scaff)}_{\Gamma^-}|\xi'}\diff\bm{\Gamma}.
\]
This allows us to collapse the multiplication map $\Mult^\pm$ and obtain
\begin{equation}
\begin{split}
X_{\Scaff,+}
&= \sum_{\psi^+,\psi^-\in \Psi^+(\Scaff)}
\int \ket{\xi_{2N-1}^+}\bra{\xi_{2N-1}^-}
\Xi(\bm{\Gamma}[\Base])
\chi^+_{\mcal{F},\mcal{F}'}(\Gamma^+[\Base])
\chi^+_{\mcal{F},\mcal{F}'}(\Gamma^-[\Base]) \\
&\qquad \prod_{a\in J^+_{clust}(\mcal{F}')} \delta(y_a^+-y_a^-)
\prod_{]\ell_1,\ell_2[\in\ell(\mcal{I}^{ladder})(\Scaff)}
T\big(\mstack
{\xi^+_{2\ell_1-1}, & \xi^+_{2\ell_2-2}}
{\xi^-_{2\ell_1-1}, & \xi^-_{2\ell_2-2}}\big)
L_{\ell_2-\ell_1-1,=h}\big(\mstack
{\xi^+_{2\ell_1-1}, & \xi^+_{2\ell_2-2}}
{\xi^-_{2\ell_1-1}, & \xi^-_{2\ell_2-2}}\big)\\
&\qquad \prod_{]\ell_1,\ell_2[\in\ell(\mcal{I}^{stick})(\Scaff)}
S_k(\xi^+_{2\ell_1-1}, \xi^+_{2\ell_2-2})
S_k(\xi^-_{2\ell_1-1}, \xi^-_{2\ell_2-2}) \\
&\qquad
\prod_{\ell\in \ell(\Base)}
\Braket{\xi^+_{2\ell-2}|p_{\ell,0,+}}
\Braket{p_{\ell,k_\ell,+}|\xi^+_{2\ell-1}}
\Braket{\xi^-_{2\ell-2}|p_{\ell,0,-}}
\Braket{p_{\ell,k_\ell,-}|\xi^-_{2\ell-1}}
e^{i\varphi(\omega_\ell^+)-i\varphi(\omega_\ell^-)} \\
&\qquad\qquad
\prod_{A_{imm}(\psi^+,\psi^-)}
\Expec \Ft{V_{y_a}}(q_a) \Ft{V_{y_{a+1}}}(q_{a+1})
 \Expec \prod_{\substack{a\in K(\bm{\Gamma}_D) \\ \psi(a)\not=\qt{\Imrec}}}
\Ft{V_{y_a}^{\psi(a)}}(q_a)
\diff \bm{\Gamma}
\end{split}
\end{equation}

Now we apply the Schur test to obtain a bound on the operator norm.
Because $X_{\Scaff,+}$ is a self-adjoint operator, we only need to bound the
``row sum'' norm.
We also use the bounds
\[
|\Braket{\xi|p}|
\leq Cr^{d/2} \exp(- c(r |p - \xi_p|)^{0.999})
\]
and Lemma~\ref{lem:admissible-V} to bound the expectation.  Then to deal with the terms
involving the ladder functions $L_k$,
we use the maximal ladder function
\[
M_k(\xi,\eta) := \sup_{\xi',\eta'}T\big(\mstack{\xi,&\eta}{\xi',&\eta'}\big)
|L_k\big(\mstack{\xi,&\eta}{\xi',&\eta'}\big)|.
\]
finally we arrive at the estimate
\begin{equation}
\label{eq:monstrous-bd}
\begin{split}
\|X_{\Scaff,+}\|_{op}
\leq
&(CN)^{C\|\mcal{F}'\|}
\sup_{\xi_{2N-1}^+}
\sum_{\psi^+,\psi^-\in\Psi^+_{\Base}(\Scaff)}
\sum_{P'\leq P(\psi^+,\psi^-)}
\int
\chi^+_{\mcal{F}\mcal{F}'}(\Gamma^+[\Base])
\chi^+_{\mcal{F}\mcal{F}'}(\Gamma^-[\Base]) \\
&
\prod_{a\in J^+_{clust}(\mcal{F}')} \delta(y_a^+-y_a^-)
\prod_{]\ell_1,\ell_2[\in\ell(\mcal{I}^{ladder}(D))}
M_{\ell_2-\ell_1-1}(\xi_{2\ell_2-1}^+, \xi_{2\ell_2-2}^+)
T\big(\mstack
{\xi^+_{2\ell_1-1}, & \xi^+_{2\ell_2-2}}
{\xi^-_{2\ell_1-1}, & \xi^-_{2\ell_2-2}}\big)
\\
&\prod_{]\ell_1,\ell_2[\in\ell(\mcal{I}^{stick})(\Scaff)}
S_{\ell_2-\ell_1-1}(\xi^+_{2\ell_1-1}, \xi^+_{2\ell_2-2})
S_{\ell_2-\ell_1-1}(\xi^-_{2\ell_1-1}, \xi^-_{2\ell_2-2}) \\
&
\prod_{\ell\in\ell(\Base)} r^d \eps^{k_{\ell}} r^{-dk_\ell}
\chi_\alpha(\xi_{2\ell-2},\omega_\ell,\xi_{2\ell-1}) \\
&\prod_{A\in P'} (C|A|)^{2|A|}
r^d\exp(-c|\frac{r}{|A|}\sum_{a\in A}q_a|^{0.99})
\One(y_A \text{ is }r\text{-connected})
\diff\bm{\Gamma}[\Base].
\end{split}
\end{equation}

\subsection{The standard variable constraints}
As in the proof of Proposition~\ref{prp:short-ladder-compare}, the integrand is easier to digest as a product
of constraints on each variable.  The path variables are labelled by the set
\begin{equation}
\begin{split}
\Lambda_D := \{\qt{p_a},\qt{s_a}\}_{a\in K_0(\bm{\Gamma}[\Base])}\cup
\{\qt{y_a}\}_{a\in K(\bm{\Gamma}[\Base])}
&\cup \{ \qt{\xi^+_{2\ell-2}}, \qt{\xi^-_{2\ell-2}},
\qt{\xi^+_{2\ell-1}}, \qt{\xi^-_{2\ell-1}} \}_{\ell\in \ell(\Base)} \\
&\cup\{\qt{\xi^+_0}, \qt{\xi^+_{2N-1}}
\qt{\xi^-_{0}} \qt{\xi^-_{2N-1}}\}
\end{split}
\end{equation}

We impose a total ordering $\leq_\Lambda$ on $\Lambda$ generated by the following rules:
\begin{itemize}
\item For any $a,b\in K_0^+(\bm{\Gamma})$ with $\sign(a)=+$ and $\sign(b)=-$,
$\qt{p_a},\qt{s_a},\qt{y_a}\leq
\qt{p_b}, \qt{s_b}, \qt{y_b}$.
\item If $a,b\in K_0^+(\bm{\Gamma})$ satisfy $\sign(a)=\sign(b)$ and either
$\sign(a)=+$ and $a\leq b$ or $\sign(a)=-$ and $a\geq b$, then
$\qt{y_a}\leq \qt{y_b}$, $\qt{s_a}\leq \qt{s_b}$, $\qt{p_a}\leq \qt{p_b}$.
\item Similarly, if $0\leq m\leq m'\leq 2N-1$, then provided $\qt{\xi^+_m},\qt{\xi^+_{m'}}\in\Lambda_D$,
$\qt{\xi^+_{m}}\leq \qt{\xi^-_m}$,
$\qt{\xi^+_m}\leq \qt{\xi^+_{m'}}$,
$\qt{\xi^-_{m'}}\leq \qt{\xi^-_{m}}$.
\item For any $a\in K_0^+(\bm{\Gamma})$, $\qt{p_a}\leq \qt{s_a} \leq \qt{y_{a+1}}$,
and for any $\ell\in\Base(D)$,
$\qt{\xi^+_{2\ell-2}}\leq \qt{p_{\ell,+,0}}\leq \qt{s_{\ell,+,k_\ell}} \leq \qt{\xi^+_{2\ell-1}}$,
$\qt{\xi^-_{2\ell-1}}\leq \qt{p_{\ell,-,k_\ell}}\leq \qt{s_{\ell,-,0}} \leq \qt{\xi^-_{2\ell-2}}$.
\end{itemize}

We are now ready to assign constraint functions $f_\lambda$ for each label $\lambda\in \Lambda$,
much the same as we did in the calculation for Proposition~\ref{prp:short-ladder-compare}.
First we define the familiar momentum constraints
\begin{equation}
f_{\qt{p}_a}(\bm{\Gamma}_{\leq \qt{p},a}) =
\begin{cases}
r^d \One(|p_{\ell,+,0} - (\xi_{2\ell-2}^+)_p|\leq \alpha r^{-1}), & a = (\ell,+,0) \\
r^d \One(|p_{\ell,-,k_\ell} - (\xi_{2\ell-1}^-)_p|\leq \alpha r^{-1}),
& a = (\ell,-,k_\ell) \\
r^d \exp(-c (Nk_{max})^{-1} | r\sum_{a\in S}q_a|^{0.99}),
&a = \max_{\leq_\Lambda } S \text{ for some } S\in P' \\
(1+|q_a|)^{-20d} \One(||p_a| - |p_{+,0}||\leq \alpha \max\{|p_0|^{-1}s_a^{-1},r^{-1}\}),
&\text{ else}.
\end{cases}
\end{equation}
The position constraints are slightly modified by the presence of the delta function
$\delta(y_a^+-y_a^-)$ for each $a\in J^+_{clust}(\mcal{F}')$.   For
$a\in J^+_{clust}(\mcal{F}')$ we define
\[
f_{\qt{y},a,-}(\bm{\Gamma}_{\leq \qt{y},a,-}) := r^{-d}\delta(y_a^+-y_a^-).
\]
For the rest of the indices $a$ we set
\begin{equation}
f_{\qt{y},a}(\bm{\Gamma}_{\leq\qt{y},a}) :=
\begin{cases}
r^{-d}\One(|y_{\ell,+,1} - ((\xi^+_{2\ell-2})_x + s_{\ell,+,0}p_{\ell,+,0})| \leq \alpha r),
&a = (\ell,+,1) \\
r^{-d}\One(|y_{\ell,-,k_-} - ((\xi^-_{2\ell-1})_x - s_{\ell,-,k_-}p_{\ell,-,k_-})|),
&a = (\ell,-,k_\ell) \\
r^{-d}\One(|y_{\ell,+,j} - (y_{\ell,+,j-1}+s_{\ell,+,j-1}p_{\ell,+,j-1})|\leq \alpha r), &a = (\ell,+,j)
\text{ for } j>1\\
r^{-d}\One(|y_{\ell,-,j+1} - (y_{\ell,-,j}+s_{\ell,-,j}p_{\ell,-,j})|\leq \alpha r), &a = (\ell,-,j)
\text{ for } j<k_\ell.
\end{cases}
\end{equation}
The phase space position $\qt{\xi}$ constraints are now more complicated than the analogous calculation
in the proof of Proposition~\ref{prp:short-ladder-compare} due to the presence of the twinning functions
and ladder functions.
\begin{equation}
f_{\qt{\xi},b,m} :=
\begin{cases}
d_r(\xi^+_{2\ell-1}, (y_{\ell,+,k_\ell}+s_{\ell,+,k_\ell}p_{\ell,+,k_\ell},p_{\ell,+,k_\ell}))
\leq N^2\alpha, &b = +, m = 2\ell-1, \ell\in \Base \\
d_r(\xi^-_{2\ell-2}, (y_{\ell,-,1}-s_{\ell,-,0}p_{\ell,-,0},p_{\ell,-,0}))
\leq N^2\alpha, &b=-, m = 2\ell-1, \ell\in\Base \\
M_{\ell_2-\ell_1-1}
(\xi_{2\ell_1-1}^+, \xi_{2\ell_2-2}^+),
&b=+, \,\,\,]\ell_1,\ell_2[\in\ell(\mcal{I}^{ladder}) \\
d_r(U_{(\ell_2-\ell_1)\tau}(\xi_{2(\ell_1-1)-1}^+), \xi_{2(\ell_2+1)-2}^+)
\leq N^6, &b=+, m= 2\ell_2-2,
]\ell_1,\ell_2[\in \ell(\mcal{I}^{stick}) \\
T\big(\mstack
{\xi^+_{2\ell_1-1}, & \xi^+_{2\ell_2-2}}
{\xi^-_{2\ell_1-1}, & \xi^-_{2\ell_2-2}}\big),
&b=-, m=2\ell_2-2, ]\ell_1,\ell_2[ \in \ell(\mcal{I}^{ladder}) \\
d_r(U_{(\ell_2-\ell_1)\tau}\xi^-_{2(\ell_1-1)-1},
\xi^-_{2\ell_2-2}) \leq N^6
&b=-, m=2\ell_2-2, ]\ell_1,\ell_2[ \in \ell(\mcal{I}^{stick}).
\end{cases}
\end{equation}

The relevant fact we need about the maximal ladder functions is that
\[
\sup_\xi \int M_k(\xi,\eta) \diff \eta \leq CN^{20},
\]
which follows from Lemma~\ref{lem:maximal-ladder-bd}, which is proven in the next section.

The definitions of the $f_\lambda$ functions are defined precisely so that the
integrand in~\eqref{eq:monstrous-bd} is bounded by the product of individual constraints.
That is,
\begin{align*}
&
\prod_{a\in J^+_{clust}(\mcal{F}')} \delta(y_a^+-y_a^-)
\prod_{]\ell_1,\ell_2[\in\ell(\mcal{I}^{ladder}(D))}
M_{\ell_2-\ell_1-1}(\xi_{2\ell_2-1}^+, \xi_{2\ell_2-2}^+)
T\big(\mstack
{\xi^+_{2\ell_1-1}, & \xi^+_{2\ell_2-2}}
{\xi^-_{2\ell_1-1}, & \xi^-_{2\ell_2-2}}\big)
\\
&\prod_{]\ell_1,\ell_2[\in\ell(\mcal{I}^{stick})(\Scaff)}
S_{\ell_2-\ell_1-1}(\xi^+_{2\ell_1-1}, \xi^+_{2\ell_2-2})
S_{\ell_2-\ell_1-1}(\xi^-_{2\ell_1-1}, \xi^-_{2\ell_2-2}) \\
& r^{-d|K(\bm{\Gamma}_D)|} \prod_{\ell\in\Base(D)} r^d
\chi_\alpha(\xi_{2\ell-2},\omega_\ell,\xi_{2\ell-1})
\prod_{A\in P'} r^d\exp(-c|\frac{r}{|A|}\sum_{a\in A}q_a|^{0.99}) \\
&\qquad\qquad\qquad\leq \prod
f_{\qt{p},a}(\bm{\Gamma}_{\leq \qt{p},a})
f_{\qt{y},a}(\bm{\Gamma}_{\leq \qt{y},a})
\prod
f_{\qt{\xi},b,m}(\bm{\Gamma}_{\leq \qt{\xi},b}).
\end{align*}

Thus, using the fact that the maximum cluster in the partition $P(\psi)$ has
has cardinality bounded by $\|\mcal{F}'\|$ to sum over the partitions $P'$, we arrive at the bound
\begin{equation}
\begin{split}
\|\wtild{X}_{\Scaff,+}\|_{op}
&\leq (CN)^{C\|\mcal{F}'\|}
\sup_{\xi^+_{2N-1}}
\max_{\psi^+,\psi^-\in \Psi^+(\Scaff)}
\max_{P'\leq P(\psi^+,\psi^-)} \\
&\qquad \int
\eps^{|K(\bm{\Gamma}[\Base])|}
\chi_{\mcal{F},\mcal{F}'}^+(\Gamma^+[\Base])
\chi_{\mcal{F},\mcal{F}'}^+(\Gamma^-[\Base]) \\
&\qquad\qquad\qquad \prod
f_{\qt{p},a}(\bm{\Gamma}_{\leq \qt{p},a})
f_{\qt{y},a}(\bm{\Gamma}_{\leq \qt{y},a})
f_{\qt{\xi},b,m}(\bm{\Gamma}_{\leq \qt{\xi},b,m}) \\
&\qquad\qquad\qquad \prod_{A\in P'}\One(y_A\text{ is }r\text{-connected})\diff\bm{\Gamma}_D \\
&=: (CN)^{C\|\mcal{F}\|}
\max_{\psi^+,\psi^-\in\Psi^+(\Scaff)}
\max_{P'\leq P(\psi^+,\psi^-)}
Y(\Scaff,\psi^+,\psi^-,P'),
\end{split}
\end{equation}
where
\begin{equation}
\label{eq:YD-def}
\begin{split}
Y(\Scaff,\psi^+,\psi^-,P') := \sup_{\xi^+_{2N-1}}
&\int
\eps^{|K(\bm{\Gamma}[\Base])|}
\chi_{\mcal{F},\mcal{F}'}^+(\Gamma^+[\Base])
\chi_{\mcal{F},\mcal{F}'}^+(\Gamma^-[\Base]) \\
&\qquad\qquad\prod
f_{\qt{p},a}(\bm{\Gamma}_{\leq \qt{p},a})
f_{\qt{y},a}(\bm{\Gamma}_{\leq \qt{y},a})
f_{\qt{\xi},b,m}(\bm{\Gamma}_{\leq \qt{\xi},b,m}) \\
&\qquad\qquad \prod_{A\in P'}\One(y_A\text{ is }r\text{-connected})\diff\bm{\Gamma}_D.
\end{split}
\end{equation}

Before we proceed we clarify the nature of the partitions involved in the above expression.
The first partition clearly present above is the partition $P'$, which is some refinement
of $P(\psi^+,\psi^-)$.  Due to the constraint that $y_A$ is $r$-connected, we have
\[
|y_a - y_b| \leq 4|A|r
\]
for $(a,b)\in A\in P'$.  We can extend these constraints using the localizations from
$\chi_{\mcal{F},\mcal{F}'}^+$.  In particular, $\chi^+_{\mcal{F},\mcal{F}'}$ enforces
\[
|y_a-y_b| \leq 4r
\]
when either $a,b\in K^+_{all}$ and $(a,b)\in F'_{rec}$ or $a,b\in K^-_{all}$ and
$(a^*,b^*)\in F'_{rec}$.  In either case we say $(a,b) \in F'_{rec,+}$, meaning
that $(a,b)$ is a recollision originally present in $\Gamma^+$ and is now mirrored to
$\Gamma^-$ by our application of the Cauchy-Schwarz inequality.  We also have
$y_a = y_{a^*}$ for $a\in J^+_{clust}(\mcal{F}')$.
Thus if $Q$ is the coarsening of $P'$ which has $a\sim_Q b$ when $(a,b)\in F'_{rec,+}$
and $a\sim_Q a^*$ for $a\in J^+_{clust}(\mcal{F}')$, we can get a slightly better
bound for $Y(\Scaff,\psi^+,\psi^-,P)$:
\begin{equation}
\label{eq:YD-def}
\begin{split}
Y(\Scaff,\psi^+,\psi^-,P') := \sup_{\xi^+_{2N-1}}
&\int
\eps^{|K(\bm{\Gamma}[\Base])|}
\chi_{\mcal{F},\mcal{F}'}^+(\Gamma^+[\Base])
\chi_{\mcal{F},\mcal{F}'}^+(\Gamma^-[\Base]) \\
&\qquad\qquad\prod
f_{\qt{p},a}(\bm{\Gamma}_{\leq \qt{p},a})
f_{\qt{y},a}(\bm{\Gamma}_{\leq \qt{y},a})
f_{\qt{\xi},b,m}(\bm{\Gamma}_{\leq \qt{\xi},b,m}) \\
&\qquad\qquad \prod_{A\in Q}\One(y_A\text{ is }r\text{-connected})\diff\bm{\Gamma}_D.
\end{split}
\end{equation}
Implicitly the partition $Q$ depends on $P'$ and $\Scaff$ (through $\mcal{F}$), but as these will be fixed
for the remainder of the section we simply write `Q'.

\begin{lemma}
\label{lem:Y-bd}
The quantity $Y(\Scaff,\psi^+,\psi^-,P')$ defined above satisfies
\[
Y(\Scaff,\psi^+,\psi^-,P') \leq
N^{C\|\mcal{F}'\|} r^{-d|J^+_{clust}|} \eps^{c\|\mcal{F}\|_+}.
\]
\end{lemma}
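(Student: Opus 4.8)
The plan is to bound $Y(\Scaff,\psi^+,\psi^-,P')$ by the same iterated–Fubini device used for Proposition~\ref{prp:short-ladder-compare}, namely Lemma~\ref{lem:abstract-holder} applied with the total order $\leq_\Lambda$ on the variable labels $\Lambda_D$. First I would fix this ordering and, for each label $\lambda\in\Lambda_D$, collect into a single function $g_\lambda(\bm{\Gamma}_{\leq\lambda})$ the standard constraint $f_\lambda$ together with whatever portion of the extra cutoffs — the event indicators packaged inside $\chi^+_{\mcal{F},\mcal{F}'}(\Gamma^\pm[\Base])$, the $r$-connectedness indicators $\One(y_A\text{ is }r\text{-connected})$ for $A\in Q$, and the delta functions $\delta(y_a^+-y_a^-)$ for $a\in J^+_{clust}$ — can be expressed as a function of variables no later than $\lambda$. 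Lemma~\ref{lem:abstract-holder} then bounds $Y$ by a product over $\lambda$ of the one–variable suprema $\sup_{\bm{\Gamma}_{<\lambda}}\int g_\lambda\,\diff\Gamma_\lambda$, so the remaining task is purely to bound each factor and to track which ones carry an $\eps$-gain.

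Second, I would account for the ``standard'' part of the product exactly as in Section~\ref{sec:subkinetic-semigroup}. The $\qt{y}$ and $\qt{\xi}$ variables contribute $O(1)$, except that the $\qt{\xi}$ factors carrying the maximal-ladder, twinning, and stick constraints contribute $O(N^{C})$ by Lemma~\ref{lem:maximal-ladder-bd} (the bound $\sup_\xi\int M_k(\xi,\eta)\,\diff\eta\le CN^{20}$) and its stick analogue; these polynomial losses are absorbed into the allowed $N^{C\|\mcal{F}'\|}$. The momentum integrals give $O(1)$ for the indices $a$ closing a cell of $P'$ or matching a $\qt{\xi}$-endpoint, and $O(r^{-1}\min\{|p_{1,0}|^{d-1},1\})$ otherwise, since by Lemma~\ref{lem:short-rec-time} any non-cluster collision has $s_a\gtrsim\alpha|p_{1,0}|^{-1}r$ and so $p_a$ is confined to an annulus of thickness $r^{-1}$. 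The time integrals give a trivial $\tau$ generically but only $O(|p_{1,0}|^{-1}r)$ whenever the variable is pinned by~\eqref{eq:s-cluster-bd} to close a cluster or a recollision. Pairing each $\eps$ from the prefactor $\eps^{|K(\bm{\Gamma}[\Base])|}$ with one momentum and one time factor, a generic rung pair contributes $\eps^2\tau=\eps^{\kappa/2}\le 1$, while the closing variables of clusters and recollisions release surplus powers of $\eps$; finally the $|J^+_{clust}|$ delta functions each integrate to $r^{-d}$, producing the advertised factor $r^{-d|J^+_{clust}|}$.

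Third — the heart of the matter — I would extract one factor of $\eps^c$ for each event recorded by $\mcal{F}'$, reusing the recollision and tube-incidence estimates of Section~\ref{sec:subkinetic-semigroup} together with a parallel cone-event estimate. For a recollision edge $(a,b)\in F'_{rec}$ (or its mirror $F'_{rec,+}$) the cutoff $\chi^+_{\mcal{F},\mcal{F}'}$ forces $|y_a-y_b|\le 4r$; splitting on whether $|y_{\back(b)}-y_b|$ is larger or smaller than $Kr$ with $K=\eps^{-1/2}$, one gains a factor $\eps^{1/2}$ from confining $p_{\back(b)}$ to the intersection of the energy annulus with a ball of radius $|p_{1,0}|K^{-1}$ (replacing~\eqref{eq:std-momentum-bd}) in the first case, and a factor $\eps^{1/2}\tau^{-1}r$ from an extra constraint of the form~\eqref{eq:s-cluster-bd} on the time variable before $b$ in the second; a tube edge in $F'_{tube}$ and a cone edge in $F'_{cone}$ give the same $\eps^{1/2}$-type gain via the transverse/parallel dichotomy on the momentum entering the event, optimizing the opening angle $\delta=\eps^{1/d}$; and a cluster $S\in F'_{clust}$ with $|S\cap K^+_{all}|>1$ contributes $|S|-1$ surplus powers of $\eps$ because it has $|S|-1$ fewer free time variables than collisions. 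The one nonroutine point, which I expect to be the main obstacle, is the bookkeeping that makes these gains \emph{compound}: one must choose, for each $\mcal{F}'$-edge, a distinct variable — either the incoming time variable of its later endpoint or an adjacent momentum variable — on which to spend the event's constraint, check that this choice respects $\leq_\Lambda$, and verify that it never collides with a variable already spent closing a cell of $P'$ or another event. Once this disjointness is in place the one-variable bounds multiply, the at-most-$N^{C\|\mcal{F}'\|}$ case splits are summed as a geometric series, and one obtains $Y\le N^{C\|\mcal{F}'\|}r^{-d|J^+_{clust}|}\eps^{c\|\mcal{F}'\|_+}$. A secondary subtlety to dispatch is that edges straddling the interface between a base segment and a ladder interval, together with the coarsening $P'\rightsquigarrow Q$ produced by the Cauchy--Schwarz step, do not lead to double counting of any position constraint.
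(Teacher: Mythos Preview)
Your overall strategy—iterated Fubini via Lemma~\ref{lem:abstract-holder} with the ordering $\leq_\Lambda$, standard constraint bookkeeping, and the transverse/parallel dichotomy for tube and cone events—matches the paper. The paper additionally organizes the computation by first splitting off and bounding the $\Gamma^-[\Base]$ integral (Lemma~\ref{lem:minus-gamma}) using only the partition $Q$, then concentrating all event-based gains in the $\Gamma^+[\Base]$ integral, which it treats by cases according to which of clusters, recollisions, tube events, or cone events dominates $\|\mcal{F}'\|_+$.

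The substantive divergence is the recollision case. You propose to reuse the position-based argument of Section~\ref{sec:subkinetic-semigroup}: the constraint $|y_a-y_b|\le 4r$, a dichotomy on $|y_{\back(b)}-y_b|\gtrless Kr$, and a gain on $p_{\back(b)}$ or an adjacent time variable. The paper instead uses a \emph{momentum-space} mechanism (Lemma~\ref{lem:crossing-est}): once one reduces to recollision pairs with $\{a,b\}\in P'$, the partition itself forces $|q_a+q_b|\lesssim r^{-1}$, and the gain comes from the unlikelihood that four annulus-constrained momenta $p_{a-1},p_a,p_{b-1},p_b$ satisfy this linear relation. This pays off precisely at the compounding step you flag as the main obstacle: the four-element momentum sets $\{a_i-1,a_i,b_i-1,b_i\}$ for distinct recollision pairs overlap in a controlled way, and a short combinatorial lemma (Lemma~\ref{lem:disjoint-subset-finder}) extracts disjoint three-element subsets from a constant fraction of them, on each of which the crossing estimate applies independently. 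Your position-based route would need its own disjointness argument mixing time and momentum variables across many events, which you acknowledge but do not supply; the paper's momentum-only route makes this compounding step mechanical.
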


\subsection{Integrating out $\Gamma^-[\Base]$}
The first step we take towards the proof of Lemma~\ref{lem:Y-bd}
is to integrate out the $\Gamma^-$ variables.

To do so we first write out more explicitly the constraints on the time variables $s_a$.
The constraints on $y_A$ imply constraints on the time variables $s_a$ when $a$ is not of the form $\min(S)-1$
for some $S\in Q$.  In particular, we define the time constraints
\[
f_{\qt{s},a}(\bm{\Gamma}_{\leq\qt{s},a})
:=
\begin{cases}
\One_{[0,\tau]}(s_a), &a = \min(S)-1 \text{ for some } S\in Q \\
\One(|y_{a'} + (y_{a}+s_ap_a)| \leq 2|S|r), &a+1\in S\in Q, a'=\min(S).
\end{cases}
\]
To interate out the variables of $\Gamma^-$ we ignore the term $\chi_{\mcal{F},\mcal{F}'}^+(\Gamma^-[\Base])$
(which is bounded by $1$) and use the product of time constraints $\prod f_{\qt{s},a}$ to
enforce $\One(y_A\text{ is } r\text{-connected})$ for every $A\in Q$.  We therefore bound
\begin{equation}
\label{eq:Ysplit-bd}
\begin{split}
Y(\Scaff,P')
&\leq \sup_{\xi^+_{2N-1}}
\int
\eps^{|K(\bm{\Gamma}[\Base])|}
\chi_{\mcal{F},\mcal{F}'}(\Gamma^+[\Base])
\prod f_{\qt{p},a}(\bm{\Gamma}_{\leq \qt{p},a})
f_{\qt{y},a}(\bm{\Gamma}_{\leq \qt{y},a}) \\
&
\qquad \qquad \qquad \qquad
\prod f_{\qt{\xi},b,m}(\bm{\Gamma}_{\leq \qt{\xi},b,m})
f_{\qt{s},a}(\bm{\Gamma}_{\leq \qt{s},a}) \diff\bm{\Gamma}[\Base] \\
&\leq
\sup_{\Gamma^+[\Base]}
\int
\eps^{|K(\Gamma^-[\Base])|}
\prod_{\Lambda[\Base]^-} f_\lambda(\bm{\Gamma}_{\leq \lambda})
\diff\Gamma^-[\Base] \\
&\qquad \qquad \times \sup_{\xi^+_{2N-1}}
\int
\eps^{|K(\Gamma^+[\Base])|}
\chi_{\mcal{F},\mcal{F}'}^+(\Gamma^+[\Base])
\prod_{\Lambda[\Base]^+} f_\lambda(\bm{\Gamma}_{\leq\lambda}\diff\Gamma^+[\Base].
\end{split}
\end{equation}

In this subsection we count up the constraints on the $\Gamma^-$ variables to bound the first
integral.  To state the result we need to first define a few quantities.
Let $k^+$ be the number of collisions in $\Gamma^+[\Base]$ and $k^-$ be the number of collisions
in $\Gamma^-[\Base]$.  Given a partition $P\in\mcal{P}(K(\bm{\Gamma}_D))$, we let
$m(P) \subset K(\bm{\Gamma}_D)$ be the set
\[
m(P) = \{\min S\mid S\in P\}
\]
of ``first'' collisions in each set of $P$, while
\[
M(P) = \{\max S \mid S\in P\}
\]
is the set of ``last'' collisions.  We write $m^+(P)$, $m^-(P)$ and $M^+(P)$, $M^-(P)$ for the number of
first and last collisions in $K^+$ and $K^-$, respectively.  Then we have the following bound
\begin{lemma}
\label{lem:minus-gamma}
Given a scaffold $\Scaff$ and a partition $P'$,
the following integral over $\Gamma^-$ variables satisfies the bound
\begin{equation}
\label{eq:minus-gamma-bd}
\begin{split}
\sup_{\Gamma^+[\Base]}
\int
\prod_{\Lambda[\Base]^-} f_\lambda(\bm{\Gamma}_{\leq \lambda})
\diff\Gamma^-[\Base]
&\leq N^{C\|\mcal{F}'\|} r^{-d|J^+_{clust}(\mcal{F}')}
\tau^{|m^-(Q)|} \\
&\qquad
\times (E^{-1/2}r)^{k^- - |m^-(Q)|} (r^{-1}\min\{E^{(d-1)/2},1\})^{k^--|M^-(Q)|}.
\end{split}
\end{equation}
\end{lemma}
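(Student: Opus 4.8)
The plan is to bound the integral by iterated Hölder/Fubini, via Lemma~\ref{lem:abstract-holder}, along the ordering $\leq_\Lambda$ restricted to the variables of $\Gamma^-[\Base]$, and then to estimate each one-variable integral while tracking which variables produce one of the three ``nontrivial'' factors $\tau$, $E^{-1/2}r$, $r^{-1}\min\{E^{(d-1)/2},1\}$, and which produce only a factor of size $N^C$. This is affordable precisely because the only variables integrated in $\int\diff\Gamma^-[\Base]$ are those of the $\Base$-segments (the ladder intervals having already been collapsed into the $L_k$, $M_k$ functions), so there are only $O(k_{max}\|\mcal{F}'\|)$ real variables and the accumulated $N^C$ factors can be absorbed into $N^{C\|\mcal{F}'\|}$. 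Since every variable of $\Gamma^+[\Base]$ precedes every variable of $\Gamma^-[\Base]$ in $\leq_\Lambda$, the supremum over $\Gamma^+[\Base]$ merely freezes the former, and Lemma~\ref{lem:abstract-holder} reduces the claim to estimating $\prod_{\lambda\in\Lambda[\Base]^-}\sup_{\bm{\Gamma}_{<\lambda}}\int f_\lambda\,\diff\Gamma_\lambda$.

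First I would clear out the easy variables. A position variable $y_a$ with $a\notin J^+_{clust}(\mcal{F}')$ carries a constraint $r^{-d}\One(|y_a-(\cdots)|\leq\alpha r)$ with integral $\alpha^d=N^C$, while for $a\in J^+_{clust}(\mcal{F}')$ it carries $r^{-d}\delta(y_a^+-y_a^-)$ and integrates to exactly $r^{-d}$, producing the announced $r^{-d|J^+_{clust}(\mcal{F}')|}$. A phase-space endpoint $\xi^-$ carries either a compatibility constraint $d_r(\cdot,\cdot)\leq N^C$, a stick function $S_k$, a twinning function $T$, or a maximal ladder function $M_k$; the first three integrate to $N^C$ and $\int M_k(\xi,\eta)\,\diff\eta\leq CN^{20}$ by Lemma~\ref{lem:maximal-ladder-bd}. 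The boundary momenta $p_{\ell,0}$ and the free-evolution boundary times are localized by the $\langle\xi|p\rangle$ kernels and the cutoffs into bounded factors, and the powers of $r$ from the $\langle\xi|p\rangle$ normalizations and from the prefactors of $E_{Q\vee P}$ cancel in pairs against these integrations, leaving only the $r^{-d|J^+_{clust}|}$ above together with the single powers of $r$ produced next.

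The core is the accounting for the $k^-$ collision momenta $p_a$ and the $k^-$ (independent, after the per-segment simplex constraints) collision times $s_a$, $a\in K^-(\Gamma^-[\Base])$; throughout $|p_a|\asymp E^{1/2}$ by conservation of kinetic energy. If $a$ is the predecessor of $\min S$ for some $Q$-cluster $S$, then $f_{\qt{s},a}=\One_{[0,\tau]}$ contributes $\tau$, and there are $|m^-(Q)|$ such indices. For every other $s_a$, the index $a+1$ lies in a $Q$-cluster $S$ with $a'=\min S\leq' a$, and $|y_{a'}-(y_a+s_ap_a)|\leq 2|S|r$ confines $s_a$ to an interval of length $\lesssim|S|r|p_a|^{-1}\lesssim\|\mcal{F}'\|E^{-1/2}r$, giving $(E^{-1/2}r)^{k^--|m^-(Q)|}$ up to $N^C$. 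For a momentum $p_a$ with $a$ the last element of its $Q$-cluster the integral is $N^C$ (the $r^d$-weighted constraints integrate trivially, using that the momentum defect of each cell is $\lesssim r^{-1}$ times its size); for each of the remaining $k^--|M^-(Q)|$ indices I would instead use $\big||p_a|-|p_{1,0}|\big|\leq\alpha\max\{|p_{1,0}|^{-1}s_a^{-1},r^{-1}\}$ together with $(1+|q_a|)^{-20d}$: since such an $a$ is not the end of an immediate recollision and $a+1\not\sim_Q a$, Lemma~\ref{lem:short-rec-time} gives $s_a\gtrsim\alpha|p_{1,0}|^{-1}r$, the maximum equals $r^{-1}$, and $p_a$ lies in an annulus of radius $\asymp E^{1/2}$ and thickness $\alpha r^{-1}$, further cut to a unit ball when $E\gtrsim 1$, of volume $\lesssim\alpha^Cr^{-1}\min\{E^{(d-1)/2},1\}$. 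Immediate-recollision pairs $(a,a+1)$ are integrated jointly: $(s_a,p_a)$ against $\One(||p_a|^2/2-|p_{1,0}|^2/2|\leq\alpha s_a^{-1})(1+|q_a|)^{-20d}$ over $s_a\in[0,10\alpha|p_{1,0}|^{-1}r]$ gives, by Lemma~\ref{lem:sp-integral}, a factor $\lesssim\min\{E^{-1/2},E^{(d-2)/2}\}\log\eps^{-1}$, i.e.\ exactly one $E^{-1/2}r$ times one $r^{-1}\min\{E^{(d-1)/2},1\}$ up to $N^C$, consistent with the counting since such a pair contributes one index to neither $m^-(Q)$ nor $M^-(Q)$ among its interior slots. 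Multiplying the $\tau$'s, the $E^{-1/2}r$'s, the $r^{-1}\min\{E^{(d-1)/2},1\}$'s, the single $r^{-d|J^+_{clust}|}$, and the $O(\|\mcal{F}'\|)$ factors of $N^C$ yields the stated bound, which feeds directly into Lemma~\ref{lem:Y-bd}.

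The main obstacle I anticipate is matching the trivial-versus-nontrivial bookkeeping to the quantities $|m^-(Q)|$ and $|M^-(Q)|$ rather than to the finer partition $P'$ that literally appears in $f_{\qt{p},a}$ and $f_{\qt{s},a}$: one must show that whenever a $P'$-cell is strictly contained in a larger $Q$-cluster, the ``extra'' trivial bound one would naively place on that cell's maximal momentum can be traded for the annulus bound at no loss, because the sub-cell's momentum defect is controlled by the stronger $r$-connectedness constraint of the enclosing $Q$-cluster. The second delicate point is verifying that the time-gap lower bound $s_a\gtrsim\alpha|p_{1,0}|^{-1}r$ of Lemma~\ref{lem:short-rec-time} is genuinely available for every momentum index not counted by $M^-(Q)$ apart from those in immediate recollisions, since this is exactly what justifies the uniform use of the thin-annulus volume $r^{-1}\min\{E^{(d-1)/2},1\}$ and is where the immediate-recollision indices must be peeled off and handled by the joint $(s_a,p_a)$ estimate instead.
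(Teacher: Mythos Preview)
Your proposal is correct and follows essentially the same approach as the paper: both apply Lemma~\ref{lem:abstract-holder} along the $\leq_\Lambda$ ordering and then account for each variable by the same rules (position variables give $O(N^C)$ or $r^{-d}$, phase-space variables give $O(N^C)$, time variables give $\tau$ or $E^{-1/2}r$ according to whether they are first in a $Q$-cluster, momentum variables give $O(1)$ or the annulus volume $r^{-1}\min\{E^{(d-1)/2},1\}$ according to whether they are last in a cluster, and immediate-recollision pairs are handled jointly via Lemma~\ref{lem:sp-integral}). The $P'$ versus $Q$ bookkeeping concern you flag is real but minor---the paper's own proof in fact writes $M^-(P')$ in the accounting while the statement uses $M^-(Q)$, and the reconciliation (that the momentum constraint from a $P'$-cell sitting inside a larger $Q$-cluster can be traded for the annulus bound using the $r$-connectedness of the enclosing cluster) is exactly the observation you sketch.
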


\begin{proof}
We  integrate each variable independently using Lemma~\ref{lem:abstract-holder}.  The resulting
bound follows from the following rules:
\begin{itemize}
\item Each recollision pair $(s_a,p_a)$ contributes $O(\log(\eps))$ by Lemma~\ref{lem:sp-integral}.
\item Each phase space variable labelled $\qt{\xi^\pm_m}$ contributes $O(N^C)$, and there are
$O(\|\mcal{F}'\|)$ such variables.
\item Each time variable $\qt{s_{a-1}}$ that is the first collision in a cluster $S\in Q(\psi;P')$
 contributes at most $\tau$.  There
are $|m^-(Q)|$ such variables.
\item Each time variable $s_{a-1}$ with $a\not\in m^-(Q)$
cluster contributes $E^{-1/2}r$, where $E = |p_{\ell,0}|^2$
is the kinetic energy of the path.
\item Each momentum variable $p_a$ with $a\in M^-(P')$
contributes $C|S|^d$ because of the
localization $r^d\exp(-c(r|S|^{-1}|q_S|)^{0.9})$.  There are $M^-(P')$ such variables.
Of these, $|S|\leq 2$ for all but at most $\|\mcal{F}'\|$ of the collisions, so the combinatorial
factor is at most $N^{C\|\mcal{F}'\|}$.
\item Each momentum variable $p_{\ell,j}$ that is not the last in its cluster and does not belong to a
recollision contributes $C r^{-1} \min\{|p_{\ell,0}|^{d-1},1\}$ by~\eqref{eq:std-momentum-bd}.
There are $N^- - M^-(P')$ such collisions.
\item Each first momentum variable $p_{\ell,0}$ contributes $O(1)$.
\item Each $y_{\ell,j}$ variables each contribute $O(1)$, except for the variables determined by delta
functions, which contribute $O(r^{-d})$.
\end{itemize}
\end{proof}

We split the analysis of the remaining integral into cases, depending on which of
$F_{clust}$, $F_{rec}$, $F_{tube}$, or $F_{cone}$ is most significant.

\subsection{The cluster dominated case}
The first case we deal with is that
\[
\frac{1}{2}(k^++k^-) - |Q| \geq \frac{1}{10} \|\mcal{F}'\|_+.
\]
This occurs in particular if
\[
\sum_{S\in F'_{clust}} |S\cap K^+_{all}|-1 \geq \frac{1}{4} \|\mcal{F}'\|_+
\]
since each set $S\in F'_{clust}$ belongs to a cell $T\in Q$ containing at least
$2|S\cap K^+_{all}|$ collisions.
In this case, the simpleminded analysis of the previous section works to obtain
a suitable bound for $Y(\Scaff, P')$.

To integrate out the phase space variables we use the following $L^1$-type bound on the maximal
ladder function
\[
\sup_\xi \int M_k(\xi,\eta)\diff \eta \leq N^C.
\]
This bounds the contribution of the phase space variables by an overall factor of $N^{C\|\mcal{F}'\|}$.
Then using the same accounting as in the proof of Lemma~\ref{lem:minus-gamma} we obtain
\begin{equation}
\begin{split}
\sup_{\xi^+_{2N-1}}
\int & \chi_{\mcal{F},\mcal{F}'}^+(\Gamma^+[\Base])
\prod_{\Lambda[\Base]^+} f_\lambda(\bm{\Gamma}_{\leq\lambda}\diff\Gamma^+[\Base] \\
&\leq
(CN)^{C\|\mcal{F}'\|}
\tau^{m^+(Q)} (E^{-1/2}r)^{k^+-m^+(Q)}
(r^{-1}\min\{E^{(d-1)/2},1\})^{k^+-M^+(Q)}.
\end{split}
\end{equation}
Now we combing this bound with
~\eqref{eq:Ysplit-bd} and Lemma~\ref{lem:minus-gamma} and
simplify using
\[
\tau^{|Q|} \leq \tau^{(k^++k^-)/2} \tau^{|Q|-(k^++k^-)/2}
\leq \tau^{(k^++k^-)/2} \eps^{(k^++k^-)/2 - |Q|}
\]
and
\[
\sup_E \min\{ E^{(d-2)/2}, E^{-1/2}\} \leq 1
\]
to conclude
\begin{equation}
\label{eq:Y-cluster-bd}
\begin{split}
Y(\Scaff,P')
&\leq (CN)^{C\|\mcal{F}'\|}
(\eps^2\tau)^{(k^++k^-)/2}
\eps^{(k^++k^-)/2 - |Q|} r^{-d|\supp J^+_{clust}(\mcal{F}')|},
\end{split}
\end{equation}
as desired.

\subsection{The recollision argument}
The second case we deal with is that $|F_{rec}'| \geq \frac{1}{4}\|\mcal{F}\|_+$.
If $(a,b)\in F_{rec}$ is a recollision event, then either $\{a,b\}\in P'$ or
$\{a,b\}\subsetneq S\in Q$ for some $S\in Q$.  If the latter
situation occurs for at least half of the recollisions in $F_{rec}$, then we
are in the situation of the previous section.  In this section we deal with the
former scenario.

To see where the decay comes from in this case, we observe that for each such
pair $\{a,b\}$ the momentum variables essentially satisfy
\[
|p_b - p_{b-1} + p_a - p_{a-1}| \leq 4 \alpha r^{-1}
\]
while also being constrained to an annulus of thickness $r^{-1}$.  If $p_a$, $p_{a-1}$,
and $p_{b-1}$ are chosen randomly, then there is only a small probability that $p_b$ can
be chosen to both be on the annulus and satisfy the linear constraint above.  Thus the key
estimate is the following bound, which we note is closely related to the ``crossing estimate''
that is traditionally used to control diagrams.
\begin{lemma}
\label{lem:crossing-est}
For any kinetic energy level $E \geq \eps^{0.2}$, we have the estimate
\begin{align*}
\sup_{q_0,q_1,q_2\in\Real^d}
\int
\big(\prod_{j=1}^2
\One(||p_j|-E|\leq r^{-1})\big)
&(1 + |p_j-q_j|)^{-20d}
\One(||q_0 - p_2 + p_1|-E|\leq r^{-1}) \diff p_1\diff p_2 \\
&\leq C\eps^{0.1}
\big(\min\{E^{(d-1)/2}, 1\} r^{-1}\big)^2.
\end{align*}
In particular,
\begin{align*}
\sup_{q_0,q_1,q_2,q_3\in\Real^d}
\int
\big(\prod_{j=1}^3
\One(||p_j|-E|\leq r^{-1})\big)
&(1 + |p_j-q_j|)^{-20d}
(r^d \exp(-c(r|q_0-p_2+p_1-p_3|)^{0.5}))
\diff p_1\diff p_2 \diff p_3\\
&\leq C\eps^{0.1}
\big(\min\{E^{(d-1)/2}, 1\} r^{-1}\big)^2.
\end{align*}
\end{lemma}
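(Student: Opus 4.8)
The plan is to reduce the second inequality to the first, and then to prove the first by a geometric analysis of the intersection of two thin spherical shells.

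\emph{Reduction of the second inequality to the first.} The kernel $K(x):=r^d\exp(-c(r|x|)^{0.5})$ is an approximate identity at scale $r^{-1}$ with superpolynomially small tails: $\int_{\Real^d}K = C_d$ and $\int_{|x|\ge Mr^{-1}}K(x)\diff x\lesssim \exp(-c'M^{1/2})$. Writing $a:=q_0+p_1-p_2$ and integrating out $p_3$ last against $K(a-p_3)$, and taking $M=|\log\eps|^2$, one bounds $\int\One(||p_3|-E|\le r^{-1})(1+|p_3-q_3|)^{-20d}K(a-p_3)\diff p_3$ by $C\,\One(||a|-E|\le|\log\eps|^2 r^{-1})+\eps^{10}$. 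The $\eps^{10}$ term is absorbed by the crude bound $\int\One(||p|-E|\le r^{-1})(1+|p-q|)^{-20d}\diff p\lesssim\min\{E^{d-1},1\}r^{-1}$ applied to $p_1$ and $p_2$, and the main term is exactly the integrand of the first inequality with the shell thickness dilated by a factor $|\log\eps|^2$. Since the argument below is robust under such polylogarithmic dilations of the thickness — one has a small amount of room in the choices of thresholds, so a $|\log\eps|^{O(1)}$ loss is harmless — it suffices to prove the first inequality (with thickness $|\log\eps|^2 r^{-1}$, say).

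\emph{Disposing of small $E$ and setting up the crossing estimate.} Using the crude per-shell bound again, the left side of the first inequality is $\lesssim(\min\{E^{d-1},1\}r^{-1})^2=\min\{E^{d-1},1\}\cdot(\min\{E^{(d-1)/2},1\}r^{-1})^2$; if $E\le\eps^{0.1/(d-1)}$ then $\min\{E^{d-1},1\}=E^{d-1}\le\eps^{0.1}$ and we are done. So assume $E\ge\eps^{0.1/(d-1)}$, whence $r^{-1}/E\le\eps^{1-0.1/(d-1)}$, i.e. the shells are extremely thin relative to their radius. Now fix $q_0,q_1,q_2$ and $p_1$ with $||p_1|-E|\le r^{-1}$, set $c:=q_0+p_1$, $D:=|c|$, and consider the inner integral $J(p_1):=\int\One(||p_2|-E|\le r^{-1})\One(||c-p_2|-E|\le r^{-1})(1+|p_2-q_2|)^{-20d}\diff p_2$, which is taken over the intersection of two shells of radius $E$ and thickness $\asymp r^{-1}$ whose centers lie at distance $D$. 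Geometrically this intersection is empty unless $D\le 2E+O(r^{-1})$; for $D$ bounded away from $0$ and from $2E$ it sits in an $O(r^{-1})$-neighbourhood of the $(d-2)$-sphere $\{|p_2|=|c-p_2|=E\}$ of radius $\rho=\sqrt{E^2-D^2/4}$, along which the two shells meet at angle $\alpha$ with $\sin\alpha=D\rho/E^2$; together with the weight (which pins $p_2$ to an $O(1)$ ball when $E\gtrsim1$) this yields $J(p_1)\lesssim r^{-2}\min\{E^{d-1},1\}\cdot\Phi(D)$, where $\Phi(D)\lesssim E/D$ away from tangency, $\Phi(D)\lesssim E^{(d-1)/2}(2E-D)^{(d-3)/2}\min\{E^{d-1},1\}^{-1}$ near tangency, and $\Phi(D)\lesssim1$ whenever $D\asymp E$. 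This is the analogue of the classical crossing estimate; the underlying reason, via the co-area formula for $(p_1,p_2)\mapsto(|p_1|^2,|p_2|^2,|q_0+p_1-p_2|^2)$, is that the only degenerate configurations are those with $p_1$, $p_2$ and $q_0+p_1-p_2$ collinear, which force $p_1\parallel q_0$.

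\emph{Splitting the $p_1$-integral.} I would then partition $\{||p_1|-E|\le r^{-1}\}$ according to $D=|q_0+p_1|$, with thresholds $\eta=\eta'=\eps^{0.1}E$: on $\{D\le\eta\}$ the point $p_1$ lies in the shell intersected with $B(-q_0,\eta)$; on $\{2E-D\le\eta'\}$ it lies in the shell intersected with a narrow angular sector of opening $\lesssim\sqrt{\eta'/E}$ about $\widehat{q_0}$ (empty unless $|q_0|$ is within $O(\eta')$ of $E$); and on the generic range $\eta\le D\le 2E-\eta'$ one uses $\int\One(||p_1|-E|\le r^{-1})(1+|p_1-q_1|)^{-20d}|q_0+p_1|^{-1}\diff p_1\lesssim r^{-1}\min\{E^{d-1},1\}E^{-1}|\log\eps|$ (the log accounting for $-q_0$ possibly lying on the shell), together with a layer-cake estimate $\int(2E-D)^{(d-3)/2}\diff(\cdot)$ over the sub-range of the generic range near tangency, which is finite since $d>1$. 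Pairing each piece with the matching bound for $J(p_1)$ (the crude $J\lesssim r^{-2}\min\{E^{d-1},1\}$ on the two exceptional ranges, the $\Phi$-bound on the generic range) and using the weight-confinement of $p_1$ for $E\gtrsim1$, one checks every piece is $\le C\eps^{0.1}(\min\{E^{(d-1)/2},1\}r^{-1})^2$; in the generic range the gain is in fact of order $\eps|\log\eps|$, so it is the two exceptional ranges that fix the thresholds. The main obstacle is the geometric bound on $J(p_1)$ near internal tangency $D\approx2E$ in dimension $d=2$: there the intersection of two nearly internally tangent thin circles is comparable to a single arc, so the $p_2$-integral gains nothing and the entire $\eps^{0.1}$ must be squeezed out of the angular localization of $p_1$ forced by $|q_0+p_1|\approx2E$ with $|p_1|\approx E$. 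Verifying this, and arranging the threshold bookkeeping so that the $\min\{E^{(d-1)/2},1\}$ factors close uniformly over $E\in[\eps^{0.1/(d-1)},\infty)$, is where the real work lies; once $E\ge\eps^{0.1/(d-1)}$ the shells are thin enough that no further difficulty arises.
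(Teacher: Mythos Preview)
Your approach matches the paper's: reduce the second bound to the first via the $p_3$-localization, then split on the distance $D=|q_0+p_1|$ between the centers of the two shells constraining $p_2$, gaining from the shell-intersection geometry when $D$ is not too small and from the smallness of the $p_1$-region when it is. The paper's proof is a four-line sketch with a single threshold $D\gtrless\eps^{0.5}$ and none of your additional case analysis (the preliminary small-$E$ step, the near-tangency regime $D\approx 2E$, the $d=2$ discussion), so your version is a more careful execution of the same idea rather than a different route.
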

\begin{proof}
Let $A_{E,r}(z)$ denote the shifted annulus
\[
A_{E,r}(z) := \{p\in\Real^d \mid ||p-z|-E|\leq r^{-1}\},
\]
We note that for $(p_1,p_2)$ that contributes to the integrand,
$p_2 \in A_{E,r}(0)\cap A_{E,r}(q_0-p_1)$.  We split the integral
into two parts.  In the first part, $|q_0-p_1|\geq \eps^{0.5}$.  In this
case, the intersection $A_{E,r}(0)\cap A_{E,r}(q-p_1)$ has volume
at most $(r^{-1}\eps^{0.5}) E^{(d-1)/2} r^{-1}$, from which we gain
the extra factor of $\eps^{0.1}$.  The second integral involves the
constraint $|q_0-p_1|\leq \eps^{0.5}$, which also contributes an
extra factor of $\eps^{0.1}$ in the integration over $p_1$.

For the proof of the second bound, we simply observe that
the constraint $p_3\in A_{E,r}(0)$ and the exponential
localization of $p_3$ enforces that
$p_2\in A_{E,r}(q_0-p_1)$, and we use the argument for the first bound.
The integral in $p_3$ produces only a constant factor.
\end{proof}

To apply Lemma~\ref{lem:crossing-est}, we need to make some observations about the
momentum variables $p_a$.  If $[\ell_1,\ell_2]\in\pi(\mcal{I}^{stick}(D))$ is a stick,
then $|p_{\ell_1-1,k_{\ell_1-1}} - p_{\ell_2+1,0}| \leq CN^C r^{-1}$.  For each
$\{a,b\}\in F_{rec}$, if $a-1$ is of the form $(\ell,0)$ then $\ell-1$ must be the endpoint
of a ``stick'' interval (otherwise it would be a ladder interval, but then the first collision
in the segment must be a rung and not a recollision).  Thus we can isolate the momentum
variables of the form $p_a,p_{a-1},p_b,p_{b-1}$ for $\{a,b\}\in F_{rec}\cap P'$ and integrate
over these variables \emph{first}, leaving the ordering on $\Lambda_D^+$ otherwise unchanged.

We then use the following integral bound, which follows from Lemma~\ref{lem:crossing-est}
above.
\begin{lemma}
\label{lem:multiple-crossings}
Let $\{(a_i,b_i)\}_{i=1}^m\subset \binom{[K]}{2}$ be a collection of disjoint pairs
of indices in $[K]$ satisfying $a_i+1<b_i$ for some $K\in\bbN$.
(meaning that $\{a_i,b_i\}\cap \{a_j,b_j\}=\noset$ for $i\not=j$).
Then, for any $E\geq r/\sigma$,
\begin{equation}
\begin{split}
\int
\prod_{i=1}^K \One(||p_i| - E|\leq r^{-1})
\prod_{j=1}^m r^d &\exp(-c (r|p_{a_i}-p_{a_i-1} + p_{b_i} - p_{b_i-1}|)^{0.5}) \diff \mbf{p} \\
&\leq  C^K r^{-m/100}
\big(\min\{E^{(d-1)/2}, 1\} r^{-1}\big)^{K-m}.
\end{split}
\end{equation}
\end{lemma}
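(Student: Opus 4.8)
The plan is to reduce the estimate to a product of single‑crossing estimates of the kind furnished by Lemma~\ref{lem:crossing-est}, the overlaps between crossings being absorbed by a careful choice of the order of integration. As a preliminary remark: every momentum occurring in the path integral is confined to a ball of bounded radius (by the support hypothesis on the observable together with conservation of kinetic energy), so one may assume $E\lesssim 1$. In that regime the shell $A_{E,r}=A_{E,r}(0)$ appearing in the proof of Lemma~\ref{lem:crossing-est} has measure $\Vol A_{E,r}\leq CV$ with $V:=\min\{E^{(d-1)/2},1\}r^{-1}$, and since each momentum in the integrand is automatically localised to a fixed ball, the decay factors $(1+|p-q|)^{-20d}$ present in Lemma~\ref{lem:crossing-est} may be inserted or discarded at the cost of a dimensional constant. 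Consequently, for any four momenta $p_{a},p_{a-1},p_{b},p_{b-1}$ constrained to $A_{E,r}$ and subject to the single factor $r^{d}\exp(-c(r|p_a-p_{a-1}+p_b-p_{b-1}|)^{0.5})$, I would treat one of the four as a parameter, match the remaining three (after possibly negating some of them, which the annulus constraints permit) to the second inequality of Lemma~\ref{lem:crossing-est} to get $C\eps^{0.1}V^{2}$, and then integrate the parameter for a further factor $CV$, obtaining the bound $C\eps^{0.1}V^{3}$ for the four‑variable crossing.

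Granting this, the generic case — in which the $4m$ indices $\{a_i-1,a_i,b_i-1,b_i\}_{i=1}^{m}$ are pairwise distinct — is immediate. Fubini factors the integral as a product of $m$ four‑variable crossing integrals (each at most $C\eps^{0.1}V^{3}$) and $K-4m$ one‑variable integrals $\int\One(||p|-E|\leq r^{-1})\,\diff p\leq CV$, so the whole expression is at most $C^{K}\eps^{0.1m}V^{(K-4m)+3m}=C^{K}\eps^{0.1m}V^{K-m}$, which is the claimed inequality (indeed with room to spare, since $\eps^{0.1}=r^{-1/10}\leq r^{-1/100}$).

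The substance of the proof is the general case, where the groups $G_i:=\{a_i-1,a_i,b_i-1,b_i\}$ overlap; note this can only happen through the predecessor indices, because the pairs $\{a_i,b_i\}$ are themselves disjoint. The structural fact I would use is that the forms $L_i(\mbf{p})=p_{a_i}-p_{a_i-1}+p_{b_i}-p_{b_i-1}$ are linearly independent: listing the crossings in order of decreasing $b_i$, the largest index $b_{\pi(1)}=\max_i b_i$ is strictly larger than every index occurring in any group, so the corresponding column of the coefficient matrix is nonzero in exactly one row, and induction after deleting that row gives independence. This guarantees a valid order in which to integrate out one "pivot" momentum $p_{b_i}$ per crossing. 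To preserve the $\eps^{0.1}$ gain, at the step for crossing $i$ I would apply the four‑variable crossing estimate above to $p_{b_i}$ together with two of $\{p_{a_i-1},p_{a_i},p_{b_i-1}\}$, choosing as the parameter momentum one that is still needed by a later crossing; this succeeds for every crossing that shares at most one momentum with the not‑yet‑processed crossings, a crossing I will call \emph{good}. For the (few per overlap‑cluster) crossings sharing two or more, one either iterates the crossing estimate along the cluster — a higher‑order analogue of Lemma~\ref{lem:crossing-est} proved by the same intersecting‑shells computation — or simply integrates their pivot trivially and forfeits the gain. Either way the $V$‑powers reassemble to $V^{K-m}$ (one factor of $V$ is lost per crossing, through its exponential), the good crossings contribute $\eps^{0.1g}$ where $g$ is their number, and all combinatorial and dimensional factors go into $C^{K}$; this yields $C^{K}r^{-g/10}V^{K-m}\leq C^{K}r^{-m/100}V^{K-m}$ provided $g\geq m/10$, the wasteful exponent $1/100$ (versus $1/10$ in the generic case) being exactly the slack that accommodates the forfeited crossings. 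The hard part is this last bookkeeping: showing that under the overlap structure at least a tenth of the $m$ crossings remain good, and setting up the iterated crossing estimate for clusters.
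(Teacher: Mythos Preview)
Your overall strategy is the same as the paper's: reduce to the single-crossing estimate of Lemma~\ref{lem:crossing-est}, with the only obstruction being that the four-index groups $G_i=\{a_i-1,a_i,b_i-1,b_i\}$ can overlap (and you correctly note each index lies in at most two groups). The gap is exactly where you say it is: you have not proved that a fixed fraction of the crossings are ``good,'' nor set up the iterated estimate for clusters, and without one of these the argument does not close.

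The paper fills this gap by a separate combinatorial lemma (Lemma~\ref{lem:disjoint-subset-finder}), whose statement is cleaner than your good/bad ordering scheme: given $m$ four-element sets in $[K]$ with each index in at most two sets, one can choose \emph{pairwise disjoint} subsets $T(S)\subset S$ with $|T(S)|\geq 1$ always and $|T(S)|=3$ for at least $m/4$ of them. The proof is a greedy algorithm with a token-counting invariant. Once the $T(S)$ are fixed, one reorders the momenta so that $\bigcup_S T(S)$ comes last; the triples are integrated via the second bound in Lemma~\ref{lem:crossing-est} (each yielding $\eps^{0.1}V^2$ with $V=\min\{E^{(d-1)/2},1\}r^{-1}$), the singletons are absorbed by the stretched exponential (each yielding $O(1)$), and every remaining index contributes $V$. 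The $V$-powers assemble to $V^{K-m}$ automatically, and the $m/4$ factors of $\eps^{0.1}$ give $r^{-m/40}\leq r^{-m/100}$.

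Note that the paper's framing makes both your linear-independence observation and the proposed ``iterated crossing estimate along a cluster'' unnecessary: because the $T(S)$ are disjoint, each momentum is integrated exactly once with no interference between crossings, and the crossings with $|T(S)|=1$ are handled trivially rather than by a higher-order estimate. Your approach would eventually work, but the disjoint-subset selection is both the missing ingredient and a tidier organization of the whole argument.
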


The additional required ingredient is the following elementary combinatorial
fact.
\begin{lemma}
\label{lem:disjoint-subset-finder}
Let $\mcal{S}\subset[K]$ be a collection of sets $S$ containing four elements each,
such that any index $j\in[K]$ is contained in at most two sets.
Then there exists subsets $T(S)\subset S$ for each $S\in\mcal{S}$
such that $|T(S)|\geq 1$ for every $S\in\mcal{S}$, the sets $T(S)$ are
disjoint, and $|T(S)|=3$ for at least $|\mcal{S}|/4$ sets.
\end{lemma}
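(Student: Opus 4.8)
The plan is to translate the problem into a statement about orientations of a bounded-degree multigraph. Form the conflict multigraph $G$ with vertex set $\mcal{S}$ by adding, for each index $j\in[K]$ that lies in two distinct sets $S,S'\in\mcal{S}$, an edge $\{S,S'\}$ labelled $j$. Since each set has exactly four elements and each index lies in at most two sets, $G$ has maximum degree at most $4$, and the number of elements of $S$ lying in no other set — its \emph{private} elements — equals $4-d_G(S)$. Given any orientation of $G$, set $T(S)$ to be the private elements of $S$ together with the labels $j$ on edges oriented \emph{into} $S$. Each private element lies in a unique set and each shared element's edge points into exactly one of its two sets, so the $T(S)$ are automatically pairwise disjoint subsets of the respective $S$, with $|T(S)| = 4 - \operatorname{outdeg}(S)$. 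Hence it suffices to orient $G$ so that every vertex has out-degree at most $3$ (so $T(S)\neq\varnothing$) and at least $|\mcal{S}|/4$ vertices have out-degree at most $1$ (so that, after discarding one element wherever $|T(S)|=4$, at least $|\mcal{S}|/4$ of the $T(S)$ have size exactly $3$).

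To build such an orientation, first locate a large cheaply-orientable set. Let $Y$ be a \emph{maximal} induced forest of $G$. If $v\notin Y$ then $Y\cup\{v\}$ induces a cycle, which must pass through $v$, so $v$ has at least two neighbours in $Y$; counting edges between $Y$ and its complement gives $2\,|\mcal{S}\setminus Y|\le e_G(Y,\mcal{S}\setminus Y)\le 4\,|Y|$, hence $|Y|\ge |\mcal{S}|/3\ge|\mcal{S}|/4$. Orient $G[Y]$ by rooting each tree and directing its edges toward the root, so each vertex of $Y$ has out-degree at most $1$ inside $G[Y]$. It remains to orient the edges inside $\mcal{S}\setminus Y$ and the cross-edges so that each $Y$-vertex keeps total out-degree at most $1$ and each vertex of $\mcal{S}\setminus Y$ has out-degree at most $3$. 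One orients $G[\mcal{S}\setminus Y]$ by an Euler-type argument — out-degree at most $\lceil d/2\rceil$, in fact at most $1$ since every vertex of $\mcal{S}\setminus Y$ has at least two of its $\le 4$ edges going to $Y$, leaving at most $2$ inside $\mcal{S}\setminus Y$ — and then orients the cross-edges predominantly into $Y$, using the freedom in the choice of roots to peel off cross-edges at the few vertices of $\mcal{S}\setminus Y$ that carry three or four of them. A careful accounting along these lines keeps at least $|\mcal{S}|/4$ vertices of $Y$ at out-degree $\le 1$ while holding everyone at out-degree $\le 3$.

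The routine parts are the degree bound on $G$ and the bookkeeping that produces $T(S)$ from the orientation; the delicate step, which I expect to be the main obstacle, is precisely the orientation of the cross-edges together with $G[\mcal{S}\setminus Y]$ — the naive choice ``orient every cross-edge into $Y$'' can force a vertex of $\mcal{S}\setminus Y$ whose four edges all meet $Y$ to out-degree $4$, so one must reroute a bounded number of edges locally without spoiling the out-degree-$\le 1$ status of too many forest vertices (and the slack $|\mcal{S}|/3 - |\mcal{S}|/4$ is what absorbs this loss). An equivalent, possibly cleaner, route phases the argument through Hall's theorem for the bipartite set–element incidence: disjoint $T(S)$ with prescribed sizes $\sigma(S)$ exist exactly when $\bigl|\bigcup_{S\in\mcal{S}'}S\bigr|\ge\sum_{S\in\mcal{S}'}\sigma(S)$ for every subfamily $\mcal{S}'$, and since $\bigl|\bigcup_{S\in\mcal{S}'}S\bigr| = 4|\mcal{S}'| - e_G(\mcal{S}')\ge 2|\mcal{S}'|$, the ``fractional'' assignment $\sigma\equiv 1$ with weight $\tfrac14$ placed on value $3$ is feasible; the work is then in rounding to an honest integral family — i.e. ruling out the failure mode in which almost every set only reaches size $2$ — which again reduces to exploiting the sparsity of $G$ (equivalently, that $G$ has an induced forest on at least $|\mcal{S}|/3$ vertices).
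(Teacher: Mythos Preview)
Your reduction to orientations is correct and elegant: the identity $|T(S)|=4-\operatorname{outdeg}(S)$ is exactly the right bookkeeping, and the observation that $G$ has maximum degree~$4$ and hence carries an induced forest $Y$ of size at least $|\mcal{S}|/3$ is sound. The paper takes a completely different route --- a greedy algorithm that processes sets in order of \emph{fewest} remaining elements and maintains a signed token function $f$ with $\sum f\le 0$, where sets receiving three elements earn $f=-3$ and sets receiving one element have $f\ge 1$; the token inequality then forces the $3$-sets to number at least $|\mcal{S}|/4$.

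That said, your proposal is not yet a proof: both completion routes are left as sketches, and the gap is genuine rather than cosmetic. In the orientation picture, the slack you cite, $|\mcal{S}|/3-|\mcal{S}|/4=|\mcal{S}|/12$, is \emph{not} obviously enough. A vertex $v\in\mcal{S}\setminus Y$ with all four edges crossing to $Y$ forces at least one cross-edge to be oriented out of $Y$, spoiling a $Y$-vertex; the number of such ``type-$4$'' vertices is bounded only by $|Y|$ (via $4\cdot(\text{type-}4)\le 4|Y|$), which can exceed $|\mcal{S}|/12$ by a large factor. One cannot simply pile the spoilage onto few $Y$-vertices either, since a $Y$-leaf absorbing all its cross-edges outward can reach out-degree~$4$. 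In the Hall formulation the same obstruction appears concretely: taking $A$ to be an arbitrary $|\mcal{S}|/4$ vertices of $Y$ with $\sigma|_A=3$, $\sigma|_B=1$ need not satisfy Hall's condition. For instance, if $A$ contains a path $v_1v_2v_3v_4$ in $Y$ and $B$ contains two vertices $w_1,w_2$ each joined to all of $v_1,\dots,v_4$, then for $\mcal{S}'=\{v_1,\dots,v_4,w_1,w_2\}$ one has $|\bigcup\mcal{S}'|=13$ but $\sum\sigma=14$. So the choice of which sets get $\sigma=3$ must be coupled to the cross-edge structure, and you have not indicated how.

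It may well be that a more careful choice of $A$ inside $Y$ (or an iterative rerooting scheme for the forest) closes this, but as written the proposal stops exactly where the difficulty begins. The paper's greedy-plus-token argument sidesteps the issue entirely by never committing to a global set $A$ in advance.
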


\begin{proof}[Proof of Lemma~\ref{lem:multiple-crossings} using Lemma~\ref{lem:disjoint-subset-finder}]
We apply Lemma~\ref{lem:disjoint-subset-finder}
to the collection of sets $\mcal{S} = \{\{a_i-1,a_i,b_i-1,b_i\}\}_{i=1}^m$.
Then we reorder $[K]$ so that the elements of $T(S)$ are last
 and the rest of the elements are first.
The triple sets contribute $\eps^{0.1}\big(\min\{E^{(d-1)/2},1\}r^{-1}\big)^2$
and the singletons constribute $O(1)$ from the stretched exponential localization.
Every other index  that is not contained in some $T(S)$ contributes
$\min\{E^{(d-1)/2}, 1\}r^{-1}$.
\end{proof}

From Lemma~\ref{lem:multiple-crossings} we gain an additional improvement
$\eps^{c|F_{rec}|}$ over the bound~\eqref{eq:Y-cluster-bd}.  We therefore obtain
\begin{equation}
\label{eq:Y-rec-bd}
\begin{split}
Y(D,\psi,P')
&\leq (CN)^{C\|\mcal{F}'\|}
\max_{P'\leq P(\psi)}
(\eps^2\tau)^{(k^++k^-)/2}
\eps^{c|\supp^+ F'_{rec}|}
r^{-d|\supp F_{sing}|} \\
&\leq (CN\eps^\beta)^{C\|\mcal{F}\|} r^{-d|\supp J^+_{clust}(\mcal{F}')|}.
\end{split}
\end{equation}

We conclude this subsection with the proof of the combinatorial lemma.
\begin{proof}[Proof of Lemma~\ref{lem:disjoint-subset-finder}]
We build the sets $T(S)$ by a greedy algorithm.  At the $j$-th step we keep track
of the collection of ``completed'' sets $\mcal{C}_j\subset\mcal{S}$, a set
of ``remaining'' sets $\mcal{R}_j\subset\mcal{S}$, a subset choice function
$T_j:\mcal{C}_j\to 2^{[K]}$ and a ``token count'' function
$f_j:\mcal{S}\to \bbZ$.  We also define the ``inaccessible elements''
$I_j\subset[K]$ to be $\bigcup_{S\in\mcal{C}_j}T(S)$.  These are the elements that have
been chosen to be part of a subset already by the $j$th step.

At the initial step we set $\mcal{C}_0 = \noset, \mcal{R}_0=\mcal{S}$,
and $f_j(S)=0$ for all $S$.  At each step we maintain the invariants that
$\sum_{\mcal{S}}f_j(S)\leq 0$, that for each $S\in\mcal{R}_j$,
$|S\setminus I_j| \geq 4 - f_j(S)$, and $f_j(S) = 0$ if $S\cap I_j=\noset$.

At the $j$-th step we arbitrarily pick a set $S\subset \mcal{R}_j$ minimizing
$|S\setminus I_j|$.  If such $S$ has $|S|=4$, then we choose
$T(S)\subset S$ arbitrariliy, set $\mcal{C}_{j+1} = \mcal{C}_j\cup\{S\}$
and $\mcal{R}_{j+1} = \mcal{C}_j\setminus\{S\}$.  We update the token
count function by setting $f_{j+1}(S) = -3$ and
$f_{j+1}(S') = f_j(S') + |S'\cap T(S)|$ for every $S'\in\mcal{R}_j$.  It
is clear that the token count updated in this way maintains the invariants above.

If the minimal $|S\setminus I_j|$ has $|S\setminus I_j|<4$ there are three cases to
consider.  The first is that $|S\setminus I_j| = 3$ and there exists
$S'\in\mcal{R_j}$ such that $S'\setminus I_j = S\setminus I_j$.  In this case
we choose two elements $a$ and $b$ arbitrarily from $S\setminus I_j$
and set $T_{j+1}(S) = \{a\}$ and $T_{j+1}(S')=\{b\}$.  Because both $a$ and $b$
belong to $S$ and $S'$, and no element is contained in more than one set,
there are no other sets $S''$ containing $a$ or $b$.  We then simply
update $\mcal{R}_{j+1} = \mcal{R}_j\setminus\{S,S'\}$
and $\mcal{C}_{j+1} =\mcal{C}_j \cup\{S,S'\}$, and $f_{j+1} = f_j$.

The second case is that $|S\setminus I_j|=3$ and there does not exist another
$S'\in\mcal{R}_j$ such that $S'\setminus I_j = S\setminus I_j$.
In this case we set $T(S) = S\setminus I_j$ and proceed with the token update
rule $f_{j+1}(S) = -3$, $f_{j+1}(S') = f_j(S) + |S'\cap T(S)|$.

The final case is that $|S\setminus I_j|<2$.  In this case we choose an element
$a\in S\setminus I_j$ arbitrarily and assign $T(S)=\{a\}$.  We update the
token function by setting $f_{j+1}(S) = f_j(S)-1$ and $f_{j+1}(S') = f_j(S')+1$
if there exists $S'\in\mcal{R}_j$ such that $a\in S'$.

Repeating this process until $\mcal{R}_j$ is empty, we obtain a collection of
disjoint sets $T(S)\subset S$ for each $S\in\mcal{S}$.  It remains to check
that $|T(S)|=3$ for at least $|\mcal{S}|/4$ sets.  To see this we observe
that $f(S) = -3$ for such sets, whereas $f(S') \geq 1$ if $|T(S')| = 1$
and $\sum f(S)\leq 0$.
\end{proof}

\subsection{The tube event case}
The third case we consider is that $|\supp^+ F'_{tube}|\geq \|\mcal{F}\|_+/4$.  In
this case at least some constant fraction of the tube events
$(a,b)\in F_{tube}$ satisfy $b\not\in\supp F_{rec}$.
and $b-1\not\in M(P')$.  If not, we can reduce to one of the previous cases.
In this case, we will introduce an extra constraint on either the time variable
$s_{b-1}$ or the momentum variable $p_{b-1}$.
In this section we argue that the bound
\begin{equation}
\label{eq:Y-tube-bd}
\begin{split}
Y(\Scaff, P')
&\leq (CN)^{C\|\mcal{F}'\|}
\eps^{\beta |\supp^+ F'_{tube}|}
r^{-d|\supp J^+_{clust}(\mcal{F}')|},
\end{split}
\end{equation}
holds for some $\beta>0$.

Recall that a tube event occurs when $y_b$ lies in the tube
\[
T(y_a,p_a) := \{z \in\Real^d \mid \text{ there exists }
s\in \Real\text{ such that } |y_a + sp_a - z|\leq N^8r\}.
\]
Indeed, if
$|p_{b-1} - p_a|\geq \eps^{0.6}$, then we have
\[
\sup_{y_a,p_a,p_{b-1}} \int \One(y_{b-1}+s_{b-1}p_{b-1}\in T(y_a,p_a))
\diff s_{b-1}
\leq \eps^{-0.6} N^8 r.
\]
And this is an improvement over $\tau$ by a factor of $N^8 \eps^{0.4}$.
On the other hand, if $|p_{b-1}-p_a|\leq \eps^{0.6}$, then we gain an improvement
in the $p_{b-1}$ variable.

The only challenge with implementing this idea is that there may be many
pairs of the form $(a_i,b)\in F_{tube}$ with $a_i<b$.  In this case the constraints
``collide'' and naively there is only a gain of one factor of $\eps^\beta$.

A very simplistic and extraordinarily suboptimal solution to this problem is as follows:
If for some $b$ there are at most $m$ tuples of the form $(a,b)$ (with $m$ to be chosen
momentarily, in fact $m=60d$ works), then we still
gain a factor of $\eps^{\beta/m}$ for each pair in $F'_{tube}$.  If on the other hand
for some $b$ there are at least $m$ pairs $(a_i,b)$ then we integrate over all possible
values of $y_b$ at a cost of $\eps^{-C}$ (with $C<3d$).  Then, having conditioned on the value of $y_b$,
we gain a factor of $\eps^\beta$ on the variables determining $(y_a,p_a)$ by splitting
into cases as follows:
\begin{itemize}
\item In the case $|y_a-y_b|\leq \eps^{-0.5}r$, we gain a factor of $\eps^{0.3}$
in the integration over $s_{a-1}$.
\item In the case $|y_a-y_b|> \eps^{-0.5}r$ we gain a factor of $\eps^{0.1}$ in the
integration over the $p_a$ variable.
\end{itemize}
Either way, each such collision contributes an additional factor of $\eps^{0.1}$.  If
$m>60d$, then $\eps^{0.1(m/2)} < \eps^{3d}$, so that $\eps^{0.1m}\eps^{-C} < \eps^{\beta m}$,
and again it can be seen that each tube incidence in $F'_{tube}$ contributes a factor of $\eps^\beta$.

\subsection{Cone events}
The final case we consider is that $|\supp^+ F'_{cone}|\geq \|\mcal{F}\|_+/4$.
Recall that in a cone event $(a,b)\in F_{cone}$, there exists
some $v\in\Real^d$ such that
\begin{align*}
    ||v|^2/2 - E| &\leq \alpha^2  N^{10} E^{1/2}r^{-1} \\
    ||v-q_a|^2/2-E| &\leq \alpha^2 N^{10} E^{1/2} r^{-1} \\
    ||v+q_{b}|^2/2-E| &\leq \alpha^2 N^{10} E^{1/2} r^{-1} \\
    |y_a + tv - y_{b}| &\leq \alpha^2 N^{10} r,
\end{align*}
and the constraint that this does not hold for $v=p_a$ implies that
there must exist some exterior collision between $a$ and $b$.

For each of these cone events, if neither of the following constraints holds:
\begin{align*}
|y_a - y_b| &\leq \eps^{-0.8} r \\
|\frac{y_a-y_{b-1}}{|y_a-y_{b-1}|} - p_{b-1}| &\leq \eps^{0.2},
\end{align*}
then the momentum variable $p_b$ is constrained to be on the intersection
of the annulus $A_{E,r}(0)\cap A_{E,\delta}(E^{1/2} \frac{y_a-y_b}{y_a-y_b})$.
In any such case there is a gain of a factor of $\eps^\beta$ in one of the variables
$s_{b-1}$, $p_{b-1}$, or $p_b$.  Now if there are too many pairs of the form $(a_i,b)$
with some fixed $b$ we can spend a factor of $\eps^{-C}$ to deduce the values of
$(y_b,p_{b-1},p_b)$ and obtain factors of $\eps^\beta$ instead on
$(s_{a-1},p_{a-1},p_a)$ (the cone event is symmetric so the same geometric argument
works in this reverse case).
Either way we have
\begin{equation}
\label{eq:Y-tube-bd}
\begin{split}
Y(\Scaff, P')
&\leq (CN)^{C\|\mcal{F}'\|}
\eps^{\beta |\supp^+ F'_{tube}|}
r^{-d|\supp J^+_{clust}(\mcal{F}')|},
\end{split}
\end{equation}
and this concludes the final case of the proof of Proposition~\ref{prp:main-diffusive-bd}.

\section{Analysis of the ladder superoperator}
\label{sec:ladders-superop}
In this section we prove some estimates on the ladder superoperator and its variants that appear
in the diagrammatic decomposition of $\wtild{\Evol}_{N\tau}$.

Recall that the single segment ladder superoperator $\mcal{L}$ is defined by
\[
\mcal{L}[A] :=
\int \ket{\xi_0^-}\bra{\xi_0^+}
\braket{\xi_1^-|A|\xi_1^+}
\Xi(\Gamma)
\Expec_{lad}
\mstack{\braket{\xi_1^+|O_{\omega^+}|\xi_0^+}}
{\braket{\xi_1^-|O_{\omega^-}|\xi_0^-}^*}\diff\Gamma.
\]
The ladder superoperator can also be expressed in terms of colored path operators.
Let $O_{lad}$ be the operator
\begin{equation}
\label{eq:Olad-def}
O_{lad} := \sum_{k=0}^{k_{max}}\int_{\Omega_k(\tau)}\int_{(\PhaseSpace)^2} \ket{\xi_0}\bra{\xi_1}
\chi_{\alpha,\sigma}(\omega;\xi_0,\xi_1)
\braket{\xi_0|O_{\omega}^{\Psi^+_{rl}([k])}|\xi_1} \diff\omega \diff \xi_0\diff\xi_1.
\end{equation}
Then
\[
\mcal{L}[A] = \Expec O_{lad}^* A O_{lad}.
\]
We define the $k$-extended ladder superoperator $\mcal{L}_k$ to be the analogue of
$\mcal{L}$ formed from extended paths.  We define the extended ladder operator
$O_{lad,k}$ by integrating over paths in $\Omega_{\alpha,\sigma}^k$ (that is, paths
having $k$ segments):
\begin{equation}
\label{eq:Oladk-def}
O_{lad,k} = \int_{\Omega_{\alpha,\sigma}^k} \Xi(\Gamma) O_\Gamma^{\Psi_{rl}^+(K(\Gamma))} \diff\Gamma.
\end{equation}
Note that $O_{lad,1}=O_{lad}$ (in the case $k=1$ the path $\Gamma = (\xi_0,\omega,\xi_1)$
and the expression expands to~\eqref{eq:Olad-def}.
Then we define
\begin{equation}
\label{eq:kladder-def}
\begin{split}
\mcal{L}_k[A] :=
\Expec O_{lad,k}^* A O_{lad,k} \\
&=
\int_{\Omega_{\alpha,\sigma}^k\times\Omega_{\alpha,\sigma}^k}
\Xi(\bm{\Gamma}) \Expec_{lad} O_{\Gamma^-}^* A O_{\Gamma^+}\diff\bm{\Gamma}.
\end{split}
\end{equation}

The ladder superoperator is an essential
building block that we use in our expansion of the superoperator $\wtild{\Evol}_{N\tau}$.
Analytically it will appear in the form of the \emph{ladder function}
\begin{equation}
\label{eq:ladder-fx}
L_k\big( \mstack{\xi, &\eta}{\xi', &\eta'} \big)
:=
\braket{\eta | \mcal{L}_k[ \ket{\xi}\bra{\xi'}] | \eta'}.
\end{equation}

In our analysis of $\wtild{\Evol}_{N\tau}$ we need a bound on the \textit{maximal ladder function},
\begin{equation}
\label{eq:Mk-def}
M_k (\xi,\eta)
:=
\sup_{\xi',\eta'\in T_{N^6,\tau}(\xi,\eta)}
|L_k\big( \mstack{\xi, &\eta}{\xi', &\eta'} \big)|,
\end{equation}
where
\[
T_{\delta,\tau}(\xi,\eta) =
\bigcup_{|s|\leq \tau}
\{(\xi',\eta')\in(\PhaseSpace)^2\mid \xi\text{ is } (s,\delta)\text{-twinned to } \xi'
\text{ and }\eta\text{ is } (s,\delta)\text{-twinned to } \eta'\}.
\]
The main result of this section is the following bound on the maximal ladder function.
\begin{lemma}
\label{lem:maximal-ladder-bd}
The maximal ladder function defined in~\eqref{eq:Mk-def} satisfies, for $k\leq N$,
\[
\sup_\xi \int M_k(\xi,\eta)\diff \eta
\leq CN^{20d}.
\]
\end{lemma}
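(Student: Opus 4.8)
The plan is to route the estimate through the (near‑)semigroup property of the ladder superoperator together with the contraction and approximate trace preservation of $\Evol_\tau$, and to pay a polynomial‑in‑$N$ volume factor for the supremum over twinned endpoints. I would \emph{not} try to bound the defining path integral of $L_k$ term by term.

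\textbf{Step 1: a near‑semigroup identity.} First I would establish that $\mcal L_k \approx \mcal L_\tau^{\,k}$ as maps $\mcal B(L^2)\to\mcal B(L^2)$. Expanding $\mcal L_\tau^{\,k}[A]=\Expec (O_{lad,V_1}\cdots O_{lad,V_k})^*A(O_{lad,V_1}\cdots O_{lad,V_k})$ with independent potential families $V_1,\dots,V_k$, and comparing with $\mcal L_k[A]=\Expec O_{lad,k}^{\,*}AO_{lad,k}$ built from $k$‑segment extended ladder paths with one global coloring $\Psi^+_{rl}(K(\Gamma))$, the two agree on all doubled paths whose collision counts are balanced at every segment boundary. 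The only discrepancy comes from ``straddling'' configurations, where some rung of $\bm\Gamma$ pairs a collision in segment $\ell$ of $\Gamma^+$ with a collision in segment $\ell'\ne\ell$ of $\Gamma^-$; such a rung forces a collision site reached at time $<\ell\tau$ to coincide with one reached at time $>\ell'\tau$, an extra geometric constraint worth a factor $\eps^{c}$ when run through the volumetric bookkeeping behind Proposition~\ref{prp:short-ladder-compare}. Summing over the $\le N$ boundaries gives $\|\mcal L_k-\mcal L_\tau^{\,k}\|_{op\to op}\le C N\eps^{c}=o(1)$, since $N=\lfloor\eps^{-\kappa}\rfloor$ and $\kappa$ is small. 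Combining with $\|\mcal L_\tau-\Evol_\tau\|_{op\to op}\le C\eps^{0.1}$ (Proposition~\ref{prp:short-ladder-compare}) and the fact that $\Evol_\tau$ is a CPTP contraction, one obtains $\mcal L_k=\Evol_\tau^{\,k}+E_k$ with $\|E_k\|_{op\to op}=o(1)$ and $E_k$ arising from straddling paths, and, running the same comparison on the predual, $|\Tr\mcal L_k[\ket\xi\bra\xi]-1|=o(1)$; in particular $\mcal L_k$ maps positive operators to $o(1)$‑perturbations of positive operators.

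\textbf{Step 2: the diagonal $L^1$ bound.} Next I would show $\int_\eta |L_k\big(\mstack{\xi,&\eta}{\xi,&\eta}\big)|\,d\eta\le C_d$. Write $L_k\big(\mstack{\xi,&\eta}{\xi,&\eta}\big)=\braket{\phi_\eta|\mcal L_k[\ket{\phi_\xi}\bra{\phi_\xi}]|\phi_\eta}=\braket{\phi_\eta|\rho|\phi_\eta}+\braket{\phi_\eta|E_k[\ket{\phi_\xi}\bra{\phi_\xi}]|\phi_\eta}$ with $\rho=\Evol_\tau^{\,k}[\ket{\phi_\xi}\bra{\phi_\xi}]\ge0$, $\Tr\rho=1+o(1)$. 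Using the resolution of identity $\int \ket{\phi_\eta}\bra{\phi_\eta}\,d\eta=c_d\Id$ gives $\int_\eta\braket{\phi_\eta|\rho|\phi_\eta}\,d\eta=c_d\Tr\rho=O(1)$; and since $E_k[\ket{\phi_\xi}\bra{\phi_\xi}]=\int \ell_E(\cdots)\ket{\zeta^-}\bra{\zeta^+}\,d(\cdots)$ with $\int|\ell_E|\,d(\cdots)=o(1)$ by the straddle estimate of Step 1, $\int_\eta|\braket{\phi_\eta|E_k[\cdots]|\phi_\eta}|\,d\eta\le c_d\int|\ell_E|=o(1)$. The Cauchy–Schwarz inequality $|\braket{\phi_\eta|\rho|\phi_{\eta'}}|\le\braket{\phi_\eta|\rho|\phi_\eta}^{1/2}\braket{\phi_{\eta'}|\rho|\phi_{\eta'}}^{1/2}$ extends this to the off‑diagonal kernel. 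The whole point of Step 1 is to make this step legitimate: a direct path‑integral estimate would charge each rung a factor $\lesssim\eps^2\tau\,|\log\eps|^{C}=\eps^{\kappa/2}|\log\eps|^{C}$ but also a combinatorial factor $\sim(CNk_{max})^{C}=\eps^{-C\kappa}$ per collision‑pair, whose product is a \emph{negative} power of $\eps$, so the sum over the number of collisions diverges catastrophically; this is exactly the divergence the contraction of the Boltzmann operator defeats, and it is invisible unless the ladder contribution is absorbed into $\Evol_\tau^{\,k}$.

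\textbf{Step 3: removing the supremum and concluding.} Finally I would expand $L_k\big(\mstack{\xi,&\eta}{\xi',&\eta'}\big)$ as a path integral over pairs of $k$‑segment extended paths, apply the triangle inequality, and restrict (at negligible cost, as in Lemma~\ref{lem:complete-collisions}) to $N^4$‑complete ladder paths. By Lemma~\ref{lem:twinning-sync}, a ladder spanning all $k$ segments forces $\xi^+_0$ and $\xi^+_{2k-1}$ to be $(N^{10},s)$‑twinned to $\xi^-_0$ and $\xi^-_{2k-1}$ for a common $|s|\le\tau$ — and the degenerate cases (ladder confined to one segment, or no rungs at all) are handled directly, since free evolution and immediate recollisions displace the phase‑space point by only $O(r)$ in position and $O(r^{-1})$ in momentum. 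Because the primed endpoints enter the integrand only through $|\braket{\xi'|\xi^+_0}|$ and $|\braket{\xi^+_{2k-1}|\eta'}|$, the supremum over $(\xi',\eta')\in T_{N^6,\tau}(\xi,\eta)$ is realized up to a super‑polynomially small error by $\xi'\approx\xi^+_0$, $\eta'\approx\xi^+_{2k-1}$, at the cost of letting $\eta'$ (and $\xi'$) range over a $d_r$‑ball of radius $O(N^{10})$ — of phase‑space volume $O(N^{20d})$. Hence $M_k(\xi,\eta)\lesssim N^{20d}\,\big|L_k\big(\mstack{\xi,&\eta}{\xi,&\eta}\big)\big|$ up to negligible terms, and integrating in $\eta$ and invoking Step 2 yields $\sup_\xi\int_\eta M_k(\xi,\eta)\,d\eta\le CN^{20d}$.

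\textbf{Main obstacle.} The hard part is Step 1 together with the sentence in Step 2 it enables: proving that $\mcal L_k$ genuinely differs from $\mcal L_\tau^{\,k}$ by only $o(1)$ in operator norm requires the geometric ``a straddling rung costs $\eps^c$'' estimate, which leans on the ladder/incidence machinery of Section~\ref{sec:path-combo}, and proving approximate trace preservation on the predual requires the analogous bookkeeping there. Once that is in hand, the boundedness and near‑trace‑preservation of $\Evol_\tau^{\,k}$ do all the real work, and the twinning reduction plus the elementary volume count producing $N^{20d}$ are routine. (The exponent $20d$ is far from sharp — the true order is $O_d(1)$ up to logarithmic factors — but only $N^{O(d)}$ is needed downstream.)
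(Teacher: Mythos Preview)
Your Steps 1 and 2 are in the right spirit and track the paper's Lemma~\ref{lem:ladder-semigroup} and Corollary~\ref{cor:L1-ladder-bd}, though the paper's semigroup comparison is simpler than yours: for $N^4$-complete paths with good support, the rung time-shift bound $|t_a-t_b|\le N^{10}\eps^{-0.2}r<\sigma$ (from Lemma~\ref{lem:rung-lemma}) forces every rung to lie within a single segment outright, so straddling rungs simply do not exist and the error is $O(\eps^{100})$, not merely $o(1)$.

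The genuine gap is Step 3. The twinning set $T_{N^6,\tau}(\xi,\eta)$ is \emph{not} contained in a $d_r$-ball of radius $O(N^{10})$: the time-shift parameter $s$ ranges over $|s|\le\tau$, so $(\xi',\eta')$ sweeps out a tube of $d_r$-length $\sim r^{-1}\tau|\xi_p|$, whose phase-space volume is of order $r$, not $N^{O(d)}$. Concretely, for a path with $\xi_0^-\approx U_s(\xi)$ and $s\sim\tau$, the overlap $|\braket{\xi|\xi_0^-}|$ is super-polynomially small while $\sup_{\xi'\in T}|\braket{\xi'|\xi_0^-}|=1$, so no polynomial volume factor recovers the diagonal value from the supremum; the pointwise bound $M_k(\xi,\eta)\lesssim N^{20d}|L_k(\xi,\eta;\xi,\eta)|$ does not follow. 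The paper handles the $s$-supremum by a different mechanism: after Cauchy--Schwarz it passes to the time-shifted ladder $\ovln L_{k,s}$ via $\mcal U_s\ovln{\mcal L}^k\mcal U_{-s}$ (Lemma~\ref{lem:time-shift-lem}), applies the fundamental theorem of calculus $|\ovln L_{k,s}|\le|\ovln L_k|+\int_{-\tau}^\tau|\partial_s\ovln L_{k,s}|\,ds$, and proves the derivative bound $\|\partial_s\ovln{\mcal L}_s[A]\|_{op}\le C\eps^2|\log\eps|^C\|A\|_{op}$ (Lemma~\ref{lem:dLs-bd}) by noting that $\partial_s$ hits only the boundary cutoffs $\rho(s_0\pm s),\rho(s_k\pm s)$ and thereby gains a factor $\tau^{-1}$ cancelling the $s$-integral. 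Your proposal has no mechanism to absorb the $\tau$-sized time shift; this derivative estimate is precisely that mechanism, and it is the heart of the proof.
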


We will prove Lemma~\ref{lem:maximal-ladder-bd} by understanding the mapping properties
of the superoperator $\mcal{L}_k$ itself.

\subsection{The semigroup property for $\mcal{L}_k$}
The first step to understandin the superoperator $\mcal{L}_k$ is to compare it to the iterated
ladder superoperator $\mcal{L}^k$ defined by the recursion
$\mcal{L}^{k+1}[A] = \mcal{L}[\mcal{L}^k[A]]$ and $\mcal{L}^1=\mcal{L}$.
The following lemma shows that these superoperators are very close when applied to
operators with good support.
\begin{lemma}
\label{lem:ladder-semigroup}
If $A$ is an operator with good support and $k\leq N$ then
\[
\|\mcal{L}_k[A] -  \mcal{L}^k[A] \|_{op} \leq  C \eps^{100}\|A\|_{op}.
\]
\end{lemma}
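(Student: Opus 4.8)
The plan is to exploit the product structure of $O_{lad,k}$ recorded in Sections~\ref{sec:colored-ops} and~\ref{sec:coloring-machine}, together with the already-proven fact (Proposition~\ref{prp:short-ladder-compare}, via Lemma~\ref{lem:is-ladder}) that a single segment $\mcal{L}$ equals $\Evol_\tau$ up to $O(\eps^{2.1}\tau\|A\|_{op})$. The key observation is that concatenation of renormalized ladder colorings is again a renormalized ladder coloring: if $\Psi^+_{rl}(K(\Gamma))$ is built from $k$ segments, then a coloring $\psi$ that assigns rung-indices globally across all $k$ segments is equivalent to a choice, for each segment, of a sub-ladder, glued at the shared phase-space checkpoints $\xi_{2\ell-1}\sim\xi_{2\ell}$. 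Thus $O_{lad,k}$ factors, through the multiplication map $\Mult$, as a composition of $k$ copies of $O_{lad}$ with an intermediate integration over the checkpoint variables. Tracking this through the definition~\eqref{eq:kladder-def} of $\mcal{L}_k$ and the squared multiplication map $\Mult^\pm$, one gets an \emph{exact} identity of the form
\[
\mcal{L}_k[A] = \mcal{L}\big[\mcal{L}_{k-1}[A]\big] + \mcal{E}_k[A],
\]
where the error superoperator $\mcal{E}_k$ collects exactly those pairs of colored paths $(\Gamma^+,\Gamma^-)$ whose overall ladder structure is \emph{not} the concatenation of a per-segment ladder structure --- i.e.\ pairs where a rung of $\Gamma^+$ in segment $\ell$ is matched to a rung of $\Gamma^-$ in a different segment $\ell'\ne\ell$, in a way not mediated by a matching checkpoint.

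First I would make this decomposition precise: write $O_{lad,k} = \Mult(\iota_1(O_{lad}^{(1)})\cdots\iota_k(O_{lad}^{(k)}))$ using Lemma~\ref{lem:operator-product} with $Q$ the partition of $[k]$ into singletons, observing that $\Psi^+_{rl}(K(\Gamma))$ restricted to each segment splits (up to relabeling rung indices) as $\bigcup_h \Psi^{+,h}_{rl}$. The subtlety is that the global rung numbering couples the segments: a coloring of the whole path with $H$ total rungs corresponds to a choice of how the $H$ rungs distribute among segments. But the phase-space checkpoint $\braket{\xi_{2\ell-1}|\xi_{2\ell}}$ between segments $\ell$ and $\ell+1$ is \emph{not} constrained to be a rung — it is a free phase-space point — so the natural thing is that the last rung of segment $\ell$ and first rung of segment $\ell+1$ are genuinely distinct rungs, matched across the checkpoint only through the near-identity overlap. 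This is what makes $\mcal{L}_k$ differ from $\mcal{L}^k$: in $\mcal{L}^k$, the intermediate operator $\mcal{L}^{k-1}[A]$ has been replaced by its operator-norm value inside the next $\mcal{L}$, which discards precisely the cross-segment rung correlations. So $\mcal{E}_k$ is an integral over pairs of paths for which the collision partition $P(\bm{\Gamma})$, while still a generalized ladder globally, fails to respect the segment decomposition — equivalently, some segment boundary is ``straddled'' by a rung pair.

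Next I would estimate $\|\mcal{E}_k[A]\|_{op}$. A straddling rung pair forces a geometric coincidence across a segment boundary of the type controlled in Lemma~\ref{lem:rung-lemma} and Lemma~\ref{lem:twinning-sync} --- the two phase-space points on either side of the boundary must be $(N^{10},s)$-twinned for a common shift $s$ --- but, crucially, the free-evolution-guaranteed gap of length at least $\sigma = \eps^{-1.5}$ between segments means such a straddle would require a recollision time exceeding $N|p|^{-1}r \ll \sigma$, which is impossible by Lemma~\ref{lem:short-rec-time}. Therefore the only surviving contributions to $\mcal{E}_k$ come from degenerate configurations (empty ladders, immediate recollisions straddling the boundary, which are ruled out by the ``$\ell(a)=\ell(a+1)$'' condition in the definition of valid colorings, or $\beta$-incomplete collision sets handled by Lemma~\ref{lem:complete-extended-path}), and these are bounded by $\eps^{100}\|A\|_{op}$ via the Schur test exactly as in Lemma~\ref{lem:complete-collisions} and Lemma~\ref{lem:complete-extended-path}. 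Iterating the recursion $k\le N$ times and summing the errors (using $\Evol$-type contractivity, or rather just that $\mcal{L}$ is bounded on $\mcal{B}(L^2)$ with norm $O(1)$, which follows from the same volume estimates) gives $\|\mcal{L}_k[A]-\mcal{L}^k[A]\|_{op}\le k\cdot C\eps^{100}\|A\|_{op}\le C\eps^{100}\|A\|_{op}$ after adjusting constants, since $N\le\eps^{-\kappa}$ is a negligible polynomial factor against $\eps^{100}$. One must also check that $\mcal{L}_{k-1}[A]$ retains ``good support'' so the recursion can continue --- this follows from the support-propagation statement bundled into Proposition~\ref{prp:subkinetic-semigp}, or can be checked directly from the wavepacket localization.

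\textbf{Main obstacle.} The hard part will be making the exact algebraic identity $\mcal{L}_k[A]=\mcal{L}[\mcal{L}_{k-1}[A]]+\mcal{E}_k[A]$ genuinely precise at the level of the colored-operator formalism --- i.e.\ bookkeeping the rung-index relabeling so that the sum over $\Psi^+_{rl}(K(\Gamma))$ factors cleanly as a sum over per-segment sub-ladder choices, and verifying that the ``straddling'' complementary piece is exactly what has been discarded. The geometric estimate bounding $\mcal{E}_k$ is then essentially a rerun of arguments already in the paper (Lemmas~\ref{lem:short-rec-time},~\ref{lem:complete-extended-path}, and the Schur test), so it is the combinatorial identification of $\mcal{E}_k$, not its estimation, that requires care.
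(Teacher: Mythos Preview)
Your approach is correct in spirit and would work, but it is considerably more elaborate than what the paper does, and it contains one mis-citation that obscures the actual mechanism.

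The paper's proof is direct rather than iterative. It does not set up a recursion $\mcal{L}_k[A]=\mcal{L}[\mcal{L}_{k-1}[A]]+\mcal{E}_k[A]$ or invoke the colored-operator product machinery at all. Instead, it simply restricts (via Lemma~\ref{lem:complete-extended-path}) to $N^4$-complete paths at a cost of $\eps^{100}$, observes that the good support of $A$ forces $d_r(\xi_0^+,\xi_0^-)\le N^4$ and $|(\xi_0^\pm)_p|\ge \eps^{0.2}$, and then shows that on this set the integrand of $\mcal{L}_k[A]-\mcal{L}^k[A]$ is \emph{identically zero}. The difference between the two superoperators is exactly the sum over ladder partitions whose rungs $(a,b)$ satisfy $\ell(a)\ne\ell(b)$; the paper rules these out in one step by iterating~\eqref{eq:time-shift-bd} from the initial synchronization to conclude $|t_a-t_b|\le N^{10}\eps^{-0.2}r\ll\sigma$ for \emph{every} rung, whence the $\sigma$-gap at each segment boundary forces $\ell(a)=\ell(b)$. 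No recursion, no support-propagation check for intermediate $\mcal{L}_{k-1}[A]$, no bookkeeping of rung-index relabeling is needed.

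Your one genuine slip: you invoke Lemma~\ref{lem:short-rec-time} to bound the ``recollision time'' of a straddling rung. But a rung is not a recollision --- it is a pair $(a,b)$ with $a\in K(\Gamma^+)$ and $b\in K(\Gamma^-)$, whereas Lemma~\ref{lem:short-rec-time} concerns consecutive collisions within a single path. The correct input is~\eqref{eq:time-shift-bd} from Lemma~\ref{lem:rung-lemma}, which you do cite earlier: it says time \emph{increments} along consecutive rungs agree to $O(N^6|p|^{-1}r)$, and by induction from the synchronized start this gives $|t_a-t_b|$ small for all rungs. Once you use the right lemma, your estimate of $\mcal{E}_k$ goes through, but at that point you will notice that the integrand is actually zero on the restricted set and the whole recursion collapses to the paper's one-step argument. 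The obstacle you flag --- the algebraic factorization of the coloring sum --- is not actually needed.
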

\begin{proof}
By Lemma~\eqref{lem:complete-extended-path} we can restrict the domain
of integration in~\eqref{eq:kladder-def} to paths which are $N^4$-complete,
up to a loss of $\eps^{100}$ in the operator norm.
Applying the same reasonin to the iterated ladder $\mcal{L}^k[A]$, we see that it suffices
to bound
\[
\int_{\Omega_{\alpha,\sigma}^k\times\Omega_{\alpha,\sigma}^k}
\Xi(\bm{\Gamma})
\One(\bm{\Gamma} \textrm{ is }N^4-\text{complete})
[ \Expec_{lad} O_{\Gamma^-}^* A O_{\Gamma^+}
- \Expec_{local-lad} O_{\Gamma^-}^* A O_{\Gamma^+}]\diff\bm{\Gamma},
\]
where $\Expec_{local-lad}$ only includes ladder partitions with rungs $(a,b)$
satisfing $\ell(a)=\ell(b)$.  These ladder partitions are precisely what are involved
in the iterated operator $\mcal{L}^k$.
By the good support of $A$, we can further restrict the integral to paths satisfying
$d_r(\xi_0^+,\xi_0^-)\leq N^4$ as well as $|(\xi_0^\pm)_p|\geq \eps^{0.2}$
(up to an error on the order $\eps^{100}\|A\|$).  For such paths
we can use~\eqref{eq:time-shift-bd} to see that
$|t_a-t_b| \leq N^{10}\eps^{-0.2} r$ for all ladder rungs $(a,b)$.
Since $N^{10}\eps^{-0.2}r\leq \sigma$ and no collisions occur within $\sigma$
of a segment boundary $jN\tau$, it follows that $\ell(a)=\ell(b)$
for each such ladder rung, and the integrand above is identically zero.
\end{proof}

As a consequence of Lemma~\ref{lem:ladder-semigroup} and Proposition~\ref{prp:short-ladder-compare}, we
can conclude that for operators $A$ with good support and $k\leq c\eps^{-0.1}$,
\begin{equation}
\label{eq:lk-to-evolk}
\| \mcal{L}_k[A] - \Evol_{\tau}^k[A]\|_{op} \leq C \eps^{2.1}\tau k\|A\|_{op}.
\end{equation}

\subsection{Trace norm bounds of the ladder superoperator}
\label{sec:operator-duality}
The quantity we would like to control involves an integral of the maximal ladder function.  Thus
far we have only proven bounds in the operator norm, which is useful for controlling
pointwise values of the ladder function.  To control such integrals we will need bounds for
the trace-norm of $\mcal{L}_k[\rho]$.

Recall that the trace-norm of an operator $\rho$ is defined by duality against the operator norm,
\[
\|\rho\|_{tr} := \sup_{\|A\|_{op}=1} \trace[A^*\rho].
\]
Operators for which this norm is finite are called trace-class operators.
Using this trace duality pairing,
the channel $\Evol_s$ is dual to the channel $\Evol_{-s}$:
\[
\trace[(\Evol_s[A])^*\rho] = \trace[A^*\Evol_{-s}[\rho]].
\]
We write $\mcal{L}_k^*$ for the dual of the ladder superoperator $\mcal{L}_k$.
This superoperator is obtained from the definition of $\mcal{L}_k$ by simply changing $i$ to $-i$,
so it satisfies the same bounds as $\mcal{L}_k$.

As a corollary we obtain the trace-norm bounds for the superoperator $\mcal{L}_k$.
\begin{lemma}
\label{lem:trace-growth-bd}
Let $\Pi_\delta$ be the multiplier operator
\[
\Ft{\Pi_\delta f}(p) = \One(|p|>\delta) \Ft{f}(p),
\]
Then for any bounded operator $A$, and $\delta > \eps^{0.1}$,
\[
\|\Pi_\delta\mcal{L}_k[A]\Pi_\delta\|_{op} \leq (1 + \eps^{2.1}\tau)^k \|A\|_{op}.
\]
By duality, if $\rho$ is a trace-class operator satisfying $\rho = \Pi_\delta\rho\Pi_\delta$,
\[
\|\mcal{L}_k[\rho]\|_{tr} \leq (1 + \eps^{2.1}\tau)^k \|\rho\|_{tr},
\]
\end{lemma}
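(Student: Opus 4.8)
The plan is to establish the operator-norm bound first and then deduce the trace-norm bound by the trace duality of Section~\ref{sec:operator-duality}. For the duality step: the trace-adjoint of $\mcal{L}_k[\cdot]=\Expec O_{lad,k}^*(\cdot)O_{lad,k}$ is $\mcal{L}_k^*[\cdot]=\Expec O_{lad,k}(\cdot)O_{lad,k}^*$, the ``$i\mapsto-i$'' version, which the argument below bounds in exactly the same way. If $\rho=\Pi_\delta\rho\Pi_\delta$ is trace-class, then for every $A$ with $\|A\|_{op}\le1$ cyclicity of the trace gives $\trace[A^*\mcal{L}_k[\rho]]=\trace\big[(\Pi_\delta\mcal{L}_k^*[A]\Pi_\delta)^*\rho\big]$, so $|\trace[A^*\mcal{L}_k[\rho]]|\le\|\Pi_\delta\mcal{L}_k^*[A]\Pi_\delta\|_{op}\,\|\rho\|_{tr}\le(1+\eps^{2.1}\tau)^k\|\rho\|_{tr}$; taking the supremum over $A$ yields the trace-norm bound.

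\textbf{Reducing the operator-norm bound to the identity.} Setting $X:=O_{lad,k}\Pi_\delta$ we have $\Pi_\delta\mcal{L}_k[A]\Pi_\delta=\Expec X^*AX$, so the operator Cauchy--Schwarz inequality gives
\[
\|\Pi_\delta\mcal{L}_k[A]\Pi_\delta\|_{op}\le\|A\|_{op}\,\|\Expec X^*X\|_{op}=\|A\|_{op}\,\|\Pi_\delta\mcal{L}_k[\Id]\Pi_\delta\|_{op},
\]
and it suffices to treat $A=\Id$. Next I would replace this $\Id$ by a genuine good operator: fix a good operator $P$ which is a smoothed momentum cutoff, with $\|P\|_{op}\le1$, acting as the identity (up to super-exponentially small errors) on any wavepacket whose momentum has magnitude at least $\delta/4$, and supported at momenta bounded well below $\delta$ but well above $\eps^{\theta(d)}$ (so that it is a good operator by the Weyl/wavepacket comparison of Appendix~\ref{sec:wp-quantization}); since $\delta>\eps^{0.1}$ this is possible. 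I claim $\|\Pi_\delta\mcal{L}_k[\Id]\Pi_\delta-\Pi_\delta\mcal{L}_k[P]\Pi_\delta\|_{op}=O(\eps^{100})$. Indeed $\mcal{L}_k[\Id-P]$ is an integral over extended ladder paths in which $\Id-P$ is tested against one pair of phase-space endpoints while the outer $\Pi_\delta$'s restrict the other pair to momenta $>\delta$; propagating the kinetic-energy constraints carried by the cutoff $\Xi=\prod_\ell\chi_{\alpha,\sigma}$ across the $k\le N$ segments via Lemma~\ref{lem:path-concatenation} and Lemma~\ref{lem:ke-drift} forces the endpoints tested by $\Id-P$ to also have momenta $>\delta/2$ on the support of the integrand (the energy drift over $N$ segments being $O(\alpha N\eps)\ll\delta=\eps^{0.1}$), so that $\Id-P$ applied to those wavepackets is super-exponentially small; since $r\delta\ge\eps^{-0.9}$ this beats the $\eps^{-CN}$ volume of integration once $\kappa$ is small enough, and a Schur test gives the claim.

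\textbf{Iterating the one-step bound.} Finally I would bound $\|\Pi_\delta\mcal{L}_k[P]\Pi_\delta\|_{op}\le\|\mcal{L}_k[P]\|_{op}$ by iteration. Since $P$ is a good operator and $k\le N$, Lemma~\ref{lem:ladder-semigroup} gives $\|\mcal{L}_k[P]-\mcal{L}^k[P]\|_{op}\le C\eps^{100}$, where $\mcal{L}^k$ is the $k$-fold composition of the single-segment ladder superoperator $\mcal{L}$. For a good operator $B$, Proposition~\ref{prp:short-ladder-compare} together with the fact that $\Evol_\tau$ is a contraction in the operator norm gives
\[
\|\mcal{L}[B]\|_{op}\le\|\Evol_\tau[B]\|_{op}+\|\mcal{L}[B]-\Evol_\tau[B]\|_{op}\le(1+\eps^{2.1}\tau)\|B\|_{op},
\]
the multiplicative constant of Proposition~\ref{prp:short-ladder-compare} being absorbed into the slack in its exponent (its proof, through Lemma~\ref{lem:bad-path-bd}, in fact yields a bound of order $\eps^{2.2}\tau\|B\|_{op}$); moreover, by Proposition~\ref{prp:subkinetic-semigp}, $\mcal{L}[B]$ is again a good operator, its support parameters degrading by only $|\log\eps|^{10}$ and $10^3\eps$ at each step. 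Iterating $k\le N$ times — the accumulated degradation, $O(N|\log\eps|^{10})$ in the band-limit and $O(N\eps)$ in the momentum threshold, being negligible against the margin afforded by $\delta=\eps^{0.1}$ — yields $\|\mcal{L}^k[P]\|_{op}\le(1+\eps^{2.1}\tau)^k$, and the two $O(\eps^{100})$ errors are absorbed using the slack. Combined with the Cauchy--Schwarz reduction, this completes the proof.

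\textbf{Main obstacle.} I expect the principal difficulty to be the second step above: making rigorous the transfer of the outer momentum cutoffs through $\mcal{L}_k$ onto the inner slot, which rests on the approximate conservation of kinetic energy along a ladder of up to $N$ segments and requires that the accumulated energy drift and the wavepacket-localization errors stay comfortably below the fixed gap $\delta=\eps^{0.1}$. A secondary, bookkeeping obstacle is verifying in the iteration that the good-operator parameters survive $k\le N$ applications of $\mcal{L}$ with room to spare.
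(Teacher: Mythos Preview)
Your proposal is correct and follows essentially the same approach as the paper: reduce to a good projector via operator Cauchy--Schwarz plus kinetic-energy conservation, then replace $\mcal{L}_k$ by $\mcal{L}^k$ using Lemma~\ref{lem:ladder-semigroup} and iterate the one-step bound from Proposition~\ref{prp:short-ladder-compare}, with the trace bound following by duality. The only cosmetic difference is ordering: the paper first passes $\Pi_\delta$ through $O_{lad,k}$ (obtaining $O_{lad,k}\Pi_\delta=\Pi_{\delta'}O_{lad,k}\Pi_\delta$ up to negligible error) and \emph{then} applies Cauchy--Schwarz to land directly on $\mcal{L}_k[\Pi_{\delta'}]$, whereas you apply Cauchy--Schwarz first to reach $\Pi_\delta\mcal{L}_k[\Id]\Pi_\delta$ and then invoke the same energy-conservation argument to swap $\Id$ for a smoothed good projector $P$; both routes use the same ingredients and your tracking of the good-operator parameters through the iteration is, if anything, more careful than the paper's sketch.
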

\begin{proof}
We first expand
\[
\Pi_\delta\mcal{L}_k[A]\Pi_\delta = \Expec \Pi_\delta(O_{lad,k})^* A (O_{lad,k})\Pi_\delta.
\]
By the conservation of kinetic energy condition on the paths in $O_{lad,k}$ (coming from
the localization $\Xi(\Gamma)$ in the integrand of~\eqref{eq:Oladk-def}),
we have (up to negligible error)
\[
O_k^{ladder} \Pi_{\delta}
= \Pi_{\delta'} O_k^{ladder} \Pi_\delta
\]
with $\delta' = \delta - \eps^{0.8}$ (say).
Then by the operator Cauchy-Schwarz inequality we have
\[
\|\Pi_\delta\mcal{L}_k[A]\Pi_\delta\|_{op}
\leq \|A\|_{op} \|\mcal{L}_k[\Pi_{\delta'}]\|_{op}.
\]
The operator $\Pi_{\delta'}$ is an operator with good support, as it can be written
\[
\Pi_\delta = \int \One(|p|\geq \eps^{0.2}) \ket{\eta}\bra{\xi} \braket{\eta|p}\braket{p|\xi}\diff p\diff\xi\diff\eta.
\]
Therefore we can apply Lemma~\ref{lem:ladder-semigroup} and Proposition~\ref{prp:short-ladder-compare} to estimate
\begin{align*}
\|\mcal{L}_k[\Pi_\delta]\|_{op}
&\leq \|\mcal{L}^k[\Pi_\delta]\|_{op} + \eps^{100} \\
&\leq (1 + \eps^{2.1}\tau)^k.
\end{align*}

The dual statement follows from the calculation
\[
\|\mcal{L}_k[\rho]\|_{tr}
= \sup_{\|A\|_{op}=1}
\trace[A^* \mcal{L}_k[\rho]] = \trace[(\Pi_\delta \mcal{L}_{k}^*[A^*]\Pi_\delta)\rho].
\]
\end{proof}

\begin{corollary}[$L^1$-type bound on ladder functions]
\label{cor:L1-ladder-bd}
Let $L_k$ be the ladder function defined in~\eqref{eq:ladder-fx} and let $\rho\leq N^8$.  Suppose
moreover that $|\xi_p|\geq 2\eps^{0.1}$.  Then
\[
\int
|L_k\big( \mstack{\xi, &\eta}{\xi, &\eta} \big)|
\diff \eta
\leq C (1+\eps^{2.1}\tau)^k.
\]
\end{corollary}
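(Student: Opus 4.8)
The plan is to deduce Corollary~\ref{cor:L1-ladder-bd} from the trace-norm bound in Lemma~\ref{lem:trace-growth-bd} by a standard duality argument, exploiting the fact that the diagonal value $L_k\big(\mstack{\xi,&\eta}{\xi,&\eta}\big)$ is exactly the diagonal of the operator $\mcal{L}_k[\ket{\xi}\bra{\xi}]$ in the wavepacket basis. First I would fix $\xi$ with $|\xi_p|\geq 2\eps^{0.1}$ and set $\rho := \ket{\xi}\bra{\xi}$, which is (formally) a rank-one positive operator with $\|\rho\|_{tr}=1$; crucially, because $|\xi_p|\geq 2\eps^{0.1} > \eps^{0.1}+\eps^{0.8}$, we have $\rho = \Pi_\delta\rho\Pi_\delta$ up to negligible error for $\delta = \eps^{0.1}$ (since $\ket{\xi}$ is localized in momentum to scale $r^{-1}=\eps$ around $\xi_p$, far from the cutoff). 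Hence Lemma~\ref{lem:trace-growth-bd} applies and gives $\|\mcal{L}_k[\rho]\|_{tr}\leq (1+\eps^{2.1}\tau)^k$.

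The second step is to convert the trace-norm bound on $\mcal{L}_k[\rho]$ into the $L^1$ bound on its wavepacket-diagonal. The ladder function satisfies $L_k\big(\mstack{\xi,&\eta}{\xi,&\eta}\big) = \braket{\eta|\mcal{L}_k[\ket{\xi}\bra{\xi}]|\eta}$ by the definition~\eqref{eq:ladder-fx}. Writing $\sigma := \mcal{L}_k[\rho]$, I want to bound $\int |\braket{\eta|\sigma|\eta}|\diff\eta$ by a constant multiple of $\|\sigma\|_{tr}$. This is a general fact about the wavepacket quantization: the map $\sigma\mapsto (\eta\mapsto \braket{\eta|\sigma|\eta})$ sends trace-class operators to $L^1(\PhaseSpace)$ with norm bounded by an absolute constant, because $\int \ket{\eta}\bra{\eta}\diff\eta$ is a bounded multiple of the identity (the wavepacket resolution of identity, established in Appendix~\ref{sec:wp-quantization}), so that for $\sigma = \sum_j \lambda_j \ket{\psi_j}\bra{\psi_j}$ a spectral decomposition one has $\int |\braket{\eta|\sigma|\eta}|\diff\eta \leq \sum_j |\lambda_j| \int |\braket{\eta|\psi_j}|^2 \diff\eta = \sum_j |\lambda_j| \|\psi_j\|_{L^2}^2 \lesssim \sum_j |\lambda_j| = \|\sigma\|_{tr}$. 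Combining this with the trace bound from the first step yields
\[
\int |L_k\big(\mstack{\xi,&\eta}{\xi,&\eta}\big)|\diff\eta \leq C\|\mcal{L}_k[\rho]\|_{tr} \leq C(1+\eps^{2.1}\tau)^k,
\]
which is the claim.

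I expect the main obstacle to be a bookkeeping one rather than a conceptual one: namely, making precise the claim that $\ket{\xi}\bra{\xi}$ is (close enough to) an operator of the form $\Pi_\delta\rho\Pi_\delta$ so that Lemma~\ref{lem:trace-growth-bd} is legitimately applicable, and controlling the resulting error terms (the tails of $\ket{\xi}$ in momentum below $\delta$). Since $\ket{\xi}$ has Gaussian-type momentum localization at scale $r^{-1}$ and $|\xi_p|\geq 2\eps^{0.1}$ is much larger than $\delta=\eps^{0.1}$ plus the $\eps^{0.8}$ slack appearing in the proof of Lemma~\ref{lem:trace-growth-bd}, the error is of the form $\exp(-c(\eps^{0.1}/\eps)^{0.999}) = \exp(-c\eps^{-0.899})$, which is $O(\eps^{100})$ and harmless, but it needs to be tracked. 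A secondary point of care is the restriction $\rho\leq N^8$ in the statement (the hypothesis $k\leq N$ is what actually gets used, via $k\leq N \leq \eps^{-\kappa}$), ensuring the bound $(1+\eps^{2.1}\tau)^k \leq (1+\eps^{2.1}\tau)^N$ stays bounded; since $\eps^{2.1}\tau = \eps^{2.1}\eps^{-2+\kappa/2} = \eps^{0.1+\kappa/2}$ and $N = \eps^{-\kappa}$, the exponent $N\eps^{0.1+\kappa/2} \to 0$, so the constant is indeed $O(1)$ and can be absorbed into $C$.
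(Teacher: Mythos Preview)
Your proposal is correct and follows essentially the same approach as the paper: both arguments apply Lemma~\ref{lem:trace-growth-bd} to $\rho=\ket{\xi}\bra{\xi}$ (after checking the $\Pi_\delta$ hypothesis) and then convert the trace-norm bound on $\sigma=\mcal{L}_k[\rho]$ into an $L^1$ bound on the wavepacket diagonal $\eta\mapsto\braket{\eta|\sigma|\eta}$. The only cosmetic difference is that the paper realizes this last step by constructing an explicit test operator $B=\int e^{-i\theta(\eta)}\ket{\eta}\bra{\eta}\diff\eta$ with $\|B\|_{op}\leq C$ and $\trace[B^*\sigma]=\int|\braket{\eta|\sigma|\eta}|\diff\eta$, whereas you use the spectral decomposition of $\sigma$ directly (which is legitimate since $\sigma$ is self-adjoint, indeed positive, as $\mcal{L}_k$ preserves positivity).
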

\begin{proof}
Fix $\xi,\xi'\in\PhaseSpace$, and define the phase $\theta(\eta,\eta')$ so that
\[
e^{i\theta(\eta)}
L_k\big( \mstack{\xi, &\eta}{\xi, &\eta} \big)
= |L_k\big( \mstack{\xi, &\eta}{\xi, &\eta} \big)|.
\]
Let $B$ be the operator
\[
B = \int
e^{-i\theta(\eta)} \ket{\eta}\bra{\eta}\diff\eta
\]
The operator $B$ satisfies
\[
\trace[B^* \mcal{L}_k[\ket{\xi}\bra{\xi}]]
= \int |L_k\big( \mstack{\xi, &\eta}{\xi, &\eta} \big)|\diff\eta.
\]
Moreover, the operator $\ket{\xi}\bra{\xi'}$ satisfies
$\Pi_\delta \ket{\xi}\bra{\xi'}\Pi_{\delta'} = \ket{\xi}\bra{\xi'}$ up to
negligible error.  Therefore applying Lemma~\ref{lem:trace-growth-bd} for
the trace norm of $\mcal{L}_k[\ket{\xi}\bra{\xi'}]$ and
the Schur test to bound $\|B\|_{op}\leq C$, we have
\[
\trace[B^* \mcal{L}_k[\ket{\xi}\bra{\xi'}]]
\leq \|B\|_{op} \|\mcal{L}_k[\ket{\xi}\bra{\xi'}]\|_{tr}
\leq C (1+\eps^{2.1}\tau)^k,
\]
as desired.
\end{proof}

\subsection{The maximal ladder function}
Now we approach the maximal ladder function $M_k$ directly.  Recall
that the maximal ladder function is defined by
\[
M_k (\xi,\eta)
:=
\sup_{\xi',\eta'\in T_{N^6,\tau}(\xi,\eta)}
|L_k\big( \mstack{\xi, &\eta}{\xi', &\eta'} \big)|.
\]
We can simplify our task by applyin the Cauchy-Schwarz inequality
to the definition of $L_k$ to obtain
\begin{equation}
\begin{split}
|L_k\big( \mstack{\xi, &\eta}{\xi', &\eta'} \big)|
&\leq |L_k\big( \mstack{\xi, &\eta}{\xi, &\eta} \big)|^{1/2}
|L_k\big( \mstack{\xi', &\eta'}{\xi', &\eta'} \big)|^{1/2} \\
&\leq \max\{|L_k\big( \mstack{\xi, &\eta}{\xi, &\eta} \big)|,
|L_k\big( \mstack{\xi', &\eta'}{\xi', &\eta'} \big)|\}.
\end{split}
\end{equation}
Applying this into the definition of the maximal ladder function $M_k$ we obtain
\[
M_k(\xi,\eta)
\leq \sup_{\xi',\eta'\in T_{N^6,\tau}(\xi,\eta)} |L_k\big( \mstack{\xi', &\eta'}{\xi', &\eta'} \big)|.
\]
One tool we have to deal with the supremum in the maximal ladder function is the following
self-bounding property of the ladder functions.
\begin{lemma}
\label{lem:self-bounding}
\begin{equation}
|L_k\big( \mstack{\xi, &\eta}{\xi', &\eta'} \big)|
\leq
\int
|\Braket{\xi''|\xi'}|
|\Braket{\eta''|\eta'}|
|L_k\big( \mstack{\xi, &\eta}{\xi'', &\eta''} \big)|\diff\xi''\diff\eta''
\end{equation}
\end{lemma}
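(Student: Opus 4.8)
The plan is to unfold the definitions and reduce the inequality to a single insertion of a resolution of the identity. Recall that $L_k\big(\mstack{\xi,&\eta}{\xi',&\eta'}\big)=\Braket{\eta|\mcal{L}_k[\ket{\xi}\bra{\xi'}]|\eta'}$ and, by the definition~\eqref{eq:kladder-def} of the extended ladder superoperator, $\mcal{L}_k[A]=\Expec\,O_{lad,k}^*\,A\,O_{lad,k}$. Combining these, the ladder function is the bilinear pairing
\[
L_k\big(\mstack{\xi,&\eta}{\xi',&\eta'}\big)=\Expec\,\overline{\Braket{\xi|O_{lad,k}|\eta}}\;\Braket{\xi'|O_{lad,k}|\eta'},
\]
and the self-bounding property will follow once we decompose the fixed vectors $\ket{\xi'}$ and $\ket{\eta'}$ back into wavepackets.

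Concretely, I would first invoke the wavepacket resolution of the identity $\Id=\int\ket{\zeta}\bra{\zeta}\diff\zeta$ furnished by Appendix~\ref{sec:wp-quantization}, and insert it on either side of $O_{lad,k}$ in the matrix element $\Braket{\xi'|O_{lad,k}|\eta'}$, obtaining pointwise in the randomness the identity $\Braket{\xi'|O_{lad,k}|\eta'}=\iint\Braket{\xi'|\xi''}\,\Braket{\xi''|O_{lad,k}|\eta''}\,\Braket{\eta''|\eta'}\diff\xi''\diff\eta''$. Substituting this into the bilinear formula for $L_k$, pulling the deterministic overlaps $\Braket{\xi'|\xi''}$ and $\Braket{\eta''|\eta'}$ outside the expectation, and recognizing the remaining factor $\Expec\,\overline{\Braket{\xi|O_{lad,k}|\eta}}\,\Braket{\xi''|O_{lad,k}|\eta''}$ as $L_k\big(\mstack{\xi,&\eta}{\xi'',&\eta''}\big)$, one arrives at the exact identity $L_k\big(\mstack{\xi,&\eta}{\xi',&\eta'}\big)=\iint\Braket{\xi'|\xi''}\,\Braket{\eta''|\eta'}\,L_k\big(\mstack{\xi,&\eta}{\xi'',&\eta''}\big)\diff\xi''\diff\eta''$. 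Taking absolute values, applying the triangle inequality for integrals, and using $|\Braket{\xi'|\xi''}|=|\Braket{\xi''|\xi'}|$ gives exactly the claimed bound.

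The only step requiring genuine care — and the main (though mild) obstacle — is the interchange of the expectation with the $(\xi'',\eta'')$-integration. I would justify this by Fubini--Tonelli: the overlaps $\Braket{\xi'|\xi''}$ and $\Braket{\eta''|\eta'}$ decay stretched-exponentially in the $d_r$-metric, so that $\int|\Braket{\xi'|\xi''}|\diff\xi''$ and $\int|\Braket{\eta''|\eta'}|\diff\eta''$ are finite, while a Cauchy--Schwarz bound on $\Expec\,\overline{\Braket{\xi|O_{lad,k}|\eta}}\,\Braket{\xi''|O_{lad,k}|\eta''}$ together with the a priori moment bounds on the path operators $O_{lad,k}$ (and the support constraints built into $\Xi$) controls the integrand uniformly and with enough decay in $(\xi'',\eta'')$ for absolute integrability. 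Alternatively, one can avoid discussing the randomness altogether by carrying out the same manipulation at the operator level: $\mcal{L}_k$ is a bounded linear super-operator, the decomposition $\ket{\xi}\bra{\xi'}=\int\Braket{\xi'|\xi''}\ket{\xi}\bra{\xi''}\diff\xi''$ converges in operator norm, hence $\mcal{L}_k[\ket{\xi}\bra{\xi'}]=\int\Braket{\xi'|\xi''}\,\mcal{L}_k[\ket{\xi}\bra{\xi''}]\diff\xi''$, and decomposing $\ket{\eta'}$ into wavepackets in the final matrix element completes the argument.
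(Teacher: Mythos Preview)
Your proof is correct and follows essentially the same approach as the paper: insert the wavepacket resolution of the identity $\ket{\xi'}=\int\ket{\xi''}\braket{\xi''|\xi'}\diff\xi''$ (and similarly for $\ket{\eta'}$) into the definition of $L_k$, then apply the triangle inequality. Your additional remarks on Fubini--Tonelli and the operator-level alternative are more careful than the paper's two-line proof, but the core idea is identical.
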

\begin{proof}
This follows from applying the identity
\[
\ket{\xi'} = \int \ket{\xi''} \braket{\xi''|\xi'}\diff\xi''
\]
into the definition of $L_k$ and then using the triangle inequality.
\end{proof}

We will use the fact that $e^{i\tau \Delta/2}\ket{\xi} \approx \ket{\xi_s}$
to derive a more convenient bound for $M_k$.  Let $\mcal{U}_s[A] := e^{is\Delta/2}Ae^{-is\Delta/2}$
be the free evolution channel.  Then we define the time-shifted ladder function
\[
L_{k,s}(\xi,\eta) := \braket{\eta | (\mcal{U}_s\mcal{L}^k\mcal{U}_{-s})[\ket{\xi}\bra{\xi}]|\eta}.
\]
\begin{lemma}
\label{lem:time-shift-lem}
The maximal ladder function $M_k$ defined in~\eqref{eq:Mk-def} satisfies
\[
M_k(\xi,\eta)
\leq C \sup_{|s|\leq \tau}\int  L_{k,s}(\xi',\eta')
\exp(-cd_r(\xi,\xi')^{0.5}) \exp(-cd_r(\eta,\eta')^{0.5}) \diff\xi'\diff\eta'.
\]
\end{lemma}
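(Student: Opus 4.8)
The plan is to convert the geometric twinning condition defining $M_k$ into an honest free-evolution conjugation, using the approximate identity $e^{is\Delta/2}\ket{\xi}\approx \ket{\xi_s}$ that was recorded in the path-integral sketch (valid for $|s|\lesssim r^2$, hence in particular for $|s|\le\tau=\eps^{-2+\kappa/2}\ll r^2=\eps^{-2}$). First I would start from the bound already derived just above the statement,
\[
M_k(\xi,\eta)\leq \sup_{\xi',\eta'\in T_{N^6,\tau}(\xi,\eta)}
|L_k\big(\mstack{\xi',&\eta'}{\xi',&\eta'}\big)|,
\]
so that only the ``diagonal'' ladder function $L_k(\mstack{\xi',&\eta'}{\xi',&\eta'})$ appears. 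By definition of $T_{N^6,\tau}$, for the pair $(\xi',\eta')$ achieving (or nearly achieving) the supremum there is a single shift $s$ with $|s|\le\tau$ such that $d_r(\xi,U_s(\xi'))\le N^6$ and $d_r(\eta,U_s(\eta'))\le N^6$. The strategy is to peel off this shift by writing $\ket{\xi'}\bra{\xi'}$ in terms of $\ket{U_s(\xi')}\bra{U_s(\xi')}$ via $e^{-is\Delta/2}$, and similarly on the $\eta$ side, converting $L_k$ into $(\mcal U_s\mcal L^k\mcal U_{-s})$ evaluated near $\xi$ and $\eta$.

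The key steps, in order, are: (1) Replace $\mcal L_k$ by $\mcal L^k$ at the cost of $\eps^{100}\|\cdot\|_{op}$ using Lemma~\ref{lem:ladder-semigroup}, since the operators $\ket{\xi'}\bra{\xi'}$ arising here have good support once $|\xi_p|\ge 2\eps^{0.1}$ (this support hypothesis should be carried through; if $|\xi_p|$ is small the function $M_k$ is controlled trivially or the claim is vacuous in the regime of interest). (2) Use $\|e^{is\Delta/2}\ket{\xi'} - \ket{U_s(\xi')}\|_{L^2}\le \eps^{100}$ (the free-propagation-of-wavepackets estimate, valid for $|s|\le\tau$) to write
\[
\ket{\xi'}\bra{\xi'} = e^{-is\Delta/2}\ket{U_s(\xi')}\bra{U_s(\xi')}e^{is\Delta/2} + (\text{error of norm } \lesssim \eps^{100}),
\]
and insert this into the definition of $L_k(\mstack{\xi',&\eta'}{\xi',&\eta'})= \braket{\eta'|\mcal L^k[\ket{\xi'}\bra{\xi'}]|\eta'}$, also writing $\bra{\eta'} = \bra{U_s(\eta')}e^{is\Delta/2}\cdot(\text{plus error})$ to recognize the conjugated channel $\mcal U_{-s}$ acting on the test vectors. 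This produces, up to negligible error,
\[
|L_k\big(\mstack{\xi',&\eta'}{\xi',&\eta'}\big)|
= |\braket{U_s(\eta')|(\mcal U_s \mcal L^k \mcal U_{-s})[\ket{U_s(\xi')}\bra{U_s(\xi')}]|U_s(\eta')}| + O(\eps^{100})
= L_{k,s}(U_s(\xi'),U_s(\eta')) + O(\eps^{100}).
\]
(3) Now $d_r(\xi,U_s(\xi'))\le N^6$ and $d_r(\eta,U_s(\eta'))\le N^6$, so I would use the self-bounding property Lemma~\ref{lem:self-bounding} together with the pointwise wavepacket overlap bound $|\braket{\zeta|\zeta''}|\le C r^{d/2}\exp(-c\, d_r(\zeta,\zeta'')^{0.5})$ (this appears in the text, e.g.\ near the monstrous bound) applied to $L_{k,s}$ rather than $L_k$ — Lemma~\ref{lem:self-bounding} is stated for $L_k$ but the identical argument works for $L_{k,s}$ since conjugating by a unitary $e^{is\Delta/2}$ does not change the structure — to replace $U_s(\xi'),U_s(\eta')$ by $\xi,\eta$ at the cost of the Gaussian weights $\exp(-c\,d_r(\xi,\xi')^{0.5})\exp(-c\,d_r(\eta,\eta')^{0.5})$ and an integral over the true arguments $\xi',\eta'$. (4) Finally, take the supremum over $|s|\le\tau$ outside the integral and absorb the $\eps^{100}$ errors into the stated constant, yielding exactly
\[
M_k(\xi,\eta)\leq C\sup_{|s|\le\tau}\int L_{k,s}(\xi',\eta')\exp(-c\,d_r(\xi,\xi')^{0.5})\exp(-c\,d_r(\eta,\eta')^{0.5})\diff\xi'\diff\eta'.
\]

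The main obstacle I expect is step (2): carefully tracking that the free-evolution replacement $e^{is\Delta/2}\ket{\xi'}\approx\ket{U_s(\xi')}$ can be done \emph{simultaneously} on both the ``bra'' and ``ket'' sides and on both the $\xi$ and $\eta$ slots while the ladder channel $\mcal L^k$ sits in between, and that the accumulated errors remain $O(\eps^{100})$ in operator norm rather than degrading with $k$. This requires using that $\mcal L^k$ is a contraction up to the factor $(1+\eps^{2.1}\tau)^k$ (Lemma~\ref{lem:trace-growth-bd}, via Proposition~\ref{prp:short-ladder-compare}), so that the $\eps^{100}$-size perturbation of the input operator produces at most an $\eps^{100}(1+\eps^{2.1}\tau)^k\le \eps^{90}$ perturbation of the output for $k\le N$; combined with boundedness of the test vectors this keeps all error terms negligible. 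A secondary point to handle cleanly is that $e^{is\Delta/2}$ does not exactly commute with the kinetic-energy cutoffs in $\Xi$, but since it is a Fourier multiplier supported on the same energy shells this is harmless and can be folded into the same negligible-error bookkeeping.
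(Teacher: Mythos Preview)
Your overall strategy---converting the geometric twinning into a free-evolution conjugation and then invoking the self-bounding property---is the same as the paper's. However, your step (2) rests on the $L^2$ approximation $\|e^{is\Delta/2}\ket{\xi'}-\ket{U_s(\xi')}\|\le\eps^{100}$, which does not hold with that precision: a wavepacket of spatial width $r$ has momentum spread $r^{-1}$, so under free evolution for time $|s|\le\tau$ it disperses by order $\tau/r$ in position, and the resulting $L^2$ discrepancy is of order $\tau/r^2=\eps^{\kappa/2}$, not $\eps^{100}$. With an error of that size you cannot simply ``absorb into the constant $C$'' in step (4), since the lemma as stated has no additive remainder and its right-hand side is not obviously bounded below by a fixed positive constant.

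The paper avoids this issue entirely by never approximating. Instead of replacing $\ket{\xi'_s}$ by $e^{-is\Delta/2}\ket{U_s(\xi')}$ up to error, it inserts the \emph{exact} identity
\[
\ket{\xi'_s}=\int e^{-is\Delta/2}\ket{\xi''}\braket{\xi''|e^{is\Delta/2}|\xi'_s}\diff\xi''
\]
(which is just $e^{-is\Delta/2}\,\Id\,e^{is\Delta/2}$ combined with the wavepacket resolution of identity), and likewise on the $\eta$ side. Substituting this into $L_k$ produces $L_{k,s}$ integrated against the kernel $|\braket{\xi''|e^{is\Delta/2}|\xi'_s}|$, which obeys the genuine pointwise bound $\le C\exp(-c\,d_r(\xi'',\xi')^{0.5})$ for $|s|\le\tau$. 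No error tracking is needed, and your step (1) (swapping $\mcal{L}_k$ for $\mcal{L}^k$ via the semigroup lemma) becomes unnecessary at this stage of the argument.
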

\begin{proof}
First we use the definition of the twinning set to rewrite the maximal ladder function as follows:
\begin{align*}
M_k(\xi,\eta) \leq
\sup_{|s|\leq \tau}
\sup_{d_r(\xi',\xi)\leq N^6}
\sup_{d_r(\eta',\eta)\leq N^6} |L_k\big(\mstack{\xi'_s,&\eta'_s}{\xi'_s,&\eta'_s}\big)|.
\end{align*}
Now we use the identity
$\ket{\xi'_s} = \int e^{-is\Delta/2}\ket{\xi''}\braket{\xi''|e^{is\Delta/2}|\xi'_s}\diff\alpha$,
the definition of $\mcal{L}_{k,s}$ and the triangle inequality to bound
\[
|L_k\big(\mstack{\xi'_s,&\eta'_s}{\xi'_s,&\eta'_s}\big)|.
\leq
\int
|\mcal{L}_{k,s}\big(\mstack{\xi'',&\eta''}{\xi'',&\eta''}\big)|
|\Braket{\xi''|e^{is\Delta/2}|\xi_s}|
|\Braket{\eta''|e^{is\Delta/2}|\eta_s}|\diff\xi''\diff\eta''.
\]
Then the lemma follows from the bound
\[
|\Braket{\xi''|e^{is\Delta/2}|\xi_s}|
\leq \exp(-cd_r(\xi'',\xi)^{0.5})
\]
and the self-bounding property of the ladder function.
\end{proof}

The time-shifted ladder functions are somewhat difficult to study directly because of the cutoff
functions $\chi(\omega;\xi,\eta)$ present in the integral definition of $\mcal{L}_k$.

We will work instead with the non-localized operators
\[
O_{\mbf{s},\mbf{p}} = \ket{p_k}\bra{p_0} e^{i\sum s_j|p_j|^2/2} \prod_{j=1}^k \Ft{V}(p_j-p_{j-1}),
\]
and define
\[
\overline{\mcal{L}}[A] := \int \rho(s_0)\rho(s_0')\rho(s_k)\rho(s_k')
\Expec_{ladder} O_{\mbf{s},\mbf{p}}^* A O_{\mbf{s}',\mbf{p}'}
\diff\mbf{s}\diff\mbf{s}' \diff\mbf{p}\diff\mbf{p}',
\]
where $\rho(s)$ is the cutoff function enforcing $s\geq S$ that is present in $\Xi$.  In other words,
the difference between $\mcal{L}$ and $\overline{\mcal{L}}$ is simply the absence of the
spatial localizations of $\chi_\alpha(\omega;\xi,\eta)$.
Then we can define the simplified ladder function to be
\[
\ovln{L}(\xi,\eta)
 := | \Braket{\eta | \ovln{\mcal{L}}[\ket{\xi}\bra{\xi}]|\eta}|.
\]
The same integration by parts argument
that proves Lemma~\ref{lem:int-by-pts} also proves that $\overline{\mcal{L}}\approx\mcal{L}$,
so we have the following bound.
\begin{lemma}
\label{lem:to-ovln}
For $k\leq N$,
\[
\sup_{\xi,\xi',\eta,\eta'}|
|L_k\big( \mstack{\xi_s, &\eta_s}{\xi_s, &\eta_s} \big)
- \overline{L}_k\big( \mstack{\xi_s, &\eta_s}{\xi_s, &\eta_s} \big)|
\leq \eps^{100}.
\]
\end{lemma}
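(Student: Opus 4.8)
The plan is to reduce the pointwise bound to an operator-norm bound and then to a single-segment estimate handled by the integration-by-parts machinery already developed for Lemma~\ref{lem:int-by-pts}. Since wavepackets are $L^2$-normalised, for any bounded operator $B$ one has $|\braket{\eta_s|B|\eta_s}|\leq \|B\|_{op}$, so it suffices to show $\|\mcal{L}_k[\ket{\xi_s}\bra{\xi_s}] - \ovln{\mcal{L}}^k[\ket{\xi_s}\bra{\xi_s}]\|_{op}\leq \eps^{100}$ for all $\xi$ (the small-momentum case being handled separately using that both ladder functions are then supported near zero and negligible). By Lemma~\ref{lem:ladder-semigroup} we may replace $\mcal{L}_k$ by the iterate $\mcal{L}^k$ up to a further $\eps^{100}$, and since $\mcal{U}_{\pm s}$ are unitary the free shift $\xi\mapsto\xi_s$ drops out; thus the task is to bound $\|\mcal{L}^k - \ovln{\mcal{L}}^k\|_{op\to op}$ on operators of the form $\ket{\xi}\bra{\xi}$, which have good support.

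First I would establish the single-segment estimate $\|\mcal{L} - \ovln{\mcal{L}}\|_{op\to op}\leq \exp(-c\alpha^{0.99})$ with $\alpha=|\log\eps|^{10}$. The two superoperators are identical except that $\mcal{L}$ carries the smooth cutoff $\chi_\alpha(\omega;\xi,\eta)$ enforcing the transport constraints $|y_{j+1}-(y_j+s_jp_j)|\lesssim\alpha r$ and the approximate conservation of kinetic energy, together with the replacement of $\Ft{V}$ by the localised $\Ft{V_{y_j}}$ and of momentum states by wavepackets $\ket{\xi}$. The difference is therefore an integral over pairs of single-segment paths for which one of those constraints is not saturated. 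On this region one runs exactly the argument of Section~\ref{sec:duhamel}: integrating by parts in each momentum variable $p_j$ using $\partial_{p_j}\varphi(\omega)=y_j+s_jp_j-y_{j+1}$, and in each time variable $s_j$ after softening the hard constraint $\sum s_j=\tau$, each step costs only a factor of $r$ from differentiating the remaining Fourier factors and gains a factor $\gtrsim\alpha^{-1}$; iterating $\sim\alpha^{0.99}$ times produces the stated gain. The crucial point is that a single segment has at most $k_{max}=O(\kappa^{-1})$ collisions, an absolute constant, so the volume and combinatorial prefactors $\eps^{-C_d k_{max}}\|V\|_{C^{10d}}^{Ck_{max}}$ are bounded by $\eps^{-O(1)}$ (after taking moments of the admissible potential) and are swallowed by $\exp(-c\alpha^{0.99})=\exp(-c|\log\eps|^{9.9})$.

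Next I would propagate this through the $k$ segments by telescoping,
\[
\mcal{L}^k - \ovln{\mcal{L}}^k = \sum_{j=0}^{k-1}\mcal{L}^{j}\circ(\mcal{L}-\ovln{\mcal{L}})\circ\ovln{\mcal{L}}^{\,k-1-j}.
\]
Here Proposition~\ref{prp:short-ladder-compare} and Lemma~\ref{lem:trace-growth-bd} give $\|\mcal{L}\|_{op\to op}\leq 1+\eps^{2.1}\tau\leq 1+\eps^{0.1}$ on the good-support subspace, and combining with the single-segment bound one also gets $\|\ovln{\mcal{L}}\|_{op\to op}\leq 1+2\eps^{0.1}$; since $k\leq N=\lfloor\eps^{-\kappa}\rfloor$ and $\kappa$ may be taken $<0.1$, both $\|\mcal{L}\|^{j}$ and $\|\ovln{\mcal{L}}\|^{k-1-j}$ are bounded by $\exp(C\eps^{0.1-\kappa})\leq 2$. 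Therefore each of the $\leq N$ summands is at most $4\exp(-c\alpha^{0.99})$, so $\|\mcal{L}^k-\ovln{\mcal{L}}^k\|_{op\to op}\leq 4N\exp(-c|\log\eps|^{9.9})\leq\eps^{100}$ for $\eps$ small, which is the lemma.

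The main obstacle is the bookkeeping in the telescope rather than any new analytic estimate: one must ensure that the intermediate operators $\ovln{\mcal{L}}^{\,k-1-j}[\ket{\xi}\bra{\xi}]$ remain in (a neighbourhood of) the good-support class so that the near-contraction bound for $\mcal{L}$ applies to the left factor, and that removing the cutoffs in $\ovln{\mcal{L}}$ has not destroyed the kinetic-energy localisation used in Lemma~\ref{lem:trace-growth-bd}. This is handled by iteratively replacing each $\ovln{\mcal{L}}$ appearing on the right by $\mcal{L}$, absorbing another $\exp(-c\alpha^{0.99})$ per replacement, which only inflates the final bound by a further factor of $N$. The integration-by-parts step itself is not a new difficulty — it is verbatim the argument of Lemma~\ref{lem:int-by-pts} applied to one segment of the doubled ladder path — so all the genuinely delicate work (stationary phase, the softening of the time constraint, the localisation corrections) has already been carried out there.
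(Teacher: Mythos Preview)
Your proposal is correct and follows the same core idea as the paper: the paper's entire justification is the one-line remark that ``the same integration by parts argument that proves Lemma~\ref{lem:int-by-pts} also proves that $\overline{\mcal{L}}\approx\mcal{L}$,'' and you have simply spelled out what that entails --- the single-segment comparison via Lemma~\ref{lem:int-by-pts}, the reduction from $\mcal{L}_k$ to $\mcal{L}^k$ via Lemma~\ref{lem:ladder-semigroup}, and the telescoping together with near-contraction bounds to reach $k$ segments. Your treatment of the good-support bookkeeping in the telescope is more careful than anything the paper writes down, but it is exactly the sort of detail the paper is leaving implicit.
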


Let $\overline{L}$ be the ladder function derived from the superoperator $\overline{\mcal{L}}$,
\[
\ovln{L} (\xi,\eta)
:= \braket{\eta | \ovln{\mcal{L}}[\ket{\xi}\bra{\xi}]|\eta}.
\]
We use $\ovln{\mcal{L}}$ to also define $\ovln{L}_{k,s}$ by
\[
\ovln{L}_{k,s}(\xi,\eta)
:= \braket{\eta | \mcal{U}_s\ovln{\mcal{L}}^k\mcal{U}_{-s}[\ket{\xi}\bra{\xi}]|\eta}
=  \braket{\eta | \ovln{\mcal{L}_s}^k[\ket{\xi}\bra{\xi}]|\eta},
\]
where $\ovln{\mcal{L}_s} = \mcal{U}_s\ovln{\mcal{L}}\mcal{U}_{-s}$.

Because the integration over $\xi'$ and $\eta'$ is only over a volume on the order $N^C\tau\leq \eps^{-10}$,
we can combine Lemma~\ref{lem:to-ovln} with Lemma~\ref{lem:self-bounding} and Lemma~\ref{lem:time-shift-lem}
to obtain
\begin{equation}
\label{eq:Mk-bd}
M_k(\xi,\eta) \leq C\sup_{|s|\leq \tau}
 \int |\overline{L_{k,s}}(\xi',\eta')|
\exp(-c d_r(\xi',\xi)^{0.5})
\exp(-c d_r(\eta',\eta)^{0.5})\diff\xi'\diff\eta'
+ O(\eps^{50}).
\end{equation}
Now we use the fundamental theorem of calculus to bound
\begin{equation}
\label{eq:Lks-ftoc}
|\overline{L_{k,s}}(\xi',\eta')|
\leq
|\overline{L_k}(\xi',\eta')|
+ \int_{-\tau}^\tau |\partial_s \overline{L_{k,s}}(\xi',\eta')|\diff s.
\end{equation}
Next we compute the derivative of $\partial_s \overline{L_{k,s}}$
\begin{equation}
\label{eq:Lks-derivative}
\partial_s \ovln{\mcal{L}}_{k,s}
=
\partial_s (\ovln{\mcal{L}}_s)^k
= \sum_{j=0}^{k-1} \ovln{\mcal{L}}_s^{k-j-1} \partial_s (\ovln{\mcal{L}}_s)  \ovln{\mcal{L}}_s^j,
\end{equation}
and we observe that
\[
\partial_s \ovln{\mcal{L}}_s[A] =
\partial_s \int \rho(s_0)\rho(s_k)
\rho(s_0')\rho(s_k')
\Expec e^{-is\Delta/2} O_{\mbf{s},\mbf{p}}^* e^{is\Delta/2}
A e^{-is\Delta/2} O_{\mbf{s}',\mbf{p}'} e^{is\Delta/2}
\diff\mbf{s}\diff\mbf{p}
\diff\mbf{s'}\diff\mbf{p'}.
\]
Given a time sequence $\mbf{s}=(s_0,\cdots,s_k)$, define the shifted sequence
$I_s\mbf{s} = (s_0-s,s_1,\cdots, s_{k-1},s_k+s)$.  Then
\[
e^{-is\Delta/2} O_{\mbf{s},\mbf{p}}^* e^{is\Delta/2}
= O_{I_s\mbf{s}, \mbf{p}},
\]
so by then applying a change of variables we have
\begin{equation}
\label{eq:dLs-formula}
\partial_s \ovln{\mcal{L}}_s[A] =
\int f_s(s_0,s_k,s_0',s_k')
\Expec e^{-is\Delta/2} O_{\mbf{s},\mbf{p}}^* e^{is\Delta/2}
A e^{-is\Delta/2} O_{\mbf{s}',\mbf{p}'} e^{is\Delta/2}
\diff\mbf{s}\diff\mbf{p}
\diff\mbf{s'}\diff\mbf{p'},
\end{equation}
where
\[
f_s(s_0,s_k,s_0',s_k') := \frac{d}{ds} \rho(s_0-s)\rho(s_k+s) \rho(s_0'-s)\rho(s_k'+s).
\]
The key lemma is the following estimate on the superoperator $\partial_s\ovln{\mcal{L}_s}$.

\begin{lemma}
\label{lem:dLs-bd}
If $A$ is an operator with good support, then
\[
\|\partial_s \ovln{\mcal{L}}_s[A]\|_{op} \leq C\eps^2 |\log\eps|^C \|A\|_{op}.
\]
\end{lemma}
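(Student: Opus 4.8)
\textbf{Proof proposal for Lemma~\ref{lem:dLs-bd}.}

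The plan is to exploit the formula~\eqref{eq:dLs-formula} for $\partial_s\ovln{\mcal{L}}_s[A]$ and observe that the only difference from the ordinary single-segment ladder superoperator $\ovln{\mcal{L}}_s$ is that the cutoff weight $\rho(s_0)\rho(s_k)\rho(s_0')\rho(s_k')$ has been replaced by its $s$-derivative $f_s(s_0,s_k,s_0',s_k')$. Since $\rho_\sigma$ is supported on $[\sigma,2\sigma]$ with $\sigma=\eps^{-1.5}$ and is Gevrey regular, the derivative $f_s$ is supported on the same region and obeys $\|f_s\|_{L^1}\lsim \sigma^{-1}=\eps^{1.5}$ (differentiating one of the four $\rho$'s costs a factor $\sigma^{-1}$, and the other three integrate to $O(1)$). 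So morally $\partial_s\ovln{\mcal{L}}_s$ is $\ovln{\mcal{L}}_s$ with an extra gain of $\eps^{1.5}\cdot\sigma = O(1)$ from one of the boundary time integrals being squeezed to width $O(1)$ instead of width $O(\sigma)$ — wait, more precisely, the gain is that we replace a factor of $\sigma$ (the length of the support of $\rho$, available for one boundary time integration) by a factor of $O(1)$. This is the source of the smallness: the ordinary ladder superoperator, after the analysis in Section~\ref{sec:subkinetic-semigroup}, produces at most $O(\eps^2\tau)=O(N^{-\kappa/10})$, and here we are replacing one free time integration worth $\tau$ (or $\sigma$) with one worth $O(\log\eps^{-1})$, improving the bound by roughly $\sigma^{-1}\lsim\eps^{1.5}$ relative to the trivial ladder bound, but the ladder bound itself is $O(1)$, so net we get $\eps^{1.5}$ up to a $\tau/\sigma$-type correction; keeping track carefully yields $\eps^2|\log\eps|^C$.

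Concretely, I would proceed as follows. First, rewrite $\partial_s\ovln{\mcal{L}}_s[A]$ using~\eqref{eq:dLs-formula} and insert resolutions of the identity $\int\ket{\xi}\bra{\xi}\diff\xi$ at the endpoints to turn it into a path integral over doubled ladder paths $\bm{\Gamma}$, exactly as in the derivation of $\mcal{L}_s$ in Section~\ref{sec:ladder-statement}, except the endpoint cutoff weight is now $f_s$ rather than the product of four $\rho$'s. Second, since $A$ has good support, restrict (via the Schur test and the argument of Lemma~\ref{lem:complete-collisions}) to $N^4$-complete paths with $d_r(\xi_1^+,\xi_1^-)\lsim|\log\eps|^{20}$ and $|(\xi_1^\pm)_p|\geq\eps^{0.2}$; this loses only $\eps^{100}\|A\|_{op}$. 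Third, invoke Lemma~\ref{lem:is-ladder} (or rather the version for single segments together with the fact that for ladder partitions the only contributing $\bm{\Gamma}$ are genuine renormalized ladders) so that the expectation splits along the ladder partition, and apply the Schur test in the wavepacket basis as in~\eqref{eq:schur-err-bd}, bounding the integrand by the standard constraint functions $F_pF_yF_\xi$ of Section~\ref{sec:subkinetic-semigroup}, but with the extra weight $|f_s(s_0,s_k,s_0',s_k')|$. Fourth, apply Lemma~\ref{lem:abstract-holder} with the ordering~\eqref{eq:Gamma-ordering}: all the $\qt{p}$, $\qt{y}$, $\qt{\xi}$ variables contribute the same $O(1)$ (or $Cr^{-1}\min\{|p_0|^{d-1},1\}$ for rung momenta) factors as before; the time variables $s_a$ that are not boundary times again contribute $\tau$ or are pinned by cluster constraints to contribute $|p_0|^{-1}r$; and the key point is that the boundary times $s_0,s_k,s_0',s_k'$, which previously each contributed a free factor of $\sigma$ (the length of $\supp\rho$), now are constrained by $|f_s|$ whose $L^1$ norm in these four variables is $O(\sigma^{-1})$ rather than $O(\sigma^3)$ — i.e.\ we gain $\sigma^{-4}\cdot\sigma^3 = \sigma^{-1}\lsim\eps^{1.5}$. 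Combining with the fact that a single-segment ladder otherwise produces a bound of order $(\eps^2\tau)^{0}\cdot(\text{combinatorial})\lsim|\log\eps|^C$ (since there are at most $k_{max}=O(1)$ rungs and everything is an absolute constant at subkinetic scales), and being slightly wasteful to absorb the $|p_0|^{-1}$ annulus factors (using $|p_0|\geq\eps^{0.2}$, which costs at most $\eps^{-C}$ but is dwarfed by $\sigma^{-1}$), we arrive at the claimed $C\eps^2|\log\eps|^C\|A\|_{op}$.

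The main obstacle I anticipate is bookkeeping the interplay between the boundary-time derivative and the recollision/cluster structure: when $s_0$ or $s_k$ is simultaneously the boundary time and also a time pinned by a short-recollision constraint (Lemma~\ref{lem:short-rec-time}), one must be careful not to double-count the smallness. This is handled by noting that $f_s$ is supported where $s_0\in[\sigma,2\sigma]$ (etc.), which is \emph{incompatible} with $s_0\lsim|p_0|^{-1}\alpha r\ll\sigma$, so in fact a boundary time can never be a recollision-pinned time — the $f_s$ support condition and the recollision constraint are mutually exclusive, which cleanly decouples the two sources of constraint. A secondary nuisance is that one must re-derive the integration-by-parts/cutoff machinery of Lemma~\ref{lem:int-by-pts} for this modified weight to legitimately pass from $O_{\mbf{s},\mbf{p}}$ to a localized version; but since $f_s$ is still Gevrey regular in all variables, the same stationary-phase argument applies verbatim, at the cost of the usual $\exp(-c\alpha^{0.99})$-negligible error, so this is routine rather than substantive.
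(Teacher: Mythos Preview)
Your overall strategy---rewriting $\partial_s\ovln{\mcal{L}}_s$ via~\eqref{eq:dLs-formula} as a path integral with the modified weight $f_s$, localizing via Lemma~\ref{lem:int-by-pts}, and then running the Schur-test accounting of Section~\ref{sec:subkinetic-semigroup}---is exactly what the paper does. But your bookkeeping has a genuine gap that prevents you from reaching $\eps^2$.

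The function $\rho$ appearing in the definition of $\ovln{\mcal{L}}$ is \emph{not} the compactly supported bump $\rho_\sigma$ with $\int\rho_\sigma=1$; the paper explicitly says it is ``the cutoff function enforcing $s\geq S$ that is present in $\Xi$,'' i.e.\ a smoothed step function (essentially $P(s)=\int_0^s\rho_\sigma$) which is $\leq 1$ and equals $1$ for $s\geq 2\sigma$. Consequently, in the baseline Schur estimate each boundary time variable $s_0,s_k,s_0',s_k'$ contributes a factor of $\tau$ (it ranges over $[\sigma,\tau]$ with weight $\leq 1$), not a factor of $1$ as you assume. The derivative $f_s$ then replaces one $P$ by $P'=\rho_\sigma$, which is bounded by $\sigma^{-1}$ and supported on an interval of length $\sigma$; this turns one factor of $\tau$ into $\sigma\cdot\sigma^{-1}=1$, for a net gain of $\tau^{-1}$, not $\sigma^{-1}$. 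Your $L^1$ computation gives $\sigma^{-1}=\eps^{1.5}$ only because you took the wrong baseline.

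Second, you do not use---and your phrasing ``baseline $(\eps^2\tau)^0$'' suggests you miss---the fact that the $k_+=k_-=0$ term vanishes identically: the free evolution commutes with $\mcal{U}_s$, so the zero-collision contribution to $\ovln{\mcal{L}}_s$ is independent of $s$. The paper states this as ``there must be at least one (and therefore at least $2$) collisions for the integrand to be nonzero.'' This forces the sum to start at $k_++k_-=2$, with leading baseline $C\eps^2\tau|\log\eps|^C$. Multiplying by the correct gain $\tau^{-1}$ then gives exactly $C\eps^2|\log\eps|^C$. With your gain $\sigma^{-1}$, even granting $k_++k_-\geq 2$, you get $\eps^2\tau\cdot\sigma^{-1}=\eps^{1.5+\kappa/2}$, which is strictly weaker than $\eps^2$.

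Two minor points: your invocation of Lemma~\ref{lem:is-ladder} is unnecessary, since $\ovln{\mcal{L}}_s$ is already defined via $\Expec_{lad}$ and involves only ladder partitions by construction; and the paper first applies operator Cauchy--Schwarz to reduce to the case $A=P_\delta$, which simplifies the Schur test (though your direct approach would also work once the accounting is corrected).
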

\begin{proof}
Using the operator Cauchy-Schwarz inequality, it suffices to prove
\[
\|\partial_s \ovln{\mcal{L}}_s[P_\delta]\|_{op} \leq C\eps^2,
\]
where $\delta=\eps^{0.2}$ and $P_\delta$ is the wavepacket cutoff
\[
P_\delta = \int_{|\xi_p|\geq \delta}\ket{\xi}\bra{\xi}\diff \xi.
\]
Then we use the same integration by parts from Lemma~\ref{lem:int-by-pts} to
approximate, for $g$ smooth to scale $\eps^{-1.1}$,
\[
\int g(\mbf{s})O_{\mbf{s},\mbf{p}} \diff\mbf{s}\diff\mbf{p}
=
\int g(\mbf{s}) \chi_\alpha(\omega;\xi,\eta)
\ket{\eta}\bra{\xi} \braket{\xi|O_\omega|\eta}\diff\omega\diff\xi\diff\eta
\]
up to an error that is negligible in operator norm.  Then we estimate
$\|\partial_s\ovln{\mcal{L}}_s[A]\|_{op}$ using the Schur test and the same
calculations as in the proof of Proposition~\ref{prp:short-ladder-compare}.
In particular, without any improvements from the presence of clusters in the partition or from
the presence of recollision or tube events, the argument of Proposition~\ref{prp:short-ladder-compare}
yields
\[
\|\partial_s\ovln{\mcal{L}}_s[A]\|_{op} \leq C\|A\|_{op}\sum_{k_+,k_-=0}^{k_{max}}  (C\eps^2 \tau |\log\eps|^C)^{(k_++k_-)/2}.
\]
In summary, the bound above is obtained from the following considerations:
each pair $(a,b)$ in the generalized ladder contributes $\eps^2$ from the two factors of $\eps$,
$\tau$ from the choice of the time of the collision, and a factor of
$r^{-1}$ from the integration over the momentum annulus that is cancelled by a factor of $r$
for the time variable of the second collision in the pair.

Now we obtain an improvement because of the presence of the function $f_s(s_0,s_k,s_0',s_k')$ in the integrand.
This function is bounded by $\sigma^{-1}$ and is supported on the set
\[
\{s_0\leq \sigma\}\cup\{s_k \leq \sigma\} \cup \{s_0'\leq \sigma\} \cup \{s_k'\leq \sigma\}.
\]
The effect is that the integration over the $s_0$ variable (or $s_0'$, or $s_k$, or $s_k'$ variable)
includes an additional factor of $\sigma^{-1}$ and is integrated over a segment of length $\sigma$
rather than $\tau$.  The effect of this is to produce an additional improvement of $\sigma^{-1}\sigma/\tau=\tau^{-1}$
over the bound given above.  Since moreoever there must be at least one (and therefore at least $2$) collisions
for the integrand to be nonzero, we conclude that
\[
\|\partial_s\ovln{\mcal{L}}_s[A]\|_{op} \leq C\|A\|_{op}\sum_{2\leq |k_++k_-|\leq 2k_{max}}^{k_{max}}
C \tau^{-1}(C\eps^2 \tau|\log\eps|^C)^{(k_++k_-)/2}
\leq C\eps^2|\log\eps|^{C},
\]
as desired.
\end{proof}

We now have the ingredients needed to prove Lemma~\ref{lem:maximal-ladder-bd}.
\begin{proof}[Proof of Lemma~\ref{lem:maximal-ladder-bd}]
We integrate~\eqref{eq:Mk-bd} and apply~\eqref{eq:Lks-ftoc} to obtain
\begin{align*}
\int M_k(\xi,\eta)\diff\eta
&\leq C \int |\ovln{L_k}(\xi',\eta')|\exp(-cd_r(\xi',\xi)^{0.5}) \diff\xi'\diff\eta' \\
&\qquad +
C \int_{-\tau}^\tau\int |\partial_s\ovln{L_{k,s}}(\xi,\eta)|
\exp(-cd_r(\xi,\xi')^{0.5})\diff\xi'\diff\eta\diff s
\end{align*}
The main term comes from $\xi'$ satisfying $d_r(\xi',\xi)\leq N^4$, so we have
\begin{align*}
\int M_k(\xi,\eta)\diff\eta
&\leq C N^{8d} \sup_{d_r(\xi',\xi)\leq N^4}\int |\ovln{L_k}(\xi',\eta')|\diff\eta'\\
&\qquad +  CN^{8d} \sup_{d_r(\xi',\xi)\leq N^4}
\int_{-\tau}^\tau\int |\partial_s\ovln{L_{k,s}}(\xi,\eta)|
\diff\eta\diff s
+ C\eps^{100}.
\end{align*}
The first term can be directly bounded using Corollary~\ref{cor:L1-ladder-bd}.
The second term is bounded in the same way, by first deriving a trace-norm bound
for the operator $\partial_s\ovln{\mcal{L}}_{k,s}[\ket{\xi}\bra{\xi}]$ using
the duality argument of Lemma~\ref{lem:trace-growth-bd} and then
applying this to bound the integral over $\eta$ exactly as in the proof of Corollary~\ref{cor:L1-ladder-bd}.
\end{proof}

\subsection{Superoperators related to the ladder}
In this section we transfer the bounds we obtained on the ladder superoperator $\mcal{L}$
to some other superoperators.  The first of these is the stick superoperator
$\mcal{S}$, defined by
\[
\mcal{S}[A] := \int
\ket{\xi_0^-}\bra{\xi_0^+}
\braket{\xi_1^-|A|\xi_1^+}
\Xi(\Gamma) \Expec_{stick}
\mstack{\braket{\xi_1^+|O_{\omega^+}|\xi_0^+}}
{\braket{\xi_1^-|O_{\omega^-}|\xi_0^-}^*}\diff\Gamma,
\]
where now the expectation is over stick partitions (that is, ladders without rungs).  This can also
be expressed in terms of colored operators,
\[
\mcal{S}[A] = \Expec (O_{stick})^* A O_{stick},
\]
where $O_{stick}$ includes a sum over all stick colorings $\Psi_{rl}^{+,0}$
instead of all ladders $\Psi_{rl}^+$.  Since stick colorings are precisely ladder colorings
with no rungs, $O_{stick} = \Expec O_{lad}$, so $\mcal{S}[A]$ can also be written
\[
\mcal{S}[A] = (\Expec O_{lad})^*  A (\Expec O^{lad}).
\]
Therefore all bounds on the operator norm of $\mcal{L}[A]$ are inherited by $\mcal{S}[A]$,
since $(\Expec O^{ladder})^*(\Expec O^{ladder}) \leq \Expec (O^{ladder})^*(O^{ladder})$.

This discussion can be generalized to handle ladder superoperators with a specified number of rungs.
Define the extended ladder operator with a specified number of rungs
\[
O_{lad,k,=h} =
\int_{\Omega_{\alpha,\sigma}^k}
\Xi(\Gamma) O_{\Gamma}^{\Psi_{rl}^{+,h}(K(\Gamma))}\diff\Gamma.
\]
Then we define
\[
\mcal{L}_{k,=h}[A] := \Expec O_{lad,k,h}^* A O_{lad,k,h}.
\]
Note that $\mcal{L}_{1,0}$ is the stick superoperator considered above.
Let $L_{k,=h}$ be the ladder functions associated to $\mcal{L}_{k,=h}$,
\begin{equation}
L_{k,=h} \big(\mstack{\xi,&\eta}{\xi',&\eta'}\big)
:=
|\Braket{\eta|\mcal{L}_{k,=h}[\ket{\xi}\bra{\xi'}]|\eta'}|.
\end{equation}
An application of Cauchy-Schwarz shows that
\begin{equation}
L_{k,=h} \big(\mstack{\xi,&\eta}{\xi',&\eta'}\big)
\leq
L_{k,=h}\big(\mstack{\xi,&\eta}{\xi,&\eta}\big)^{1/2}
L_{k,=h}\big(\mstack{\xi',&\eta'}{\xi',&\eta'}\big)^{1/2}
= L_{k,=h}(\xi,\eta)^{1/2} L_{k,=h}(\xi',\eta')^{1/2}.
\end{equation}

Now we will compare the functions $L_{k,=h}$ to $L_k$.  The key observation is that
$\mcal{Q}(\Psi^{+,h}_{rl}(A), \Psi^{+}_{rl}(B))$ is the set of ladder partitions with exactly
$h$ rungs (as all other partitions have singleton sets).  Therefore  $\mcal{L}_{k,=h}$ has the
remarkable representation
\[
\mcal{L}_{k,=h}[A]
= \Expec O_{lad,k,=h}^* A O_{lad,k}.
\]
In particular, $L_{k,=h}(\xi,\eta)$ can be written
\[
L_{k,=h} =
|\Expec \braket{\eta | O_{lad,k,=h}^*|\xi} \braket{\xi|O_{lad,k}|\eta}|,
\]
so that an application of the Cauchy-Schwarz inequality shows
\[
L_{k,=h}(\xi,\eta) \leq L_{k,=h}(\xi,\eta)^{1/2} L_k(\xi,\eta)^{1/2},
\]
which upon rearranging becomes
\[
L_{k,=h}(\xi,\eta) \leq L_k(\xi,\eta).
\]
Therefore all bounds we obtained for the ladder function $L_k$ apply to the ladder functions $L_{k,=h}$
(including the case $h=0)$.  In particular, for $k\leq N$,
\begin{equation}
\label{eq:maximal-rung-bd}
\sup_{\xi} \int M_{k,=h}(\xi,\eta)\diff \eta \leq CN^C.
\end{equation}

\section{The path integral}
\label{sec:duhamel}
In this section we provide a more detailed derivation of Lemma~\ref{lem:int-by-pts}.

We start with the Duhamel identity
\[
    e^{-i\tau H} = e^{i\tau\Delta/2} +
    i\eps \int_0^\tau e^{-i(\tau-s) H} V e^{is\Delta/2} \diff s.
\]
Iterating this identity $k_{max}$ times (later we will set
$k_{max}= 10^{100}$), we arrive at the expression
\[
    e^{-i\tau H} = e^{i\tau\Delta/2} + \sum_{k=1}^{k_{max}}
    T_k(\tau) + R_{k_{max}}(\tau)
\]
where
\[
    T_k(\tau) := (i\eps)^k
    \int_{\triangle_k(\tau)}
    e^{-is_k\Delta/2} V e^{-is_{k-1}\Delta/2}
    V\cdots V e^{-is_0\Delta/2} \diff s_0\dots \diff s_k
\]
and
\[
    R_{k_{max}}(\tau) :=
    \int_0^\tau e^{-i(\tau-s) H} T_{k_{max}}(s)\diff s,
\]
where $\triangle_k(\tau)$ is the simplex
\[
    \triangle_k(\tau) :=
    \{(s_0,\dots,s_k)\in (\Real_+)^{k+1} \mid
    \sum_{j=0}^k s_j = \tau\}.
\]
An interpretation of this expansion is that the
 first term accounts for the possibility that
the particle moves freely (in a straight line), and the term $T_k$
accounts for the trajectories a particle might take which involve $k$
scattering events.  This intuitive picture is only valid for times
$\tau\ll \eps^{-2}$ for which the first term dominates.  For longer times,
although the identity still holds, there is significant cancellation
between the terms which precludes any simple interpretation of the terms
on their own.

Our next step is to further decompose the operators $T_k$ as an integral
over possible phase-space paths of the particle, taking into account both
the spatial localization of the particle and the changes in momentum of
the particle due to scattering.  To keep track of the latter we use the
Fourier inversion formula, as expressed in the
identity
\[
\Id = \int \ket{p}\bra{p}\diff p.
\]

In combination with the following formula for the free evolution of
momentum eigenstates $\braket{x|p} = e^{ip\cdot x}$
\[
    e^{is\Delta/2}\ket{p} = e^{-is|p|^2/2} \ket{p},
\]
we obtain the identity
\[
    T_k(\tau) = (i\eps)^k \int_{\triangle_k(\tau)}
    \int_{(\Real^d)^{k+1}}
    \prod_{j=0}^k e^{-is_j|p_j|^2/2}
    \prod_{j=1}^k \braket{p_j | V | p_{j-1}}
    \ket{p_k}\bra{p_0}
    \diff \mbf{s}\diff\mbf{p}.
\]
Here we have used the boldfaced symbols
$\mbf{s}=(s_0,\dots,s_k)$ and $\mbf{p}=(p_0,\dots,p_k)$ as a shorthand
for the list of all variables.  The next step is to decompose the potential
$V$ spatially.  To do this let $b\in C_c^\infty(\Real^d)$ be a smooth
positive bump function supported in the unit ball and satisfying $\int b=1$,
and let $b_r(x) = r^{-d} b(x/r)$ be the scaled version of $b$ which preserves
the integral.  Then we have the identity
\[
    V(x) = \int V(x) b_r(y-x)\diff y.
\]
Now define the localized potential $V_y$ by
\[
    V_y(x) := V(x+y) b_r(y),
\]
where $b_r = r^{-d}b(x/r)$ is a scaled version of a bump function
$b\in C_c^\infty(\Real^d)$ that we choose to be supported in $B_1$,
be normalized so that $\int b =1$, and satisfy the Gevrey-type Fourier
decay
\[
    |\Ft{b}(p)| \leq C\exp(-c|p|^{0.999}).
\]
Applying this decomposition into the term $\braket{p|V|q}$ yields
\[
    \braket{p|V|q} = \Ft{V}(p-q) =
    \int e^{-iy\cdot (p-q)} \Ft{V_y}(p-q)\diff y,
\]
and so we write
\[
    T_k(\tau) = (i\eps)^k \int_{\triangle_k(\tau)}
    \int_{(\Real^d)^{k+1}} \int_{(\Real^d)^k}
    \prod_{j=0}^k e^{-is_j|p_j|^2/2}
    \prod_{j=1}^k e^{-iy_j (p_j-p_{j-1})}
    \Ft{V_{y_j}}(p_j-p_{j-1})
\ket{p_k}\bra{p_0}
    \diff \mbf{s}\diff\mbf{p} \diff\mbf{y}.
\]
For convenience we introduce the operator
\[
    O_\omega := \ket{p_k}\bra{p_0}
    \prod_{j=0}^k e^{-is_j|p_j|^2/2}
    \prod_{j=1}^k e^{-iy_j (p_j-p_{j-1})}
    \Ft{V_{y_j}}(p_j-p_{j-1}).
\]
There is an oscillatory phase present in the integral, given by
\begin{align*}
    \varphi_k(\mbf{s},\mbf{p},\mbf{y}) &:=
    \sum_{j=0}^k s_j |p_j|^2/2
    + \sum_{j=1}^k y_j \cdot (p_j-p_{j-1}).
\end{align*}
Aside from this phase, the integrand is smooth to scale $r^{-1}$
in the $\mbf{p}$ variable, as $V_y$ is defined to be localized to the
ball of radius $r$.
Thus it makes
sense to compute the stationary points of $\varphi_k$ with respect
to perturbations in $\mbf{p}$.  We observe that, for
$0 < j <k$,
\begin{align*}
    \partial_{p_j} \varphi_k = y_j + s_jp_j - y_{j+1}.
\end{align*}
We can therefore use stationary phase to localize the integral to sequences
of collision locations $\mbf{y}$ which are approximately connected by
the straight line segments $s_jp_j$.  This confirms the picture that
$T_k$ is an integral over paths that the particle can take
in which scattering events are interspersed with free evolution.

We will also need to perform stationary phase in the $\mbf{s}$ variable,
but this is delicate due to the fact that we are integrating over the
simplex $\triangle_k$, which has sharp corners.  Ultimately, integration
by parts in the $\mbf{s}$ variables will lead to an approximate conservation
of kinetic energy.  There is a slight complication, which is that, due to
the singularity coming from the constraint $s_j\geq0$, the particle may
violate conservation of kinetic energy for short times.

To prove Lemma~\ref{lem:int-by-pts} we will first prove a general
oscillatory integral estimate that is adapted for analytic-type phases
and functions.  The analyticity is not necessary but simply makes for
convenient estimates.

\begin{lemma}[General oscillatory integral estimate]
    \label{lem:osc-int}
    Let $\varphi:\Real^N\to\Real$ be an analytic phase satisfying the
    estimate
    \begin{equation}
        \label{phase-est}
        |\partial^\beta\varphi(\mbf{x})| \leq
        \prod_{a=1}^N (C\ell_a (1+\beta_a))^{\beta_a}.
    \end{equation}
    for any multi-index $\beta$ with $|\beta|\geq 2$ and some
    scales $\{\ell_a\}_{a=1}^N$.
    Also let $w,W:\Real^N\to\Real$ be functions satisfying
    \begin{equation}
        \label{wW-bd}
        |\partial^\beta w(\mbf{x})| \leq
        \prod_{a=1}^N \ell_a^{\beta_a} (C(1+\beta_a))^{1.001\beta_a}
        W(\mbf{x}).
    \end{equation}
    Then
    \begin{equation}
        \label{exp-decay-bd}
    \Big|\int e^{i\varphi(\mbf{x})} w(\mbf{x})\diff\mbf{x}\Big|
    \leq C(N) \int W(\mbf{x})
    \prod_{j=1}^N \exp(-c_N|\ell_j^{-1}\partial_j\varphi|^{0.99})
    \diff \mbf{x}.
    \end{equation}
\end{lemma}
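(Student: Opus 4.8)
\textbf{Proof proposal for Lemma~\ref{lem:osc-int}.}

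The plan is to proceed by non-stationary phase (repeated integration by parts), but carried out in a way that loses only an exponential-of-a-power rather than the usual polynomial factor, so that the bound is summable and robust under the many applications in the paper. First I would reduce to the case $N=1$ by an inductive slicing argument: write the integral as an iterated integral, and in the innermost variable $x_N$ treat all other variables as frozen parameters, so that the hypotheses \eqref{phase-est} and \eqref{wW-bd} become exactly the one-dimensional hypotheses for the phase $x_N\mapsto\varphi(\mbf x)$ and amplitude $x_N\mapsto w(\mbf x)$ with scale $\ell_N$, and with the parameter-dependent ``weight'' $W$. The subtlety here is that after integrating by parts in $x_N$ one produces new amplitudes involving $\partial_N\varphi$ in the denominator, and one must check that the resulting amplitude still satisfies a Gevrey-type estimate of the form \eqref{wW-bd} in the remaining variables $x_1,\dots,x_{N-1}$, with the weight upgraded to $W(\mbf x)\exp(-c|\ell_N^{-1}\partial_N\varphi|^{0.99})$; granting that, the outer induction on $N$ closes. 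So the crux is the one-dimensional statement.

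For the one-dimensional estimate I would use the standard trick of the differential operator $L f := \frac{1}{i\varphi'}\partial_x f$, which satisfies $L(e^{i\varphi}) = e^{i\varphi}$, so that $\int e^{i\varphi}w = \int e^{i\varphi} (L^\ast)^m w$ for every $m\geq 0$, where $L^\ast f = -\partial_x(f/(i\varphi'))$. Iterating, $(L^\ast)^m w$ is a sum over the ways of distributing derivatives, each term a product of $w$ (or a derivative of $w$), negative powers of $\varphi'$, and derivatives of $\varphi'$. Using \eqref{phase-est} to bound $|\partial^j\varphi'|\lesssim (C\ell(1+j))^{j}$ and \eqref{wW-bd} to bound $|\partial^j w|\lesssim \ell^j (C(1+j))^{1.001 j}W$, and collecting the combinatorial factors coming from the Faà di Bruno / Leibniz expansion, one gets a bound of the shape
\[
|(L^\ast)^m w(x)| \leq C^m (m!)^{1.001} \left(\frac{\ell}{|\varphi'(x)|}\right)^m W(x).
\]
The factor $(m!)^{1.001}$ rather than $(m!)^{2}$ is exactly what the $1.001$ exponent in \eqref{wW-bd} (versus the exponent $1$ in \eqref{phase-est}) is engineered to give after the convolution of the two Gevrey classes. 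Then I would optimize over $m$: choosing $m\approx c|\ell^{-1}\varphi'(x)|^{1/1.001}$ and using the elementary inequality $\inf_{m\in\bbN} A^{-m}(m!)^{1.001}\lesssim \exp(-c A^{1/1.001})$ (Stirling), and noting $1/1.001 > 0.99$, yields the pointwise bound $|(L^\ast)^m w(x)|$-optimized $\leq C\, W(x)\exp(-c|\ell^{-1}\varphi'(x)|^{0.99})$ for the integrand. Integrating in $x$ gives the $N=1$ case of \eqref{exp-decay-bd}.

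The main obstacle I anticipate is the bookkeeping of the two competing combinatorial sources in the iterated operator $(L^\ast)^m$: the Leibniz rule for distributing the $m$ derivatives and the chain-rule factors hidden in differentiating $1/\varphi'$ repeatedly. One must verify that these together contribute no worse than $C^m (m!)^{0.001}$ beyond the ``expected'' $(\ell/|\varphi'|)^m m!$, so that the Gevrey exponent stays below $1.01$ and the optimization over $m$ still beats the target exponent $0.99$ with room to spare. A clean way to handle this is to first establish, by induction on $m$, that $|\partial^m (1/\varphi')| \leq C^m m!\, |\varphi'|^{-1}\ell^m$ (a Gevrey-$1$ bound, since $\varphi'$ itself is Gevrey-$1$ and bounded powers/reciprocals of Gevrey-$1$ functions stay Gevrey-$1$ up to constants), and then feed this together with \eqref{wW-bd} into a single Leibniz expansion; the product of a Gevrey-$1$ and a Gevrey-$1.001$ function is Gevrey-$1.001$ with a manageable constant. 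Once that lemma is in place the rest is the routine optimization sketched above, and the multi-dimensional version follows by the slicing induction, with the exponential weights from successive variables simply multiplying into the product $\prod_j \exp(-c_N|\ell_j^{-1}\partial_j\varphi|^{0.99})$.
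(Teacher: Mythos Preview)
Your overall strategy---non-stationary phase with Gevrey-type bookkeeping followed by an optimization over the number of integrations by parts---is the right idea, and your analysis of the $(L^\ast)^m$ combinatorics is sound. However, there is a genuine gap at the optimization step: the number $m$ of integrations by parts is a single integer fixed \emph{before} you take the integral, so you cannot choose $m\approx c|\ell^{-1}\varphi'(x)|^{1/1.001}$ pointwise in $x$. What you actually obtain for each fixed $m$ is
\[
\Big|\int e^{i\varphi} w\Big|\;\leq\; C^m (m!)^{1.001}\int \Big(\frac{\ell}{|\varphi'(x)|}\Big)^m W(x)\,dx,
\]
and taking the infimum over $m$ on the right gives $\inf_m \int f_m$, not $\int \inf_m f_m$. (The bound is in any case vacuous at points where $|\varphi'|\lesssim \ell$, so some splitting is unavoidable.)

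The paper resolves exactly this by inserting, \emph{before} any integration by parts, a dyadic partition of unity $1=\sum_{j\geq 0} b_j(\partial_a\varphi)$ in the value of each $\partial_a\varphi$, with $b_j$ supported where $|\partial_a\varphi|\sim 2^j$ and itself satisfying Gevrey bounds. On the piece localized to the multi-index $\mbf{j}=(j_1,\dots,j_N)$, one performs a \emph{fixed} number $\alpha_a\sim j_a^{0.99}$ of integrations by parts in the $a$-th variable; your combinatorial estimate then gives a factor $\prod_a 2^{-j_a\alpha_a}(C(1+\alpha_a))^{1.001\alpha_a}$, and now the optimization over $\alpha_a$ is legitimate because $2^{j_a}$ is constant on the piece. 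This is the missing ingredient in your argument.

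A secondary issue: the inductive slicing as you describe it does not quite close. After integrating out $x_N$ you do not get an expression of the form $e^{i\psi(x')}g(x')$ with a reduced phase $\psi$ in fewer variables, so there is nothing to feed into an $(N-1)$-dimensional version of the lemma. What does work---and what the paper does---is to localize in all $\partial_a\varphi$ simultaneously and integrate by parts in all variables at once; one could alternatively process the variables one at a time, but the induction is then on which variables have been processed (the phase $\varphi$ remains $N$-dimensional throughout), and the partition-of-unity step is again essential at each stage.
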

\begin{proof}
By a rescaling, we can assume that $\ell_a=1$ for each $a\in[N]$.

Fix a partition of unity
\[
    1 = b_0(x) + \sum_{j=1}^\infty b_j(x)
\]
where $b_j\in C_c^\infty(\Real)$, $b_0$ is supported on $[-2,2]$, and
$b_j$ is supported on the set $\{2^{j-1} \leq |x|\leq 2^{j+1}\}$, with
\begin{equation}
    \label{bGev-bd}
    \sup_x |\partial^k b_j(x)| \leq 2^{-jk} (Ck)^{1.001k}.
\end{equation}
For each $a\in[N]$ we therefore have the identity
\[
    1 = b_0(\partial_a\varphi) +
    \sum_{j=1}^\infty b_j(\partial_a\varphi).
\]
We arrive at the decomposition
\begin{equation}
    \label{int-decomp}
\int e^{i\varphi(\mbf{x})} w(\mbf{x})\diff\mbf{x}
= \sum_{j_a}
\int e^{i\varphi(\mbf{x})} w(\mbf{x})
\prod_{a=1}^N b_{j_a}(\partial_a \varphi)\diff \mbf{x}.
\end{equation}
The indices $j_a$ dictate how many times we integrate by parts in
each variable.  Specifically, when $j_a>0$ we integrate by parts
$\lfloor j_a^\delta\rfloor$ times, and we let $\alpha(\mbf{j})$
denote the multi-index corresponding to the indices $\mbf{j}$.
If $j_a>0$ we integrate by parts in the $a$ variable
using the differential operators
\begin{equation}
    \begin{split}
    L_a^*f &= -i \frac{1}{\partial_a\varphi} \partial_a \\
    L_af &= i \partial_a (\frac{f}{\partial_a\varphi}),
    \end{split}
\end{equation}
which satisfy
\[
    L_a^*e^{i\varphi(\mbf{x})} = e^{i\varphi(\mbf{x})}.
\]
We write $L^{\alpha}$ to mean
\[
    L_N^{\alpha_N} \cdots L_1^{\alpha_1}.
\]
The term corresponding to $\mbf{j}=(j_1,\dots,j_N)$, after integration
by parts, is
\begin{equation}
\begin{split}
\int e^{i\varphi(\mbf{x})} w(\mbf{x})
\prod_{a=1}^N b_{j_a}(\partial_a \varphi)\diff \mbf{x}
=
\int e^{i\varphi(\mbf{x})} L^{\alpha(\mbf{j})}
\Big(w(\mbf{x})
\prod_{a=1}^N b_{j_a}( \partial_a \varphi)\Big)\diff \mbf{x}
\end{split}
\end{equation}
Only $C^N$ terms in the sum~\eqref{int-decomp} contribute at any
point $\mbf{x}$, so it suffices to prove the bound
\begin{equation}
    \label{LSK-bd}
    L^{\alpha(\mbf{j})}
\Big(w(\mbf{x})
\prod_{a=1}^N b_{j_a}( \partial_a \varphi)\Big)
\leq C^N W(x) \prod_{a=1}^N\exp(-c|\partial_a\varphi|).
\end{equation}
The left hand side here decomposes as a sum of terms of the form
\begin{equation}
    \label{expanded-eq}
    \Big(\prod_{a=1}^N
    (\partial_a\varphi)^{-\alpha_a-M_a}
    \prod_{k=1}^{m_a} \partial^{\beta_{k,a}}\partial_a\varphi\Big)
    \partial^\gamma w
    \prod_{a=1}^N
    \partial^{\nu_a}b_{j_a}( \partial_a \varphi)
\end{equation}
where the indices $M_a$, $m_a$, $\beta_{k,a}$, $\gamma$, and $\nu$
satisfy
\begin{align*}
    \sum_{k=1}^{m_a} |\beta_{k,a}| &= M_a\geq 0 \\
    \sum_{a} \sum_{k=1}^{m_a} \beta_{k,a} +
    \gamma + \sum_{j=1}^N \nu_j
    &= \alpha(\mbf{j}).
\end{align*}
Applying~\eqref{phase-est}, \eqref{wW-bd}, and~\eqref{bGev-bd}, we can
bound~\eqref{expanded-eq} by
\begin{equation}
\begin{split}
\Big(\prod_{a=1}^N
&(\partial_a\varphi)^{-\alpha_a-M_a}
\prod_{k=1}^{m_a} \partial^{\beta_{k,a}}\partial_a\varphi\Big)
\partial^\gamma w
\prod_{a\in[N]}
\partial^{\nu_a}b_{j_a}( \partial_a \varphi) \\
&\leq
\Big( \prod_{a=1}^N (2^{j_a})^{-\alpha_a-M_a} C(1+M_a)^{M_a} \Big)
\prod_{a=1}^N (C(1+\gamma_a))^{1.001\gamma_a} W(\mbf{x})
\prod_{a=1}^N
2^{-j_a}\prod_{k=1}^N(C(1+\nu_{a,k}))^{\nu_{a,k}} \\
&\leq
\Big(\prod_{a=1}^N 2^{-\alpha_a j_a} (C(1+\alpha_a))^{1.001\alpha_a}\Big)
W(\mbf{x}).
\end{split}
\end{equation}

Because $\partial_j\varphi$ depends on only $O(1)$ variables, each of the
multi-indices $\nu_a$ and $\beta_{k,a}$ are constrained to have only
$O(1)$ nonzero values (or else the term does not contribute).  Thus
there are only
\[
    \prod_{a=1}^N (C|\alpha_a|)^{|\alpha_a|}.
\]
possible choices for these indices.

To conclude we show that each such term is bounded by the
right hand side of~\eqref{LSK-bd}.  Note that only second-order derivatives
of $\varphi$ appear in~\eqref{expanded-eq}, which is what allows us
to bound $\varphi^\beta_{k,j}\partial_j\varphi$ as well as the derivatives
of $b(\partial_j\varphi)$.  To see how the $(1+|\partial_j\varphi|)^{-K}$
term comes about, there are two cases to consider.  When $j\in S$
this decay comes directly from the $(\partial_j\varphi)^{-K-M_j}$ term
as well as the fact that $|\partial_j\varphi|>1$ due to the
$(1-b(\partial_j\varphi))$ localization.

On the other hand when $j\not\in S$ we have $|\partial_j\varphi|\leq 1$,
so $(1+|\partial_j\varphi|)^{-K} \geq 2^{-K}$, and so by multiplying
$2^{KN}$ we can remove such terms from the right hand side.
\end{proof}
\begin{lemma}
    Let $\omega = (\mbf{s},\mbf{p},\mbf{y})\in\Omega_k(\tau)$ be a path,
and let $J$ be the index with maximal $s_J$.  Suppose that $\omega$ satisfies
\[
||p_j|^2/2 - |p_J|^2/2| \leq C\alpha \max\{s_j^{-1}, |p_J| r^{-1}\}.
\]
for each $j\in[0,k]$.  Then $\omega$ also satisfies
\[
||p_j|^2/2 - |p_{j'}|^2/2| \leq C'\alpha \max\{s_j^{-1}, s_{j'}^{-1}, |p_{j}|r^{-1}, \alpha r^{-2}\}
\]
for any $j,j'\in[0,k]$.
\end{lemma}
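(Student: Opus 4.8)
The plan is to prove the statement by a direct manipulation of the hypothesized inequality; no further structural information about the path is needed, and in particular the assumption that $s_J$ is maximal plays no role in the argument. Throughout I will use the (standard in this paper) normalization $\alpha\geq 1$ and $C\geq 1$.

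First I would apply the triangle inequality: for arbitrary $j,j'\in[0,k]$,
\[
\big||p_j|^2/2 - |p_{j'}|^2/2\big| \leq \big||p_j|^2/2 - |p_J|^2/2\big| + \big||p_J|^2/2 - |p_{j'}|^2/2\big|,
\]
and bound each term by the hypothesis to obtain
\[
\big||p_j|^2/2 - |p_{j'}|^2/2\big| \leq C\alpha\big(\max\{s_j^{-1},|p_J|r^{-1}\} + \max\{s_{j'}^{-1},|p_J|r^{-1}\}\big) \leq 2C\alpha \max\{s_j^{-1},\,s_{j'}^{-1},\,|p_J|r^{-1}\}.
\]
If the maximum on the right is attained at $s_j^{-1}$ or at $s_{j'}^{-1}$, then we are already done, since both quantities appear in the target bound $C'\alpha\max\{s_j^{-1},s_{j'}^{-1},|p_j|r^{-1},\alpha r^{-2}\}$. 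Hence the only case to treat is that the maximum equals $|p_J|r^{-1}$; the key point is that this case hypothesis forces $|p_J|r^{-1}\geq s_j^{-1}$.

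In that case I would feed the inequality $|p_J|r^{-1}\geq s_j^{-1}$ back into the hypothesis applied at the single index $j$: since $\max\{s_j^{-1},|p_J|r^{-1}\}=|p_J|r^{-1}$, we get $\big||p_j|^2 - |p_J|^2\big| \leq 2C\alpha|p_J|r^{-1}$. Writing $|p_j|^2-|p_J|^2 = (|p_j|-|p_J|)(|p_j|+|p_J|)$ and dividing by $|p_j|+|p_J|\geq |p_J|$ (the case $|p_J|=0$ being trivial, as then $|p_j|=0$ too) yields the crucial linearized estimate $\big||p_j|-|p_J|\big|\leq 2C\alpha r^{-1}$, so that $|p_J| \leq |p_j| + 2C\alpha r^{-1}$ and therefore $|p_J|r^{-1} \leq |p_j|r^{-1} + 2C\alpha r^{-2} \leq 2\max\{|p_j|r^{-1},\,2C\alpha r^{-2}\}$. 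Plugging this into the bound from the first step gives
\[
\big||p_j|^2/2 - |p_{j'}|^2/2\big| \leq 2C\alpha|p_J|r^{-1} \leq 8C^2\alpha \max\{|p_j|r^{-1},\,\alpha r^{-2}\},
\]
which is of the required form with $C'=8C^2$.

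There is no serious obstacle here beyond getting the case split organized correctly, and I would emphasize one subtle point: it is precisely the decision to branch on "the maximum equals $|p_J|r^{-1}$" — rather than naively bounding $|p_J|$ directly from the hypothesis — that prevents a square-root term from ever appearing. Indeed, if one estimates $|p_J|$ from the hypothesis in a regime where $s_j^{-1}$ dominates, one only obtains $|p_J|\leq|p_j|+\sqrt{4C\alpha s_j^{-1}}$, whose rescaling by $r^{-1}$ is not obviously absorbed into the allowed terms $\{s_j^{-1},s_{j'}^{-1},|p_j|r^{-1},\alpha r^{-2}\}$. Routing the argument so that the comparison is always against $|p_J|r^{-1}$ keeps every step linear and sidesteps this entirely.
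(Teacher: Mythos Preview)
Your proof is correct and follows essentially the same approach as the paper: triangle inequality, then a case split on whether the maximum is attained at $s_j^{-1}$, $s_{j'}^{-1}$, or $|p_J|r^{-1}$, and in the last case using the hypothesis at index $j$ to linearize $||p_j|-|p_J||\lesssim \alpha r^{-1}$. Your handling of Case~2 is in fact slightly more streamlined than the paper's, which further splits into the sub-cases $|p_J|\gg C\alpha r^{-1}$ and $|p_J|\lesssim C\alpha r^{-1}$; your single bound $|p_J|\leq |p_j|+2C\alpha r^{-1}$ absorbs both at once.
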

\begin{proof}
Using the triangle inequality we have
\[
||p_j|^2/2 - |p_{j'}|^2/2| \leq 2C\alpha
\max\{s_j^{-1}, s_{j'}^{-1}, |p_J|r^{-1}\}.
\]
In the case $\max\{s_j^{-1},s_{j'}^{-1},|p_J|r^{-1}\} \geq |p_J|r^{-1}$, the conclusion is
trivial.  Otherwise, we have in particular $s_{j}^{-1} \leq |p_J|r^{-1}$ and so
\[
||p_j|^2/2 - |p_J|^2/2| \leq C\alpha |p_J|r^{-1}.
\]
There are two cases to consider now.  If $|p_J| \gg C\alpha r^{-1}$, then it follows that
$||p_j| - |p_J|| \lsim r^{-1}$, and so in particular $|p_j| \gtrsim |p_J|$. Otherwise
$|p_J|r^{-1} \leq \alpha r^{-2}$, and so
\[
\max\{s_j^{-1}, s_{j'}^{-1}, |p_J|r^{-1}\}
\lsim \max\{s_j^{-1}, s_{j'}^{-1}, \alpha r^{-2}\}.
\]
\end{proof}

\begin{proof}[Proof of Lemma~\ref{lem:int-by-pts} using Lemma~\ref{lem:osc-int}]
Using the identity $\Id = \int \ket{\xi}\bra{\xi}\diff\xi$ we have
\begin{equation}
\begin{split}
T_k &= \int_{\PhaseSpace}\int_{\PhaseSpace}
\ket{\eta}\bra{\xi}\braket{\eta| T_k|\xi}\diff\xi\diff\eta. \\
&= \int_{\PhaseSpace}\int_{\PhaseSpace}\int_{\Omega_k(\tau)}
\ket{\eta}\bra{\xi}
\braket{\eta | O_\omega | \xi} \diff\omega \diff\xi\diff\eta
\end{split}
\end{equation}
The idea is to use Lemma~\ref{lem:osc-int} in the integration over
the $\omega$ variables.
As a first step we expand the definition
of the integrand.
\begin{equation}
    \langle \phi_\eta, O_\omega\phi_\xi\rangle
=
\Ft{\phi_\eta}(p_k) (\Ft{\phi_\xi}(p_0))^*
\prod_{j=0}^k e^{-is_j|p_j|^2/2}\prod_{j=1}^k e^{-iy_j(p_j-p_{j-1})}
\Ft{V_{y_j}}(p_j-p_{j-1}).
\end{equation}
Letting $\xi=(x,p)$ and $\eta=(y,q)$, we have
\begin{equation}
    \begin{split}
    \Ft{\phi_\xi}(p_0)
    &= r^{d/2} e^{-ip_0\cdot x} \Ft{\chi_{env}}(r(p_0-p)) \\
    \Ft{\phi_\eta}(p_k)
    &= r^{d/2} e^{-ip_k\cdot y} \Ft{\chi_{env}}(r(p_k-q)),
    \end{split}
\end{equation}
where $\chi_{env}$ is the envelope function defining the family of
wavepackets, and we can choose this family so that
$\Ft{\chi_{env}}$ has support in the ball $B_{1/r}$.

The integration over $\omega$ therefore has an oscillation described
by the phase function
\begin{equation}
\varphi_{\xi,\eta}(\omega)
= p_0\cdot x - p_k\cdot y - \sum_{j=0}^k s_j|p_j|^2/2
- \sum_{j=1}^k y_j\cdot(p_j-p_{j-1}).
\end{equation}
Before we can apply Lemma~\ref{lem:osc-int} we need to deal with the
fact that we integrate over $\Omega_k(\tau)$, which is singular due to
the integration over the simplex $\triangle_k(\tau)$.  To deal
with this problem we decompose the indicator function of the simplex
as a sum of smooth functions.   The integration over $\triangle_k$
has the form
\[
    \int_{\triangle_k(\tau)} f(\mbf{s}) \diff\mbf{s}
    = \int_{\Real^{k+1}} \delta(\tau - \sum s_j) f(\mbf{s})
    \prod_{j=0}^k \One_+(s_j) \diff \mbf{s},
\]
where $\One_+(s) = \One_{\{s>0\}}$ is the indicator function for the
set of positive numbers.

First we decompose $\One_+$ as a sum of smooth functions,
\[
    \One_{\{s>0\}} = \sum_{j=-\infty}^\infty \chi_j(s),
\]
where $\rho_j$ has support
on the set $[2^{j-1}, 2^{j+1}]$, and each $\rho_j$ satisfies
\[
    \sup |D^k \rho_j| \leq (C 2^{-j} k^{1.001})^k.
\]
Given a multi-index $\mbf{a}=(a_0,a_1,\dots,a_k)$, we choose a special index
$J=J(\mbf{a})$ so that $a_J$ is minimized (breaking ties by, say, choosing the
minimal such index).  We take care of the delta function on the $\mbf{s}$
variables by writing
\[
s_J = \tau - \sum_{j\not=J(\mbf{a})} s_j.
\]
We will also decompose the integral according to the magnitude of the kinetic
energy $|p_J|^2/2$.  In this case we write
\[
1 = \sum_{j=b_0}^\infty \chi_j^{ke}(|p_J|^2/2),
\]
where $2^{b_0}\sim \alpha^2 r^{-2}$ and $\chi_{b_0}^e$ is supported on $[-10r^{-2},10r^{-2}]$
and is smooth to scale $r^{-2}$, and each of the $\chi_j^e$ is the same as above
(the letter $ke$ simply denotes that these cutoffs are used for kinetic energy).

Finally we can write down the decomposition of the integral that we will use:
\begin{equation}
    \label{a-Tk-decomp}
\begin{split}
\langle \phi_\eta, T_k \phi_\xi\rangle
= \sum_{\mbf{a}}\sum_{b=b_0}^\infty
&\int_{\Real^{k+1}}\diff\mbf{s}\int_{(\Real^d)^{k+1}}\diff\mbf{p}
\int_{(\Real^d)^k} \diff\mbf{y}
\chi_{a_J}(\tau -\sum_{j\not= J} s_j)
\Big(\prod_{j\not=J(\mbf{a})}^k \chi_{a_j}(s_j)\Big)
\chi_b^{ke}(|p_J|^2/2) \\
&\qquad e^{i\varphi(\mbf{s},\mbf{p},\mbf{y})}
r^d\Ft{\chi_{env}}(r(p_0-p)) \Ft{\chi_{env}}(r(p_k-q))
\prod_{j=1}^k \Ft{V_{y_j}}(p_j-p_{j-1}).
\end{split}
\end{equation}
When discussing a particular term in the sum, we will use $E=E(b)=2^b$ to
denote the `dominant' kinetic energy scale of the path.
Now we compute the differential of $\varphi$ with respect to the $\omega$
variables (except $s_J$, which is determined by the rest of the $s_j$)
\begin{equation}
\begin{split}
    \partial_{s_j} \varphi &= |p_j|^2/2 - |p_J|^2/2 \\
    \partial_{p_j} \varphi &= y_{j+1} - y_j - s_jp_j.
\end{split}
\end{equation}
For each $j$ let $\ell_j := \alpha \max\{2^{-a_j}, E^{1/2} r^{-1}\}$, and set
\[
    \chi_{main,\mbf{a}}(\omega)
    = \prod_{j\not= J(\mbf{a})} \rho(\ell_j^{-1}(|p_j|^2/2-|p_J|^2/2))
\prod_{j=0}^k \rho( \alpha^{-1} r^{-1} (y_{j+1}-y_j-s_jp_j)).
\]
The idea is to decompose
\begin{equation*}
\begin{split}
\braket{\phi_\eta, T_k,\phi_\xi}
= \sum_{\mbf{a}} &\int_{\Real^{k+1}}\diff\mbf{s} \int_{(\Real^d)^{k+1}} \diff\mbf{p}
\int_{(\Real^d)^k}\diff\mbf{y} (\prod_{j=0; j\not= J(\mbf{a})}^k \chi_{a_j}(s_j))
\chi_0(\tau - \sum_{j\not= J(\mbf{a})} s_j) \\
&\qquad \times e^{i\varphi(\mbf{s},\mbf{p},\mbf{y})}
 r^d \Ft{\chi_{env}}(r(p_0-p))\Ft{\chi_{env}}(r(p_k-q)) \prod_{j=1}^k \Ft{V_{y_j}}(p_j-p_{j-1}) \\
&\qquad \qquad \times ( \chi_{main,\mbf{a}}(\omega) + (1 - \chi_{main,\mbf{a}})(\omega)).
\end{split}
\end{equation*}
We will keep the first term (with $\chi_{main,\mbf{a}}$ as the main term, and apply
Lemma~\ref{lem:osc-int} to the second term (with $(1-\chi_{main,\mbf{a}})$) to obtain a bound that is
negligible in the operator norm (after applying Schur's lemma).

To get the error to be negligible we have to verify that the cutoff function
$\chi_{main,\mbf{a}}$ is sufficiently smooth.  First we investigate the smoothness
of $\chi_{main,\mbf{a}}$ in the $p_j$ variable.  The term
\[
\rho(\alpha^{-1}r^{-1}(y_{j+1}-y_j -s_jp_j))
\]
is smooth to scale $\alpha r s_j^{-1}$, which is larger than $r$ because $r^2 \geq\tau$.  The
other term is
\[
\rho(\ell_j^{-1}(|p_j|^2/2-|p_J|^2/2)),
\]
which is smooth in the $p_j$ variable to scale $\ell_j |p_j|^{-1}$.  We claim that
$\ell_j|p_j|^{-1}\gtrsim r$.  Rearranging and using our choice of $\ell_j$, this is equivalent
to showing
\begin{equation}
\label{eq:pj-smoothness-bd}
|p_j| \lsim \alpha \max\{2^{-a_j}r, E^{1/2}\}
\end{equation}
To see this we use the cutoff itself to estimate
\[
||p_j|^2/2 - |p_J|^2/2|\leq \ell_j
\]
so that
\[
||p_j|^2/2| \leq 2E + \ell_j
\]
There are two cases to consider.  The first is that
$2^{-a_j} < E^{1/2}r^{-1}$.  In this case, $\ell_j = \alpha E^{1/2}r^{-1}$,
and so in particular $|p_j|^2 \leq 2E$, since $E\geq \alpha^2r^{-2}$.
Then~\eqref{eq:pj-smoothness-bd} immediately follows.

The second case is that $2^{-a_j} > E^{1/2}r^{-1}$.  In this case,
$\ell_j = \alpha 2^{-a_j}$, so
\[
|p_j| \leq \sqrt{2E} + \sqrt{\alpha 2^{-a_j}}.
\]
The bound $\sqrt{2E} \leq \frac{\alpha}{2} E^{1/2}$ is obvious,
while $\sqrt{\alpha 2^{-a_j}} \leq \alpha 2^{-a_j} r$ follows from
$\tau \leq \alpha r^2$.

It remains to understand the smoothness of $\chi_{main,\mbf{a},b}$ in the
$s_j$ variable.  We only need to look at the term
\[
\rho(\alpha^{-1}r^{-1}(y_{j+1}-y_j-s_jp_j)),
\]
which is smooth in the $s_j$ variable to scale $\alpha r|p_j|^{-1}$.
We need to show that $\alpha r|p_j|^{-1} \geq \ell_j^{-1}$ in order for the
localization in $\partial_{s_j}\varphi$ to be effective, and fortunately this is
precisely the inequality we just proved.

Now, when we apply Lemma~\ref{lem:osc-int} to the error term (with $(1-\chi_{main,\mbf{a},b}(\omega))$),
the integrand is supported on a set with negligible magnitude, and there is no trouble in applying
Schur's lemma.
\end{proof}

\section{A first operator moment estimate}
\label{sec:first-operator-bd}
Lemma~\ref{lem:int-by-pts} will allow us to
estimate moments of the collision operator $T_k$, and in particular
in this section we aim to show that
\[
    \Expec \|e^{i\sigma H} - e^{-i\sigma\Delta/2}\|_{op}^m
\]
with $m\sigma\ll \eps^{-2}$.
We will use this with $\sigma=\eps^{-1.1}$
and $m=\eps^{-0.5}$ to show that we can occasionally `turn off' the
potential for a duration $\sigma$.  In particular, this allows us to enforce
a constraint such as $s_0>\sigma$.  Combined with the localization
provided by Lemma~\ref{lem:int-by-pts}, this gives a more useful constraint
on the kinetic energy of the path.  To see this, consider a path
$\omega\in\Omega_{\alpha,k}(t;E)$, and recall the kinetic energy
constraint
\[
    ||p_j|^2/2-E|\leq \alpha\max\{E^{1/2}r^{-1}, s_j^{-1}\}.
\]
If $s_j>E^{-1/2}r$ (so that the time between the $j$-th and $j+1$-th
collisions is somewhat bigger than $\eps^{-1}$), then this implies that
\[
    ||p_j|^2/2-E| \leq \alpha E^{1/2}r^{-1},
\]
which in turn is approximately equivalent to
\[
    ||p_j|-(2E)^{1/2}| \leq \alpha r^{-1}.
\]
Notice that this also implies, that for these segments
so that in particular $|x_{j}-x_{j+1}|\gtrsim r$, which is the condition
that the $j$-th and $(j+1)$-th collisions are spatially separated by
the wavepacket scale.
The benefit then of enforcing that $s_0\geq \sigma$ is that one can replace
$E$ by $|p_0|^2/2$.

It is also worth pointing out that the most interesting energy scale
for us is $E\sim 1$.  Wavepackets with high energy do not interact
very much with the weak potential, and in $d\geq 3$ there are not enough
momenta at low energies to scatter into, and so scattering is also
suppressed in the lower energy regime.  So while we keep around the
parameter $E$ throughout our discussion of paths, it is safe to imagine
that $E\sim 1$.  The following lemma justifies this more formally.

The following lemma states a bound in terms of operator norm.  This is
impossible for most stationary potentials because there must exist,
with high probability, \emph{some} region in $\Real^d$ on which the
potential behaves badly.  Therefore for the next lemma we assume that the
potential is supported in a ball of radius $R=\eps^{-1000}$ (say), so
we work with the potential multiplied by a cutoff, $V\chi_{R}$.

\begin{lemma}
    \label{lem:collision-strength}
There exists a constant $C=C(d)$ such that for all
admissible potentials and for $k,m\in\bbN$ satisfying $km\leq \eps^{-1}$,
the following bound holds:
\begin{equation}
        (\Expec \|T_k(s) \|_{op}^{2m})^{1/{(km)}}
        \leq R^{C/m} \eps^2 s \,\,(C(km)^8 |\log\eps|^{10}).
\end{equation}
\end{lemma}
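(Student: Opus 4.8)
The plan is to estimate the moments $\Expec \|T_k(s)\|_{op}^{2m}$ via the moment method, writing $\Expec \|T_k\|_{op}^{2m} \leq \Expec \Tr((T_k^*T_k)^m)$ and then applying Lemma~\ref{lem:int-by-pts} to replace $T_k$ by the localized version $T_k^\chi$ (the integration-by-parts cutoff is essential for the trace to be finite, which is why we must work with the cutoff potential $V\chi_R$). After this reduction, the trace $\Tr((T_k^\chi{}^*T_k^\chi)^m)$ unfolds into an integral over a cyclic concatenation of $2m$ path segments, each with $k$ collisions, glued along phase-space points; this is exactly an extended-path integral of the type analyzed in Sections~\ref{sec:extended-paths}--\ref{sec:path-combo}, but here the combinatorial factor is harmless because $km \leq \eps^{-1}$ is small, so $(Ckm)^{Ckm}$-type factors are acceptable.

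The main work is a volumetric estimate on this trace integral. First I would take the expectation inside, using that $V$ is admissible (Lemma~\ref{lem:admissible-V}) so that $\Expec \prod_a \Ft{V_{y_a}}(q_a)$ factors along the clustering partition $P(\mathbf y)$ of the collision locations and each cluster contributes a near-delta-function enforcing approximate momentum conservation $|\sum_{a\in S} q_a|\lesssim r^{-1}$, together with decay $\prod_a(1+|q_a|)^{-10d}$ and combinatorial weights $(C_V|S|)^{2|S|}$. Then I would bound the resulting integrand by a product of per-variable constraint functions — ordering the path variables as in~\eqref{eq:Gamma-ordering} and applying Lemma~\ref{lem:abstract-holder} — so that: each of the $2mk$ collisions contributes a factor $\eps$ from the coupling constant; the cutoff $\chi^{path}$ forces approximate conservation of kinetic energy, confining each momentum variable to an annulus of thickness $r^{-1}$ (volume $\lesssim r^{-1}|p|^{d-1}$, and with the $(1+|q|)^{-10d}$ factor effectively $\lesssim r^{-1}$); the cluster delta-functions kill one momentum integral per cluster; and the time variables are constrained by $\sum s_j = s$, giving a factor $s$ for each "free" time degree of freedom and only $\sim |p|^{-1}r$ for each time variable pinned by the transport constraint $y_{j+1}\approx y_j+s_jp_j$ inside a recollision cluster (as in~\eqref{eq:s-cluster-bd}). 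Each matched pair of collisions then contributes $\eps^2 \cdot s \cdot (r^{-1})\cdot r = \eps^2 s$, so across $2mk$ collisions organized into $mk$ pairs one gets $(\eps^2 s)^{mk}$. The powers of $R$ arise only from crudely bounding the spatial extent of the cutoff potential where the triangle inequality is used loosely; keeping track of these gives the $R^{C/m}$ prefactor after taking the $(km)$-th root.

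The key step — and the main obstacle — is organizing the integration so that one genuinely gets a clean factor of $\eps^2 s$ per collision pair rather than per collision, i.e.\ correctly accounting for which time and momentum variables are "pinned" by the clustering. This requires the observation (a simplified version of Lemma~\ref{lem:is-ladder} / the recollision analysis of Section~\ref{sec:subkinetic-semigroup}) that the clustering partition must pair up collisions so that each cluster of size two identifies one momentum variable via its delta-function and one time variable via the $y_{j+1}\approx y_j+s_jp_j$ constraint; clusters of size $\geq 3$ only help (they remove more degrees of freedom). Since $km\leq\eps^{-1}$ we do not need any of the delicate geometric gains (tube, cone events) and can afford the large constant $(Ckm)^8|\log\eps|^{10}$, absorbing the $\log$ factors from the dyadic decomposition of the time simplex and the $(C_V|S|)^{2|S|}$ combinatorial weights. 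Assembling $(\eps^2 s)^{mk}(C(km)^8|\log\eps|^{10})^{mk} R^C$ and taking the $(2m)$-th root for the operator norm and then the $(km)$-th root of that gives the stated bound.
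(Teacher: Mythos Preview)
Your proposal is correct and follows essentially the same route as the paper: moment method via $\Expec\|T_k\|_{op}^{2m}\leq\Expec\trace((T_k^{\chi*}T_k^\chi)^m)$, expansion of the trace in the wavepacket basis as a cyclic integral over $2m$ path segments, splitting the potential expectation via Lemma~\ref{lem:admissible-V}, and then the variable-by-variable accounting using Lemma~\ref{lem:abstract-holder}. One minor simplification: you do not need any structural input from Lemma~\ref{lem:is-ladder} or the recollision analysis here---the paper simply sums over \emph{all} partitions $Q$ of the $2mk$ collisions (absorbing the $(Ckm)^{Ckm}$ combinatorial factor), and the key inequality $\eps^{2km}s^{|Q|}\leq(\eps^2 s)^{km}$ follows just from $|Q|\leq km$, which holds because $\Expec V=0$ forces every cluster to have at least two elements.
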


\begin{proof}
To estimate the operator norm we will use the bound
\[
    \|A\|_{op}^{2m} \leq \trace((A^*A)^m).
\]
We will compute the trace of the operator
$B^m := (T_k^*T_k)^m$ in the wavepacket basis.
We have for any trace-class operator $B$ the identity
\[
    \trace(B) =
    \int_{\PhaseSpace} \braket{\xi|B|\xi} \diff \xi,
\]
which follows from $\trace(B) = \int K_B(x,x)\diff x$ (where $K_B$ is
the Schwartz kernel of $B$).  Therefore
\begin{equation}
    \label{trace-bd}
    \Expec \|T_k\|_{op}^{2m} \leq
    \int_{\PhaseSpace} \Expec \braket{\xi | ((T_k^\chi)^*T_k^\chi)^m | \xi} \diff \xi.
\end{equation}
We expand each of the operators $T_k^\chi$ as an integral over
paths $\omega_\ell$
\begin{equation}
\label{eq:Gamma-ordering}
    \Gamma =
    (\xi_0,\omega_1,\xi_1,\xi_2,\omega_2,\xi_3,\xi_4,
    \cdots, \xi_{4m-2},\omega_{2m}, \xi_{4m-1})
\end{equation}
be a sequence of paths with intermediate points in phase space, and write

\begin{equation}
\label{eq:trace-integral}
\begin{split}
\Expec \braket{\xi|((T_k^\chi)^*T_k^\chi)^m|\xi}
=
\int \braket{\xi|\xi_{4m-1}}\braket{\xi_0|\xi}
\prod_{\ell=1}^{2m-1} \braket{\xi_{2\ell-1} | \xi_{2\ell}}
\prod_{\ell=1}^{2m} \chi_\alpha(\omega_\ell;\xi_{2\ell-2},\xi_{2\ell-1})
\Expec \prod_{\ell=1}^{2m} \braket{\xi_{2\ell-1} | O_{\omega_\ell}|\xi_{2\ell}}
\diff\bm{\omega}\diff\bm{\xi}.
\end{split}
\end{equation}

We expand the integrand above using the identity
\[
\braket{\xi_{2\ell-2}|O_{\omega_\ell}|\xi_{2\ell-1}}
=
\eps^k
\braket{\xi_{2\ell-2}|p_{\ell,0}}\braket{p_{\ell,k}|\xi_{2\ell-1}}
e^{i\varphi(\omega_\ell)} \prod_{j=1}^k \Ft{V_{y_{\ell,k}}}(q_{\ell,k}).
\]
Let $K = [1,2m] \times [1,k]$ be the indexing set for the collisions, and
let $P(\Gamma)\in\mcal{P}(K)$ be the partition of $K$ induced by the path
$\Gamma$ ($P(\Gamma)$ is the finest partition of $K$ such that $a\sim_{P} b$
when $|a-b|\leq 2r$).  Then using $|\braket{\xi|p}|\leq Cr^{d/2}$ and
Lemma~\ref{lem:admissible-V} we can bound the expectation in the integrand
as follows:
\[
\big|
\Expec \prod_{\ell=1}^{2m} \braket{\xi_{2\ell-1}|O_{\omega_\ell}|\xi_{2\ell}}
\big|
\leq (Cr)^{dmk} \eps^{2mk}
\prod_{a\in K} (1 + |q_a|)^{-20d} \One(|y_a|\leq R)
\sum_{Q\leq P(\Gamma)}
\prod_{S\in Q} |S|^{2|S|}
r^{-d} \exp(- c |\frac{r}{|S|} \sum_{a\in S} q_a|^{0.99}).
\]
The term $\One(|y_a|\leq R)$ comes from the fact that we are actually
working with the localized potential $V\chi_R$ on a ball of radius $R=\eps^{-100}$.

The combinatorial factor $\prod_{S\in Q} |S|^{2|S|}$ is at most $(2km)^{4km}$,
and the sum over $Q\leq P(\Gamma)$ involves at most $|\mcal{P}(K)|\leq (Ckm)^{2km}$ terms.

To prove Lemma~\ref{lem:collision-strength} it therefore suffices to
prove that for any $Q,P\in\mcal{P}(K)$ with $Q\leq P$ the bound
\begin{equation}
\label{eq:trivial-path-bd}
\begin{split}
\int
&
r^{dmk}
|\Braket{\xi|\xi_{4m-1}}| |\Braket{\xi_0|\xi}|
 \prod_{\ell=1}^{2m-1} |\Braket{\xi_{2\ell-1}|\xi_{2\ell}}
\prod_{\ell=1}^{2m} \chi_\alpha(\omega_\ell;\xi_{2\ell-2},\xi_{2\ell-1})
\One(P(\Gamma)=P) \\
&
\qquad \qquad \prod_{a\in K} (1+|q_a|)^{-20d} \One(|y_a|\leq R)
\prod_{S\in Q} r^{-d}\exp(-c | \frac{r}{|S|} \sum_{a\in S} q_a|^{0.99}) \diff\Gamma \\
& \qquad \qquad \qquad \qquad \qquad
\qquad \qquad \qquad \qquad \qquad
\leq
(Cmk)^{2mk} (\eps^{2}t)^{km}
(1 + R^{-C} d_r(\xi,(0,0)))^{-2d}.
\end{split}
\end{equation}

First we will prove the bound without the localization $(1+R^{-C}d_r(\xi,\bm{0}))^{-2d}$,
and then explain how to modify the argument to obtain this (very mild) localization,
which is simply required to have integrability in $\xi$.

The proof of~\eqref{eq:trivial-path-bd} follows very closely the argument in the proof of
Proposition~\ref{prp:short-ladder-compare}.  That is, we order the variables
in the path $\Gamma$ and define constraints on each
variable $p_a$, $y_a$, and $\xi_\ell$ in terms of variables that are earlier in the ordering.

The variable labels are given by the set
\[
\Lambda =
\{(\qt{s},a), (\qt{p},a), (\qt{y},a)\}_{a\in K} \cup \{(\qt{\xi},\ell)\}_{\ell=0}^{4m-1}.
\]
We order the variables so that $(\qt{y},q) \leq \qt{p},a)\leq (\qt{s},a)$ and
$(\qt{\xi},2\ell-2) \leq (\qt{y},\ell,j)\leq (\qt{\xi},2\ell-1)$, and otherwise
so that the ordering is consistent with the usual ordering on $K$ and $[0,4m-1]$.
We use thise ordering to define the partial paths $\Gamma_{\leq \lambda}$ and
$\Gamma_{<\lambda}$ for $\lambda\in\Lambda$.

In the path integral we are considering there is no global conservation of kinetic energy because
we do not have any lower bounds on $s_{\ell,0}$ or $s_{\ell,k}$.  Given a path $\Gamma$,
we define a local kinetic energy $E_\ell$ to be the value of $|p_{\ell,j}|^2/2$ with
$j=\argmax s_{\ell,j}$.  Now we can define the local variable constraints:

\begin{equation}
f_{\qt{p},a}(\bm{\Gamma}_{\leq\qt{p},a})
:=
\begin{cases}
r^d \exp(-c | \frac{r}{km}\sum_{a\in S}q_a|^{0.99}),
&a = \max S \text{ for some } S\in Q\\
r^d \One(|p_{\ell,0} - (\xi_{2\ell-2})_p|\leq \alpha r^{-1}),
&a = (\ell,0) \\
(1+|p_a-p_{a-1}|)^{-20d} \One(||p_a| - E_\ell||\leq \alpha \max\{|E_\ell|^{-1/2}s_a^{-1},r^{-1}\}),
&\text{ else}.
\end{cases}
\end{equation}
The standard $y$ constraints are given by
\begin{equation}
f_{\qt{y},a}(\bm{\Gamma}_{\leq\qt{y},a}) :=
\begin{cases}
r^{-d}\One(|y_{\ell,1} - ((\xi_{2\ell-2})_x + s_{\ell,0}p_{\ell,0})| \leq \alpha r),
&a = (\ell,1) \\
r^{-d}\One(|y_{\ell,j} - (y_{\ell,j-1}+s_{\ell,j-1}p_{\ell,j-1})|\leq \alpha r), &a = (\ell,j)
\text{ for } j>1.
\end{cases}
\end{equation}
Finally, the standard $\xi$ constraints are
\begin{equation}
f_{\qt{\xi},\ell}(\bm{\Gamma}_{\leq\qt{\xi},\ell})
:=
\begin{cases}
|\braket{\xi|\xi_0}| & \ell = 0 \\
|\braket{\xi_{2n-1}|\xi_{2n}}|, &\ell = 2n, n\in [1,2m-1] \\
\One(d_r(\xi_{2n-1}, (y_{n,k} + s_{n,k}p_{n,k},p_{n,k})) \leq \alpha), &\ell=2n-1, n\in [1,2m].
\end{cases}
\end{equation}
A simple accounting of terms (unpacking the constraints
present in $\chi_\alpha(\omega_\ell;\xi_{2\ell-2},\xi_{2\ell-1})$ confirms the estimate
\begin{equation}
\begin{split}
&r^{dmk}
|\Braket{\xi|\xi_{4m-1}}| |\Braket{\xi_0|\xi}|
 \prod_{\ell=1}^{2m-1} |\Braket{\xi_{2\ell-1}|\xi_{2\ell}}
\prod_{\ell=1}^{2m} \chi_\alpha(\omega_\ell;\xi_{2\ell-2},\xi_{2\ell-1}) \\
& \qquad \qquad \prod_{a\in K} (1+|q_a|)^{-20d} \One(|y_a|\leq R)
\prod_{S\in Q} r^{-d}\exp(-c | \frac{r}{|S|} \sum_{a\in S} q_a|^{0.99}) \\
&
\qquad\qquad
\qquad\qquad
\qquad\qquad
\leq
\prod_{a\in K}
f_{\qt{p},a}(\Gamma_{\leq \qt{p},a})
f_{\qt{y},a}(\Gamma_{\leq \qt{p},a})
\prod_{\ell=0}^{4m-1}
f_{\qt{\xi},\ell}(\Gamma_{\leq \qt{\xi},\ell}).
\end{split}
\end{equation}

Now for the time variable constraints $f_{\qt{s},a}$ we use
the indicator function $\One(P(\Gamma)=P)$ as follows:
\begin{equation}
f_{\qt{s},a}(\Gamma_{\leq \qt{s},a})
:=
\begin{cases}
\One_{[0,t]}(s_a), & a = \min S \text{ for some } S\in Q \\
\One(|y_{\min S} - (y_{a-1} + s_{a-1}p_{a-1})| \leq 4\alpha r)
, & a \in S\setminus \{\min S\} \text{ for some } S\in Q.
\end{cases}
\end{equation}

For $a=(\ell,j)\in S\setminus\{\min S\}$, we have the estimate
\[
\sup_{\Gamma_{< \qt{s},a}}
\int f_{\qt{s},a}(\Gamma_{\leq \qt{s},a}) \diff s_a
\leq 4E_\ell^{-1/2}\alpha r.
\]
For such $a$, the integral over $p_{a-1}$ produces a factor
of $\alpha r^{-1}\min\{E_\ell^{(d-1)/2}, 1\}$, so multiplying these
factors yields at a contribution of at most $C \alpha^2$,
since $\min\{E^{(d-2)/2}, E^{-1}\} \leq 1$ for any $E$.

Therefore, multiplying the factors from all of the variables together,
and using Lemma~\ref{lem:sp-integral} to deal with immediate
recollisions, we arrive at the estimate
\begin{equation*}
\begin{split}
\int
&
r^{dmk}
|\Braket{\xi|\xi_{4m-1}}| |\Braket{\xi_0|\xi}|
 \prod_{\ell=1}^{2m-1} |\Braket{\xi_{2\ell-1}|\xi_{2\ell}}
\prod_{\ell=1}^{2m} \chi_\alpha(\omega_\ell;\xi_{2\ell-2},\xi_{2\ell-1})
\One(P(\Gamma)=P) \\
&
\qquad \qquad \prod_{a\in K} (1+|q_a|)^{-20d} \One(|y_a|\leq R)
\prod_{S\in Q} r^{-d}\exp(-c | \frac{r}{|S|} \sum_{a\in S} q_a|^{0.99}) \diff\Gamma \\
& \qquad \qquad \qquad \qquad \qquad
\qquad \qquad \qquad \qquad \qquad
\leq
(C\alpha^{10} mk)^{2mk} \eps^{2km} t^{|Q|},
\end{split}
\end{equation*}
and since $Q$ has at most $mk$ clusters, $\eps^{2km}t^{|Q|}\leq (\eps^2t)^{km}$.

The localization term $(1+R^{-C}d_r(\xi,\bm{0}))^{-2d}$ comes from the following additional
considerations.  First, if $|\xi_x| \geq 10 \alpha R|\xi_p| + 10R$, then any choice of
$s_{1,0}$ and $p_{1,0}$ forces $|y_{1,1}|>R$, and therefore the integrand vanishes.
If $|\xi_p|<R^2$ then this enforces an upper bound on $d_r(\xi,\bm{0})\leq R^C$.
On the other hand if $|\xi_p|>R^2$ then either $|p_a|\geq \eps^{-10}R$ for all $a\in K$,
in which case each time variable is constrained to be at most $\eps^{10}$ so that the collision
locations do not escape the ball $B_R$.  Otherwise if $|p_a|\leq \eps^{-10}R$ for some $a$,
then $\prod (1+|q_a|)^{-20d} \leq R^{-10d}|\xi_p|^{-10d}$ and we still obtain decay in terms
of $d_r(\xi,\bm{0})$.
\end{proof}

\section{Interspersing the free evolution}
\label{sec:free-intersperse}
The primary application of Lemma~\ref{lem:collision-strength}
is to show that, up to a small error, we may intersperse the evolution
under the Hamiltonian $H$ with periods of evolution under the free
Hamiltonian $-\Delta/2$.  This corresponds to evolution of a system in
which the potential is periodically turned off and on again in a smooth way.

This modification is convenient for ensuring conservation of kinetic
energy across path segments.  One way to see how this is helpful is the
following physical interpretation.  The channel $\Evol_\tau^N$ corresponds
to an evolution in which the potential discontinuously changes at
every time $j\tau$, $j\in[N]$.  The abrupt time-dependence of the potential
disturbs conservation of energy and in particular conservation of kinetic
energy.  We restore an approximate conservation of kinetic energy by smoothly
ramping up and down the strength of the potential.
\begin{lemma}
    \label{lem:free-replace}
    Define the unitary operator $U(s)$ by
    \[
        U(s) := e^{i(\tau-s) H} e^{-is\Delta/2}.
    \]
    Then, for $\sigma_1,\dots,\sigma_m<\eps^{-1.5}$,
and sufficiently small $\eps$, we have
    \[
        \Expec
        \|e^{im\tau H} -
U(\sigma_m)U(\sigma_{m-1})\cdots U(\sigma_1)\|_{op}^2
        \leq  m^2 \eps^{2/13}.
    \]
\end{lemma}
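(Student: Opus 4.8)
The plan is to prove Lemma~\ref{lem:free-replace} by a standard telescoping argument, reducing the claim to a single-step replacement estimate which in turn follows from Lemma~\ref{lem:collision-strength}. First I would write the operator difference as a telescoping sum: setting $W_j := U(\sigma_j)U(\sigma_{j-1})\cdots U(\sigma_1)$ (with $W_0 = \Id$), and observing that $e^{im\tau H} = (e^{i\tau H})^m$, I would interpolate between $e^{im\tau H}$ and $W_m$ by the sequence of operators $(e^{i\tau H})^{m-j} W_j$, so that
\[
e^{im\tau H} - W_m = \sum_{j=1}^m (e^{i\tau H})^{m-j}
\big( e^{i\tau H} - U(\sigma_j)\big) W_{j-1}.
\]
Since $e^{i\tau H}$ is unitary and each $U(\sigma_j) = e^{i(\tau-\sigma_j)H}e^{-i\sigma_j\Delta/2}$ is unitary (hence $W_{j-1}$ is unitary and norm-preserving), the triangle inequality gives
\[
\|e^{im\tau H} - W_m\|_{op} \leq \sum_{j=1}^m \|e^{i\tau H} - U(\sigma_j)\|_{op}.
\]
Taking the $L^2$ norm in $\omega$ and using Minkowski's inequality, it then suffices to bound $\Expec \|e^{i\tau H} - U(\sigma)\|_{op}^2$ uniformly over $\sigma \leq \eps^{-1.5}$ by (a constant times) $\eps^{2/13}$, since then $\big(\Expec\|e^{im\tau H}-W_m\|_{op}^2\big)^{1/2} \leq m \eps^{1/13}$, which squares to the claimed $m^2\eps^{2/13}$.

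Next I would analyze the single-step error $e^{i\tau H} - U(\sigma) = e^{i\tau H} - e^{i(\tau-\sigma)H}e^{-i\sigma\Delta/2}$. Factoring out the unitary $e^{i(\tau-\sigma)H}$ on the left, its operator norm equals $\|e^{i\sigma H} - e^{-i\sigma\Delta/2}\|_{op}$. This is exactly the quantity discussed at the start of Section~\ref{sec:first-operator-bd}. To control it I would use the Duhamel expansion of $e^{i\sigma H}$: by the iterated Duhamel identity (as set up in Section~\ref{sec:duhamel}), $e^{i\sigma H} - e^{-i\sigma\Delta/2} = \sum_{k=1}^{k_{max}} T_k(\sigma) + R_{k_{max}}(\sigma)$, where the remainder $R_{k_{max}}$ involves $T_{k_{max}}$ composed with a unitary. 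Thus
\[
\|e^{i\sigma H} - e^{-i\sigma\Delta/2}\|_{op}
\leq \sum_{k=1}^{k_{max}} \|T_k(\sigma)\|_{op} + \int_0^\sigma \|T_{k_{max}}(s)\|_{op}\diff s.
\]
Applying Lemma~\ref{lem:collision-strength} with a suitably chosen moment exponent $m$ (e.g. $m \sim \eps^{-0.5}/k$, so that $km \leq \eps^{-1}$ as required), one has for each $k$ the bound $\big(\Expec \|T_k(\sigma)\|_{op}^{2m}\big)^{1/(2m)} \leq \big(R^{C/m}\eps^2\sigma\, C(km)^8|\log\eps|^{10}\big)^{k/2}$. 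With $\sigma \leq \eps^{-1.5}$, the base is $\lesssim \eps^{0.5}|\log\eps|^C (km)^8 R^{C/m} \leq \eps^{0.49}$ for small $\eps$ (choosing $R = \eps^{-1000}$ and $m$ large, $R^{C/m}$ is sub-polynomial and absorbed), so that $\Expec\|T_k(\sigma)\|_{op}^2 \lesssim \eps^{0.49 k}$ after using Hölder/Jensen to pass from the $2m$-th moment to the second moment. Summing the resulting geometric series in $k$ and handling the $R_{k_{max}}$ term the same way (its contribution is $\lesssim \eps^{0.49 k_{max}} \ll \eps^{100}$), I would conclude $\Expec\|e^{i\sigma H} - e^{-i\sigma\Delta/2}\|_{op}^2 \lesssim \eps^{0.49} \leq \eps^{2/13}$, since $2/13 < 0.49$. (The exact exponent $2/13$ is not sharp; any fixed positive power would do, and the bookkeeping just needs to produce \emph{some} positive power of $\eps$.)

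The main obstacle I anticipate is the bookkeeping around the moment exponents and the role of the cutoff radius $R$. Lemma~\ref{lem:collision-strength} is stated for the \emph{cutoff} potential $V\chi_R$, not $V$ itself, and only gives control of the $2m$-th moment for $km \leq \eps^{-1}$; converting this into a clean second-moment bound on the operator norm of the \emph{full} series (including the remainder) while keeping the $R$-dependence harmless requires care. One needs: (i) to justify replacing $V$ by $V\chi_R$ up to an error that is super-polynomially small — this uses the finite range of dependence of the potential together with the fact that a wavepacket of width $r = \eps^{-1}$ evolving for time $\sigma \leq \eps^{-1.5}$ cannot travel more than $\eps^{-C}$, so the far-away part of the potential is invisible to the relevant matrix elements; (ii) to choose $m$ as a function of $k$ so that $km\le \eps^{-1}$ throughout $k \le k_{max}$, which forces $m$ to be large enough that $R^{C/m} = \eps^{-1000 C/m}$ is $\eps^{o(1)}$ — this is fine since $k_{max}$ is an absolute constant, so $m$ can be taken $\sim \eps^{-0.5}$; (iii) to pass from the $(2m)$-th moment to the $2$nd moment via Jensen, losing nothing since $2 \le 2m$. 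None of these steps is deep, but assembling them so that the final exponent is a genuine positive power of $\eps$ (and in particular exceeds $2/13$) is where the argument must be written carefully. I expect the rest — the telescoping and the triangle inequalities — to be entirely routine.
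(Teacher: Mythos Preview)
Your approach is essentially identical to the paper's: the paper also telescopes by replacing the $\sigma_j$ one at a time by zero (noting $U(0)=e^{i\tau H}$), uses unitarity to strip the outer factors, applies Cauchy--Schwarz to pass from $(\sum a_k)^2$ to $m\sum a_k^2$, and then reduces to bounding $\Expec\|e^{i\sigma H}-e^{-i\sigma\Delta/2}\|_{op}^2$ via the Duhamel expansion and Lemma~\ref{lem:collision-strength}. The paper packages this last step as a separate Lemma~\ref{lem:approx-free}, but the content is exactly what you sketch.

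There is one concrete bookkeeping slip in your moment-exponent choice. You propose $m\sim\eps^{-0.5}$ so that $R^{C/m}$ is harmless, but then $(km)^8\sim\eps^{-4}$, and the base $R^{C/m}\eps^2\sigma\,(km)^8|\log\eps|^{10}$ is of order $\eps^{-3.5}$, not $\eps^{0.49}$ --- the bound blows up. The paper's fix is to take the moment exponent to be a \emph{large absolute constant} $M$ (not growing with $\eps^{-1}$): then $(kM)^8$ is just a constant, while $R^{C/M}=\eps^{-1000C/M}\le\eps^{-10^{-5}}$ once $M$ is chosen large enough, and the base becomes $\lesssim\eps^{0.5-10^{-5}}|\log\eps|^{10}\le\eps^{1/4}$ for small $\eps$. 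The constraint $kM\le\eps^{-1}$ from Lemma~\ref{lem:collision-strength} is trivially satisfied since $k\le k_{max}$ is bounded. With this correction your argument goes through and matches the paper's.
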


To prove Lemma~\ref{lem:free-replace} we first need a bound on
$\|e^{-i\tau H} - e^{i\tau\Delta/2}\|$.  Here we use the
Duhamel expansion
\[
    e^{-i\sigma H} - e^{i\sigma \Delta/2} = \sum_{k=1}^{k_{max}}
    T_k(\sigma) + R_{k_{max}}(\sigma ).
\]
To estimate the remainder $R_k$ we will use the bound
\[
    \|R_k(\sigma )\|_{op} \leq \sigma  \sup_{s<\sigma } \|T_k(s)\|_{op},
\]
which follows from the triangle inequality and the unitarity of
$e^{-i(\sigma -s)H}$.

Using these ingredients, we can compare the random evolution to the
free evolution.
\begin{lemma}
    \label{lem:approx-free}
    For $\sigma<\eps^{-1.5}$ and $m<\eps^{-1/30}$,
    and sufficiently small $\eps>0$,
    we have
    \[
        \Expec \|e^{i\sigma H} - e^{-i\sigma\Delta/2}\|^m
        \leq \eps^{m/8}.
    \]
\end{lemma}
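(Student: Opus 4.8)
The plan is to derive Lemma~\ref{lem:approx-free} directly from the Duhamel expansion
\[
e^{i\sigma H} - e^{-i\sigma\Delta/2} = \sum_{k=1}^{k_{max}} T_k(\sigma) + R_{k_{max}}(\sigma),
\]
combined with the moment bound of Lemma~\ref{lem:collision-strength} and the remainder estimate $\|R_{k_{max}}(\sigma)\|_{op} \leq \sigma \sup_{s<\sigma}\|T_{k_{max}}(s)\|_{op}$. First I would fix $m < \eps^{-1/30}$ and estimate the $m$-th moment of the operator norm of the difference. Using the triangle inequality inside the expectation and then H\"older's inequality, one bounds $\Expec\|e^{i\sigma H} - e^{-i\sigma\Delta/2}\|_{op}^m$ by $(k_{max}+1)^{m-1}$ times the sum over $k\in[1,k_{max}]$ of $\Expec\|T_k(\sigma)\|_{op}^m$ (and the analogous term for $R_{k_{max}}$). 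Since $k_{max}$ is an absolute constant, the factor $(k_{max}+1)^{m-1}$ is at most $C^m$ and is harmless.

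Next I would invoke Lemma~\ref{lem:collision-strength}, which gives $\big(\Expec\|T_k(\sigma)\|_{op}^{2m}\big)^{1/(km)} \leq R^{C/m}\eps^2\sigma\,(C(km)^8|\log\eps|^{10})$, valid since $km \leq k_{max}\eps^{-1/30} \leq \eps^{-1}$ for small $\eps$. Taking both sides to the power $km/2$ and using $R = \eps^{-1000}$, the $R^{C/m}$ factor contributes $R^{Ck/2} = \eps^{-Ck}$, but this is then multiplied by $(\eps^2\sigma)^{km/2}$. With $\sigma \leq \eps^{-1.5}$ we have $\eps^2\sigma \leq \eps^{0.5}$, so $(\eps^2\sigma)^{km/2} \leq \eps^{km/4}$. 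The key point is that the ``good'' power $\eps^{km/4}$ scales with $m$ while the ``bad'' factor $\eps^{-Ck}$ does not; since $k\geq 1$ and $m$ is large (but we only need, say, $m \geq 8C/1$ worth of room — actually the statement is for general $m$, so one should track this carefully), for $\eps$ small enough we get $\eps^{-Ck}(\eps^2\sigma)^{km/2}(C(km)^8|\log\eps|^{10})^{km/2} \leq \eps^{km/8} \leq \eps^{m/8}$, the last inequality using $k\geq 1$. Summing over $k\in[1,k_{max}]$ and including the absolute constants costs only another $C^m$, which is absorbed into $\eps^{m/8}$ by shrinking the exponent slightly (or one simply states the bound with a slightly worse but still positive power and notes $\eps^{m/8}$ holds for $\eps$ below an absolute threshold). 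The remainder term $R_{k_{max}}$ is handled identically: $\|R_{k_{max}}(\sigma)\|_{op}^m \leq \sigma^m \sup_{s<\sigma}\|T_{k_{max}}(s)\|_{op}^m$, and the same moment bound applies with $k = k_{max}$, where the extra $\sigma^m \leq \eps^{-1.5m}$ is dominated by the gain $(\eps^2\sigma)^{k_{max}m/2}$ since $k_{max}$ is a large absolute constant (this is precisely why $k_{max} = 10^{100}$ is chosen so large).

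The main obstacle — really the only subtlety — is bookkeeping the competition between the fixed loss $\eps^{-Ck}$ coming from the $R^{C/m}$ cutoff factor and the combinatorial factor $(C(km)^8|\log\eps|^{10})^{km/2}$ against the decay $(\eps^2\sigma)^{km/2}$. One must verify that $km \leq \eps^{-1}$ (so Lemma~\ref{lem:collision-strength} applies) and that the polynomial-in-$(km)$ and logarithmic factors, raised to the power $km/2$, are still beaten by $\eps^{km/8}$ with room to spare; since $(km)^{8\cdot km/2} = (km)^{4km} \leq \eps^{-o(km\log(1/\eps))}$... wait, more carefully: $(km)^{4km} \leq (\eps^{-1})^{4km}$ is far too lossy, so instead one keeps $(\eps^2\sigma)^{km/2} \leq \eps^{km/4}$ and notes $(C(km)^8|\log\eps|^{10})^{km/2} \leq \exp(Ckm\log\log(1/\eps) + Ckm) \leq \eps^{-km/100}$ for $\eps$ small, leaving a net $\eps^{km/5}$, then the fixed $\eps^{-Ck}$ with $k\geq 1$ is beaten once $m \geq 10C$ — and for the finitely many smaller $m$ one simply notes the claimed inequality $\Expec\|\cdot\|^m \leq \eps^{m/8}$ can be arranged by choosing $\eps$ below an $m$-dependent threshold, but since the lemma is applied with $m\to\infty$ as $\eps\to0$ this is not an issue; alternatively one records the slightly weaker but sufficient bound. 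Once Lemma~\ref{lem:approx-free} is in hand, Lemma~\ref{lem:free-replace} follows by a telescoping argument: writing $e^{im\tau H} - U(\sigma_m)\cdots U(\sigma_1)$ as a sum of $m$ terms each of which replaces one factor $e^{i\tau H}$ by $e^{i(\tau-\sigma_j)H}e^{-i\sigma_j\Delta/2}$, using unitarity of the surrounding factors and the bound $\|e^{i\sigma_j H} - e^{-i\sigma_j\Delta/2}\|_{op}$ from Lemma~\ref{lem:approx-free} with $m=2$ (or Cauchy–Schwarz to get the $\eps^{2/13}$ from $\eps^{1/8}$-type control, matching the stated exponent $2/13 < 1/4$), and then applying the triangle inequality and Minkowski to collect the factor $m^2$.
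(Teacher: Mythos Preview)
Your approach matches the paper's: Duhamel expansion, triangle inequality to separate the $T_k$ and the remainder, and then Lemma~\ref{lem:collision-strength} term by term. The one substantive difference is in how the spatial-cutoff factor $R^{C/m}$ is handled. You apply Lemma~\ref{lem:collision-strength} at moment $2m$, so after raising to the $km/2$ power you carry a fixed loss $R^{Ck/2}=\eps^{-Ck}$ that you then try to beat with $(\eps^2\sigma)^{km/2}$. This works once $m$ exceeds an absolute constant, but your treatment of small $m$ is muddled --- the ``$m$-dependent threshold'' remark is not what you want; the clean fix is Jensen, $\Expec X^m\leq(\Expec X^{m'})^{m/m'}$ with $m'$ a fixed large constant. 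The paper sidesteps this entirely by applying Lemma~\ref{lem:collision-strength} at moment $2Mm$ for a large \emph{constant} $M$, so that the cutoff factor becomes $R^{C/M}\leq\eps^{-10^{-5}}$ uniformly in $m$, and the accounting goes through for all $m$ at once. Both routes are valid; the paper's is tidier and avoids the small-$m$ case distinction. (Your final paragraph is about Lemma~\ref{lem:free-replace}, not the statement at hand, but the telescoping argument you sketch is indeed how the paper proceeds there.)
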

\begin{proof}
The Duhamel formula gives the exact decomposition
\begin{equation}
e^{is H} = e^{-is\Delta/2} + \sum_{1\leq k\leq k_{max}} T_k(s) + R_k(s).
\end{equation}
Applying the triangle inequality and using unitarity in the definition
of $R_k$, we have the bound
\begin{equation}
\|e^{is H} - e^{-is\Delta/2}\|
\leq \sum_{1\leq k\leq k_{max}} \|T_k(s)\|
+ \int_0^s \|T_{k_{max}}(s')\|\diff s'.
\end{equation}
Applying the inequality $(\sum_{i=1}^k a_i)^p \leq k^p \sum a_i^p$
for $a_i\geq 0$ and $p>0$, we have
\[
\|e^{isH} - e^{-is\Delta/2}\|^m
\leq
(k_{max}+1)^m \sum_{1\leq k\leq k_{max}} \|T_k(s)\|^m
+ (k_{max}+1)^m \Big(\int_0^s \|T_{k_{max}}(s')\|\diff s'\Big)^m.
\]
Using Holder's inequality on the integral and then taking expectations
with Lemma~\ref{lem:collision-strength}, we have
\begin{equation}
\begin{split}
\Expec \|e^{isH} - e^{-is\Delta/2}\|^m
&\leq
C^m \sum_{1\leq k\leq k_{max}}
(\Expec \|T_k(s)\|^m)
+ C^m s^{m-1}
\int_0^s \Expec \|T_{k_{max}}(s')\|^m\diff s' \\
&\lsim
C^m \sum_{1\leq k\leq k_{max}}
(\Expec \|T_k(s)\|^{2Mm})^{1/2M}
+
R^{C}
(\eps^2 s^{1+1/k_{max}}
((k_{max}m)^{8}|\log\eps|^{10})^{km})^{k_{max}m}
\\
&\lsim
C^m \sum_{1\leq k\leq k_{max}}
R^{C/M} (\eps^2 s (kMm)^{8} |\log\eps|^{10})^{mk/2} \\
&\qquad\qquad +
R^{C}
(\eps^2 s^{1+1/k_{max}}
((k_{max}m)^{8}|\log\eps|^{10})^{km})^{k_{max}m}
\end{split}
\end{equation}
By choosing $M$ to be a sufficiently large constant, we can have
the bound $R^{C/M} \leq \eps^{-10^{-5}}$ (say).  Then, since
$m\leq \eps^{-1/30}$ and $s \leq \eps^{-1.5}$,
\[
\eps^2 s m^8 \leq \eps^2 s \eps^{-1/10} \leq \eps^{1/4}.
\]
Therefore the sum over $k$ converges geometrically so
that the main contribution comes from $k=1$, which
gives
$R^{C/M} (\eps^2sm^{8}|\log\eps|^{10})^{m/2} \lsim \eps^{m/8}$.
\end{proof}

We are now ready to prove Lemma~\ref{lem:free-replace}.
\begin{proof}[Proof of Lemma~\ref{lem:free-replace}]
With $\bm{\sigma} = (\sigma_1,\dots,\sigma_m)$, let
\[
U(\bm{\sigma}) :=
U(\sigma_m) U(\sigma_{m-1})\cdots U(\sigma_1),
\]
and let $\bm{\sigma}^{(k)}$ be the sequence obtained by replacing
the first $k$ elements of $\bm{\sigma}$ by zeros.  That is,
\[
\bm{\sigma}^{(k)} :=
(0,\cdots, 0, \sigma_{k+1},\cdots,\sigma_m).
\]
Then we use the triangle inequality,
\[
\| U(\bm{\sigma}^{(m)}) - U(\bm{\sigma}^{(0)})\|_{op}
\leq
\sum_{k=1}^m \|U(\bm{\sigma}^{(k)}) - U(\bm{\sigma})^{(k-1)})\|_{op}.
\]
We square both sides and use Cauchy-Schwartz to find
\[
\| U(\bm{\sigma}^{(m)}) - U(\bm{\sigma}^{(0)})\|_{op}^2
\leq
m
\sum_{k=1}^m \|U(\bm{\sigma}^{(k)}) - U(\bm{\sigma})^{(k-1)})\|_{op}^2.
\]
By unitarity, we can simplify the terms on the right hand side to
\[
\Expec \| U(\bm{\sigma}^{(m)}) - U(\bm{\sigma}^{(0)})\|_{op}^2
\leq
m
\sum_{k=1}^m \Expec \|e^{i\sigma_k H} - e^{-i\sigma_k\Delta/2}\|_{op}^2
\leq m^2 \eps^{2/13}.
\]

\end{proof}
The reason we are doing this is that we want to replace the short time
evolution $e^{i\tau H}$ by an evolution operator that allows for a small
amount of free evolution at the beginning -- this is convenient both
for using conservation of kinetic energy as discussed at the top of this
section, and to guarantee some amount of decorrelation between adjacent
path segments.

Let $S=\eps^{-1.5}$ and let
$\rho\in C_c^\infty([\sigma,2\sigma])$ be a smooth Gevrey function satisfying
$\int \rho=1$.\footnote{The Gevrey regularity is for convenience, but
$C^m$ for some large enough $m$ depending on $k_{max}$ is fine.}
Let $U_\rho$ be the operator
\[
    U_\rho(\tau) :=
    \int e^{i(\tau-s)H} e^{-is\Delta/2} \rho(s)\diff s.
\]
By averaging
Lemma~\ref{lem:free-replace}, we have
\[
    \Expec \|e^{iN\tau} - U_\rho(\tau)^N\|_{op} \leq N\eps^{1/13}.
\]
Moreover, $U_\rho(\tau)$ has a Duhamel expansion of the form
\[
    U_\rho(\tau) =
e^{-i\tau\Delta/2} + \sum_{k=1}^{k_{max}} T_{k,\rho}(\tau)
    + R_{k_{max},\rho}(\tau),
\]
where
\[
    T_{k,\rho}(\tau) :=
    \int_{\Omega_k(\tau)} P(s_0) O_\omega \diff\omega,
\]
and
\begin{equation}
\label{eq:P-def}
P(s) = \int_0^s \rho(s')\diff s'.
\end{equation}

The benefit of doing this is that
we can therefore enforce $s_0>\eps^{-1.5}>r$ in every Duhamel expansion,
and this more easily guarantees that the kinetic energy level will be
maintained in every path segment.  To notate this, we define
\[
    \Omega_{\alpha, k}(t,E,S) :=
    \{ (\mbf{s},\mbf{p},\mbf{y}) \in \Omega_{\alpha,k}(t,E) \mid
    s_0 > S\}.
\]
Note that, for such paths, if $E^{1/2}r^{-1} > S$,
$||p_0|^2/2-E| \leq \alpha E^{1/2} r^{-1}$, so we have for each $j\in[k]$ the
bound
\[
    ||p_j|^2/2- |p_0|^2/2| \leq 2\alpha \max\{ E^{1/2}r^{-1}, s_j^{-1}\}.
\]
Thus, except at very low frequencies ($E \lsim \eps^{0.5}$, so
$|p_0|\lsim \eps^{0.25}$), we can replace $E$ by $|p_0|^2/2$.

\appendix
\section{The Boltzmann limit for short times}
\label{sec:boltzmann}
In this section we compare the
random Schrodinger evolution on short time scales $s\ll \eps^{-2}$
to the Boltzmann evolution given by the equation
\begin{equation}
\label{eq:linear-boltz-eq}
\partial_t a(x,p) + p\cdot \nabla_x a(x,p) =
\eps^2
\int \delta(|p|^2-|q|^2)  R(p-q) (a(x,q)-a(x,p))\diff q.
\end{equation}

To compare the quantum evolution channel $\Evol_s$ defined in~\eqref{eq:evol-def}
and the linear Boltzmann equation~\eqref{eq:dual-boltz}, we use a perturbative
expansion for each equation and compare each term.  Because we are working
with such short time scales (compared to the kinetic time $\eps^{-2}$),
we only need to use a few terms in the expansion.

\subsection{The finite Duhamel expansion}
Indeed we have as a consequence of Lemma~\ref{lem:collision-strength} that just two
terms of the Duhamel expansion suffices to approximate the Schrodinger evolution.
\begin{lemma}
\label{lem:second-order-approx}
For times $s <\eps^{-1.5}$ and any $\delta>0$ there exists $C_\delta$ such that
\[
\Expec \|e^{-isH} - (T_0(s)+T_1(s)+T_2(s))\|_{op}^2
\leq C_\delta \eps^{-\delta}(\eps^2s)^3.
\]
\end{lemma}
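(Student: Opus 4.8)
The plan is to reduce the statement to the operator moment bound of Lemma~\ref{lem:collision-strength} by means of a sufficiently deep Duhamel expansion, exploiting that $s<\eps^{-1.5}$ forces $\eps^2 s\le\eps^{1/2}\ll 1$, so that the higher collision operators only shrink. The single mild point to watch is that a crude bound on the final remainder costs roughly one extra power of $s$, which is affordable only because we keep a large (absolute) number of Duhamel terms.

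First I would iterate the Duhamel identity of Section~\ref{sec:duhamel} $k_{max}$ times (the paper's fixed constant $k_{max}=10^{100}$ is far more than enough), so that
\[
e^{-isH}=\sum_{k=0}^{k_{max}}T_k(s)+R_{k_{max}}(s),\qquad
R_{k_{max}}(s)=\int_0^s e^{-i(s-s')H}\,T_{k_{max}}(s')\,\diff s',
\]
and therefore $e^{-isH}-(T_0(s)+T_1(s)+T_2(s))=\sum_{k=3}^{k_{max}}T_k(s)+R_{k_{max}}(s)$. By unitarity of the full propagator together with the triangle inequality — exactly as in the proof of Lemma~\ref{lem:approx-free} — one has $\|R_{k_{max}}(s)\|_{op}\le\int_0^s\|T_{k_{max}}(s')\|_{op}\,\diff s'$, whence
\[
\big\|e^{-isH}-(T_0+T_1+T_2)\big\|_{op}\le\sum_{k=3}^{k_{max}}\|T_k(s)\|_{op}+\int_0^s\|T_{k_{max}}(s')\|_{op}\,\diff s'.
\]

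Next I would fix a large even integer $M=M(\delta)$ and take $L^{2M}(\Prob)$ norms of both sides, using Minkowski's inequality (and its integral form for the last term). Since $k_{max}M\le\eps^{-1}$ for $\eps$ small, Lemma~\ref{lem:collision-strength} gives, for $3\le k\le k_{max}$,
\[
\big(\Expec\|T_k(s')\|_{op}^{2M}\big)^{1/2M}\le B(s')^{k/2},\qquad
B(s'):=R^{C/M}\,\eps^2 s'\,C(k_{max}M)^8|\log\eps|^{10}.
\]
Because $s<\eps^{-1.5}$ and $R=\eps^{-1000}$, one checks $B(s)\le\eps^{1/4}$ for $\eps$ small, so the sum over $k$ is dominated by its $k=3$ term, $\sum_{k\ge 3}B(s)^{k/2}\le 2B(s)^{3/2}$, while $\int_0^s B(s')^{k_{max}/2}\diff s'\le B(s)^{k_{max}/2}s\le B(s)^{3/2}$ (the last step holds because $B(s)^{(k_{max}-3)/2}\le\eps^{(k_{max}-3)/8}\le\eps^{1.5}\le s^{-1}$). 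Hence the $L^{2M}$ norm of $\|e^{-isH}-(T_0+T_1+T_2)\|_{op}$ is at most $3B(s)^{3/2}$, and squaring, with $\Expec X\le(\Expec X^M)^{1/M}$,
\[
\Expec\big\|e^{-isH}-(T_0+T_1+T_2)\big\|_{op}^{2}\le 9\,B(s)^3=9\,R^{3C/M}\big(C(k_{max}M)^8\big)^3|\log\eps|^{30}\,(\eps^2 s)^3.
\]
Choosing $M$ with $3000C/M\le\delta/2$ makes $R^{3C/M}\le\eps^{-\delta/2}$, and the remaining constant times $|\log\eps|^{30}$ is $\le\eps^{-\delta/2}$ for $\eps$ below a $\delta$-dependent threshold; the bounded range of larger $\eps$ is absorbed into $C_\delta$, giving the claimed bound $C_\delta\eps^{-\delta}(\eps^2 s)^3$.

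The main — and essentially only — obstacle is the bookkeeping in this last step: one must keep enough Duhamel terms, i.e.\ a large \emph{absolute} number of potentials, so that the lossy estimate $\|R_{k_{max}}(s)\|_{op}\le\int_0^s\|T_{k_{max}}(s')\|_{op}\diff s'$, which loses a power of $s$ relative to the sharp $(\eps^2 s)^{k}$ decay enjoyed by a single $T_k(s)$, still beats $(\eps^2 s)^3$ given only $s\ll\eps^{-2}$; stopping the expansion at $R_3$ would not suffice. Everything else is a direct application of Lemma~\ref{lem:collision-strength}, with the number of moments $M$ chosen large enough to absorb both the harmless factor $R^{C/M}=\eps^{-1000C/M}$ (coming from the cutoff of the potential to a ball of radius $R$) and the polylogarithmic loss into the prefactor $\eps^{-\delta}$.
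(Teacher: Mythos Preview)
Your proof is correct and follows essentially the same approach as the paper: expand Duhamel to $k_{max}$ terms, control each $T_k$ and the remainder via Lemma~\ref{lem:collision-strength}, and choose $M=M(\delta)$ large to absorb the $R^{C/M}$ loss into $\eps^{-\delta}$. The only cosmetic difference is that you take $L^{2M}(\Prob)$ norms via Minkowski and square at the end, whereas the paper squares first via Cauchy--Schwarz and then applies moment bounds; the two orderings are equivalent here.
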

\begin{proof}
We follow the proof of Lemma~\ref{lem:approx-free}.
We beign with the exact Duhamel expansion
\begin{equation}
e^{is H} = \sum_{k\leq 2}T_k(s) + \sum_{2< k\leq k_{max}} T_k(s) + R_{k_{max}}(s).
\end{equation}
Applying the triangle inequality and using unitarity in the definition
of $R_k$, we have the bound
\begin{equation}
\|e^{is H} - \sum_{k\leq 2}T_k(s)\|_{op}
\leq \sum_{2< k\leq k_{max}} \|T_k(s)\|_{op}
+ \int_0^s \|T_{k_{max}}(s')\|\diff s'.
\end{equation}
Squaring both sides and using the Cauchy-Schwartz inequality, we have
\[
\|e^{isH} - e^{-is\Delta/2}\|^2
\leq
(k_{max}+1)^2 \sum_{1\leq k\leq k_{max}} \|T_k(s)\|^2
+ (k_{max}+1)^2 s \int_0^s \|T_{k_{max}}(s')\|^2\diff s'.
\]
Then taking expectations using Lemma~\ref{lem:collision-strength}
we have
\begin{equation}
\begin{split}
\Expec \|e^{isH} - e^{-is\Delta/2}\|^2
&\lesssim
\sum_{1\leq k\leq k_{max}}
(\Expec \|T_k(s)\|^{4M})^{1/2M}
+ s
\int_0^s \Expec \|T_{k_{max}}(s')\|^2\diff s' \\
&\lsim
C^m \sum_{1\leq k\leq k_{max}}
R^{C/M} (C^M \eps^2 s |\log\eps|^{10})^{k} \\
&\qquad\qquad +
R^{C}
(C\eps^2 s^{1+1/k_{max}} |\log\eps|^{10})^{k_{max}}
\end{split}
\end{equation}
By choosing $M$ to be a sufficiently large constant, we can have
the bound $R^{C/M} \leq \eps^{-\delta}$.  The main term
is the term coming from $T_3(s)$ which gives the bound above.
\end{proof}

Using Lemma~\ref{lem:second-order-approx} and the operator Cauchy-Schwartz
inequality we have for $s\leq \eps^{-1.5}$ that
\begin{equation}
\label{eq:second-order-expansion}
\|\Evol_s[A] - \Expec (\sum_{k=0}^2T_k(s))^*A (\sum_{k=0}^2T_k(s))\|_{op}
\leq C( \eps^{1.99}s)^3\|A\|_{op}.
\end{equation}
Defining the superoperator
\[
\mcal{T}_{k,k'}(s)[A] = \Expec T_k(s)^*AT_{k'}(s),
\]
the approximation~\eqref{eq:second-order-expansion} can be written
\[
\Evol_s \approx \mcal{T}_{0,0} + \mcal{T}_{1,0}
+ \mcal{T}_{0,1} + \mcal{T}_{1,1} + \mcal{T}_{2,0}
+\mcal{T}_{0,2} + \mcal{T}_{2,1}  + \mcal{T}_{1,2}
+ \mcal{T}_{2,2}.
\]
The terms $\mcal{T}_{0,1}$ and $\mcal{T}_{1,0}$ vanish identically
because $\Expec V=0$.  Moreover the higher order terms satisfy
\[
\|\mcal{T}_{2,1}[A] + \mcal{T}_{1,2}[A] + \mcal{T}_{2,2}[A]\|_{op}
\leq C (\eps^3 s + \eps^4 s^2) \|\Wigner{A}\|_{C^{2d+1}}.
\]
The factor $\eps^3 s$ comes from $\mcal{T}_{2,1}[A]+\mcal{T}_{1,2}[A]$,
in which there are three collisions (and therefore $\eps^3$) which must
form a single cluster (hence the single factor of $s$).  The term
$\mcal{T}_{2,2}[A]$ contributes $\eps^4s^2$ because there may be
two clusters of two collisions each that may happen at any pair of times.

We therefore have the approximation
\begin{equation}
\Evol_s \approx \mcal{T}_0(\sigma) +
\mcal{T}_{1,1} + \mcal{T}_{0,2} + \mcal{T}_{2,0}
\end{equation}
where, expanding the definitions we can write
\begin{equation}
\begin{split}
\mcal{T}_0[A] &= e^{-i\sigma\Delta/2}Ae^{i\sigma\Delta/2} \\
\mcal{T}_{1,1}[A]
&:= \Expec \int_0^\sigma \diff s \int_0^\sigma \diff s'\,\,
e^{-i(\sigma-s')\Delta/2} V e^{-is'\Delta/2} A
e^{is\Delta/2} V e^{i(\sigma-s)\Delta/2} \\
\mcal{T}_{0,2}[A] &:=
\Expec \int_0^\sigma \diff s \int_0^{\sigma-s}\diff \tau\,\,
e^{-i\sigma\Delta/2} A e^{is\Delta/2}Ve^{i\tau\Delta/2}V
e^{i(\sigma-s-\tau)\Delta/2} \\
\mcal{T}_{2,0}[A] &:=
\Expec \int_0^\sigma \diff s \int_0^{\sigma-s}\diff \tau\,\,
e^{-i(\sigma-s-\tau)\Delta/2}Ve^{-i\tau\Delta/2}Ve^{-is\Delta/2}
 A e^{i\sigma \Delta/2}.
\end{split}
\end{equation}
The following lemma makes this approximation precise.
\begin{lemma}
\label{lem:three-term-evol}
Let $A\in\mcal{H}(L^2(\Real^d))$ be a bounded operator and let $s\leq \eps^{-1.5}$.  Then
\begin{equation}
    \|\Evol_s[A] -
(\mcal{T}_0(s) + \mcal{T}_{1,1}(s) + \mcal{T}_{0,2}(s)+\mcal{T}_{2,0}(s))[A]\|_{op}
\leq \eps^{2.5} s \|\Wigner{A}\|_{C^{2d+1}}.
\end{equation}
\end{lemma}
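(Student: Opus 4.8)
The plan is to combine the finite Duhamel expansion from Lemma~\ref{lem:second-order-approx} with the Wick/cumulant structure of the expectation, keeping only the leading cluster contributions. First I would invoke Lemma~\ref{lem:second-order-approx} together with the operator Cauchy--Schwarz inequality (in the form already used in the excerpt) to reduce $\Evol_s[A]$ to the finite sum $\Expec (\sum_{k\leq 2} T_k(s))^* A (\sum_{k'\leq 2} T_{k'}(s))$ up to an operator-norm error of order $(\eps^{1.99}s)^3\|A\|_{op}$; since $s\leq \eps^{-1.5}$ this is at most $\eps^{1.5}s\|A\|_{op}$, comfortably within the claimed bound (after noting $\|A\|_{op}\lesssim \|\Wigner A\|_{C^{2d+1}}$ by the quantization estimates of Appendix~\ref{sec:wp-quantization}). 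This expands $\Evol_s$ as a sum of the nine bilinear superoperators $\mcal{T}_{k,k'}$ with $k,k'\in\{0,1,2\}$.

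Next I would discard the terms that either vanish or are negligible. The terms $\mcal{T}_{0,1}$ and $\mcal{T}_{1,0}$ vanish identically since $\Expec V=0$. For the remaining ``high-order'' terms $\mcal{T}_{2,1}$, $\mcal{T}_{1,2}$, and $\mcal{T}_{2,2}$ I would compute the expectation using the Wick rule (or the cumulant/admissible-potential bound of Lemma~\ref{lem:admissible-V}): in $\mcal{T}_{2,1}+\mcal{T}_{1,2}$ there are three collision sites which must all lie in a single cluster (else the expectation factors through an $\Expec V=0$), producing one unconstrained time variable and hence a bound of order $\eps^3 s\|\Wigner A\|_{C^{2d+1}}$; in $\mcal{T}_{2,2}$ the four sites form at most two clusters of two, giving at most two free time variables and a bound of order $\eps^4 s^2\|\Wigner A\|_{C^{2d+1}}$. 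Both are $\lesssim \eps^{2.5}s\|\Wigner A\|_{C^{2d+1}}$ when $s\leq \eps^{-1.5}$. The bookkeeping here is essentially the volume-counting already carried out for the ladder estimates in Section~\ref{sec:subkinetic-semigroup}, specialized to the trivial combinatorics of two or fewer collisions, so no new ideas are needed — one just applies the Schur test in the wavepacket basis with the momentum annulus and position-transport constraints.

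What survives is exactly $\mcal{T}_0 + \mcal{T}_{1,1} + \mcal{T}_{0,2} + \mcal{T}_{2,0}$, which is the claimed approximation. I would then unpack the definitions: $T_0(s)=e^{-is\Delta/2}$, so $\mcal{T}_0[A]=e^{-is\Delta/2}Ae^{is\Delta/2}$; $T_1(s)=i\eps\int_0^s e^{-i(s-s')\Delta/2}Ve^{-is'\Delta/2}\diff s'$, whose conjugation and expectation against $T_1$ gives the double time integral defining $\mcal{T}_{1,1}$; and similarly $T_2(s)$ paired with $T_0$ gives $\mcal{T}_{2,0}$ and $\mcal{T}_{0,2}$ after relabeling the nested time integrals over $\triangle_2(s)$ in the $(s,\tau)$ coordinates shown in the statement. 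Collecting the three error contributions (the Duhamel truncation, the $\eps^3s$ term, and the $\eps^4 s^2$ term) and using $s\leq \eps^{-1.5}$ yields the single bound $\eps^{2.5}s\|\Wigner A\|_{C^{2d+1}}$.

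The main obstacle I anticipate is not any single estimate but the careful accounting in the $\mcal{T}_{2,1}$, $\mcal{T}_{1,2}$, $\mcal{T}_{2,2}$ bounds: one must verify that the cluster structure forces the claimed number of free time variables and that all momentum integrations are genuinely controlled (including immediate-recollision configurations, where one uses Lemma~\ref{lem:sp-integral} rather than a bare annulus bound). This is routine given the machinery of Section~\ref{sec:subkinetic-semigroup}, but it is the step where the factor $\eps^3 s$ — rather than something worse — actually gets extracted, and it is where the hypothesis $s\leq \eps^{-1.5}$ is used to absorb the constants $C_\delta\eps^{-\delta}$ and $k_{max}$-dependent combinatorial factors into the stated powers of $\eps$.
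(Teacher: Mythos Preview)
Your proposal is correct and follows exactly the paper's approach: the paper's proof is precisely the discussion preceding the lemma statement, which uses Lemma~\ref{lem:second-order-approx} with operator Cauchy--Schwarz to reduce to the nine $\mcal{T}_{k,k'}$, discards $\mcal{T}_{0,1},\mcal{T}_{1,0}$ by $\Expec V=0$, and bounds $\mcal{T}_{2,1}+\mcal{T}_{1,2}$ by $C\eps^3 s$ (three collisions in one cluster, one free time) and $\mcal{T}_{2,2}$ by $C\eps^4 s^2$ (two clusters, two free times). One harmless arithmetic slip: $(\eps^{1.99}s)^3 \leq \eps^{2.97}s$ for $s\leq\eps^{-1.5}$, not $\eps^{1.5}s$, but this is in fact the stronger bound you need.
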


\subsection{Comparing to the linear Boltzmann equation}

We recall below the statement of Proposition~\ref{prp:short-time-compare} that
\shorttime*
%\begin{proposition}
%Let $a_0\in C^{2d+1}$ be an observable
%supported on the set $\{(x,p) | |p|>\eps^{0.2}\}$, and suppose that $a_s$
%solves the linear Boltzmann equation~\eqref{eq:linear-boltz-eq}.
%Then for $\sigma\leq \eps^{-1.5}$,
%\begin{equation}
%\|\Op(a_\sigma) - \Evol_\sigma[\Op(a_0)]\|_{op}
%\leq \eps^{2.01}\sigma
%\|a\|_{C_r^{2d+10}}.
%\end{equation}
%\end{proposition}
%
The first step is to write out the action of the superoperators $\mcal{T}$ on the Wigner
transforms of the operator.
An elementary computation using the formula
\[
\Expec \Ft{V}(q) \Ft{V}^*(q') = \delta(q-q') \Ft{R}(q),
\]
where $R(x) = \Expec V(0)V(x)$ yields the following identities:
\begin{align}
\label{eq:wigner-zero}
\Wigner{\mcal{T}_0(\sigma) A}(x,p) &= \Wigner{A}(x+\sigma p,p) \\
\label{eq:wigner-one}
\Wigner{\mcal{T}_{1,1}A}(x,p) &=\eps^2
\int_0^\sigma\int_0^\sigma
\int_{\Real^d} e^{i(s'-s)(|p'|^2-|p|^2)/2} R(p-p')
\\ \nonumber
&\qquad\qquad\qquad
\Wigner{A}(x + (\sigma-\frac{s+s'}{2})p + \frac{s+s'}{2}p', p')
\diff p' \diff s\diff s' \\
\label{eq:wigner-two}
\Wigner{\mcal{T}_{0,2}A+ \mcal{T}_{2,0}A}(x,p) &=2\eps^2
\Rept \int_0^\sigma
\int_0^{\sigma-S}
\int_{\Real^d} e^{-i u(|p'|^2-|p|^2)/2}
R(p'-p) \\
\nonumber
&\qquad\qquad\qquad
\Wigner{A}(x-(\sigma-u/2) p + u p'/2,p) \diff p' \diff S\diff u.
\end{align}

The classical solution has an analogous expansion in terms of the Duhamel formula,
\[
a_s \approx T_0(s) a_0 + T_1(s) a_0 + T_2(s) a_0 =: T_{\leq 2}(s) a_0.
\]
where
\begin{align}
\label{eq:classical-Tzero}
T_0(t)a(x,p) &= a(x+tp,p) \\
\label{eq:classical-Tone}
T_1(t)a(x,p) &= \eps^2\int_0^t\diff s \int\diff p' \delta(|p'|^2-|p|^2)
R(p-p') a(x+(t-s)p + sp', p') \\
\nonumber
&= \eps^2\int_0^t\diff s \int\diff p' \delta(|p'|^2-|p|^2)
R(p-p') a(x+sp + (t-s)p', p') \\
\label{eq:classical-Ttwo}
T_2(t)a(x,p) &= -\eps^2t  K(p)a(x+tp,p)
\end{align}
where the scattering cross section $K(p)$ is given by
\[
K(p) = \int \delta(|p'|^2-|p|^2) R(p-p')\diff p'.
\]
We will prove the following approximation result for the classical evolution.
\begin{lemma}
\label{lem:boltzmann-expansion}
Let $a_0\in C^{2d+1}$, and suppose that $a_s$ solves the linear Boltzmann
equation~\eqref{eq:linear-boltz-eq}, and suppose that $s\leq \eps^{-1.5}$.  Then
\[
\|a_s - T_{\leq 2}(s) a_0\|_{C^{2d+1}} \leq C \eps^3 s \|a_0\|_{C^{2d+1}}.
\]
\end{lemma}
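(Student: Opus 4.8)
\textbf{Proof plan for Lemma~\ref{lem:boltzmann-expansion}.}
The plan is to estimate the error $a_s - T_{\leq 2}(s)a_0$ by iterating the Duhamel formula for the linear Boltzmann equation~\eqref{eq:linear-boltz-eq} and showing that the third-order term and beyond are $O(\eps^3 s)$ in $C^{2d+1}$. First I would write the Boltzmann equation in mild (integral) form around the free transport flow $T_0(t)$: writing $\mcal{K}$ for the full collision operator on the right-hand side of~\eqref{eq:linear-boltz-eq}, Duhamel's principle gives $a_t = T_0(t)a_0 + \eps^2\int_0^t T_0(t-s)\mcal{K} a_s\,\diff s$, and iterating this once more produces $a_t = T_0(t)a_0 + \eps^2\int_0^t T_0(t-s)\mcal{K}T_0(s)a_0\,\diff s + \eps^4\int_0^t\int_0^s T_0(t-s)\mcal{K}T_0(s-s')\mcal{K}a_{s'}\,\diff s'\,\diff s$. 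The first two terms are precisely $T_0(t)a_0 + (T_1(t)+T_2(t))a_0 = T_{\leq 2}(t)a_0$: indeed the gain part of $\mcal{K}$ integrated against $T_0$ reproduces~\eqref{eq:classical-Tone}, and the loss part $-\eps^2 K(p)a$ reproduces~\eqref{eq:classical-Ttwo} after noting $\int_0^t T_0(t-s)(-K(p)\,T_0(s)a_0)\,\diff s = -t K(p)\,T_0(t)a_0$ because $K(p)$ is conserved along free transport (it depends only on $|p|$, which is frozen).

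The second step is to control the remaining double integral, which carries the explicit prefactor $\eps^4$. I would bound it in $C^{2d+1}$ by $\eps^4 \cdot \tfrac{t^2}{2}\cdot \|\mcal{K}\|_{C^{2d+1}\to C^{2d+1}}^2 \cdot \sup_{s'\leq t}\|a_{s'}\|_{C^{2d+1}}$, using that $T_0(r)$ is an isometry on $C^{2d+1}$ in the rescaled norm up to constants depending only on $r$ through the shift $x\mapsto x+rp$ (which at worst multiplies derivative bounds by a polynomial in $r$; for $r\leq t\leq \eps^{-1.5}$ this is harmless once absorbed, or one simply notes the $C^{2d+1}$ norm with the macroscopic rescaling is designed to absorb transport). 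The operator $\mcal{K}$ is bounded on $C^{2d+1}$: the loss term is multiplication by the smooth bounded function $K(p)$, and the gain term $\int \delta(|p|^2-|p'|^2)\Ft R(p-p')a(x,p')\,\diff p'$ is bounded on $C^{2d+1}$ because $\Ft R$ is Schwartz (so the $p'$-integral against the sphere $\{|p'|=|p|\}$ is a smooth averaging operator, uniformly for $|p|$ in the support away from zero momentum). This gives $\|a_t - T_{\leq 2}(t)a_0\|_{C^{2d+1}}\lesssim \eps^4 t^2 \sup_{s\leq t}\|a_s\|_{C^{2d+1}}$, and combined with the a priori bound $\sup_{s\leq t}\|a_s\|_{C^{2d+1}}\leq C\|a_0\|_{C^{2d+1}}$ (a Gr\"onwall estimate for the $C^{2d+1}$ norm under the bounded evolution $\partial_s a = \eps^2\mcal{K}a$, valid since $\eps^2 t\leq \eps^{1/2}\ll 1$ for $t\leq\eps^{-1.5}$; this is the regularity input, essentially Lemma~\ref{lem:short-regularity}), yields $\|a_t - T_{\leq 2}(t)a_0\|_{C^{2d+1}}\lesssim \eps^4 t^2\|a_0\|_{C^{2d+1}}$. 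Since $t\leq\eps^{-1.5}$ we have $\eps^4 t^2 = \eps\cdot\eps^3 t\cdot(\eps t)\leq \eps^3 t$, so in fact $\eps^4 t^2\leq \eps^3 t$ for $t\leq\eps^{-1}$, and more crudely $\eps^4 t^2\leq \eps^{1/2}\cdot\eps^3 t$ in the full range, giving the claimed bound $C\eps^3 t\|a_0\|_{C^{2d+1}}$.

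The main obstacle I anticipate is bookkeeping the interaction between the rescaled $C^{2d+1}_{r,L}$ norm and the free transport $T_0(r)$: a shift $x\mapsto x+rp$ does not act isometrically on $\|\cdot\|_{C^k_{r,L}}$ because it mixes $x$-derivatives with the (unbounded) factor $p$, and one must check that over the relevant time horizon $t\leq\eps^{-1.5}$ the accumulated loss is polynomial in $\eps^{-1}$ rather than exponential. This is precisely the kind of estimate handled by Lemma~\ref{lem:short-regularity}/Lemma~\ref{lem:boltz-regularity}, so I would quote those; the only genuinely new computation is identifying the first two Duhamel terms with $T_{\leq 2}$ (a direct substitution) and checking $\mcal{K}$ is bounded on $C^{2d+1}$ with a constant uniform on the support $\{|p|\geq\eps^\theta\}$, where the support condition matters only to keep the spherical-averaging Jacobian $\sim |p|^{d-2}$ bounded below — and even this is not needed for an upper bound, only a lower bound on $|p|$ would be needed for the reverse inequality, so boundedness of $\mcal{K}$ is in fact unconditional given $\Ft R\in\Schwartz$.
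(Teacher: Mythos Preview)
Your approach is correct and is essentially the one the paper uses implicitly: the paper's Appendix~\ref{sec:regularity} sets up exactly this Duhamel series $f_t=\sum_j g_{t,j}$ and records (Lemma~\ref{lem:boltz-regularity}) the bound $\|f_t-f_{t,1}\|_{C^k}\lesssim (\eps^2 t\,\theta^{1-k})^2\|f\|_{C^k}$, which is your $\eps^4 t^2$ remainder.  Your identification of the first Duhamel iterate with $T_0+T_1+T_2$ is also correct.

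There is, however, an arithmetic slip in your last step.  You write $\eps^4 t^2\leq \eps^{1/2}\cdot\eps^3 t$ for $t\leq\eps^{-1.5}$, but $\eps^4 t^2=(\eps t)\cdot\eps^3 t$ and $\eps t\leq\eps^{-1/2}$ in that range, so one only gets $\eps^4 t^2\leq \eps^{-1/2}\eps^3 t=\eps^{2.5}t$.  The inequality $\eps^4 t^2\leq\eps^3 t$ holds only for $t\leq\eps^{-1}$, not up to $\eps^{-1.5}$.  In fact the paper's own tool (Lemma~\ref{lem:boltz-regularity}) also gives $(\eps^2 t)^2=\eps^4 t^2$ and nothing sharper, so the stated exponent $\eps^3 s$ in the lemma appears to be a minor overstatement; the Duhamel method yields $\eps^4 s^2\leq\eps^{2.5}s$ on the full interval, which is what is actually needed downstream (Proposition~\ref{prp:short-time-compare} only requires $\eps^{2.1}s$).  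A smaller point: the boundedness of $\mcal{K}$ on $C^{2d+1}$ is not quite unconditional---the paper's Lemma~\ref{lem:L-Ck-bd} shows a loss of $\theta^{1-k}$ coming from differentiating the spherical integral near small $|p|$, so the support hypothesis $|p|\geq\eps^{\theta(d)}$ does enter, though only as a fixed polynomial-in-$\eps$ constant.
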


We will show that $\Wigner{\mcal{T}_{\leq 2}[A]}$ is close to $T_{\leq 2}\Wigner{A}$ by comparing
the three terms to each other one at a time.
The first terms agree identically,
\begin{equation}
\label{eq:free-exact}
\Wigner{\mcal{T}_0(\sigma)A} = T_0(\Wigner{A}),
\end{equation}
so there are two terms remaining, which we separate into two lemmas.
\begin{lemma}
\label{lem:Tone-compare}
Let $a\in C^{2d+1}(\PhaseSpace)$ be an observable supported
on $\{(x,p)||p|>\eps^{0.2}\}$ and let $\eps^{-1.2} \leq s\leq \eps^{-1.5}$.
Then
\begin{equation}
\label{eq:Tone-compare}
\|\Wigner{\mcal{T}_{1,1}(s)[\Op(a)]} - T_1(s) a\|_{C^{2d+1}_r}
\leq \eps^2s\delta \|a\|_{C^{4d+2}_{r,\delta}}.
\end{equation}
\end{lemma}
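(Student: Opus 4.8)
\textbf{Proof proposal for Lemma~\ref{lem:Tone-compare}.}

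The plan is to start from the explicit formula~\eqref{eq:wigner-one} for $\Wigner{\mcal{T}_{1,1}(s)[\Op(a)]}$ and the formula~\eqref{eq:classical-Tone} for $T_1(s)a$ and show that the former converges to the latter as the stationary phase in the $(s,s')$ time integral localizes the momentum $p'$ to the energy shell $|p'|=|p|$. First I would symmetrize the time variables, writing $u = (s+s')/2$ and $v = s'-s$, so that~\eqref{eq:wigner-one} becomes
\[
\Wigner{\mcal{T}_{1,1}(s)[\Op(a)]}(x,p)
= \eps^2 \int_0^s \diff u \int_{|v|\leq 2\min(u,s-u)} \diff v
\int_{\Real^d} e^{iv(|p'|^2-|p|^2)/2} R(p-p')\, \Wigner{\Op(a)}(x+(s-u)p + up',p')\,\diff p'.
\]
Since $\Op(a)$ is the wavepacket quantization of $a$, its Wigner transform differs from $a$ by a smooth error controlled at the microscopic scale $(r,r^{-1})$ (via the estimate $\|\Op(a)-\Op^w(a)\|$ and the relation between Weyl quantization and the Wigner transform in Appendix~\ref{sec:wp-quantization}); I would absorb this discrepancy into the right-hand side of~\eqref{eq:Tone-compare} at an early stage, so that it suffices to compare the expression above with $a$ replaced for $\Wigner{\Op(a)}$ against $T_1(s)a$.

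The core of the argument is the $v$-integral. For fixed $u$, the inner integral over $v$ and $p'$ is an oscillatory integral whose phase $v(|p'|^2-|p|^2)/2$ is stationary exactly on the energy shell. The key point is that $\int_{-T}^{T} e^{iv\lambda/2}\diff v = \frac{2}{\lambda}\sin(T\lambda/2)$, which as $T\to\infty$ converges to $2\pi\delta(\lambda)$ in the sense of distributions; here $T = 2\min(u,s-u)$, and since $|p|\gtrsim\eps^{\theta}$ on the support, the relevant timescale $T$ is typically of order $s\gg \eps^{-1.2}$, which is large compared to the inverse energy-shell thickness. So the plan is: (i) replace the truncated $v$-integral by the full integral $\int_{\Real}$, paying an error governed by how sharply $\min(u,s-u)$ cuts off — this error is concentrated near $u\approx 0$ and $u\approx s$, a set of small measure, and is further damped by integration by parts in $p'$ using the smoothness of $R(p-p')a(\cdots,p')$ in the radial direction (here the lower bound $|p|\geq \eps^{0.2}$ is essential to control the Jacobian of the change to polar coordinates and to keep the phase non-degenerate); (ii) for the full $v$-integral, use $\int_{\Real} e^{iv\lambda/2}\diff v = 2\pi\delta(\lambda)$ together with the coarea formula to collapse the $p'$-integral onto $\{|p'|=|p|\}$, producing exactly $\delta(|p'|^2-|p|^2)$ weighted against $R(p-p')\,a(x+(s-u)p+up',p')$. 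This reproduces~\eqref{eq:classical-Tone} after noting $\int_0^s\delta(|p'|^2-|p|^2)(\cdots)\diff u$ matches the $T_1$ formula with $s\leftrightarrow t-s$ symmetry. Finally I would track derivatives in $x$ and $p$: each $\partial_x$ or $\partial_p$ falling on the integrand either hits $a$ (bounded by $\|a\|_{C^{2d+1}_{r,\delta}}$ after the rescaling, with the $\delta$ loss coming from the fact that the argument of $a$ is shifted by $up'$ with $u\lesssim s$, so that $\partial_p$ brings down a factor $\lesssim s \lesssim r\delta$ in the rescaled norm) or hits the phase/$R$, the latter contributing only harmless constants. The $\eps^2 s$ prefactor is the volume of the time-simplex times $\eps^2$, and the extra gain $\delta$ over this is exactly the derivative-loss bookkeeping just described.

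The main obstacle I expect is step (i): quantifying the error from truncating the $v$-integral at $T=2\min(u,s-u)$ rather than infinity, uniformly in $u$ and after taking $2d+1$ derivatives in the macroscopic variables. Near the endpoints $u\in[0,\eps^{-1.2}]\cup[s-\eps^{-1.2},s]$ the truncation is genuinely sharp and one cannot invoke the delta-function heuristic; there the strategy must instead be to bound the contribution crudely by its size, $\eps^2\cdot(\text{endpoint time window})\cdot\|a\|$, and check that this window is small enough relative to $s$ — this is where the hypothesis $s\geq \eps^{-1.2}$ enters, ensuring the endpoint windows are a negligible fraction of $[0,s]$. Away from the endpoints one integrates by parts in $v$ once to gain a factor $(|p'|^2-|p|^2)^{-1}$ and then in $p'$ (radially) to control the resulting singularity, using $|p|\geq\eps^{0.2}$ to bound the annular region $||p'|-|p||\lesssim r^{-1}$ where the non-oscillatory part lives. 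Assembling these pieces with the derivative count is the technical heart; everything else is a routine change of variables and application of the coarea formula.
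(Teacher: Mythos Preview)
Your proposal follows essentially the same route as the paper: symmetrize the time variables, exploit the oscillation in the difference variable to localize $p'$ to the energy shell, separate the boundary effects of the time truncation from the bulk, and compare to~\eqref{eq:classical-Tone}. The paper uses the change of variables $S=(s+s')/2$, $u=s-s'$ (your $u,v$ swapped), and also identifies the truncation near the endpoints of the time interval as the main technical issue.

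Two points where the paper is more precise than your sketch. First, your step (ii) --- extending the $v$-integral to all of $\Real$ and invoking $\int_\Real e^{iv\lambda/2}\diff v = 2\pi\delta(\lambda)$ --- is only formal; the integrand is not a function after this step. The paper instead decomposes $\One_{[0,\sigma]} = \chi_{[0,\sigma]} + h$ with $\chi$ smooth to scale $r|p|^{-1}$, keeps the smooth cutoff in the main term, and works with the resulting \emph{approximate} delta $g_{p,\sigma}(S,\beta)=\int \chi\chi\, e^{iu\beta}\diff u$. It then passes to the energy shell via the coarea formula and a dilation $p'=(1+\kappa)q$ onto the exact sphere $|q|=|p|$, comparing to $T_1$ through $\int_0^{\kappa(\beta,p)}\partial_\kappa H\,\diff\kappa$. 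This is the rigorous replacement for your distributional identity, and it is also where the $\delta$ appears: one extra derivative of $a$ is spent in $\partial_\kappa H$, and since $\kappa\lesssim \beta/|p|^2 \lesssim (r|p|)^{-1}$ on the support of $g$, the integral over $\kappa$ together with $\|\partial_\kappa H\|_{C^k_r}\lesssim r\delta\|a\|_{C^{k+1}_{r,\delta}}$ yields the factor $\delta$. Your account of the $\delta$ gain (``$\partial_p$ brings down $s\lesssim r\delta$'') is not quite the mechanism --- the gain is from the thinness of the smeared shell, not from the transport shift. Second, when estimating the $C^{2d+1}_r$ norm, derivatives in the external $p$ variable hit the phase $e^{iu(|p'|^2-|p|^2)/2}$ and produce factors of $u|p|$; the paper handles this by integrating by parts in $p'$ (up to $2d+1$ times) to recover decay in $u$ for $|u|\gg r|p|^{-1}$, at the cost of the extra $2d+1$ derivatives on $a$ that account for the $C^{4d+2}$ norm on the right. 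You allude to this but should make it explicit.
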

\begin{lemma}
\label{lem:Ttwo-compare}
Let $a\in C^{2d+1}(\PhaseSpace)$ be an observable supported
on $\{(x,p)||p|>\eps^{0.2}\}$ and let $\eps^{-1.2} \leq s\leq \eps^{-1.5}$.
Then
\begin{equation}
\label{eq:Ttwo-compare}
\|\Wigner{(\mcal{T}_{0,2}+\mcal{T}_{2,0})(s)[\Op(a)]} - T_2(s) a\|_{C^{2d+1}}
\leq \eps^2\delta s \|a\|_{C^{4d+2}_{r,\delta}}.
\end{equation}
\end{lemma}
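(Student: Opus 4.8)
The plan is to start from the explicit formula~\eqref{eq:wigner-two} for $\Wigner{(\mcal{T}_{0,2}+\mcal{T}_{2,0})[\Op(a)]}$ and to identify, after performing the time integrals, a Fej\'er-type kernel that concentrates on the energy shell $\{|p'|=|p|\}$ and reproduces the classical loss term~\eqref{eq:classical-Ttwo}. First I would replace $\Wigner{\Op(a)}$ by $a$ itself: by the comparison between the wavepacket quantization and the Weyl quantization proved in Appendix~\ref{sec:wp-quantization} (and the fact that the Weyl symbol of $\Op^w(a)$ is $a$), this costs an error negligible at the $\eps^2 s$ scale in the rescaled $C^{2d+1}$ norm. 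So it suffices to analyze
\[
I(x,p) := 2\eps^2 \Rept \int_0^\sigma \int_0^{\sigma-S} \int_{\Real^d} e^{-i u(|p'|^2-|p|^2)/2} R(p'-p)\, a\big(x-(\sigma-u/2) p + u p'/2,\;p\big) \diff p' \diff u\diff S
\]
and to show $\|I - T_2(s)a\|_{C^{2d+1}} \lsim \eps^2 \delta s \|a\|_{C^{4d+2}_{r,\delta}}$.

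The next step is to carry out the $S$- and $u$-integrals. Doing the $S$-integral first replaces the triangular region by the weight $(\sigma-u)$ on $u\in[0,\sigma]$; then, writing $\lambda := \tfrac12(|p'|^2-|p|^2)$ and integrating by parts once in $u$ (the spatial argument of $a$ depends on $u$, producing a companion term deferred to the error analysis), one finds $2\Rept\int_0^\sigma (\sigma-u)e^{-iu\lambda}\diff u = 4\lambda^{-2}\sin^2(\sigma\lambda/2)$, the Fej\'er kernel of width $\sim\sigma^{-1}$ in $\lambda$ with $\int_{\Real} 4\lambda^{-2}\sin^2(\sigma\lambda/2)\diff\lambda = 2\pi\sigma$. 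Using the coarea formula in the $p'$-integral, parametrizing the level sets of $|p'|$ by $\lambda$, the concentration of this kernel at $\lambda=0$ turns $\int 4\lambda^{-2}\sin^2(\sigma\lambda/2)\,R(p'-p)(\cdots)\diff p'$ into $2\pi\sigma$ times the spherical average $\int \delta(|p'|^2-|p|^2)R(p'-p)(\cdots)\diff p'$, which, after matching the Wigner-transform normalization constants, is exactly $-\sigma K(p)$ applied to the frozen value $a(x\pm\sigma p,p)$; this is the classical term~\eqref{eq:classical-Ttwo}.

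The error analysis then has three pieces. (i) Replacing $\Wigner{\Op(a)}$ by $a$: handled by the quantization appendix as above. (ii) Replacing the Fej\'er kernel by $2\pi\sigma\,\delta(\lambda)$: since $4\lambda^{-2}\sin^2(\sigma\lambda/2)$ has only $\lambda^{-2}$ tails, one splits $|\lambda|\leq\delta$ against $|\lambda|>\delta$; the tail contributes $O(\delta^{-1})$, small relative to the main size $\sim\sigma$ since $\sigma\delta$ is large, and the bulk reproduces the delta up to the modulus of continuity of the integrand at scale $\delta$ in $\lambda$ (i.e.\ scale $\sim\delta/|p|$ in $|p'|$) — this is where the hypothesis $|p|\geq\eps^\theta$ and the curvature of the energy shell are used, so that the shell is a genuine hypersurface on which the kernel is narrow relative to its curvature. (iii) The spatial argument of $a$ is \emph{not} frozen: $p'$ ranges over a neighborhood of the whole sphere $\{|p'|=|p|\}$, so $up'/2$ varies by $O(\sigma|p|)$ and the integration-by-parts companion term involves $\partial_u a(\cdots)=\tfrac12(p+p')\cdot\nabla_x a(\cdots)$. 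This is the main obstacle, and it is controlled by the macroscopic smoothness of $a$: because $\sigma=\eps^{-1.5}$ is much shorter than the macroscopic length scale, the variation of $a$ over a displacement of size $O(\sigma|p|)$ is a small multiple of $\|a\|_{C^{4d+2}_{r,\delta}}$, which supplies the factor $\delta$ in the stated bound. Combining (i)--(iii) and taking $C^{2d+1}$ norms proves the lemma, since differentiation in $x$ and $p$ lands on $a$, $R(p'-p)$, the phase, or $K(p)$ and reduces each derivative to an estimate of exactly the same shape.
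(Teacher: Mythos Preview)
Your approach is correct in spirit but takes a genuinely different route from the paper's.  The paper mirrors its proof of Lemma~\ref{lem:Tone-compare}: it first symmetrizes the $u$-integral to $[-(\sigma-S),\sigma-S]$ (keeping the $S$-variable rather than integrating it out), freezes the spatial argument of $a$ to form an intermediate quantity $\wtild T_2 a$, and then decomposes the indicator $\One_{[-(\sigma-S),\sigma-S]}(u)$ into a piece smooth at scale $r|p|^{-1}$ plus a sharp edge piece.  The smooth piece yields a kernel $g(\beta;S,\sigma,p)=\int\chi_S(u;p)e^{iu\beta}\diff u$ that decays rapidly off the shell $\{\beta=0\}$, after which the coarea formula and the identity $\int g\,\diff\beta=1$ produce the spherical integral defining $K(p)$.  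Your route instead integrates out $S$ first to get the triangular weight $(\sigma-u)$ and recognizes the Fej\'er kernel $4\lambda^{-2}\sin^2(\sigma\lambda/2)$ directly, bypassing the smooth/sharp splitting.  Both arrive at the delta on the energy shell, and your step~(iii) (freezing the spatial argument) corresponds exactly to the paper's comparison with $\wtild T_2 a$.

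Two points to flag.  First, the Fej\'er kernel comes from the direct computation of $2\Rept\int_0^\sigma(\sigma-u)e^{-iu\lambda}\diff u$; there is no integration by parts against $a$ here, and the ``companion term'' you mention is simply the freezing error of step~(iii), not a by-product of integration by parts.  Second, and more substantively, watch the $C^{2d+1}$ norm in step~(ii): the paper's smooth cutoff gives $g$ rapid decay in $\beta$, so $p$-derivatives (which hit $\lambda=-|p|^2/2+\cdots$) are harmless.  Your Fej\'er kernel has only $\lambda^{-2}$ tails, and each $p$-derivative of it effectively costs a factor of $\sigma$, so after $2d+1$ derivatives the naive bound blows up.  This is not fatal --- you can differentiate under the $u$-integral before passing to the Fej\'er kernel, or use the radial/tangential splitting the paper employs elsewhere --- but it does need more care than the one-line dismissal at the end of your sketch.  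The paper's smooth-cutoff route is designed precisely to make this step automatic.
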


Before we prove these lemmas we stop to verify that the lemmas imply the main result.
\begin{proof}[Proof of Proposition~\ref{prp:short-time-compare}]
Combining Lemma~\ref{lem:three-term-evol}, Lemma~\ref{lem:boltzmann-expansion}, and
the bound $\|\Op(a)\|_{op}\leq \|a\|_{C^{2d+1}_r}$, it suffices to prove that
\[
\|\Wigner{\mcal{T}_{\leq 2}(s)[\Op(a)]}
- T_{\leq 2}(s) a\|_{C^{2d+1}_r} \leq \eps^? \|a\|_{C^{2d+1}_r}.
\]
This bound then follows from the identity~\eqref{eq:free-exact} and the
bounds~\eqref{eq:Tone-compare} and~\eqref{eq:Ttwo-compare}.
\end{proof}

\subsection{The proof of Lemma~\ref{lem:Tone-compare}}

\begin{proof}[Proof of Lemma~\ref{lem:Tone-compare}]
We recall the formula~\eqref{eq:wigner-one} for $\mcal{T}_{1,1}[\Op(a)]$
and introduce the convenient change of variables $S=(s+s')/2$ and $u=s-s'$ to write
\begin{align*}
\Wigner{\mcal{T}_{1,1}[\Op(a)]}(x,p)
&=
\eps^2 \int_0^\sigma\int_0^\sigma\diff s\diff s'\int_{\Real^d}\diff p\,\,
\,\,  e^{i(s'-s)(|p'|^2-|p|^2)/2} R(p-p')
\Wigner{A}(x + (\sigma-\frac{s+s'}{2})p + \frac{s+s'}{2}p', p')\\
&=
\eps^2 \iint\int_{\Real^d}
\One_{[0,\sigma]}(S-u/2) \One_{[0,\sigma]}(S+u/2)
e^{iu(|p'|^2-|p|^2)/2}  \\
&\qquad\qquad \qquad\qquad
R(p-p') \Wigner{A}(x + (\sigma-S)p + Sp', p') \diff p' \diff S \diff u.
\end{align*}

One key difference between the integrand~\eqref{eq:wigner-one} defining
$\Wigner{\mcal{T}_{1,1}[\Op(a)]}(x,p)$ and
the integrand in~\eqref{eq:classical-Tone} defining $T_1 a$
is that~\eqref{eq:classical-Tone}
contains a delta function $\delta(|p|^2-|p'|^2)$,
whereas~\eqref{eq:wigner-one}
 involves a phase $e^{i(s'-s)(|p'|^2-|p|^2)}$
and an additional time variable.  If we could integrate
out the $s'-s$ variable we would reproduce this delta function exactly, but there
is a hard cutoff from $0\leq s,s'\leq \sigma$ which ruins the sharp decay off
of the annulus $|p|^2=|p'|^2$.

We therefore decompose the indicator function of the interval as the sum of
a smooth piece containing the bulk of the interval and a sharp piece at either end,
\begin{equation}
\label{eq:smooth-indicator-decomp}
\One_{[0,\sigma]}(s;p) = \chi_{[0,\sigma]}(s,p) + h(s,p),
\end{equation}
where we take $\chi_{[0,\sigma]}(s,p)$ to be smooth to scale $r|p|^{-1}$ in the variable $s$,
and $h(s,p)$ to be supported on the set
\[
\supp h(\cdot,p) \subset
\begin{cases}
[0,r|p|^{-1}] \cup [\sigma-r|p|^{-1},\sigma], &|p|\geq \frac{1}{2}\eps^{0.2} \\
[0,r\eps^{0.2}] \cup [\sigma-r\eps^{0.2}, \sigma], &|p| < \frac{1}{2}\eps^{0.2}.
\end{cases}
\]
The function $h(s,p)$ can be chosen to be smooth to scale $\eps^{0.2}$ in the momentum variable $p$
(for example by constructing it using a partition of unity on dyadic annuli, the thinnest of which has
scale $\eps^{0.1}$).
We use the decomposition~\eqref{eq:smooth-indicator-decomp} in both the $s$ and $s'$ variables
to write
\[
\Wigner{\mcal{T}_{1,1}[\Op(a)]} = Ma + Ea,
\]
where $Ma$ is the main term
\begin{equation}
\label{eq:main-Tone-term}
\begin{split}
Ma(x,p) &:= \eps^2 \iint \int_{\Real^d}
\chi_{[0,\sigma]}(S+u/2,p)
\chi_{[0,\sigma]}(S-u/2,p)
e^{iu(|p'|^2-|p|^2)/2}  R(p-p') \\
&\qquad \qquad\qquad a(x + (\sigma-S)p + Sp', p')\diff u \diff S \diff p'.
\end{split}
\end{equation}
and $Ea$ is the error term
\begin{align*}
Ea(x,p)
:= \eps^2 \iint \int_{\Real^d}
&(h(S-u/2;p) \One_{[0,\sigma]}(S+u/2) + \chi(S-u/2;p) h(S+u/2,p)) \\
&\qquad
e^{iu(|p'|^2-|p|^2)/2}  R(p-p')
a(x + (\sigma-S)p + Sp', p')\diff u \diff S \diff p'.
\end{align*}
Our next step is to demonstrate the bound
\begin{equation}
\label{eq:error-Ck-bd}
\|Ea\|_{C^{2d+1}_r} \leq \|a\|_{C^{4d+2}_r}.
\end{equation}
To exploit the cancellation in the phase, we will integrate
by parts in the $u$ variable.  To do this we first split the integration in $p'$ into a part for which the kinetic
energy is approximately conserved and a part for which the conservation of kinetic energy is significantly violated,
\[
1 = \wtild{\One}(|p|^2-|p'|^2 \lsim r^{-1}|p|) + \wtild{\One}(|p|^2-|p'|^2\gtrsim r^{-1}|p|),
\]
where the function $\wtild{\One}$ is meant to indicate a smoothed indicator function.
This allows us to decompose the error term into a term with $p'$ localized to the annulus and second more oscillatory
term,
\[
Ea = E^{KE}a + E^{osc}a,
\]
with
\begin{align*}
E^{KE}a(x,p)
:= \eps^2 \iint \int_{\Real^d}
&(h(S-u/2;p) \One_{[0,\sigma]}(S+u/2) + \chi(S-u/2;p) h(S+u/2,p))
\wtild{\One}(|p|^2-|p'|^2\lsim r^{-1}|p|)
\\
&\qquad
e^{iu(|p'|^2-|p|^2)/2}  R(p-p')
a(x + (\sigma-S)p + Sp', p')\diff u \diff S \diff p'.
\end{align*}
To bound $E^{KE}a$ in the $C^{2d+1}_r$ norm we have to first observe that all terms in
the integrand are bounded in $C^{2d+1}_r$ except for the oscillation $e^{iu(|p'|^2-|p|^2)}$
which has gradient
\[
\nabla_p e^{iu(|p'|^2-|p|^2)/2} = pu e^{iu(|p'|^2-|p|^2)/2}.
\]
To handle this term, we can integrate by parts in the $p'$ variable, and since the rest of the integrand
is smoooth in $p'$ to scale $r^{-1}$ we can obtain decay in the $u$ variable for $|u|\gg r|p|^{-1}$.
We integrate by parts $2d+1$ times (again splitting the integral into a piece with
$|u|\lsim r|p|^{-1}$ and a piece with $|u|\gg r|p|^{-1}$, and only integrating by parts in the
second piece).  We choose $2d+1$ because we need to establish differentiability to order $2d+1$ in the $p$
variable, and each derivative brings down a factor of $u$.  This comes at a cost of an additional $2d+1$
derivatives of the function $a$.
Now we can use the triangle inequality to bound the norm:
the integration over $s,s',p$ is over a set of
volume $(r^2 |p|^{-1})^2\times (r^{-1}|p|^{d-1})$ (the first factor coming from the integration in time, the
second from the integration over the annulus in $p$).  The volume of integration is therefore at most
$r|p|^{d-2}$, which is further localized in the case $|p|\gg 1$ by the presence of the term $\Ft{R}(p-p')$,
so that at the end we obtain a bound of the form
\[
\|E^{KE}a\|_{C^{2d+1}_r} \lsim \eps^2 r|p|^{-1} \min\{|p|^{d-2},1\} \|a\|_{C^{2d+1}_r}
\lsim \eps^{0.8}\|a\|_{C^{4d+2}_r}.
\]

For the second term $E^{osc}a$ supported away from this annulus,
we integrate by parts in the $u$ variable.  This produces several boundary terms $E^{bd}_ja$ and several
bulk terms $E^{bulk}_ja$.
One of the boundary terms (which we arbitrarily number $E^{bd}_1a$
comes from the delta function $\delta(S-u/2)$ in the derivative of
$h(S-u/2)$ with respect to $u$,
\[
E^{bd}_1a(x,p) := \eps^2 \int\int_{\Real^d} \frac{e^{iS/2(|p'|^2-|p|^2)}}{|p|'^2-|p|^2}
\One(|p|^2-|p'|^2\gtrsim r^{-1}|p|)
\Ft{R}(p-p') a(x+(\sigma-u/2)p + up'/2,p') \diff u\diff p'.
\]
This time, as $S\lsim r|p|^{-1}$ even the oscillatory term is sufficiently smooth in the $p$ variable
(and this holds for all of the boundary terms).  We can therefore use the triangle inequality once
more, this time being careful to account for a logarithm coming from te weak localization to the annulus
(but converging due to the $\Ft{R}$ localization) to obtain
\[
\|E^{bd} a\|_{C^{2d+1}_r} \leq \eps^{0.8}|\log\eps| \|a\|_{C^{2d+1}_r}.
\]
The remaining terms to consider are the bulk terms.  These come from the smooth
part of the derivative of $h(s,p)$.  One of these bulk terms is given by
\begin{align*}
E^{blk}_1(x,p) &:= \eps^2 \iint \int_{\Real^d} h'(S-u/2,p) \One_{[0,\sigma]}(S+u/2)
\frac{e^{iu(|p'|^2-|p|^2)/2}}{|p|^2-|p'|^2}  R(p-p') \\
&\qquad \qquad\qquad a(x + (\sigma-S)p + Sp', p')\diff u \diff S \diff p',
\end{align*}
To bound this term we integrate by parts once more in the $u$ variable.
This produces more boundary terms $E^{blk,bd}_j$ that are controlled using the argument above, in addition
to one more bulk term having improved decay off of the annulus, so for example one of the terms is
\begin{align*}
E^{blk,blk}_1 a (x,p) &:= \eps^2 \iint \int_{\Real^d} h''(S-u/2,p) \One_{[0,\sigma]}(S+u/2)
\frac{e^{iu(|p'|^2-|p|^2)/2}}{(|p|^2-|p'|^2)^2}  R(p-p') \\
&\qquad \qquad\qquad a(x + (\sigma-S)p + Sp', p')\diff u \diff S \diff p'.
\end{align*}
This term is bounded using the same argument as $E^{KE}a$
since now the localization to the annulus is strong enough
that the triangle inequality can again be used.  This concludes our proof of~\eqref{eq:error-Ck-bd}.

Now proceed with the calculation of the main term $Ma(x,p)$, which we recall below
\begin{align*}
Ma(x,p) =
\eps^2 \int
&\chi_{[0,\sigma]}(S-u/2,p)\chi_{[0,\sigma}(S+u/2,p)
e^{iu(|p'|^2-|p|^2)/2} \\
&\qquad \Ft{R}(p-p') a(x+(\sigma-S)p + S'p, p')\diff p'\diff S \diff u.
\end{align*}
A main character in this calculation is the function
\begin{equation}
g_{p,\sigma}(S,\beta)
= \int
\chi_{[0,\sigma]}(S-u/2,p)\chi_{[0,\sigma}(S+u/2,p)
e^{iu \beta} \diff u.
\end{equation}
By integrating by parts in the $u$ variable, we can see that
this function decays rapidly when
$|\beta|\gtrsim |p|r^{-1}$.
Using the coarea formula in the form
\begin{equation}
\label{eq:coarea}
\int_{\Real^d} f(p')  \diff p'
= \int_\Real \int_{\{|p|^2-|p'|^2=\beta\}} \frac{1}{2|p'|}
f(p')\diff \Hausdorff^{n-1}(p')
\diff\beta,
\end{equation}
we have the identity
\[
Ma(x,p) = \eps^2 \int_{[0,\sigma]}\int_{\Real} \diff\beta
g_{p,\sigma}(S,\beta)
\int_{\{(|p|^2-|p'|^2)/2=\beta\}}\frac{1}{|p'|}
R(p-p') a(x+(\sigma-S)p+Sp',p') \diff\Hausdorff^{n-1}(p')
\]
By the Fourier inversion formula,
\[
\int g_{p,\sigma}(S,\beta)\diff \beta =
\chi_{[0,\sigma]}(S,p)^2.
\]
The idea is that the integration in $\beta$ will therefore recover the term
$\chi_{p,\sigma}(S)^2$, so that $Ma(x,p)$ can be compared to the function
\[
T^M_1a(x,p)
:= \eps^2\int_0^t\diff s \int\diff p' \delta(|p'|^2-|p|^2)
\chi_{[0,\sigma]}(S,p)^2
R(p-p') a(x+sp + (t-s)p', p'),
\]
which we will later show is close to $T_1a$.
We will change variables to move the integration over
the sphere $|p'|^2/2 = |p|^2/2 +\beta$ to integration over the sphere $|p'|=|p|$
in order to better compare the terms.  To this end we choose $\kappa(p,\beta)$
satisfying
\[
|(1+\kappa(p,\beta)) p|^2 = |p|^2 + \beta.
\]
The function $\kappa(p,\beta)$ has the explicit formula
\[
\kappa(p,\beta) = \sqrt{1 + \frac{\beta}{|p|^2}} - 1.
\]
For $|p| > 100 r^{-1}$ (which holds except when $\beta \gg \eps^{0.2}$, which is
suppressed due to the decay of $g$), we have the bound
\[
\kappa(\beta,p) \leq C \frac{\beta}{|p|^2}.
\]

Then applying the change of variables $p' = (1+\kappa)q$, we can write
\begin{equation}
\begin{split}
Ma(x,p)
= \eps^2\int_0^\sigma &\int \diff\beta g_{p,\sigma}(S,\beta)
\int_{\{|q|=|p|\}}\frac{(1+\kappa)^{d-2}}{|p|} \\
&R(p-(1+\kappa)q) a(x+(\sigma-S)p+S(1+\kappa)q,(1+\kappa)q)
\diff\Hausdorff^{n-1}(q).
\end{split}
\end{equation}
The classical counterpart $T^M_1a$ can be written
\[
T^M_1 a(x,p) =
\eps^2\int_0^\sigma \int \diff\beta g_{p,\sigma}(S,\beta)
\int_{\{|q|=|p|\}}\frac{1}{|p|}
R(p-q) a(x+(\sigma-S)p+Sq,q) \diff\Hausdorff^{n-1}(q).
\]
To compare these integrals we define the function
\[
H(x,p;\kappa,S,q)
:= |p|^{-1} (1+\kappa)^{d-2}
R(p-(1+\kappa)q) a(x+(\sigma-S)p+S(1+\kappa)q,(1+\kappa)q).
\]
Then
\begin{equation}
\label{eq:kappa-comparison}
Ma(x,p) - T^M_1a(x,p)
= \eps^2\int_0^\sigma \int g(S,\beta)
\big(
\int_{\{|q|=|p|\}}
\int_0^{\kappa(p,\beta)}
\partial_\kappa H(x,p;\kappa,S,q)\diff \kappa\diff\mcal{H}^{n-1}(q) \big)\diff \beta\diff S
\end{equation}
And
\begin{equation}
\begin{split}
\partial_\kappa H(x,p;\kappa,S,q) &=
\frac{(d-2)(1+\kappa)^{d-3}}{|p|}
R(p-(1+\kappa)q) a(x+(\sigma-S)p+S(1+\kappa)q,(1+\kappa)q) \\
&+ \frac{(1+\kappa)^{d-2}}{|p|}
q\cdot\nabla R(p-(1+\kappa)q)
a(x+(\sigma-S)p+S(1+\kappa)q,(1+\kappa)q) \\
& +\frac{(1+\kappa)^{d-2}}{|p|}
R(p-(1+\kappa)q)
q\cdot \nabla_2a(x+(\sigma-S)p+S(1+\kappa)q,(1+\kappa)q) \\
& +\frac{(1+\kappa)^{d-2}}{|p|}
R(p-(1+\kappa)q)
Sq\cdot \nabla_1a(x+(\sigma-S)p+S(1+\kappa)q,(1+\kappa)q).
\end{split}
\end{equation}
Of these terms the most substantial are the latter two terms in which the gradient hits $a$.
Using the scaling of the derivatives in $C^k_r$ and the fact that $S\ll r^2$ we obtain the bound
\[
\|\partial_\kappa H\|_{C^k_r} \leq  r(1+|p|^{-1})
\|a\|_{C^{k+1}_r}.
\]
However to prove Lemma~\ref{lem:Tone-compare} we need to assume more than microscopic smoothness on
$a$.  Assuming more smoothness on $a$, we have the bound
\[
\|\partial_\kappa H\|_{C^k_r} \leq  r\delta (1+|p|^{-1})
\|a\|_{C^{k+1}_{r,\delta}}.
\]
Then applying the triangle inequality to~\eqref{eq:kappa-comparison} we obtain the bound
\begin{equation}
\begin{split}
\| Ma - T^M_1a\|_{C^{2d+1}_r}
&\leq  C \eps^2 \sigma  \delta \|a\|_{C^{2d+2}_{r,\delta}}.
\end{split}
\end{equation}

To conclude the proof we need to estimate $\|T^M_1a - T_1a\|_{C^{2d+1}_r}$.  This function is given
by the integral
\[
T^M_1 a - T_1a
= \int_0^\sigma (1-\chi_{[0,\sigma]}(S,p)^2)
\int \delta(|p'|^2-|p|^2) R(p-p') a(x+sp +(\sigma-s)p',p')\diff s\diff p'
\]
To obtain bounds on the momentum derivatives we treat the radial derivative $p\cdot \nabla_p$
separately from tangential derivatives $\vec{v}\cdot \nabla_p$ where $\vec{v}(p)\cdot p=0$ is a tangential
vector field.  For radial derivatives we integrate by parts in the $p'$ variable using the identity
\[
p\cdot\nabla p \delta(|p'|^2 - |p|^2) = -p'\cdot\nabla_{p'} \delta(|p'|^2-|p|^2).
\]
The tangential derivatives on the other hand vanish on the delta function.  Combining this argument with a
triangle inequality in the integral over $s$ and the fact that $1-\chi_{[0,\sigma]}^2$ is supported
on $[0,r|p|^{-1}]\cup [\sigma-r|p|^{-1},\sigma]$ and $|p|\geq \eps^{0.2}$, we conclude
\[
\|T^M_1a - T_1a\|_{C^{2d+1}_r} \lsim \eps^{0.8} \|a\|_{C^{2d+1}_r}.
\]
\end{proof}

\begin{proof}[Proof of Lemma~\ref{lem:Ttwo-compare}]
The proof of Lemma~\ref{lem:Ttwo-compare} is very similar to the proof of
Lemma~\ref{lem:Tone-compare}, and for this reason we abbreviate the argument.
Using
\[
2\Rept e^{iu(|p'|^2-|p|^2)} = e^{iu(|p'|^2 - |p|^2)} + e^{-iu(|p'|^2 - |p|^2)}
\]
to symmetrize the domain of integration over $u$, we can rewrite~\eqref{eq:wigner-two}
as
\[
\Wigner{(\mcal{T}_{0,2}+\mcal{T}_{2,0})[\Op(a)]}(x,p)
= \eps^2 \int_0^\sigma \int_{-(\sigma-S)}^{\sigma-S} \int_{\Real^d}
e^{iu(|p'|^2-|p|^2)/2}R(p'-p)
a(x-(\sigma-u/2)p+up'/2,p)\diff p'\diff u\diff S.
\]
Before directly comparing this function to $T_2 a$, we compare it to the intermediate
quantity
\[
\wtild{T}_2a(x,p)
:= \eps^2 \int_0^\sigma \int_{-(\sigma-S)}^{\sigma-S} \int_{\Real^d}
e^{iu(|p'|^2-|p|^2)/2}R(p'-p)
a(x-\sigma p,p)\diff p'\diff u\diff S.
\]
The function
\[
e(x,p;u,p') :=
a(x-\sigma p,p) - a(x-\sigma p + u (p-p')/2, p)
\]
satisfies
\[
\|e\|_{C^{2d+1}_r} \lsim \delta (r^{-1} |up'| +u|p|) \|a\|_{C^{2d+2}_{r,\delta}}.
\]
We use the smoothness of $a$ to scale $r^{-1}$ in $p$ to reduce the domain of integration
to $|up'|\lsim r$.

Again we would like to integrate by parts in the $u$ variable to obtain localization of $p'$ to
the annulus.  Following the proof of Lemma~\ref{lem:Tone-compare}, we decompose
the indicator function $\One_{[-(\sigma-S),\sigma-S]}$ as the sum of a piece that is smooth
to scale $r|p|^{-1}$ and a piece that is supported on two intervals of that width.  The bounds for
the error pieces follow exactly as in the previous proof.  What remains is a main term of the form
\[
m(x,p) =
\eps^2 \int_0^\sigma \chi(S;p)
\int_{-(\sigma-S)}^{(\sigma-S)} \chi_S(u;p)\int_{\Real^d}
e^{iu(|p'|^2-|p|^2)}R(p'-p) a(x-(\sigma-u/2)p+up'/2,p)\diff p'\diff u\diff S.
\]
We define the function
\[
g(\beta; S,\sigma,p)
:= \int \chi_S(u;p) e^{iu\beta}\diff u,
\]
and observe that the smoothness of $\chi_S$ enforces that $g(\beta;S,\sigma,p)$ decays rapidly
for $\beta \geq r^{-1}|p|$.  Moreover, by the Fourier inversion formula,
\[
\int g(\beta;S,\sigma,p)\diff \beta = 1.
\]
Then using the coarea formula as in~\eqref{eq:coarea} we have
\[
m(x,p) = \eps^2 \int_0^\sigma
\chi(S;p)\int \diff \beta g(\beta;S,\sigma,p)
\int_{(|p|^2-|p'|^2)/2=\beta}
|p'|^{-1}
R(p-p') a(x + (\sigma-u/2)p + Sp', p')
\]

\end{proof}

\section{Using graphs, forests, and partitions to compute moments}
\label{sec:forest-appendix}
In this section we introduce an abstract lemma about graphs that is
useful in a variety of contexts within this paper.  The simplest
context is in the computation of the moments
\[
\Expec \prod_{j=1}^k \Ft{V_{y_j}}(q_j)
= \int e^{-iq_j\cdot x_j}
\big(\Expec \prod_{j=1}^k V(x_j)\big)
b_r(x_j-y_j)\diff \mbf{x}.
\]
In the integral above, we can split the expectation into smaller
pieces depending on the configuration of $\mbf{x}$.  More precisely,
let $G(\mbf{x})$ be the graph on index set $[1,k]$ with edges $(i,j)$
when $|x_i-x_j|\leq 1$.  Letting $P(\mbf{x})\in\mcal{P}([k])$ be the
partition of $[k]$ into the connected components of $G(\mbf{x})$,
we have
\[
\Expec \prod_{j=1}^k V(x_j)
= \prod_{S\in P(\mbf{x})}
\Expec \prod_{j\in S} V(x_j).
\]
One way to perform the desired splitting is to write, for each pair
$i<j$,
\[
1 = (1-\One(|x_i-x_j|\leq 1)) + \One(|x_i-x_j|\leq 1).
\]
Taking a product over all $i$ and $j$ produces the identity
\begin{equation}
\label{eq:split-one-graphs}
\begin{split}
1 &= \prod_{i<j} ((1-\One(|x_i-x_j|\leq 1)) + \One(|x_i-x_j|\leq 1)) \\
&= \sum_G \prod_{(i,j)\in G} \One(|x_i-x_j|\leq 1)
\prod_{(i,j)\not\in G} (1-\One(|x_i-x_j|\leq 1)),
\end{split}
\end{equation}
where the sum is over all graphs on vertex set $[k]$.
The problem with this approach is that the number of graphs scales like
$2^{\Theta(n^2)}$, whereas we can typically only afford factors on
the order $2^{\Theta(n\log n)}$.   Since we only need enough information
about $G(\mbf{x})$ to determine $P(\mbf{x})$ (of which there are at most
$n^n$ possibilities), this seems like a wasteful sum.

The solution is to work with a compressed representation of $G(\mbf{x})$
that only records its minimal spanning forest.  For $n\in\bbN$,
let $\binom{[n]}{2}$ be the set of possible edges $\{(i,j)\mid 1\leq i<j\leq n\}$
on a graph with vertex set $[n]$.  Let $w:\binom{[n]}{2}\to\Real_+$ be
any function that assigns each edge a unique positive weight.
Given a graph $G$, let $P(G)\in\mcal{P}([n])$ be the partition of $[n]$
into the connected components of $G$.
Then, given a graph $G\subset\binom{[n]}{2}$, the minimal spanning
forest $F_G$ of $G$ is defined to be the unique graph having the same connected
components $(P(G)=P(F_G))$ and minimizing the sum $\sum_{e \in F} w(e)$.

The graph $F_G$ is acyclic, as otherwise edges can be trimmed to reduce
the total weight of the graph.  A formula due to Borchardt but attributed
to Cayley states that the number of trees on a graph of $n$ vertices is
exactly $n^{n-2}$.  Any acylic graph on $n$ vertices can be obtained by
forming a tree on $n+1$ vertices and deleting a vertex.  Therefore
\[
N_F(n) \leq (n+1)^{n-1} \leq Cn^n.
\]
This is closer to the number of partitions on $n$ vertices, so there is
not as much wasted information.

We will start with the identity
\begin{equation}
\label{eq:forest-decomposition}
1 = \sum_F \One(F_G = F),
\end{equation}
which trivially holds for any graph $G$, where the sum is over all acyclic graphs $F$.
The good thing about this sum is that it is a sum over only $O(n^n)$ terms, and
$F_G$ determines $P(G)$, so the decomposition gives enough information to split the
expectation.

We set $\mcal{G}(F)$ to be the collection of graphs
\[
\mcal{G}(F) = \{G\subset \binom{[n]}{2} \mid F_G = F\}.
\]
We will express the set $\mcal{G}(F)$ using sets that depend on more local information
about the graph.  In particular, given a set $S\subset[n]$ of vertices and a graph $G$,
we write $G|_S$ for the induced graph on $S$.  A ``local approximation'' to $\mcal{G}(F)$
is the collection $\mcal{L}(F)$,
\[
\mcal{L}(F_0) = \{G\subset \binom{[n]}{2} \mid F_{G|_S} = F_0|_S \text{ for all } S\in P(F_0)\}.
\]
We note that $\mcal{L}(F)$ has an alternative definition.  Given a forest $F$ and a weight $w$,
define the weight
\[
w_F(e) :=
\begin{cases}
w(e), & P(F \cup\{e\}) = P(F) \\
w(e) + \max_e w, & P(F \cup\{e\}) > P(F).
\end{cases}
\]
That is, $w_F$ adds extra weight to each edge between distinct connected components of $F$.
Let $F(G;w_F)$ be the minimal forest of the graph $G$ with the modified weight $w_F$.  Then
\[
\mcal{L}(F_0) = \{G\subset \binom{[n]}{2} \mid F(G;w_{F_0}) \supset F_0\}.
\]
This motivates the definition of the collection
\[
\mcal{G}(F';F_0) := \{G\subset \binom{[n]}{2} \mid F(G;w_{F_0}) = F'\}.
\]
With this definition, it is clear that we have the identity
\[
\One_{\mcal{L}(F_0)} = \sum_{F'\supset F_0} \One_{\mcal{G}(F';F_0)}.
\]
We can now apply Mobius inversion in the partially ordered set $\mcal{F}$ of forests, ordered
by inclusion.  Note that $\mcal{F}$ is an ideal in the poset of graphs, so the Mobius function
is the same.  The Mobius inversion formula, along with the  identity
$\mcal{G}(F_0;F_0) = \mcal{G}(F_0)$, yields
\begin{equation}
\label{eq:forest-indicator-decomposition}
\One_{\mcal{G}(F_0)} = \sum_{F'\supset F_0} (-1)^{|F'\setminus F_0|} \One_{\mcal{L}(F')}.
\end{equation}

We conclude this discussion by deriving polynomial identities from the
decompositions~\eqref{eq:forest-decomposition} and~\eqref{eq:forest-indicator-decomposition}.
Given a forest $F_0$, let $D(F_0)$ be the set of edges ``dominated'' by $F$,
\[
D(F_0) := \{e \in\binom{[n]}{2} \mid F_{F_0\cup\{e\}} = F_0\}.
\]
That is, $D(F_0)$ consists of those edges that are already connected by $F_0$ along
a path consisting of edges $e'$ with $w(e')<w(e)$.  Another important collection of edges
are those ``connected'' by $F_0$,
\[
C(F_0) := \{e \in\binom{[n]}{2} \mid P(F_0\cup\{e\}) = P(F_0)\}.
\]
Then we can express the indicator functions $\One_{\mcal{G}(F)}$ and $\One_{\mcal{L}(F)}$
as follows:
\[
\One_{\mcal{G}(F)} = \prod_{e\in F} \One(e\in G) \prod_{e\not\in D(F)}(1-\One(e\in G))
\]
and
\[
\One_{\mcal{L}(F)} = \prod_{e\in F} \One(e\in G) \prod_{e\in C(F)\setminus D(F)} (1-\One(e\in G)).
\]
Therefore, by~\eqref{eq:forest-decomposition}, we have that for any $\{0,1\}$-valued variables
$Y_e$,
\begin{equation}
\label{eq:poly-forest-decomp}
1 = \sum_{F} \prod_{e\in F} Y_e \prod_{e\not\in D(F)} (1-Y_e).
\end{equation}
The right hand side is a multilinear polynomial that agrees with $1$ on the hypercube $\{0,1\}^N$, so it follows
that in fact~\eqref{eq:poly-forest-decomp} holds for any variables $Y_e$.  Similarly, we have
that
\begin{equation}
\label{eq:poly-indicator-decomp}
\prod_{e\in F} Y_e \prod_{e\not\in D(F)} (1-Y_e)
=
\sum_{F'\supset F} (-1)^{|F'\setminus F|}
\prod_{e\in F'} Y_e \prod_{e\in C(F)\setminus D(F)} (1-Y_e).
\end{equation}
holds for any values $Y_e$ and any forest $F$.

\subsection{Application to computing moments}

As a useful example, we can now estimate the moments of the Fourier transform of a localized potential,
\[
\Expec \prod_{j=1}^k \Ft{V_{y_j}}(q_j).
\]

\begin{definition}[Localized potentials]
    For $y\in\Real^d$, define $V_y$ to be the function
    \[
        V_y(x) = b_r(x) V(x-y),
    \]
    where $b_r$ is a function satisfying $\int b_r(x)\diff x=1$,
    $\supp b_r\subset B_{10r}$, and
    \[
        |\Ft{b_r}(p)| \leq C \exp(-c(r|p|)^{0.999})
    \]
\end{definition}

\begin{definition}[Admissible potentials]
    \label{def:admissible-V}
    A random potential $V$ is \emph{admissible} if it is stationary,
    mean-zero, has range-$1$ dependence, and satisfies the following moment bound
    \begin{equation}
        \label{V-point-est}
        \sup_{\mbf{x}} \Big|\Expec \prod_{j=1}^k \partial^{\alpha_j}
        V(x_j)\Big| \leq (C_V k)^{k}
    \end{equation}
    for any sequence of multi-indices $\alpha_j$ with
    $0\leq |\alpha_j|\leq 20d$.

    Alternatively, $V$ is also admissible if it is Gaussian
    and stationary with a two-point correlation function
    $\Expec V(x)V(y)=R(x-y)$ satisfying $\supp R\subset B_{2r}$ and
    \[
        |\partial^\beta R(x)|
        \leq C (1+|x|)^{-d-1-|\beta|},
    \]
    for multi-indices of order $0\leq|\beta|\leq 20d$.
\end{definition}

Aside from Gaussian potentials, examples of admissible potentials include
potentials constructed from a point process, such as
\[
    V(y) = \sum_{x\in X} V_0(y-x)
\]
with a fixed smooth compactly supported potential
$V_0$ satisfying $\int V_0=0$, where $X$ is (for example) a Poisson
process with unit intensity.

Another example of an admissible potential is the lattice construction
\[
    V(y) = \sum_{x\in\bbZ^d} c_x V_0(y-x-h),
\]
where $V_0$ is a smooth and compactly supported function and $c_x$
are independent Bernoulli $\{\pm 1\}$ coinflips, and $h\in[0,1]^d$ is
a uniformly random shift (which makes $V$ stationary).

\begin{lemma}
    \label{lem:admissible-V}
    If $V$ is an admissible potential according
    to Definition~\ref{def:admissible-V}, then
    \begin{equation}
        \label{eq:Vhat-moment}
\begin{split}
    \Big|\Expec \prod_{j=1}^k \partial_q^{\alpha_j} \Ft{V_{y_j}}(q_j)\Big|
    \leq
r^{-dk} &r^{\sum_{j\in[k]}|\alpha_j|} \sum_{P\in\mcal{P}([k])}
\prod_{S\in P} (C C_V|S|)^{2|S|}
r^d \exp\Big(-c_k \big|r \sum_{j\in S} q_j\big|^{0.99}\Big) \\
&\qquad\qquad\qquad\qquad \times \prod_{j\in [k]} (1+|q_j|)^{-20d}.
\end{split}
    \end{equation}
    for any sequence $y_j\in\Real^d$.
\end{lemma}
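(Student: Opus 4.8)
The plan is to reduce the moment bound for the localized Fourier coefficients to the pointwise moment bound~\eqref{V-point-est} (or its Gaussian analogue) via the forest/partition decomposition developed earlier in this section. First I would write out
\[
\Expec \prod_{j=1}^k \partial_q^{\alpha_j}\Ft{V_{y_j}}(q_j)
= \int e^{-i\sum_j q_j\cdot x_j} \prod_j (-ix_j)^{\alpha_j}\,b_r(x_j-y_j)\,\Expec\prod_{j=1}^k V(x_j-y_j)\diff\mbf{x},
\]
so that all the randomness is concentrated in the factor $\Expec\prod_j V(x_j-y_j)$. Since $V$ has range-$1$ dependence, this expectation factors over the connected components of the graph $G(\mbf{x})$ with edges $(i,j)$ when $|x_i-x_j|\le 2$ (the relevant scale for $V$'s correlations, after the shift by $y_j$). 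To make this splitting combinatorially affordable I would invoke the polynomial identity~\eqref{eq:poly-forest-decomp}, applied with the variables $Y_e = \One(|x_i-x_j|\le 2)$, which writes $1$ as a sum over forests $F$ of $\prod_{e\in F}Y_e\prod_{e\notin D(F)}(1-Y_e)$; each term restricts $G(\mbf{x})$ to have a prescribed minimal spanning forest and hence a prescribed connected-component partition $P(F)$, and there are only $O(k^k)$ such forests. On the term indexed by $F$, the expectation factors as $\prod_{S\in P(F)}\Expec\prod_{j\in S}V(x_j-y_j)$, and each factor is bounded by $(C_V|S|)^{|S|}$ by~\eqref{V-point-est}; the derivatives $\partial^{\alpha_j}$ land on $V$ and contribute at most another factor of the same form since $|\alpha_j|\le 20d$. (In the Gaussian case one instead uses the Wick formula together with the decay of $R$; the forest decomposition still organizes which pairs are correlated, and the bound $|\partial^\beta R|\lesssim (1+|x|)^{-d-1-|\beta|}$ gives the analogous per-cluster estimate.)

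Next I would perform the $\mbf{x}$-integration cluster by cluster. Within a cluster $S$, after translating so that one vertex sits at the origin, the variables $x_j-x_{j(S)}$ for $j\in S$ range over a ball of radius $O(|S|)$ (connectivity of $G|_S$), contributing a volume factor $(C|S|)^{d(|S|-1)}$, while the remaining ``center of mass'' variable carries the oscillation $e^{-i(\sum_{j\in S}q_j)\cdot x_{j(S)}}$ against the product of the $b_r(\cdot-y_j)$'s. Because each $b_r$ is localized to scale $r$ and Gevrey-regular with $|\Ft{b_r}(p)|\le C\exp(-c(r|p|)^{0.999})$, integrating the center-of-mass variable over the $r$-ball where all the bumps overlap produces exactly the factor $r^d\exp(-c_k|r\sum_{j\in S}q_j|^{0.99})$ claimed (the exponent $0.99$ rather than $0.999$ absorbs the polynomial loss from the $k$-fold convolution and the combinatorial volume factors). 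The monomials $(-ix_j)^{\alpha_j}$ multiply the integrand by polynomials of degree $|\alpha_j|$ in $x_j$; on the $r$-scale support these are bounded by $r^{|\alpha_j|}$ up to constants, which accounts for the prefactor $r^{\sum_j|\alpha_j|}$. Finally, the decay $\prod_j(1+|q_j|)^{-20d}$ comes from a separate round of integration by parts: each individual $b_r(x_j-y_j)$ is smooth, so integrating by parts $20d$ times in $x_j$ (before restricting to the overlap region) trades the oscillation $e^{-iq_j\cdot x_j}$ for $(1+|q_j|)^{-20d}$ at the cost of $r^{20d}$ which is harmless since it is absorbed into the constants, or more carefully is compensated because the same integration by parts also improves the localization; I would split the $q_j$ into a ``large'' regime $|q_j|\ge r^{-1}$ handled this way and a ``small'' regime handled trivially.

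The main obstacle I expect is bookkeeping the constants $c_k$ and $C$ so that the stretched-exponential factor survives the $k$-fold convolution with the correct exponent $0.99$. Each cluster integration convolves up to $|S|$ Gevrey bumps, and the Gevrey class is only closed under convolution up to a controlled degradation of the exponent and a growth of constants like $(Ck)^{Ck}$; one must check that $0.999 \to 0.99$ leaves enough room, and that the resulting constants $(CC_V|S|)^{2|S|}$ (rather than $(C_V|S|)^{|S|}$) suffice — the extra power of two comes precisely from combining the moment bound on $V$ with the volume factor $(C|S|)^{d(|S|-1)}$ from the cluster integration. A secondary bookkeeping point is the interaction between the forest sum and the requirement that the final answer be phrased as a sum over \emph{all} partitions $P\in\mcal{P}([k])$ (not just those realized as $P(F)$ for a forest): this is immediate because every partition is $P(F)$ for at least one forest $F$, and collecting terms by $P(F)$ only increases the bound. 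Everything else is a routine, if lengthy, integration-by-parts and triangle-inequality argument of exactly the type already carried out in Section~\ref{sec:duhamel}.
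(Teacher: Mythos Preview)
Your overall strategy is the paper's: expand the Fourier integral, apply the forest identity~\eqref{eq:poly-forest-decomp} to split the expectation over the connected components $P(F)$, pass to a center-of-mass coordinate $x_S$ plus relative coordinates $z_j$ within each cluster, and extract the stretched-exponential factor from the $x_S$-integration against the Gevrey bumps $b_r$. The per-cluster moment bound, the $r^{\sum|\alpha_j|}$ prefactor, and the $(1+|q_j|)^{-20d}$ decay via integration by parts all arise essentially as you say.

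The gap is at the step where you ``perform the $\mbf{x}$-integration cluster by cluster.'' The cutoff
\[
\phi_F(\mbf{x}) = \prod_{e\in F} Y_e \prod_{e\notin D(F)} (1-Y_e)
\]
does \emph{not} factor over the cells of $P(F)$: the second product runs over all edges outside $D(F)$, which includes every inter-cluster edge, so each factor $1-b(x_i-x_j)$ with $i\in S$, $j\in S'\neq S$ couples $x_S$ to $x_{S'}$. When you integrate $x_S$ against $e^{-i(\sum_{j\in S}q_j)\cdot x_S}$, these inter-cluster factors live at scale $1$ rather than scale $r$, so integration by parts in $x_S$ hits them and produces $O(1)$ per derivative instead of the $O(r^{-1})$ you need for the stretched-exponential gain. (Using sharp indicators $Y_e=\One(|x_i-x_j|\le 2)$ as you write makes this worse, since the cutoff is then not even differentiable.) The paper closes this gap by invoking the second M\"obius-type identity~\eqref{eq:poly-indicator-decomp},
\[
\phi_F = \sum_{F'\supset F} (-1)^{|F'\setminus F|}\,\psi_{F'},
\qquad
\psi_{F'}(\mbf{x}) = \prod_{S'\in P(F')}\psi_{F',S'}(\mbf{x}_{S'}),
\]
where each $\psi_{F'}$ involves only \emph{intra}-cluster edge factors of $P(F')$ and hence genuinely splits as a product over the cells of $P(F')$. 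Since $P(F')\leq P(F)$, the expectation (already factored over $P(F)$) refines to a product over $P(F')$, and only then does the integral factor over clusters; the stationarity shift, the $x_S$-integration, and the $z_j$-integration by parts for the $(1+|q_j|)^{-20d}$ decay are all carried out inside a single cell $S'\in P(F')$. You should insert this second decomposition; without it the ``cluster by cluster'' integration does not go through as written.
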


\emph{Remark:}
The estimate~\eqref{eq:Vhat-moment}, along with stationarity, are the two
conditions we actually need for the proof.  In particular, this implies
that independent sums of admissible potentials are also valid potentials for
the proof.

\begin{proof}
We first expand the Fourier transform as follows:
\begin{align*}
    \Expec \prod_{j=1}^k
    \partial^{\alpha_j}\Ft{V_{y_j}}(q_j)
    &= \int
    \Expec \prod_{j\in [k]} V(x_j)
    e^{-i\sum q_j\cdot x_j} \prod_{j=1}^k x_j^{\alpha_j} b_r(x_j-y_j)
    \diff x_1\cdots\diff x_k.
\end{align*}
To split the expectation, let $P(\mbf{x})\in\mcal{P}([k])$ be the partition of $[k]$
into connected components of the graph $G(\mbf{x}) = \{(i,j)\mid |x_i-x_j|\leq 1\}$.
Then for any $P\geq P(\mbf{x})$,
\[
\prod_{j=1}^k V(x_j)
=
\prod_{S\in P} \Expec \prod_{j\in S} V(x_j).
\]

We will use~\eqref{eq:poly-forest-decomp} and
Let $b\in C_c^\infty(\Real^d)$ be a smooth function
that is identical to $1$ on the unit ball $B_1$, has support in $B_2$, and satisfies
\[
|\Ft{b}(p)| \leq C\exp(-c|p|^{0.999}).
\]
Then by~\eqref{eq:poly-forest-decomp}, we have for any $\bm{x}\in(\Real^d)^k$
\[
1 = \sum_F \prod_{(i,j)\in F} b(x_i-x_j) \prod_{(i,j)\not\in D(F)} (1-b(x_i-x_j)).
= \sum_F \phi_F(\bm{x}).
\]
Note that, for $\bm{x}\in\supp\phi_F$, $P(\mbf{x})\leq P(F)$.
We use this decomposition in the integral,
\begin{align*}
    \Expec \prod_{j=1}^k
    \partial^{\alpha_j}\Ft{V_{y_j}}(q_j)
    &= \sum_{F} \int
\phi_F(\mbf{x})
\prod_{S\in P(F)}
    \Expec \prod_{j\in S} V(x_j)
    e^{-i\sum q_j\cdot x_j} \prod_{j=1}^k x_j^{\alpha_j} b_r(x_j -y_j)
    \diff \mbf{x}.
\end{align*}
Next we use the identity~\eqref{eq:poly-indicator-decomp} to write
\[
\phi_G(\mbf{x}) = \sum_{F'\supset F} (-1)^{|F'\setminus F|} \psi_{F'}(\bm{x}),
\]
where
\[
\psi_F(\mbf{x}) = \prod_{e\in F} b(x_i-x_j) \prod_{e\in C(F)\setminus D(F)} (1-b(x_i-x_j)).
\]
Note that $\psi_F$ can be expressed in the form
\[
\psi_F(\mbf{x}) = \prod_{S\in P(F)} (\prod_{e\in F|_S} b(x_i-x_j) \prod_{e\in C(F|_S)\setminus D(F)} (1-b(x_i-x_j)))
=: \prod_{S\in P(F)} \psi_{F,S}(\mbf{x}).
\]
We have therefore arrived at the expression
\begin{align*}
    \Expec \prod_{j=1}^k
    \partial^{\alpha_j}\Ft{V_{y_j}}(q_j)
    &= \sum_{F\subset  F'} (-1)^{|F'\setminus F|} \int
\prod_{S\in P(F)}
    \psi_{F,S}(\mbf{x})
    \Expec \prod_{j\in S} V(x_j)
    e^{-i\sum q_j\cdot x_j} \prod_{j=1}^k x_j^{\alpha_j} b_r(x_j -y_j)
    \diff \mbf{x}.
\end{align*}
We also need to introduce the following change of variables.  Given the forest $F'\geq F$,
which induces a partition
$P'=P(F')\in\mcal{P}([k])$, we define the variables $x_S = x_{\min S}$ for each $S\in P'$.
Also for each $j\in S$ we set $z_j = x_j - x_S$.  Let $S^+ := S\setminus \min\{S\}$, and define
$\mbf{z}_{S^+} = (z_j)_{j\in S^+}$.  By the stationarity of the potential $V$, we have
\[
\prod_{S'\subset S}\Expec \prod_{j\in S'} V(x_j) =
\prod_{S'\subset S}
\Expec\prod_{j\in S'} V(z_j),
\]
where the product over $S'$ ranges over sets $S'\in P(F)$.
Moreover, we have
\[
\sum_{j\in S} x_j\cdot q_j = \big(\sum_{j\in S} q_j\big) \cdot x_S + \sum_{j\in S^+}
q_j\cdot z_j.
\]
Then applying the change of variables $(x_j)_{j\in [k]}\mapsto ((x_S)_{S\in P(F')}, ((z_j)_{j\in S})_{S\in P(F')})$,
we can write
\begin{align*}
    \Expec \prod_{j=1}^k
    \partial^{\alpha_j}\Ft{V_{y_j}}(q_j)
    &= \sum_{F\leq F'} (-1)^{|F'\setminus F|}
%\sum_{\bm{\beta}}
\int \prod_{S\in P(F')} \psi_{F',S}(\mbf{z}_S)
e^{i\sum q_j\cdot z_j}
\prod_{S\subset P(F)} \Expec \prod_{j\in S} V(z_j)
%P_{\bm{\beta}}(\mbf{z})
\\
&\qquad\qquad\qquad\qquad
\prod_{S\in P'(F)} \Big(\int
e^{i(\sum_{j\in S} q_j)\cdot x_S}
\prod_{j\in S} (x_S+z_j)^{\alpha_j} b_r(x_S+z_j-y_j)
\diff x_S\Big)
\diff\mbf{z}_{P(F')}.
\end{align*}

The inner integral satisfies is the Fourier transform of the function
\[
\prod_{j\in S} (x_S+z_j)^{\alpha_j} b(x_S+x_j-y_j),
\]
which is itself a product of $|S|$ functions whose Fourier transform has decay
of the form $\exp(-c(r|p|)^{0.999})$.  Thus
\[
\big|
\int e^{i(\sum_{j\in S} q_j)\cdot x_S}
\prod_{j\in S} (x_S+z_j)^{\alpha_j} b_r(x_S+z_j-y_j)
\diff x_S\big|
\leq \big(\prod_{j\in S} r^{\alpha_j}\big) r^{(1-|S|)d}
\exp(-c (|S|^{-1} |\sum_{j\in S}q_j|)^{0.999}).
\]
The factor $r^{(1-|S|)d}$ comes from a factor of $r^d$ which is the volume of the domain of integration
and a factor of $r^{-d}$ from each function $b_r$ (which are normalized so that $\int b_r = 1$).
Taking a derivative of the integral with respect to the $z_j$ variable at worst produces a factor of $\alpha_j$
and also a factor of $r^{-1}$.  Thus for $\beta_j \leq 20d$,
\[
\big|
\prod_{j\in S}
\partial_{z_j}^{\beta_j}
\int e^{i(\sum_{j\in S} q_j)\cdot x_S}
\prod_{j\in S} (x_S+z_j)^{\alpha_j} b_r(x_S+z_j-y_j)
\diff x_S\big|
\leq C \big(\prod_{j\in S} r^{\alpha_j}\big) r^d
\exp(-c (|S|^{-1} |\sum_{j\in S}q_j|)^{0.999}).
\]
To obtain the desired decay in the $q_j$ variables we integrate by parts $20d$ times in each $q_j$
variable satisfying $|q_j|\geq 1$.

We therefore obtain a bound of the form
\begin{align*}
\big|
&\int \prod_{S\in P(F')} \psi_{F',S}(\mbf{z}_S)
e^{i\sum q_j\cdot z_j}
\prod_{S\subset P(F)} \Expec \prod_{j\in S} V(z_j)
%P_{\bm{\beta}}(\mbf{z})
\\
&\qquad\qquad\qquad\qquad
\prod_{S\in P'(F)} \Big(\int
e^{i(\sum_{j\in S} q_j)\cdot x_S}
\prod_{j\in S} (x_S+z_j)^{\alpha_j} b_r(x_S+z_j-y_j)
\diff x_S\Big)
\diff\mbf{z}_{P(F')}\big|\\
&\qquad\qquad\qquad\qquad\qquad
\lsim
r^{\sum |\alpha_j|} r^{-kd}
\prod_{S\in P(F)} (C_V k)^k
\prod_{j\in[k]}(1+|q_j|)^{-20d}
\prod_{S\in P(F')}
r^d \exp(-c(|S|^{-1}|\sum_{j\in S}q_j|)^{0.999}).
\end{align*}
We sum over all $F$ and $F'$ with fixed partitions, observing that the number of forests having partition $P(F)=P$
is given by $\prod_{S\in P} |S|^{|S|-2}$, by Cayley's formula.
\end{proof}

\section{Wavepackets and quantization}
\label{sec:wp-quantization}
Throughout the paper we will use wavepackets to decompose functions, operators, and
quantum channels.  Our starting point is the definition of a wavepacket.  We will fix
a length scale $r=\eps^{-1}$ and an envelope $\chi_{env}\in\Schwartz(\Real^d)$ which
is compactly supported in $B(0,1)$ and satisfies the Fourier decay estimate
\[
|\Ft{\chi_{env}}(p)|\leq C \exp(-c |p|^{0.9}).
\]
We normalize $\chi_{env}$ so that $\|\chi_{env}\|_{L^2}=1$.
Then given $(x_0,p_0)\in\PhaseSpace$, we define the functions
\[
\phi_{x_0,p_0}(y) = r^{-d/2} e^{ip_0\cdot y} \chi_{env}(r^{-1}(y-x_0)).
\]
We will use greek letters like $\xi$ and $\eta$ to denote points in phase
space $\PhaseSpace$ and will write $\phi_\xi$ for the wavepacket localized
to $\xi$.  We also write $\ket{\xi}$ to denote the wavefunction $\phi_\xi$,
and use Dirac notation.  This means that for example we may write
$\braket{\xi|\psi}$ to mean $\langle \phi_\xi,\psi\rangle$.  For
$p\in\Real^d$ we also use $\ket{p}$ and $\bra{p}$ to denote the momentum
eigenfunctions $e^{ip\cdot x}$.  Thus for example
$\braket{p|\xi} = \Ft{\phi_\xi}(p)$.

For points $\xi=(\xi_x,\xi_p)$ and
$\eta=(\eta_x,\eta_p)$ in phase space we use the metric
\[
d_r(\xi,\eta) := r^{-1}|\xi_x - \eta_x|
+ r|\xi_p-\eta_p|.
\]
This metric is the natural metric for comparing wavepackets at
scale $r$.  The following bound holds for the overlap of wavepackets
\begin{equation}
\label{eq:wp-overlap-bd}
|\braket{\xi|\eta}|
\leq \exp(-c d_r(\xi,\eta)^{0.9}).
\end{equation}
To prove~\eqref{eq:wp-overlap-bd} observe that it suffices to consider
$|\xi_x-\eta_x|\leq 2r$, as otherwise $\braket{\xi|\eta}=0$.  The
bound then follows from the Fourier decay estimate on $\chi_{env}$.

We will occasionally use the
one-parameter family of maps $U_s:\PhaseSpace\to\PhaseSpace$
\[
U_s(\xi) = (\xi_x + s\xi_p, \xi_p).
\]

In Dirac notation, the Fourier inversion formula is written
\[
\Id = \int_{\Real^d} \ket{p}\bra{p}\diff p.
\]
An elementary calculation involving the Fourier inversion formula also
shows that we can decompose the identity operator into wavepackets as
follows
\begin{equation}
\label{eq:id-decomp}
\Id = \int_{\PhaseSpace} \ket{\xi}\bra{\xi}\diff \xi.
\end{equation}
This is equivalent to the identity
\[
\langle f, g\rangle
= \int_{\PhaseSpace} \langle f, \phi_\xi\rangle
\langle \phi_\xi, g\rangle \diff \xi.
\]
As a consequence of~\eqref{eq:id-decomp} we have the following
expression for the $L^2$ norm of a function $\psi\in L^2(\Real^d)$
\begin{equation}
\label{eq:ltwo-wp}
\|\psi\|_{L^2}^2
= \braket{\psi|\psi}
= \int \braket{\psi|\xi}\braket{\xi|\psi}\diff\xi
= \int |\Braket{\psi|\xi}|^2 \diff \xi.
\end{equation}

The identity~\eqref{eq:id-decomp} can be used to decompose arbitrary
bounded operators $A\in\mcal{B}(L^2(\Real^d))$.
\begin{equation}
\label{eq:wp-A-decomp}
\begin{split}
A = \Id A\Id
&= \int \ket{\xi}\braket{\xi|A|\eta}\bra{\eta}\diff\xi\diff\eta \\
&= \int \ket{\xi}\bra{\eta} \braket{\xi|A|\eta}\diff\xi\diff\eta.
\end{split}
\end{equation}
In the last line we simply rearranged $\braket{\xi|A|\eta}$, which
is the scalar $\braket{\phi_\xi, A\phi_\eta}$ and therefore commutes
with multiplication by $\bra{\eta}$.

In this paper we will be working with operators more so than
wavefunctions.  We will primarily bound operators using the operator
norm, and we use the following version of the Schur test to estimate
the operator norm.
\begin{lemma}[Schur test in the wavepacket frame]
\label{lem:schur-test}
Let $A\in\mcal{B}(L^2(\Real^d))$ be an operator of the form
\[
A = \int \ket{\xi}\bra{\eta} a(\xi,\eta)\diff \xi\diff\eta.
\]
Then
\[
\|A\|_{op}^2 \leq C
\big(\sup_\xi\int |a(\xi,\eta)|\diff \eta\big)
\big(\sup_\eta\int |a(\xi,\eta)|\diff \xi\big)
\]
\end{lemma}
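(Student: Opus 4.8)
The plan is to control the sesquilinear form directly, using $\|A\|_{op} = \sup\{|\langle g, Af\rangle| : f,g\in L^2(\Real^d),\ \|f\|_{L^2}=\|g\|_{L^2}=1\}$, and to replace the counting step of the ordinary Schur test by the continuous frame identity \eqref{eq:ltwo-wp}. First I would unwind the wavepacket representation of $A$: for $f,g\in L^2(\Real^d)$,
\[
\langle g, Af\rangle = \int a(\xi,\eta)\,\overline{\braket{\xi|g}}\,\braket{\eta|f}\,\diff\xi\diff\eta ,
\]
so that, after taking absolute values,
\[
|\langle g, Af\rangle| \le \int |a(\xi,\eta)|\,|\braket{\xi|g}|\,|\braket{\eta|f}|\,\diff\xi\diff\eta .
\]

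Next I would apply the symmetric splitting $|a|\,|\braket{\xi|g}|\,|\braket{\eta|f}| = \big(|a|^{1/2}|\braket{\xi|g}|\big)\big(|a|^{1/2}|\braket{\eta|f}|\big)$ followed by the Cauchy--Schwarz inequality in the product measure $\diff\xi\,\diff\eta$, giving
\[
|\langle g, Af\rangle| \le \Big(\int |a(\xi,\eta)|\,|\braket{\xi|g}|^2\,\diff\xi\,\diff\eta\Big)^{1/2}\Big(\int |a(\xi,\eta)|\,|\braket{\eta|f}|^2\,\diff\xi\,\diff\eta\Big)^{1/2}.
\]
For the first factor I integrate in $\eta$ first (Tonelli, since the integrand is non-negative) to extract $\sup_\xi\int|a(\xi,\eta)|\,\diff\eta$, and am left with $\int|\braket{\xi|g}|^2\,\diff\xi$, which is exactly $\|g\|_{L^2}^2$ by \eqref{eq:ltwo-wp}. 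The second factor is handled symmetrically, yielding $\big(\sup_\eta\int|a(\xi,\eta)|\,\diff\xi\big)\|f\|_{L^2}^2$. Combining the two and taking the supremum over unit vectors $f,g$ proves the inequality, so that $C=1$ is admissible.

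The only technical point is to justify that $\langle g, Af\rangle$ really equals the double integral above and that Fubini--Tonelli applies. I would dispatch this by first assuming $a$ bounded and compactly supported in $\PhaseSpace\times\PhaseSpace$: then $|\braket{\xi|g}|\le\|g\|_{L^2}$, $|\braket{\eta|f}|\le\|f\|_{L^2}$ and the overlap decay \eqref{eq:wp-overlap-bd} make the double integral absolutely convergent for all $f,g\in L^2$, so the displayed identity follows from the resolution of the identity \eqref{eq:id-decomp} and $A$ is genuinely bounded; the general case then follows because the right-hand side of the asserted bound is precisely the quantity governing absolute convergence, so it can be passed through by a routine approximation argument. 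I do not expect a real obstacle here: this is just the continuous-frame version of the Schur test, and once \eqref{eq:ltwo-wp} is available the argument is the same three lines as in the discrete case.
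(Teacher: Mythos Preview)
Your proof is correct and in fact cleaner than the paper's. The key difference is that you work with the sesquilinear form $\langle g,Af\rangle$, which gives you direct access to the kernel $a(\xi,\eta)$; the frame identity \eqref{eq:ltwo-wp} then closes the argument with $C=1$. The paper instead expands $\|A\psi\|_{L^2}^2 = \int |\braket{\xi|A\psi}|^2\,\diff\xi$, which introduces an extra overlap factor $\braket{\xi|\xi'}$ and leads to a smeared kernel $b(\xi,\eta)=\int\braket{\xi|\xi'}a(\xi',\eta)\,\diff\xi'$; the Schur test is then run on $b$, and a final step bounds the row and column integrals of $b$ by those of $a$ using $\int|\braket{\xi|\xi'}|\,\diff\xi'\leq C$ from the overlap decay \eqref{eq:wp-overlap-bd}. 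This is where the paper's constant $C>1$ comes from.

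Your route avoids the overlap entirely by never re-expanding $A\psi$ in the frame, so it is both shorter and yields a sharper constant. The paper's route has the minor conceptual advantage that it only tests against a single vector $\psi$ (rather than a pair $f,g$), but this comes at the cost of the extra smearing step. Either way, the heart of the argument is the same Cauchy--Schwarz splitting.
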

\begin{proof}
Let $\psi\in L^2(\Real^d)$.  The first step is to expand,
using~\eqref{eq:ltwo-wp},
\begin{align*}
\|A\psi\|_{L^2}^2
&= \int |\Braket{\xi|A\psi}|^2 \diff \xi \\
&= \int \big|\int \Braket{\xi|\xi'}\braket{\eta|\psi}a(\xi',\eta)\diff\xi'\diff\eta\big|^2 \diff \xi.
\end{align*}
For simplicity now define
$b(\xi,\eta) = \int \braket{\xi|\xi'}a(\xi',\eta)\diff\xi'$.  Then,
using the Cauchy-Schwarz inequality we write
\begin{align*}
\|A\psi\|_{L^2}^2
&\leq
\int
\big(\int |\Braket{\eta|\psi}|^2 |b(\xi,\eta)|\diff \eta\big)
\big(\int |b(\xi,\eta)|\diff \eta\big) \diff \xi \\
&\leq
\sup_\xi \int |b(\xi,\eta)|\diff \eta
\int
\int |\Braket{\eta|\psi}|^2 |b(\xi,\eta)|\diff \eta  \\
&\leq
(\sup_\xi \int |b(\xi,\eta)|\diff \eta)
(\sup_\eta \int |b(\xi,\eta)|\diff \xi)
\|\psi\|_{L^2}^2.
\end{align*}
It remains to show that
\[
\int |b(\xi,\eta)|\diff \eta \leq  C
\sup_\xi
\int |a(\xi,\eta)|\diff \eta.
\]
and
\[
\int |b(\xi,\eta)|\diff \xi \leq  C
\sup_\eta
\int |a(\xi,\eta)|\diff \xi.
\]
The first follows from
\begin{align*}
\int |b(\xi,\eta)|\diff \eta
&= \int |\Braket{\xi|\xi'}||a(\xi',\eta)|\diff \xi'\diff\eta \\
&\leq \int |\Braket{\xi|\xi'}|\diff \xi'
\sup_\xi' \int |a(\xi',\eta)|\diff \eta,
\end{align*}
and the second from an analogous calculation.
\end{proof}

The Weyl quantization map $\Op^w$ takes as input a classical
observable $a\in C^\infty(\PhaseSpace)$ and produces an operator
$\Op^w(a)$ defined by
\[
\Op^w(a)f(x)
:= \int e^{i(x-y)\cdot p} a(\frac{x+y}{2}, p) f(y)\diff y\diff p.
\]
We define rescaled $C^k$ norms for phase space as follows
\[
\|a\|_{C_r^k(\PhaseSpace)}
:=
\sum_{|\alpha|\leq k}
\sup_{\xi\in\PhaseSpace}
|(r\partial_x)^{\alpha_x} (r^{-1}\partial_p)^{\alpha_p} a(\xi)|.
\]

Using this norm we derive the following bounds for the Weyl
quantization map.
\begin{lemma}
\label{lem:weyl-wp-bd}
If $a\in C^k(\PhaseSpace)$ then
\begin{equation}
|\Braket{\xi|\Op^w(a)|\eta}|
\leq C (1+d_r(\xi,\eta))^{-k} \|a\|_{C^{k+d+1}_r}.
\end{equation}
\end{lemma}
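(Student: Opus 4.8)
The plan is to express the matrix element through the (cross-)Wigner transform and then read off the decay by integration by parts. Let $\mathcal{W}_{\phi_\xi,\phi_\eta}(x,p):=\int e^{iy\cdot p}\,\overline{\phi_\eta(x-y/2)}\,\phi_\xi(x+y/2)\diff y$. By the same computation that yields the identity $\langle\Op^w(a)\psi,\psi\rangle=\int a\,\Wigner{\psi}$ recalled in the introduction (polarized, and using that $a$ is real), one has
\[
\Braket{\xi|\Op^w(a)|\eta}=\int_{\PhaseSpace} a(x,p)\,\mathcal{W}_{\phi_\xi,\phi_\eta}(x,p)\diff x\diff p .
\]
Substituting the definition of the wavepackets, the integrand is $r^{-d}\,a(x,p)\,e^{i\Phi}\,\overline{\chi_{env}}\!\big(\tfrac{x-y/2-\eta_x}{r}\big)\chi_{env}\!\big(\tfrac{x+y/2-\xi_x}{r}\big)$ with phase $\Phi = y\cdot p + x\cdot(\xi_p-\eta_p) + \tfrac12 y\cdot(\xi_p+\eta_p)$, integrated over $x,y,p$. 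The two envelopes confine $x$ to the ball $\{|x-\tfrac12(\xi_x+\eta_x)|\lesssim r\}$ and $y$ to the ball $\{|y-(\xi_x-\eta_x)|\lesssim r\}$, and are smooth there at scale $r$; carrying out the $y$-integral first turns the product of envelopes into a Schwartz function of $p$, so that the remaining integrals converge absolutely, which is also what makes the displayed identity meaningful for symbols merely in $C^{k+d+1}$.

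The decay in $d_r(\xi,\eta)$ now comes from three families of integrations by parts, all reading off $\Phi$. First, the $p$-part of $\Phi$ is $y\cdot p$: integrating by parts $k$ times in $p$ moves $\nabla_p$ onto $a$ only (the envelopes are $p$-independent), and since $|y|\sim|\xi_x-\eta_x|$ on the support of the envelopes while $|\nabla_p^j a|\le r^j\|a\|_{C^j_r}$, this gains a factor $\lesssim(1+r^{-1}|\xi_x-\eta_x|)^{-k}$ at the cost of $\|a\|_{C^k_r}$. Second, the $x$-part of $\Phi$ is $x\cdot(\xi_p-\eta_p)$: integrating by parts $k$ times in $x$ moves $\nabla_x$ onto $a$ and onto the envelopes, and since $|\nabla_x a|\le r^{-1}\|a\|_{C^1_r}$ and $|\nabla_x\chi_{env}(\cdot/r)|\lesssim r^{-1}$, this gains $\lesssim(1+r|\xi_p-\eta_p|)^{-k}$, again at the cost of $\|a\|_{C^k_r}$, while the $x$-integral only runs over a ball of radius $\lesssim r$. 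Third — and crucially costing \emph{no} derivatives of $a$ — the $y$-part of $\Phi$ is $y\cdot(p+\tfrac12(\xi_p+\eta_p))$: integrating by parts $d+1$ times in $y$ moves $\nabla_y$ onto the envelopes only, producing $(1+r|p+\tfrac12(\xi_p+\eta_p)|)^{-(d+1)}$ and rendering the $p$-integral absolutely convergent. Since none of these operations ever places a $y$-derivative on $a$, the total number of derivatives of $a$ that is ever needed is at most $k+d+1$, i.e.\ $\|a\|_{C^{k+d+1}_r}$ (with room to spare).

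To assemble the bound I would split on which term of $d_r(\xi,\eta)=r^{-1}|\xi_x-\eta_x|+r|\xi_p-\eta_p|$ dominates. If $r^{-1}|\xi_x-\eta_x|\ge\tfrac12 d_r$, use the $p$-integrations by parts (together with the $d+1$ $y$-integrations by parts for convergence, and the trivial bound $|\Braket{\xi|\Op^w(a)|\eta}|\lesssim\|a\|_{C^{d+1}_r}$ when $d_r\lesssim1$) to obtain $\lesssim(1+r^{-1}|\xi_x-\eta_x|)^{-k}\|a\|_{C^{k+d+1}_r}\lesssim(1+d_r)^{-k}\|a\|_{C^{k+d+1}_r}$; symmetrically, if $r|\xi_p-\eta_p|\ge\tfrac12 d_r$, use the $x$-integrations by parts. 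The powers of $r$ cancel globally, as one checks on $a\equiv1$, where the identity reduces to $\Braket{\xi|\eta}=\langle\phi_\xi,\phi_\eta\rangle$; this gives the claim.

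The hard part is not any single estimate but the lack of absolute convergence of the integral defining $\Op^w(a)$ for a symbol of order zero: one must commit to doing the $y$-integral (equivalently, working with the Wigner transform rather than with the Schwartz kernel $\check a(\tfrac{x+y}{2},x-y)$) before the $p$-integral, and one must ensure that the integrations by parts used to restore convergence are paid for by derivatives of the fixed envelope $\chi_{env}$, not of $a$. This is exactly the structural point that keeps the derivative count at $k+d+1$ rather than doubling to something like $2k+d+1$; the remainder is routine, if lengthy, bookkeeping of envelope supports and of the $r$-scalings.
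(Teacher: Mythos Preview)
Your proof is correct and follows essentially the same integration-by-parts strategy as the paper, with the cosmetic difference that you work in the Wigner (center--difference) coordinates $(x,y)=((x'+y')/2,\,x'-y')$ from the outset, whereas the paper stays in the original kernel coordinates and only passes to $(\bar x,\Delta x)$ in its third case. One small genuine gain of your packaging: by using $y$-integration by parts (which never touches $a(x,p)$) to restore $p$-integrability in \emph{both} separation regimes, you avoid the mixed $\partial_x^{d+1}\partial_p^{k}a$ terms that arise in the paper's second case (where it uses $x$-integration by parts for the same purpose), so your argument in fact shows the bound with $\|a\|_{C^{k}_r}$ in place of $\|a\|_{C^{k+d+1}_r}$ --- your ``with room to spare'' is an understatement.
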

\begin{proof}
First we write out explicitly
\begin{align*}
\braket{\xi| \Op^w(a)|\eta}
=
r^{-d}
\int e^{i[x\cdot (p-\xi_p) + y\cdot (\eta_p-p)]} a(\frac{x+y}{2},p)
\chi_{env}(r^{-1}(x-\xi_x))
\chi_{env}(r^{-1}(y-\eta_x))\diff y\diff p\diff x.
\end{align*}
There are several cases to consider.  The first case is that
$d_r(\xi,\eta)\leq K$.  In this case we simply aim to show that
the integral is bounded by a constant.  To do this we will split up
the integral over the $p$ variable as follows.
Let $\rho \in C_c^\infty$ be a smooth cutoff function that
is supported in $B(0,2r^{-1})$ and is identically $1$ on the ball
$B(0,r^{-1})$.  We decompose the integration over $p$ using
the identity
\[
1 = (1-\rho(p-\xi_p))
+ \rho(p-\xi_p)(1-\rho(p-\eta_p)) + \rho(p-\xi_p)\rho(p-\eta_p).
\]
In the second and third terms, we can use localization to a ball
of radius $r^{-1}$ in $p$ and simply apply the triangle inequality
to bound the integral by a constant $C$.
For the first term we integrate by parts $d+1$ times in the variable $x$
to obtain integrable decay in the $p$ variable outside the ball
$B(\xi_p,r^{-1})$.

The second case is that $d_r(\xi,\eta)\geq K$ and
$r^{-1}|\xi_x-\eta_x| \geq \frac{1}{2}d_r(\xi,\eta)$.
In this case we integrate by parts $k$ times in the $p$ variable to obtain
decay in $|x-y|$, using $|x-y|\geq \frac{1}{4}|\xi_x-\eta_x|$
on the support of the envelopes.  An additional integration by parts of
order $d+1$ is required in the $x$ variable to obtain the required
localization in the $p$ variable as described above.

The final case is that $d_r(\xi,\eta)\geq K$ and
$r|\xi_p-\eta_p| \geq \frac{1}{2}d_r(\xi,\eta)$.  In this case
we perform the change of
variables $(x,y)\mapsto (\overline{x},\Delta x) :=
(\frac{x+y}{2}, x-y)$ and integrate by parts in the $\overline{x}$
variable $k$ times.  A final integration by parts $d+1$ times
in the $\Delta x$ variable is required to obtain integrability in the
$p$ variable.
\end{proof}

We note that combining Lemma~\ref{lem:weyl-wp-bd} and
Lemma~\ref{lem:schur-test} provides an alternative proof of the
Calder{\'o}n-Vaillancourt theorem on the $L^2$ boundedness
of pseudodifferential operators.

We conclude this section by comparing the Weyl quantization to the wavepacket
quantization $\Op(a)$ which we define as
\[
\Op(a) := \int \ket{\xi}\bra{\xi} a(\xi)\diff \xi.
\]
The key relationship between these quantizations is that the Wigner
function of the wavpeacket envelope, $W_{env}(x,p) := \Wigner{\chi_{env}}(x,p)$,
satisfies
\[
\Op^w(W_{env}) = \ket{(0,0)}\bra{(0,0)}.
\]
Then applying translation and modulation symmetries of the Wigner function
and of the wavepacket frame, this implies
\[
\Op^w(W_{env}(\cdot-x_0,\cdot-p_0)) = \ket{(x_0,p_0)}\bra{(x_0,p_0)}.
\]
Therefore
\[
\Op(a) = \Op^w(a\ast W_{env}),
\]
so
\[
\|\Op(a) - \Op^w(a)\|_{op} \leq \|a-a\ast W_{env}\|_{C^{2d+1}_{r}}.
\]
Then using the fundamental theorem of calculus on each partial derivative we
obtain the following bound for $\delta<1$:
\[
\|a - a(\cdot-x_0,\cdot-p_0)\|_{C^{2d+1}_r}
\leq (\delta (r^{-1}|x_0|+r|p_0|) + \delta^{2d+1}) \|a\|_{C^{2d+1}_{r,\delta^{-1}}}.
\]
Then, since $W_{env}(x,p)$ is localized to the region
$|x|\lsim r$ and $|p|\lsim r^{-1}$, we conclude
\[
\|a - a\ast W_{env}\|_{C^{2d+1}_r} \leq C \delta \|a\|_{C^{2d+1}_{r,\delta^{-1}}}.
\]

Therefore we conclude
\begin{equation}
\label{eq:quantization-compare}
\|\Op(a) - \Op^w(a)\|_{op} \leq C\delta\|a\|_{C^{2d+1}_{r,\delta^{-1}}}.
\end{equation}

\section{Elementary estimates for the linear Boltzmann equation}
\label{sec:regularity}
In this section we prove some simple estimates for solutions to the linear
Boltzmann equation
\begin{equation}
\label{eq:linear-boltz}
\partial_t f + p\cdot\nabla_x f = Lf,
\end{equation}
where $L$ is the scattering operator
\[
Lf(x,p) = \eps^2 \int \delta(|p|^2 - |q|^2) G(p-q)[f(x,q)-f(x,p)] \diff q.
\]
We are interested in rescaled $C^k$ estimates for the solution.  Recall
the $C^k_{r,L}$ norm is defined by
\[
\|f\|_{C^k_{r,L}} := \sum_{|\alpha_x|+|\alpha_p|\leq k}
\sup_{x,p} |(r L \partial_x)^{\alpha_x} (r^{-1} L \partial_p)^{\alpha_p} f(x,p)|.
\]
This norm measures  the smoothness of the function $f$ to scale $Lr$ in space and $Lr^{-1}$ in momentum.

To solve~\eqref{eq:linear-boltz} we use a series expansion
\begin{equation}
\label{eq:f-series}
f_t = \sum_{j=0}^\infty g_{t,j}
\end{equation}
where
\[
g_{t,0}(x,p) = f_0(x-tp,p)
\]
and
\begin{equation}
\label{eq:g-iteration}
g_{t,j+1}(x,p) = \int_0^t (Lg_{t-s,j})(x-sp,p)\diff s.
\end{equation}
The partial sums
\[
f_{t,N} := \sum_{j=0}^N g_{t,j}
\]
solve
\[
\partial_t f_{t,N} + p\cdot\nabla_x f_{t,N} = L f_{t,N-1}.
\]
Our task is to show that the terms $g_{t,j}$ are summable in $C^k_{r,L}$ for sufficiently small $t$.
We will prove the following result.
\begin{lemma}
\label{lem:boltz-regularity}
For every $k\in\bbN$, and dimension $d\geq 2$, there exists a constant $C=C(k,d)$ such that the following holds:
If $\supp f\subset \{(x,p)|\mid |p|\geq \theta\}$
and $T\leq C^{-1}\delta^{-1} \min\{r^{-2}, \eps^2 \theta^{k-1}\}$ then the
series~\eqref{eq:f-series} converges and the terms $g_{j,t}$ satisfy
\[
\|g_{j,t}\|_{C^k_{r,\delta L}} \leq
(C\eps^2 t \theta^{1-k})^j \|f\|_{C^k_{r,L}}.
\]
In particular, it follows that, for $t\leq T$,
\[
\|f_t\|_{C^k_{r,\delta L}} \leq C \|f\|_{C^k_{r,L}}.
\]
Moreover
\[
\|f_t - f_{t,1}\|_{C^k_{r,L}} \leq C (\eps^2 t\delta \theta^{1-k})^2 \|f\|_{C^k_{r,L}}.
\]
\end{lemma}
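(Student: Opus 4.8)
The plan is to write $f_t-f_{t,1}$ as a \emph{two-collision} Duhamel integral of the full solution and then estimate it by running the argument behind the per-term bound once more, keeping two scattering operators explicit so that the square $(\eps^2t\,\delta\,\theta^{1-k})^2$ appears as a product of two per-collision gains.

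First I would set up the reduction. Since $f_{t,1}=g_{t,0}+g_{t,1}$, the remainder is $\sum_{j\ge 2}g_{t,j}$. Writing $S_s$ for the free transport $S_sg(x,p)=g(x-sp,p)$, the recursion $g_{u,j+1}=\int_0^uS_sLg_{u-s,j}\,\diff s$ together with $\sum_{m\ge 0}g_{u,m}=f_u$ gives $\sum_{j\ge 1}g_{u,j}=\int_0^uS_sLf_{u-s}\,\diff s$, and applying this identity twice yields
\[
f_t-f_{t,1}=\int_0^t\!\int_0^{t-s}S_sL\,S_{s'}L\,f_{t-s-s'}\,\diff s'\,\diff s .
\]
This is the object to estimate: two passages through $L$, interleaved with free transport, applied to the full solution, which by the first part of the lemma is already controlled at the finer scale, $\|f_u\|_{C^k_{r,\delta L}}\le C\|f\|_{C^k_{r,L}}$ for $u\le T$.

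The engine is the one-step estimate implicit in the proof of the per-term bound: for $0\le u\le T$ the operator $h_\bullet\mapsto\int_0^uS_sLh_{u-s}\,\diff s$ sends a time-indexed family controlled in a rescaled $C^k$ norm into one controlled in the neighbouring rescaled norm, with operator bound $C\eps^2u\,\delta\,\theta^{1-k}$. This is proved by the familiar accounting: differentiate at most $k$ times and integrate out one variable at a time with Lemma~\ref{lem:abstract-holder}; the $\eps^2$ and one factor $u$ come from $L$ and the $s$-integral; derivatives landing on the energy-sphere measure $\delta(|p|^2-|q|^2)$ inside $L$ each cost $|p|^{-1}\le\theta^{-1}$ and, after one derivative is absorbed harmlessly, produce the factor $\theta^{1-k}$ (the transverse estimate using the curvature of $\{|q|^2=|p|^2\}$, hence $d\ge 2$); and the remaining factor $\delta$ comes from the transport term $\int_0^uS_s(\cdot)\,\diff s$ being, on the window $[0,T]$, only a $\delta$-size perturbation relative to the working smoothness scale — precisely the content of the constraint on $T$. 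Applying this estimate to the inner $S_{s'}L$-layer (fed by the scale-$\delta L$ solution and landing back at scale $\delta L$) and then to the outer $S_sL$-layer (landing at scale $L$), and using that the simplex $\{s+s'\le t\}$ has area $t^2/2$, gives $\|f_t-f_{t,1}\|_{C^k_{r,L}}\le C\,(\eps^2t\,\delta\,\theta^{1-k})^2\,\|f\|_{C^k_{r,L}}$.

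The step I expect to be the main obstacle is the one-step estimate with the \emph{full} gain $\eps^2u\,\delta\,\theta^{1-k}$ rather than the coarser $\eps^2u\,\theta^{1-k}$ that suffices for the convergence of the series: one must track carefully how the change of the rescaled $C^k$ scale across a collision interacts with the derivatives of $L$ and of the transport flow, so that each of the two collisions yields exactly one clean power of $\delta$. Concretely this means splitting the derivatives of $L\,S_{s'}L\,f$ according to whether they fall on the spherical measure, on the amplitude $G$, on the transport, or on the restriction of the solution to the sphere, and supplying a short dispersive estimate for the spherical average in dimension $d\ge 2$. Everything else — the resummation, the bookkeeping of the time integrals, and the use of the already-established bound for $f_u$ — is routine.
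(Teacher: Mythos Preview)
Your double-Duhamel identity for $f_t-f_{t,1}$ is correct, but the paper takes a much more direct route: it establishes the per-term bound $\|g_{j+1,t}\|_{C^k_r}\le C\eps^2 t(1+Ctr^{-2})\theta^{1-k}\sup_s\|g_{j,s}\|_{C^k_r}$ by a single application of the triangle inequality to the recursion $g_{j+1,t}=\int_0^t T_sLg_{j,t-s}\,\diff s$, and then all three conclusions follow by summing the resulting geometric series (the third from $\sum_{j\ge 2}$). There is no need to close on the full solution via the two-collision integral.

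The more substantive difference is in how $L$ is controlled. You invoke a dispersive estimate for the spherical average and curvature of the energy sphere, but the paper does neither. Instead it computes exact commutator identities in polar coordinates (Lemma~\ref{lem:derivative-calculations}): tangential derivatives annihilate $\delta(|p|^2-|q|^2)$ outright, and the radial derivative commutes with $L_H$ up to replacing $H$ by a modified kernel $H_R$. Iterating these identities $k$ times yields $\|L_Hf\|_{C^k}\le C\theta^{1-k}\|f\|_{C^k}$ (Lemma~\ref{lem:L-Ck-bd}) with no stationary-phase input; combined with the elementary transport bound $\|T_sf\|_{C^k_{r,L}}\le(1+Csr^{-2})\|f\|_{C^k_{r,L}}$ this gives the iteration step directly.

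Your concern about extracting one clean power of $\delta$ per collision is well-placed, but note that the paper's proof does not actually track this: the iteration inequality it writes down is in $C^k_r$ with no $\delta$ at all, and the proof ends there. So the mechanism you identify as ``the main obstacle'' is not addressed in the paper either.
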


This reduces to demonstrating bounds for the scattering operator $L$ in the $C^k_{r,L}$ norm.
To do this we will treat the radial and tangential derivatives separately, which is why we require an
assumption that $f$ is supported away from $|p|=0$.

\subsection{The $C^k$ norm in polar coordinates}

Given a vector field $V\in C^\infty(\Real^d)$ we define the derivative operator
\[
D_V f = V(p)\cdot\nabla_p f.
\]
We say that a vector field $V$ is tangential if $p\cdot V(p)=0$ and we write $R(p) = p/|p|$ for the radial
unit vector field.

\begin{lemma}
\label{lem:polar-Ck}
Let $f\in C^\infty(\Real^d\setminus B_\theta)$.  Then for any multi-index $|\alpha|=k$,
\[
|\partial^\alpha f(x,p)|
\leq C_{k,d} \sup_{W_1,\cdots,W_k} |D_{W_1}D_{W_2}\cdots D_{W_k} f(x,p)|
+ \sum_{j<k} \theta^{j-k} \|f\|_{C^j}.
\]
where each vector field $W_j$ is either tangential or is the radial vector field $R$,
and the tangential vector fields $W_j$ are smooth on the unit sphere and are $0$-homogeneous.
\end{lemma}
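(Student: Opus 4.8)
The plan is to pass to polar coordinates in the momentum variable and express every Cartesian partial derivative $\partial_{p_i}$ as a combination of the radial derivative $D_R$ and a tangential derivative. Concretely, writing $R(p)=p/|p|$ for the radial unit field and
\[
T_i(p):=e_i-\frac{p_i}{|p|^2}\,p,
\]
one has the pointwise identity $\partial_{p_i}=\frac{p_i}{|p|}D_R+D_{T_i}$ on $\Real^d\setminus\{0\}$; here $p_i/|p|$ is a smooth $0$-homogeneous function and $T_i$ is a smooth, $0$-homogeneous, tangential vector field (indeed $p\cdot T_i\equiv 0$). Both summands are homogeneous operators of degree $-1$, which is the bookkeeping that makes the induction below close up.

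Next I would prove, by induction on $k=|\alpha|$, that $\partial^\alpha f$ is a finite linear combination, with at most $C_{k,d}$ terms, of expressions of the form
\[
c(p)\,|p|^{-(k-j)}\,D_{U_1}\cdots D_{U_j}f,\qquad 0\le j\le k,
\]
where $c$ is a smooth $0$-homogeneous function (hence bounded on $\Real^d\setminus\{0\}$ by its supremum on $S^{d-1}$) and each $U_m$ is either $R$ or one of the tangential fields $T_1,\dots,T_d$. For the inductive step one applies $\partial_{p_i}=\frac{p_i}{|p|}D_R+D_{T_i}$ to such a term and uses the product rule together with the elementary facts that $D_R(g)=0$ when $g$ is $0$-homogeneous, $D_{T_i}(g)=|p|^{-1}\tilde g$ with $\tilde g$ smooth $0$-homogeneous when $g$ is $0$-homogeneous, $D_R(|p|^{-m})=-m|p|^{-m-1}$, and $D_{T_i}(|p|^{-m})=0$. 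Tracking the homogeneity degree of the scalar prefactor shows that every resulting term again has the displayed shape with $k$ incremented; since the new derivative is applied on the far left, no vector-field commutators are needed.

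Finally I would read off the claimed estimate on the support $\{|p|\ge\theta\}$. The terms with $j=k$ have prefactor $c(p)\,|p|^{0}$, which is bounded, and contribute $C_{k,d}\sup_{W_1,\dots,W_k}|D_{W_1}\cdots D_{W_k}f|$. For each term with $j<k$ I would expand $D_{U_1}\cdots D_{U_j}f$ back into Cartesian derivatives: the coefficient multiplying $\partial^\gamma f$ with $|\gamma|=j-\ell$ is a sum of products of $\ell$-fold derivatives of the $0$-homogeneous fields $U_m$, hence bounded by $C|p|^{-\ell}\le C\theta^{-\ell}$ on the support; combined with $|p|^{-(k-j)}\le\theta^{-(k-j)}$ and reindexing $i=j-\ell$ this gives a contribution bounded by $\sum_{i<k}C\theta^{i-k}\|f\|_{C^i}$, which is exactly the lower-order term in the statement. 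There is no serious obstacle; the only point requiring care is the homogeneity accounting of the scalar prefactors in the induction (so that the exponent of $|p|$ stays equal to $j-k$), which is why I single out the behaviour of $0$-homogeneous functions and radial powers under $D_R$ and $D_{T_i}$ at the outset.
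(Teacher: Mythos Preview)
Your proposal is correct. The approach is essentially the same as the paper's---both rely on the decomposition of the gradient into radial and tangential parts and the $0$-homogeneity of the polar frame to control the lower-order terms by powers of $\theta^{-1}$.

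The one difference worth noting is the \emph{direction} of the argument. You work forwards, expressing each Cartesian derivative $\partial_{p_i}=\tfrac{p_i}{|p|}D_R+D_{T_i}$ and inductively writing $\partial^\alpha f$ as a combination of terms $c(p)|p|^{-(k-j)}D_{U_1}\cdots D_{U_j}f$ with $c$ $0$-homogeneous. The paper instead works backwards: it expands $D_{W_1}\cdots D_{W_k}f$ as $\sum_{j_1,\dots,j_k}\prod_i W_{i,j_i}\,\partial_{j_1}\cdots\partial_{j_k}f$ plus lower-order terms (bounded via $|\partial^\beta W|\le C\theta^{-|\beta|}$), and then uses that the polar frame is orthonormal at each point to invert the leading term and recover $\partial^\alpha f$. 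Your route is slightly more explicit and avoids the pointwise inversion step; the paper's route is a bit shorter to write. Both rely on the same homogeneity bookkeeping and yield the same estimate.
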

\begin{proof}
The proof is by induction, with the case $k=1$ following from the fact that for any $p$ one can construct an
orthonormal basis from the radial vector $p/|p|$ and a collection of $d-1$ tangential unit vectors, and the bound
\[
\|v\| \leq C_d \max_{1\leq j\leq d} |v\cdot e_j|
\]
which holds for any orthonormal basis $e_j$.

To induct we observe that
\[
D_{W_1}D_{W_2}\cdots D_{W_k} f
= \sum_{j_1,\cdots,j_k=1}^d  \prod_{i=1}^k W_{i,j_i}  (\partial_{j_1}\partial_{j_2}\cdots\partial_{j_k} f)
+ \sum_{j=0}^{k-1} E_j[W_1,\cdots,W_k]f(p),
\]
where the remainder term $E_j[W_1,\cdots,W_k]f$ involves a sum of lower order terms in which $f$ is differentiated
$j$ times and $k-j$ derivatives of the vector fields $W$ appear.  On the set $\Real^d\setminus B_\theta$,
the vector fields satisfy
\[
|\partial^\alpha W| \leq C \theta^{-|\alpha|}
\]
by $0$-homogeneity.  In particular we conclude that
\[
\big|\sum_{j_1,\cdots,j_k=1}^d  \prod_{i=1}^k W_{i,j_i}  (\partial_{j_1}\partial_{j_2}\cdots\partial_{j_k} f)\big|
\leq
|D_{W_1}D_{W_2}\cdots D_{W_k} f|
+ C_k \sum_{j=0}^{k-1} \theta^{j-k}\|f\|_{C^j}.
\]
The result follows from summing over all choices of tangential and radial derivatives (fixing a basis of tangential
derivatives).
\end{proof}

To control higher order derivatives of the form $D_{W_1}\cdots D_{W_k}$ applied to the scattering operator $L$
we need to first work out some exact computations.  Define the generalized scattering operator with kernel $H$
to be
\[
L_H f(p) = \int \delta(|p|^2-|q|^2) H(p,q) [f(q)-f(p)]\diff q.
\]
For convenience, we split this into an ``incoming'' part $L_H^+$ and an ``outgoing'' part
$L_H^-$, with
\[
L_H^+(p) = \int \delta(|p|^2-|q|^2) H(p,q) f(q)\diff q
\]
and
\[
L_H^-(p) = -\int \delta(|p|^2-|q|^2) H(p,q) f(p)\diff q.
\]

\begin{lemma}
\label{lem:derivative-calculations}
If $H$ is a $C^\infty$ scattering kernel and $V$ is a tangential vector field, we have the identities
\begin{align}
\label{eq:radial-derivative}
D_R L_Hf &= L_H D_R f + L_{H_R} f \\
\label{eq:tangential-derivative}
D_V L_Hf &= L_H^- D_V f + L_{D_V[p] H} f
\end{align}
where
\[
H_R(p,q) = (\frac{p}{|p|}\cdot\nabla_p + \frac{q}{|q|}\cdot \nabla_q + d) H(p,q).
\]
and
\[
D_V[p] H(p,q) = p\cdot\nabla_p H(p,q).
\]
\end{lemma}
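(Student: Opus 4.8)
The plan is to derive both identities directly, by regarding $L_H$ as an integral operator against the energy-conserving delta distribution $\delta(|p|^2-|q|^2)$ and differentiating under the integral, exploiting the elementary but crucial fact that a tangential vector field annihilates this delta whereas the radial one does not. First I would record the general Leibniz expansion: for any vector field $W$, writing $D_W^{(p)}=W(p)\cdot\nabla_p$ for the operator acting in the $p$-slot and $D_W^{(q)}=W(q)\cdot\nabla_q$ for the $q$-slot, and using that $D_W^{(p)}$ kills the $q$-dependent factor $f(q)$,
\begin{multline*}
D_W\big(L_Hf\big)(p)=\int\big(D_W^{(p)}\delta(|p|^2-|q|^2)\big)H(p,q)[f(q)-f(p)]\,dq\\
+\int\delta(|p|^2-|q|^2)\big(D_W^{(p)}H\big)[f(q)-f(p)]\,dq-\Big(\int\delta(|p|^2-|q|^2)H(p,q)\,dq\Big)(D_Wf)(p).
\end{multline*}
The last term is exactly $L_H^-(D_Wf)$. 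So the entire content of the lemma is the analysis of the delta-derivative term, together with deciding when the middle term (and, in the radial case, a contribution coming out of the delta term) can be reassembled into $L_H(D_Wf)$.

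For the tangential identity \eqref{eq:tangential-derivative} there is nothing more to do: since $V$ is tangential, $V(p)\cdot p=0$, hence $D_V^{(p)}\delta(|p|^2-|q|^2)=2\big(V(p)\cdot p\big)\,\delta'(|p|^2-|q|^2)=0$, and the first term in the expansion vanishes identically. The middle term is $L_{D_V^{(p)}H}f$ (this kernel $D_V^{(p)}H$ being what is denoted $D_V[p]H$ in the statement), and the last term is $L_H^-(D_Vf)$, which is precisely \eqref{eq:tangential-derivative}.

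The radial identity \eqref{eq:radial-derivative} is the substantive case, because now $D_R^{(p)}\delta(|p|^2-|q|^2)=2|p|\,\delta'(|p|^2-|q|^2)\neq0$, and the $\delta'$ must be absorbed by an integration by parts over the energy sphere. The key step is to rewrite the $p$-derivative of the delta as a $q$-derivative: from $D_R^{(q)}\delta(|p|^2-|q|^2)=-2|q|\,\delta'(|p|^2-|q|^2)$ one gets $\delta'(|p|^2-|q|^2)=-\tfrac1{2|q|}D_R^{(q)}\delta(|p|^2-|q|^2)$, so the delta-derivative term becomes $-\int\big(D_R^{(q)}\delta\big)\tfrac{|p|}{|q|}H(p,q)[f(q)-f(p)]\,dq$, and integrating by parts (legitimate since $f$ is supported away from $|p|=0$, so $q/|q|$ is smooth on the relevant region and the sphere is compact) moves $D_R^{(q)}$ off the delta. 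This produces four pieces: the divergence $\operatorname{div}_q(q/|q|)=(d-1)/|q|$ acting as a multiplier; a piece where $D_R^{(q)}$ lands on $H$, giving the kernel $D_R^{(q)}H$; a piece where $D_R^{(q)}$ lands on $f(q)$, giving $\int\delta H\,(D_Rf)(q)\,dq=L_H^+(D_Rf)$; and a piece where $D_R^{(q)}$ lands on the Jacobian factor $|p|/|q|$. Combining the $L_H^+(D_Rf)$ piece with the $L_H^-(D_Rf)$ term from the general expansion gives $L_H(D_Rf)$, while the multiplier, the Jacobian piece, and the $D_R^{(p)}H$ term from the general expansion collect (after evaluating on $|p|=|q|$) into $L_{H_R}f$, where $H_R$ is the combination of the two first-order radial derivatives plus the zeroth-order multiplier recorded in the statement, the multiplier arising from $\operatorname{div}_q(q/|q|)$ together with the power of $|q|$ in the coarea Jacobian. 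An equivalent and more transparent route is to pass to polar coordinates via the coarea formula \eqref{eq:coarea}: setting $p=\rho\omega$, $q=\rho\nu$ gives $L_Hf(\rho\omega)=\tfrac{\rho^{d-2}}{2}\int_{\Sphere^{d-1}}H(\rho\omega,\rho\nu)[f(\rho\nu)-f(\rho\omega)]\,d\nu$, and since $D_R=\partial_\rho$ on radial lines the product rule in $\rho$ immediately separates the $\partial_\rho$ on $f$ (which is $L_H(D_Rf)$, using $\partial_\rho[H(\rho\omega,\rho\nu)]=(D_R^{(p)}+D_R^{(q)})H$) from the $\partial_\rho$ on $\rho^{d-2}$ and on $H$ (which is $L_{H_R}f$).

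The main obstacle is essentially bookkeeping rather than conceptual: one must justify the distributional manipulation $\delta'\mapsto-\tfrac1{2|q|}D_R^{(q)}\delta$ and the subsequent integration by parts on the energy sphere, and then track the various powers of $|q|$ (equivalently $\rho$) and the Jacobian so that the zeroth-order multiplier inside $H_R$ comes out with the correct constant; once the identity for $D_R$ is set up, the tangential identity is immediate. These two identities, together with the elementary polar decomposition of the $C^k$ norm in Lemma~\ref{lem:polar-Ck}, are then iterated to control $D_{W_1}\cdots D_{W_k}(Lf)$ and feed into Lemma~\ref{lem:boltz-regularity}, but that iteration lies beyond the present statement.
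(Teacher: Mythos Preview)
Your proposal is correct and follows essentially the same approach as the paper: for the tangential case you both use that $V(p)\cdot\nabla_p\delta(|p|^2-|q|^2)=0$ and read off the result directly, and for the radial case you both convert the $p$-derivative of the delta into a $q$-derivative via $p\cdot\nabla_p\delta(|p|^2-|q|^2)=-q\cdot\nabla_q\delta(|p|^2-|q|^2)$ and then integrate by parts in $q$, collecting the resulting pieces into $L_H(D_Rf)+L_{H_R}f$. Your polar-coordinate alternative (writing $L_Hf(\rho\omega)=\tfrac{\rho^{d-2}}{2}\int_{\Sphere^{d-1}}H(\rho\omega,\rho\nu)[f(\rho\nu)-f(\rho\omega)]\,d\nu$ and applying $\partial_\rho$) does not appear in the paper but is a clean way to see where the $+d$ in $H_R$ comes from.
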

\begin{proof}
For the radial derivative, we start with the observation
\[
p\cdot \nabla_p \delta(|p|^2-|q|^2)
= -q\cdot\nabla_q \delta(|p|^2-|q|^2).
\]
On integrating by parts we have
\begin{align*}
D_R Lf
&= D_R[p](\int \delta(|p|^2 -|q|^2) H(p-q) [f(q)-f(p)]\diff q) \\
&= \int \delta(|p|^2-|q|^2) (D_R[p] H)(p-q) [f(q)-f(p)] \diff q \\
&\qquad + \int \frac{1}{|p|}\delta(|p|^2-|q|^2) \nabla_q \cdot (q (H(p-q)[f(q)-f(p)])) \diff q \\
&\qquad - \int \delta(|p|^2-|q|^2) H(p-q) D_Rf(p)\diff q \\
&= L_{D_R[p] H} f + L_H^-(D_R f)
+ \int \frac{1}{|p|}\delta(|p|^2-|q|^2) \nabla_q \cdot (q (H(p-q)[f(q)-f(p)])) \diff q.
\end{align*}
The calculation then follows from applying the product rule.

For the tangential derivative, we use the fact that
\[
V(p) \cdot \nabla_p \delta(|p|^2-|q|^2) = 0
\]
for tangential vector fields $V$.  Then~\eqref{eq:tangential-derivative} follows
from a direct calculation.
\end{proof}

We are now ready to prove our main bound.
\begin{lemma}
\label{lem:L-Ck-bd}
Let $H$ be a $C^\infty$ scattering kernel satisfying
\begin{equation}
\label{eq:H-symb-est}
|\partial^\alpha H(p,q)| \leq C_\alpha A \theta^{-|\alpha|},
\end{equation}
and suppose that $f$ is a function supported in $\{|p|\geq \theta\}$.  Then for $k\geq 1$,
\[
\|L_H f\|_{C^k} \leq  A C_k\theta^{1-m} \|f\|_{C^k}.
\]
\end{lemma}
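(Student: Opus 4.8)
The plan is to induct on $k$, reducing the bound on $\partial^\alpha L_Hf$ for $|\alpha|=k$ to two ingredients: first, commuting the $k$ derivatives through the operator $L_H$ using the identities of Lemma~\ref{lem:derivative-calculations}, which produces finitely many terms in which either a derivative has been moved onto $f$ or the kernel $H$ has been replaced by a first-order transform of itself; and second, a coarea computation estimating the bare scattering operator. The hypothesis $\supp f\subset\{|p|\geq\theta\}$ is used in two ways: it forces $\supp L_Hf\subset\{|p|\geq\theta\}$ (so Lemma~\ref{lem:polar-Ck} is applicable to $L_Hf$), and it lets one bound $0$-homogeneous coefficient fields and the coarea Jacobian $1/(2|p|)$ by powers of $\theta^{-1}$.

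\textbf{Reduction to polar derivatives and commutation.} First I would apply Lemma~\ref{lem:polar-Ck} to the function $L_Hf$: it suffices to bound $\sup\|D_{W_1}\cdots D_{W_k}L_Hf\|_\infty$ over all choices of $W_j\in\{R\}\cup\{\text{$0$-homogeneous tangential fields}\}$, together with lower-order terms $\sum_{j<k}\theta^{j-k}\|L_Hf\|_{C^j}$, the latter being handled by the inductive hypothesis with $k$ replaced by $j<k$. Iterating the two identities $D_RL_H=L_HD_R+L_{H_R}$ and $D_VL_H=L_H^-D_V+L_{D_V[p]H}$ from Lemma~\ref{lem:derivative-calculations} expands $D_{W_1}\cdots D_{W_k}L_Hf$ into at most $C_k$ terms of the form $L_{H'}^{\sharp}(D_{W_{i_1}}\cdots D_{W_{i_j}}f)$, where $\sharp\in\{\emptyset,-\}$, the index set $\{i_1<\dots<i_j\}\subset\{1,\dots,k\}$ records which derivatives passed through onto $f$, and $H'$ is obtained from $H$ by applying the $k-j$ first-order operators $H\mapsto H_R$ and $H\mapsto D_V[p]H$ coming from the derivatives that hit the kernel. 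The key bookkeeping point is that each of these operators has coefficients $0$-homogeneous in $(p,q)$ and smooth away from the origin, hence bounded (with their order-$\ell$ derivatives) by $C\theta^{-\ell}$ on $\{|p|\geq\theta\}$; so from $|\partial^\beta H|\leq C_\beta A\theta^{-|\beta|}$ one gets by induction $|\partial^\beta H'(p,q)|\leq C_{\beta,k}A\,\theta^{-|\beta|-(k-j)}$.

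\textbf{Base estimate, assembly, and the main obstacle.} For the innermost terms I would use the coarea formula \eqref{eq:coarea}: for bounded $g$,
\[
|L_{H'}^{\sharp}g(p)|\leq \|g\|_\infty\,\sup_p\int\delta(|p|^2-|q|^2)|H'(p,q)|\,dq\leq \|g\|_\infty\,\frac{1}{2|p|}\sup_p\int_{|q|=|p|}|H'(p,q)|\,d\Hausdorff^{d-1}(q),
\]
and $1/|p|\leq\theta^{-1}$ on the support. Combining $\|D_{W_{i_1}}\cdots D_{W_{i_j}}f\|_\infty\leq\|f\|_{C^k}$ with the symbol bound on $H'$, and summing the $C_k$ terms (together with the inductive treatment of the lower-order pieces), yields $\|L_Hf\|_{C^k}\leq A\,C_k\,\theta^{1-k}\|f\|_{C^k}$, which is the asserted estimate (with the exponent $1-k$ reading the $1-m$ of the statement). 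The main obstacle is the accounting in this last step: one must arrange that the loss of $\theta^{-(k-j)}$ produced when derivatives fall on the kernel is exactly offset by the coarea Jacobian $\theta^{-1}$ and by the mean-zero structure $f(q)-f(p)$ of the scattering operator (which supplies the remaining factor of $\theta$ via $|f(q)-f(p)|\lesssim\|f\|_{C^1}|p-q|$ against a decaying kernel), uniformly over the combinatorially branching collection of terms. In particular one needs the spherical integral $\int_{|q|=|p|}|H'(p,q)|\,d\Hausdorff^{d-1}(q)$ to stay bounded as $|p|\to\infty$ in dimension $d\geq3$, which is where the decay of $H'$ in the variable $p-q$ (inherited from that of $H$, and preserved by the first-order operators above) is used.
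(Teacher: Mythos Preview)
Your proposal is correct and follows essentially the same approach as the paper: both proceed by induction on $k$, use Lemma~\ref{lem:polar-Ck} to reduce to bounding $D_{W_1}\cdots D_{W_k}L_Hf$ for polar vector fields $W_j$, and then apply the commutation identities of Lemma~\ref{lem:derivative-calculations} together with the observation that each transform $H\mapsto H_R$ or $H\mapsto D_V[p]H$ preserves the symbol estimate~\eqref{eq:H-symb-est} at the cost of replacing $A$ by $A\theta^{-1}$. The only organizational difference is that the paper peels off a single derivative $D_{W_k}$ and invokes the inductive hypothesis on the two resulting terms, whereas you fully unroll the commutation into a sum of terms $L_{H'}^\sharp(D_{W_{i_1}}\cdots D_{W_{i_j}}f)$ before estimating; these are equivalent reorderings of the same computation.
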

\begin{proof}
The proof is by induction on $k$.  When $k=1$ this follows from Lemma~\ref{lem:polar-Ck} and
Lemma~\ref{lem:derivative-calculations}.
For larger $k$ it suffices to show that
\[
\|D_{W_1}\cdots D_{W_k} L_Hf\|_{C^k}
\leq \theta^{1-k} \|f\|_{C^k}.
\]
for any mixture of tangential and radial vector fields $W_1,\cdots,W_k$.
To induct we use Lemma~\ref{lem:derivative-calculations}.
If $W_k$ is radial then
\[
D_{W_1}\cdot D_{W_{k-1}} D_{W_k} L_Hf
= D_{W_1}\cdot D_{W_{k-1}} L_H (D_{W_k}f)
+ D_{W_1}\cdot D_{W_{k-1}} L_{H_R} f
\]
whereas if $W_k$ is tangential
\[
D_{W_1}\cdot D_{W_{k-1}} D_{W_k} L_Hf
= D_{W_1}\cdot D_{W_{k-1}} L^-_H (D_{W_k}f)
+ D_{W_1}\cdot D_{W_{k-1}} L_{D_{W_k}[p]H} f.
\]
Both terms are bounded using the inductive hypothesis, observing
that if $H$ satisfies~\eqref{eq:H-symb-est} for some $A$ then
$D_{W_k} H$ and $H_R$ satisfy~\eqref{eq:H-symb-est} with constant $A\theta^{-1}$.
\end{proof}

\subsection{Convergence of the series}
Before we address the convergence of the series~\eqref{eq:f-series} we quickly observe that the free
transport operator
\[
T_s f(x,p) := f(x-sp,p)
\]
satisfies the bound
\[
\|T_s f\|_{C^k_{r,L}} \leq (1 + Csr^{-2})\|f\|_{C^k_{r,L}}.
\]
Using $T_s$ we can rewrite the iteration~\eqref{eq:g-iteration} defining $g_j$ as follows:
\[
g_{j+1,t} = \int_0^t T_s L g_{j,t-s}\diff s.
\]
Then applying the triangle inequality and Lemma~\ref{lem:L-Ck-bd} we have
\[
\|g_{j+1,t}\|_{C^k_r} \leq C t \eps^2 (1+Ctr^{-2}) \theta^{1-k} \sup_{0\leq s\leq t}\|g_{j,t}\|_{C^k_r}.
\]
This concludes the proof of Lemma~\ref{lem:boltz-regularity}.

%
%\printbibliography
\bibliographystyle{plain}
\bibliography{refs}
\end{document}